\definecolor{dark-red}{rgb}{0.7,0.25,0.25}
\definecolor{dark-blue}{rgb}{0.15,0.15,0.55}
\definecolor{medium-blue}{rgb}{0,0,.8}
\definecolor{DarkGreen}{RGB}{0,150,0}
\definecolor{rho}{named}{red}
\newcommand{\frm}{\mathfrak{m}}
\newcommand{\frg}{\mathfrak{g}}
\newcommand{\Diffcinfty}{\Diff_c^{(\infty)}(S^1)}
\newcommand{\Diffc}{\Diff_c(S^1)}
\newcommand{\Ran}{\operatorname{Ran}}
\newcommand{\Int}{\operatorname{int}}
\newcommand{\Interior}{\operatorname{interior}}
\newcommand{\Ann}{\operatorname{Ann}}
\newcommand{\GAnn}{\operatorname{GAnn}}
\newcommand{\scA}{\mathscr{A}}
\newcommand{\Rot}{\operatorname{Rot}}
\newcommand{\scU}{\mathscr{U}}
\newcommand{\cB}{\mathcal{B}}
\newcommand{\cC}{\mathcal{C}}
\newcommand{\cF}{\mathcal{F}}
\newcommand{\cL}{\mathcal{L}}
\newcommand{\cM}{\mathcal{M}}
\newcommand{\cY}{\mathcal{Y}}
\newcommand{\cH}{\mathcal{H}}
\newcommand{\scG}{\mathscr{G}}
\newcommand{\cK}{\mathcal{K}}
\newcommand{\C}{\mathbb{C}}
\newcommand{\bbC}{\mathbb{C}}
\newcommand{\cA}{\mathcal{A}}
\newcommand{\cI}{\mathcal{I}}
\newcommand{\cU}{\mathcal{U}}
\newcommand{\g}{\mathfrak{g}}
\newcommand{\dom}{\operatorname{dom}}
\newcommand{\Pair}{\operatorname{Pair}}
\newcommand{\bbF}{\mathbb{F}}
\newcommand{\R}{\mathbb{R}}
\newcommand{\bbR}{\mathbb{R}}
\newcommand{\Z}{\mathbb{Z}}
\newcommand{\bbZ}{\mathbb{Z}}
\newcommand{\cl}{\operatorname{cl}}
\newcommand{\id}{\operatorname{id}}
\newcommand{\abs}[1]{\left|#1\right|}
\newcommand{\babs}[1]{\big|#1\big|}
\newcommand{\norm}[1]{\left\|#1\right\|}
\newcommand{\vertex}[3]{\binom{#1}{#2 \, #3}}
\newcommand{\ip}[1]{\left\langle#1\right\rangle}
\newcommand{\bip}[1]{\big\langle#1\big\rangle}
\newcommand{\bbD}{\mathbb{D}}
\newcommand{\cDR}{\mathcal{DR}}
\newcommand{\cDA}{\mathcal{DA}}
\newcommand{\cDP}{\mathcal{DP}}
\newcommand{\cO}{\mathcal{O}}
\newcommand{\Diff}{\operatorname{Diff}}
\newcommand{\CAR}{\operatorname{CAR}}
\newcommand{\Span}{\operatorname{span}}
\newcommand{\End}{\operatorname{End}}
\newcommand{\Hom}{\operatorname{Hom}}
\newcommand{\Vir}{\operatorname{Vir}}
\newcommand{\Mob}{\operatorname{M\ddot{o}b}}
\newcommand{\grotimes}{\hat \otimes}
\newcommand{\interior}[1]{\mathring{#1}}
\newcommand{\hotimes}{\hat \otimes}
\newtheorem{thmalpha}{Theorem}
\newtheorem{Theorem}{Theorem}[section]
\newtheorem*{Theorem*}{Theorem}
\newtheorem{Lemma}[Theorem]{Lemma}
\newtheorem{Proposition}[Theorem]{Proposition}
\newtheorem{Corollary}[Theorem]{Corollary}
\newtheorem*{Corollary*}{Corollary}
\theoremstyle{definition}
\newtheorem{Remark}[Theorem]{Remark}
\newtheorem{Definition}[Theorem]{Definition}
\newtheorem*{Definition*}{Definition}
\newtheorem{Example}[Theorem]{Example}
\newtheorem*{Problem*}{Problem}
\newtheorem{Convention}[Theorem]{Convention}
\numberwithin{equation}{section}
\numberwithin{figure}{section}
\begin{document}

\title{Representation theory in chiral conformal field theory: from fields to observables}
\author{James E. Tener}
\date{}
%\date{\today}

\maketitle

\abstract{
This article develops new techniques for understanding the relationship between the three different mathematical formulations of two-dimensional chiral conformal field theory: conformal nets (axiomatizing local observables), vertex operator algebras (axiomatizing fields), and Segal CFTs.
It builds upon previous work \cite{GRACFT1}, which introduced a geometric interpolation procedure for constructing conformal nets from VOAs via Segal CFT, simultaneously relating all three frameworks.
In this article, we extend this construction to study the relationship between the representation theory of conformal nets and the representation theory of vertex operator algebras.
We define a correspondence between representations in the two contexts, and show how to construct representations of conformal nets from VOAs.
We also show that this correspondence is rich enough to relate the respective `fusion product' theories for conformal nets and VOAs, by constructing local intertwiners (in the sense of conformal nets) from intertwining operators (in the sense of VOAs).
We use these techniques to show that all WZW conformal nets can be constructed using our geometric interpolation procedure.
}

\newpage

\setcounter{tocdepth}{2}
\tableofcontents
\newpage

%%%%%%%%%%%%%%%%%%%%%%%%%%%%%%%%%%%%%%%%%%%%%%%%%%

%\settocdepth{section}

\section{Introduction} 

\subsection{Overview and context}

The mathematical study of two-dimensional chiral conformal field theories (CFTs) has developed into a broad undertaking, linking diverse mathematical fields.
It occupies a middle-ground in the larger landscape of quantum field theory, being both sufficiently tractable to permit rigorous analysis, yet simultaneous rich enough to encode interesting mathematical structures (such as braided and modular tensor categories, vector-valued modular forms, subfactors, and so on).

%One feature of the study of CFTs has been distinct approaches to the definition of a CFT, of which there have been three.
There are three approaches to the study of CFTs.
The most developed definitions are \emph{conformal nets}, which axiomatize algebras of observables in the spirit of the Haag-Kastler approach, and \emph{vertex operator algebras}, which axiomatize the fields of a CFT in the spirit of the Wightman axioms.
The third framework is \emph{Segal CFTs}, which are functorial field theories (a relative of the better known Atiyah-Segal axioms of topological quantum field theory).

Many fundamental physical facts are theorems in one context, but conjectures in the others.
For example, the statement ``the fixed points of a rational CFT under a finite group of automorphisms is rational'' has been proven for conformal nets, but not for vertex operator algebras, and the statement ``the WZW model corresponding to a simple Lie group at positive integral level is rational'' has been proven for vertex operator algebras but not conformal nets.
Translated into the different frameworks, each of these statements has mathematical implications, specifically regarding the modularity of tensor categories and the finiteness of the indices of subfactors.
It is an important and ongoing project to rigorously develop the relationship between these frameworks to the point where important theorems, when proven in one context, can thereby be deduced in the others, with the goal of eventually obtaining a single unified framework for the mathematical study of CFTs.

The first systematic comparison of conformal nets and VOAs was recently undertaken by Carpi-Kawahigashi-Longo-Weiner \cite{CKLW18}.
One of the goals of their approach is to construct algebras of local observables by integrating vertex operators against compactly supported test functions.
This approach encounters technical challenges, as the smeared vertex operators do not act continuously on the Hilbert space, and as a result it is difficult to deduce the locality axiom of a conformal net (i.e., that observables localized in disjoint regions commute) from the locality axiom of a VOA.
A significant achievement of \cite{CKLW18} is to provide tools for demonstrating this locality axiom of conformal nets, and they are able to produce conformal nets from most of the important examples of unitary vertex operator algebras.

This article is the second in a series (initiated in \cite{GRACFT1}) presenting an alternative, geometric approach to relating conformal nets, VOAs, and Segal CFT.
It differs from other approaches (e.g. \cite{CKLW18}) in several respects, the most prominent being:

\begin{itemize}
\item we simultaneously relate conformal nets, vertex operator algebras, and Segal CFTs, and in fact we show that Segal CFTs can be used to interpolate between conformal nets and VOAs
\item all operators which appear are continuous, which alleviates many technical challenges present in other approaches
\end{itemize}

One outcome of this series is to relate the representation theory of conformal nets with the representation theory of vertex operator algebras.
In particular, relating the tensor products (`fusion products') of representations of conformal nets and VOAs is a challenging problem which has been studied for several decades, and which has major implications for many unsolved problems in the study of vertex operator algebras, conformal nets, subfactors, and modular forms (including the two problems regarding rationality of theories mentioned above).
As an analogy, we think of the VOA as playing the role of Lie algebra relative to the conformal net's role of Lie group.
The first article of this series \cite{GRACFT1} was concerned with the `exponential map' from VOAs to conformal nets.
In this article, we address the `exponentiation' of representations, and provide first results indicating that this exponentiation is compatible with the respective theories of tensor products developed for VOAs and conformal nets.

The first major results to compare tensor products of VOAs and conformal nets was the landmark paper of Wassermann \cite{Wa98}, in which he directly related the fusion products of the conformal net and vertex operator algebra corresponding to WZW models of type $A$ at positive integral level.
Wassermann's ideas have been developed and extended in work of Toledano-Laredo for WZW models of type $D$ \cite{TL97}, in work of Loke for unitary minimal models \cite{Loke}, and in recent articles of Gui for WZW models of type $B$,$C$, and $G$ \cite{GuiUnitarityI,GuiUnitarityII,GuiG2}.

A key step in the comparison of fusion products for VOAs and conformal nets is to relate intertwining operators (encoding the fusion rules for VOAs) with \emph{local intertwiners} for conformal nets (see Section \ref{sec: main results}).
The way that this is done in the above referenced works is specific to the examples studied, and does not easily generalize even to all WZW models, with $E_8$ being particularly challenging%
\footnote{See \cite[\S 6]{GuiG2} for a discussion of some of the difficulties involved.}%
.
In particular, these techniques have not been demonstrated to apply, even in principle, to models which do not come from Lie algebras (affine Lie algebras or the Virasoro algebra), which excludes many of the most interesting examples, including the Moonshine VOA.

In this article, we present a framework for comparing representations of VOAs and conformal nets which is model independent, and we apply it to many examples, including type $E$ WZW models and other models which do not arise from Lie algebras.
As an application of our work, we explicitly construct local intertwiners of conformal net representations from VOA intertwining operators, without relying on special analytic properties intrinsic to certain specific models.
This is the first link between tensor product theories for VOAs and conformal nets not being obtained on a model-by-model basis.
We also use our results to attack the problem of extending analytic properties of VOAs from subalgebras to extensions, and obtain general results for extensions of \emph{code type}.

Our results are described in more detail in Section \ref{sec: main results}, but first we give a short outline of our geometric construction of conformal nets from VOAs in Section \ref{sec: BLVO intro}.

\subsection{Bounded localized vertex operators}\label{sec: BLVO intro}

In \cite{GRACFT1}, we introduced an analytic condition on unitary VOAs called \emph{bounded localized vertex operators}.
Vertex operators algebras with this property provide conformal nets, via a procedure that we will soon describe.
First, we briefly recall the notions of conformal net and VOA (precise definitions may be found in Section \ref{secPreliminaries}).
The primary data of a conformal net is a family of von Neumann algebras $\cA(I)$ acting on a Hilbert space $\cH$ indexed by intervals $I \subset S^1$, along with a unitary representation $U$ of the centrally extended group of orientation preserving diffeomorphisms $\Diff_c(S^1)$.
The primary data of a VOA is a vector space $V$ along with a state-field correspondence $Y:V \to \End(V)[[x^{\pm 1}]]$ denoted $Y(v,x)$, which includes a representation of the Virasoro algebra $\Vir_c$.
A VOA is called unitary if $V$ is equipped with an inner product which is invariant for the fields $Y(v,x)$.

The Virasoro algebra $\Vir_c$ is the complexified Lie algebra of $\Diff_c(S^1)$, and it is widely agreed that there does not exist a group which deserves to be called the complexification of $\Diff_c(S^1)$.
On the other hand, Segal proposed that the \emph{semigroup of annuli} \cite[\S 2]{SegalDef} should be regarded as a subsemigroup of the (non-existent) group $\Diff_c(S^1)_{\bbC}$.
The semigroup of annuli consists of compact Riemann surfaces which are topologically annuli, equipped with parametrizations of their boundary circles.
Neretin showed that irreducible highest weight positive energy representations of $\Diff_c(S^1)$ admit natural extensions to representations of the semigroup of annuli \cite{Neretin90}.

In a conformal net, the unitary operators $U(\gamma)$ corresponding to $\gamma \in \Diff_c(I)$ (that is, $\gamma \in \Diff_c(S^1)$ which act as the identity on $I^c$) actually lie in the local algebra $\cA(I)$.
On the other hand, $U(\gamma)$ for a general diffeomorphism does not lie in any $\cA(I)$, and the same is true for the operators assigned to annuli in Neretin's representation.
Henriques proposed enlarging the semigroup of annuli to include what we will call \emph{degenerate annuli}, which are annuli that have been `pinched thin' in places so that the incoming and outgoing boundaries overlap (see \cite{Henriques14}).

\begin{figure}[!ht]
$$
\tikz[scale=.7,,baseline={([yshift=-.5ex]current bounding box.center)}]{
\coordinate (a) at ($(0,0)+(-45:2cm and 1cm)$);
\coordinate (b) at ($(0,0)+(225:2cm and 1cm)$);
\coordinate (c) at ($(0,1)+(-45:2cm and 1cm)$);
\coordinate (d) at ($(0,1)+(225:2cm and 1cm)$);
\coordinate (e) at ($(0,.5)+(-70:2cm and 1cm)$);
\coordinate (f) at ($(0,.5)+(-110:2cm and 1cm)$);
\fill[red!10!blue!20!gray!30!white] (-2,0) arc (180:0:2cm and 1cm) -- +(0,1) arc (0:180:2cm and 1cm) -- cycle;
\draw[line width=1](0,0)+(-45:2cm and 1cm) arc (-45:180+45:2cm and 1cm)
(0,1)+(-45:2cm and 1cm) arc (-45:180+45:2cm and 1cm)
(0,.5)+(-70:2cm and 1cm) arc (-70:-110:2cm and 1cm);
\filldraw[line width=1, fill=red!10!blue!20!gray!30!white]
(2,0) arc (0:-45:2cm and 1cm) to[out=195, in=0] (e) to[out=20, in=203] (c) arc (-45:0:2cm and 1cm) -- cycle
(-2,0) arc (180:225:2cm and 1cm) to[out=-20, in=180] (f) to[out=160, in=-23] (d) arc (225:180:2cm and 1cm) -- cycle;
}
\qquad \qquad \qquad
\begin{tikzpicture}[baseline={([yshift=-.5ex]current bounding box.center)}]
	\coordinate (a) at (120:1cm);
	\coordinate (b) at (240:1cm);
	\coordinate (c) at (180:.25cm);
% BIG DISK
	\fill[fill=red!10!blue!20!gray!30!white] (0,0) circle (1cm);
	\draw (0,0) circle (1cm);
% CURVED BOUNDARY REGION
	\fill[fill=white] (a)  .. controls ++(210:.6cm) and ++(90:.4cm) .. (c) .. controls ++(270:.4cm) and ++(150:.6cm) .. (b) -- ([shift=(240:1cm)]0,0) arc (240:480:1cm);
	\draw ([shift=(240:1cm)]0,0) arc (240:480:1cm);
	\draw (a) .. controls ++(210:.6cm) and ++(90:.4cm) .. (c);
	\draw (b) .. controls ++(150:.6cm) and ++(270:.4cm) .. (c);
% INNER DISK
%	\filldraw[fill=white] (180:.65cm) circle (.25cm); 
% COORDINATE LABELS
%	\node at (a) {(a)};
%	\node at (b) {(b)};
%	\node at (c) {(c)};
\end{tikzpicture}
$$
\captionsetup{justification=centering,width=0.8\linewidth}
\caption{A pair of degenerate annuli, one (from \cite{Henriques14}) depicted in three space, and another embedded in the complex plane.}
\label{fig: IntroDegenerateAnnuli}
\end{figure}
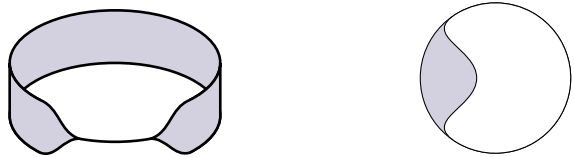

The representations of $\Diff_c(S^1)$ which are part of the data of conformal nets are expected to extend not just to the semigroup of annuli, but to the larger semigroup of degenerate annuli.
There are several ways of approaching the problem of constructing such an extension, but it is not hard to see that the extension is unique if it exists.
Moreover, operators assigned to degenerate annuli lie in $\cA(I)$ when the annuli are \emph{localized in} $I$, meaning that the incoming and outgoing parametrizations agree on $I^c$ (which necessitates that $I^c$ parametrizes a thin part of the degenerate annulus for both the incoming and outgoing boundary).
At the extreme end of things, one has totally thin annuli (i.e. diffeomorphisms), which are localized in $I$ in the above sense precisely when they lie in $\Diff(I)$, in which case they do in fact lie in the local algebra $\cA(I)$.

We now describe our method for constructing a conformal net from a unitary VOA $V$.
Suppose that the representation of $\Diff_c(S^1)$ on the Hilbert space completion $\cH_V$(which integrates the representation of $\Vir_c$ that $V$ comes equipped with) extends to a representation of the semigroup of degenerate annuli.
The connection with vertex operators arises by considering operators which insert a state into the bulk of a degenerate annulus, which should be localized when the annulus is.

To make this precise, we first consider insertions into the standard annulus $\{R > \abs{z} > r\}$ when $R > 1 > r$.
The annuli $\{1 > \abs{z} > r\}$ and $\{ R > \abs{z} > 1\}$ correspond to the bounded operators $r^{L_0}$ and $R^{-L_0}$, respectively, where $L_0$ is the conformal Hamiltonian.
If $V$ is a VOA, the operator $R^{-L_0}Y(v,z)r^{L_0}$ corresponds to inserting a state inside the annulus, and is depicted:
$$
\hspace{-.1in}
\begin{tikzpicture}[scale=0.6,baseline={([yshift=-.5ex]current bounding box.center)}]
	\fill[fill=red!10!blue!20!gray!30!white] (0,0) circle (2cm);
	\draw (0,0) circle (2cm);
	\node at (2,0) {\textbullet};
	\node at (2.39,0) {$R$};

	\fill[fill=white] (0,0) circle (1cm);
	\draw (0,0) circle (1cm);
	\node at (1,0) {\textbullet};
	\node at (0.7,0) {$1$};
\end{tikzpicture}
\,\, \sim R^{-L_0},
\qquad
\begin{tikzpicture}[scale=0.6,baseline={([yshift=-.5ex]current bounding box.center)}]
	\fill[fill=red!10!blue!20!gray!30!white] (0,0) circle (1cm);
	\draw (0,0) circle (1cm);
	\node at (1,0) {\textbullet};
	\node at (1.3,0) {$1$};

	\fill[fill=white] (0,0) circle (0.6cm);
	\draw (0,0) circle (0.6cm);
	\node at (0.6,0) {\textbullet};
	\node at (0.3,0) {$r$};
\end{tikzpicture}
\,\, \sim r^{L_0}, 
\qquad
\begin{tikzpicture}[scale=0.6,baseline={([yshift=-.5ex]current bounding box.center)}]
	\fill[fill=red!10!blue!20!gray!30!white] (0,0) circle (2cm);
	\draw (0,0) circle (2cm);
	\node at (2,0) {\textbullet};
	\node at (2.37,0) {$R$};

	\fill[fill=white] (0,0) circle (0.6cm);
	\draw (0,0) circle (0.6cm);
	\node at (0.6,0) {\textbullet};
	\node at (0.3,0) {$r$};

%	\filldraw[fill=white] (190:1.5cm) circle (.3cm); 
	\node at ([shift=(207:1.37cm)]0,0.3cm) {$z$};
	\node at ([shift=(187:1.50cm)]0,0.3cm) {$v$};
	\node at ([shift=(190:1.5cm)]0,0cm) {\textbullet};
\end{tikzpicture}
\,\,
\sim
R^{-L_0}Y(v,z)r^{L_0}
$$

Instead, we will consider a pair $(B,A)$ of degenerate annuli, with the outgoing boundary of $A$ and the incoming boundary of $B$ both the unit circle, and such that the composition $B \circ A$ in the semigroup of degenerate annuli is localized in $I$.
Then the operator $BY(v,z)A$ which inserts the state $v$ at the point $z$ inside $B \circ A$ should be bounded, and localized in $I$ in the sense of algebraic quantum field theory.
$$
\begin{tikzpicture}[baseline={([yshift=-.5ex]current bounding box.center)}]
	\coordinate (a) at (150:1cm);
	\coordinate (b) at (270:1cm);
	\coordinate (c) at (210:1.5cm);
% BIG DISK
	\draw (0,0) circle (1cm);
% CURVED BOUNDARY REGION
	\fill[red!10!blue!20!gray!30!white] (a)  .. controls ++(250:.6cm) and ++(120:.4cm) .. (c) .. controls ++(300:.4cm) and ++(190:.6cm) .. (b) arc (270:510:1cm);
	\filldraw[fill=white] (0,0) circle (1cm);
	\draw ([shift=(270:1cm)]0,0) arc (270:510:1cm);
	\draw (a) .. controls ++(250:.6cm) and ++(120:.4cm) .. (c);
	\draw (b) .. controls ++(190:.6cm) and ++(300:.4cm) .. (c);
% INNER DISK
%	\filldraw[fill=white] (180:.65cm) circle (.25cm); 
% COORDINATE LABELS
%	\node at (a) {(a)};
%	\node at (b) {(b)};
%	\node at (c) {(c)};
\end{tikzpicture}
\,\, \sim B,
\qquad
\begin{tikzpicture}[baseline={([yshift=-.5ex]current bounding box.center)}]
	\coordinate (a) at (120:1cm);
	\coordinate (b) at (240:1cm);
	\coordinate (c) at (180:.25cm);
% BIG DISK
	\fill[fill=red!10!blue!20!gray!30!white] (0,0) circle (1cm);
	\draw (0,0) circle (1cm);
% CURVED BOUNDARY REGION
	\fill[fill=white] (a)  .. controls ++(210:.6cm) and ++(90:.4cm) .. (c) .. controls ++(270:.4cm) and ++(150:.6cm) .. (b) -- ([shift=(240:1cm)]0,0) arc (240:480:1cm);
	\draw ([shift=(240:1cm)]0,0) arc (240:480:1cm);
	\draw (a) .. controls ++(210:.6cm) and ++(90:.4cm) .. (c);
	\draw (b) .. controls ++(150:.6cm) and ++(270:.4cm) .. (c);
% INNER DISK
%	\filldraw[fill=white] (180:.65cm) circle (.25cm); 
% COORDINATE LABELS
%	\node at (a) {(a)};
%	\node at (b) {(b)};
%	\node at (c) {(c)};
\end{tikzpicture}
\,\, \sim A,
\qquad
\begin{tikzpicture}[baseline={([yshift=-.5ex]current bounding box.center)}]
	\coordinate (a1) at (150:1cm);
	\coordinate (b1) at (270:1cm);
	\coordinate (c1) at (210:1.5cm);
% BIG DISK
%	\draw (0,0) circle (1cm);
% CURVED BOUNDARY REGION
	\fill[red!10!blue!20!gray!30!white] (a1)  .. controls ++(250:.6cm) and ++(120:.4cm) .. (c1) .. controls ++(300:.4cm) and ++(190:.6cm) .. (b1) arc (270:510:1cm);
	\draw ([shift=(270:1cm)]0,0) arc (270:510:1cm);
	\coordinate (a) at (120:1cm);
	\coordinate (b) at (240:1cm);
	\coordinate (c) at (180:.25cm);
% BIG DISK
	\fill[fill=red!10!blue!20!gray!30!white] (0,0) circle (1cm);
%	\draw (0,0) circle (1cm);
% CURVED BOUNDARY REGION
	\fill[fill=white] (a)  .. controls ++(210:.6cm) and ++(90:.4cm) .. (c) .. controls ++(270:.4cm) and ++(150:.6cm) .. (b) -- ([shift=(240:1cm)]0,0) arc (240:480:1cm);
	\draw ([shift=(240:1cm)]0,0) arc (240:480:1cm);
	\draw (a) .. controls ++(210:.6cm) and ++(90:.4cm) .. (c);
	\draw (b) .. controls ++(150:.6cm) and ++(270:.4cm) .. (c);
	\draw (a) arc (120:150:1cm) (a1) .. controls ++(250:.6cm) and ++(120:.4cm) .. (c1);
	\draw (b1) .. controls ++(190:.6cm) and ++(300:.4cm) .. (c1);
%	\filldraw[fill=white] (197:.9cm) circle (.25cm);
	
%
	\node at (197:.9cm) {\textbullet};
	\node at (204:.7cm) {$z$};
	\node at (183:.9cm) {$v$};
%
%
%
% INNER DISK
%	\filldraw[fill=white] (180:.65cm) circle (.25cm); 
% COORDINATE LABELS
%	\node at (a) {(a)};
%	\node at (b) {(b)};
%	\node at (c) {(c)};
\end{tikzpicture}
\,\, \sim
BY(v,z)A
$$
We now attempt to construct a conformal net $\cA_V$ whose local algebras $\cA_V(I)$ are generated by operators of the form $BY(v,z)A$, as $v$ runs over all of the states of $V$, $(B,A)$ runs over all pairs with $B \circ A$ localized in $I$, and $z \in B \circ A$.

The main result of \cite{GRACFT1} (adjusted slightly in \S3-4 of this article, and extended significantly in \S7) asserts that this construction produces conformal nets from many unitary VOAs.
The minor adjustment in this article is that we do not focus on the semigroup of degenerate annuli, and instead consider families of pairs of operators $(B,A)$ which are \emph{generalized annuli}%
\footnote{%
The definition of generalized annulus is introduced in order to avoid certain scalar ambiguities arising from the projective nature of the representations of $\Diff_c(S^1)$ under consideration.
These ambiguities were not problematic in \cite{GRACFT1}, but they make it quite difficult to formulate Definition \ref{def: piM} capturing the relationship between VOA modules and conformal net representations.
While the introduction of generalized annuli accounts for increased complexity up front in Section \ref{sec: generalized annuli}, it significantly simplifies the statements of results for the remainder of the article.
}%
.
Generalized annuli are operators which abstract away the specifics of representations of the semigroup of degenerate annuli, allowing us to place greater focus on the vertex operators.
We show in Section \ref{sec: example of system of generalized annuli} that the operators assigned to degenerate annuli in \cite{GRACFT1}, which are given by explicit exponentials of the smeared Virasoro field, satisfy the axioms of a generalized annuli.

Specifically, we will require as input a system of generalized annuli $\scA_I$, which provides for every interval $I$ a family of pairs of operators $(B,A)$ on $\cH_V$ such that $BA$ lies in $\cA_c(I)$, where $\cA_c$ is the Virasoro net.
We require that each pair $(B,A)$ be equipped with a choice of interior, which is an open set $\Int(B,A) \subset \bbD$, representing the annulus.

Given a choice of a system of generalized annuli, one says that $V$ has bounded localized vertex operators if whenever $(B,A) \in \scA_I$ and $z \in \Int(B,A)$, the operator $BY(v,z)A \in \cB(\cH_V)$, and if the algebras
$$
\cA_V(I) := \{ BY(v,z)A : v \in V, \, (B,A) \mbox{ and } z \mbox{ as above } \}^{\prime\prime}
$$
satisfy the locality condition that $\cA_V(I)$ and $\cA_V(J)$ commute when $I$ and $J$ are disjoint.
In this case, $\cA_V$ is a conformal net.

In \cite{GRACFT1} (adjusted slightly in Section \ref{sec: BLVO}), we showed that the property of bounded localized vertex operators is inherited by tensor products and subalgebras, and that the free fermion ($bc$-system) VOA has bounded localized vertex operators with respect to the system of generalized annuli $\scA_I$ constructed in Section \ref{sec: example of system of generalized annuli}.
From there, we deduced that many familiar VOAs also enjoy this property.
In this article, we extend that result to include many more VOAs, including all WZW models, in Section \ref{sec: examples} (see also Section \ref{sec: main results} for a summary).

\subsection{Main results}\label{sec: main results}

In this article, we investigate the representation theory of the conformal nets $\cA_V$ described in the previous section, and its connection with the representation theory of the unitary VOA $V$.
Recall that a representation of a conformal net is given by a family of compatible representation $\pi_I$ of all of the local algebras $\cA(I)$, and that a unitary module $M$ for a unitary VOA is an inner product space $M$ equipped with a state-field correspondence $Y^M:V \to \End(M)[[x^{\pm 1}]]$.
Given a VOA module $M$, the corresponding representation $\pi^M$ of $\cA_V$, if it exists, is the one which satisfies
$$
\pi^M(B Y(v,z) A) = B Y^M (v,z) A,
$$
where we must use the localizability of the generalized annulus $(B,A)$ and the Virasoro conformal net to allow $A$ and $B$ to act on $\cH_M$ (see Section \ref{sec: generalized annuli}).
We establish several results about the correspondence $M \longleftrightarrow \pi^M$ in Section \ref{sec: BLVO modules}, the most important of which (Theorem \ref{thmConstructionOfNetRepresentations}) provides examples of modules $M$ for which $\pi^M$ exists.

\begin{thmalpha}\label{thm: intro module existence}
Let $V$ be a simple unitary vertex operator superalgebra with bounded localized vertex operators, let $W$ be a unitary subalgebra of $V$, and let $M$ be a $W$-submodule of $V$.
Then $\pi^M$ exists as a representation of $\cA_W$.
\end{thmalpha}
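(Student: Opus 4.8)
The plan is to realize $\pi^M$ concretely as a subrepresentation, sitting inside the vacuum Hilbert space $\cH_V$, of a representation of $\cA_W$ on $\cH_V$. The logical structure is to first treat the special case where the module is all of $V$ (viewed as a $W$-module via the inclusion $W \subseteq V$), and then to bootstrap from $V$ down to an arbitrary $W$-submodule $M$ by compression; the hypothesis that $M$ is a $W$-\emph{submodule} of $V$ is exactly what makes $\cH_M \subseteq \cH_V$ a reducing subspace.

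For the case $M = V$, I would produce a representation $\pi^V$ of $\cA_W$ on $\cH_V$. Since $\nu_W \in W \subseteq V$, the Virasoro generators $L_n^W$ act on $\cH_V$, so every $W$-generalized annulus $(B,A) \in \scA_I$ acts on $\cH_V$, as do the vertex operators $Y(w,z)$ for $w \in W$ (these are simply the $V$-fields restricted to $w \in W$). I would then check that the operators $B\,Y(w,z)\,A$ are bounded and localized in $I$ --- that they lie in $\cA_V(I)$ --- using that $V$ has bounded localized vertex operators together with the localizability of $(B,A)$ and $z \in \Int(B,A)$. Assembling the von Neumann algebras these operators generate yields normal maps $\cA_W(I) \to \cA_V(I) \subseteq \cB(\cH_V)$ that are compatible under inclusion of intervals, hence a representation $\pi^V$ of $\cA_W$ in the sense of Definition~\ref{def: piM}; that $\pi^V$ restricts on the vacuum submodule $\cH_W \subseteq \cH_V$ to the defining representation confirms the normalization.

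Next I would cut down to $M$. By unitarity of $V$ as a $W$-module, the orthogonal complement is again a $W$-submodule, so $V = M \oplus M^\perp$ as $W$-modules. Writing $P = P_M$ for the projection onto $\cH_M$, each bounded generator $B\,Y(w,z)\,A$ maps $\cH_M$ into itself and $\cH_M^\perp$ into itself: $A$ and $B$ preserve every $W$-submodule because $\nu_W \in W$, while $Y(w,z)$ preserves every $W$-submodule because $w \in W$. Hence $P$ commutes with every generator, so $P \in \pi^V(\cA_W(I))'$ for each $I$; that is, $\cH_M$ reduces $\pi^V$. The compression $\pi^M(x) := P\,\pi^V(x)\,P|_{\cH_M}$ is then a representation of $\cA_W$ on $\cH_M$, and on generators it computes to $\pi^M(B\,Y(w,z)\,A) = B\,Y^M(w,z)\,A$, since $Y(w,z)|_M = Y^M(w,z)$ and $A, B$ restrict to the $W$-module action on $\cH_M$. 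This is precisely the defining relation for $\pi^M$, so $\pi^M$ exists.

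I expect the main obstacle to be the case $M = V$ above: showing that the concretely defined operators $B\,Y(w,z)\,A$ --- built from the $W$-data but acting on the larger space $\cH_V$ --- genuinely assemble into a representation of the abstract net $\cA_W$. The delicate point is that the $W$-annuli are built from $\nu_W$ rather than $\nu_V$, so $B\,Y(w,z)\,A$ is not literally one of the generators of $\cA_V(I)$; establishing its boundedness and its membership in $\cA_V(I)$ is where the geometric input does the real work, via the localizability of the generalized annulus and the localization estimates afforded by the bounded localized vertex operators of $V$, together with the subnet structure coming from $\nu_W \in V$. Once $\pi^V$ is in hand, the passage to an arbitrary submodule $M$ in Steps two and three is formal.
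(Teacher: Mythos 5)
Your overall architecture---realize $\pi^M$ by cutting down a representation of $\cA_W$ on $\cH_V$---has the right shape, and your compression step is essentially Proposition \ref{prop: reps from sums of modules}. But the entire difficulty of the theorem is concentrated in your first step (the case $M=V$), and the justification you offer there does not go through when $W$ is not a conformal subalgebra. In that case the generalized annuli have two distinct actions on $\cH_V$: the action $A^V$ integrating $Y(\nu_V,x)$ and the action $A^W$ integrating $Y(\nu_W,x)$. The hypothesis that $V$ has bounded localized vertex operators only controls the operators $B^V Y(v,z) A^V$; it says nothing directly about your candidate generators $B^W Y(w,z) A^W$. Under the decomposition $V \cong \bigoplus M_i \otimes K_i$ as a $W \otimes W^c$-module one has $A^V = \bigoplus A \otimes A$ but $A^W = \bigoplus A \otimes 1$, and relating the two actions (Lemma \ref{lem: relate annuli AV and AW}) already requires knowing that $\pi^{K_i}$ exists as a representation of $\cA_{W^c}$---an instance of the very theorem being proved. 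Likewise, the assertion that the operators $B^W Y(w,z) A^W$ lie in the relevant subnet of $\cA_V(I)$ and assemble into a representation of the abstract net $\cA_W$ is precisely Corollary \ref{cor: action of subnet}, which the paper deduces \emph{from} Theorem \ref{thmConstructionOfNetRepresentations}, not the other way around. Saying that the generator-by-generator assignment ``yields normal maps $\cA_W(I) \to \cA_V(I)$'' skips the real problem: a map defined on a generating set of a von Neumann algebra need not extend to a normal homomorphism unless one verifies that every relation holding among the generators on $\cH_W$ also holds among their counterparts on $\cH_V$.

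The paper's proof fills exactly this hole. It passes to the conformal subalgebra isomorphic to $W \otimes W^c$ (for which your first step is straightforward, Lemma \ref{lem: module existence conformal subcase}), decomposes $V = \bigoplus M_i \otimes K_i$, and then proves the tensor-splitting result Proposition \ref{propTensorProductModuleExists}: if $\pi^{M_i \otimes K_i}$ exists then so do $\pi^{M_i}$ and $\pi^{K_i}$, with $\pi^{M_i \otimes K_i} = \pi^{M_i}\otimes\pi^{K_i}$. That splitting rests on the nontrivial Lemma \ref{lem: Vir modules tensor split} (established first for Virasoro nets by an inductive matrix-coefficient argument using Lemma \ref{lem: split simple tensors} and Lemma \ref{lem: Virasoro net splitting}), and it is the ingredient your proposal is missing. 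You correctly flag the first step as the place ``where the geometric input does the real work,'' but that work is not accomplished by localizability of the annuli plus bounded insertions for $V$; it is accomplished by the splitting machinery, for which you would need to supply a substitute.
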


This also extends to $W$-submodules of $V$-modules $N$ such that $\pi^N$ exists as a representation of $\cA_V$.
Using this, we are able to find examples of VOAs $V$ such that $\pi^M$ exists for every $V$-module $M$.
These examples include the WZW models of type $A$ and $E$, and as a corollary we obtain the new result that every unitary module for the affine Lie algebra $\mathfrak{e}_n$ at level $k$ ($n=6,7,8$) exponentiates to a representation of the $E_8$ conformal net (see Section \ref{sec: modules and local equivalence}).

An important consequence of the construction of representations $\pi^M$ from Theorem \ref{thm: intro module existence} is that it provides a link between intertwining operators (for $V$) and local intertwiners (for $\cA_V$).
Recall that the local intertwiners between two representations $\pi$ and $\lambda$ of a net $\cA$ are bounded maps in $\Hom_{\cA(I)}(\cH_\pi,\cH_\lambda)$ for some interval $I$.
If $N$ and $K$ are unitary $V$-modules such that $\pi^N$ and $\pi^K$ exist, we define $I_{loc} \binom{K}{M \, N}$ to be the class of intertwining operators $\cY$ such that
$$
B \cY(a,z) A \in \Hom_{\cA(I)}(\cH_N, \cH_K)
$$
whenever $(B,A) \in \scA_I$ and $z \in \Int(B,A)$.
We demonstrate the existence of such intertwining operators in Theorem \ref{thm: intertwiners local}.

\begin{thmalpha}\label{thm: intro local intertwiners}
Let $V$ be a simple unitary vertex operator superalgebra, let $W$ be a unitary subalgebra of $V$, and let $M$, $N$, and $K$ be simple $W$-submodules of $V$.
Then the intertwining operator $\cY \in I \binom{K}{M \, N}$ obtained by projecting $Y^V$ onto these submodules lies in $I_{loc} \binom{K}{M \, N}$.
\end{thmalpha}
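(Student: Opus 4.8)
The plan is to verify the defining commutation property of $I_{loc}\binom{K}{M\,N}$ by testing $\Phi := B\cY(a,z)A$ against a generating family of $\cA_W(I)$, after transporting the whole question inside the ambient space $\cH_V$. Since $\Hom_{\cA_W(I)}(\cH_N,\cH_K)$ consists of the bounded maps commuting with all of $\cA_W(I)$, and since this von Neumann algebra is generated by the operators $B'Y(w,z')A'$ with $w \in W$, $(B',A') \in \scA_I$ and $z' \in \Int(B',A')$ — a family that may be taken $*$-closed using the adjoint structure of generalized annuli and of $V$ — it is enough to check
$$
\Phi\,\pi^N\!\big(B'Y(w,z')A'\big) = \pi^K\!\big(B'Y(w,z')A'\big)\,\Phi
$$
for every such generator. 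The set of operators $x$ for which the analogous identity holds is a weakly closed subspace stable under products, so once it contains the generating family it contains all of $\cA_W(I)$.

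The central device is a block decomposition on $\cH_V$. Writing $P_N$ and $P_K$ for the orthogonal projections of $\cH_V$ onto the $W$-submodules $N$ and $K$, I would first prove the key lemma that every generalized annulus for $\cA_W$ commutes with $P_N$ and $P_K$. This holds because such annuli are built from the Virasoro data of $W$ and act on each $W$-submodule through the corresponding representation, hence preserve the $W$-isotypic decomposition; the same is true of $Y^V(w,z')$ for $w \in W$, which is just the $W$-action. Granting the lemma, and recalling that $\cY$ is the projection of $Y^V$, one obtains
$$
B\cY(a,z)A = P_K\big(BY^V(a,z)A\big)P_N,
\qquad
\pi^{N}\!\big(B'Y(w,z')A'\big)=\big(B'Y^V(w,z')A'\big)P_N,
$$
with the analogous corner formula for $\pi^K$. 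Here $BY^V(a,z)A$ and $B'Y^V(w,z')A'$ are the genuinely bounded operators on $\cH_V$ provided by the bounded localized vertex operators of $V$ (and, for the charged operator, by the existence half of Theorem \ref{thm: intro module existence}), and the second operator is block-diagonal. Setting $T = BY^V(a,z)A$, $S = B'Y^V(w,z')A'$ and using $[S,P_N]=[S,P_K]=0$, the desired identity collapses to the single requirement
$$
P_K\,[\,T,S\,]\,P_N = 0,
$$
where $[\,\cdot,\cdot\,]$ denotes the $\Z_2$-graded commutator attached to the superalgebra $V$.

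To establish this I would invoke the graded locality of the ambient net $\cA_V$, which is exactly the locality axiom guaranteed by $V$ having bounded localized vertex operators. Both $T$ and $S$ are state insertions into generalized annuli, and choosing the observable annulus $(B',A')$ localized in an interval disjoint from that of $(B,A)$ places $T$ and $S$ in local algebras of disjoint intervals, so that $[T,S]=0$ (with the standard Klein/twist bookkeeping converting graded locality into honest commutation). Letting the disjointly localized observable vary and using additivity of $\cA_W$ then yields commutation with the algebra attached to the complementary region.

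The step I expect to be the main obstacle is the block-decomposition lemma together with its compatibility with boundedness: one must verify that the abstract annuli of $\scA_I$, once transported to $\cH_V$, really do commute with the $W$-isotypic projections, and — more delicately — that the corner $P_K(BY^V(a,z)A)P_N$ coincides with the separately defined operator $B\cY(a,z)A$, which forces one to reconcile the localizability of $(B,A)$ against the Virasoro net with the identification of $\cY$ as the projection of $Y^V$; this is also where the boundedness supplied by Theorem \ref{thm: intro module existence} must be used. A secondary but genuinely subtle point is the matching of intervals: the locality argument most naturally produces an intertwiner for the complementary algebra $\cA_W(I^c)$, and reconciling this with the localization recorded in the definition of $I_{loc}\binom{K}{M\,N}$ — via the conventions for $\scA_I$ and, if necessary, Haag duality — together with careful handling of the graded/superselection signs, is the bookkeeping I would treat most carefully.
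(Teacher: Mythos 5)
Your overall strategy --- realize $B\cY(a,z)A$ as a compression of a bounded insertion operator living in $\cA_V(I)$, and then import graded locality of the ambient net $\cA_V$ against disjointly localized generators of the $W$-subnet --- is exactly the strategy of the paper's proof of Theorem \ref{thm: intertwiners local}. But the identity you build the argument on, $B\cY(a,z)A = P_K\bigl(BY^V(a,z)A\bigr)P_N$, is false in general, and the failure is not a bookkeeping issue but the main content of the proof. When $W$ is not a conformal subalgebra of $V$, each generalized annulus $A$ has \emph{two} distinct actions on $\cH_V$: one ($A^V$) obtained by integrating the Virasoro field $Y(\nu_V,x)$, which is what appears in the generator $BY^V(a,z)A$ of $\cA_V(I)$, and one ($A^W$) obtained from $Y(\nu_W,x)$, which is what acts on the $W$-modules $\cH_N,\cH_K$ and hence is the action appearing in the definition of $I_{loc}\binom{K}{M\,N}$. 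Your block-decomposition lemma (the annuli commute with $P_N,P_K$) is correct for the $W$-actions --- this is essentially Proposition \ref{propModuleProjection} --- but the corner of $B^VY^V(a,z)A^V$ cut out by $P_K,P_N$ is not $B^W\cY(a,z)A^W$. Concretely, decomposing $V=\bigoplus M_i\otimes K_i$ over $W\otimes W^c$, one has $A^V=\bigoplus A\otimes A$ while $A^W=\bigoplus A\otimes 1$, so the two differ by a nontrivial operator in the $W^c$ tensor legs.

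The paper repairs this with Lemma \ref{lem: relate annuli AV and AW}: there exist operators $q_N,q_K$ (not projections --- they are built from preimages of the highest-weight vectors of the $K_i$ under $A$ acting on the coset modules, which uses Theorem \ref{thmConstructionOfNetRepresentations} applied to $W^c$) such that $p_NA^W=A^Wp_N=A^Vq_N$ and such that $q_N,q_K$ commute with the image of $\cA_W$. This yields $q_KB^VY^V(a,z)A^Vq_N=z^{-\Delta}p_KB^W\cY(a,z)A^Wp_N$, and since the left side supercommutes with $\pi_{I'}(\cA_W(I'))$ by locality of $\cA_V$ and the $q$'s pass through that algebra, the right side does too. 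A second, related correction: the generators of the $\cA_W$-subnet inside $\cA_V$ are $B^WY^V(w,z')A^W$ (Corollary \ref{cor: action of subnet}), again with the $W$-actions of the annuli, not $\bigl(B'Y^V(w,z')A'\bigr)P_N$ as you wrote. So your proposal needs the extra lemma producing the $q$'s and the modified compression identity; with the outer operators taken to be honest projections the key step would fail for every non-conformal $W$.
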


As a result, the representations $\pi^N$ and $\pi^K$ come with lots of local intertwiners which are explicitly derived from vertex operators, a necessary step in analyzing the fusion rules of these representations (and a step which appears intractable, except in specific families of models, using other approaches).

Using the localized intertwining operators constructed in Theorem \ref{thm: intro local intertwiners}, we study the problem of extending the property of bounded localized vertex operators from a VOA $W$ to a simple current extension of $W$.
While it is commonplace for subalgebras to inherit nice analytic properties from larger algebras, the reverse process is inherently much more difficult.

If $M$ is a self-dual simple current of a unitary VOA $W$ (intuitively, if $M \boxtimes M = W$), then for every binary string $i \in \bbF_2^n$ we associate the $W^{\otimes n}$ module $M_i = M_{i(1)} \otimes \cdots \otimes M_{i(n)}$ (where $M_1 = M$ and $M_0 = W$).
A simple current extension of $W^{\otimes n} \subset V$ of code type, with respect to a code $C \subset \bbF_2^n$ and the module $M$, is one where $V = \bigoplus_{i \in C} M_i$ as a $W^{\otimes n}$-module.
We consider when the simple current $M$ is either bosonic, fermionic, or semionic, which essentially corresponds to having conformal weights lying in $\bbZ$, $\tfrac12 + \bbZ$, or $\pm \tfrac14 + \bbZ$, respectively.
The following extension result for bounded localized vertex operators is a combination of Proposition \ref{prop: BLVO for easier codes} and Theorem \ref{thm: BLVO for harder codes}.

\begin{thmalpha}
Let $W$ be a simple vertex operator algebra, and let $M$ be a self-dual simple current which is bosonic, fermionic, or semionic.
Suppose there exists a simple unitary vertex operator superalgebra $V$ with bounded localized vertex operators, and an extension $W^{\otimes m} \subset V$ (where $m=1$ if $M$ is bosonic or fermionic, and $m=2$ if $M$ is semionic).
Then for every $n \in \bbZ_{> 0}$, any simple current extension of $W^{\otimes mn}$ of code type based on the module $M$ has bounded localized vertex operators.
\end{thmalpha}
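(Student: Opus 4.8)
The plan is to reduce the bounded localized vertex operator property of the code extension $\tilde V = \bigoplus_{i \in C} M_i$ (with $C \subset \bbF_2^{mn}$) to the localization of the finitely many intertwining operators appearing in the graded components of $Y^{\tilde V}$, and then to obtain those localized intertwiners from Theorem \ref{thm: intro local intertwiners}. Since $C$ is linear and the extension is graded by $C$, for $a \in M_i$ the state-field correspondence decomposes as $Y^{\tilde V}(a,z) = \bigoplus_{j \in C} \cY^{i,j}(a,z)$ with $\cY^{i,j} \in I\binom{M_{i+j}}{M_i\,M_j}$ an intertwining operator for $W^{\otimes mn}$. Once each $M_i$ is realized as a submodule of a VOSA with bounded localized vertex operators (as below), $\pi^{M_i}$ exists by Theorem \ref{thm: intro module existence}, so a generalized annulus $(B,A) \in \scA_I$ acts on every $\cH_{M_i}$ and $BY^{\tilde V}(a,z)A = \bigoplus_{j \in C} B\cY^{i,j}(a,z)A$ is a finite sum on $\cH_{\tilde V} = \bigoplus_{i\in C}\cH_{M_i}$. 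Thus $\tilde V$ has bounded localized vertex operators provided each component satisfies $B\cY^{i,j}(a,z)A \in \Hom_{\cA_{W^{\otimes mn}}(I)}(\cH_{M_j},\cH_{M_{i+j}})$ for $(B,A)\in\scA_I$, $z \in \Int(B,A)$, and the algebras $\cA_{\tilde V}(I)$ are mutually local.

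To localize the components I would apply Theorem \ref{thm: intro local intertwiners}, which requires realizing $M_i,M_j,M_{i+j}$ as simple $W^{\otimes mn}$-submodules of a simple unitary VOSA with bounded localized vertex operators, with $\cY^{i,j}$ the projection of its state-field correspondence. This is the role of the hypothesis on $V$: the property is preserved under tensor products \cite{GRACFT1}, so $V^{\otimes n} \supset (W^{\otimes m})^{\otimes n} = W^{\otimes mn}$ has it, and $V$ is provided precisely so that $M$ (bosonic or fermionic, $m=1$) or $M \boxtimes M$ (semionic, $m=2$) occurs as a submodule. In the bosonic and fermionic cases $W \oplus M$ is then a conformal subalgebra of $V$, the full tensor power $(W\oplus M)^{\otimes n} = \bigoplus_{i \in \bbF_2^n} M_i$ has bounded localized vertex operators, and every code extension is a conformal subalgebra of it, so $\tilde V$ inherits the property outright. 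The semionic case is the subtle one: the weight of $M$ lies in $\pm\tfrac14 + \bbZ$, so $M$ alone cannot sit inside a VOSA, and pairing the currents (whence $m=2$) lets the integer/half-integer-weight module $M \boxtimes M$ play the analogous role, but only the \emph{doubled} codewords embed in a tensor power of $V$ this way. In all cases finiteness of $C$ gives boundedness of $BY^{\tilde V}(a,z)A$ as a finite sum of bounded local intertwiners.

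It remains to verify locality: that $\cA_{\tilde V}(I)$ and $\cA_{\tilde V}(J)$ commute for disjoint $I,J$. The base net $\cA_{W^{\otimes mn}}$ is a conformal net (because $W^{\otimes m} \subset V$ gives $W^{\otimes m}$ bounded localized vertex operators, hence so does $W^{\otimes mn} = (W^{\otimes m})^{\otimes n}$), and the generators of $\cA_{\tilde V}$ are the $BY^{\tilde V}(a,z)A$ whose graded pieces are local intertwiners for this net. Commutation then amounts to controlling how these charged generators braid past one another, which is governed by the statistics of $M$: trivial in the bosonic case, a Fermi sign in the fermionic case, and a fourth root of unity in the semionic case, accommodated respectively by an ungraded, $\bbZ_2$-graded, or $\bbZ_4$-structured Hilbert space. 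For the easier codes (Proposition \ref{prop: BLVO for easier codes}) $\tilde V$ embeds as a conformal subalgebra of a tensor power of a VOSA already known to have bounded localized vertex operators, so locality follows immediately from the subalgebra and tensor-product results of \cite{GRACFT1}; this covers all bosonic and fermionic codes and the doubled semionic codes. For the harder codes (Theorem \ref{thm: BLVO for harder codes}) no such embedding exists and the commutation must be established intrinsically from the explicit braiding of the localized intertwiners of Theorem \ref{thm: intro local intertwiners}, with self-duality $M \boxtimes M = W$ ensuring the accumulated phases cancel.

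I expect the main obstacle to be exactly this last step for the general semionic codes. The pairing constraint prevents an arbitrary even-weight codeword $M_i$ from embedding into a tensor power of $V$ relative to a fixed grouping of the $mn$ factors into blocks of size $m$, so the modules $M_i$ and the components $\cY^{i,j}$ cannot all be realized inside one ambient BLVO VOSA. The natural remedy is to regroup the tensor factors, since any single intertwiner $\cY^{i,j}$ is a projection of $Y^{V^{\otimes n}}$ for a grouping with respect to which the three relevant codewords happen to be doubled; one would then apply Theorem \ref{thm: intro local intertwiners} through a family of such groupings. Making this uniform --- producing all the components from compatible realizations and proving that the resulting localized operators are mutually local once the semionic braiding phases are tracked across different groupings --- is the crux, and is where I expect the real work (and the content of Theorem \ref{thm: BLVO for harder codes}) to lie.
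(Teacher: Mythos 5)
Your high-level framing is right and matches the paper's: decompose $Y^{\tilde V}$ into graded components $\cY^k_{ij}\in I\binom{M_k}{M_i\,M_j}$, use Theorem \ref{thm: intertwiners local} to see that $B\cY^k_{ij}(a,z)A$ is a bounded local intertwiner for $\cA_{W^{\otimes mn}}$, and reduce locality of $\cA_{\tilde V}$ to the braiding of these charged operators past one another. You also correctly identify why $m=2$ in the semionic case. But the actual content of the proof is missing, and the substitute you propose for it does not close the gap. The paper's argument rests on two specific devices you do not supply. First, Lemma \ref{lem: code extension cocycle}: algebraic locality of the given VOA structure on $\tilde V$ forces the structure constants $c(p,i)$ (relating $Y^{\tilde V}$ to the tensor-product basis intertwiners) to satisfy the twisted cocycle identity $c(p,q+i)c(q,i)=\varepsilon^{p\cdot q}c(q,p+i)c(p,i)$; it is this identity, not self-duality per se, that cancels the analytic braiding phases in the final computation. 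Second, and this is the crux you flag but do not resolve: one must prove that the two \emph{analytic} braidings $\tilde y(\tilde a)x(a)=\omega\, y(a)\tilde x(\tilde a)$ and $\tilde x(\tilde a)y(a)=\omega'\,x(a)\tilde y(\tilde a)$ carry the \emph{same} fourth root of unity $\omega'=\omega$. The paper gets the first relation cheaply from anticommutation of the odd $2\times 2$ matrices $X(a)\in\cA_{\tilde V}(I)$, $\tilde X(\tilde a)\in\cA_{\tilde V}(J)$ (each scalar appearing tensor-squared, hence determined only up to a primitive fourth root), but the equality $\omega'=\omega$ requires the transport machinery of Lemmas \ref{lem: transport unitary} and \ref{lem: tranport for tau} --- unitaries $u,v$ in $\Hom_{\cB(K^\prime)}$ and the map $\tau_K(x)=u^*\pi_K(vx)$ --- applied in a common interval $K\supset I\cup J$. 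Your proposed alternative of ``regrouping the tensor factors'' so that each individual $\cY^{i,j}$ is doubled only re-establishes what Theorem \ref{thm: intertwiners local} already gives (localization of each component); it provides no mechanism for computing, let alone matching, the braiding scalars across different groupings, which is exactly the step you yourself identify as unresolved.

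A secondary issue: your shortcut for the bosonic/fermionic case assumes $W\oplus M$ is a (conformal) subalgebra of $V$, but the hypothesis only gives that $M$ is a $W$-submodule of $V$; the products $a_{(n)}b$ for $a,b\in M$ need not land in $W\oplus M$ inside $V$ (e.g.\ when $W$ occurs with multiplicity in $V$), so $(W\oplus M)^{\otimes n}$ is not available as an ambient algebra with bounded localized vertex operators. The paper instead runs the same cocycle-plus-braiding argument in these cases, just with $\varepsilon=\pm1$ and no doubling needed.
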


In Section \ref{sec: examples lattices in codes}, we apply this theorem to produce many new examples of VOAs with bounded localized vertex operators.
We also verify that the conformal nets produced by our construction are isomorphic to the ones constructed in \cite{CKLW18} in all of the examples considered in Section \ref{sec: examples} which both constructions apply to.

\begin{Corollary*}
The class of VOAs with bounded localized vertex operators (with respect to the system of generalized annuli constructed in Section \ref{sec: example of system of generalized annuli}) includes:
\begin{itemize}
\item all WZW models $V(\mathfrak{g}, k)$ corresponding to simple Lie algebras $\mathfrak{g}$ at positive integral level $k$
\item all simple current extensions of copies of the Ising model $L(\frac12, 0)^{\otimes n}$
\item all simple current extensions of copies of the WZW model $V(\mathfrak{a}_1,1)^{\otimes n}$ (i.e. VOAs associated to lattices with $A_1^n$-framings)
\end{itemize}
\end{Corollary*}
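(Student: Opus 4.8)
The plan is to reduce all three families to the free fermion base case together with three tools already available: that the free fermion ($bc$-system) has bounded localized vertex operators with respect to the system $\scA_I$ of Section~\ref{sec: example of system of generalized annuli} \cite{GRACFT1}; that this property is inherited by tensor products and by unitary subalgebras (\cite{GRACFT1}, as adjusted in Section~\ref{sec: BLVO}); and the code-type extension theorem stated above (Proposition~\ref{prop: BLVO for easier codes} and Theorem~\ref{thm: BLVO for harder codes}), whose proof runs through Theorems~\ref{thm: intro module existence} and~\ref{thm: intro local intertwiners}.

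The second and third bullets are immediate applications of the code-type extension theorem. For the Ising model I would take $W = L(\tfrac12,0)$ with self-dual simple current $M = L(\tfrac12,\tfrac12)$ of weight $\tfrac12$ --- hence fermionic, $m=1$ --- and ambient VOSA the real free fermion $L(\tfrac12,0)\oplus L(\tfrac12,\tfrac12)$ at $c=\tfrac12$ (which has bounded localized vertex operators by the results of \cite{GRACFT1}); the theorem then covers all simple current extensions of $L(\tfrac12,0)^{\otimes n}$. For the third bullet, $W = V(\mathfrak{a}_1,1)$ has the spin-$\tfrac12$ module as a self-dual current of weight $\tfrac14$ --- hence semionic, $m=2$ --- and the ambient VOSA $V(\mathfrak{a}_1,1)^{\otimes 2}$ extended by the weight-$\tfrac12$ square of this current is isomorphic to $V_{\bbZ^2}$, a tensor square of the $bc$-system, which has bounded localized vertex operators. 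Since the $A_1^n$-framed lattice VOAs are exactly the code-type extensions of $V(\mathfrak{a}_1,1)^{\otimes n}$ for this semionic current, they are covered as well.

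For the WZW models I would proceed in two steps. At level one, the simply-laced $V(\mathfrak{g},1)$ are the root-lattice VOAs $V_Q$ of type $A$, $D$, $E$, which have bounded localized vertex operators as lattice VOAs --- for instance as cosets of the $\mathfrak{u}(1)$-current inside a charged free-fermion system, or, when $Q$ admits an $A_1$-framing, via the third bullet. Each non-simply-laced $V(\mathfrak{g},1)$ I would realize as a unitary subalgebra of a simply-laced level-one model through a Dynkin-index-one embedding ($\mathfrak{so}(2n+1)\subset\mathfrak{so}(2n+2)$, $\mathfrak{sp}(2n)\subset\mathfrak{su}(2n)$, $\mathfrak{g}_2\subset\mathfrak{so}(7)\subset\mathfrak{so}(8)$, $\mathfrak{f}_4\subset\mathfrak{e}_6$) and conclude by subalgebra inheritance. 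For arbitrary level $k$ I would use the diagonal embedding $V(\mathfrak{g},k)\hookrightarrow V(\mathfrak{g},1)^{\otimes k}$ induced by $\mathfrak{g}\hookrightarrow\mathfrak{g}^{\oplus k}$: the right-hand side has the property by tensor-product closure and the level-one case, and $V(\mathfrak{g},k)$ is a unitary subalgebra of it. This route is uniform in $\mathfrak{g}$ and $k$ and, crucially, evades the absence of a low-level free-field realization of the exceptional algebras (the difficulty that makes $E_8$ hard by other methods), since the diagonal embedding carries the correct level for every $\mathfrak{g}$ and $k$.

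The step I expect to demand the most care is subalgebra inheritance in these embeddings, because none of them is conformal: the diagonal and index-one embeddings realize $V(\mathfrak{g},k)$ as a coset-type unitary subalgebra whose conformal vector differs from that of the ambient lattice or free-fermion VOA, so the representation of $\Diff_c(S^1)$ on $\cH_{V(\mathfrak{g},k)}$, and hence the induced system of generalized annuli, is genuinely different from the ambient one. Descending bounded localized vertex operators across such a change of Virasoro structure is precisely where the localizability of the generalized annuli against the Virasoro net (Theorem~\ref{thm: intro module existence}) and the transfer of the bounded-localized property to intertwining operators (Theorem~\ref{thm: intro local intertwiners}) should enter. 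I would also need to confirm that the base lattice VOAs meet the hypotheses of the inheritance results, and that the self-duality and the bosonic, fermionic, or semionic type are as claimed in the Ising and $A_1$ cases.
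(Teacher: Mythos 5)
Your proposal follows essentially the same route as the paper: the free fermion base case (Theorem \ref{thm: FF BLVO}), closure under tensor products and unitary subalgebras (Propositions \ref{prop: BLVO tensor product} and \ref{prop: BLVO subalgebra}), the fermionic code-extension result with the real free fermion as ambient superalgebra for the Ising bullet, the semionic code-extension result with $\cF^{\otimes 2} \cong V_{\bbZ^2}$ as ambient for the $A_1^n$-framing bullet, and, for the WZW models, reduction to level one followed by the diagonal embedding $V(\frg,k) \subset V(\frg,1)^{\otimes k}$. The only genuine differences are cosmetic choices of level-one embeddings for the non-simply-laced algebras: the paper uses the conformal inclusion $V(F_4,1) \otimes V(G_2,1) \subset V(E_8,1)$ where you use $\mathfrak{f}_4 \subset \mathfrak{e}_6$ and $\mathfrak{g}_2 \subset \mathfrak{so}(7) \subset \mathfrak{so}(8)$; both work, since the embeddings have Dynkin index one and subalgebra inheritance applies.

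Two small corrections are needed. First, your justification of the level-one simply-laced case is incomplete for $E_6$ and $E_7$: there is no general theorem here that lattice VOAs have bounded localized vertex operators, and neither the Heisenberg-coset realization (which covers types $A$ and $D$ inside $\cF^{\otimes n}$) nor an $A_1^n$-framing (which covers $E_8$ via the Hamming code, Example \ref{ex: code lattices BLVO}) is available for the $E_6$ and $E_7$ root lattices. The missing step is the chain of lattice inclusions $E_6 \subset E_7 \subset E_8$, realizing $V(E_6,1) \subset V(E_7,1) \subset V(E_8,1)$ as (non-conformal) unitary subalgebras, which is exactly what Theorem \ref{thm: WZW BLVO} does; your own subalgebra-inheritance tool then closes the gap. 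Second, you place the burden of non-conformal subalgebra descent on Theorems \ref{thm: intro module existence} and \ref{thm: intro local intertwiners}; in fact Proposition \ref{prop: BLVO subalgebra} disposes of this directly by replacing $W \subset V$ with the conformal subalgebra isomorphic to $W \otimes W^c$ (Proposition \ref{propWandCommutantGenerateTensorProduct}) and invoking the tensor-product splitting. The representation-theoretic machinery of Theorems \ref{thm: intro module existence} and \ref{thm: intro local intertwiners} is needed only to pass \emph{up} to the code extensions, not down to subalgebras.
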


Of course, by earlier results, all subalgebras of tensor products of these algebras again have bounded localized vertex operators as well. 
In Section \ref{sec: modules and local equivalence}, we combine the Corollary and Theorem \ref{thmConstructionOfNetRepresentations} with results from the VOA literature (most notably, \cite{KrauelMiyamoto15} and \cite{ArakawaCreutzigLinshaw19}) to produce examples of VOAs $V$ such that $\pi^M$ exists for every $V$-module $M$, and in particular resolve the local equivalence problem for representations of the loop group $LG$ when $G$ is of type $E_6$, $E_7$, or $E_8$.

\subsection{Structure of the article}

The article is structured as follows.
In Section \ref{secPreliminaries}, we review the definitions of conformal nets and unitary vertex operator algebras, as well as their $\bbZ/2\bbZ$-graded relatives.
In Sections \ref{sec: generalized annuli} and \ref{sec: BLVO}, we review and extend the definition of VOAs with bounded localized vertex operators, and the construction of conformal nets from such VOAs.
In particular, Section \ref{sec: generalized annuli} gives the definition of a system of generalized annuli as well as our motivating example of a system of generalized annuli comprised of explicit exponentials of smeared Virasoro fields.
In Section \ref{sec: BLVO}, we study insertion operators relative to systems of generalized annuli, and this leads us to the definition of bounded localized vertex operators.
In Section \ref{sec: BLVO modules}, we define the correspondence between $V$-modules and $\cA_V$-representations (denoted $M \mapsto \pi^M$), and provide existence theorems for representations $\pi^M$.
We also study the relationship between intertwining operators (for VOAs) and local intertwiners (for conformal nets).
In Section \ref{sec: code extensions}, we show that certain simple current extensions preserve bounded localized vertex operators, and in Section \ref{sec: examples} we apply our theorems to specific models.

\subsection{Acknowledgments}

A significant portion of the research for this article was undertaken at the Max Planck Institute for Mathematics, Bonn, between 2014 and 2016, and I would like to gratefully acknowledge their hospitality and support during this period.
This work was also supported in part by an AMS-Simons travel grant.
I am grateful to many people for enlightening conversations which improved this article, including Marcel Bischoff, Sebastiano Carpi, Thomas Creutzig, Terry Gannon, Andr\'{e} Henriques, and Robert McRae.

%\settocdepth{subsection}
\newpage

\section{Preliminaries}\label{secPreliminaries}

\subsection{Unitary vertex operator superalgebras and their representation theory}

\subsubsection{Unitary vertex operator superalgebras}

In this section we will present definitions and basic results pertaining to unitary vertex operator superalgebras and their modules.
The definitions first appeared in \cite{DongLin14} (for the case of vertex operator algebras) and \cite{AiLin17} (for the general case).
The theory of unitary vertex operator algebras was extended in \cite[\S4-5]{CKLW18}, and the results presented below are adapted from these three references, along with \cite{FHL93}.

\begin{Definition}[Vertex operator superalgebras]
A  vertex operator superalgebra is given by:
\begin{enumerate}
\item a $\Z/2\Z$-graded vector space $V = V^0 \oplus V^1$. Elements of $V^0 \cup V^1$ are called parity homogeneous vectors, and elements of $V^0$ (resp. $V^1$) are called even (resp. odd) vectors.
If $a \in V^i$, we denote the parity $p(a) = i$.
\item even vectors $\Omega,\nu \in V^0$ called the \emph{vacuum vector} and the \emph{conformal vector}, respectively.
\item a state-field correspondence $Y:V  \to \End(V)[[x^{\pm 1}]]$, denoted 
\begin{equation}\label{eqnStateFieldForm}
Y(a,x) = \sum_{n \in \Z} a_{(n)} x^{-n-1}.
\end{equation}
Here $\End(V)[[x^{\pm 1}]]$ is the vector space of formal series of the form \eqref{eqnStateFieldForm}.
%The coefficients $a_{(n)}$ of this formal series are called the \emph{modes} of $a$.
\end{enumerate}
This data must satisfy:
\begin{enumerate}
\item For every $a \in V$, if $a$ is even (resp. odd) then $a_{(n)}$ is even (resp. odd) for all $n \in \Z$.
\item For every $a,b \in V$, we have $a_{(n)}b = 0$ for $n$ sufficiently large. 
\item For every $a \in V$, we have $a_{(n)}\Omega = 0$ for $n \ge 0$ and $a_{(-1)}\Omega = a$.
\item $Y(\Omega,x) = 1_V$. That is, $\Omega_{(n)} = \delta_{n,-1} 1_V$.
\item For every  $a,b,c \in V$ and every $m,k,n \in \Z$, we have the Borcherds (or Jacobi) identity:
\begin{align*}
\sum_{j = 0}^\infty \binom{m}{j} \big(a_{(n+j)}b\big)_{(m+k-j)}&c = \sum_{j=0}^\infty (-1)^j \binom{n}{j} a_{(m+n-j)}b_{(k+j)}c \\ 
&-(-1)^{p(a)p(b)} \sum_{j=0}^\infty(-1)^{j+n} \binom{n}{j} b_{(n+k-j)}a_{(m+j)}c.
\end{align*}
\item If we write $Y(\nu,x) = \sum_{n \in \Z} L_n x^{-n-2}$, then the $L_n$ give a representation of the Virasoro algebra. 
That is, 
$$
[L_m,L_n] = (m-n)L_{m+n} + \tfrac{c}{12} (m^3-m)\delta_{m,-n}1_V
$$
for a number $c \in \C$, called the \emph{central charge}.
\item If we write $V_\alpha = \ker (L_0 - \alpha 1_V)$, then we have a decomposition of $V$ as an algebraic direct sum
$$
V^0 = \bigoplus_{\alpha \in \Z_{\ge 0}} V_\alpha, \qquad V^1 = \bigoplus_{\alpha \in \tfrac12 + \Z_{\ge 0}} V_\alpha
$$
with $\dim V_\alpha < \infty$.
\item For every $a \in V$ we have $[L_{-1}, Y(a,x)] = \frac{d}{dx} Y(a,x)$.
\end{enumerate}
\end{Definition}
We will often abuse terminology by referring to $V$ as a vertex operator superalgebra, instead of referring to the quadruple $(V, Y, \Omega, \nu)$.
If $V^1 = \{0\}$, then $V$ is called a \emph{vertex operator algebra}, and we say that $V$ is \emph{even}.

If $a \in V_\alpha$, then we say that $a$ is homogeneous of conformal weight $\alpha =: \Delta_a$.
It follows from the definition that if $a$ is homogeneous, then $a_{(n)}V_{\beta} \subset V_{\beta - n - 1 + \Delta_a}$.
More generally, expressions involving $\Delta_a$ and $p(a)$ should be understood as being for homogeneous $a$, and extended linearly otherwise.

We will use the special cases of the Borcherds identity corresponding to $m=0$ and $n=0$, which say:
\begin{equation}\label{eqn: BPF}
\big(a_{(n)}b\big)_{(k)} c= \sum_{j=0}^\infty (-1)^j \binom{n}{j} \left(a_{(n-j)}b_{(k+j)} - (-1)^{p(a)p(b)+n}b_{(n+k-j)}a_{(j)}\right)c
\end{equation}
and
\begin{equation}\label{eqn: BCF}
a_{(m)}b_{(k)}c - (-1)^{p(a)p(b)} b_{(k)}a_{(m)}c = \sum_{j =0}^\infty \binom{m}{j} \big(a_{(j)}b\big)_{(m+k-j)}c,
\end{equation}
which are called the Borcherds Product Formula and Borcherds Commutator Formula, respectively.

\begin{Definition}[Subalgebras, ideals, etc.]
A subalgebra of a vertex operator superalgebra $V$ is a subspace $W \subset V$ such that:
\begin{enumerate}
\item $W = (W \cap V^0) \oplus (W \cap V^1)$
\item $a_{(n)}b \in W$ for all $a,b \in W$ and $n \in \Z$.
\item $\Omega \in W$
\end{enumerate}
If $\nu \in W$, then $W$ is called a \emph{conformal subalgebra} of $V$.

A vertex subalgebra $W$ is called an \emph{ideal} if we have $a_{(n)}b \in W$ for every $a \in V$ and $b \in W$.
A  vertex operator superalgebra $V$ is called \emph{simple} if its only ideals are $\{0\}$ and $V$.

A homomorphism (resp. antilinear homomorphism) from a vertex operator superalgebra $V$ to a vertex operator superalgebra $W$ is a complex linear (resp. antilinear) map $\phi:V \to W$ which satisfies $\phi(\Omega_V) = \Omega_W$, $\phi(\nu_V) = \nu_W$, and $\phi(a_{(n)}b) = \phi(a)_{(n)}\phi(b)$ for all $a,b \in V$.
We also have the obvious notion of (antilinear) isomorphism and automorphism.
\end{Definition}

Since homomorphisms preserve the $L_0$ grading, which in turn determines the parity grading, our definition of homomorphism is always even.

The grading operator $\Gamma = (-1)^{2L_0}$ is always an automorphism of a vertex operator superalgebra.
We will make use of the Klein transform $\kappa$, a square root of $\Gamma$ defined by
$$
\kappa = \frac{1 - i \Gamma}{1 - i},
$$
which acts on homogeneous vectors $a$ by 
$
\kappa a = i^{p(a)} a.
$
Throughout the paper we will use the convention that $(-1)^z = e^{i \pi z}$, and so $\kappa a = (-1)^{\tfrac12 p(a)}a$.

\begin{Definition}\label{defUnitaryVOSA}
A \emph{unitary vertex operator superalgebra} is a vertex operator superalgebra $V$ equipped with an inner product and an antilinear automorphism $\theta$ satisfying:
\begin{enumerate}
\item $\bip{b, Y(\theta a, \overline{x})c} = \bip{Y(e^{x L_1} (-x^{-2})^{L_0} \kappa a, x^{-1})b, c}$ for all $a,b,c \in V$.
\item $\ip{\Omega,\Omega} = 1$
\end{enumerate}
We write $\cH_V$ for the Hilbert space completion of $V$, consisting of vectors $\xi = \sum_{n \in \tfrac12\bbZ} v_n$ with $v_n \in V_n$ and $\sum \norm{v_n}^2 < \infty$.

An isomorphism $\phi:V \to W$ of unitary vertex operator superalgebras is called unitary if $\ip{\phi a, \phi b} = \ip{a, b}$ for all $a,b \in V$.
If $V^1 = \{0\}$ then we refer to $V$ as a \emph{unitary vertex operator algebra}.

A subalgebra $W$ of $V$ is called a \emph{unitary subalgebra} if $\theta(W) \subset W$ and $L_1 W \subset W$.
\end{Definition}
Note that $x$ is treated as a formal, complex variable in the statement of the invariance property, and that $(-1)^{L_0}$ is defined as $e^{i \pi L_0}$ by our convention above.

Let $(V,Y,\Omega,\nu,\ip{\,\cdot\, , \, \cdot \,}, \theta)$ be a simple unitary vertex operator superalgebra, and let $W$ be a unitary subalgebra.
Then we may obtain a unitary vertex operator superalgebra structure on $W$ as follows.
Let $e_W$ be the orthogonal projection of $\cH_V$ onto $\cH_W$, let $Y^W$ and $\theta^W$ be the restrictions of $Y$ and $\theta$ to $W$, and let $\nu^W = e_W \nu$.
The following is \cite[Prop. 5.29]{CKLW18}:
\begin{Proposition}\label{propUnitarySubalgebraIsUnitaryVOA}
$(W, Y^W, \nu^W, \ip{\,\cdot\, , \, \cdot \,}, \theta^W)$ is a simple unitary vertex operator superalgebra.
Moreover, $L_i^{W} = L_i|_{W}$ for $i \in \{-1, 0 ,1\}$, and in particular the $\tfrac12 \Z_{\ge 0}$ grading of $W$ coincides with the one inherited from $V$.
\end{Proposition}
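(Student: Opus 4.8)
The plan is to observe that nearly all of the vertex operator superalgebra axioms for $(W, Y^W, \Omega, \nu^W)$ are inherited from $V$ by restriction: since $W$ is closed under all products $a_{(n)}b$ and contains $\Omega$, the modes of $Y(a,\cdot)$ for $a \in W$ restrict to operators on $W$, and the Borcherds identity, the vacuum axioms, and (once the conformal vector is understood) translation covariance all hold on $W$ simply because they hold on $V$. The entire content of the proposition therefore lies in the conformal vector: one must show that $\nu^W = e_W\nu$ is a genuine conformal vector for $W$ and, crucially, that the Virasoro modes $L_i^W = (\nu^W)_{(i+1)}$ it defines agree with $L_i|_W$ for $i \in \{-1,0,1\}$. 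I would prove this first, as everything else follows from it.

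The first step is to show that $W$ is invariant under the M\"{o}bius triple $L_{-1}, L_0, L_1$. Invariance under $L_{-1}$ follows from the creation property, since $L_{-1}a = a_{(-2)}\Omega \in W$ for $a \in W$; invariance under $L_1$ is a hypothesis; and $L_0 = \tfrac12[L_1, L_{-1}]$ then preserves $W$ as well. Next, the unitarity axiom (take $a = \nu$ in the invariance property) gives $L_n^* = L_{-n}$, so the set $\{L_{-1}, L_0, L_1\}$, being closed under adjoints, preserves $W^\perp$ in addition to $W$; hence the projection $e_W$ commutes with each of $L_{-1}, L_0, L_1$. Because $L_0$ is self-adjoint and preserves $W$, the subspace $W$ decomposes into $L_0$-eigenspaces, $W = \bigoplus_\alpha (W \cap V_\alpha)$, which simultaneously establishes the grading axiom and shows $\nu^W = e_W\nu \in W \cap V_2 \subset W$.

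The heart of the argument is the identity $L_i^W = L_i|_W$. Writing $\nu' = \nu - \nu^W = (1 - e_W)\nu \in W^\perp$, for $a \in W$ we have $L_i a = (\nu^W)_{(i+1)}a + \nu'_{(i+1)}a$, where the left side lies in $W$ (M\"{o}bius invariance) and the first term on the right lies in $W$ (product closure); thus $\nu'_{(i+1)}a \in W$. On the other hand I would show $\nu'_{(i+1)}a \in W^\perp$, whence $\nu'_{(i+1)}a \in W \cap W^\perp = \{0\}$ and the desired equality follows. To see that $\nu'_{(n)}a \in W^\perp$ for every $n$, I first note that for $a \in W$ every mode $a_{(m)}$ preserves $W^\perp$: indeed $a_{(m)}$ preserves $W$, and the invariance property expresses $(a_{(m)})^*$ as a finite combination of modes of vectors obtained from $a$ by applying $\theta$, $L_0$, $L_1$, and $\kappa$, all of which preserve $W$; hence $(a_{(m)})^*$ preserves $W$ and so $a_{(m)}$ preserves $W^\perp$. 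Applying skew-symmetry, $Y(\nu', x)a = e^{xL_{-1}}Y(a,-x)\nu'$, and since each $a_{(m)}\nu' \in W^\perp$ while $L_{-1}$ preserves $W^\perp$, every coefficient $\nu'_{(n)}a$ lies in $W^\perp$, as needed. I expect this two-sided squeeze --- controlling the unknown modes of the projected error $\nu'$ simultaneously from inside $W$ and inside $W^\perp$ --- to be the main obstacle, since projecting $\nu$ distorts its modes in a way admitting no direct formula.

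It then remains to assemble the standard conclusions. Since $L_1^W = L_1|_W$ and $e_W$ commutes with $L_1$, we get $L_1\nu^W = e_W L_1\nu = 0$; together with $L_0\nu^W = 2\nu^W$ and $(\nu^W)_{(3)}\nu^W \in V_0 = \bbC\Omega$ (forced by the grading), the Borcherds commutator formula shows that the $L_n^W$ satisfy the Virasoro relations with central charge $c^W$ determined by $(\nu^W)_{(3)}\nu^W = \tfrac{c^W}{2}\Omega$, so $\nu^W$ is a conformal vector. Unitarity of $W$ is then immediate: the invariance property and the normalization $\ip{\Omega,\Omega}=1$ restrict from $V$, and $\theta^W = \theta|_W$ is an antilinear automorphism of $W$ because $\theta$ is antiunitary, hence commutes with $e_W$, giving $\theta\nu^W = e_W\theta\nu = \nu^W$. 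Finally, $V$ simple and unitary forces $V_0 = \bbC\Omega$, whence $W_0 = W \cap V_0 = \bbC\Omega$; a unitary vertex operator superalgebra with one-dimensional weight-zero space is simple, so $W$ is simple. The equalities $L_i^W = L_i|_W$ for $i \in \{-1,0,1\}$ and the decomposition $W = \bigoplus_\alpha(W \cap V_\alpha)$ already recorded yield the final assertions about the $\tfrac12\Z_{\geq 0}$-grading.
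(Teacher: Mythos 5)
Your proof is correct, but note that the paper itself offers no argument for this statement: it simply quotes \cite[Prop.\ 5.29]{CKLW18} (with the super case as in \cite{AiLin17}). The cited proof reaches the key identity $L^W_i = L_i|_W$ by a different mechanism: one first shows that $e_W$ commutes with every mode $a_{(n)}$, $a \in W$ (using the same adjoint formula you invoke), deduces that $\nu' = (1-e_W)\nu$ lies in the coset $W^c$, and then uses the general vanishing $b_{(n)}a = b_{(n)}a_{(-1)}\Omega = \pm\, a_{(-1)}b_{(n)}\Omega = 0$ for $b \in W^c$, $a \in W$, $n \ge 0$ to kill $\nu'_{(i+1)}a$ outright. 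Your two-sided squeeze --- $\nu'_{(i+1)}a \in W$ because $L_i$ and $(\nu^W)_{(i+1)}$ both preserve $W$, while $\nu'_{(n)}a \in W^\perp$ for all $n$ via skew-symmetry and the fact that modes of elements of $W$ preserve $W^\perp$ --- avoids introducing the coset and is self-contained; it only controls $\nu'_{(i+1)}a$ for $i \in \{-1,0,1\}$ rather than all $i \ge -1$, but that is exactly what the statement requires. Two standard facts are used without comment and should be flagged as imports from the unitarity theory: that $\theta$ is antiunitary (needed so that $\theta$ commutes with $e_W$ and hence fixes $\nu^W$), and that skew-symmetry holds in a vertex operator superalgebra; both are established in \cite{CKLW18,AiLin17}. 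Otherwise every step checks out, including the delicate points that $e_W\nu$ lands in $W$ itself (not merely in $\cH_W$) because $e_W$ commutes with the diagonalizable $L_0$ and $V_2$ is finite-dimensional, and that the full Virasoro relations for the $L^W_n$ follow from the Borcherds commutator formula once $(\nu^W)_{(1)}\nu^W = 2\nu^W$, $(\nu^W)_{(2)}\nu^W = 0$, and $(\nu^W)_{(3)}\nu^W \in V_0 = \bbC\Omega$ are known.
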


Note that unitary subalgebras of simple unitary vertex operator superalgebras are again simple by \cite[Prop. 5.3]{CKLW18}:
\begin{Proposition}\label{propUnitarySimple}
Let $V$ be a unitary vertex operator superalgebra.
Then $V$ is simple if and only if $V_0 = \C \Omega$.
\end{Proposition}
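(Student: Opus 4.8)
The plan is to prove the two implications separately, with unitarity providing the essential ingredient beyond the purely algebraic structure of $V$.

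For the implication $V_0 = \C\Omega \Rightarrow V$ simple, the key is that unitarity turns the invariance property of Definition \ref{defUnitaryVOSA} into an adjunction formula expressing each $(a_{(n)})^*$ as a finite linear combination of the modes $(\theta a)_{(m)}$. First I would extract this formula; for $a \in V_0$ (even, with $L_1 a = 0$ and $L_0 a = 0$) it reduces to the clean identity $\bip{b, (\theta a)_{(n)} c} = \bip{a_{(-n-2)} b, c}$, so that $(a_{(-1)})^* = (\theta a)_{(-1)}$. It then follows that whenever $J$ is an ideal, its orthogonal complement $J^\perp$ is again $Y$-invariant: for $w \in J$ and $v \in J^\perp$ one has $\bip{w, a_{(n)} v} = \bip{(a_{(n)})^* w, v} = 0$, since $(a_{(n)})^* w \in J$. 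Because the modes preserve the $L_0$-grading, the decomposition $V = J \oplus J^\perp$ is graded, whence $V_0 = (J \cap V_0) \oplus (J^\perp \cap V_0)$. As $V_0 = \C\Omega$ is one-dimensional, $\Omega$ lies in one summand, say $\Omega \in J$, and then $a = a_{(-1)}\Omega \in J$ for every $a \in V$, forcing $J = V$. Thus $V$ has no proper nonzero ideal.

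For the converse I would argue by contrapositive, assuming $\dim V_0 \geq 2$ and manufacturing a proper nonzero ideal. First I would record that $V_0$, with the product $a \cdot b := a_{(-1)} b$ and unit $\Omega$, is a finite-dimensional commutative associative algebra: commutativity follows from skew-symmetry at conformal weight $0$, and associativity follows from the Borcherds product formula \eqref{eqn: BPF} specialized to $V_0$, where the correction sums collapse because $a_{(j)} b \in V_{-j-1} = 0$ for $j \geq 0$. The involution $\theta$ together with the adjunction $(a_{(-1)})^* = (\theta a)_{(-1)}$ then exhibits $V_0$ as a finite-dimensional commutative H$^*$-algebra, hence $V_0 \cong \C^k$ with $k = \dim V_0 \geq 2$; in particular $V_0$ contains a nontrivial self-adjoint idempotent $e$ with $0 \neq e \neq \Omega$.

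The main obstacle is the final step: converting $e$ into a genuine ideal of $V$. The naive hope that $e_{(-1)}$ is a central projection whose spectral subspaces split $V$ fails, because the Borcherds commutator formula \eqref{eqn: BCF} gives $[u_{(n)}, e_{(-1)}] = \sum_{j \geq 0} \binom{n}{j} (u_{(j)} e)_{(n-1-j)}$, whose right-hand side need not vanish for higher-weight $u$. To overcome this I would reformulate simplicity of $V$ as irreducibility of $V_0$ as a module over the Zhu algebra $A(V)$ acting by zero modes $o(u)|_{V_0}$, observing that the family $\{a_{(-1)} : a \in V_0\}$ mutually commutes (again because $a_{(j)} b = 0$ for $j \geq 0$) and embeds $V_0$ into this action as its own regular representation. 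The crux is then to show, using the $*$-structure $o(u)^* = o(\theta u)$ supplied by unitarity together with the complement lemma of the first part, that the higher zero modes cannot connect the distinct spectral components of $V_0 \cong \C^k$; this forces the $A(V)$-submodule generated by $e$ to be proper, so that $\langle e \rangle \subsetneq V$ is a nonzero ideal, contradicting simplicity. I expect this decoupling of the spectral components to be the genuinely delicate point, and the place where unitarity is used in an essential rather than cosmetic way.
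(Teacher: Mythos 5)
Your first direction is correct, and since the paper itself defers this proposition entirely to \cite[Prop. 5.3]{CKLW18} there is no in-house proof to compare against. The invariance property does express $(a_{(n)})^*$ as a finite linear combination of modes of the vectors $\theta L_1^k a$, so $J^\perp$ is mode-invariant whenever $J$ is; both are graded because $L_0=\nu_{(1)}$ preserves any ideal; and one-dimensionality of $V_0$ forces $\Omega$ into one summand (in the case $\Omega\in J^\perp$ you should add the half-sentence that then $J^\perp=V$, whence $J\subseteq V^\perp=0$). The algebraic preliminaries of your converse are also fine: $V_0$ is a commutative associative algebra under $a_{(-1)}b$, and the relation $(a_{(-1)})^*|_{V_0}=(\theta a)_{(-1)}|_{V_0}$ makes the regular representation a commuting family of normal operators, so $V_0\cong\C^k$ and there is an idempotent $e$ with $0\neq e\neq\Omega$.

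The gap is exactly where you flag it: you never prove that the ideal generated by $e$ is proper, only express the hope that "the higher zero modes cannot connect the distinct spectral components," and the Zhu-algebra reformulation does not by itself supply this --- there is no a priori reason for $o(u)e$ to stay inside $e\cdot V_0$. The missing ingredient is a one-line consequence of unitarity that you did not extract: for $a\in V_0$ one has $L_1a\in V_{-1}=0$, hence
$\norm{L_{-1}a}^2=\bip{a,L_1L_{-1}a}=\bip{a,2L_0a}=0$,
so $L_{-1}$ annihilates $V_0$. By the $L_{-1}$-derivative property $Y(a,x)$ is then constant in $x$, i.e.\ $a_{(n)}=0$ for all $n\neq-1$; feeding this into the commutator formula \eqref{eqn: BCF} gives $[a_{(-1)},b_{(m)}]=\sum_{j\ge0}\binom{-1}{j}(a_{(j)}b)_{(m-1-j)}=0$ for every $b\in V$ and $m\in\Z$. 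Thus $e_{(-1)}$ is an idempotent (by \eqref{eqn: BPF}, $(e_{(-1)}e)_{(-1)}=e_{(-1)}^2$, and $e_{(-1)}e=e$) commuting with all modes, so $\ker e_{(-1)}$ is a graded, mode-invariant subspace that is nonzero (it contains $\Omega-e$) and proper (it misses $\Omega$), i.e.\ a nontrivial ideal in the standard sense. This single observation, $L_{-1}V_0=0$, is where unitarity enters essentially; it replaces your entire final paragraph and is how the cited reference closes the argument.
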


We now briefly introduce the tensor product and coset constructions for unitary vertex operator superalgebras; for more detail, see \cite[\S2.2]{GRACFT1} or \cite[\S5]{CKLW18}.

\begin{Definition}[Coset]
Let $(V,Y,\Omega, \nu)$ be a vertex operator superalgebra and let $W$ be a subalgebra.
The \emph{coset} $W^c \subset V$ is given by
$$
W^c = \{a \in V : Y(a,x)Y(b,y) - (-1)^{p(a)p(b)}Y(b,y)Y(a,x) = 0 \mbox{ for all } b \in W \}.
$$
\end{Definition}

\begin{Proposition}[{\cite[Ex. 5.27]{CKLW18}}]\label{propUVOSACoset}
Let $W$ be a simple unitary vertex operator superalgebra, and let $W \subset V$ be a unitary subalgebra.
Then $W^c$ is a unitary subalgebra and $\nu = \nu^W + \nu^{W^c}$.
\end{Proposition}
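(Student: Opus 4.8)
The plan begins by replacing the super-commutation in the definition of $W^c$ by an algebraic condition on modes. Using the Borcherds commutator formula \eqref{eqn: BCF}, together with the fact that a field vanishes as soon as its generating state does, one sees that $a\in W^c$ if and only if $a_{(n)}b=0$ for every $b\in W$ and every $n\ge0$; since the super-commutator of two fields is symmetric in the two fields, this is equivalent to $b_{(n)}a=0$ for all $b\in W$, $n\ge0$. With this in hand the assertion that $W^c$ is a subalgebra is routine: $\Omega\in W^c$ since $Y(\Omega,x)=1_V$; the defining condition is parity graded, so splitting $a=a^0+a^1$ and matching parities shows $a^0,a^1\in W^c$; and closure under $a_{(n)}b$ comes from the Borcherds product formula \eqref{eqn: BPF}, since in the expansion of $(a_{(n)}b)_{(k)}c$ with $c\in W$ and $k\ge0$ every term carries a factor $b_{(k+j)}c$ (with $k+j\ge0$) or a factor $a_{(j)}c$ (with $j\ge0$), all of which vanish for $a,b\in W^c$.

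Next I would verify the two closure conditions making $W^c$ a \emph{unitary} subalgebra. Because $\theta$ is an antilinear automorphism fixing $\nu$, it commutes with $L_0=\nu_{(1)}$ and hence preserves each finite-dimensional space $V_\alpha$; as $\theta(W)\subseteq W$ and $\theta$ is injective, it restricts to a bijection of every $W\cap V_\alpha$, so $\theta(W)=W$. Then for $a\in W^c$ and $b'=\theta b\in W$ one has $(\theta a)_{(n)}b'=\theta(a_{(n)}b)=0$, whence $\theta a\in W^c$. For the $L_1$-condition, the Virasoro commutator formula gives, for homogeneous $a$ of weight $\Delta$, the identity $(L_1a)_{(j)}=[L_1,a_{(j)}]+(j+2-2\Delta)a_{(j+1)}$; evaluating on $b\in W$ with $j\ge0$ and using $a_{(j)}b=a_{(j+1)}b=0$ together with $a_{(j)}(L_1b)=0$ — the last being where the hypothesis $L_1W\subseteq W$ enters — shows $(L_1a)_{(j)}b=0$, so $L_1a\in W^c$. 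Thus $W^c$ is a unitary subalgebra, and by Proposition \ref{propUnitarySubalgebraIsUnitaryVOA} it carries its own conformal vector $\nu^{W^c}=e_{W^c}\nu$.

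The substance of the proposition is the identity $\nu=\nu^W+\nu^{W^c}$. Writing $\tilde\nu:=\nu-\nu^W=(1-e_W)\nu\in\cH_W^\perp$, the crucial claim is $\tilde\nu\in W^c$. A short skew-symmetry computation, using $L_1\nu=L_1\nu^W=0$ and $\nu_{(3)}\nu^W=L_2\nu^W=\tfrac{c^W}{2}\Omega$, yields $\nu^W_{(1)}\nu=2\nu^W=\nu^W_{(1)}\nu^W$ and therefore $\nu^W_{(1)}\tilde\nu=0$: the vector $\tilde\nu$ has conformal weight $0$ for the Virasoro field $Y(\nu^W,x)$ acting on $\cH_V$ (in fact one checks $\nu^W_{(n+1)}\tilde\nu=0$ for all $n\ge-1$). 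I expect the passage from this weight-$0$ statement to full membership $\tilde\nu\in W^c$ — that is, annihilation of $\tilde\nu$ by \emph{all} $b_{(n)}$ with $b\in W$, $n\ge0$, and not merely by the modes of $\nu^W$ — to be the main obstacle; it is the unitary incarnation of the classical coset (GKO) construction. This is exactly where unitarity is indispensable: $\cH_V$ is a unitary positive-energy $W$-module, so $\nu^W_{(1)}\ge0$ and a weight-$0$ vector is a ground state, and because $W$ is simple (Proposition \ref{propUnitarySimple}, so $W_0=\C\Omega$) any such ground state must be ``vacuum-like,'' generating a copy of the vacuum $W$-module and hence killed by every $b_{(n)}$, $n\ge0$. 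Turning this heuristic into a proof is the step I would spend the most care on, leaning on the unitary module theory developed in \cite{CKLW18}.

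Once $\tilde\nu\in W^c$ is secured, the decomposition is immediate up to identifying $\tilde\nu$ with $\nu^{W^c}=e_{W^c}\nu$, i.e. showing $e_{W^c}\nu^W=0$. Here I would use that $\nu^W$ is a $\theta$-fixed quasi-primary vector of weight $2$, so the invariance property of the unitary structure gives $(\nu^W_{(-1)})^*=\nu^W_{(3)}$; then for any $u\in W^c$, $\langle\nu^W,u\rangle=\langle\nu^W_{(-1)}\Omega,u\rangle=\langle\Omega,\nu^W_{(3)}u\rangle=0$, the last equality because $\nu^W_{(3)}u=0$ by the coset condition. Hence $\nu^W\perp\cH_{W^c}$, so $e_{W^c}\nu=e_{W^c}\tilde\nu=\tilde\nu=\nu^{W^c}$ and $\nu=\nu^W+\nu^{W^c}$, as required.
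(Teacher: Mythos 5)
Your proof is correct, and since the paper gives no argument for this proposition --- it is imported directly from \cite[Ex.\ 5.27]{CKLW18} --- the only meaningful comparison is with the standard coset argument from the literature, which yours reconstructs faithfully: the mode characterization $W^c=\{a: a_{(n)}b=0,\ b\in W,\ n\ge 0\}$, the closure under $\theta$ (via $\theta(W)=W$ from finite-dimensionality of the graded pieces) and under $L_1$, the skew-symmetry computation giving $\nu^W_{(1)}(\nu-\nu^W)=0$, and the adjoint computation $\ip{\nu^W,u}=\ip{\Omega,\nu^W_{(3)}u}=0$ identifying $\tilde\nu$ with $e_{W^c}\nu$ are all sound.

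The one step you flag as the main obstacle --- upgrading $L_0^W\tilde\nu=0$ to $b_{(n)}\tilde\nu=0$ for all $b\in W$ and $n\ge 0$ --- closes with much less machinery than you anticipate, and in particular needs neither the simplicity of $W$ nor any decomposition of $V$ into irreducible $W$-modules. You have already extracted the key consequence of unitarity: positivity of $L_0^W$ on $\cH_V$ together with $[L_1^W,L_{-1}^W]=2L_0^W$ gives $\norm{L_{-1}^W\tilde\nu}^2=\norm{L_{1}^W\tilde\nu}^2=0$, so $L_{-1}^W\tilde\nu=0$. From here the statement is Li's purely formal lemma on vacuum-like vectors. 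For $b\in W$ the Borcherds commutator formula \eqref{eqn: BCF} applied to $\nu^W$, together with $L_{-1}^W|_W=L_{-1}|_W$ from Proposition \ref{propUnitarySubalgebraIsUnitaryVOA}, gives $[L_{-1}^W,Y(b,x)]=\frac{d}{dx}Y(b,x)$ as operators on $V$; hence
$$
\frac{d}{dx}\Big(e^{-xL_{-1}^W}Y(b,x)\tilde\nu\Big)=-e^{-xL_{-1}^W}\,Y(b,x)\,L_{-1}^W\tilde\nu=0,
$$
so the truncated Laurent series $e^{-xL_{-1}^W}Y(b,x)\tilde\nu$ is a constant vector $v_0$ and $Y(b,x)\tilde\nu=e^{xL_{-1}^W}v_0$ has no negative powers of $x$, i.e.\ $b_{(n)}\tilde\nu=0$ for all $n\ge 0$. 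Unitarity is thus used exactly once (to kill $L_{-1}^W\tilde\nu$), and the ``ground states are vacuum-like'' heuristic you invoke is literally this two-line computation rather than something requiring the classification of unitary $W$-modules.
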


Unitary tensor products of vertex operator superalgebras were discussed in \cite{AiLin17}:

\begin{Proposition}[\cite{AiLin17}]\label{propUVOSATensorProduct}
For $i \in \{1,2\}$, let $(V_i, Y^i, \Omega_i, \nu_i, \ip{\, \cdot\, , \, \cdot \,}, \theta_i)$ be unitary vertex operator superalgebras.
For $a_i \in V_i$ homogeneous vectors with parity $p(a_i)$, let $Y(a_1 \otimes a_2, x) = Y^1(a_1, x)\Gamma_{V_1}^{p(a_2)} \otimes Y^2(a_2, x)$.
Then $(V_1 \otimes V_2, Y, \Omega_1 \otimes \Omega_2, \nu_1 \otimes \Omega_2 + \Omega_1 \otimes \nu_2, \ip{\, \cdot \, , \, \cdot \,}, \theta_1 \otimes \theta_2)$ is a unitary vertex operator superalgebra.
\end{Proposition}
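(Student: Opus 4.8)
The plan is to unwind the two assertions packaged in the statement and establish them in turn. The first is that $(V_1\otimes V_2,\,Y,\,\Omega_1\otimes\Omega_2,\,\nu_1\otimes\Omega_2+\Omega_1\otimes\nu_2)$ is a vertex operator superalgebra; the second is that the tensor product inner product together with $\theta_1\otimes\theta_2$ promotes it to a unitary one in the sense of Definition \ref{defUnitaryVOSA}. In both parts the strategy is to test every identity on decomposable parity-homogeneous vectors $a=a_1\otimes a_2$, $b=b_1\otimes b_2$, $c=c_1\otimes c_2$, reduce it to the already-known structure on the factors $V_1$ and $V_2$, and then extend by (anti)linearity.

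For the first assertion the VOSA axioms are checked componentwise, exactly as in the classical even tensor product construction. The grading, truncation, vacuum and creation axioms are immediate once one records that $p(a_1\otimes a_2)=p(a_1)+p(a_2)$ and that $Y(\Omega_1\otimes\Omega_2,x)=1$. Writing $Y(\nu,x)=\sum L_n x^{-n-2}$ one finds $L_n=L_n^1\otimes 1+1\otimes L_n^2$, so the $L_n$ form a Virasoro representation of central charge $c_1+c_2$, the $L_0$-grading is the tensor product grading with finite-dimensional homogeneous pieces, and the translation axiom follows from the two factor identities. The only genuinely superalgebraic input is the operator $\Gamma_{V_1}^{p(a_2)}$ appearing in the definition of $Y$: it implements the Koszul sign rule, producing the factor $(-1)^{p(a_2)p(b_1)}$ whenever the second field is moved past the first tensor slot, which is precisely what is needed for the Borcherds identity to hold with the correct super-signs. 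I would verify this by expanding both sides of the Borcherds identity into factor operators and matching the sign prefactors.

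The substance of the proposition is the invariance axiom. Fixing decomposable vectors as above, I would expand the left-hand side $\ip{b,Y(\theta a,\overline x)c}$: splitting the field gives the Koszul sign $(-1)^{p(a_2)p(c_1)}$ from $\Gamma_{V_1}^{p(a_2)}$ acting on $c_1$, and then the factorwise invariance in $V_1$ and $V_2$ rewrites it as $(-1)^{p(a_2)p(c_1)}\ip{Y^1(\tilde a_1,x^{-1})b_1,c_1}\,\ip{Y^2(\tilde a_2,x^{-1})b_2,c_2}$, where $\tilde a_i=e^{xL_1^i}(-x^{-2})^{L_0^i}\kappa_i a_i$. Expanding the right-hand side $\ip{Y(e^{xL_1}(-x^{-2})^{L_0}\kappa a,x^{-1})b,c}$ instead produces the Koszul sign $(-1)^{p(a_2)p(b_1)}$, since now $\Gamma_{V_1}^{p(a_2)}$ acts on $b_1$. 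Naively the two sides would then differ by $(-1)^{p(a_2)(p(c_1)-p(b_1))}$.

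The reconciliation of these two signs is the step I expect to be the main obstacle, and it turns on a subtlety in the Klein transform that is easy to miss: on the tensor product $\kappa$ is \emph{not} the tensor product $\kappa_1\otimes\kappa_2$. Indeed $\kappa$ depends only on the parity mod $2$, whereas $\kappa_1\otimes\kappa_2$ multiplies $a_1\otimes a_2$ by $i^{p(a_1)}i^{p(a_2)}=i^{p(a_1)+p(a_2)}$, so that
\[
\kappa(a_1\otimes a_2)=(-1)^{p(a_1)p(a_2)}\,\kappa_1 a_1\otimes\kappa_2 a_2 .
\]
Since $e^{xL_1}$ and $(-x^{-2})^{L_0}$ are even and split as tensor products without any sign, this gives $e^{xL_1}(-x^{-2})^{L_0}\kappa a=(-1)^{p(a_1)p(a_2)}\,\tilde a_1\otimes\tilde a_2$, so the right-hand side actually carries the extra factor $(-1)^{p(a_1)p(a_2)}$. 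I would then invoke the parity selection rule: the factor pairing $\ip{Y^1(\tilde a_1,x^{-1})b_1,c_1}$ vanishes unless $p(c_1)=p(a_1)+p(b_1)$, whence $(-1)^{p(a_2)(p(c_1)-p(b_1))}=(-1)^{p(a_1)p(a_2)}$. This matches the two prefactors exactly and delivers the invariance identity. The normalization $\ip{\Omega_1\otimes\Omega_2,\Omega_1\otimes\Omega_2}=1$ and the fact that $\theta_1\otimes\theta_2$ is an antilinear automorphism of the tensor product state-field correspondence are then routine sign computations, completing the argument.
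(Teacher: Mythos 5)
The paper itself offers no proof of this proposition --- it is quoted from \cite{AiLin17} --- so there is nothing internal to compare against; judged on its own, your argument is correct and is the standard componentwise verification one would expect to find in that reference. You correctly identify the one genuinely delicate point: on the tensor product the Klein transform satisfies $\kappa(a_1\otimes a_2)=(-1)^{p(a_1)p(a_2)}\kappa_1a_1\otimes\kappa_2a_2$ rather than $\kappa_1\otimes\kappa_2$, and this extra sign is exactly what reconciles the Koszul sign $(-1)^{p(a_2)p(c_1)}$ on the left of the invariance identity with the sign $(-1)^{p(a_2)p(b_1)}$ on the right, via the parity selection rule $p(c_1)=p(a_1)+p(b_1)$. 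The remaining axioms are routine, as you say, so the proposal is complete.
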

We will sometimes use the notation
$$
Y^1(a_1, x) \hotimes Y^2(a_2, x) := Y^1(a_1, x)\Gamma_{V_1}^{p(a_2)} \otimes Y^2(a_2, x).
$$
Note that by Proposition \ref{propUnitarySimple}, the tensor product of simple unitary vertex operator superalgebras is again simple.

The following observation is well-known, but we were unable to find a statement in the literature, and so a proof was given in \cite[Prop. 2.21]{GRACFT1}.
\begin{Proposition}\label{propWandCommutantGenerateTensorProduct}
Let $V$ be a simple unitary vertex operator superalgebra, and let $W$ be a unitary subalgebra. 
Let $\tilde W = \Span \{ a_{(-1)}b : a \in W, b \in W^c\}$.
Then $\tilde W$ is a unitary conformal subalgebra of $V$, unitarily isomorphic to $W \otimes W^c$.
\end{Proposition}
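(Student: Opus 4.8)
The plan is to exhibit an explicit unitary isomorphism $\Phi\colon W \otimes W^c \to \tilde W$, given on simple tensors by $\Phi(a \otimes b) = a_{(-1)}b$ and extended bilinearly (this descends to the tensor product since $(a,b) \mapsto a_{(-1)}b$ is bilinear). By definition $\im \Phi = \Span\{a_{(-1)}b : a \in W,\ b \in W^c\} = \tilde W$, so once $\Phi$ is shown to be an injective homomorphism of vertex operator superalgebras, it follows automatically that $\tilde W$ is a subalgebra of $V$ unitarily isomorphic to $W \otimes W^c$.

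The technical core is checking that $\Phi$ is a homomorphism. First I would establish the factorization identity
\[
Y_V(a_{(-1)}b, x) = Y_V(a,x)Y_V(b,x), \qquad a \in W, \ b \in W^c,
\]
with the appropriate super-sign conventions. This follows from the iterate formula expressing $Y_V(a_{(-1)}b,x)$ as a residue in terms of $Y_V(a,y)Y_V(b,x)$, together with the defining property of the coset $W^c$: since the fields of $W$ and $W^c$ super-commute there are no singular OPE terms (in particular $a_{(n)}b = 0$ for $n \ge 0$), and the two contributions combine into a formal delta function whose residue yields the ordered product. Using this identity, together with super-commutativity of $W$- and $W^c$-fields, I would match $Y_V(\Phi(a_1 \otimes b_1),x)\Phi(a_2 \otimes b_2)$ with $\Phi$ applied to the tensor-product field $Y^W(a_1,x)\Gamma^{p(b_1)} \otimes Y^{W^c}(b_1,x)$ acting on $a_2 \otimes b_2$, by commuting the $W^c$-fields past the $W$-modes and tracking the Klein/parity signs. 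The vacuum and conformal vectors are handled directly: $\Phi(\Omega \otimes \Omega) = \Omega_{(-1)}\Omega = \Omega$, and $\Phi(\nu^W \otimes \Omega + \Omega \otimes \nu^{W^c}) = \nu^W + \nu^{W^c} = \nu$ by Proposition \ref{propUVOSACoset}. I expect this bookkeeping---the factorization identity together with the careful tracking of super-signs---to be the main obstacle; the rest is formal.

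Injectivity is then immediate. The subalgebra $W$ is simple (unitary subalgebras of simple unitary superalgebras are simple), and $W^c$ is a unitary subalgebra by Proposition \ref{propUVOSACoset}, hence also simple; since $V_0 = \C\Omega$ forces $W_0 = (W^c)_0 = \C\Omega$, Proposition \ref{propUnitarySimple} gives $(W \otimes W^c)_0 = \C(\Omega \otimes \Omega)$, so $W \otimes W^c$ is simple. The kernel of $\Phi$ is an ideal and $\Phi \neq 0$, so $\ker \Phi = 0$.

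For unitarity, rather than computing $\langle (a_1)_{(-1)}b_1, (a_2)_{(-1)}b_2 \rangle$ directly, I would pull back the inner product: set $\langle u, v \rangle' = \langle \Phi u, \Phi v \rangle_V$ on $W \otimes W^c$. Because $\Phi$ is a homomorphism carrying the conformal vector of $W \otimes W^c$ to $\nu$, it intertwines $L_0$ and $L_1$, hence also the parity grading and $\theta$; a short calculation then shows $\langle \cdot, \cdot \rangle'$ satisfies the invariance axiom of Definition \ref{defUnitaryVOSA}, and it is positive definite since $\Phi$ is injective. By uniqueness of the invariant inner product on the simple unitary superalgebra $W \otimes W^c$ (equivalently, uniqueness up to scalar of invariant bilinear forms) it is a positive multiple of $\langle \cdot, \cdot \rangle_{W \otimes W^c}$, and evaluating at $\Omega \otimes \Omega$, where both equal $1$, shows the multiple is $1$. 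Hence $\Phi$ is unitary. Finally, $\tilde W$ is a conformal subalgebra since $\nu = \Phi(\nu^W \otimes \Omega + \Omega \otimes \nu^{W^c}) \in \tilde W$, and it is a unitary subalgebra because $\theta(\tilde W) = \Phi(\theta_{W \otimes W^c}(W \otimes W^c)) \subset \tilde W$ and $L_1 \tilde W = \Phi(L_1 (W \otimes W^c)) \subset \tilde W$, using that $\Phi$ intertwines $\theta$ and $L_1$.
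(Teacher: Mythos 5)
Your proposal is correct, and it follows the route one would expect. Note that the paper does not actually prove this proposition in situ: it explicitly defers to \cite[Prop.\ 2.21]{GRACFT1}, so there is no in-paper argument to compare against line by line. Your three-step structure --- (i) $\Phi(a\otimes b)=a_{(-1)}b$ is a homomorphism via the factorization $Y(a_{(-1)}b,x)=Y(a,x)Y(b,x)$ for supercommuting fields, (ii) injectivity from simplicity of $W\otimes W^c$ (using $V_0=\C\Omega$ and Proposition \ref{propUnitarySimple}), (iii) unitarity by pulling back the inner product and invoking uniqueness of the invariant form normalized at the vacuum --- is the standard argument and each step is sound. The only place I would press for detail is the homomorphism check: the identity that actually does the work is $a_{(k)}\bigl(a'_{(-1)}w\bigr)=\bigl(a_{(k)}a'\bigr)_{(-1)}w$ for $a,a'\in W$ and $w\in W^c$, which is false for arbitrary triples but holds here because $u_{(-1)}v=(-1)^{p(u)p(v)}v_{(-1)}u$ for $u\in W$, $v\in W^c$ (a consequence of $u_{(n)}v=0$ for $n\ge 0$ and skew-symmetry), after which all $W$-modes can be slid past $v_{(-1)}$ by supercommutativity. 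You correctly identify this bookkeeping as the main obstacle, and it does go through.
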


\subsubsection{Unitary modules}

\begin{Definition}
Let $V$ be a vertex operator superalgebra.
A \emph{generalized $V$-module} is given by a $\Z/2\Z$-graded vector space $M = M^0 \oplus M^1$ along with a state-field correspondence $Y^M:V \to \End(M)[[x^{\pm 1}]]$, written
$$
Y^M(a,x) = \sum_{n \in \Z} a_{(n)}^M x^{-n-1},
$$
which is required to be linear, and satisfy the following additional properties.
\begin{enumerate}
\item $Y^M(\Omega,x) = 1_M$.
\item If $a \in V$ is even (resp. odd) then $a^M_{(n)}$ is even (resp. odd) for all $n \in \Z$.
\item For every $a \in V$ and $b \in M$, we have $a^M_{(n)}b = 0$ for $n$ sufficiently large.
\item For every homogeneous $a,b \in V$, $c \in M$, and $m,k,n \in \Z$, the Borcherds(/Jacobi) identity holds:
\begin{align}
\sum_{j = 0}^\infty \binom{m}{j} \big(a_{(n+j)}b\big)^M_{(m+k-j)}&c = \sum_{j=0}^\infty (-1)^j \binom{n}{j} a_{(m+n-j)}^Mb_{(k+j)}^M c \nonumber\\ 
&-(-1)^{p(a)p(b)} \sum_{j=0}^\infty(-1)^{j+n} \binom{n}{j} b_{(n+k-j)}^M a_{(m+j)}^M c. \label{eqnBorcherdsIdentityModule}
\end{align}
\item $M$ is compatibly graded by conformal weights and parity.
That is, if we write $Y^M(\nu,x) =: \sum_{n \in \Z} L_n^M x^{-n-2}$,  $M_\alpha := \ker(L_0^M - \alpha 1_M)$, and $M_\alpha^i = M^i \cap M_\alpha$ then we require that 
$$
M^i = \bigoplus_{\alpha \in \bbC} M^i_\alpha.
$$

If $a \in M^i_\alpha$ then we say that $a$ is homogeneous with conformal weight $\Delta_a := \alpha$ and parity $p(a):=i$.
\end{enumerate}

A generalized module is called a \emph{(strong, separable) module} if each space $M_\alpha$ is finite-dimensional, and $M_\alpha$ is non-zero for at most countably many $\alpha$.
\end{Definition}

There are obvious notions of submodules and direct sums of $V$-modules, as well as (antilinear) $V$-module homomorphisms.
We insist, however, that $V$-module homomorphisms be even (i.e. preserve the parity grading).

If $M$ has no proper, non-trivial submodules then it is called a \emph{simple} module.

As with vertex operator superalgebras, we define a grading operator and Klein transform acting on $a \in M$ by
$$
\Gamma a = (-1)^{p(a)} a, \qquad \kappa a = i^{p(a)} a = (-1)^{\tfrac12 p(a)} a.
$$

We also have the following well-known basic properties of modules \cite[\S4]{FHL93}.
\begin{Proposition}\label{propBasicModuleOperatorProperties}
Let $M$ be a module over a vertex operator superalgebra $V$, and let $a,b \in V$ and $d \in M$ be homogeneous elements.
\begin{enumerate}
\item $Y^M$ satisfies a Borcherds product formula:
\begin{equation}\label{eqnModuleBorcherdsProductFormula}
\big(a_{(n)}b\big)^M_{(k)} d= \sum_{j=0}^\infty (-1)^j \binom{n}{j} \left(a_{(n-j)}^Mb_{(k+j)}^M - (-1)^{p(a)p(b)+n}b_{(n+k-j)}^M a_{(j)}^M\right)d
\end{equation}
for all $n,k \in \Z$.
\item $Y^M$ satisfies a Borcherds commutator formula:
\begin{equation}\label{eqnModuleBorcherdsCommutatorFormula}
a_{(m)}^Mb_{(k)}^Md - (-1)^{p(a)p(b)} b_{(k)}^Ma_{(m)}^Md = \sum_{j =0}^\infty \binom{m}{j} \big(a_{(j)}b\big)^M_{(m+k-j)}d
\end{equation}
for all $m,k \in \Z$.
\item $Y^M$ satisfies the $L_{-1}^M$-derivative property:
\begin{equation}\label{eqnModuleDerivativeCommutator}
\big[L_{-1}^M, Y^M(a,x)\big] = Y^M(L_{-1} a, x) = \frac{d}{dx} Y^M(a,x)
\end{equation}
\item If $a$ is homogeneous of conformal weight $\Delta_a$, then $a_{(n)}^M M_\alpha \subset M_{\alpha - n -1 + \Delta_a}$.
\item The modes $L_n^M = \nu^M_{(n+1)}$ of the conformal vector satisfy the Virasoro relations 
$$
[L^M_m,L^M_n] = (m-n)L^M_{m+n} + \tfrac{c}{12} (m^3-m)\delta_{m,-n}1_M
$$
where $c$ is the central charge of $V$.
\end{enumerate}
\end{Proposition}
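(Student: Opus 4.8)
The plan is to obtain all five assertions as specializations of the module Borcherds identity \eqref{eqnBorcherdsIdentityModule}, exactly mirroring the way the product and commutator formulas \eqref{eqn: BPF} and \eqref{eqn: BCF} were extracted from the algebra identity in the definition of a vertex operator superalgebra, and then to bootstrap the remaining structural statements from these two formulas together with a handful of products computed in $V$ itself. The guiding observation is that \eqref{eqnBorcherdsIdentityModule} has the same shape as the algebraic Borcherds identity, so the combinatorial part of each derivation transfers verbatim; the only genuinely module-specific inputs are $Y^M(\Omega,x) = 1_M$ (equivalently $\Omega^M_{(k)} = \delta_{k,-1}1_M$) and the identities $L_{-1}a = a_{(-2)}\Omega$ and $\nu_{(j)}\nu$ imported from $V$.

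I would first prove (1) and (2). Setting $m=0$ in \eqref{eqnBorcherdsIdentityModule} collapses the left-hand sum through $\binom{0}{j} = \delta_{j,0}$ to the single term $(a_{(n)}b)^M_{(k)}c$, leaving precisely the product formula \eqref{eqnModuleBorcherdsProductFormula}. Dually, setting $n=0$ collapses the two sums on the right to $a^M_{(m)}b^M_{(k)}c - (-1)^{p(a)p(b)}b^M_{(k)}a^M_{(m)}c$, and rearranging against the surviving left-hand side gives the commutator formula \eqref{eqnModuleBorcherdsCommutatorFormula}. Beyond careful tracking of the super signs $(-1)^{p(a)p(b)}$, no new ideas enter.

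Next I would derive (3). Using $\nu_{(0)} = L_{-1}$ and taking $a = \nu$, $m=0$ in \eqref{eqnModuleBorcherdsCommutatorFormula}, the right side reduces (via $\binom{0}{j} = \delta_{j,0}$) to $(\nu_{(0)}a)^M_{(k)} = (L_{-1}a)^M_{(k)}$, which is the first equality $[L_{-1}^M, Y^M(a,x)] = Y^M(L_{-1}a,x)$ of \eqref{eqnModuleDerivativeCommutator}. For the second equality I would feed $L_{-1}a = a_{(-2)}\Omega$ and $\Omega^M_{(k)} = \delta_{k,-1}1_M$ into the product formula (1); expanding $(a_{(-2)}\Omega)^M_{(k)}$ and evaluating the coefficients $\binom{-2}{j}$ yields $(L_{-1}a)^M_{(k)} = -k\,a^M_{(k-1)}$, i.e. $Y^M(L_{-1}a,x) = \tfrac{d}{dx}Y^M(a,x)$. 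With (3) in hand, (4) follows by taking $a = \nu$, $m=1$ in \eqref{eqnModuleBorcherdsCommutatorFormula}: since $L_0^M = \nu^M_{(1)}$ and only $j=0,1$ survive, one gets $[L_0^M, a^M_{(k)}] = (L_{-1}a)^M_{(k+1)} + (L_0 a)^M_{(k)} = (\Delta_a - k - 1)a^M_{(k)}$ for homogeneous $a$, which is exactly the grading shift. Finally, for (5) I would take $a = b = \nu$ in \eqref{eqnModuleBorcherdsCommutatorFormula}, writing $[L_m^M, L_n^M] = \sum_{j \ge 0}\binom{m+1}{j}(\nu_{(j)}\nu)^M_{(m+n+2-j)}$; since $\nu$ has conformal weight $2$ only $j=0,1,2,3$ contribute, with $\nu_{(0)}\nu = L_{-1}\nu$, $\nu_{(1)}\nu = 2\nu$, $\nu_{(2)}\nu = 0$, and $\nu_{(3)}\nu = \tfrac{c}{2}\Omega$, and then (3b), (4), and $\Omega^M_{(m+n-1)} = \delta_{m,-n}1_M$ collapse the coefficient of $L_{m+n}^M$ to $m-n$ and produce the central term $\binom{m+1}{3}\tfrac{c}{2} = \tfrac{c}{12}(m^3-m)$.

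None of the steps presents a deep obstacle, as the whole proposition is formal. The step demanding the most care is the derivative property (3b): the sign-and-binomial bookkeeping in expanding $(a_{(-2)}\Omega)^M_{(k)}$ across the cases $k \gtrless 0$, and the fact that one must import $L_{-1}a = a_{(-2)}\Omega$ and the values of $\nu_{(j)}\nu$ from the internal structure of $V$ rather than from the module axioms alone. Once (3) is established, (4) and (5) — including the precise appearance of the central charge through $\nu_{(3)}\nu = \tfrac{c}{2}\Omega$ paired with $\binom{m+1}{3}$ — follow mechanically.
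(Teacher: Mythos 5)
Your derivation is correct and is exactly the standard argument: the paper gives no proof of Proposition \ref{propBasicModuleOperatorProperties}, deferring to \cite[\S4]{FHL93}, and your specializations of the module Borcherds identity at $m=0$ and $n=0$, followed by the computations with $L_{-1}a = a_{(-2)}\Omega$ and $\nu_{(j)}\nu$ for $j=0,1,2,3$, reproduce that reference's proof. All the sign and binomial bookkeeping (in particular $(L_{-1}a)^M_{(k)} = -k\,a^M_{(k-1)}$ and the central term $\binom{m+1}{3}\tfrac{c}{2} = \tfrac{c}{12}(m^3-m)$) checks out.
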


We will be interested in \emph{unitary} modules over \emph{unitary} vertex operator superalgebras, which first appeared in \cite{DongLin14,AiLin17}.
\begin{Definition}
Let $V$ be a unitary vertex operator superalgebra, and let $M$ be a generalized $V$-module.
A sesquilinear form $\ip{ \, \cdot \, , \, \cdot \,}$ on $M$ is called \emph{invariant} if
\begin{equation}\label{eqnModuleInnerProductInvariance}
\bip{b,Y^M(\theta a,\overline{x})c} = \bip{Y^M(e^{x L_1} (-x^{-2})^{L_0} \kappa a, x^{-1})b,c},
\end{equation}
for all $a \in V$ and $b,c \in M$.
We call $M$ a \emph{unitary} generalized module if it is equipped with an invariant inner product.
A (strong, separable) unitary  module is a module which is unitary in the same sense.
%The identity \eqref{eqnModuleInnerProductInvariance} should be interpreted as an identity of formal series, with $x$ a formal complex variable.
\end{Definition}
Observe that the requirement we imposed on $V$-modules that $L^M_0$ have only countably many distinct eigenvalues is equivalent to the separability of the Hilbert space completion $\cH_M$ in the presence of the other properties.

There are natural notions of orthogonal direct sum and unitary isomorphism of unitary modules.
When talking about unitary vertex operator superalgebras and unitary modules, we will reserve the symbol $\bigoplus$ for orthogonal direct sums.

As one would hope, unitary modules provide unitary representations of the Virasoro algebra (see \cite[Lem. 2.5]{DongLin14}).
\begin{Proposition}
Let $M$ be a unitary generalized module over a unitary vertex operator superalgebra $V$ with conformal vector $\nu$, and let $L_n^M = \nu_{(n+1)}^M$ be the representation of the Virasoro algebra on $M$.
Then we have
\begin{enumerate}
\item
$
\bip{L_n^M a, b} = \bip{a, L_{-n}^Mb}
$
for all $a,b \in M$ and $n \in \Z$.
\item
We have an orthogonal decomposition $M = \bigoplus_{\alpha \in \R_{\ge 0}} M_\alpha$.
\end{enumerate}
\end{Proposition}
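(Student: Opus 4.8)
The plan is to deduce both assertions from the invariance property \eqref{eqnModuleInnerProductInvariance} of the module inner product, specialized to the conformal vector $\nu$, together with standard positivity estimates for unitary Virasoro representations. Assertion (1) is precisely the statement that $(L_n^M)^\ast = L_{-n}^M$, and I would obtain it by substituting $a = \nu$ into \eqref{eqnModuleInnerProductInvariance}; assertion (2) then follows by elementary spectral theory applied to the self-adjoint operator $L_0^M$, supplemented by a positivity argument.

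For (1), I first record the relevant facts about $\nu$, all of which are statements in $V$ rather than $M$: $\nu$ is even, so $\kappa\nu = \nu$; $\theta$ is an automorphism and hence fixes the conformal vector, so $\theta\nu = \nu$; since $\nu \in V_2$ we have $(-x^{-2})^{L_0}\nu = x^{-4}\nu$; and $\nu$ is quasi-primary, i.e. $L_1 \nu = 0$, which one checks from $\nu = L_{-2}\Omega$ and $L_{-1}\Omega = 0$, so that $e^{xL_1}\nu = \nu$. Substituting $a = \nu$ therefore reduces $e^{xL_1}(-x^{-2})^{L_0}\kappa\nu$ to the single monomial $x^{-4}\nu$, and \eqref{eqnModuleInnerProductInvariance} collapses to
\[
\bip{b, Y^M(\nu,\overline{x})c} = \bip{Y^M(x^{-4}\nu, x^{-1})b, c}.
\]
Expanding $Y^M(\nu,x) = \sum_n L_n^M x^{-n-2}$, the left-hand side becomes $\sum_n \bip{b, L_n^M c}\,\overline{x}^{-n-2}$, while the right-hand side becomes $\sum_m \bip{L_m^M b, c}\,x^{m-2}$. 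Reading the identity in the standard way, in which the conjugation on $x$ is absorbed into the antilinearity of $\bip{\,\cdot\,,\,\cdot\,}$ and one matches the monomial $\overline{x}^{-n-2}$ on the left with $x^{-n-2}$ on the right, this yields $\bip{b, L_n^M c} = \bip{L_{-n}^M b, c}$, and relabelling gives (1).

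For (2), taking $n = 0$ in (1) shows that $L_0^M$ is symmetric on the algebraic module $M = \bigoplus_\alpha M_\alpha$ furnished by the grading axiom. Symmetry forces the eigenvalues $\alpha$ to be real, refining the $\alpha \in \bbC$ of the grading axiom, and forces $M_\alpha \perp M_\beta$ whenever $\alpha \neq \beta$; thus the algebraic grading is an orthogonal decomposition indexed by $\alpha \in \R$. It remains to prove $\alpha \geq 0$. Here I would use that a $V$-module is bounded below in conformal weight, so that a minimal weight $\alpha_0$ is attained. Any $0 \neq v \in M_{\alpha_0}$ is then annihilated by every $L_n^M$ with $n > 0$, since $L_n^M v \in M_{\alpha_0 - n} = 0$, whence using (1) and $[L_1^M, L_{-1}^M] = 2L_0^M$,
\[
0 \leq \bnorm{L_{-1}^M v}^2 = \bip{v, L_1^M L_{-1}^M v} = 2\alpha_0 \snorm{v}^2,
\]
forcing $\alpha_0 \geq 0$ and hence every weight $\alpha \geq \alpha_0 \geq 0$.

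The main obstacle is the non-negativity in (2): self-adjointness of $L_0^M$ delivers reality and orthogonality for free, but it does not bound the spectrum below, and indeed the Virasoro positivity estimates alone are compatible with weights tending to $-\infty$ (for $v \in M_\alpha$ one only gets $\snorm{L_{-1}^M v}^2 - \snorm{L_1^M v}^2 = 2\alpha\snorm{v}^2$, of indefinite sign). The argument therefore genuinely requires lower-boundedness of the module, which is a structural feature of vertex operator algebra modules rather than of the Virasoro action in isolation; once it is in hand, the positivity of the invariant form at a minimal-weight vector closes the gap. I would take some care that a minimal weight is actually attained — this is clear when the weights form a discrete lower-bounded set, as for the modules of interest, but in full generality one should instead decompose the unitary Virasoro representation into lowest-weight summands and apply the identical $\snorm{L_{-1}^M v}^2 = 2h\snorm{v}^2$ computation to each lowest-weight vector.
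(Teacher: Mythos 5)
Your treatment of (1) is correct and is exactly what the paper means by ``follows immediately from the definition of invariant inner product'': substituting $a=\nu$ into the invariance identity and using $\theta\nu=\nu$, $\kappa\nu=\nu$, $L_1\nu=0$ and $\nu\in V_2$ collapses it to $\bip{b,L_n^Mc}=\bip{L_{-n}^Mb,c}$, i.e. $(L_n^M)^*=L_{-n}^M$, and orthogonality and reality of the $L_0^M$-spectrum follow. The gap is in (2), and you have correctly located it yourself but not closed it: you need a nonzero eigenvector annihilated by the positive Virasoro modes, and the lower-boundedness of the conformal weights that you invoke to produce one is not available. In this paper a generalized module is only required to satisfy $M^i=\bigoplus_{\alpha\in\bbC}M^i_\alpha$, with no lower bound on the weights imposed; boundedness below is (a weakening of) the very statement part (2) is proving. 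Your proposed fallback --- decompose the unitary Virasoro representation into lowest-weight summands --- is circular for the same reason: the existence of such a decomposition is precisely the Corollary the paper derives \emph{from} this Proposition, and it presupposes that the weights are bounded below.

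The paper's route around this is the truncation axiom: for each fixed $b\in M$ one has $L_n^Mb=\nu^M_{(n+1)}b=0$ for $n$ sufficiently large, and starting from a hypothetical eigenvector of negative weight this is used to manufacture an eigenvector of negative weight annihilated by all $L_n^M$ with $n\ge 1$; the classification of unitary highest-weight representations of the Virasoro algebra then gives the contradiction. Your computation $0\le\bnorm{L_{-1}^Mv}^2=\bip{v,L_1^ML_{-1}^Mv}=2\alpha_0\snorm{v}^2$ is a perfectly good elementary substitute for that citation once such a singular vector $v$ is in hand (it is the $n=1$ instance of the positivity of the Kac form), so the repair is to splice your estimate onto the truncation argument rather than onto an assumed minimal weight.
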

\begin{proof}
The first item follows immediately from the definition of invariant inner product, and it follows that distinct eigenspaces for $L_0$ are orthogonal.
All that remains is to show that all eigenvalues of $L_0$ are non-negative real numbers.
If not, and we had an eigenvector $v$ of $L_0$ with negative eigenvalue, then since $L_n v = 0$ for $n$ sufficiently large we could find an eigenvector of $L_0$ with negative eigenvalue such that $L_nv = 0$ for $n \ge 1$.
Such a vector would generate a highest weight module which violated the classification of unitary representations of the Virasoro algebra (see \cite[Lec. 8]{KacRaina}).
\end{proof}

Complete reducibility for unitary modules was shown in \cite[Prop. 2.2]{AiLin17}.
\begin{Proposition}\label{prop: complete reducibility}
Let $M$ be a unitary module over a unitary vertex operator superalgebra $V$.
If $N \subset M$ is a submodule, then
$$
N^\perp := \{m \in M : \ip{m,n} = 0 \mbox{ for all } n \in N\}
$$
is also a submodule and $M = N \oplus N^\perp$.
Every unitary $V$ module can be decomposed as an at most countable orthogonal direct sum of simple modules.
\end{Proposition}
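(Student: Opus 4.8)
The plan is to derive the entire statement from the invariance property \eqref{eqnModuleInnerProductInvariance}, which governs the formal adjoints of the mode operators $a^M_{(n)}$, together with the orthogonal weight decomposition $M = \bigoplus_\alpha M_\alpha$ into finite-dimensional pieces established in the preceding proposition. The first and most essential step is to read off from \eqref{eqnModuleInnerProductInvariance} a formula for $(a^M_{(n)})^*$. For homogeneous $a$ the vector $e^{xL_1}(-x^{-2})^{L_0}\kappa a$ is a \emph{finite} sum of homogeneous vectors of $V$ times powers of $x$ (finite because $L_1$ strictly lowers conformal weight and weights are bounded below, so $L_1^j a = 0$ once $j > \Delta_a$, while $(-x^{-2})^{L_0}$ contributes a single power of $x$ on each homogeneous summand). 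Substituting into $Y^M(\,\cdot\,, x^{-1})$, comparing coefficients of powers of $x$ against $Y^M(\theta a, \overline{x})$, and using that $\theta$ is an antilinear automorphism commuting with the Virasoro action, one obtains that each $(a^M_{(n)})^*$ is a finite $\C$-linear combination of mode operators $(L_1^j \theta a)^M_{(k)}$, all of whose defining vectors lie in $V$. The bookkeeping of the powers of $x$ and the index shifts is fiddly, and I regard getting this formula correct as the main obstacle; but the only qualitative consequence the rest of the argument needs is that \emph{the adjoint of every mode operator is a finite linear combination of mode operators of vectors in $V$.}

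\emph{Orthogonal complements are submodules.} Let $N \subseteq M$ be a submodule. By the previous step, $N$, being stable under all modes, is stable under all adjoints $(a^M_{(n)})^*$. Hence for $m \in N^\perp$ and $n' \in N$ we have $\ip{a^M_{(n)} m, n'} = \ip{m, (a^M_{(n)})^* n'} = 0$, so $a^M_{(n)} m \in N^\perp$ and $N^\perp$ is mode-stable. Since the weight grading is orthogonal (from the decomposition above) and the parity grading is orthogonal (the grading operator $\Gamma$ is self-adjoint), and $N$ respects both, $N^\perp$ inherits compatible gradings and is therefore a submodule. For the splitting $M = N \oplus N^\perp$ I would argue weight space by weight space: each $M_\alpha$ is finite-dimensional and $M_\alpha = (N \cap M_\alpha) \oplus (N \cap M_\alpha)^{\perp_{M_\alpha}}$, and summing over $\alpha$ yields the algebraic direct sum.

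\emph{Decomposition into simple modules.} Restricting the inner product to any submodule $Q$ and applying the previous paragraph inside $Q$ shows that $Q$ is simple if and only if it is indecomposable. To produce simple submodules inside a given nonzero submodule $P$, fix a weight $\beta$ with $P_\beta \neq 0$, set $d = \min\{\dim Q_\beta : 0 \neq Q \subseteq P \text{ a submodule}, Q_\beta \neq 0\} \geq 1$, and apply Zorn's lemma to the submodules $Q \subseteq P$ with $\dim Q_\beta = d$ ordered by inclusion: a descending chain has its intersection as a lower bound in this poset, because a nested family of $d$-dimensional subspaces of the finite-dimensional space $P_\beta$ must be constant. A minimal such $Q$ is forced to be simple, for if $Q = R \oplus R'$ nontrivially then $\dim R_\beta + \dim R'_\beta = d$ with each summand either $0$ or, by minimality of $d$, at least $d$; hence one factor meets weight $\beta$ trivially and the other contradicts the minimality of $Q$. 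Finally I would take, by Zorn, a maximal orthogonal family $\{M_i\}$ of simple submodules and let $\widehat{M} = \bigoplus_i M_i$ be their algebraic direct sum; by the weight-space argument of the previous paragraph $\widehat{M}$ is a submodule with submodule complement $\widehat{M}^\perp$, and if $\widehat{M}^\perp \neq 0$ it would contain a simple submodule, contradicting maximality. Countability of the family is immediate from the countability of the weight spectrum and the finite-dimensionality of each weight space. The delicacy in this last part is entirely the existence of simple submodules, which the finite-dimensionality of the weight spaces resolves via the minimization above.
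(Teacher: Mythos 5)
Your argument is correct, but it is worth noting that the paper does essentially none of this work itself: its proof consists of citing \cite[Prop.~2.2]{AiLin17} for everything except countability, and then observing that a separable Hilbert space completion cannot contain an uncountable orthogonal family of nonzero subspaces. What you have written is, in effect, a self-contained reconstruction of the cited argument. The key lemma in both cases is the one you identify: the invariance property forces $(a^M_{(n)})^*$ to be a finite linear combination of mode operators of vectors of $V$ (finitely many terms because $L_1$ lowers weight and weights are bounded below), so mode-stability of $N$ implies mode-stability of $N^\perp$, and the splitting $M = N \oplus N^\perp$ then reduces to finite-dimensional linear algebra weight space by weight space. Your minimization of $\dim Q_\beta$ combined with Zorn's lemma is a clean way to organize the existence of simple submodules, and your countability argument (only finitely many orthogonal summands can meet any given weight space, and there are countably many weights) is essentially equivalent to the paper's separability remark. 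The one point you pass over too quickly is the assertion that $\Gamma$ is self-adjoint, i.e.\ that $M^0 \perp M^1$: for $V$ itself this follows from the integer/half-integer weight dichotomy, but for a module both parities can occur in the same $L_0$-eigenspace, and orthogonality of the parities is not literally part of the definition of an invariant inner product. It is a standing convention in the cited literature, but in a self-contained treatment you should either impose it or derive it, since without it $N^\perp$ need not be a parity-graded subspace.
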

\begin{proof}
Everything except the restriction to countably many simple modules was shown in \cite[Prop. 2.2]{AiLin17} (and this reference did not include the requirement that $\cH_V$ be separable).
However it is clear that if $\cH_M$ is separable, then it cannot decompose into an uncountable orthogonal direct sum of non-zero Hilbert spaces, so the decomposition of $M$ into simple $V$-modules must be countable.
\end{proof}

Either via the same argument, or as a special case, we have:
\begin{Corollary}
Let $M$ be a unitary module over a unitary vertex operator superalgebra $V$.
Then $M$ decomposes as a countable direct sum of irreducible highest weight unitary representations of the Virasoro algebra.
\end{Corollary}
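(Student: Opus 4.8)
The plan is to run the orthogonal-complement argument of Proposition \ref{prop: complete reducibility} verbatim, but with the role of $V$-submodules replaced by $\Vir_c$-submodules, i.e.\ subspaces of $M$ invariant under all the modes $L_n^M$. The two facts already established supply everything needed: the self-adjointness relation $\langle L_n^M a, b\rangle = \langle a, L_{-n}^M b\rangle$, and the orthogonal grading $M = \bigoplus_{\alpha \in \R_{\geq 0}} M_\alpha$ with each $M_\alpha$ finite-dimensional. (Alternatively one could view $M$ as a unitary module over the conformal subalgebra generated by $\nu$ and quote Proposition \ref{prop: complete reducibility} directly, but the direct argument below avoids identifying the simple modules of that sub-VOA.)

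First I would observe that the orthogonal complement of a $\Vir_c$-invariant subspace $N \subseteq M$ is again $\Vir_c$-invariant: if $\langle a, n\rangle = 0$ for all $n \in N$, then $\langle L_m^M a, n\rangle = \langle a, L_{-m}^M n\rangle = 0$, since $L_{-m}^M n \in N$. This is the exact analogue of the complementation step in Proposition \ref{prop: complete reducibility}. Moreover any $\Vir_c$-submodule $N$ is automatically graded, $N = \bigoplus_\alpha (N \cap M_\alpha)$, because it is invariant under $L_0^M$.

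Next I would show that every nonzero $\Vir_c$-submodule $N$ contains an irreducible highest weight submodule. Choosing a nonzero homogeneous $w \in N$ of weight $\alpha$ and passing to the cyclic submodule $U(\Vir_c)w \subseteq N$, all of its weights lie in $(\alpha + \Z) \cap \R_{\geq 0}$, which is discrete and bounded below, so a minimal weight is attained; any nonzero vector $v$ in that minimal weight space satisfies $L_n^M v = 0$ for $n \geq 1$, as $L_n^M v$ would otherwise be a nonzero vector of strictly smaller weight inside $U(\Vir_c)w$. Thus $v$ is a highest weight vector and $U(\Vir_c)v$ is a highest weight module. Since the inner product restricts to a contravariant form on $U(\Vir_c)v$ which is positive-definite, hence nondegenerate, this cyclic module must coincide with the irreducible quotient $L(c,\Delta_v)$; in particular it is an irreducible highest weight unitary representation of $\Vir_c$ appearing in the classification of \cite{KacRaina}.

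Finally I would close with the same maximality argument as before: by Zorn's lemma fix a maximal mutually orthogonal family of irreducible highest weight submodules. If the orthogonal complement of their closed span were nonzero it would, being a $\Vir_c$-submodule, contain yet another irreducible highest weight submodule, contradicting maximality; and separability of $\cH_M$ forces the family to be at most countable. This yields the decomposition of $M$ as a countable orthogonal direct sum of irreducible highest weight unitary $\Vir_c$-representations. I expect the only genuinely delicate point to be the irreducibility claimed in the previous paragraph --- ruling out a proper submodule of $U(\Vir_c)v$ --- which rests on the standard fact that the radical of the Shapovalov form is the maximal proper submodule, so that positive-definiteness of the restricted inner product is precisely what guarantees irreducibility.
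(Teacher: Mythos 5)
Your proposal is correct and follows exactly the route the paper intends: the Corollary is stated with the remark that it holds ``either via the same argument, or as a special case'' of Proposition \ref{prop: complete reducibility}, and you have carried out the ``same argument'' option in detail, with the countability of the sum coming from separability (equivalently, from the countably many distinct conformal weights, as the paper notes). The only step the paper leaves implicit that you rightly flag --- irreducibility of the cyclic highest weight module via nondegeneracy of the contravariant form --- is handled correctly.
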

Note that the countability of the direct sum again follows from our insistence that modules have only countably many distinct conformal weights.

We now take a short detour to understand the relationship between the invariance of the inner product on $M$ and the induced map between the dual of $M$ and its complex conjugate. 

\begin{Definition}
Let $V$ be a vertex operator superalgebra, and let $M$ be a $V$-module.
The \emph{graded dual} $M^\prime$ is defined by $M^\prime = \bigoplus_{\alpha \in \C} M_\alpha^*$.
If we write $\left( \, \cdot \, , \, \cdot \,\right)$ for the pairing between $M^\prime$ and $M$, then the \emph{contragredient module} (\cite{FHL93}, and also \cite{Yamauchi2014} for the super case) structure on $M^\prime$ is the unique state-field correspondence $Y^{M^\prime}$ satisfying
$$
\big( Y^{M^\prime}(a,x)b^\prime, c \big) = \big( b^\prime, Y^M(e^{x L_1} (-x^{-2})^{L_0} \kappa b, x^{-1})a\big)
$$
for all $a \in V$, $b^\prime \in M^\prime$ and $c \in M$.
\end{Definition}

%We will pay more attention to a different operation on $V$-modules, namely the complex conjugate.
If $M$ is a complex vector space, we will write $M^\dagger$ for the complex conjugate vector space.
%The reader is cautioned that the notation $\overline{M}$ is often used for the graded completion of $M$ in the VOA literature, but that notion \nn{will not appear} in this paper.
If $a \in M$, we will again write $a$ for the corresponding element of $M^\dagger$; the conjugation on the vector space is merely used to adjust which maps are linear and which are antilinear.
\begin{Definition}
Let $V$ be a vertex operator superalgebra equipped with PCT automorphism $\theta$, and let $M$ be a generalized $V$-module.
Then the conjugate module $M^\dagger$ is defined by $Y^{M^\dagger}(a,x) = Y^M(\theta a, x)$.
%While $Y^{\overline{M}}$ depends on $\theta$, if $V$ is a unitary then the conjugate module $\overline{M}$ should always be understood as being taken with respect to the PCT operator $\theta$ given as part of the unitary structure of $V$.
%\nn{Note \emph{linear} in $a$ as a map into $\End(\overline{M})[[x^{\pm1}]]$.}
\end{Definition}
It is straightforward to verify that $M^\dagger$ is indeed a $V$-module.

\begin{Lemma}\label{lem: invariant inner product as iso}
Let $V$ be a unitary vertex operator superalgebra, and let $M$ be a $V$-module equipped with a sesquilinear form $\ip{ \, \cdot \, , \, \cdot \,}$ such that the $L_0$-eigenspaces of $M$ are orthogonal.
Then the form is invariant if and only if the map $M^\dagger \to M^\prime$ induced by the form is a homomorphism of $V$-modules.
\end{Lemma}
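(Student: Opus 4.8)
The plan is to unwind both the invariance property \eqref{eqnModuleInnerProductInvariance} and the homomorphism condition until they become the \emph{same} equation, so that the ``if and only if'' is immediate. Throughout write $U(x) = e^{xL_1}(-x^{-2})^{L_0}\kappa$ for the operator appearing in the definitions of both the invariant form and the contragredient module. The sesquilinear form determines a map $\Phi \colon M^\dagger \to M'$, sending a vector to the functional it represents under $\ip{\,\cdot\,,\,\cdot\,}$; because the scalar multiplication on $M^\dagger$ is conjugated, this $\Phi$ is complex linear. The first point to verify is that $\Phi$ really takes values in the graded dual $M' = \bigoplus_\alpha M_\alpha^*$ rather than in the full algebraic dual, and this is exactly where the hypothesis enters: orthogonality of the $L_0$-eigenspaces guarantees that the functional attached to a vector in $M_\alpha$ annihilates $M_\beta$ for $\beta \neq \alpha$. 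The same gradedness shows $\Phi$ is even, so $\Phi$ is a legitimate candidate morphism; these verifications are routine.

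Next I would test the intertwining relation $\Phi \circ Y^{M^\dagger}(a,x) = Y^{M'}(a,x) \circ \Phi$ by pairing it against an arbitrary $c \in M$. On the side involving $M'$, the defining property of the contragredient module rewrites $Y^{M'}(a,x)\Phi(b)$ paired with $c$ in terms of $Y^M(U(x)a, x^{-1})c$, producing precisely the factor $Y^M(e^{xL_1}(-x^{-2})^{L_0}\kappa a, x^{-1})$ that appears in \eqref{eqnModuleInnerProductInvariance}. On the side involving $M^\dagger$, the definition $Y^{M^\dagger}(a,x) = Y^M(\theta a, x)$ together with the sesquilinearity of the form converts the formal variable $x$ into $\overline{x}$ as it passes through the antilinear slot, producing exactly the factor $Y^M(\theta a, \overline{x})$ appearing in \eqref{eqnModuleInnerProductInvariance}. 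Assembling the two computations, the statement that $\Phi$ is a homomorphism becomes term-for-term identical to \eqref{eqnModuleInnerProductInvariance}, holding for all $a \in V$ and $b,c \in M$. Since every step is reversible, invariance of the form and the homomorphism property are the same condition, and both implications follow at once.

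The step I expect to demand the most care is the bookkeeping of the conjugation: one must check that the conjugate-module side reproduces $\theta a$ paired against $c$ in the arrangement actually occurring in \eqref{eqnModuleInnerProductInvariance}, rather than a superficially similar transposed identity with $\theta a$ in the other slot. Getting this right hinges on using the correct sesquilinearity convention so that $\Phi$ is genuinely linear on $M^\dagger$ (and not antilinear), and on tracking which argument of the form carries the complex conjugation when the formal variable is moved outside. Once these conventions are fixed, the conjugate module $M^\dagger$ contributes exactly the $Y^M(\theta a, \overline{x})$ term and the contragredient module $M'$ contributes exactly the $Y^M(e^{xL_1}(-x^{-2})^{L_0}\kappa a, x^{-1})$ term, and the proof reduces to recognizing \eqref{eqnModuleInnerProductInvariance}.
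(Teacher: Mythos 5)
Your proposal is correct and follows essentially the same route as the paper: the paper likewise defines the map $c \mapsto c^\prime$ with $c^\prime(a) = \ip{a,c}$, notes that orthogonality of the $L_0$-eigenspaces is what makes this land in the graded dual, and then matches the two sides of the invariance identity \eqref{eqnModuleInnerProductInvariance} against the defining equation of the contragredient module and the definition $Y^{M^\dagger}(a,x) = Y^M(\theta a,x)$ of the conjugate module, so that the homomorphism property and invariance become the same equation. Your attention to which slot of the form carries the conjugation (so that $\Phi$ is linear on $M^\dagger$) is exactly the bookkeeping the paper's proof relies on.
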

\begin{proof}
Note that the condition that the $L_0$-eigenspaces of $M$ are orthogonal and finite-dimensional ensures that the form induces a map $M^\dagger \to M^\prime$.
Given $c \in M$, we denote by $c^\prime \in M^\prime$ the linear functional $c^\prime(a) = \ip{a,c}$.
Now for $a \in V$ and $b,c \in M$, we have by the definition of the contragredient module 
\begin{align*}
\bip{Y^M(e^{x L_1} (-x^{-2})^{L_0} \kappa a, x^{-1})b,c}_M %&=\big( b^\prime, Y^M(e^{x L_1} (-1)^{L_0 + 2L_0^2} x^{-2L_0} b, x^{-1})a\big)\\ 
&=\big(Y^{M^\prime}(a,x)c^\prime, b\big).
\end{align*}
By the definition of the conjugate module, we have
$$
\bip{b,Y^M(\theta a,\overline{x})c}_M = \bip{b, Y^{M^\dagger}(a,\overline{x})c}_{M}  =  \big( (Y^{M^\dagger}(a,\overline{x})c)^\prime, b \big).
$$
The equality of the left-hand sides of the two equations is the definition of invariance.
On the other hand, equality of the right-hand terms is equivalent to
$$
(a_n^{M^\dagger}c)^\prime = a_n^{M^\prime}c^\prime, \quad \mbox{ for all } n \in \Z,
$$
which is in turn equivalent to the map $c \mapsto c^\prime$ intertwining the actions of $Y^{M^\dagger}$ and $Y^{M^\prime}$.
\end{proof}

\begin{Proposition}\label{prop: invariant inner product unique}
Let $V$ be a simple unitary vertex operator algebra, and let $M$ be a simple unitary $V$-module.
Then the invariant inner product on $M$ is unique up to a scalar.
Hence every isomorphism between simple unitary $V$-modules is a scalar multiple of a unitary.
\end{Proposition}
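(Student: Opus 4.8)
The strategy is to convert the problem into a statement about the one-dimensionality of a space of module homomorphisms, via Lemma~\ref{lem: invariant inner product as iso}, and then invoke Schur's lemma. The key input I would record first is the Schur statement for simple modules with finite-dimensional graded pieces: if $A$ and $B$ are simple $V$-modules then $\dim\Hom_V(A,B)\le 1$, with equality precisely when $A\cong B$. This follows once we know $\End_V(A)=\C$, and for that I would use that any $\phi\in\End_V(A)$ commutes with $L_0^A$ and so preserves each finite-dimensional weight space $A_\alpha$; restricting $\phi$ to the lowest nonzero weight space and choosing an eigenvalue $\lambda$, the kernel of $\phi-\lambda$ is a nonzero submodule and hence all of $A$ by simplicity, so $\phi=\lambda$. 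A minimal conformal weight is available because, by part~4 of Proposition~\ref{propBasicModuleOperatorProperties}, the weights of a simple module lie in a single coset of $\tfrac12\bbZ$ intersected with $\bbR_{\ge 0}$, which is discrete and bounded below.

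Next I would check that simplicity passes to the conjugate and contragredient modules. The module $M^\dagger$ has the same underlying space as $M$, and its operators $Y^{M^\dagger}(a,x)=Y^M(\theta a,x)$ range over the same set as those of $M$ (since $\theta$ is a bijection of $V$), so $M$ and $M^\dagger$ have identical submodule lattices; for $M^\prime$ I would cite the standard fact from \cite{FHL93} that passing to the contragredient reverses the submodule lattice, so $M^\prime$ is simple as well. Now, given two invariant inner products $\langle\,\cdot\,,\,\cdot\,\rangle_1$ and $\langle\,\cdot\,,\,\cdot\,\rangle_2$ on $M$, Lemma~\ref{lem: invariant inner product as iso} yields two $V$-module homomorphisms $\Phi_1,\Phi_2\colon M^\dagger\to M^\prime$, each sending $c$ to the functional $a\mapsto\langle a,c\rangle_i$. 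Both are nonzero since an inner product is nondegenerate, so the Schur statement forces $\dim\Hom_V(M^\dagger,M^\prime)=1$ and $\Phi_2=\lambda\Phi_1$ for a scalar $\lambda$; unwinding the definitions gives $\langle\,\cdot\,,\,\cdot\,\rangle_2=\lambda\langle\,\cdot\,,\,\cdot\,\rangle_1$, and positive-definiteness of both forms forces $\lambda>0$.

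For the final clause, let $\psi\colon M\to N$ be an isomorphism of simple unitary $V$-modules and pull the inner product of $N$ back along $\psi$, setting $\langle a,b\rangle_\psi:=\langle\psi a,\psi b\rangle_N$. Since $\psi$ intertwines $Y^M$ and $Y^N$, applying this to the vector $e^{xL_1}(-x^{-2})^{L_0}\kappa a\in V$ in the invariance identity for $N$ shows directly that $\langle\,\cdot\,,\,\cdot\,\rangle_\psi$ is an invariant inner product on $M$. By the uniqueness just established it equals $\lambda\langle\,\cdot\,,\,\cdot\,\rangle_M$ for some $\lambda>0$, whence $\lambda^{-1/2}\psi$ is unitary and $\psi$ is a scalar multiple of a unitary.

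I expect the only real obstacle to be the Schur's lemma input, that is, establishing $\End_V(M)=\C$ and the resulting one-dimensionality of $\Hom_V(M^\dagger,M^\prime)$; this is precisely where the finite-dimensionality of the graded pieces and the existence of a lowest conformal weight are used. Once that is in hand, the remainder is a formal consequence of Lemma~\ref{lem: invariant inner product as iso} together with the elementary pull-back computation.
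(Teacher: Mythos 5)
Your proof is correct and follows essentially the same route as the paper: both reduce uniqueness of the invariant form to the one-dimensionality of $\Hom_V(M^\dagger,M^\prime)$ via Lemma \ref{lem: invariant inner product as iso} and Schur's lemma for simple modules. You simply fill in details the paper leaves implicit (the eigenvalue argument for $\End_V(M)=\C$, simplicity of $M^\dagger$ and $M^\prime$, and the pull-back argument for the final clause), all of which check out.
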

\begin{proof}
If $M$ is simple, so is $M^\dagger$, and thus when $M$ is unitarizable there is only a one-dimensional space of isomorphisms $M^\prime \cong M^\dagger$.
By Lemma \ref{lem: invariant inner product as iso}, this implies that the inner product on $M$ is unique up to scalar.
\end{proof}

If $V_1$ and $V_2$ are vertex operator algebras, and $M_i$ is a generalized $V_i$-module, then $M_1 \otimes M_2$ is naturally a generalized $V_1 \otimes V_2$-module with action 
$$
Y^{M_1 \otimes M_2}(a_1 \otimes a_2,x) =Y^{M_1}(a_1,x) \hotimes Y^{M_2}(a_2,x) := Y^{M_1}(a_1,x)\Gamma^{p(a_2)} \otimes Y^{M_2}(a_2,x).
$$
Even when $M_1$ and $M_2$ are strong modules, $M_1 \otimes M_2$ may fail to be, but if the $M_i$ are simple then $M_1 \otimes M_2$ is a simple strong module (see \cite[Cor. 4.7.3]{FHL93}).
It is straightforward to check that if the $V_i$ and $M_i$ are unitary, then $M_1 \otimes M_2$ is a unitary module under the natural tensor product (see \cite[Prop. 2.10]{DongLin14} for the even case).
As a converse, we have the following.

\begin{Proposition}\label{prop: tensor splitting modules}
Let $V_1$ and $V_2$ be unitary vertex operator superalgebras, and let $M$ be a simple unitary $V_1 \otimes V_2$ module.
Then there exist simple unitary $V_i$-modules $M_i$ such that $M$ is unitarily equivalent to $M_1 \otimes M_2$.
\end{Proposition}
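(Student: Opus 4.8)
The plan is to restrict $M$ to the two mutually super-commuting copies of $V_1$ and $V_2$ sitting inside $V_1 \otimes V_2$, decompose $M$ into $V_1$-isotypic components, and recognize the commuting $V_2$-action as acting on the multiplicity spaces. By Proposition \ref{propUVOSATensorProduct}, the elements $a_1 \otimes \Omega_2$ and $\Omega_1 \otimes a_2$ super-commute in $V_1 \otimes V_2$, with $Y(a_1 \otimes \Omega_2, x) = Y^1(a_1,x) \otimes 1$ and $Y(\Omega_1 \otimes a_2, x) = \Gamma_{V_1}^{p(a_2)} \otimes Y^2(a_2,x)$. The subspaces $V_1 \otimes \Omega_2$ and $\Omega_1 \otimes V_2$ are unitary subalgebras isomorphic to $V_1$ and $V_2$, so setting $Y_1^M(a_1,x) := Y^M(a_1 \otimes \Omega_2, x)$ and $Y_2^M(a_2, x) := Y^M(\Omega_1 \otimes a_2, x)$ endows $M$ with commuting structures of unitary $V_1$- and $V_2$-modules (the invariance condition \eqref{eqnModuleInnerProductInvariance} restricts because $\theta_{V_1 \otimes V_2} = \theta_1 \otimes \theta_2$ preserves each subalgebra). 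The super-commutativity of the two families of fields is then a direct consequence of the module Borcherds commutator formula \eqref{eqnModuleBorcherdsCommutatorFormula} applied to the super-commuting elements above.

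Next I would decompose $M$ as a unitary $V_1$-module. By Proposition \ref{prop: complete reducibility}, $M$ is a countable orthogonal direct sum of simple unitary $V_1$-modules; grouping isomorphic summands gives an isotypic decomposition $M \cong \bigoplus_{[S]} S \otimes \cM_S$, where $[S]$ ranges over the isomorphism classes of simple unitary $V_1$-modules occurring, $\cM_S = \Hom_{V_1}(S,M)$ is the (Hilbert space) multiplicity space, and $V_1$ acts as $Y^S \otimes 1$ on each component. Because the $V_2$-action super-commutes with the $V_1$-action, it preserves each component $S \otimes \cM_S$. The main obstacle is to show that $Y_2^M(a_2, x)$ restricted to such a component has the form $\Gamma_S^{p(a_2)} \otimes Y_2^{\cM_S}(a_2, x)$ for fields $Y_2^{\cM_S}$ making $\cM_S$ a unitary $V_2$-module: this is a parity-graded version of Schur's lemma for the simple $V_1$-module $S$, and care is required to track the grading operator $\Gamma_S$ and to verify that the resulting modes on $\cM_S$ satisfy the Borcherds identity \eqref{eqnBorcherdsIdentityModule} and the invariance \eqref{eqnModuleInnerProductInvariance}. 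Matching the twist $\Gamma_S^{p(a_2)}$ against the tensor-product convention of Proposition \ref{propUVOSATensorProduct} is precisely what makes the reconstructed action agree with that of $S \otimes \cM_S$. Compatibility of the gradings follows from $\nu = \nu_1 \otimes \Omega_2 + \Omega_1 \otimes \nu_2$, which gives $L_0^M = L_0^{S} \otimes 1 + 1 \otimes L_0^{\cM_S}$ on each component and hence finite-dimensional weight spaces for $\cM_S$.

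Finally I would invoke simplicity. Since $M$ is a simple $V_1 \otimes V_2$-module, it contains no proper nonzero $V_1 \otimes V_2$-submodule. If two classes $[S] \ne [S']$ occurred, or if $\cM_S$ admitted a proper nonzero $V_2$-submodule $\cN$, then the $S$-isotypic component, respectively $S \otimes \cN$, would be such a submodule, a contradiction. Hence exactly one class $S$ occurs and $\cM_S$ is a simple unitary $V_2$-module, so $M \cong M_1 \otimes M_2$ with $M_1 := S$ and $M_2 := \cM_S$ simple unitary modules over $V_1$ and $V_2$. This isomorphism intertwines the $V_1 \otimes V_2$-actions, so by Proposition \ref{prop: invariant inner product unique} it is a scalar multiple of a unitary; rescaling the inner product on $M_2$ makes it an honest unitary equivalence, completing the proof.
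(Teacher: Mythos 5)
Your proposal takes a genuinely different route from the paper: the paper simply cites \cite[Thm.~4.7.4]{FHL93} for the algebraic splitting $M \cong M_1 \otimes M_2$ into simple modules and then devotes its entire proof to showing that the invariant inner product factorizes (by fixing $a_2,b_2 \in M_2$, observing that $\ip{a_1\otimes a_2, b_1\otimes b_2}_M$ is an invariant form on $M_1$, and invoking the one-dimensionality of the space of invariant forms from Proposition~\ref{prop: invariant inner product unique}). You instead try to re-derive the algebraic splitting from scratch, and this is where the gaps lie. First, the restriction of $M$ to $V_1 \otimes \Omega_2$ is only a \emph{generalized} $V_1$-module: since $V_1\otimes\Omega_2$ is not a conformal subalgebra, the $L_0^{V_1}$-eigenspaces of $M$ need not be finite-dimensional (the paper flags exactly this point in the discussion following Proposition~\ref{propTensorDecomposition}), so Proposition~\ref{prop: complete reducibility} as stated does not license the isotypic decomposition you start from. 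Second, the step you yourself identify as requiring ``care'' --- that the commutant of the $V_1$-action on an isotypic component $S\otimes\cM_S$ is $\Gamma_S^{p(\cdot)}\otimes\End(\cM_S)$, so that the $V_2$-fields descend to fields on the multiplicity space satisfying the Borcherds identity --- is precisely the content of the FHL theorem. It is not a formal consequence of Schur's lemma alone; it needs a Jacobson-density-type argument, and asserting it amounts to assuming the result you are trying to reprove.

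There is also a gap in the final unitarity step. You invoke Proposition~\ref{prop: invariant inner product unique} to conclude that the isomorphism $M \cong M_1 \otimes M_2$ is a scalar multiple of a unitary, but that proposition presupposes that \emph{both} sides already carry invariant inner products, and you never construct one on $\cM_S$ (nor verify \eqref{eqnModuleInnerProductInvariance} for it). The paper's factorization argument is exactly the missing ingredient: restricting the inner product of $M$ at fixed vectors of $M_2$ produces an invariant form on $M_1$, uniqueness up to scalar forces the product structure $\ip{a_1,b_1}_{M_1}\ip{a_2,b_2}_{M_2}$, and positivity is then arranged by rescaling. If you want to avoid citing \cite{FHL93}, you would need to carry out the density argument and the generalized-module bookkeeping in full; as written, the proposal replaces the paper's one nontrivial step with two unproven ones.
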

\begin{proof}
By \cite[Thm. 4.7.4]{FHL93}%
\footnote{%
The condition on the rationality of the lowest eigenvalues in the cited result is not essential, and is only present because rationality of eigenvalues is included in the definition of module in \cite{FHL93}.
}%
 $M$ is isomorphic to a tensor product of simple modules $M_1 \otimes M_2$.
We suppress this isomorphism and assume $M = M_1 \otimes M_2$.
We must show that the inner product on $M$ factors as a tensor product of invariant inner products on the $M_i$.
If we fix homogeneous $a_2,b_2 \in M_2$, the form $\ip{a_1,b_1}_{M_1} := \ip{a_1 \otimes a_2, b_1 \otimes b_2}_M$ is invariant for the action of $V_1$ on $M_1$.
By Proposition \ref{prop: invariant inner product unique}, the invariant forms on $M_1$ are unique up to scalar, so we have 
$$
\ip{a_1,b_1}_{M_1} \ip{a_2,b_2}_{M_2} = \ip{a_1 \otimes a_2, b_1 \otimes b_2}_M
$$
for some form $\ip{ \cdot , \cdot}_{M_2}$.
It is then clear that the form on $M_2$ is invariant as well.
Since the form on $M$ is an inner product, and the spaces of forms on $M_i$ are one-dimensional, we may adjust each by a scalar to make them into an inner product.
\end{proof}

As a corollary, we have the following.

\begin{Proposition}\label{propTensorDecomposition}
Let $V$ be a unitary vertex operator superalgebra, let $W$ be a unitary subalgebra, and let $M$ be a unitary $V$-module.
Then there exist countable families $N_i$ (resp. $K_i$) of simple unitary $W$-modules (resp. $W^c$-modules) such that $M$ is unitarily equivalent to $\bigoplus_{i=0}^\infty N_i \otimes K_i$ as a $W \otimes W^c$ module.
\end{Proposition}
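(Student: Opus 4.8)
The plan is to realize $M$ as a unitary module over an honest tensor-product subalgebra sitting inside $V$, decompose it into simple summands, and split each summand across the two tensor factors. The three tools are Proposition \ref{propWandCommutantGenerateTensorProduct} (which identifies $W \otimes W^c$ with a conformal subalgebra of $V$), Proposition \ref{prop: complete reducibility} (complete reducibility of unitary modules), and Proposition \ref{prop: tensor splitting modules} (splitting of simple modules over a tensor product). Since the statement is flagged ``as a corollary,'' the substantive analytic work is already done upstream and what remains is largely an assembly.

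First I would pass from $V$ to the subalgebra $\tilde W = \Span\{a_{(-1)}b : a \in W,\ b \in W^c\}$. By Proposition \ref{propWandCommutantGenerateTensorProduct}, $\tilde W$ is a unitary \emph{conformal} subalgebra of $V$ which is unitarily isomorphic to $W \otimes W^c$; fix a unitary isomorphism $\phi : W \otimes W^c \to \tilde W$. Restricting $Y^M$ to states in $\tilde W$ makes $M$ a $\tilde W$-module, and I would transport this along $\phi$ by setting $Y(u,x) := Y^M(\phi(u),x)$ for $u \in W \otimes W^c$. The key point, and the one place the word \emph{conformal} is used, is that $\nu \in \tilde W$, so that $L_0$ is untouched: the conformal-weight grading of $M$, and with it finite-dimensionality and countability of the graded pieces, is preserved, and $M$ remains a genuine (strong, separable) module. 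Because $\phi$ preserves $\Omega$, $\nu$, and the inner product, the invariance identity \eqref{eqnModuleInnerProductInvariance} is inherited, so $M$ is a \emph{unitary} $W \otimes W^c$-module.

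Next I would apply complete reducibility (Proposition \ref{prop: complete reducibility}) to the unitary $W \otimes W^c$-module $M$, writing $M = \bigoplus_i L_i$ as an at most countable orthogonal direct sum of simple unitary $W \otimes W^c$-modules. To each simple summand $L_i$ I would apply Proposition \ref{prop: tensor splitting modules}, producing simple unitary $W$-modules $N_i$ and simple unitary $W^c$-modules $K_i$ with $L_i$ unitarily equivalent to $N_i \otimes K_i$. Assembling these unitary equivalences over the orthogonal sum yields the asserted unitary equivalence $M \cong \bigoplus_i N_i \otimes K_i$. (Here $W$ and $W^c$ are simple, as required by Proposition \ref{prop: tensor splitting modules}, by Propositions \ref{propUnitarySubalgebraIsUnitaryVOA} and \ref{propUVOSACoset}.)

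I do not expect a hard obstacle, since the delicate step — factoring the invariant inner product on a simple $W \otimes W^c$-module as a tensor product of invariant forms on the factors, using uniqueness of invariant inner products (Proposition \ref{prop: invariant inner product unique}) — is precisely the content of Proposition \ref{prop: tensor splitting modules}, which I am free to invoke. The only thing I would verify carefully is that restriction to $\tilde W$ followed by transport along $\phi$ genuinely preserves every hypothesis needed to run the two decomposition results, namely unitarity together with the strong/separable conditions; this is exactly what the conformality of $\tilde W$ (fixing $L_0$) guarantees.
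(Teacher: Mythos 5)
Your proof is correct and follows exactly the same route as the paper's: identify $W \otimes W^c$ with the conformal subalgebra $\tilde W$ via Proposition \ref{propWandCommutantGenerateTensorProduct}, decompose $M$ into simple unitary $W \otimes W^c$-modules by Proposition \ref{prop: complete reducibility}, and split each summand by Proposition \ref{prop: tensor splitting modules}. The extra care you take in checking that conformality of $\tilde W$ preserves the grading and unitarity is implicit in the paper's terser argument but not a new idea.
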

\begin{proof}
By Proposition \ref{propWandCommutantGenerateTensorProduct}, $V$ has a unitary conformal subalgebra which is unitarily equivalent to $W \otimes W^c$, and $M$ is a $W \otimes W^c$-module.
Thus it decomposes as an orthogonal direct sum of simple unitary modules by Proposition \ref{prop: complete reducibility}, and all of the simple summands are of the indicated form by Proposition \ref{prop: tensor splitting modules}.
\end{proof}

We will frequently be in the situation of Proposition \ref{propTensorDecomposition}, where $M$ is a unitary $V$-module and $W$ is a (not necessarily conformal) unitary subalgebra of $V$.
In this case, we have two actions of the Virasoro algebra on $M$, coming from $Y^M(\nu^V,x)$ and $Y^M(\nu^W,x)$.
We will write $L_n^V$ and $L_n^W$, respectively, for the modes of these two fields.
Note that both operators act on $M$, and the superscript only indicates which conformal vector produces the representation.

\begin{Definition}
Let $V$ be a unitary vertex operator superalgebra, $W$ a unitary subalgebra, and $M$ a unitary $V$-module.
Then a generalized $W$-submodule $N$ of $M$ is a $\bbZ/2$-graded subspace $N^i \subset M^i$ which is invariant under $a^M_{(n)}$ for all $a \in W$ and $n \in \bbZ$ and also under $L_0^V$.
A $W$-submodule of $M$ is a generalized $W$-submodule of $M$ which is a $W$-module under the inherited action.
\end{Definition}

Since $L_0^W$ and $L_0^V$ commute, and the $L_0^V$ eigenspaces of $M$ are finite-dimensional, it follows that $L_0^W$ is diagonalizable on $M$, so any generalized $W$-submodule of $M$ is a unitary generalized $W$-module.
The Hilbert space completion of $N$ is separable (since $\cH_M$ is separable), and thus $N$ is a $W$-submodule of $M$ precisely when the eigenspaces of $L_0^W$ are finite-dimensional.
In light of Proposition \ref{propTensorDecomposition}, we can decompose $M = \bigoplus N_i \otimes K_i$, and so the generalized $W$-submodules $N$ of $M$ are precisely of the form $N = \bigoplus N_i \otimes S_i$, where $S_i \subset K_i$ is a $L_0^{W_c}$-graded subspace of $K_i$.
If $N$ is a $W$-submodule then each $S_i$ is finite-dimensional, although this condition alone does not imply that $N$ is a module.
Observe that $M$ itself is a generalized $W$-submodule of $M$, but it will not be a (strong) $W$-submodule unless $W$ is a conformal subalgebra of $V$.

\begin{Proposition}\label{propModuleProjection}
Let $V$ be a simple unitary vertex operator superalgebra, let $W$ be a unitary subalgebra, and let $N$ be a $W$-submodule of a unitary $V$-module $M$.
Let $\cH_M$ and $\cH_N$ be the Hilbert space completions of $M$ and $N$, respectively, and let $p_N$ be the projection of $\cH_M$ onto $\cH_N$. 
%Let $L_0^W = \nu^W_{(1)}$, regarded as an endomorphism of $V$, and let $L_0^V = \nu_{(1)}$ be the usual energy operator on $V$.
Then we have:
\begin{itemize}
%\item[i)] $V$ is spanned by vectors which are simultaneously eigenvectors of $L_0^W$ and $L_0^V$.
\item[i)] $p_N M \subseteq N$, and $p_N$ is an even map when regarded as an endomorphism of $M$.
\item[ii)] For all $a \in W$ and $n \in \Z$, $a^M_{(n)}$ commutes with $p_N$ as an endomorphism of $M$.
\item[iii)] $p_N$ commutes with $L_0^V$ as an endomorphism of $M$
\end{itemize}
\end{Proposition}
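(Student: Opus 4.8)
The plan is to dispatch all three items using the orthogonal decomposition of $\cH_M$ into finite-dimensional $L_0^V$-eigenspaces, reserving the real work for item ii). Since $M$ is a unitary $V$-module, $L_0^V$ is self-adjoint with mutually orthogonal, finite-dimensional eigenspaces $M_\alpha$, so $\cH_M = \widehat{\bigoplus}_\alpha M_\alpha$. By definition a $W$-submodule $N$ is invariant under $L_0^V$, hence $N = \bigoplus_\alpha (N \cap M_\alpha)$ and $\cH_N = \widehat{\bigoplus}_\alpha (N \cap M_\alpha)$, with each $N \cap M_\alpha \subseteq M_\alpha$ finite-dimensional. This immediately yields iii): $\cH_N$ is spanned by $L_0^V$-eigenvectors, so it reduces $L_0^V$ and $p_N$ commutes with $L_0^V$. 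It also gives the first half of i): on each $M_\alpha$ the map $p_N$ restricts to the orthogonal projection onto $N \cap M_\alpha$, so $p_N M_\alpha \subseteq N \cap M_\alpha \subseteq N$ and therefore $p_N M \subseteq N$.

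For the evenness in i), I would first establish $M^0 \perp M^1$. Using Proposition \ref{prop: complete reducibility}, decompose $M$ into an orthogonal direct sum of simple $V$-modules $L_j$. In a simple module, a mode $a^M_{(n)}$ with $a \in V^1$ shifts the $L_0^V$-weight by $\Delta_a - n - 1 \in \tfrac12 + \Z$ while flipping parity, whereas even modes shift weight by an integer and preserve parity; hence the even and odd vectors of $L_j$ occupy $L_0^V$-weights lying in disjoint cosets of $\Z$, so $L_j^0 \perp L_j^1$. Summing over the orthogonal summands gives $M^0 \perp M^1$, so $\Gamma$ is a self-adjoint unitary and the $\Z/2$-graded subspace $N$ has $\Gamma$-invariant completion; thus $p_N$ commutes with $\Gamma$ and is even.

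The substance is item ii). Fix homogeneous $a \in W$ and $n \in \Z$; then $a^M_{(n)}$ maps each finite-dimensional weight space $M_{\alpha'}$ into $M_{\alpha}$ with $\alpha = \alpha' - n - 1 + \Delta_a$. I claim its Hilbert-space adjoint is again a finite linear combination of $W$-modes. This is the crux, and it follows from invariance of the inner product: applying \eqref{eqnModuleInnerProductInvariance} to $\theta a \in W$ in place of $a$ (and using $\theta^2 = 1$) gives
$$
\bip{b, Y^M(a, \overline{x}) c} = \bip{Y^M\big(e^{x L_1}(-x^{-2})^{L_0}\kappa \theta a,\, x^{-1}\big) b, c},
$$
and because $W$ is a unitary subalgebra it is closed under $\theta$ and $L_1$, and is preserved by the (grading-preserving) operators $L_0$ and $\kappa$; hence $e^{x L_1}(-x^{-2})^{L_0}\kappa\theta a$ is a finite sum of homogeneous vectors of $W$, and matching coefficients expresses $(a^M_{(n)})^*$ as a finite combination of modes $b^M_{(m)}$ with $b \in W$. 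Granting this, each $W$-mode preserves $N^\perp \cap M = \bigoplus_\alpha (N \cap M_\alpha)^{\perp}$ (orthocomplement taken within $M_\alpha$): for $w \in (N \cap M_\alpha)^\perp$ and $u \in N \cap M_{\alpha'}$ one has $\ip{a^M_{(n)} w, u} = \ip{w, (a^M_{(n)})^* u}$, and $(a^M_{(n)})^* u \in N$ since it is a combination of $W$-modes applied to $u \in N$; thus $\ip{a^M_{(n)} w, u} = 0$, whence $a^M_{(n)} w \in N^\perp$. Writing $v = p_N v + (1-p_N)v$ with $p_N v \in N$ and $(1-p_N)v \in N^\perp$, the invariance of both $N$ and $N^\perp$ under $a^M_{(n)}$ gives $p_N a^M_{(n)} v = a^M_{(n)} p_N v$, which is ii).

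The main obstacle is precisely the adjoint identity $(a^M_{(n)})^* \in \Span\{ b^M_{(m)} : b \in W,\ m \in \Z\}$; everything else is bookkeeping. Two points deserve care there: that $e^{x L_1}$, $(-x^{-2})^{L_0}$, $\kappa$, and $\theta$ do not lead outside $W$ — guaranteed by the definition of unitary subalgebra together with the diagonal action of $L_0$ and $\kappa$ on the weight/parity grading $W$ inherits from $V$ — and that no convergence issue arises, which I avoid by working throughout within the finite-dimensional $L_0^V$-weight spaces, on which all modes and their adjoints are honest linear maps.
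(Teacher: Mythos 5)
Your proof is correct, but it takes a genuinely different route from the paper's. The paper disposes of all three items in a few lines by invoking Proposition \ref{propTensorDecomposition}: writing $M = \bigoplus_i N_i \otimes K_i$ as a $W \otimes W^c$-module and $N = \bigoplus_i N_i \otimes S_i$, it reads off $p_N = \bigoplus_i 1 \otimes p_i$, after which i)--iii) are immediate ($a^M_{(n)} = \bigoplus a^{N_i}_{(n)} \otimes 1$ for $a \in W$, and $L_0^V = L_0^W + L_0^{W^c}$ with each summand respecting the block form). You instead work directly with the $L_0^V$-eigenspace decomposition and prove the one nontrivial fact yourself: that $(a^M_{(n)})^*$ for $a \in W$ is again a finite combination of $W$-modes, via the invariance identity \eqref{eqnModuleInnerProductInvariance} and the closure of $W$ under $\theta$, $L_1$, $L_0$, and $\kappa$. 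This is exactly the mechanism underlying complete reducibility (Proposition \ref{prop: complete reducibility}, i.e.\ \cite[Prop.~2.2]{AiLin17}), so in effect you re-derive an orthogonal-complement argument for the $W$-action on $M$ rather than routing through the tensor splitting. What your approach buys is self-containedness: it avoids the chain Proposition \ref{prop: invariant inner product unique} $\to$ Proposition \ref{prop: tensor splitting modules} $\to$ Proposition \ref{propTensorDecomposition} (and FHL's tensor decomposition theorem behind it), and it handles the conformal-subalgebra case on the same footing as the general one. What it costs is the explicit block form $p_N = \bigoplus 1 \otimes p_i$, which is the form actually reused later (e.g.\ in Lemma \ref{lem: relate annuli AV and AW}); it is also somewhat longer, since you must separately establish $M^0 \perp M^1$ (your weight-parity argument inside each simple summand is fine, though it tacitly uses that a simple module is spanned by iterated modes applied to a single homogeneous vector --- worth a sentence). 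All steps check out; in particular the reduction of ii) to the invariance of $N$ and of $N^\perp \cap M$ under $W$-modes, carried out inside the finite-dimensional weight spaces, is sound.
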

\begin{proof}
As remarked above, by Proposition \ref{propTensorDecomposition} we can write $M = \bigoplus N_i \otimes K_i$ as a module for $W \otimes W^c$, and $N = \bigoplus N_i \otimes S_i$ where $S_i$ are finite-dimensional $L_0^{W_c}$-graded and parity-graded subspaces of $K_i$.
Then if $p_i$ is the projection of $\cH_{K_i}$ onto $S_i$, we have $p_N = \bigoplus 1 \otimes p_i$.
Thus indeed we have $p_N M \subset N$, and since the $S_i$ are parity-graded, $p_N$ is even.
For $a \in W$, we have $a^M_{(n)} = \bigoplus a^{M_i}_{(n)} \otimes 1$, and thus (ii) holds.
Finally, since $S_i$ is $L_0^{W_c}$ graded, $p_N$ commutes with $L_0^{W_c}$.
By (ii), $p_N$ commutes with $L_0^W$, and thus $p_N$ commutes with $L_0^V = L_0^W + L_0^{W_c}$.
\end{proof}

\subsubsection{Intertwining operators}

If $M$ and $N$ are vector spaces, we write $\cL(M,N)$ for the space of linear maps from $M$ to $N$, and $\cL(M,N)\{x\}$ for the space of all formal series 
$$
\sum_{n \in \C} a_{(n)} x^{-n-1}
$$
with $a_{(n)} \in \cL(M,N)$.
\begin{Definition}
Let $V$ be a vertex operator superalgebra, and let $M, N$ and $K$ be $V$-modules.
An \emph{intertwining operator} of type $\binom{K}{M \, N}$ is a linear map $\cY: M \to \cL(N, K)\{x\}$, written
$$
\cY(a,x) = \sum_{n \in \C} a^{\cY}_{(n)} x^{-n-1},
$$
which satisfies the following properties. 
\begin{enumerate}
\item If $a \in M$ is even (resp. odd) then $a^{\cY}_{(n)}$ is even (resp. odd) for all $n \in \C$.
%That is, if $a \in L^i$ and $b \in M^j$, then $a^{\cY}_{(n)}b \in N^{i+j}$, where the sum $i+j$ is taken mod $2$.
\item For every $a \in M$, $b \in N$ and $k \in \C$, we have $a^{\cY}_{(k+n)}b = 0$ for all sufficiently large $n \in \Z$.
\item For every $a \in M$, $\cY$ satisfies the $L_{-1}$-derivative property:
\begin{equation}
\cY(L_{-1} a, x) = \frac{d}{dx} \cY(a,x)
\end{equation}

\item For every homogeneous $a \in V$, $b \in M$, $c \in N$, every $m,n \in \Z$, and every $k \in \C$, the Borcherds(/Jacobi) identity holds:
\begin{align}
\sum_{j = 0}^\infty \binom{m}{j} \big(a^L_{(n+j)}b\big)^\cY_{(m+k-j)}&c = \sum_{j=0}^\infty (-1)^j \binom{n}{j} a_{(m+n-j)}^N b_{(k+j)}^\cY c \nonumber\\ 
&-(-1)^{p(a)p(b)} \sum_{j=0}^\infty(-1)^{j+n} \binom{n}{j} b_{(n+k-j)}^\cY a_{(m+j)}^M c. \label{eqnBorcherdsIdentityIntertwiner}
\end{align}
\end{enumerate}
We denote by $I \binom{K}{M \, N}$ the vector space of all intertwining operators of the indicated type.
\end{Definition}
The $L_{-1}$-derivative property cannot be deduced from the Borcherds identity for intertwining operators like it can be for module operators $Y^M$, and so we must include it in the definition.
Indeed, the Borcherds identity for intertwining operators is invariant under shifting every $b^\cY_{(k)}$ to $b^\cY_{(k+\alpha)}$, for any $\alpha \in \C$.
We do, however, have analogs of the other basic properties established for module operators (which are deduced immediately from the Borcherds identity).

\begin{Proposition}\label{propIntertwinerProperties}
Let $M$, $N$ and $K$ be modules over a vertex operator superalgebra $V$, and let $\cY \in I\binom{K}{M \, N}$.
\begin{enumerate}
\item $\cY$ satisfies a Borcherds product formula:
\begin{equation}
\big(a_{(n)}^M b\big)^\cY_{(k)} c= \sum_{j=0}^\infty (-1)^j \binom{n}{j} \left(a_{(n-j)}^K b_{(k+j)}^\cY - (-1)^{p(a)p(b)+n}b_{(n+k-j)}^\cY a_{(j)}^N\right)c.
\end{equation}
for every $a \in V$, $b \in M$, $c \in N$, $n \in \Z$ and $k \in \C$.
\item $\cY$ satisfies a Borcherds commutator formula:
\begin{equation}
a_{(m)}^K b_{(k)}^\cY c - (-1)^{p(a)p(b)} b_{(k)}^\cY a_{(m)}^N c = \sum_{j =0}^\infty \binom{m}{j} \big(a_{(j)}^M b\big)^\cY_{(m+k-j)}c
\end{equation}
for every $a \in V$, $b \in M$, $c \in N$, $m \in \Z$ and $k \in \C$.
\item If $b \in L$ is homogeneous of conformal weight $\Delta_b$, then $b_{(k)}^\cY N_\alpha \subset K_{\alpha - k -1 + \Delta_b}$. \label{itmIntertwinerHomogeneous} 
\end{enumerate}
\end{Proposition}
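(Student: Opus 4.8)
The plan is to derive all three identities from the Borcherds/Jacobi identity \eqref{eqnBorcherdsIdentityIntertwiner} by the same specializations used to obtain \eqref{eqn: BPF} and \eqref{eqn: BCF} for vertex operator superalgebras and \eqref{eqnModuleBorcherdsProductFormula} and \eqref{eqnModuleBorcherdsCommutatorFormula} for modules. Throughout I read each composition of modes so that it typechecks: the $V$-action applied to the $K$-valued output of $\cY$ carries the superscript $K$, while the $V$-action applied to $c \in N$ carries the superscript $N$. This convention fixes the superscripts on the $a$-modes appearing on the right-hand sides (and is what makes the stated product and commutator formulas the correct specializations).

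For the product formula (1), I would set $m = 0$ in \eqref{eqnBorcherdsIdentityIntertwiner}. Since $\binom{0}{j} = \delta_{j,0}$, the left-hand side collapses to the single summand $\big(a_{(n)}^M b\big)^\cY_{(k)} c$, while the right-hand side is unchanged, yielding exactly the asserted identity. For the commutator formula (2), I would instead set $n = 0$: now $\binom{0}{j} = \delta_{j,0}$ collapses both sums on the right to their $j=0$ terms, giving $a_{(m)}^K b_{(k)}^\cY c - (-1)^{p(a)p(b)} b_{(k)}^\cY a_{(m)}^N c$, while the left-hand side is unchanged; rearranging produces (2). Both steps are purely formal and rely only on the vanishing of the relevant binomial coefficients.

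The degree-shift property (3) is the one part requiring a genuine input beyond bookkeeping, namely the $L_{-1}$-derivative property, which (as emphasized after the definition) is not a consequence of the Borcherds identity and must be invoked separately. I would apply the commutator formula (2) with $a = \nu$ and $m = 1$. Writing $Y(\nu,x) = \sum_n L_n x^{-n-2}$ gives $\nu_{(0)} = L_{-1}$ and $\nu_{(1)} = L_0$, and $p(\nu) = 0$, so the left-hand side becomes $L_0^K b_{(k)}^\cY c - b_{(k)}^\cY L_0^N c$, while the right-hand side retains only the terms $j = 0, 1$, equal to $(L_{-1} b)^\cY_{(k+1)} c + (L_0 b)^\cY_{(k)} c$. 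Comparing coefficients in $\cY(L_{-1}b,x) = \tfrac{d}{dx}\cY(b,x)$ yields $(L_{-1}b)^\cY_{(k+1)} = -(k+1)\, b^\cY_{(k)}$, and homogeneity of $b$ gives $L_0 b = \Delta_b b$, so $(L_0 b)^\cY_{(k)} = \Delta_b\, b^\cY_{(k)}$. Combining these,
\[
[\,L_0^K,\, b_{(k)}^\cY\,] = (\Delta_b - k - 1)\, b_{(k)}^\cY
\]
as operators $N \to K$; hence for $c \in N_\alpha$ we obtain $b_{(k)}^\cY c \in K_{\alpha - k - 1 + \Delta_b}$, which is (3).

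The main point, such as it is, lies entirely in (3): one must remember to feed $a = \nu$ into the commutator formula and to use the $L_{-1}$-derivative relation to rewrite the $j = 0$ term, precisely because that relation is the one piece of structure not automatically supplied by the Jacobi identity. Parts (1) and (2) are immediate specializations and present no obstacle.
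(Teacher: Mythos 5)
Your proposal is correct and follows the same route the paper intends: the paper gives no explicit proof, asserting that these identities are "deduced immediately from the Borcherds identity," and your specializations $m=0$ (product formula) and $n=0$ (commutator formula), followed by the $a=\nu$, $m=1$ computation combined with the $L_{-1}$-derivative property for part (3), are exactly the standard derivation. Your observation that part (3) genuinely requires the $L_{-1}$-derivative property is also consistent with the paper's own remark that the Borcherds identity alone is invariant under shifting the mode indices $b^\cY_{(k)} \mapsto b^\cY_{(k+\alpha)}$.
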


If $V$ is a unitary vertex operator superalgebra and $M$, $N$, and $K$ are unitary $V$-modules, then by Proposition \ref{propIntertwinerProperties}\eqref{itmIntertwinerHomogeneous}, every intertwining operator $\cY \in I\binom{K}{M \, N}$ can be written as a sum $\cY(a,x) = \sum_{k \in \R} a_{(k)} x^{-k-1}$ indexed by the real numbers.

\begin{Proposition}\label{propIntertwinerDescent}
Let $V$ be a simple unitary vertex operator superalgebra, and let $W$ be a unitary subalgebra.
Let $N$ and $K$ be simple $W$-submodules of a unitary $V$-module $\tilde M$, and let $M$ be a simple $W$-submodule of $V$.
Let $p_K: \tilde M \to K$ be the orthogonal projection (which maps into $K$ by Proposition \ref{propModuleProjection}).
Then there is a $\Delta \in \R$ such that the map $\cY:M \to \cL(N,K)\{x\}$ defined by
\begin{equation}\label{eqnDescentIntertwiner}
\cY(a,x) = p_K x^{\Delta} Y(a,x) |_N
\end{equation}
is an intertwining operator of type $\binom{K}{M \, N}$.
\end{Proposition}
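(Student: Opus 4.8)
The plan is to set $Y = Y^{\tilde M}$ for the $V$-module action on $\tilde M$ and define $\cY(a,x) = p_K\, x^\Delta Y(a,x)|_N$, so that its modes are $a^\cY_{(\ell)} = p_K\, a^{\tilde M}_{(\ell+\Delta)}|_N$ for $\ell \in \Z - \Delta$ (and zero otherwise). Linearity in $a$ is immediate, the parity property holds because $p_K$ is even (Proposition \ref{propModuleProjection}), and lower truncation is inherited from lower truncation of $Y^{\tilde M}$. The two substantive points are the Borcherds identity and the $L_{-1}$-derivative property, and it is the latter that forces the value of $\Delta$.

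For the Borcherds identity I would start from the module identity \eqref{eqnBorcherdsIdentityModule} on $\tilde M$, specialized to $a \in W$, $b \in M \subset V$, $c \in N \subset \tilde M$; here $a_{(n+j)}b$ (the product in $V$) lands in $M$ and coincides with the $W$-module action, while $a^{\tilde M}_{(\cdot)}$ restricts to the $W$-actions on $N$ and $K$. Applying $p_K$ and using that $p_K$ commutes with $a^{\tilde M}_{(\cdot)}$ for $a\in W$ (Proposition \ref{propModuleProjection}(ii), since $K$ is a $W$-submodule of $\tilde M$), the identity transforms term-by-term into the intertwiner identity \eqref{eqnBorcherdsIdentityIntertwiner} with complex index $\ell = \tilde k - \Delta$. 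The uniform shift by $x^\Delta$ is harmless precisely because, as noted just after the definition of intertwining operator, the identity is invariant under $b^\cY_{(k)} \mapsto b^\cY_{(k+\alpha)}$; and for $\ell \notin \Z - \Delta$ every $\cY$-mode appearing vanishes, so both sides are $0$. Thus the Borcherds identity holds for \emph{any} choice of $\Delta$, and the choice is decided entirely by the next step.

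The crux, and the main obstacle, is the $L_{-1}$-derivative property, where the two Virasoro actions collide. Differentiating gives $\tfrac{d}{dx}\cY(a,x) = \Delta\, p_K x^{\Delta-1} Y(a,x)|_N + p_K x^{\Delta} Y(L_{-1}^V a,x)|_N$, and by $\nu = \nu^W + \nu^{W^c}$ (Proposition \ref{propUVOSACoset}) I split $L_{-1}^V a = L_{-1}^W a + L_{-1}^{W^c} a$; the first piece yields the desired $\cY(L_{-1}^W a, x)$, but $L_{-1}^{W^c} a$ need not lie in $M$ and must be absorbed. The key observation is that $L_0^{W^c} = L_0^V - L_0^W$ commutes with the $W$-action and, by Proposition \ref{propModuleProjection}(ii),(iii), with $p_K$; being diagonalizable, it therefore acts as scalars $h^c_M, h^c_N, h^c_K$ on the \emph{simple} $W$-modules $M$, $N$, $K$. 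Feeding $\nu^{W^c}$ and $a \in M$ into the commutator formula \eqref{eqnModuleBorcherdsCommutatorFormula} (with $m=1$, summed into fields) gives $[L_0^{W^c}, Y(a,x)] = x\, Y(L_{-1}^{W^c} a,x) + h^c_M\, Y(a,x)$; evaluating $p_K[L_0^{W^c}, Y(a,x)]|_N$ and reading off the three scalars produces the identity $x\, p_K Y(L_{-1}^{W^c} a,x)|_N = (h^c_K - h^c_N - h^c_M)\, p_K Y(a,x)|_N$.

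Setting $\Delta := h^c_N + h^c_M - h^c_K$ (which is real, being a combination of $L_0^{W^c}$-eigenvalues of unitary modules) then makes the offending term $p_K x^{\Delta} Y(L_{-1}^{W^c} a,x)|_N = -\Delta\, p_K x^{\Delta-1} Y(a,x)|_N$ cancel exactly against the $\Delta\, p_K x^{\Delta-1} Y(a,x)|_N$ coming from differentiating $x^\Delta$, leaving $\tfrac{d}{dx}\cY(a,x) = p_K x^{\Delta} Y(L_{-1}^W a,x)|_N = \cY(L_{-1}^W a, x)$, which is precisely the $L_{-1}$-derivative property (note $L_{-1}^W a \in M$ is the $W$-module action). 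In summary, the Borcherds identity is routine once one tracks how $p_K$ interacts with the $W$-modes, and the entire force of the proof is in using the coset decomposition of the conformal vector together with the scalar commutant weights $h^c_{(\cdot)}$ to determine the single shift $\Delta$ that reconciles differentiation against $L_{-1}^V$ with the intertwiner's requirement of $L_{-1}^W$.
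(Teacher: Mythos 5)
Your proof is correct, and it rests on exactly the same mechanism as the paper's: the scalars $h^c_M, h^c_N, h^c_K$ by which $L_0^V - L_0^W$ acts on the three simple $W$-modules (a Schur-type argument, since this diagonalizable operator commutes with the $W$-action) determine $\Delta = h^c_M + h^c_N - h^c_K$, which is precisely the paper's $\Delta_M + \Delta_N - \Delta_K$. The routine parts — parity via evenness of $p_K$, truncation, and the Borcherds identity via commutation of $p_K$ with the $W$-modes together with the shift-invariance of the intertwiner identity — also match the paper's treatment. Where you genuinely diverge is in how the $L_{-1}$-derivative property is verified: you differentiate $x^\Delta Y(a,x)$ directly, split $L_{-1}^V a = L_{-1}^W a + L_{-1}^{W^c} a$ using $\nu = \nu^W + \nu^{W^c}$ (Proposition \ref{propUVOSACoset}), and use the commutator $[L_0^{W^c}, Y(a,x)]$ to show the coset term equals $-\Delta x^{-1} p_K Y(a,x)|_N$, cancelling the derivative of the prefactor. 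The paper never differentiates $Y$ or invokes $\nu^{W^c}$ in this proof: it computes $[L_0^W, p_K a_{(n)}]$ in two ways — once from the gradings shifted by the scalars $\Delta_\bullet$, once from the Borcherds commutator formula for $\nu^W$ — and reads off $p_K (L_{-1}^W a)_{(n+1)}b = -(n+1-\Delta) p_K a_{(n)}b$ mode by mode. Since $[L_0^{W^c},\cdot] = [L_0^V,\cdot] - [L_0^W,\cdot]$ and $[L_0^V, a_{(n)}]$ encodes the grading, the two computations are algebraically equivalent; your version makes the role of the prefactor $x^\Delta$ (absorbing the coset part of translation) more transparent, while the paper's mode-level version avoids any field-level manipulation. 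Both land on the same $\Delta$, and both correctly note it is real because it is built from $L_0$-eigenvalues of unitary modules.
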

\begin{proof}
Assume for now that $\Delta \in \R$ is arbitrary.
Since $p_K$ is even by Proposition \ref{propModuleProjection}, the parity requirement for the modes $a_{(k)}^\cY$ is satisfied.
The truncation condition $a^{\cY}_{(k+n)}b = 0$ for large $n$ is inherited from the corresponding property of $a_{(n)}$.
The Borcherds identity for $\cY$ is an immediate consequence of the Borcherds identity for $Y$, and the fact that $p_K$ commutes with the $W$-actions on $M$ and $N$ by Proposition \ref{propModuleProjection}.

It remains to check the $L_{-1}$ derivative property, which requires the correct choice of $\Delta$.
We will write $L_n^V$ and $L_n^W$ for the representations of the Virasoro algebra on $V$-modules coming from the conformal vectors $\nu^V$ and $\nu^W$, respectively.
%By assumption, $L_0^V$ leaves $M$ invariant.
%Since $L_i^V|_W = L_i^W |_W$ for $i \in \{-1,0,1\}$ by Proposition \ref{propUnitarySubalgebraIsUnitaryVOA}, 
By the Borcherds commutator formula and Proposition \ref{propUnitarySubalgebraIsUnitaryVOA}, $(L_0^V - L_0^W)|_M$ commutes with the module action $a_{(n)}^M = a_{(n)}|_M$ for $a \in W$.
Since $M$ is simple, there is a scalar $\Delta_M \in \R$ such that $L_0^V b - L_0^W b = \Delta_M b$ for all $b \in M$.
Repeating the argument for $N$ and $K$, we obtain real scalars $\Delta_N$ and $\Delta_K$ satisfying the analogous identities for $b \in N$ and $b \in K$.

Now set $\Delta = -\Delta_K + \Delta_M + \Delta_N$, and let $a \in M$ be homogeneous with $L_0^Wa = \Delta_a^W a$.
Then for $b \in N$ we obtain
$$
[L_0^W, p_K a_{(n)}]b = (\Delta^W_a-n+\Delta-1)p_K a_{(n)}b,
$$
by substituting $L_0^W|_M = L_0^V|_M - \Delta_M$, and similarly for the other modules, along  with the fact that $p_K$ commutes with $L_0^W$.
On the other hand, by the Borcherds commutator formula we have
$$
[L_0^W, p_K a_{(n)}] = p_K[L_0^W, a_{(n)}] = p_K ((L_{-1}^Wa)_{(n+1)} + \Delta_a^W a_{(n)}).
$$
Hence 
$$
p_K (L_{-1}^W a)_{(n+1)}b = -(n-\Delta+1)p_K a_{(n)}b
$$
for all $b \in N$.
Combining these formulas we obtain for homogeneous $a \in M$
$$
(L^W_{-1} a)_{(n-\Delta +1)}^\cY = p_K (L_{-1}^W a)_{(n + 1)}|_N = -(n-\Delta+1) a^\cY_{(n-\Delta)}.
$$
This extends by linearity to all $a \in M$, which establishes the $L_{-1}$-derivative property.
%
%$$
%L_0^W p_N a_{(n)} b = -\Delta_N p_N a_{(n)}b + L_0^V p_N a_{(n)} b = (-\Delta_N + \Delta_a + \Delta_b - n -1)p_N a_{(n)} b = (-\Delta_N + \Delta_M + \Delta_L + \Delta_a^W + \Delta_b^W -n -1)p_N a_{(n)} b.
%$$
%So $p_N a_{(n)}$ needs to be $a^{\cY}_{(n+\Delta_N - \Delta_M - \Delta_L))}$.
%So $\Delta = \Delta_N - \Delta_M - \Delta_L$
\end{proof}
%\todo{Figure out how to cite Konig-Scholz Matrix product approximations to conformal field theories. They have a clear error where they assume unitary modules are isomorphic to their conjugate, and then draw weird conclusions (Lem. 2.2) about intertwining operators. 
%Another error is (Lem. 2.1?) about complex conjugate intertwining operators. Should be able to correct that.}

\subsection{Fermi conformal nets}

In this section we will briefly outline the basic ideas of Fermi conformal nets, the $\Z/2\Z$-graded analog of local conformal nets.
The interested reader can find more detail in the original reference \cite{CaKaLo08}.

We first recall some basic terminology.
A super Hilbert space $\cH$ is a Hilbert space equipped with a $\Z/2\Z$ grading $\cH = \cH^0 \oplus \cH^1$.
The corresponding grading involution is $\Gamma = 1_{\cH^0} \oplus -1_{\cH^1}$.
Elements of $\cH^0$ (resp $\cH^1$) are called \emph{even} (resp. \emph{odd}) \emph{homogeneous vectors}, and if $\xi \in \cH^i$ we denote the parity of $\xi$ by $p(\xi) = i$.
The $\Z/2\Z$ grading on $\cH$ induces one on $\cB(\cH)$, corresponding to the involution $x \mapsto \Gamma x \Gamma$.
The supercommutator $[\, \cdot \, , \, \cdot \,]_{\pm}$ on $\cB(\cH)$ is given by $[x,y]_{\pm} = xy - (-1)^{p(x)p(y)}yx$ for homogeneous $x$ and $y$, and by extending linearly otherwise.

An \emph{interval} $I \subset S^1$ is an open, connected, non-empty, non-dense subset.
We denote by $\cI$ the set of all intervals.
If $I \in \cI$, we denote by $I^\prime$ the complementary interval $\interior{I^c}$.

We now fix notation for diffeomorphisms:
\begin{itemize}
\item For $n \in \bbZ_{\ge 1} \cup \{\infty\}$ we denote by $\Diff^{(n)}(S^1)$ the $n$-fold cover of the group of orientation preserving diffeomorphisms of the unit circle $S^1$.
The central extension $\Diff^{(\infty)}(S^1)$ of $\Diff(S^1)$ by $\bbZ$ is generated by the full rotation $r_{2\pi}$, and for finite $n$ representations of $\Diff^{(n)}(S^1)$ correspond to representations of $\Diff^{(\infty)}(S^1)$ for which $r_{2n\pi} = 1$.

\item We use the (standard, but potentially confusing) notation $\Diff(I)$ for the subgroup of $\Diff(S^1)$ consisting of diffeomorphisms which act as the identity on $I^\prime$, and embed $\Diff(I) \hookrightarrow \Diff^{(n)}(S^1)$ in the natural way.

\item Let $\Diff_c^{(n)}(S^1)$ be the central extension of $\Diff^{(n)}(S^1)$ by $U(1)$ corresponding to the central charge $c$, and let $\Diff_c(I)$ be the corresponding central extension of $\Diff(I)$. 
By a representation of $\Diff_c^{(n)}(S^1)$ we will mean a representation of the group in which the central $U(1)$ acts standardly, and similarly for $\Diff_c(I)$. 

\item Let $\Mob^{(n)}(\bbD) \subset \Diff^{(n)}(S^1)$ be the M\"{o}bius subgroup consisting of diffeomorphisms which extend to biholomorphic maps on the unit disk $\bbD$.
In fact, we have compatible inclusions $\Mob^{(n)}(\bbD) \subset \Diff_c^{(n)}(S^1)$ as well.
\end{itemize}

See e.g. \cite[\S2.2.1]{HenriquesColimits} and references therein for a discussion of the central extensions in the final two bullet points.

\begin{Definition}
A Fermi conformal net of central charge $c$ is given by the data:
\begin{enumerate}
\item A super Hilbert space $\cH = \cH^1 \oplus \cH^0$, with corresponding unitary grading involution $\Gamma$.
\item A strongly continuous projective unitary representation $U:\Diff_c^{(2)}(S^1) \to \cU(\cH)$ which restricts to an honest unitary representation of $\Mob^{(2)}(\bbD)$.
\item For every $I \in \cI$, a von Neumann algebra $\cA(I) \subset \cB(\cH)$.
\end{enumerate}
The data is required to satisfy:
\begin{enumerate}
\item The local algebras $\cA(I)$ are $\Z/2\Z$ graded. That is, $\Gamma \cA(I)\Gamma = \cA(I)$.
\item If $I,J \in \cI$ and $I \subset J$, then $\cA(I) \subset \cA(J)$.
\item If $I,J \in \cI$ and $I \cap J = \emptyset$, then $[\cA(I), \cA(J)]_{\pm} = \{0\}$.
\item $U(\gamma)\cA(I)U(\gamma)^* = \cA(\gamma(I))$ for all $\gamma \in \Diff_c^{(2)}(S^1)$, and $U(\gamma)xU(\gamma)^* = x$ when $x \in \cA(I)$ and $\gamma \in \Diff_c(I^\prime)$.
\item There is a unique (up to scalar) unit vector $\Omega \in \cH$, called the \emph{vacuum vector}, which satisfies $U(\gamma)\Omega = \Omega$ for all $\gamma \in \Mob^{(2)}(\bbD)$.
This vacuum vector is required to be cyclic for the von Neumann algebra $\cA(S^1):=\bigvee_{I \in \cI} \cA(I)$, and it must satisfy $\Gamma \Omega = \Omega$.
\item The generator $L_0$ of the one-parameter group $U(r_{\theta})$ is positive.
\end{enumerate}
%An isomorphism of Fermi conformal nets $(\cA_1,U_1)$ and $(\cA_2, U_2)$ on $\cH_1$ and $\cH_2$, respectively, is a unitary $u:\cH_1 \to \cH_2$ such that $u\cA_1(I)u^* = \cA_2(I)$ for all $I \in \cI$, and $uU_1(\psi,\gamma)u^* = U_2(\psi,\gamma)$ for all $(\psi,\gamma) \in \Diff_+^{NS}(S^1)$.
\end{Definition}
A Fermi conformal net with $\cH = \cH^0$ is called a local conformal net (or just a conformal net).
If we set $\cA_b(I) = \{x \in \cA(I) : p(x) = 0\}$, then $\cA_b$ is a local conformal net on $\cH^0$.

Fermi conformal nets have many properties analogous to familiar properties of conformal nets.
We list some basic properties here:
\begin{Theorem}[\cite{CaKaLo08}]\label{thmFermiNetProps}
Let $\cA$ be a Fermi conformal net. Then we have:
\begin{enumerate}
\item (Haag duality) $\cA(I^\prime) = \{x \in \cB(\cH) : [x,y]_{\pm} = 0 \mbox{ for all } y \in \cA(I) \}$
\item (Reeh-Schlieder) $\cH = \overline{\cA(I)\Omega}$ for every $I \in \cI$.
\item $U(r_{2\pi}) = \Gamma$ and $\Gamma U(\gamma) = U(\gamma)\Gamma$ for all $\gamma \in \Diff^{(2)}_c(S^1)$.
\item $\cA(I)$ is a type III factor for every interval $I \in \cI$.
\end{enumerate}
\end{Theorem}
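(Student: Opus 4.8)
The plan is to treat these four statements as the graded analogues of the standard structural theorems for Möbius- and diffeomorphism-covariant conformal nets, following \cite{CaKaLo08}. The engine throughout is the geometric action of the modular data (a graded Bisognano--Wichmann theorem), together with a \emph{Klein transformation} converting graded locality into ordinary locality, so that the well-developed theory of local conformal nets can be imported. I would organize the argument so that the grading-insensitive statement (Reeh--Schlieder) is settled first, then the genuinely graded spin--statistics relation, and finally Haag duality and the factor type, which reduce to the ungraded case.

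First I would establish Reeh--Schlieder (item 2), which is essentially insensitive to the grading. Starting from cyclicity of $\Omega$ for $\cA(S^1) = \bigvee_{I \in \cI} \cA(I)$, I would show $\Omega$ is cyclic for each individual $\cA(I)$ by the standard analyticity argument: for $a \in \cA(I)$ and a one-parameter group $U(\exp(tX))$ of Möbius transformations, positivity of $L_0$ forces $t \mapsto U(\exp(tX))\,a\,\Omega$ to extend to a bounded analytic function on a strip, and invariance of $\Omega$ under $\Mob^{(2)}(\bbD)$ then lets one slide any vector orthogonal to $\cA(I)\Omega$ so that it is orthogonal to $\cA(J)\Omega$ for every $J \in \cI$, hence to $\cA(S^1)\Omega = \cH$. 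Graded locality gives that $\Omega$ is simultaneously separating for $\cA(I)$.

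Next I would pin down item 3, the identification $U(r_{2\pi}) = \Gamma$. Since $U$ factors through $\Diff^{(2)}_c(S^1)$ we have $U(r_{4\pi}) = 1$, so $V := U(r_{2\pi})$ is a unitary involution; because $r_{2\pi}$ is central it commutes with every $U(\gamma)$, it fixes $\Omega$, and since $r_{2\pi}$ projects to the identity diffeomorphism it normalizes each $\cA(I)$. The real content is to identify $\Ad V$ with the parity automorphism $\Ad\Gamma$, which is a spin--statistics statement relating the conformal spin (the fractional part of the $L_0$-spectrum on charged sectors) to the parity, and which is proved using the graded modular structure. This is the step I expect to be the main obstacle: it is the one place where a genuinely graded input is unavoidable rather than a transcription of an ungraded argument, and it is what makes the subsequent reduction legitimate.

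Once $U(r_{2\pi}) = \Gamma$ is in hand, items 1 and 4 follow by reduction to the ordinary (Bosonic) case. I would introduce the Klein transformation $Z$ built from $\Gamma$ (the operator-algebraic analogue of the Klein transform $\kappa = \tfrac{1 - i\Gamma}{1 - i}$ used earlier for vertex operators) and verify that the twisted algebras $Z\cA(I)Z^*$ form an \emph{ordinary} local conformal net on $\cH$, covariant for the induced representation --- here the relation $V = \Gamma$ is precisely what guarantees that covariance descends correctly and that $[\,\cdot\,,\,\cdot\,]_\pm$ becomes the ordinary commutator. Graded Haag duality (item 1) then becomes ordinary Haag duality for the twisted net, which holds via the Bisognano--Wichmann theorem: the modular flow of $(\cA(I), \Omega)$ coincides with the one-parameter group of Möbius transformations preserving $I$, from which duality follows by standard arguments. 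Finally, the type III factor property (item 4) follows from the same geometric modular flow, whose dilation character forces the Connes spectrum to be all of $\bbR_{>0}$, ruling out types I and II and yielding the type III$_1$ factor property, which is then transported back through $Z$. Throughout I would rely on \cite{CaKaLo08} for the precise graded formulations of Bisognano--Wichmann and of the Klein transformation.
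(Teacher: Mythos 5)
The paper does not actually prove this theorem: it is quoted from \cite{CaKaLo08}, so the comparison has to be with the arguments in that reference. Your overall architecture --- Reeh--Schlieder via the standard analyticity/positivity-of-$L_0$ argument, the vacuum spin--statistics identification $U(r_{2\pi})=\Gamma$ as the one genuinely graded input, and Haag duality plus the type $\mathrm{III}$ property extracted from the geometric action of the modular data --- is the route taken there, and your instinct that item 3 is the step that cannot be transcribed from the ungraded theory is correct.

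There is, however, one concrete misstep in your reduction for item 1. You assert that the Klein-twisted algebras $Z\cA(I)Z^*$ form an \emph{ordinary local} conformal net, so that graded duality becomes ordinary Haag duality for a local net. This is false. If $x\in\cA(I)$ and $y\in\cA(J)$ are odd with $I\cap J=\emptyset$, conjugation by $Z=\tfrac{1-i\Gamma}{1-i}$ gives $ZxZ^*=-i\Gamma x$, and one checks $[ZxZ^*,y]=-i\Gamma\,[x,y]_\pm=0$; this is \emph{twisted locality}, $Z\cA(I)Z^*\subset\cA(J)'$, relating the twisted net to the original one. But $[ZxZ^*,ZyZ^*]=Z(xy-yx)Z^*=2ZxyZ^*$, which does not vanish for anticommuting odd elements, so the twisted net is not local with respect to itself and there is no ordinary local net whose Haag duality you can import. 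The statement one actually proves is \emph{twisted duality}, $Z\cA(I')Z^*=\cA(I)'$ (equivalent to the graded-commutant formulation in item 1), and in \cite{CaKaLo08} this is obtained directly from the graded Bisognano--Wichmann theorem --- the modular group of $(\cA(I),\Omega)$ is geometric and the modular conjugation implements the reflection up to the twist --- or alternatively by first establishing duality for the genuinely local Bose subnet $\cA_b$ on $\cH^0$ and then controlling the odd part. With that substitution your argument for item 1 goes through, and item 4 is unaffected, since factoriality and the type $\mathrm{III}_1$ property use only the cyclic separating vacuum and the scaling behaviour of the geometric modular flow.
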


A family of von Neumann subalgebras $\cB(I) \subset \cA(I)$ is called a \emph{covariant subnet} if $\cB(I) \subset \cB(J)$ when $I \subset J$ and $U(\gamma)\cB(I)U(\gamma)^* = \cB(\gamma(I))$ for all $\gamma \in \Mob^{(2)}(\bbD)$.

The following theorem is proven by combining \cite[Thm. 6.2.29]{Weiner05} (for existence) and \cite[Thm 6.10]{CKLW18} (for uniqueness) in the case of local conformal nets.
It is easily adapted to the case of Fermi conformal nets by observing that every $\cB(I)$ is $\bbZ/2\bbZ$-graded by $\Gamma = U(r_{2\pi})$.
\begin{Theorem}\label{thmSubnetsAreNets}
Let $\cB$ be a covariant subnet of a Fermi conformal net $\cA$.
Then there is a unique strongly continuous projective unitary representation of $\Diff_c^{(2)}(S^1)$ making $\cB$ into a Fermi conformal net on $\cH_B :=\overline{\cB(S^1)\Omega}$.
\end{Theorem}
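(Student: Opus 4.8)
The plan is to restrict the ambient symmetry of $\cA$ to the invariant subspace $\cH_B$, recognize the result as a M\"{o}bius covariant Fermi net, and then transport the known existence and uniqueness statements for local conformal nets to the graded setting. First I would check that $U$ restricted to $\Mob^{(2)}(\bbD)$ leaves $\cH_B = \overline{\cB(S^1)\Omega}$ invariant: for $\gamma \in \Mob^{(2)}(\bbD)$ we have $U(\gamma)\Omega = \Omega$ and $U(\gamma)\cB(I)U(\gamma)^* = \cB(\gamma I)$ by covariance of the subnet, so $U(\gamma)\cB(I)\Omega = \cB(\gamma I)\Omega \subseteq \cB(S^1)\Omega$. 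Thus $U^\cB(\gamma) := U(\gamma)|_{\cH_B}$ is a well-defined honest unitary representation of $\Mob^{(2)}(\bbD)$, and $(\cB, U^\cB, \Omega)$ satisfies the axioms of a M\"{o}bius covariant Fermi net: isotony and graded locality are inherited from $\cA$; the grading involution $\Gamma = U(r_{2\pi})$ (Theorem \ref{thmFermiNetProps}) preserves $\cH_B$ and grades each $\cB(I)$; $\Omega$ is by construction cyclic for $\bigvee_I \cB(I)$ and is the unique M\"{o}bius-invariant vector (up to a phase), since it is so in $\cH$; and positivity of energy holds because the generator of $U^\cB(r_\theta)$ is the restriction of the positive operator $L_0$ to the invariant subspace $\cH_B$.

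Next I would produce the full diffeomorphism symmetry. For a \emph{local} conformal net this is precisely the content of \cite[Thm.\ 6.2.29]{Weiner05}, which constructs a strongly continuous projective representation of $\Diff_c^{(2)}(S^1)$ on $\cH_B$ extending $U^\cB|_{\Mob^{(2)}(\bbD)}$ and implementing the geometric action on $\cB$. The point is that this construction depends only on the geometric covariance and the von Neumann algebraic structure of the inclusion $\cB \subset \cA$, and not on the absence of odd operators. To run it in the Fermi setting I would replace Haag duality by graded Haag duality and ordinary commutants by graded commutants (Theorem \ref{thmFermiNetProps}) throughout, noting that the grading is implemented by the \emph{central} element $\Gamma = U(r_{2\pi})$ of the ambient representation and is therefore automatically compatible with each step. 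The resulting representation $U^\cB$ satisfies $U^\cB(r_{2\pi}) = \Gamma|_{\cH_B}$, so it is a genuine representation of the double cover and makes $\cB$ a Fermi conformal net on $\cH_B$.

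For uniqueness, suppose $U_1$ and $U_2$ are two such representations. They restrict to the same representation on $\Mob^{(2)}(\bbD)$, namely the unique vacuum-preserving M\"{o}bius symmetry of $\cB$. For a diffeomorphism $\gamma \in \Diff_c(I)$ localized in an interval, covariance forces $U_j(\gamma)$ to commute with $\cB(I^\prime)$, placing it in $\cB(I)$ by graded Haag duality (Theorem \ref{thmFermiNetProps}). Comparing the automorphisms implemented by $U_1(\gamma)$ and $U_2(\gamma)$ and using that $\cB(I)$ is a factor (Theorem \ref{thmFermiNetProps}), one concludes that $U_1(\gamma)U_2(\gamma)^*$ is a scalar, which is then pinned to $1$; this is the argument of \cite[Thm.\ 6.10]{CKLW18} carried out with graded commutants. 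Since $\Diff_c^{(2)}(S^1)$ is generated by $\Mob^{(2)}(\bbD)$ together with the diffeomorphisms localized in intervals, this gives $U_1 = U_2$.

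The only genuinely hard input is Weiner's existence construction, which I would treat as a black box; the substance of the present proof is checking that neither that construction nor the uniqueness argument is sensitive to the $\Z/2\Z$-grading. Since every $\cB(I)$ is graded by the central operator $\Gamma = U(r_{2\pi})$, and the graded analogues of Haag duality and factoriality are supplied by Theorem \ref{thmFermiNetProps}, I expect this verification to be routine rather than to require new ideas; the main point to watch is the systematic replacement of commutants by graded commutants.
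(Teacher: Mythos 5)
Your proposal is correct and follows essentially the same route as the paper, which likewise obtains existence from \cite[Thm.\ 6.2.29]{Weiner05}, uniqueness from \cite[Thm.\ 6.10]{CKLW18}, and adapts both to the Fermi setting via the observation that each $\cB(I)$ is graded by the central element $\Gamma = U(r_{2\pi})$. Your write-up simply spells out in more detail the verification that the paper leaves implicit.
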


If $\cB$ is a covariant subnet of a Fermi conformal net, then the usual argument (given in e.g. \cite[Lem. 2]{KawahigashiLongo04}, using the Bisognano-Wichmann property \cite[Thm. 2]{CaKaLo08}) shows that for $x \in \cA(I)$, we have $x \in \cB(I)$ if and only if $x\Omega \in \cH_B$.
In particular, we have:
\begin{Proposition}\label{propSubnetTrivial}
Let $\cA$ be a Fermi conformal net on $\cH$, and let $\cB \subset \cA$ be a covariant subnet.
Then $\cB = \cA$ if and only if $\cH_B = \cH$.
\end{Proposition}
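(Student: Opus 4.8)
The plan is to settle the easy implication directly from cyclicity of the vacuum, and to obtain the substantive implication by feeding the hypothesis $\cH_B = \cH$ into the characterization of $\cB(I)$ recorded immediately before the statement.

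First I would handle the direction $\cB = \cA \Rightarrow \cH_B = \cH$. If the subnet equals the ambient net, then in particular $\cB(S^1) = \cA(S^1)$, so $\cH_B = \overline{\cB(S^1)\Omega} = \overline{\cA(S^1)\Omega}$, and the latter is all of $\cH$ because $\Omega$ is cyclic for $\cA(S^1)$ by the vacuum axiom in the definition of a Fermi conformal net. This requires no work beyond unwinding definitions.

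For the reverse direction, I would invoke the fact quoted just above the statement: for $x \in \cA(I)$ one has $x \in \cB(I)$ if and only if $x\Omega \in \cH_B$ (this is where the Bisognano--Wichmann property for Fermi nets and the standard Kawahigashi--Longo subnet argument enter). Assuming $\cH_B = \cH$, every $x \in \cA(I)$ then satisfies $x\Omega \in \cH_B$ automatically, hence $x \in \cB(I)$; this gives $\cA(I) \subseteq \cB(I)$, and combined with the inclusion $\cB(I) \subseteq \cA(I)$ built into the definition of a covariant subnet, yields $\cB(I) = \cA(I)$ for every $I \in \cI$, i.e. $\cB = \cA$.

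The entire content is thus concentrated in the cited criterion $x \in \cB(I) \iff x\Omega \in \cH_B$; since that criterion is supplied before the statement, I expect no genuine obstacle, and the proposition reduces to a one-line application once the criterion is in place. The only point warranting care is to ensure both inclusions $\cA(I) \subseteq \cB(I)$ and $\cB(I) \subseteq \cA(I)$ are used, so that the conclusion is equality of the local algebras rather than merely a one-sided containment.
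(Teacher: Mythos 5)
Your proof is correct and is exactly the argument the paper intends: the proposition is presented as an immediate consequence ("In particular, we have:") of the criterion $x \in \cB(I) \iff x\Omega \in \cH_B$ stated just before it, which is precisely how you derive the nontrivial direction, with the easy direction following from cyclicity of the vacuum.
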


We will make frequent use of the graded tensor product $\cA_1 \grotimes \cA_2$ of a pair of Fermi conformal nets $(\cA_1, U_1)$ and $(\cA_2, U_2)$ (see \cite[\S2.6]{CaKaLo08}).
If $\cH_1$ and $\cH_2$ are super Hilbert spaces, then $\cH_1 \otimes \cH_2$ is naturally a super Hilbert space with grading $\Gamma \otimes \Gamma$.
For $x_i \in \cB(H_i)$, define $x_1 \grotimes x_2 = x_1 \Gamma^{p(x_2)} \otimes x_2 \in \cB(\cH_1 \otimes \cH_2)$ for homogeneous $x_2$, and by extending linearly otherwise.
Define $(\cA_1 \grotimes \cA_2)(I) = \{x_1 \grotimes x_2 : x_i \in \cA_i(I)\}^{\prime\prime}$, where the double commutant $S^{\prime \prime}$ is the von Neumann algebra generated by a self-adjoint set $S$.
This construction produces a Fermi conformal net $(\cA_1 \grotimes \cA_2, U_1 \otimes U_2)$ \cite[\S 2.6]{CaKaLo08}.

%\begin{Example}
%\nn{Haven't done fermion yet. Maybe do that with VOA first.}
%Let $H = L^2(S^1)$, and let $p \in \cB(H)$ be the projection onto the classical Hardy space $H^2(\D)$.
%Let $\cF := \cF_{H, p}$, and let $U_{NS}$ be the Neveu-Schwarz representation of $\Diff_+^{NS}(S^1)$ on $\cF$ (see Section \ref{subsecFermionicFockSpace}).
%Then the assignment $\cM(I) = \{a(f), a(f)^* : f \in L^2(S^1), \,\, \supp f \subset I \}^{\prime \prime}$ gives a Fermi conformal net, which we call the free fermion conformal net.
%Verification of the axioms of a Fermi conformal net is straightforward, although we point out that the cyclicity of the vacuum is contained in \cite[\S 15]{Wa98}, as are direct proofs of many of the properties of the free fermion conformal net.
%\end{Example}
%}

\begin{Definition}
A \emph{representation} of a Fermi conformal net $\cA$ is a super Hilbert space $\cH_\pi$ and a family of representations (i.e. normal, even, $*$-homomorphisms) $\pi_I:\cA(I) \to \cB(\cH_\pi)$, indexed by $I \in \cI$, which satisfy $\pi_{I}|_J = \pi_J$ when $J \subset I$.
\end{Definition}
We point out that when $\cH_{\pi}$ is separable, as it always will be in this article, the normality of $\pi_I$ is automatic.

There are obvious notions of subrepresentations, irreducible representations, and direct sums of representations.
An isomorphism of representations is an even unitary which intertwines the actions of all local algebras.
The vacuum representation of $\cA$ is given by the Hilbert space $\cH_0 := \cH$ and the defining actions of the algebras $\cA(I)$.
Given representations $\pi_i$ of $\cA_i$, there is a representation $\pi_1 \otimes \pi_2$ of $\cA_1 \otimes \cA_2$ given by $(\pi_1 \otimes \pi_2)_I(x_1 \hotimes x_2) = \pi_1(x_1) \hotimes \pi_2(x_2)$ (see \cite[\S2.6]{CaKaLo08}).

\begin{Definition}
If $\pi$ and $\lambda$ are representations of $\cA$, and $I \in \cI$, then local intertwiners $\Hom_{\cA(I)}(\cH_\pi, \cH_\lambda)$ is given by:
\begin{equation*}\label{eqn: local intertwiner def}
\Hom_{\cA(I)}(\cH_\pi,\cH_\lambda) := \{ x \in \cB(\cH_\pi,\cH_\lambda) \, : \, x\pi_I(y) = (-1)^{p(x)p(y)} \lambda_I(y)x  \mbox{ for all } y \in \cA(I)\},
\end{equation*}
and we define $\End_{\cA(I)}(\cH_\pi)$ similarly.
As usual, equations involving $p(x)$ and $p(y)$ should be interpreted by extending linearly for non-homogeneous elements.
\end{Definition}
In this notation, Haag duality says $\End_{\cA(I)}(\cH_0) = \cA(I^\prime)$.

By Haag duality, $U(\gamma) \in \cA(I)$ when $\gamma \in \Diff_c(I)$, and so given a representation of $\cA$ we obtain strongly continuous representations $\pi_I \circ U$ of every $\Diff_c(I)$.
By \cite[Thm. 11]{HenriquesColimits}, there is a unique strongly continuous representation $U^\pi$ of $\Diff_c^{(\infty)}(S^1)$ such that $U^\pi|_{\Diff_c(I)} = \pi_I \circ U$ (this was originally proven for irreducible representations in \cite{DFK04}).
Since $\Diff_c^{(\infty)}(S^1)$ is generated by the $\Diff_c(I)$ \cite[Lem. 17(ii)]{HenriquesColimits}, we may argue as in \cite[Lem. 3.1]{KawahigashiLongo04} or \cite[Prop. 12]{CaKaLo08}) to obtain
$$
 U^\pi(\gamma) \pi_I(x) U^\pi(\gamma)^* = \pi_{\gamma(I)}(U(\gamma)xU(\gamma)^*)
$$
for all $\gamma \in \Diff_c^{(\infty)}(S^1)$.
By \cite[Thm. 5.4]{MorinelliTanimotoWeiner18}, the net $\cA$ has the split property, and therefore  by \cite[Prop. 56]{KaLoMu01} if $\cH_\pi$ is separable the sector admits a direct integral decomposition 
$$
\pi = \int^\oplus \pi_x dx
$$
where $\pi_x$ is a representation of $\cA$ for almost every $x$.
It follows that 
$$
U^\pi(\gamma) = \int^{\oplus} U_x(\gamma) dx
$$
where for almost every $x$, $U_x$ is the representation obtained by piecing together the representation $\pi_{x,I} \circ U$ of $\Diff_c(I)$.
By \cite[Thm. 3.8]{Weiner06}, the generator of $U_x(r_\theta)$ is positive (after making $U_x|_{\Mob^{(\infty)}}$ into an honest representation in the canonical way).
Thus combining the results from the literature we have:

\begin{Theorem}\label{thm: reps are diff covariant}
Let $\pi$ be a representation of a Fermi conformal net $(\cA, U)$ on a separable Hilbert space.
Then there is a unique, strongly continuous, positive energy representation $U_\pi$ of $\Diff_c^{(\infty)}(S^1)$ which satisfies
$U_\pi|_{\Diff_c(I)} = \pi_I \circ U$ for all intervals $I$ and
$$
 U_\pi(\gamma) \pi_I(x) U_\pi(\gamma)^* = \pi_{\gamma(I)}(U(\gamma)xU(\gamma)^*)
$$
for all $\gamma \in \Diff_c^{(\infty)}(S^1)$.
\end{Theorem}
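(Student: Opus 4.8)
The plan is to manufacture $U_\pi$ by gluing together the projective representations that $\pi$ induces on the local diffeomorphism groups $\Diff_c(I)$, and only afterward to check covariance and positivity of energy as separate steps. First I would note that by Haag duality (Theorem~\ref{thmFermiNetProps}) we have $U(\gamma) \in \cA(I)$ whenever $\gamma \in \Diff_c(I)$, so that $\gamma \mapsto \pi_I(U(\gamma))$ is a strongly continuous projective unitary representation of $\Diff_c(I)$ on $\cH_\pi$. These are compatible under inclusion of intervals because $\pi_I|_{\cA(J)} = \pi_J$ for $J \subset I$. The key structural input is that $\Diffcinfty$ is not merely generated by the subgroups $\Diff_c(I)$ but is in fact their colimit; one can therefore hope to assemble the local representations into a single strongly continuous representation $U_\pi$ of $\Diffcinfty$ with $U_\pi|_{\Diff_c(I)} = \pi_I \circ U$. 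I would invoke the gluing theorem for such colimits to produce $U_\pi$, and its uniqueness is then automatic, since any two such representations agree on a generating family of subgroups.

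For covariance, the generation property reduces the claim to the case where $\gamma$ lies in a single local subgroup $\Diff_c(J)$. Enlarging to an interval $K$ containing $J$, $I$, and $\gamma(I)$, we have $U_\pi(\gamma) = \pi_K(U(\gamma))$, so for $x \in \cA(I) \subset \cA(K)$ the conjugation $U_\pi(\gamma)\pi_K(x)U_\pi(\gamma)^*$ equals $\pi_K(U(\gamma)xU(\gamma)^*)$ simply because $\pi_K$ is a homomorphism and $U(\gamma) \in \cA(K)$. Since $U(\gamma)xU(\gamma)^* \in \cA(\gamma(I))$ and $\pi_K$ restricts to $\pi_I$ and $\pi_{\gamma(I)}$ on the appropriate subalgebras, this is the desired identity. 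The general $\gamma$ then follows by writing it as a product of local diffeomorphisms, as in the arguments of Kawahigashi--Longo and Carpi--Kawahigashi--Longo.

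The hard part will be positivity of energy, namely that the generator of the one-parameter group $U_\pi(r_\theta)$ is positive. This is invisible at the level of the local gluing, and in a reducible representation it cannot be reduced directly to the vacuum positivity axiom. My plan is to exploit the split property of $\cA$ to write $\pi$ as a direct integral $\pi = \int^\oplus \pi_x\,dx$ whose fibers are almost everywhere representations of $\cA$, which induces a compatible decomposition $U_\pi = \int^\oplus U_x\,dx$ of the glued representation. Since positivity of the rotation generator is a fiberwise condition, it then suffices to establish it for each $U_x$, where one is reduced to an essentially irreducible sector and can appeal to the positivity-of-energy theorem for such sectors (after normalizing the restriction to $\Mob$ to an honest representation).

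The main subtleties are thus analytic bookkeeping rather than a single decisive estimate: one must ensure that the fibers $\pi_x$ really are representations of the net, that the assembled generators $U_x$ depend measurably on $x$ so that the direct integral makes sense, and that the $\Mob$-normalization is carried out compatibly across the decomposition. This is precisely where the separability hypothesis on $\cH_\pi$ is used, both to guarantee a direct integral decomposition and to rule out an uncountable orthogonal splitting. With these points in place, combining the colimit construction, the covariance argument, and the fiberwise positivity yields the stated representation $U_\pi$.
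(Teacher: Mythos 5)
Your proposal follows essentially the same route as the paper: Haag duality to obtain the local representations $\pi_I\circ U$, the colimit/gluing theorem of Henriques (originally D'Antoni--Fredenhagen--K\"oster for irreducible sectors) to assemble $U_\pi$ with uniqueness from generation by the $\Diff_c(I)$, the standard Kawahigashi--Longo/Carpi--Kawahigashi--Longo argument for covariance, and the split property plus a direct integral decomposition to reduce positivity of energy to the fiberwise statement, which is Weiner's theorem. The paper assembles exactly these ingredients from the literature, so your outline matches its proof.
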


\newpage

\section{Systems of generalized annuli} \label{sec: generalized annuli}

The idea of \cite{GRACFT1} is to generate conformal nets from vertex operator algebras via local insertions into degenerate annuli; an overview of the approach is given in Section \ref{sec: BLVO intro}.
The precise geometry required of the degenerate annulus was influenced by technical considerations which are unimportant to the overall idea, but necessary at this stage to give rigorous proofs.
In this article we will abstract away the geometric considerations in order to simplify the statements of theorems and provide flexibility for future work.
The technical Definitions \ref{def: system of incoming annuli} to \ref{def: system of generalized annuli} below provide this abstraction.
Roughly speaking, these definitions encode the idea of a certain subset of the `semigroup of degenerate annuli,' a proposed generalization of the Segal-Neretin semigroup of annuli, which we now expand on.

The motivation starts with the semigroup of annuli \cite{SegalDef,Neretin90}, which consists of isomorphism classes of Riemann surfaces homeomorphic to an annulus $S^1 \times [0,1]$ whose boundary components have been equipped with parametrizations by the unit circle $S^1$.
The semigroup operation is that of sewing boundary components along the parametrization.

Neretin showed that every irreducible positive energy representation $L(c,h)$ of $\Vir$ yields a projective representation of the semigroup of annuli.
Every annular Riemann surface is isomorphic to some $A_r := \{r \le \abs{z} \le 1\}$. 
When $A_r$ has the standard boundary parametrizations, $A_r$ acts by $r^{L_0}$ in Neretin's representation.
Changing boundary parametrizations by a diffeomorphism $\gamma \in \Diff(S^1)$ corresponds to composition with $U_{c,h}(\gamma)$.

The operators on $\cH_{c,0}$ arising from Neretin's representation are not local in the sense of conformal nets (that is, they do not lie in local algebras $\cA_c(I)$).
To obtain such operators, we pass to `degenerate' (or `partially thin') annuli, where the incoming boundary is allowed to overlap with the outgoing boundary.
$$
\begin{tikzpicture}[scale=.9,baseline={([yshift=-.5ex]current bounding box.center)}]
	\coordinate (a) at (120:1cm);
	\coordinate (b) at (240:1cm);
	\coordinate (c) at (180:.25cm);
% BIG DISK
	\fill[fill=red!10!blue!20!gray!30!white] (0,0) circle (1cm);
	\draw (0,0) circle (1cm);
% CURVED BOUNDARY REGION
	\fill[fill=white] (a)  .. controls ++(210:.6cm) and ++(90:.4cm) .. (c) .. controls ++(270:.4cm) and ++(150:.6cm) .. (b) -- ([shift=(240:1cm)]0,0) arc (240:480:1cm);
	\draw ([shift=(240:1cm)]0,0) arc (240:480:1cm);
	\draw (a) .. controls ++(210:.6cm) and ++(90:.4cm) .. (c);
	\draw (b) .. controls ++(150:.6cm) and ++(270:.4cm) .. (c);
% INNER CIRCLE
%	\filldraw[fill=white] (180:.65cm) circle (.25cm);
%	\node at (180:.65cm) {v};
% INTERVAL LABEL
	\draw (130:1.2cm) -- (130:1.4cm);
	\draw (230:1.2cm) -- (230:1.4cm);
	\draw (130:1.3cm) arc (130:230:1.3cm);
	\node at (180:1.5cm) {\scriptsize{$I$}};
\end{tikzpicture}
$$
The most extreme example of such an annulus is a totally thin one, where the incoming and outgoing boundary coincide; such an annulus corresponds to a diffeomorphism.

Neretin's representations should extend to this larger semigroup, which should produce families of operators acting on each $\cH_{c,0}$ corresponding to the image of this representation.
We can restrict this representation to degenerate annuli whose support (the `thick part') lies in an interval $I$ and such that the boundary parametrizations agree on $I^\prime$, and we should obtain a family of subsemigroups of the local algebras $\cA_c(I)$.

While we will not give a formal definition of the semigroup of degenerate annuli or show that such a semigroup would act on $\cH_{c,0}$, we will attempt to extract certain properties that this representation would have.
These properties comprise the definition of a `system of generalized annuli' in Section \ref{sec: def of system of generalized annuli}.
In Section \ref{sec: example of system of generalized annuli}, we will construct an example of a system of generalized annuli which arises from the geometric considerations of the semigroup of degenerate annuli explored above.
In future work, we hope to rigorously construct representations of the semigroup of degenerate annuli, and these would also furnish examples of systems of generalized annuli.

\subsection{Virasoro nets}\label{sec: virasoro nets}

Before arriving at the definition of a system of generalized annuli, which will be certain operators acting on the Hilbert space completions $\cH_{c,0}$ of vacuum Virasoro representations, we need to introduce the Virasoro conformal nets.
We quickly summarize basic facts about the Virasoro conformal nets.
A more detailed treatment may be found in \cite[\S2.4]{Carpi04} and references therein.

For every $c \in \{ 1 - \frac{6}{(m+2)(m+3)} : m \ge 1 \} \cup [1,\infty)$, there is a unitary, irreducible representation of the Virasoro algebra $L(c,0)$, called the vacuum representation with central charge $c$.
For each allowable value of the central charge $c$, unitary irreducible highest weight representations of $\Vir_c$ are parametrized by a non-negative real number $h$, and we denote the corresponding representation $L(c,h)$.\footnote{%
$\Vir_c$ is a Lie algebra equipped with a distinguished central copy of $\bbR$, and by a representation of $\Vir_c$ we mean a representation of the Lie algebra in which the distinguished copy of $\bbR$ acts standardly.
This is the same as a central charge $c$ representation of $\Vir$.}
When $c \ge 1$, there is an irreducible representation for every $h \ge 0$, but for the discrete series there are only finitely many irreducible representations for each $c$.

By the work of Goodman and Wallach \cite{GoWa85},  the Hilbert space completion $\cH_{c,h}$ of $L(c,h)$ carries an irreducible representation $U_{c,h}$ of $\Diff_c^{(\infty)}(S^1)$ which integrates the representation of the Virasoro algebra.
As described in \cite{Carpi04}, the assignment $\cA_c(I) = \{ U_{c,0}(\gamma) : \gamma \in \Diff_c(I)\}$ yields a conformal net on $\cH_{c,0}$.
By \cite[Prop. 2.1]{Carpi04} (along with Theorem \ref{thm: reps are diff covariant} to relax the hypotheses), every irreducible representation of $\cA_c$ is isomorphic to one on some $\cH_{c,h}$ characterized by $\pi_{c,h,I}(U_{c,0}(\gamma)) = U_{c,h}(\gamma)$ for all $\gamma \in \Diff_c(I)$.
Moreover, by \cite[Thm. 5.6]{Weiner17}, for every $L(c,h)$ the corresponding representation $\pi_{c,h}$ exists.
Thus for a unitary representation $M$ of $\Vir_c$ which decomposes as a direct sum $M=\bigoplus L(c,h_i)$, we have a corresponding representation of the diffeomorphism group given by $\bigoplus U_{c,h_i}$, and of the Virasoro net given by $\bigoplus \pi_{c,h_i}$.
In particular, we will apply this when $M$ is a unitary module of a vertex operator superalgebra.
Conversely, if $\pi$ is a representation of $\cA_c$, then by Theorem \ref{thm: reps are diff covariant} there is a corresponding positive energy representation $U^\pi$ of $\Diff_c^{(\infty)}(S^1)$.

\subsection{Generalized annuli}

As stated in the introduction to Section \ref{sec: generalized annuli}, our goal is to give an abstract presentation of what a representation of the semigroup of degenerate annuli might look like.
Operators representing degenerate annuli whose `thick part' lies inside an interval $I$, and such that the boundary parametrizations agree on $I^\prime$, should lie in $\cA_c(I)$.
More generally, every genuinely degenerate annulus should be localizable in some $\cA_c(I)$ by reparametrizing the boundary.
We give the definition here of a `generalized annulus', which is an operator on $\cH_{c,0}$ which may be localized in this way.
Such operators act in non-vacuum representations of $\cA_c$, and we will outline that process in this section.

\begin{Definition}\label{def: localizable operators}
Let $U=U_{c,0}$ be the vacuum representation of $\Diffc$ on $\cH_{c,0}$, and let $\cA_c(I) \subset \cB(\cH_{c,0})$ be the local algebra of the Virasoro net.
An operator $A \in \cB(\cH_{c,0})$ is \emph{left localizable in $I$} if there exists $\gamma \in \Diffc$ such that $U(\gamma)A \in \cA_c(I)$.
Similarly, $B \in \cB(\cH_{c,0})$ is \emph{right localizable in $I$} if there exists $\gamma \in \Diffc$ such that $AU(\gamma) \in \cA_c(I)$.
We write $\Ann^{\ell}_I(\cH_{c,0})$ for the class of operators which are left localizable in $I$ and which have dense image.
We write $\Ann^{r}_I(\cH_{c,0})$ for the class of operators which are right localizable in $I$ and injective.
We abbreviate these to $\Ann^{\ell/r}_I$ when $c$ is understood from context.
\end{Definition}

Left localizable operators are an abstract characterization of the action of a degenerate annulus contained in the unit disk whose outgoing boundary is the unit circle, parametrized by the identity.
Similarly, right localizable operators model degenerate annuli contained in the complement of the unit disk, whose incoming boundary is the unit circle parametrized by the identity.
Observe that $A \in \Ann^\ell_I$ if and only if $A^* \in \Ann^r_I$; we think of the adjoint as reflecting the degenerate annulus through the unit circle.
We will refer to operators which are right or left localizable as generalized annuli.

$$
\begin{tikzpicture}[scale=.8,baseline={([yshift=-.5ex]current bounding box.center)}]
	\coordinate (a) at (120:1cm);
	\coordinate (b) at (240:1cm);
	\coordinate (c) at (180:.25cm);
% BIG DISK
	\fill[fill=red!10!blue!20!gray!30!white] (0,0) circle (1cm);
	\draw (0,0) circle (1cm);
% CURVED BOUNDARY REGION
	\fill[fill=white] (a)  .. controls ++(210:.6cm) and ++(90:.4cm) .. (c) .. controls ++(270:.4cm) and ++(150:.6cm) .. (b) -- ([shift=(240:1cm)]0,0) arc (240:480:1cm);
	\draw ([shift=(240:1cm)]0,0) arc (240:480:1cm);
	\draw (a) .. controls ++(210:.6cm) and ++(90:.4cm) .. (c);
	\draw (b) .. controls ++(150:.6cm) and ++(270:.4cm) .. (c);
% INNER DISK
%	\filldraw[fill=white] (180:.65cm) circle (.25cm); 
% COORDINATE LABELS
%	\node at (a) {(a)};
%	\node at (b) {(b)};
%	\node at (c) {(c)};
\end{tikzpicture}
\,\, \sim A,
\qquad\qquad\qquad
\begin{tikzpicture}[scale=.8,baseline={([yshift=-.5ex]current bounding box.center)}]
	\coordinate (a) at (120:1cm);
	\coordinate (b) at (240:1cm);
	\coordinate (c) at (180:2cm);
% BIG DISK
	\draw (0,0) circle (1cm);
% CURVED BOUNDARY REGION
	\fill[red!10!blue!20!gray!30!white] (a)  .. controls ++(220:.6cm) and ++(90:.4cm) .. (c) .. controls ++(270:.4cm) and ++(140:.6cm) .. (b) arc (240:480:1cm);
	\filldraw[fill=white] (0,0) circle (1cm);
	\draw ([shift=(240:1cm)]0,0) arc (240:480:1cm);
	\draw (a) .. controls ++(220:.6cm) and ++(90:.4cm) .. (c);
	\draw (b) .. controls ++(140:.6cm) and ++(270:.4cm) .. (c);
% INNER DISK
%	\filldraw[fill=white] (180:.65cm) circle (.25cm); 
% COORDINATE LABELS
%	\node at (a) {(a)};
%	\node at (b) {(b)};
%	\node at (c) {(c)};
\end{tikzpicture}
\,\, \sim A^*
$$

Now let $\pi$ be a representation of $\cA_c$ on a Hilbert space $\cH_\pi$, and let $U^\pi$ be the corresponding representation of $\Diff_c^{(\infty)}(S^1)$ (see Theorem \ref{thm: reps are diff covariant}).
Given $A \in \Ann^{\ell}_I$ and $\gamma \in \Diffc$ such that $U(\gamma)A \in \cA_c(I)$, we set
$$
\pi_{I,\tilde \gamma}(A) := U^\pi(\tilde \gamma)^* \pi_I(U(\gamma)A),
$$
where $\tilde \gamma$ is a lift of $\gamma$ to $\Diff_c^{(\infty)}(S^1)$.
Note that changing the lift $\tilde \gamma$ can change $\pi_{I,\tilde \gamma}(A)$ by a unitary which commutes with $\cA_c$ on $\cH_\pi$, but a priori $\pi_{I,\tilde \gamma}$ could also depend on the choice of $\gamma$.
We will show that this is not the case via two easy lemmas.

\begin{Lemma}\label{lem: local diff is local}
Let $\gamma \in \Diffc$, let $I$ be an interval, and suppose that $U(\gamma) \in \cA_c(I)$.
Then $\gamma \in \Diff_c(I)$.
\end{Lemma}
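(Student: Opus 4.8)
The plan is to reduce the operator statement to a statement about the underlying diffeomorphism $g \in \Diff(S^1)$ of $\gamma$, and to show that $g$ must act as the identity on $I^\prime$. Since $\Diff_c(I)$ is the central extension of $\Diff(I)$ sitting inside $\Diffc$, the condition $\gamma \in \Diff_c(I)$ is equivalent to $g \in \Diff(I)$, i.e.\ to $g|_{I^\prime} = \id$, so this suffices.

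First I would use locality. Because $U(\gamma) \in \cA_c(I)$ and $I \cap I^\prime = \emptyset$, the operator $U(\gamma)$ commutes with $\cA_c(I^\prime) = \{U(\delta) : \delta \in \Diff_c(I^\prime)\}^{\prime\prime}$; in particular $U(\gamma)U(\delta) = U(\delta)U(\gamma)$ for every $\delta \in \Diff_c(I^\prime)$. Next I would invoke covariance: since $U = U_{c,0}$ is an honest representation of the central extension $\Diffc$, one has $U(\gamma)U(\delta)U(\gamma)^* = U(\gamma\delta\gamma^{-1})$, so commutativity rewrites as $U(\gamma\delta\gamma^{-1}) = U(\delta)$. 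Here I use that $U$ is faithful: its kernel meets the central $U(1)$ trivially (the center acts by scalars, hence faithfully), while modulo the center the kernel would be a proper normal subgroup of the simple group $\Diff_+(S^1)$ and hence trivial. Therefore $\gamma\delta\gamma^{-1} = \delta$ in $\Diffc$, and projecting to $\Diff(S^1)$ shows that $g$ commutes with every diffeomorphism supported in $I^\prime$.

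The final, elementary step is a support argument. Given any closed subinterval $K = [a,b] \subset I^\prime$, choose $d$ supported exactly on $K$. From $g d g^{-1} = d$ together with $\supp(g d g^{-1}) = g(\supp d) = g(K)$ we obtain $g(K) = K$; as $g$ is an orientation-preserving homeomorphism of $S^1$ it must fix the two endpoints, so $g(a) = a$ and $g(b) = b$. Since every point of the open interval $I^\prime$ arises as such an endpoint, this forces $g|_{I^\prime} = \id$, i.e.\ $g \in \Diff(I)$ and hence $\gamma \in \Diff_c(I)$.

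I expect the only delicate point to be the passage from the operator identity $U(\gamma\delta\gamma^{-1}) = U(\delta)$ back to the group identity $\gamma\delta\gamma^{-1} = \delta$, i.e.\ the faithfulness of $U_{c,0}$ (which rests on simplicity of $\Diff_+(S^1)$). If one prefers to avoid this, the same conclusion can be reached purely at the level of algebras: commuting with $\cA_c(I^\prime)$ forces $U(\gamma)\cA_c(J)U(\gamma)^* = \cA_c(J)$ for every $J \subseteq I^\prime$, while covariance identifies the left-hand side with $\cA_c(g(J))$; injectivity of the assignment $J \mapsto \cA_c(J)$ then yields $g(J) = J$ for all such $J$, and the identical endpoint argument gives $g|_{I^\prime} = \id$. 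Either route trades one standard structural fact about the Virasoro net for another, so the lemma is indeed short once the correct fact is cited.
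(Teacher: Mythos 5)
Your proposal is correct, but your primary route is genuinely different from the paper's. The paper argues by contradiction entirely at the level of von Neumann algebras: if the underlying diffeomorphism moved some $z \in I^\prime$, one could choose a small interval $J \subset I^\prime$ with $\gamma(J) \cap J = \emptyset$; then covariance places $U(\gamma) x U(\gamma)^*$ in $\cA_c(\gamma(J))$ while commutation with $\cA_c(I^\prime)$ forces $U(\gamma) x U(\gamma)^* = x$, so every $x \in \cA_c(J)$ lies in $\cA_c(J) \cap \cA_c(\gamma(J)) \subset Z(\cA_c(J)) = \bbC 1$ --- contradicting that $\cA_c(J)$ is a type III factor. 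Your main argument instead descends to the group level via faithfulness of $U_{c,0}$, which imports the Herman--Mather--Thurston simplicity theorem for $\Diff_+(S^1)$, and then runs an elementary support/endpoint argument in $\Diff(S^1)$. That works, though you should close the small gap in the faithfulness step: to see that the image of $\ker U$ in $\Diff_+(S^1)$ is \emph{proper}, note that otherwise $\Diffc = U(1)\cdot \ker U$ and $U$ would be scalar-valued, which it is not. The trade-off is that the paper's proof uses only facts already established for the net (locality, covariance, factoriality) and would survive in settings where faithfulness is unclear, whereas yours is conceptually transparent but leans on a deep external theorem. Your fallback ``algebraic'' route is essentially the paper's proof, except that you cite injectivity of $J \mapsto \cA_c(J)$ as a black box where the paper derives exactly the instance it needs (for disjoint intervals) from factoriality.
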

\begin{proof}
Suppose to a contradiction that $\gamma(z) \ne z$ for some $z \in I^\prime$.
Then we may find an interval $J \subset I^\prime$ such that $\gamma(J) \cap J = \emptyset$, and thus $\cA_c(J) \cap \cA_c(\gamma(J)) \subset Z(\cA_c(J)) = \bbC 1$.
However if we choose $x \in \cA_c(J)$ with $1 \ne x$, then $U(\gamma)xU(\gamma)^* = x$ since $U(\gamma) \in \cA_c(I)$, so $x \in \cA_c(J) \cap \cA_c(\gamma(J))$, a contradiction.
\end{proof}

\begin{Lemma}\label{lem: generalized ann act projectively}
Let $\pi$ be a representation of $\cA_c$, and let $U^\pi$ be the corresponding representation of $\Diffcinfty$.
Let $A \in \Ann^{\ell}_I$ and $\gamma_1,\gamma_2 \in \Diffc$ such that $U(\gamma_i)A \in \cA_c(I)$.
Choose lifts $\tilde \gamma_i$ of $\gamma_i$ to $\Diffcinfty$.
Then for some integer $m$ we have 
$$
\pi_{I,\tilde \gamma_1}(A) = e^{2 \pi i m L_0} \, \pi_{I,\tilde \gamma_2}(A).
$$
where $L_0$ is the generator of $U^\pi(r_{\theta})$.
In particular, if $\sigma(L_0) \subset h + \bbZ$, then $\pi_{I,\tilde \gamma_1}(A)$ and $\pi_{I,\tilde \gamma_2}(A)$ differ by a scalar of modulus 1.
\end{Lemma}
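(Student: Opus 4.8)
The plan is to reduce the ambiguity coming from the choice of $\gamma$ to the already-understood ambiguity in the choice of lift $\tilde\gamma$, showing that the discrepancy is exactly a power of the deck transformation $r_{2\pi}$. First I would observe that since $U = U_{c,0}$ is an \emph{honest} representation of $\Diffc$, setting $\eta := \gamma_1\gamma_2^{-1}$ yields the exact operator identity $U(\gamma_1)A = U(\eta)\,U(\gamma_2)A$, with both $U(\gamma_1)A$ and $U(\gamma_2)A$ lying in $\cA_c(I)$. The substantive step is to upgrade this to $U(\eta) \in \cA_c(I)$, whence $\eta \in \Diff_c(I)$ by Lemma \ref{lem: local diff is local}. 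For this I would test against the commutant: for $z \in \cA_c(I)' = \cA_c(I')$ (Haag duality), $z$ commutes with both $U(\gamma_1)A$ and $U(\gamma_2)A$, and a short rearrangement gives $[z, U(\eta)]\,U(\gamma_2)A = 0$. Since $A$, and hence $U(\gamma_2)A$, has dense image by the definition of $\Ann^\ell_I$, and $[z,U(\eta)]$ is bounded, it follows that $[z,U(\eta)] = 0$ for every $z \in \cA_c(I')$, so $U(\eta) \in \cA_c(I')' = \cA_c(I)$.

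Next I would pass to the covering group. Writing $\hat\eta \in \Diffcinfty$ for the natural lift of $\eta \in \Diff_c(I)$, both $\tilde\gamma_1$ and $\hat\eta\,\tilde\gamma_2$ are lifts of $\gamma_1 = \eta\gamma_2$, so they differ by an element of the kernel of $\Diffcinfty \to \Diffc$; that is, $\tilde\gamma_1 = r_{2\pi}^{\,m}\,\hat\eta\,\tilde\gamma_2$ for some $m \in \bbZ$. Substituting into the definition of $\pi_{I,\tilde\gamma_1}(A)$, and using $\pi_I(U(\eta)) = U^\pi(\hat\eta)$ (Theorem \ref{thm: reps are diff covariant}, valid because $\eta \in \Diff_c(I)$) together with the multiplicativity of the honest representation $U^\pi$, the two factors of $U^\pi(\hat\eta)$ cancel and I am left with $\pi_{I,\tilde\gamma_1}(A) = U^\pi(r_{2\pi})^{-m}\,\pi_{I,\tilde\gamma_2}(A)$. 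To arrange the central factor on the left I would use that $r_{2\pi}$ is central, so $U^\pi(r_{2\pi})$ commutes with $U^\pi(\tilde\gamma_2)^*$, and that applying the covariance identity of Theorem \ref{thm: reps are diff covariant} to $r_{2\pi}$ (which projects to the identity of $\Diffc$) shows $U^\pi(r_{2\pi})$ commutes with $\pi_I(\cA_c(I))$. Since $L_0$ generates $U^\pi(r_\theta)$, we have $U^\pi(r_{2\pi})^{-m} = e^{-2\pi i m L_0}$, which gives the stated formula after relabeling the integer $m$. The final assertion is then immediate: if $\sigma(L_0) \subset h + \bbZ$ then $e^{2\pi i m L_0}$ acts as the scalar $e^{2\pi i m h}$, of modulus one.

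The main obstacle I anticipate is the bookkeeping around the central extensions and the cover in the middle step: one must verify that the lift ambiguity is exactly a power of $r_{2\pi}$ rather than some other element of the central $U(1)$, and that $U^\pi(r_{2\pi})$ genuinely commutes with $\pi_I(\cA_c(I))$ and not merely with the image of $U^\pi$. The cancellation of $U^\pi(\hat\eta)$ and the identification $\pi_I(U(\eta)) = U^\pi(\hat\eta)$ both hinge on $\eta$ landing in $\Diff_c(I)$, so the genuinely nontrivial content is the Haag-duality argument producing $U(\eta) \in \cA_c(I)$, and it is precisely there that the dense-image hypothesis built into $\Ann^\ell_I$ is indispensable.
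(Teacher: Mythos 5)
Your proposal is correct and follows essentially the same route as the paper's proof: the dense-image/commutant argument showing $U(\gamma_1\gamma_2^{-1})\in\cA_c(I)$ (hence $\gamma_1\gamma_2^{-1}\in\Diff_c(I)$ via Lemma \ref{lem: local diff is local}) is exactly the paper's first step, and the comparison of lifts producing the factor $e^{2\pi i mL_0}$ is the paper's second step, which you simply carry out with more explicit bookkeeping of the element $r_{2\pi}^m$. No gaps.
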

\begin{proof}
Let $v = U(\gamma_1)U(\gamma_2)^*$, let $x = U(\gamma_2)A$, and let $y \in \cA_c(I^\prime)$.
Observe that $x, vx \in \cA_c(I)$, so we have
$$
yvx = vxy = vyx,
$$
so that $[y,v]x = 0$.
Since $A \in \Ann^\ell_I$ it has dense image, and thus $x$ has dense image as well.
It follows that $[y,v] = 0$, and thus $U(\gamma_1 \circ \gamma_2^{-1}) = v \in \cA_c(I)$.
Hence by Lemma \ref{lem: local diff is local}, $\gamma_1 \circ \gamma_2^{-1} \in \Diff_c(I)$.
It follows that 
\begin{align*}
\pi_{I,\tilde \gamma_1}(A) &= U^{\pi}(\tilde \gamma_1)^* \pi_I(U(\gamma_1) A)\\
&= U^{\pi}(\tilde \gamma_1)^* \pi_I(U(\gamma_1 \circ \gamma_2^{-1}))\pi_I(U(\gamma_2)A)\\
&= U^{\pi}(\tilde \gamma_1)^* U^{\pi}(\gamma_1 \circ \gamma_2^{-1})\pi_I(U(\gamma_2)A)\\
\end{align*}
where we have used the canonical lift of $\gamma_1 \circ \gamma_2^{-1}$ from $\Diff_c(I)$ to $\Diffcinfty$.
The generator of the central extension $\Diffcinfty$ of $\Diff_c(S^1)$ acts by $e^{2 \pi i L_0}$, and so
%The $\tDiffc$ is a central extension of $\Diffc$ by $\bbZ$, and the generator of $\bbZ$ acts on $\cH$ by $e^{2\pi i L_0} = e^{2 \pi i h}$.
%Hence 
$$
U^{\pi}(\tilde \gamma_1)^* U^{\pi}(\gamma_1 \circ \gamma_2^{-1}) = e^{2 \pi i mL_0} U^{\pi}(\tilde \gamma_2)^*
$$
for some integer $m$, completing the proof.
\end{proof}

\begin{Definition}\label{def: generalized annulus action}
Let $A \in \Ann^{\ell}_I$ and let $\pi$ be a positive energy representation of $\cA_c$. 
Then we define $\pi_I(A) := \pi_{I,\tilde \gamma}(A)$ for any appropriate $\gamma$, which is well-defined up to a multiple of $e^{2\pi i L_0}$.
Similarly, if $A \in \Ann^{r}_I$, then $\pi_I(A) := \pi_I(AU(\gamma)) U^\pi(\tilde \gamma)^*$.
\end{Definition}
Observe that if $A \in \cA_c(I)$, then this definition agrees with the value coming from the representation $\pi_I$ (up to the necessary ambiguity), so we think of Definition \ref{def: generalized annulus action} as extending $\pi_I$ from $\cA_c(I)$ to $\Ann^{\ell/r}_I$.
We will omit the interval $I$ and the representation $\pi$ when they are clear from context.

It is clear from the definition that if $A \in \Ann^{\ell}_I$, then $A^* \in \Ann^r_I$ and $\pi_I(A)^* = \pi_I(A^*)$.
This is potentially ambiguous when $A \in \Ann^{\ell}_I \cap \Ann^{r}_I$, as its action $\pi_I(A)$ obtained from thinking of it as an element of $\Ann^{\ell}_I$ may not coincide with $\pi_I(A)$ obtained from thinking of it as an element of $\Ann^{r}_I$. 
This is a consequence of the fact that if $A \in \cA(I) \cap \cA(J)$ with $I \cap J$ disconnected, then $\pi_I(A)$ may not coincide with $\pi_J(A)$.
However, it will always be clear from context which action we refer to, and it should be remarked that for $\gamma \in \Diffc$, we have $\pi_I(U(\gamma)) = U^\pi(\tilde \gamma)$ for all $I$ (with equality in the sense of Definition \ref{def: generalized annulus action}), regardless of whether $U(\gamma)$ is regarded as left or right localizable.

\begin{Lemma}\label{lem: diff covariance of generalized annuli}
Let $A \in \Ann^\ell_I$, $B \in \Ann^r_I$ and let $\gamma \in \Diffcinfty$.
Let $\pi$ be a representation of $\cA_c$.
Then we have:
\begin{enumerate}
\item $U(\gamma)A \in \Ann^\ell_I$ and $AU(\gamma)^* \in \Ann^\ell_{\gamma(I)}$, and moreover $\pi_I(U(\gamma)A) = U^\pi(\gamma)\pi_I(A)$ and $\pi_{\gamma(I)}(AU(\gamma)^*) = \pi_{I}(A)U^\pi(\gamma)^*$.
\item $BU(\gamma)^* \in \Ann^r_I$ and $U(\gamma)B \in \Ann^r_{\gamma(I)}$, and moreover $\pi_I(BU(\gamma)^*) = BU^\pi(\gamma)^*$ and $\pi_{\gamma(I)}(U(\gamma)B) = U^\pi(\gamma)\pi_{I}(B)$
\end{enumerate}
\end{Lemma}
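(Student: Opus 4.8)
The plan is to prove part (1) directly from the definitions and then obtain part (2) by passing to adjoints, using the facts recorded just after Definition \ref{def: generalized annulus action}: namely that $A \in \Ann^\ell_I$ if and only if $A^* \in \Ann^r_I$, and that $\pi_I(A)^* = \pi_I(A^*)$ (interpreted up to the usual $e^{2\pi i L_0}$ ambiguity). Throughout I will use that, on the vacuum space $\cH_{c,0}$, the spectrum of $L_0$ is integral, so $U(r_{2\pi}) = 1$ and the operator $U(\gamma)$ for $\gamma \in \Diffcinfty$ depends only on the image $\bar\gamma \in \Diffc$; this lets me freely multiply $U$'s using the group law. The only genuine bookkeeping is the consistent choice of lifts to $\Diffcinfty$, and everything will be asserted modulo the central ambiguity that Lemma \ref{lem: generalized ann act projectively} already built into the symbol $\pi_I$.

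For the two left-localizable claims, suppose $A \in \Ann^\ell_I$ is localized by $\delta \in \Diffc$, i.e. $U(\delta)A \in \cA_c(I)$. For the first, I would observe $U(\delta \bar\gamma^{-1})\,U(\gamma)A = U(\delta)A \in \cA_c(I)$, and that $U(\gamma)A$ has dense image since $U(\gamma)$ is unitary; hence $U(\gamma)A \in \Ann^\ell_I$. Unwinding Definition \ref{def: generalized annulus action} with localizing diffeomorphism $\delta\bar\gamma^{-1}$ and the lift $\tilde\delta\,\gamma^{-1}$ gives $\pi_I(U(\gamma)A) = U^\pi(\tilde\delta\,\gamma^{-1})^*\,\pi_I(U(\delta)A) = U^\pi(\gamma)\,U^\pi(\tilde\delta)^*\,\pi_I(U(\delta)A) = U^\pi(\gamma)\,\pi_I(A)$. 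For $AU(\gamma)^*$, I would instead conjugate: $U(\bar\gamma\delta)\,(AU(\gamma)^*) = U(\gamma)\big(U(\delta)A\big)U(\gamma)^*$ lies in $\cA_c(\gamma(I))$ by covariance of the Virasoro net, and the image is again dense, so $AU(\gamma)^* \in \Ann^\ell_{\gamma(I)}$. Writing $x = U(\delta)A \in \cA_c(I)$ and applying the covariance relation $\pi_{\gamma(I)}(U(\gamma)xU(\gamma)^*) = U^\pi(\gamma)\pi_I(x)U^\pi(\gamma)^*$ from Theorem \ref{thm: reps are diff covariant}, then unwinding $\pi_{\gamma(I)}$ with localizing diffeomorphism $\bar\gamma\delta$ and the lift $\gamma\tilde\delta$, the two $U^\pi(\gamma)$ factors on the left cancel and I am left with $\pi_{\gamma(I)}(AU(\gamma)^*) = U^\pi(\tilde\delta)^*\,\pi_I(x)\,U^\pi(\gamma)^* = \pi_I(A)\,U^\pi(\gamma)^*$.

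Part (2) then follows by taking adjoints of part (1) applied to $B^* \in \Ann^\ell_I$. Since $BU(\gamma)^* = (U(\gamma)B^*)^*$ and $U(\gamma)B^* \in \Ann^\ell_I$, I get $BU(\gamma)^* \in \Ann^r_I$ and $\pi_I(BU(\gamma)^*) = \pi_I(U(\gamma)B^*)^* = (U^\pi(\gamma)\pi_I(B^*))^* = \pi_I(B)\,U^\pi(\gamma)^*$; and since $U(\gamma)B = (B^*U(\gamma)^*)^*$ with $B^*U(\gamma)^* \in \Ann^\ell_{\gamma(I)}$, I get $U(\gamma)B \in \Ann^r_{\gamma(I)}$ and $\pi_{\gamma(I)}(U(\gamma)B) = \pi_{\gamma(I)}(B^*U(\gamma)^*)^* = (\pi_I(B^*)U^\pi(\gamma)^*)^* = U^\pi(\gamma)\,\pi_I(B)$.

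The step I expect to be the main obstacle is verifying that the various lifts to $\Diffcinfty$ can be chosen compatibly so that the cancellations above are exact and not merely exact up to an uncontrolled central phase. The key point to check carefully is that the localizing diffeomorphism for the modified annulus and the given $\gamma$ share a common lift, so that the factor $U^\pi(\tilde\delta\,\gamma^{-1})^*$ (resp. $U^\pi(\gamma\tilde\delta)^*$) splits as claimed; this is exactly where Lemma \ref{lem: generalized ann act projectively} is doing its work, since any residual $e^{2\pi i m L_0}$ discrepancy is precisely the ambiguity that Definition \ref{def: generalized annulus action} permits. Once that is granted, all four identities hold in the intended sense, and the membership statements are immediate consequences of the group law together with the covariance axiom of the Virasoro net.
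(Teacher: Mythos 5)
Your proposal is correct and follows essentially the same route as the paper's proof: establish part (1) by unwinding Definition \ref{def: generalized annulus action} with a suitably transported localizing diffeomorphism and invoking the covariance relation of Theorem \ref{thm: reps are diff covariant}, then obtain part (2) by taking adjoints, with all identities understood modulo the $e^{2\pi i L_0}$ ambiguity built into $\pi_I$. The paper writes out only the $\pi_{\gamma(I)}(AU(\gamma)^*)$ case and declares the rest similar, whereas you spell out all four, but the mechanics are identical.
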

\begin{proof}
The second part follows from the first by taking adjoints.
Recall that the action $\pi_I(A)$ is only well-defined up to ambiguities discussed in Definition \ref{def: generalized annulus action}.
Let $\psi \in \Diffcinfty$ be such that $U(\psi)A \in \cA_c(I)$.
Then $U(\gamma)U(\psi)AU(\gamma)^* \in \cA_c(\gamma(I))$ and so
\begin{align*}
\pi_{\gamma(I)}(AU(\gamma)^*) &= 
U^\pi(\psi)^*U^{\pi}(\gamma)^*\pi_{\gamma(I)}(U(\gamma)U(\psi)AU(\gamma)^*)\\
&=U^\pi(\psi)^*\pi_{I}(U(\psi)A)U^\pi(\gamma)^*\\
&= \pi_I(A)U^\pi(\gamma)^*.
\end{align*}
The other assertions are similar.
\end{proof}

As a final technical note, we will need the notion of tensor product of generalized annuli.
This generalizes the fact that we may take the tensor product of representations of the semigroup of degenerate annuli with central charges $c_1$ and $c_2$ to obtain one of central charge $c_1+c_2$.

\begin{Definition}\label{def: tensor product of generalized annuli}
If $A_i \in \Ann^{\ell/r}_I(\cH_{c_i,0})$, we say that the tensor product of $A_1$ and $A_2$ exists if there exists a $A \in \Ann^{\ell/r}_I(\cH_{c_1+c_2,0})$ such that $\pi^{c_1 \otimes c_2}_I(A) = A_1 \otimes A_2$, where $\pi^{c_1 \otimes c_2}$ refers to the action of $\Ann^{\ell}_I(\cH_{c_1+c_2,0})$ on $\cH_{c_1,0} \otimes \cH_{c_2,0}$.
\end{Definition}
Such an $A$ must be unique, and by abuse of notation we refer to $A$ as the generalized annulus $A_1 \otimes A_2$.
With this notation, if $\cH_1$ and $\cH_2$ are a pair of representations of $\cA_{c_i}$, then $\pi_I(A_1 \otimes A_2) = \pi_I(A_1) \otimes \pi_I(A_2)$.

\begin{Definition}\label{def: annulus pair}
Let $\Ann_I$ be the collection of pairs $(B,A)$ with $A \in \Ann^{\ell}_I$ and $B \in \Ann^{r}_I$ for which there exists a common $\gamma \in \Diffc$ such that $U(\gamma)A, BU(\gamma)^* \in \cA_c(I)$.
\end{Definition}

Typical examples of elements of $\Ann_I$ include ones of the form $(A^*,A)$ for $A \in \Ann^{\ell}_I$ as well as ones of the form $(U(\gamma), A)$. 
Indeed, Definition \ref{def: annulus pair} is introduced for the purpose of simultaneously describing these two cases.

%In the following section, we will consider operators which insert a state `between' such $A$ and $B$.
%It is straightforward to check that for $(B,A) \in \Ann_I$ we have $BA \in \cA_c(I)$ and $\pi_I(B)\pi_I(A) = \pi_I(BA)$.
%This will allow us to canonically define many bilinear expressions in $\pi_I(B)$ and $\pi_I(A)$, despite the constituent pieces only being defined up to a unitary multiple.

\begin{Remark}\label{rmk: localized annuli commute}
Observe that if $(B,A) \in \Ann_I$ and $(\tilde B, \tilde A) \in \Ann_J$, with $I$ and $J$ disjoint intervals, then in any positive energy representation $\pi_I(B)\pi_I(A)$ commutes with $\pi_J(\tilde B) \pi_J(\tilde A)$.
Indeed, $BA \in \cA_c(I)$ and $\tilde B \tilde A \in \cA_c(J)$, and we have $\pi_I(B)\pi_I(A) = \pi_I(BA)$, and $\pi_I(\tilde B)\pi_I(\tilde A) = \pi_I(\tilde B \tilde A)$.
\end{Remark}

\subsection{Definition of systems of generalized annuli}

\label{sec: def of system of generalized annuli}

As described in the beginning of Section \ref{sec: generalized annuli}, a system of generalized annuli provides an abstract characterization of the kinds of operators which could correspond to the action of degenerate annuli on representations of the Virasoro algebra.
Alternatively, it can be interpreted as a list of properties satisfied by the example constructed in Section \ref{sec: example of system of generalized annuli}, as that is the only example we are interested in in this article.
However, we would like to leave the option to generalize our framework to include the full semigroup of degenerate annuli, which will also provide an example of a system of generalized annuli.

\begin{Definition}[System of incoming/outgoing generalized annuli with central charge $c$]
\label{def: system of incoming annuli}
A system of incoming generalized annuli of central charge $c$ is a family of subsets $\scA^{in}_I \subset \Ann^{\ell}_I(\cH_{c,0})$ for all intervals $I$, such that:
\begin{enumerate}
\item $1 \in \scA^{in}_I$ for all $I$
\item If $I \subset J$ then $\scA^{in}_I \subset \scA^{in}_J$ 
\item If $A \in \scA^{in}_I$ then $U(r_{\theta})AU(\gamma)^* \in \scA^{in}_{\gamma(I)}$ for all $\gamma \in \Diffc$ and all $\theta \in \bbR$.
\end{enumerate}
The associated system of generalized outgoing annuli is given by the adjoints 
$$
{\scA_I^{out}}:=\{A^* : A \in \scA_I^{in}\}.
$$

A system of generalized incoming annuli must come equipped with an interior function ``$\Int$'' which assigns to every $A \in \scA_I^{in}$ an open set $\Int(A)$ which is the interior of a degenerate annulus not containing 0.
That is, $\Int(A) = \interior{\bbD} \setminus D$ where $\bbD$ is the closed unit disk and $D \subset \bbD$ is the closure of a Jordan domain with $C^\infty$ boundary with $0 \in \interior{D}$.
We further require that if $U(\gamma)A \in \cA_c(I)$, then $\gamma^{-1}(I^\prime) \subseteq \partial D \cap S^1$.
We do not rule out the possibility that $\Int(A) = \emptyset$.
This function must be compatible with reparametrization:
$$
\Int(U(r_{\theta})A U(\gamma)^*) = r_{\theta}(\Int(A)).
$$
If $B \in \scA_I^{out}$ then we define $\Int(B) = \overline{\Int(A^*)}^{-1} \subset \bbD^c$.
\end{Definition}

The reader should keep in mind that elements in $\scA_I^{in}$ capture the idea of an operator assigned to an annulus whose outer boundary is the unit circle, parametrized by the identity, and whose thin interval is parametrized by an interval containing $I^\prime$.
Elements in $\scA_I^{out}$ are reflections of incoming annuli through the unit circle.
A system of generalized annuli consists of pairs $(B,A) \in \scA^{out} \times \scA^{in}$, which corresponds to a degenerate annulus containing the unit circle, divided in half via the unit circle:

$$
\begin{tikzpicture}[baseline={([yshift=-.5ex]current bounding box.center)}]
	\coordinate (a) at (150:1cm);
	\coordinate (b) at (270:1cm);
	\coordinate (c) at (210:1.5cm);
% BIG DISK
	\draw (0,0) circle (1cm);
% CURVED BOUNDARY REGION
	\fill[red!10!blue!20!gray!30!white] (a)  .. controls ++(250:.6cm) and ++(120:.4cm) .. (c) .. controls ++(300:.4cm) and ++(190:.6cm) .. (b) arc (270:510:1cm);
	\filldraw[fill=white] (0,0) circle (1cm);
	\draw ([shift=(270:1cm)]0,0) arc (270:510:1cm);
	\draw (a) .. controls ++(250:.6cm) and ++(120:.4cm) .. (c);
	\draw (b) .. controls ++(190:.6cm) and ++(300:.4cm) .. (c);
% INNER DISK
%	\filldraw[fill=white] (180:.65cm) circle (.25cm); 
% COORDINATE LABELS
%	\node at (a) {(a)};
%	\node at (b) {(b)};
%	\node at (c) {(c)};
\end{tikzpicture}
\,\, \sim B,
\qquad
\begin{tikzpicture}[baseline={([yshift=-.5ex]current bounding box.center)}]
	\coordinate (a) at (120:1cm);
	\coordinate (b) at (240:1cm);
	\coordinate (c) at (180:.25cm);
% BIG DISK
	\fill[fill=red!10!blue!20!gray!30!white] (0,0) circle (1cm);
	\draw (0,0) circle (1cm);
% CURVED BOUNDARY REGION
	\fill[fill=white] (a)  .. controls ++(210:.6cm) and ++(90:.4cm) .. (c) .. controls ++(270:.4cm) and ++(150:.6cm) .. (b) -- ([shift=(240:1cm)]0,0) arc (240:480:1cm);
	\draw ([shift=(240:1cm)]0,0) arc (240:480:1cm);
	\draw (a) .. controls ++(210:.6cm) and ++(90:.4cm) .. (c);
	\draw (b) .. controls ++(150:.6cm) and ++(270:.4cm) .. (c);
% INNER DISK
%	\filldraw[fill=white] (180:.65cm) circle (.25cm); 
% COORDINATE LABELS
%	\node at (a) {(a)};
%	\node at (b) {(b)};
%	\node at (c) {(c)};
\end{tikzpicture}
\,\, \sim A,
\qquad
\begin{tikzpicture}[baseline={([yshift=-.5ex]current bounding box.center)}]
	\coordinate (a1) at (150:1cm);
	\coordinate (b1) at (270:1cm);
	\coordinate (c1) at (210:1.5cm);
% BIG DISK
%	\draw (0,0) circle (1cm);
% CURVED BOUNDARY REGION
	\fill[red!10!blue!20!gray!30!white] (a1)  .. controls ++(250:.6cm) and ++(120:.4cm) .. (c1) .. controls ++(300:.4cm) and ++(190:.6cm) .. (b1) arc (270:510:1cm);
	\draw ([shift=(270:1cm)]0,0) arc (270:510:1cm);
	\coordinate (a) at (120:1cm);
	\coordinate (b) at (240:1cm);
	\coordinate (c) at (180:.25cm);
% BIG DISK
	\fill[fill=red!10!blue!20!gray!30!white] (0,0) circle (1cm);
%	\draw (0,0) circle (1cm);
% CURVED BOUNDARY REGION
	\fill[fill=white] (a)  .. controls ++(210:.6cm) and ++(90:.4cm) .. (c) .. controls ++(270:.4cm) and ++(150:.6cm) .. (b) -- ([shift=(240:1cm)]0,0) arc (240:480:1cm);
	\draw ([shift=(240:1cm)]0,0) arc (240:480:1cm);
	\draw (a) .. controls ++(210:.6cm) and ++(90:.4cm) .. (c);
	\draw (b) .. controls ++(150:.6cm) and ++(270:.4cm) .. (c);
	\draw (a) arc (120:150:1cm) (a1) .. controls ++(250:.6cm) and ++(120:.4cm) .. (c1);
	\draw (b1) .. controls ++(190:.6cm) and ++(300:.4cm) .. (c1);
%	\filldraw[fill=white] (197:.9cm) circle (.25cm);
\end{tikzpicture}
\,\, \sim
BA
$$

\begin{Definition}[System of generalized annuli of central charge $c$]
\label{def: system of annuli central charge c}
A system of generalized annuli of central charge $c$ is a system of incoming and outgoing generalized annuli $\scA^{in/out}_I$, and a family of subsets $\scA_I \subset \Ann_I \, \cap \,\, {\scA^{out}_I} \times \scA_I^{in}$ such that
\begin{enumerate}
\item if $I \subset J$ then $\scA_I \subset \scA_J$
\item if $(B,A) \in \scA_I$ then $(A^*,B^*) \in \scA_I$
\item if $\gamma \in \Diff_c(I)$, then $(1,U(\gamma)) \in \scA_I$
\item if $\gamma \in \Diffc$, $r_\theta \in \Rot(S^1)$, $(B,A) \in \scA_I$, then $(U(\gamma)BU(r_{\theta})^*, U(r_\theta)AU(\gamma)^*) \in \scA_{\gamma(I)}$.
\item if $r_\theta \in \Rot(S^1)$, $I$ is an interval containing both the intervals $J$ and $r_\theta(J)$, and $(B,A) \in \scA_J$, then $(B, U(r_{\theta})AU(r_\theta)^*) \in \scA_I$
\item for all $I$ there exists an $A \in \scA_I^{in}$ such that $(1,A) \in \scA_I$ and $\Int(A) \ne \emptyset$
\item if $(1,A), (B,1) \in \scA_I$ then $(B,A) \in \scA_I$.
\end{enumerate}
For $(B,A) \in \scA_I$, we define 
$$
\Int(B,A) = {(\cl(\Int(A)) \cup \cl(\Int(B))}\interior{\,\,}.
$$ 
\end{Definition}

Now a system of generalized annuli consists of systems of generalized annuli of every allowable central charge $c$ which are compatible with tensor product.

\begin{Definition}[System of generalized annuli]\label{def: system of generalized annuli}
A system of generalized annuli consists of systems of generalized annuli $\scA_{I,c} \subset \scA_{I,c}^{out} \times \scA_{I,c}^{in}$ for every allowable central charge $c$, such that for every $c_1,c_2$ and $A \in \scA^{in}_{I,c_1+c_2}$ there exist $A^\prime \in \scA^{in}_{I,c_1}$ and $A^{\prime\prime} \in \scA^{in}_{I,c_2}$ such that $A = A^\prime \otimes A^{\prime\prime}$ in the sense of Definition \ref{def: tensor product of generalized annuli}.
Moreover, we require that every $A^\prime \in \scA^{in}_{I,c_1}$ and $A^{\prime\prime} \in \scA^{in}_{I,c_2}$ can be obtained from such a decomposition.
For all $A$, we require $\Int(A) = \Int(A^\prime) = \Int(A^{\prime\prime})$.
Observe that elements of $\scA^{out}_{I,c}$ inherit the same properties by taking adjoints.

Finally we require that if $A = A^\prime \otimes A^{\prime\prime}$ and $B = B^\prime \otimes B^{\prime\prime} \in \scA_{I,c_1+c_2}^{out}$, then $(B,A) \in \scA_{I,c_1+c_2}$ if and only if $(B^\prime,A^\prime) \in \scA_{I,c_1}$ and $(B^{\prime\prime},A^{\prime\prime}) \in \scA_{I,c_2}$.
\end{Definition}

Observe that if $A = A^\prime \otimes A^{\prime\prime}$, then $A^\prime$ and $A^{\prime\prime}$ are determined up to a scalar (recall elements of $\Ann_I^{\ell/r}$ must be non-zero, since they have dense image or are injective).
We will often suppress this ambiguity for clarity, as described below.

\begin{Convention}\label{conv: annuli}
In general, we fix a system $\{\scA_{I,c}\}_I$ of generalized annuli which is in the background of all definitions and theorems.
The particular system for which we can construct examples is described in Section \ref{sec: example of system of generalized annuli}.
Given $A \in \scA_{I,c}^{in}$ and a representation $\pi$ of $\cA_c$, we have an action of $A$ given by $\pi_I(A) \in \cB(\cH_\pi)$.
In general we will omit the explicit reference to the central charge, as well as the notation $\pi_I$, and simply allow $A \in \scA_I$ to act on representations of $\cA_c$.
If $A$ is acting on a tensor product of representations $\cH_\pi \otimes \cH_\lambda$, then we will denote the operators $A^\prime$ and $A^{\prime\prime}$ of Definition \ref{def: system of generalized annuli} again by $A$.
This produces notation analogous to group representations: $A(\xi \otimes \eta) = A\xi \otimes A\eta$.
Recall that these are only well-defined up to a multiple of $e^{2 \pi i L_0}$, which will often be a scalar for us.

%In particular, if $(B,A) \in \scA_I$ and $\cY \in I\binom{K}{M N}$, then $B \cY(a,z) A$ is well-defined as a sesquilinear form (see Remark \ref{rem: ann pairs well defined}).
%If $\tilde \cY$ is another intertwining operator, then we have
%$$
%B(\cY \otimes \tilde \cY)(a \otimes \tilde a,z)A = B\cY(a,z)A \otimes B\tilde \cY(\tilde a,z)A
%$$
%up to a scalar.

Sometimes we will need to emphasize the space that $A$ is acting on.
If $\cH$ is a positive energy representation of $\Diff_c^{(\infty)}$ with finite energy vectors $M$, we will sometimes write $A^M$ or $A^{\cH}$.
Alternatively, if $W$ is a non-conformal unitary subalgebra of $V$, then a unitary $V$-module $M$ naturally carries two different positive energy representations.
We will write $A^W$ and $A^V$ to distinguish between the two possible actions of $A$ on $\cH_M$.
\end{Convention}

%\begin{Definition}
%A representation $\rho$ of an abstract system of generalized incoming spin annuli is a positive energy representation $\cH$ equipped with a family of maps $\rho_I: \scA_I^{in} \to \Ann_I^{\ell}(\cH)$ which is compatible with inclusions $I \subset J$, and projectively compatible with reparametrization and rescaling, and which satisfies $\rho_I(1) = 1$.
%The associated representation $\rho_I$ of ${\scA_I^{out}}$ is given by $\rho_I(A^*) = \rho_I(A)^*$ for $A \in \scA_I^{in}$.
%A representation of an abstract system of generalized annuli $\scA_I \subset {\scA_I^{out}} \times \scA_I^{in}$ is a representation of $\scA_I^{in}$ such that
%$(\rho_I(B),\rho_I(A)) \in \Ann_I$ for all $(B,A) \in \scA_I$.
%\end{Definition}
%
%
%\begin{Definition}[Concrete system of generalized annuli]\label{def: system of annuli}
%A (concrete) system of generalized (spin) annuli is an abstract system of generalized annuli $\scA_I \subset {\scA^{out}_I} \times \scA_I^{in}$ equipped with representations $\rho^c_I$ of $\scA_I^{in}$ on every $\cH_{c,0}$.
%We require that for every admissible $c_1,c_2$ and every $A \in \scA_I^{in}$, the tensor product $\rho^{c_1}_I(A) \otimes \rho^{c_2}_I(A)$ exists in the sense of Definition \ref{def: tensor product of generalized annuli}, and $\rho^{c_1+c_2}(A) = \lambda \rho^{c_1}(A) \otimes \rho^{c_2}(A)$ for some $\lambda \in \bbC^\times$.
%\end{Definition}

\subsection{Example of a system of generalized annuli}\label{sec: example of system of generalized annuli}

In this section we will construct a system of generalized annuli (a version of which appeared implicitly in \cite{GRACFT1}).
These are directly inspired by degenerate annuli; see the introduction to Section \ref{sec: generalized annuli} or Section \ref{sec: BLVO intro} for more detail.

We briefly recall facts about semigroups of univalent functions from \cite[\S2.4]{GRACFT1} (a univalent function is an injective holomorphic map).
We denote the closed unit disk by $\bbD$, and its interior by $\interior{\bbD}$.
The following originates from \cite{Koenigs1884}, and a modern textbook treatment may be found in \cite[\S6.1]{Shapiro93}.

\begin{Proposition}[Koenigs]
\label{prop: Koenigs}
Let $\varphi:\interior{\bbD} \to \interior{\bbD}$ be a holomorphic map with $\varphi(0) = 0$, $\varphi \not \equiv 0$ and $f(\interior{\bbD}) \ne \interior{\bbD}$.
There there exists a unique holomorphic function $\sigma$ on $\interior{\bbD}$ such that $\sigma(0) = 0$, $\sigma^\prime(0) = 1$, and $\sigma(\varphi(z)) = \varphi^\prime(0) \sigma(z)$.
If $\varphi$ is univalent, then so is $\sigma$.
\end{Proposition}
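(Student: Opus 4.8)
The plan is to use the classical \emph{Koenigs linearization} through the iterates of $\varphi$. Writing $\lambda = \varphi'(0)$ and $\varphi_n = \varphi \circ \cdots \circ \varphi$ for the $n$-fold iterate (with $\varphi_0 = \id$), the Schwarz lemma gives $|\lambda| \le 1$, with equality exactly when $\varphi$ is a rotation; the hypothesis $\varphi(\interior{\bbD}) \ne \interior{\bbD}$ rules out rotations, so $|\lambda| < 1$, and, as always in Koenigs' theorem, we work in the regime $0 < |\lambda| < 1$ (which holds in particular whenever $\varphi$ is univalent, since then $\varphi'$ never vanishes). The candidate for $\sigma$ is the locally uniform limit
$$
\sigma(z) = \lim_{n \to \infty} \lambda^{-n} \varphi_n(z),
$$
and the whole proof amounts to establishing this convergence and then reading off the stated properties.

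First I would show the iterates contract to the fixed point locally uniformly, i.e. $\varphi_n \to 0$ uniformly on compact subsets of $\interior{\bbD}$. Near $0$ this is immediate: choosing $\rho$ with $|\lambda| < \rho < 1$, the expansion $\varphi(z) = \lambda z + O(z^2)$ gives $|\varphi(z)| \le \rho|z|$ on a disk $\{|z| \le \delta\}$, whence $|\varphi_n(z)| \le \rho^n \delta$ there. To reach an arbitrary compact $K$ I would invoke the Schwarz--Pick lemma: because $\varphi(\interior{\bbD}) \ne \interior{\bbD}$ the map $\varphi$ is not an automorphism and hence \emph{strictly} contracts the hyperbolic metric, uniformly on $K$; the iterated contraction drives all of $K$ into $\{|z| \le \delta\}$ and then to $0$.

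Next, writing $\sigma_n = \lambda^{-n}\varphi_n$ and $g(w) = \varphi(w)/(\lambda w)$ (with $g(0) = 1$), the telescoping identity
$$
\sigma_n(z) = z \prod_{k=0}^{n-1} g(\varphi_k(z))
$$
reduces convergence to that of the infinite product. Since $g$ is holomorphic near $0$ with $g(w) = 1 + O(w)$, one has $|g(\varphi_k(z)) - 1| \le C|\varphi_k(z)|$, and the geometric decay of the $\varphi_k(z)$ from the previous step makes $\sum_k |g(\varphi_k(z)) - 1|$ converge uniformly on compacta; hence $\sigma_n \to \sigma$ locally uniformly and $\sigma$ is holomorphic. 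I expect this convergence step --- controlling the iterates and the resulting product uniformly on compact sets --- to be the \textbf{main analytic obstacle}, since everything after it is formal.

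The remaining claims follow quickly. From $\sigma_n(0) = 0$ and $\sigma_n'(0) = \lambda^{-n}\varphi_n'(0) = \lambda^{-n}\lambda^n = 1$, and the fact that locally uniform convergence passes to derivatives, we get $\sigma(0) = 0$ and $\sigma'(0) = 1$. The functional equation is the computation
$$
\sigma(\varphi(z)) = \lim_n \lambda^{-n}\varphi_{n+1}(z) = \lambda \lim_n \lambda^{-(n+1)}\varphi_{n+1}(z) = \lambda \sigma(z).
$$
For \emph{uniqueness}, comparing Taylor coefficients in $\sigma \circ \varphi = \lambda \sigma$ determines each coefficient of $\sigma$ recursively from those of $\varphi$, the recursion being solvable because $\lambda^k \ne \lambda$ for $k \ge 2$ (as $0 < |\lambda| < 1$); so any normalized solution coincides with $\sigma$. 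Finally, for \emph{univalence}: if $\varphi$ is univalent then each $\sigma_n$ is injective (a nonzero scalar multiple of a composition of injective maps), so by Hurwitz's theorem the locally uniform limit $\sigma$ is either injective or constant, and $\sigma'(0) = 1 \ne 0$ excludes the constant case.
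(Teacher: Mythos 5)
Your proof is correct and is precisely the classical Koenigs argument (linearization via the locally uniform limit $\lambda^{-n}\varphi_n$, telescoping product for convergence, Taylor-coefficient recursion for uniqueness, Hurwitz for univalence); the paper does not prove this proposition itself but cites Koenigs and the textbook treatment in Shapiro, \S 6.1, which is exactly the argument you give. You are also right to single out the regime $0<|\varphi'(0)|<1$: the functional equation forces $\sigma\equiv 0$ when $\varphi'(0)=0$, so the statement implicitly requires $\varphi'(0)\ne 0$, which holds automatically in the univalent case the paper actually uses.
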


The function $\sigma$ of Proposition \ref{prop: Koenigs} is called the \emph{Koenigs map} associated to $\varphi$.
%For convenience, we extend the definition of the Koenigs map to 

\begin{Definition}\label{def: good univalent}
Let $\scU$ be the collection of univalent functions $\varphi:\interior{\bbD} \to \interior{\bbD}$ which satisfy
\begin{enumerate}
\item $\varphi(0) = 0$ 
\item $\varphi^\prime(0) \in \bbR_{> 0}$
\item $\varphi(\interior{\bbD})$ is a Jordan domain with $C^\infty$ boundary
\item $\sigma(\interior{\bbD})$ is a Jordan domain with $C^\infty$ boundary, where $\sigma$ is the Koenigs map associated to $\varphi$.
\end{enumerate}
\end{Definition}

Recall that by the smooth Riemann mapping theorem (see e.g. \cite[\S8]{Bell92}), if $U$ and $V$ are Jordan domains with $C^\infty$ boundary and $\varphi$ is a univalent map with $\varphi(U) = V$, then $V$ extends smoothly to the closures of $U$ and $V$, and induces a diffeomorphism of $\partial U$ and $\partial V$.
In light of this fact, the third item of Definition \ref{def: good univalent} is redundant in the presence of the fourth item, but we leave it for clarity.
We will implicitly extend maps $\varphi \in \scU$ to smooth maps $\varphi:\bbD \to \bbD$, and use the induced diffeomorphism $\varphi|_{S^1}:S^1 \to \partial \varphi(\bbD)$.

We now consider continuous semigroups of univalent self-maps $(\varphi_t)_{t \ge 0}$ of $\interior{\bbD}$.
That is, we consider families with $\varphi_0 = \operatorname{id}$, $\varphi_t \circ \varphi_s = \varphi_{t + s}$, and with the property that $(t,z) \mapsto \varphi_t(z)$ is continuous on $\bbR_{\ge 0} \times \interior{\bbD}$.
If $\varphi_t(\interior{\bbD}) \ne \interior{\bbD}$ and $\varphi_t(0) = 0$ for some (or, equivalently, all) $t > 0$, then all $(\varphi_t)_{t > 0}$ share a common Koenigs map $\sigma$, and $\varphi_t$ is determined by $\sigma$ up to reparametrizing $(\varphi_t) \mapsto (\varphi_{\alpha t})$ for some $\alpha > 0$.
See \cite[Prop. 2.35]{GRACFT1} and preceding discussion for more detail.

\begin{Definition}\label{def: good semigroup}
Let $\scG$ be the collection of non-trivial continuous semigroups $(\varphi_t)_{t \ge 0} \subset \scU$ which satisfy the normalization condition
$
\lim_{t \searrow 0} \frac{\varphi_t(z) - z}{t} = - \frac{\sigma(z)}{\sigma^\prime(z)}
$
for all $z \in \interior{\bbD}$, where $\sigma$ is the Koenigs map of the $\varphi_t$.
The (modified) vector field $\rho$ associated to $(\varphi_t)_{t \ge 0}$ is $\rho(z) = \frac{\sigma(z)}{z \sigma^\prime(z)}$.
\end{Definition}
The normalization condition serves to choose a canonical representative of the orbit of $(\varphi_t)$ under reparametrization, and will simplify formulas at times.
The vector field which generates $\varphi_t$ is $-\frac{\sigma(z)}{\sigma^\prime(z)}$, but the modified function $\rho$ will serve our purposes better.
We will omit the qualifier ``modified'' for brevity.
The generator $\rho$ satisfies (see \cite{BerksonPorta78})
\begin{equation}\label{eqn: univalent semigroup generator}
\dot{\varphi}_t(z) = -\varphi_t(z) \rho(\varphi_t(z)).
\end{equation}

We consider (potentially degenerate) annuli which are obtained by removing $\varphi_t(\interior{\bbD})$ from the unit disk.
We allow arbitrary smooth parametrization of the incoming boundary $\varphi_t(S^1)$.
Since $\varphi_t$ provides a canonical parametrization, the data of a parametrization of $\varphi_t(S^1)$ is the same as the data of an element of $\Diff(S^1)$.
On the other hand, we only allow the outgoing boundary $S^1$ to be parametrized by a M\"{o}bius transformation of the disk $\psi$.
Equivalently, we can consider the annulus $\bbD \setminus \psi(\varphi_t(\interior{\bbD}))$ with outgoing boundary parametrized by the identity.
We want the image of the operator assigned to this annulus to contain eigenvectors of $L_0$, and thus we must require $0 \in \psi(\varphi_t(\interior{\bbD}))$.
Putting this together, we obtain the following definition of a ``nice'' degenerate annulus:
%Once equipped with boundary parametrizations, these annuli will comprise our choice of system $\scA_I$ which will be used throughout the remainder of the article.

\begin{Definition}[Geometric annuli]\label{def: geometric annuli}
Let $\GAnn^{in}$ be the collection of tuples $X:=(\psi,\rho,t,\gamma)$, where
\begin{enumerate}
\item $\rho$ is the generator of a semigroup $\varphi_t \in \scG$
\item $t \in \bbR_{\ge 0}$
\item $\psi \in \Mob(\bbD)$
\item $0 \in \psi(\varphi_t(\interior{\bbD}))$
\item $\gamma \in \Diff(S^1)$
\end{enumerate}
We define $\Int(X) := \psi(\interior{\bbD} \setminus \varphi_t(\bbD))$.
Let $\GAnn^{in}_I$ be the subcollection with $\varphi_t(\gamma^{-1}(I^\prime)) \subset S^1$.

Let $\GAnn_I$ be the collection of pairs $(\tilde X,X)$ where $X,\tilde X \in \GAnn^{in}_I$ and 
\begin{equation}\label{eqn: agree on thin}
(\psi \circ \varphi_t \circ \gamma^{-1})|_{I^\prime} = (\tilde \psi \circ \tilde \varphi_{\tilde t} \circ \tilde \gamma^{-1})|_{I^\prime}.
\end{equation}
where $\tilde X = (\tilde \psi,\tilde \rho ,\tilde t, \tilde \gamma)$.

Given a choice of central charge $c$ and a lift of $\gamma$ to $\Diffc$, we define an operator $\pi(X)$ on $\cH_{c,0}$ by
$$
\pi(X) = U(\psi)e^{-t L(\rho)} U(\gamma)^*.
$$
Here $U=U_{c,0}$ is the representation of $\Diff_c(S^1)$ on $\cH_{c,0}$ and $L(\rho) = \sum_{n \ge 0} \hat \rho(n) L_n$, where $\hat \rho$ are the Fourier coefficients of $\rho$.
\end{Definition}

The annulus $X \in \GAnn^{in}$ corresponds to $\bbD \setminus \psi(\varphi_t(\interior{\bbD}))$ as described above.
The condition \eqref{eqn: agree on thin} says the parametrizations of the incoming and outgoing boundary agree on $I$, which is to say that $X$ should be thought of as being localized in $I$.
If $\gamma \in \Diff_I(S^1)$, then $(\id,\rho,0,\gamma) \in \GAnn^{in}_I$ (for arbitrary $\rho$, which plays no role since $t=0$), and $\pi(\gamma) = U(\gamma)$.

To interpret the exponential $e^{-tL(\rho)}$, observe that the subspace of $L(c,0)$ spanned by $L_0$-eigenvectors with eigenvalue at most $N$ is invariant under $L(\rho)$, and therefore $e^{-tL(\rho)}$ defines an operator on this subspace.
Hence $e^{-tL(\rho)}$ defines an endomorphism of $L(c,0)$.
Using the quantum energy inequality of Fewster and Hollands \cite[\S4-5]{FewsterHollands}, we can see that $e^{-tL(\rho)}$ in fact defines a bounded operator on $\cH_{c,0}$, and thus we have:

\begin{Lemma}[{\cite[Prop. 3.13]{GRACFT1}}]
Let $X \in \GAnn^{in}$.
Then $\pi(X) \in \cB(\cH_{c,0})$
\end{Lemma}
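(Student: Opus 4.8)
The plan is to reduce the statement to the boundedness of the single exponential factor $e^{-tL(\rho)}$ and then to control its norm by a differential (Gr\"onwall) inequality, using the Fewster--Hollands quantum energy inequality to supply the one crucial a priori estimate.

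First I would exploit the fact that $\pi(X) = U(\psi)e^{-tL(\rho)}U(\gamma)^*$ is a product whose outer two factors are already unitary. Since $\psi \in \Mob(\bbD)$, the representation $U = U_{c,0}$ restricts to an honest unitary representation of the M\"obius group, so $U(\psi)$ is unitary; and for any lift of $\gamma \in \Diff(S^1)$ to $\Diffc$ the operator $U(\gamma)$, hence $U(\gamma)^*$, is unitary. Therefore $\pi(X) \in \cB(\cH_{c,0})$ if and only if $e^{-tL(\rho)}$ extends to a bounded operator, with $\norm{\pi(X)} = \norm{e^{-tL(\rho)}}$. It is then useful to record the algebraic structure of $L(\rho) = \sum_{n \ge 0}\hat\rho(n)L_n$: because $L_n$ for $n \ge 0$ maps $(L(c,0))_m$ into $(L(c,0))_{m-n}$, the operator $L(\rho)$ preserves each finite-dimensional subspace $F_N := \bigoplus_{m \le N}(L(c,0))_m$, so $e^{-tL(\rho)}$ is an ordinary matrix exponential on each $F_N$, is a well-defined endomorphism of the finite-energy space, and the curve $t \mapsto e^{-tL(\rho)}\xi$ satisfies $\tfrac{d}{dt}e^{-tL(\rho)}\xi = -L(\rho)e^{-tL(\rho)}\xi$ for every finite-energy $\xi$.

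The crucial input, and the step I expect to be the main obstacle, is a lower bound on the Hermitian part of $L(\rho)$. Since $L_n^* = L_{-n}$, the Hermitian part $\tfrac12(L(\rho)+L(\rho)^*)$ is a smearing of the stress-energy field against a real vector field built from $\rho$, so I would invoke the two-dimensional conformal quantum energy inequality \cite[\S4-5]{FewsterHollands} to produce a constant $a = a(\rho,c) \ge 0$ with $\Re\ip{L(\rho)\xi,\xi} \ge -a\norm{\xi}^2$ for all finite-energy $\xi$. The real work here is to match the one-sided smearing $\sum_{n \ge 0}\hat\rho(n)L_n$ to the symmetric smearing that appears in the inequality, to verify the regularity and positivity hypotheses the inequality demands (which should follow from $\rho$ being the generator of a semigroup in the class of Definition \ref{def: good semigroup}, whose associated vector fields are smooth), and to keep track of the dependence of the resulting bound on the central charge $c$.

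Finally, with this accretivity estimate in hand, boundedness follows by differentiating along the orbit inside each invariant subspace $F_N$. For finite-energy $\xi$ the computation
\[
\frac{d}{dt}\norm{e^{-tL(\rho)}\xi}^2 = -2\Re\ip{L(\rho)e^{-tL(\rho)}\xi,\, e^{-tL(\rho)}\xi} \le 2a\norm{e^{-tL(\rho)}\xi}^2
\]
is justified because everything takes place in the finite-dimensional space $F_N$, and Gr\"onwall's inequality then yields $\norm{e^{-tL(\rho)}\xi} \le e^{at}\norm{\xi}$. Since the finite-energy vectors are dense in $\cH_{c,0}$, the operator $e^{-tL(\rho)}$ extends to a bounded operator of norm at most $e^{at}$, and combining this with the first paragraph gives $\pi(X) \in \cB(\cH_{c,0})$, as claimed.
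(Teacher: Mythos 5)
Your proposal is correct, and it isolates exactly the right crux. The paper itself only cites \cite[Prop.~3.13]{GRACFT1} for this lemma, but the argument carried out there (as the paper describes in the proof of Lemma \ref{lem: localizable}) runs: use the Fewster--Hollands quantum energy inequality to show that $-L(\rho)$, shifted by a real constant, is a dissipative generator, and then invoke the Lumer--Phillips theorem to conclude that $e^{-t(L(\rho)+a)}$ is a contraction semigroup, whence $\norm{e^{-tL(\rho)}} \le e^{at}$. Your route shares the identical key input --- the lower bound $\Re\ip{L(\rho)\xi,\xi} \ge -a\norm{\xi}^2$ coming from \cite{FewsterHollands}, together with the trivial reduction to $e^{-tL(\rho)}$ via unitarity of $U(\psi)$ and $U(\gamma)$ --- but replaces the abstract semigroup machinery with a Gr\"onwall estimate carried out on the finite-dimensional invariant subspaces $F_N$. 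This is a genuine simplification for this particular operator: because $L(\rho)$ involves only the modes $L_n$ with $n \ge 0$, the whole evolution stays inside $F_N$, so the differentiation of $\norm{e^{-tL(\rho)}\xi}^2$ is elementary and one never has to verify the range/closability hypotheses that Lumer--Phillips requires. What the Lumer--Phillips route buys instead is robustness: it does not depend on the existence of a filtration by finite-dimensional invariant subspaces, which is special to one-sided smearings in positive energy representations. You are also right to flag the translation between the one-sided smearing $\sum_{n\ge 0}\hat\rho(n)L_n$ and the symmetric smeared stress tensor to which the quantum energy inequality applies as the step requiring real care; that bookkeeping is precisely what occupies the cited proposition of \cite{GRACFT1}, and your proof is complete modulo that same external input.
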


\begin{Definition}[Choice of generalized annuli]
\label{def: choice of generalized annuli}
Let
$$
\scA_{I,c}^{in} = \{ \pi(X) : X \in \GAnn_I\},
$$
where we include all possible lifts of $\gamma$ to $\Diffc$.
Let
$$
\scA_{I,c} = \{(\pi(\tilde X)^*, \pi(X)) \,\, : \,\, (\tilde X, X) \in \GAnn_I \}
$$
We define the interiors
$$
\Int(\pi(X)) := \Int(X).
$$
\end{Definition}

Note that a priori the interior of $\pi(X)$ could depend on a particular choice of $X$, and not just on $\pi(X)$.
While it is possible for $\pi(X) = \pi(Y)$ in a non-trivial way (by adjusting $\psi$, $\rho$ and $\gamma$ by a common rotation), in fact the interior does not change.
Verifying the details of this is tedious and will be omitted as it does not affect the remainder of the paper.
We can simply pick a single $X$ to represent each generalized annulus $\pi(X)$\footnote{
More precisely, we pick a single $X$ for each orbit $U(r_{\theta})\pi(X)U(\gamma)^*$ under reparametrization, and use this to define the interior for the entire orbit.
When possible, we choose $X$ with non-empty interior.
}.

We now must verify that the given family $\scA_I$ is actually a system of generalized annuli.
First, we need need to verify that the operators $\pi(X)$ defined above are localizable.

\begin{Lemma}\label{lem: localizable}
Let $X:=(\psi,\rho,t,\gamma) \in \GAnn_I^{in}$.
Then there exists $\hat \gamma \in \Diff(S^1)$ such that $\hat\gamma \circ \psi \circ \varphi_t \circ \gamma^{-1}|_{I^\prime} = \operatorname{id}$.
For any such $\hat \gamma$, we have $U(\hat \gamma)\pi(X) \in \cA_c(I)$, for any lift of $\hat \gamma$ to $\Diffc$.
\end{Lemma}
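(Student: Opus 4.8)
The plan is to prove the two assertions in turn, reducing the membership $U(\hat\gamma)\pi(X)\in\cA_c(I)$ to a single covariance identity for the annulus operator $e^{-tL(\rho)}$.

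For the existence of $\hat\gamma$, I would first observe that on the arc $I^\prime$ the composite boundary map $g:=\psi\circ\varphi_t\circ\gamma^{-1}$ takes values in $S^1$. Indeed, $\gamma^{-1}$ is an orientation-preserving diffeomorphism carrying $I^\prime$ onto $\gamma^{-1}(I^\prime)$, and by the defining condition of $\GAnn^{in}_I$ we have $\varphi_t(\gamma^{-1}(I^\prime))\subset S^1$; by the smooth Riemann mapping theorem $\varphi_t|_{S^1}$ is a diffeomorphism onto $\partial\varphi_t(\bbD)$, so its restriction to $\gamma^{-1}(I^\prime)$ is a smooth embedding into $S^1$, and finally $\psi\in\Mob(\bbD)$ restricts to a diffeomorphism of $S^1$. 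Hence $g$ is an orientation-preserving smooth embedding of the open arc $I^\prime$ onto a proper open subarc $K\subset S^1$. Its inverse $g^{-1}:K\to I^\prime$ is a diffeomorphism of arcs which extends, matching boundary values and derivatives on the complementary arc $S^1\setminus K$ (carried to $S^1\setminus I^\prime$), to some $\hat\gamma\in\Diff(S^1)$ with $\hat\gamma\circ g|_{I^\prime}=\id$. This establishes the first claim.

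For the localization, I would fix a lift of $\hat\gamma$ and write $\eta=\hat\gamma\circ\psi$, so that $U(\hat\gamma)\pi(X)=U(\eta)\,e^{-tL(\rho)}\,U(\gamma)^*$ up to a phase (irrelevant to membership in the von Neumann algebra $\cA_c(I)$). Since $\cA_c$ is a local conformal net, Haag duality (Theorem \ref{thmFermiNetProps}) gives $\cA_c(I)=\cA_c(I^\prime)^\prime$, so it suffices to show that $U(\hat\gamma)\pi(X)$ commutes with $U(\beta)$ for every $\beta\in\Diff_c(I^\prime)$, these generating $\cA_c(I^\prime)$. Using $U(\gamma)^*U(\beta)=U(\gamma^{-1}\beta\gamma)U(\gamma)^*$ (up to phase) and cancelling the outer factors $U(\eta)$ and $U(\gamma)^*$, the desired commutation reduces to the intertwining identity
\begin{equation*}
U(\beta_1)\,e^{-tL(\rho)}=e^{-tL(\rho)}\,U(\beta_2),\qquad \beta_1:=\eta^{-1}\beta\eta,\quad \beta_2:=\gamma^{-1}\beta\gamma .
\end{equation*}
The identity $\hat\gamma\circ g=\id$ on $I^\prime$ reads $\varphi_t=\eta^{-1}\gamma$ on $\gamma^{-1}(I^\prime)$, from which a direct substitution gives $\beta_1=\varphi_t\circ\beta_2\circ\varphi_t^{-1}$ (both sides being the identity off the relevant arc). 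Thus the identity to be proven is exactly the statement that conjugation through the annulus operator $e^{-tL(\rho)}$ transports a boundary diffeomorphism supported in the thin arc $\gamma^{-1}(I^\prime)$ by $\varphi_t$.

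The main obstacle is this covariance identity. I expect to establish it by an ODE argument in the semigroup parameter: setting $F(s)=U(\varphi_s\beta_2\varphi_s^{-1})\,e^{-sL(\rho)}$ and $G(s)=e^{-sL(\rho)}U(\beta_2)$ on finite-energy vectors, both agree at $s=0$ and should satisfy the same first-order differential equation, the key input being the flow equation $\dot\varphi_s(z)=-\varphi_s(z)\rho(\varphi_s(z))$ of \eqref{eqn: univalent semigroup generator} together with the covariance of the smeared Virasoro field under $U$. The delicate point is that $\beta_2$ is supported in the thin arc, where $\varphi_s$ restricts to a genuine boundary diffeomorphism, so that $\tfrac{d}{ds}(\varphi_s\beta_2\varphi_s^{-1})$ matches the generator produced by differentiating $e^{-sL(\rho)}$ on exactly that arc; making this identification rigorous (together with the relevant domain and essential self-adjointness issues, and absorbing any central Schwarzian phase) is where the real work lies. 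Alternatively, this covariance is the operator shadow of the sewing law for degenerate annuli and should be available from the analysis of \cite{GRACFT1}, which I would cite if a self-contained argument proves unwieldy.
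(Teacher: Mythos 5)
Your overall architecture is sound: the construction of $\hat\gamma$ by extending $(\psi\circ\varphi_t\circ\gamma^{-1})^{-1}$ from the arc $\psi(\varphi_t(\gamma^{-1}(I^\prime)))$ to a diffeomorphism of $S^1$ works, and your reduction of the membership $U(\hat\gamma)\pi(X)\in\cA_c(I)$, via Haag duality, to the intertwining identity $U(\varphi_t\beta_2\varphi_t^{-1})\,e^{-tL(\rho)}=e^{-tL(\rho)}\,U(\beta_2)$ is correctly set up (your computation $\beta_1=\varphi_t\beta_2\varphi_t^{-1}$ is right). The problem is that this intertwining identity \emph{is} the lemma: everything hard has been pushed into it, and you do not prove it. The ODE scheme you sketch faces genuine obstructions, not just routine domain-chasing: $e^{-sL(\rho)}$ is a non-unitary contraction semigroup whose generator $-L(\rho)=-L(f)-iL(g)$ is neither self-adjoint nor skew-adjoint, so there is no ready-made invariant core on which to differentiate $F(s)=U(\varphi_s\beta_2\varphi_s^{-1})e^{-sL(\rho)}$, no standard uniqueness theorem for the resulting non-autonomous operator ODE, and the covariance of the smeared stress tensor under $\Ad U(\cdot)$ is only projective, so $F$ and $G$ would at best satisfy the same equation up to an $s$-dependent central phase that you must separately show integrates to $1$ (your ``up to a phase, irrelevant to membership'' remark is fine for $U(\hat\gamma)\pi(X)$ itself but not for the commutation relations, where an untracked phase would destroy the argument). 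Deferring to \cite{GRACFT1} does not close the gap either: that reference establishes boundedness of $e^{-tL(\rho)}$ and the Segal commutation relations in the fermionic Fock space, not the diffeomorphism-covariance identity you need in an arbitrary $\cH_{c,0}$.

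The paper avoids this identity altogether. It first reduces to $\psi=\gamma=\id$, constructs $\hat\gamma$ explicitly as the time-$t$ flow $\gamma_t$ of the imaginary part $g$ of $\rho$ (checking, via the flow equation \eqref{eqn: univalent semigroup generator}, that $f=\operatorname{Re}\rho$ vanishes on $\bigcup_{0\le s\le t}\varphi_s(I^\prime)$ and that $\gamma_s\circ\varphi_s=\id$ on $I^\prime$), and then applies the Trotter product formula to $-L(\rho)=-L(f)-iL(g)$ to write
$U(\gamma_t)e^{-tL(\rho)}=\lim_N\alpha_N\,e^{-tL(f_{N-1})/N}\cdots e^{-tL(f)/N}$
with each $f_j$ real and vanishing on $I^\prime$, so that each factor is affiliated with $\cA_c(I)$ and the strong limit lies in $\cA_c(I)$. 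This exhibits the operator directly as a limit of local elements, bypassing both Haag duality and the covariance identity; the only external inputs are the Fewster--Hollands energy bound (to verify the Trotter hypotheses) and Weiner's projective covariance of $e^{-tL(f)}$ for real $f$. If you want to salvage your route, the realistic path is to run the same Trotter decomposition and deduce your intertwining identity from it a posteriori, rather than proving it by an ODE argument.
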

\begin{proof}
We may assume without loss of generality that $\psi = \operatorname{id}$, by replacing $\hat \gamma$ by $\hat \gamma \circ \psi$.
We may assume without loss of generality that $\gamma = \operatorname{id}$ (by replacing $I$ with $\gamma(I)$ and $\hat \gamma$ with $\gamma \circ \hat \gamma$).
In this case, $\pi(X) = e^{-tL(\rho)}$.
It suffices to show that $U(\hat \gamma) e^{-tL(\rho)} \in \cA_c(I)$ for one such $\hat \gamma$, as any other differs by an element of $\Diff_c(I)$.
Moreover the result is preserved if $U(\hat \gamma)$ is replaced by a scalar multiple, so we need only verify the result for one lift of $\hat \gamma$ to $\Diffc$.

We first produce such a $\hat \gamma$.
Decompose $\rho(z) = f(z) + i g(z)$, and let $(\gamma_s)_{s \in \R} \subset \Diff(S^1)$ be the one parameter group obtained by integrating the real vector field $-izg(z)\frac{d}{dz} = g(e^{i \theta}) \frac{d}{d\theta}$ on $S^1$.
Since $(\id,\rho,t,\id) \in \GAnn_I^{in}$, we have $J:= \varphi_t(I^\prime) \subset S^1$.
For all $s$ we have $\varphi_s(\interior{\bbD}) \subset \interior{\bbD}$, so for $0 \le s \le t$ we must have $\varphi_s(I^\prime) \subset S^1$.
Since $\dot{\varphi_s}(z) = -\varphi_s(z) \rho(\varphi_s(z))$ by \eqref{eqn: univalent semigroup generator}, we must have that $\rho|_{\varphi_s(I^\prime)}$ is purely imaginary for all $0 \le s \le t$.
Setting $\tilde J = \bigcup_{0 \le s \le t} \varphi_s(I^\prime)$, we have $f|_{\tilde  J} = 0$, and so for $z \in I^\prime$ and $0 \le s \le t$, we have
$\dot{\varphi_s}(z) = i \varphi_s(z)g(\varphi_s(z))$ and $\varphi_0(z) = z$.
Thus $\varphi_s(z)$ satisfies the initial value problem characterizing $\gamma_{-s}(z)$, and we have $\gamma_s(\varphi_s(z)) = z$ for all $z \in I^\prime$ and $0 \le s \le t$.
In particular, we have $\gamma_t \circ \varphi_t|_{I^\prime} = \id|_{I^\prime}$.
Thus $\gamma_t$ will play the role of $\hat \gamma$ in the following, and we will show that $U(\gamma_t)\pi(X) \in \cA_c(I)$.

Since $U$ is obtained by integrating the positive energy representation of $\Vir_c$ on $\cH_{c,0}$, we have $U(\gamma_s) = e^{isL(g)}$ (see \cite[\S3.2]{Weiner05}).
We must verify that $e^{itL(g)}e^{-tL(\rho)} \in \cA_c(I)$.
We will do so via the Trotter product formula applied to $-tL(f)-itL(g)$ (see e.g.\cite[Cor. 3.5.5]{Pazy}).
Observe that $-L(f)$, $-iL(g)$, and $-L(\rho)$ share a common core of smooth vectors for $L_0$, and on this core $-L(\rho) = -L(f)-i L(g)$.
Therefore to verify the hypotheses of the Trotter product formula, it suffices to show that for each operator we may add a real constant and have the resulting operator generate a contraction semigroup.
This is clear for $iL(g)$, as it generates a unitary group.
For the self-adjoint operator $-L(f)$ it is also clear, as its spectrum is bounded above by \cite{FewsterHollands} and functional calculus provides the desired semigroup.
Finally, in \cite[Prop. 3.13]{GRACFT1} we verified directly that $-L(\rho)$ may be shifted to obtain a contraction semigroup generator (again using \cite{FewsterHollands} to invoke the Lumer-Phillips theorem \cite[Cor. 1.4.4]{Pazy}).
Thus we have 
\begin{equation}\label{eqn: Trotter expansion}
e^{-t L(\rho)} = \lim_{N \to \infty}  (e^{-itL(g)/N}e^{-tL(f)/N} )^N
= \lim_{N \to \infty} (U(\gamma_{t/N})^*e^{-tL(f)/N} )^N,
\end{equation}
in the strong operator topology.
Fix $N$, and for $0 \le j \le N$ set $f_j = \gamma_{-tj/N}^\prime \cdot (f \circ \gamma_{-tj/N})$.
By \cite[Lem. 6.2.22]{Weiner05}, we have for $0 \le j < N$
\begin{equation}\label{eqn: stress energy projective covariance}
U(\gamma_{t/N}) e^{-tL(f_j)/N} U(\gamma_{t/N})^* = \alpha_j e^{-tL(f_{j+1})}
\end{equation}
for scalars $\alpha_j$.
Hence applying \eqref{eqn: stress energy projective covariance} to \eqref{eqn: Trotter expansion} $N$ times, we obtain scalars $\alpha_N$ such that
\begin{align}
\label{eqn: local generalized annuli trotter calculation}
U(\gamma_t) e^{-t L(\rho)} 
&= U(\gamma_t) \lim_{N \to \infty} \big(U(\gamma_{t/N})^*e^{-tL(f)/N} \big)^N \\
&=  \lim_{N \to \infty} U(\gamma_{t/N})^N\big(U(\gamma_{t/N})^*e^{-tL(f)/N} \big)^N \nonumber\\
&=  \lim_{N \to \infty}U(\gamma_{t/N})^{N-1}e^{-tL(f)/N} \big(U(\gamma_{t/N})^*e^{-tL(f)/N} \big)^{N-1} \nonumber\\
&=  \lim_{N \to \infty} \alpha^\prime e^{-tL(f_{N-1})/N}U(\gamma_{t/N})^{N-1} \big(U(\gamma_{t/N})^*e^{-tL(f)/N} \big)^{N-1} \nonumber\\
&= \lim_{N \to \infty} \alpha_N \,\, e^{-tL(f_{N-1})/N} \, e^{-tL(f_{N-2})/N} \cdots e^{-tL(f)/N}, \nonumber
\end{align}
with the limit converging in the strong operator topology.
The precise value of $\alpha^\prime$, which is not important for us, is $\alpha_0 \cdots \alpha_{N-2}$.
What is important, especially for Corollary \ref{cor: action of annuli on non-vacuum} is that $\alpha^\prime$ and $\alpha_N$ are determined by the $\alpha_j$ of \eqref{eqn: stress energy projective covariance}.

We showed above that for all $0 \le s \le t$, $f$ vanishes on $\varphi_s(I^\prime)$, and that $\gamma_s(\varphi_s(z)) = z$ for all $z \in I^\prime$.
Hence for all $0 \le j < N$, we have that $f_j$ vanishes on $I^\prime$, and so $L(f_j)$ is affiliated with $\cA_c(I)$ (see e.g. the more general statement \cite[Prop. 5.1.6]{Weiner05}).
Thus the calculation \eqref{eqn: local generalized annuli trotter calculation} exhibits $U(\gamma_t) e^{-t L(\rho)}$ as a strong limit of operators in $\cA_c(I)$, and we obtain the desired conclusion that $U(\gamma_t) e^{-t L(\rho)} \in \cA_c(I)$ as well.
\end{proof}

\begin{Corollary}\label{cor: generalized annuli localizable}
If $X \in \GAnn_I^{in}$ then  $\pi(X) \in \Ann^{\ell}_I(\cH_{c,0})$.
If $(\tilde X, X) \in \GAnn_I$, then $(\pi(\tilde X)^*, \pi(X)) \in \Ann_I(\cH_{c,0})$.
\end{Corollary}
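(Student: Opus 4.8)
The plan is to build on Lemma \ref{lem: localizable}, which already supplies the localizing diffeomorphisms; what remains is to verify the density/injectivity conditions built into the definition of $\Ann^{\ell}_I$ and $\Ann^{r}_I$, and to exhibit a single $\gamma$ that simultaneously localizes $\pi(X)$ and $\pi(\tilde X)^*$. For the first statement, Lemma \ref{lem: localizable} produces $\hat\gamma \in \Diffc$ with $U(\hat\gamma)\pi(X) \in \cA_c(I)$, so $\pi(X)$ is left localizable in $I$ in the sense of Definition \ref{def: localizable operators}. The only additional requirement for membership in $\Ann^{\ell}_I$ is that $\pi(X)$ have dense image.

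To establish dense image, I would write $\pi(X) = U(\psi)e^{-tL(\rho)}U(\gamma)^*$ and note that $U(\psi)$ and $U(\gamma)^*$ are unitaries, so it suffices to show that $e^{-tL(\rho)}$ has dense image. Here I would use that $L(\rho) = \sum_{n \ge 0}\hat\rho(n)L_n$ preserves each finite-dimensional subspace $V_{\le N} = \bigoplus_{\alpha \le N} V_\alpha$, since $L_n V_\alpha \subset V_{\alpha - n}$ for $n \ge 0$. With respect to the $L_0$-grading, $L(\rho)|_{V_{\le N}}$ is triangular with diagonal entries $\hat\rho(0)\alpha$ coming from the $L_0$-term, the higher modes being strictly energy-lowering. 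Consequently $e^{-tL(\rho)}|_{V_{\le N}}$ is triangular with nonzero diagonal entries $e^{-t\hat\rho(0)\alpha}$, hence invertible on $V_{\le N}$. Taking the union over $N$ shows $e^{-tL(\rho)}$ restricts to a bijection of the finite-energy vectors, so its image contains this dense subspace. This gives $\pi(X) \in \Ann^{\ell}_I$.

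For the pair statement, $\pi(X) \in \Ann^{\ell}_I$ is the content just proven, and since $\tilde X \in \GAnn_I^{in}$ the same argument gives $\pi(\tilde X) \in \Ann^{\ell}_I$; taking adjoints (an operator has dense image exactly when its adjoint is injective, and $\cA_c(I)$ is closed under the $*$-operation) yields $\pi(\tilde X)^* \in \Ann^{r}_I$. The key point, and the one place where the defining compatibility \eqref{eqn: agree on thin} of $\GAnn_I$ is used, is the existence of a common $\gamma$ as required by Definition \ref{def: annulus pair}. Writing $\beta = \psi\circ\varphi_t\circ\gamma^{-1}$ and $\tilde\beta = \tilde\psi\circ\tilde\varphi_{\tilde t}\circ\tilde\gamma^{-1}$, condition \eqref{eqn: agree on thin} says $\beta|_{I^\prime} = \tilde\beta|_{I^\prime}$. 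Hence any $\hat\gamma$ produced by Lemma \ref{lem: localizable} for $X$, satisfying $\hat\gamma\circ\beta|_{I^\prime} = \id$, also satisfies $\hat\gamma\circ\tilde\beta|_{I^\prime} = \id$ and therefore localizes $\tilde X$ as well. Fixing a lift of $\hat\gamma$ to $\Diffc$, we then have $U(\hat\gamma)\pi(X) \in \cA_c(I)$ and $U(\hat\gamma)\pi(\tilde X) \in \cA_c(I)$; taking the adjoint of the latter gives $\pi(\tilde X)^* U(\hat\gamma)^* \in \cA_c(I)$, which is exactly the pair of conditions in Definition \ref{def: annulus pair} with the common diffeomorphism $\hat\gamma$. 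This places $(\pi(\tilde X)^*, \pi(X))$ in $\Ann_I(\cH_{c,0})$.

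The localizability is inherited verbatim from Lemma \ref{lem: localizable}, so the only genuinely new verification is the dense-image claim, which is routine once the triangular structure of $L(\rho)$ with respect to the energy grading is observed. I do not expect a serious obstacle; the conceptual crux of the pair statement is simply recognizing that \eqref{eqn: agree on thin} is precisely what allows a single localizing diffeomorphism to serve both annuli simultaneously.
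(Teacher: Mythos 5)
Your proposal is correct and follows essentially the same route as the paper's proof: invoke Lemma \ref{lem: localizable} for localizability, observe that $e^{-tL(\rho)}$ maps $L(c,0)$ onto itself to get dense image (the paper asserts this directly, you supply the triangularity-with-respect-to-the-energy-grading justification), and use the compatibility condition \eqref{eqn: agree on thin} to extract a single $\hat\gamma$ localizing both $\pi(X)$ and $\pi(\tilde X)$ so that Definition \ref{def: annulus pair} is satisfied. No gaps.
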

\begin{proof}
To show that $\pi(X) \in \Ann_I^\ell(\cH_{c,0})$, by definition we must produce a diffeomorphism $\hat \gamma$ such that $U(\hat \gamma)\pi(X) \in \cA_c(I)$, and this is demonstrated by Lemma \ref{lem: localizable}.
Since $e^{-tL(\rho)}$ maps $L(c,0)$ onto itself, it has dense image.
Moreover, if $X = (\psi,\rho,t,\gamma)$ and $\tilde X = (\tilde \psi, \tilde \rho, \tilde t, \tilde \gamma)$ lie in $\GAnn_I$, then by definition 
$$
(\psi \circ \phi_t \circ \gamma^{-1})|_{I^\prime} = (\tilde \psi \circ \tilde \phi_{\tilde t} \circ \tilde \gamma^{-1})|_{I^\prime}.
$$
Thus we may choose a common $\hat \gamma$ in Lemma \ref{lem: localizable} such that $U(\hat \gamma)\pi(X), U(\hat \gamma) \pi(\tilde X) \in \cA_c(I)$.
Hence $(\pi(\tilde X)^*, \pi(X)) \in \Ann_I(\cH_{c,0})$ by definition.
\end{proof}

\begin{Corollary}\label{cor: existence of geometric ann}
For all intervals $I$, there exists a $(\tilde X, X) \in \GAnn_I$ such that $\Int(\tilde X, X) \ne \emptyset$.
Moreover, $\tilde X$ may be taken to be a diffeomorphism, and we may take the M\"{o}bius transformation $\psi$ of $X=(\psi,\rho,t,\gamma)$ to be the identity.
There exists an interval $I$ such that $\tilde X$ may be taken to be the identity $(\id, 0, 0, \id)$.
\end{Corollary}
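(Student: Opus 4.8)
The plan is to manufacture a single genuinely degenerate semigroup in $\scG$ and then read off every required configuration from it by adjusting boundary parametrizations. The essential geometric input is a semigroup $(\varphi_t)_{t\ge 0}\in\scG$ whose image $\varphi_t(\interior{\bbD})$ is a proper Jordan subdomain of $\interior{\bbD}$ whose boundary adheres to $S^1$ along an honest arc. I would build its generator in Berkson--Porta form $-z\rho(z)$, taking $\rho$ holomorphic on $\interior{\bbD}$ and smooth up to $\bbD$ with $\Re\rho\ge 0$, where $\Re\rho$ vanishes on a closed arc $\mathcal A\subsetneq S^1$ and is strictly positive on $S^1\setminus\mathcal A$; concretely let $\Re\rho$ be the Poisson extension of a nonnegative smooth function supported off $\mathcal A$ (normalized so that $\rho(0)=1$) and let $\Im\rho$ be its harmonic conjugate. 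Along $\mathcal A$ the vector field $-z\rho(z)$ is tangent to $S^1$, so the boundary flow keeps a nontrivial closed arc $\mathcal A$ on the circle while the complementary arc is pushed strictly inward; thus for $t>0$ the set $\interior{\bbD}\setminus\varphi_t(\bbD)$ is a nonempty ``lune'' and $\varphi_t(\mathcal A)\subseteq S^1$. This is precisely the type of degenerate annulus already constructed in \cite{GRACFT1}, and the $C^\infty$ Jordan-domain requirements built into $\scU$ and $\scG$ (for both $\varphi_t(\interior{\bbD})$ and $\sigma(\interior{\bbD})$) follow from the smoothness of $\rho$ via the smooth Riemann mapping theorem.

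Granting such a semigroup, nonemptiness of the interior is immediate: for any $X=(\id,\rho,t,\gamma)$ with $t>0$ we have $\Int(X)=\interior{\bbD}\setminus\varphi_t(\bbD)$, which is the lune above and does not depend on $\gamma$. For a general interval $I$, choose $\gamma\in\Diff(S^1)$ with $\gamma^{-1}(I')\subseteq\mathcal A$ (possible since both are proper arcs), so that $\varphi_t(\gamma^{-1}(I'))\subseteq\varphi_t(\mathcal A)\subseteq S^1$ and $X:=(\id,\rho,t,\gamma)\in\GAnn^{in}_I$. The composite $\varphi_t\circ\gamma^{-1}$ carries $I'$ diffeomorphically onto a proper subarc of $S^1$, hence extends to a diffeomorphism $\tilde\gamma^{-1}$ of $S^1$; setting $\tilde X:=(\id,0,0,\tilde\gamma)$, the matching condition \eqref{eqn: agree on thin} holds by construction, and $\tilde X\in\GAnn^{in}_I$ automatically since $\tilde t=0$. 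Thus $(\tilde X,X)\in\GAnn_I$ with $\tilde X$ a diffeomorphism and $\psi=\id$. Because $\tilde t=0$ forces $\Int(\tilde X)=\emptyset$, the outgoing half contributes nothing to $\Int(\tilde X,X)$, which therefore contains the nonempty open set $\Int(X)$; hence $\Int(\tilde X,X)\ne\emptyset$. For the distinguished interval, take $I$ with $I'\subseteq\varphi_t(\mathcal A)$ and choose $\gamma$ with $\gamma^{-1}|_{I'}=\varphi_t^{-1}|_{I'}$ (extended arbitrarily to a diffeomorphism of $S^1$); then $(\varphi_t\circ\gamma^{-1})|_{I'}=\id|_{I'}$, so \eqref{eqn: agree on thin} is satisfied by the outright identity $\tilde X=(\id,0,0,\id)$, as claimed.

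The one substantive step is the construction in the first paragraph: producing an element of $\scG$ whose boundary genuinely clings to $S^1$ along an arc while meeting the $C^\infty$ regularity demands of $\scU$ and $\scG$, and confirming (as in the opening of the proof of Lemma \ref{lem: localizable}) that the forward flow keeps the relevant arc on $S^1$. Everything after that is bookkeeping---choosing the reparametrizations $\gamma$ and $\tilde\gamma$ to fit the definitions of $\GAnn^{in}_I$ and $\GAnn_I$---so the analytic heart of the statement is entirely contained in the existence of the degenerate semigroup established in \cite{GRACFT1}.
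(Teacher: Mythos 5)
Your proof is correct and follows essentially the same route as the paper: both rest on the degenerate semigroup constructed in \cite[\S2.4]{GRACFT1}, pair the resulting incoming annulus with a pure diffeomorphism (or the identity) whose parametrization extends $(\varphi_t\circ\gamma^{-1})|_{I'}$ from the thin arc, and observe that the lune $\interior{\bbD}\setminus\varphi_t(\bbD)$ already makes the interior nonempty. The only cosmetic difference is that the paper reaches an arbitrary interval by transporting a single fixed configuration via $X\mapsto X\circ\hat\gamma^{-1}$, whereas you choose the reparametrization $\gamma$ afresh for each $I$ so that $\gamma^{-1}(I')$ lands in the fixed arc $\mathcal{A}$ --- these are equivalent uses of the same reparametrization freedom.
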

\begin{proof}
In \cite[\S2.4]{GRACFT1} we demonstrated the existence of a $X=(\id,\rho,t,\gamma) \in \GAnn^{in}_I$ for some interval $I$, with the property that $0 \in \varphi_t(\interior{\bbD})$.
If $\tilde \gamma$ is a diffeomorphism from Lemma \ref{lem: localizable}, and we identify $\tilde \gamma$ with $(\id, \rho, 0, \tilde \gamma^{-1}) \in \GAnn_I$, then $(\tilde \gamma, X) \in \GAnn_I$.
For any diffeomorphism $\hat \gamma$, we have $X \circ \hat \gamma^{-1} := (\id,\rho,t, \hat \gamma \circ \gamma) \in \GAnn_{\hat \gamma(I)}$, and $(\tilde \gamma \circ \hat \gamma^{-1}, X \circ \hat \gamma^{-1}) \in \GAnn_{\hat \gamma(I)}$.
Choosing $\hat \gamma = \tilde \gamma$ provides an example with $\tilde X = \id$.
Alternatively, for any interval $J$ we may choose $\hat \gamma$ so that $\hat \gamma(I)= J$.
\end{proof}

In the process of proving Lemma \ref{lem: localizable}, we also computed the action of elements of $\scA_I^{in}$ on (non-vacuum) positive energy representations.
We record this observation now.
\begin{Corollary}\label{cor: action of annuli on non-vacuum}
Let $(\id,\rho,t,\id) \in \GAnn_I$, and decompose $\rho(z) = f(z) + i g(z)$. 
Let $\gamma_s = \operatorname{Exp}(i s z g(z) \frac{d}{dz})$, regarded as a subgroup of $\Diff^{(\infty)}(S^1)$.
Equip $\gamma_s$ with a lift to $\Diff^{(\infty)}_c(S^1)$.
Then $U(\gamma_t)e^{-tL(\rho)} \in \cA_c(I)$ and
$$
\pi_{c,h,I}(U(\gamma_t)e^{-tL(\rho)}) = U_{c,h}(\gamma_t)e^{-tL(\rho)}.
$$
\end{Corollary}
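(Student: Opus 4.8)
The first assertion, that $U(\gamma_t)e^{-tL(\rho)} \in \cA_c(I)$, is precisely what was established in the proof of Lemma \ref{lem: localizable}, specialized to $\psi = \gamma = \id$ so that $\pi(X) = e^{-tL(\rho)}$ and the diffeomorphism produced there may be taken to be $\gamma_t$. My plan for the second assertion is to apply the representation $\pi_{c,h,I}$ directly to the localized factorization of $U(\gamma_t)e^{-tL(\rho)}$ obtained in \eqref{eqn: local generalized annuli trotter calculation}, namely the strong-operator limit
$$
U(\gamma_t) e^{-t L(\rho)} = \lim_{N \to \infty} \alpha_N \, e^{-tL(f_{N-1})/N} \cdots e^{-tL(f)/N},
$$
in which every factor $e^{-tL(f_j)/N}$ already lies in $\cA_c(I)$ because $f_j$ vanishes on $I'$. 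Note that it is essential to use this factorization rather than the factors $U(\gamma_t)$ and $e^{-tL(\rho)}$ individually, since the latter need not lie in $\cA_c(I)$ (the imaginary part $g$ of $\rho$ need not vanish on $I'$).

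First I would identify the image under $\pi_{c,h,I}$ of each localized factor. Because $f_j$ is real and vanishes on $I'$, the operator $L(f_j)$ is self-adjoint and affiliated with $\cA_c(I)$, and its unitary group is $e^{isL(f_j)} = U(\beta^{(j)}_s)$, where $\beta^{(j)}_s = \operatorname{Exp}(is\, z f_j(z)\tfrac{d}{dz}) \in \Diff_c(I)$. The defining property of $\pi_{c,h}$ recorded in Section \ref{sec: virasoro nets} gives $\pi_{c,h,I}(U(\beta^{(j)}_s)) = U_{c,h}(\beta^{(j)}_s)$, whose generator is the smeared field $L(f_j)$ acting on $\cH_{c,h}$. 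By Stone's theorem the affiliated generators correspond; since $L(f_j)$ is bounded above by the quantum energy inequality \cite{FewsterHollands}, $e^{-tL(f_j)/N}$ is a bounded Borel function of it, and normality of $\pi_{c,h,I}$ preserves this functional calculus, so $\pi_{c,h,I}(e^{-tL(f_j)/N}) = e^{-tL(f_j)/N}$, the right-hand side now acting on $\cH_{c,h}$.

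Next I would push $\pi_{c,h,I}$ through the limit. The partial products $\alpha_N \prod_j e^{-tL(f_j)/N}$ converge strongly, hence are uniformly norm-bounded by Banach--Steinhaus; on norm-bounded sets the strong and $\sigma$-strong topologies agree, so the convergence is $\sigma$-strong, and as $\pi_{c,h,I}$ is normal it is $\sigma$-strongly continuous. Therefore
$$
\pi_{c,h,I}\big(U(\gamma_t)e^{-tL(\rho)}\big) = \lim_{N\to\infty} \alpha_N \, e^{-tL(f_{N-1})/N} \cdots e^{-tL(f)/N},
$$
interpreted on $\cH_{c,h}$ with the same scalars $\alpha_N$. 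I would then recognize the right-hand side as $U_{c,h}(\gamma_t)e^{-tL(\rho)}$ by running the derivation of \eqref{eqn: local generalized annuli trotter calculation} in reverse inside $\cH_{c,h}$, since the Trotter product formula \eqref{eqn: Trotter expansion} and the covariance relation \eqref{eqn: stress energy projective covariance} both hold verbatim in any positive-energy representation of $\Vir_c$.

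The step I expect to require the most care is the claim that the scalars $\alpha_N$ are identical in $\cH_{c,h}$ and $\cH_{c,0}$. These arise from the projective cocycle governing the conjugation of $e^{-tL(f_j)/N}$ by $U(\gamma_{t/N})$ in \eqref{eqn: stress energy projective covariance}, and matching them amounts to the fact that $U_{c,0}$ and $U_{c,h}$ are representations of the \emph{same} central extension $\Diffcinfty$, with the central circle acting standardly in both; consequently every scalar relation among the $U(\gamma)$ and the exponentiated smeared fields transfers unchanged. Making this precise — tracking the lift of $\gamma_s$ to $\Diffcinfty$ and confirming that \cite[Lem.\ 6.2.22]{Weiner05} produces identical cocycles in the two representations — is the technical heart of the argument, whereas the interchange of $\pi_{c,h,I}$ with the strong limit and the functional-calculus identities are comparatively routine.
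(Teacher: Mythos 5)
Your proposal is correct and follows essentially the same route as the paper's proof: localize via the Trotter factorization from Lemma \ref{lem: localizable}, observe that $\pi_{c,h,I}$ fixes each factor $e^{-tL(f_j)/N}$ (the paper simply cites \cite[Lem. 5.5]{Weiner17} for this rather than arguing via Stone's theorem and functional calculus), pass through the strong limit by normality, and match the scalars $\alpha_N$ using the representation-independence of the cocycle in \eqref{eqn: stress energy projective covariance} (for which the paper cites \cite[Lem. 6.2.21]{Weiner05} and \cite{GoWa85}). One trivial slip: for $e^{-tL(f_j)/N}$ to be bounded you need $L(f_j)$ bounded \emph{below} (equivalently $-L(f_j)$ bounded above, which is what \cite{FewsterHollands} provides), not bounded above as written.
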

\begin{proof}
We showed in Lemma \ref{lem: localizable} that $U(\gamma_t)e^{-tL(\rho)} \in \cA_c(I)$.
Moreover, we showed that
$$
U(\gamma_t) e^{-t L(\rho)} = \lim_{N \to \infty} \alpha_N \,\, e^{-tL(f_{N-1})/N} \, e^{-tL(f_{N-2})/N} \cdots e^{-tL(f)/N}
$$
where $f_j$ is as in the proof of the lemma, and the scalars $\alpha_N$ are determined by the $\alpha_j$ of \eqref{eqn: stress energy projective covariance}.
In fact, this argument applies in any positive energy representation with central charge $c$, and the scalars $\alpha_j$ only depend on $f$, $j$, and $\gamma_s$, but not on the particular representation (see \cite[Lem. 6.2.21]{Weiner05}, and originally \cite{GoWa85}).
Hence
$$
U_{c,h}(\gamma_t) e^{-t L(\rho)} = \lim_{N \to \infty} \alpha_N \,\, e^{-tL(f_{N-1})/N} \, e^{-tL(f_{N-2})/N} \cdots e^{-tL(f)/N}
$$
for the same scalars $\alpha_N$.
Since $\pi_{c,h,I}(e^{-tL(f_j)/N}) = e^{-tL(f_j)/N}$ by \cite[Lem. 5.5]{Weiner17} and $\pi_I$ is strongly continuous, the desired result follows.
\end{proof}

With this preliminary work done, we can verify the necessary axioms to show that $\scA_I$ is a system of generalized annuli.

\begin{Lemma}
The collections $\scA_I \subset \scA_{I}^{out} \times \scA_{I}^{in}$ are a system of generalized annuli.
\end{Lemma}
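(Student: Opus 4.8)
The plan is to unwind the three nested definitions in order—Definition~\ref{def: system of incoming annuli} (incoming/outgoing systems together with their interior functions), Definition~\ref{def: system of annuli central charge c} (the seven axioms at a fixed central charge), and Definition~\ref{def: system of generalized annuli} (compatibility with tensor products)—and to check each required property in turn. Throughout I would work with the explicit description $\pi(X) = U(\psi)e^{-tL(\rho)}U(\gamma)^*$ for $X = (\psi,\rho,t,\gamma)$, together with the localizability already established in Corollary~\ref{cor: generalized annuli localizable} and the covariance identities of Lemma~\ref{lem: diff covariance of generalized annuli}, so that the verifications become bookkeeping on the tuples $(\psi,\rho,t,\gamma)$.

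First I would dispatch the incoming-system axioms of Definition~\ref{def: system of incoming annuli}. One has $1 = \pi(\id,\rho,0,\id) \in \scA^{in}_I$; monotonicity $\GAnn^{in}_I \subseteq \GAnn^{in}_J$ for $I \subseteq J$ is immediate from the defining condition $\varphi_t(\gamma^{-1}(I^\prime)) \subset S^1$ since $J^\prime \subseteq I^\prime$; and closure under $U(r_\theta)\,\cdot\,U(\gamma)^*$ follows by absorbing the rotation on the left into $\psi$ and the diffeomorphism on the right into $\gamma$, using $U(r_\theta)\pi(X)U(\gamma)^* = \pi(r_\theta\psi,\rho,t,\gamma\circ\gamma_X)$ and checking that the localization interval transforms to $\gamma(I)$ via $\varphi_t((\gamma\gamma_X)^{-1}(\gamma(I)^\prime)) = \varphi_t(\gamma_X^{-1}(I^\prime))$. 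The interior $\Int(\pi(X)) = \psi(\interior{\bbD}\setminus\varphi_t(\bbD))$ has the required shape because $\varphi_t(\bbD)$ is a smooth Jordan domain (Definition~\ref{def: good univalent}) containing $0$ in its interior while $\psi$ is a disk automorphism; reparametrization-compatibility of $\Int$ and the condition $\gamma^{-1}(I^\prime)\subseteq\partial D\cap S^1$ are read off from the proof of Lemma~\ref{lem: localizable}.

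For the seven axioms of Definition~\ref{def: system of annuli central charge c}, most reduce to elementary manipulations of the tuples and of condition~\eqref{eqn: agree on thin}: monotonicity and adjoint-closure (axioms (1),(2)) are symmetry properties of $\GAnn_I$ and of \eqref{eqn: agree on thin}; axiom (3) uses the $t=0$ tuples, with $\pi(\id,\rho,0,\gamma^{-1}) = U(\gamma)$ and $\gamma\in\Diff_c(I)$ acting trivially on $I^\prime$ so that agreement with the trivial outgoing annulus is automatic; axiom (4) (full diffeomorphism covariance) follows by conjugating tuples as above and checking that \eqref{eqn: agree on thin} persists after the substitution $w = \gamma(z)$; axioms (6),(7) come from Corollary~\ref{cor: existence of geometric ann} and the definition of $\scA_I$. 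The substantive point, which I expect to be the main obstacle, is axiom (5): from $(B,A)\in\scA_J$ one rotates only the incoming factor and must show $(B,U(r_\theta)AU(r_\theta)^*)\in\scA_I$ for any interval $I\supseteq J\cup r_\theta(J)$. Writing $A = \pi(X_0)$ with inner parametrization $P_0 = \psi_0\circ\varphi_{t_0}\circ\gamma_0^{-1}$, the rotated factor $U(r_\theta)AU(r_\theta)^*$ has inner parametrization $r_\theta\circ P_0\circ r_\theta^{-1}$, and the delicate step is to confirm that this still agrees on the \emph{enlarged} complementary interval $I^\prime\subseteq J^\prime\cap r_\theta(J)^\prime$ with the unrotated outgoing parametrization coming from $B$. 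This cannot be reduced to a formal tuple identity; instead it requires tracking the boundary flow $\gamma_s$ produced in Lemma~\ref{lem: localizable} over $I^\prime$ and using that over this region the degenerate annuli collapse onto $S^1$, so that the relevant parametrizations are pinned down by the localizing diffeomorphisms rather than free to differ.

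Finally, for the tensor-product compatibility of Definition~\ref{def: system of generalized annuli}, the key observation is that the geometric data $\GAnn^{in}_I$ and $\GAnn_I$ are independent of the central charge; only the representation $\pi = \pi_c$ on $\cH_{c,0}$ depends on $c$. Since $L(\rho) = \sum_{n\ge0}\hat\rho(n)L_n$ and the vacuum representation of $\Vir_{c_1+c_2}$ realized on $\cH_{c_1,0}\otimes\cH_{c_2,0}$ acts through the coproduct $L_n\mapsto L_n\otimes 1 + 1\otimes L_n$, the factors $e^{-tL(\rho)}$, $U(\psi)$, and $U(\gamma)$ all split as tensor products, giving $\pi^{c_1\otimes c_2}(X) = \pi^{c_1}(X)\otimes\pi^{c_2}(X)$. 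Thus one and the same tuple $X$ produces $A = \pi^{c_1+c_2}(X)$ together with the factors $A^\prime = \pi^{c_1}(X)$ and $A^{\prime\prime} = \pi^{c_2}(X)$, with common interior $\Int(X)$; this yields simultaneously the required decomposition and its converse, and the pair condition $(B,A)\in\scA_{I,c_1+c_2}$ becomes equivalent to $(\tilde X, X)\in\GAnn_I$, hence to the two conditions at charges $c_1$ and $c_2$ separately, since $\GAnn_I$ is charge-independent.
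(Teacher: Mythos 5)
Your proposal correctly identifies the overall shape of the verification, and your treatment of the incoming-system axioms and of tensor-product compatibility matches the paper in substance (the paper routes the tensor statement through Corollary \ref{cor: action of annuli on non-vacuum} precisely to control the scalar ambiguities you elide, but the idea is the same). However, you have misplaced the difficulty, and the step you wave away is the one that actually requires an argument.

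The paper treats axiom (5) as part of the routine bookkeeping and isolates axiom (6) as the \emph{only} property needing a genuine proof: for every interval $I$ there must exist $(1,A)\in\scA_I$ with $\Int(A)\ne\emptyset$. You claim this ``comes from Corollary \ref{cor: existence of geometric ann} and the definition of $\scA_I$,'' but that corollary only produces such a pair for a \emph{single} interval (the last sentence of its statement is explicitly existential in $I$). The obstruction to propagating it is exactly the covariance you catalogued: conjugating $(1,A)$ by $U(\gamma)$ for a general diffeomorphism $\gamma$ turns the outgoing component into $U(\gamma)\ne 1$, so axiom (4) only yields pairs of the form $(1,\cdot)$ for \emph{rotations} of the original interval, all of the same length. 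The paper closes this gap by conjugating instead by M\"obius transformations $\psi$, which can be absorbed into the tuple as $X_\psi=(\psi,\rho,t,\psi\circ\gamma)$ while keeping the outgoing annulus trivial --- but this is only legitimate when $0\in\psi(\varphi_t(\interior{\bbD}))$, and arranging that constraint simultaneously with $|\psi(I)|$ arbitrarily small requires a genuinely geometric argument (a disk in $\varphi_t(\bbD)$ tangent to $S^1$, pushed past the origin by a hyperbolic one-parameter M\"obius group). Once short intervals are available at one position, rotations and the monotonicity axiom give all intervals. None of this is present in your proposal, and it is the core of the proof.

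Conversely, the ``delicate step'' you flag in axiom (5) is not where the paper spends effort: on $I^\prime\subseteq J^\prime\cap r_\theta(J^\prime)$ the localization condition and the agreement condition \eqref{eqn: agree on thin} restrict from $J^\prime$ to the smaller set directly, and the paper classifies this among the verifications that ``hold by definition.'' Your discussion of it is in any case only a description of what one would need to check, not an argument, so even if it were the crux it would remain unproved as written.
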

\begin{proof}
First, in Corollary \ref{cor: generalized annuli localizable} above we verified that $\scA_I^{in} \subset \Ann^{\ell}_I(\cH_{c,0})$ and that $\scA_I \subset \Ann_I(\cH_{c,0})$.
While there are many other axioms to check in the definition of a system of generalized annuli, almost all of them hold for $\scA_I$ by definition and we will omit the routine verification.
The only property of a system of generalized annuli of central charge $c$ that needs to be verified is number 6, that for every $I$ there is a $(1,A) \in \scA_I$ such that $\Int(A) \ne \emptyset$.

If $(1,A) \in \scA_I$ then $(1,r_\theta A r_{-\theta}) \in \scA_{r_\theta(I)}$, so it suffices to show that we may find such $A$ for intervals $I$ of arbitrarily short length.
In Corollary \ref{cor: existence of geometric ann} we showed that for some interval $I$, there exists a $X=(\operatorname{id},\rho,t,\gamma) \in \GAnn_I^{in}$ such that $(\id, X) \in \GAnn_I$ and $\Int(X) \ne \emptyset$.
Let $\varphi_t$ be the semigroup of univalent maps generated by $\rho$.
Observe that if we can find a M\"{o}bius transformation $\psi$ such that $0 \in \psi(\varphi_t(\interior{\bbD}))$, then $X_\psi:=(\psi,\rho,t,\psi \circ \gamma)$ satisfies
$$
(1,\pi(X_\psi)) = (U(\psi)U(\psi)^*,U(\psi)\pi(X)U(\psi)^*) \in \scA_{\psi(I)}.
$$

Hence we must find a $\psi$ with $0 \in \psi(\varphi_t(\interior{\bbD}))$ and such that $\psi(I)$ has arbitrarily short length.
Let $J \subset \varphi_t(S^1) \cap S^1$ be an interval.
We may find a disk $D \subset \varphi_t(\bbD)$ which is tangent to $S^1$ at a point $\zeta \in J$.
Let $\psi_s \subset \Mob$ be the one-parameter group which fixes $\zeta$ and its antipodal point $\tilde \zeta$, parametrized so that as $s \to + \infty$, $\tilde \zeta$ is an attractive fixed point.
Then for sufficiently large $s$, it is clear that $0 \in \psi_s(D) \subset \psi_s(\varphi_t(\bbD))$, and also that $\lim_{s \to \infty} \abs{I} = 0$.
Thus choosing arbitrarily large $s$, we get the desired $(1,X_{\psi_s}) \in \scA_{\psi_{s}(I)}$.

Finally, to check that the systems of generalized annuli of central charge $c$ form a single system of generalized annuli, we must check compatibility with tensor products.
This is provided by the explicit formula in Corollary \ref{cor: action of annuli on non-vacuum}.
\end{proof}

\newpage

\section{Bounded localized vertex operators}\label{sec: BLVO}

In Section \ref{sec: generalized annuli}, we introduced the notion of a system of generalized annuli $\scA_I \subset \scA_I^{out} \times \scA_I^{in}$.
Generalized annuli $A \in \scA_I^{in}$ and $B \in \scA_I^{out}$ act on $\cH_\pi$ for any representation of a Virasoro net $\scA_c$, and following Convention \ref{conv: annuli} we denote the corresponding operators again by $A$ and $B$.
In particular, we will be interested in these actions when $\cH_\pi$ is the Hilbert space completion of a unitary module $M$ of a VOA, which naturally carries a representation of $\cA_c$.
Generalized annuli $(B,A) \in \scA_I$ have an interior $\Int(B,A)$, and for $z \in \Int(B,A)$ we will consider the insertion operators $BY(a,z)A$ for $a \in V$.
More generally, we can consider the analogous insertion operators induced by modules and intertwining operators.

In Section \ref{sec: def BLVO} we will describe when these insertion operators generate a conformal net, and in Sections \ref{sec: BLVO for subalgbras} and \ref{sec: fermion again} we will discuss examples of this property.
In particular, we will describe how to adapt the results of \cite{GRACFT1} to show that the free fermion conformal net can be generated by point insertions as described above.
First, however, we describe general properties of operators of the form $BY(a,z)A$, which at first glance are not even unbounded operators, but only sesquillinear forms.

\subsection{Holomorphic families of sesquilinear forms}

Let $\cH_1$ and $\cH_2$ be Hilbert spaces, and let $D_i \subset \cH_i$ be dense subspaces.
A (densely defined, sesquilinear) pairing with domain $D_1 \times D_2$ is a linear map $a:D_1 \otimes_{alg} D_2^\dagger \to \bbC$, where $D_2^\dagger$ is complex conjugate vector space.
We denote the pairing 
$$
\ip{av_1,v_2}:=a(v_1 \otimes v_2)
$$
for $v_i \in D_i.$
We denote by $\Pair(D_1,D_2)$ the space of all such pairings.

Let 
$$
D(a) = \{v_1 \in D_1 \, : \, \sup_{\substack{v_2 \in D_2\\ \norm{v_2} \le 1}} \abs{\ip{av_1,v_2}} < \infty\}.
$$ 
For $v_1 \in D(a)$, $av_1$ defines an element of the continuous dual of $D_2^\dagger$, which is $\cH_2$, and thus $a$ yields an unbounded operator $\cH_1 \to \cH_2$, with domain $D(a)$;
we denote this operator again by $a$.
For $v_1 \in D(a)$ and $v_2 \in D_2$, the expression $\ip{av_1,v_2}$ defined above coincides with the inner product $\ip{av_1,v_2}_{\cH_2}$.
We say that the pairing $a$ is closable (resp. bounded) if the unbounded operator $(a,D(a))$ is closable (resp. bounded), and if $a$ is bounded we extend it to a bounded linear map $a:\cH_1 \to \cH_2$.

Let $U \subset \bbC$ be an open subset, and let $f:U \to \Pair(D_1,D_2)$ be a function.
We define the domain of boundedness of $f$ as
$$
\dom(f) := \Interior{\{z \in U \, : \, f(z) \mbox{ is bounded}\}},
$$
and say that $f$ is pointwise bounded if $\dom(f) = U$.
We say that $f$ is very weakly holomorphic on $U$ if for all $v_i \in D_i$, the function $z \mapsto \ip{f(z)v_1,v_2}$ is holomorphic.
A function $g:U \to \bbC$ is called locally bounded (on $U$) if for every $z \in U$ there is a neighborhood $V$ of $z$ such that $\{\abs{g(z)} \, : \, z \in V\}$ is a bounded set.

\begin{Lemma}\label{lem: bounded implies holomorphic}
Let $f: U \to \Pair(D_1,D_2)$ be very weakly holomorphic and pointwise bounded, and assume that for all $\xi_i \in \cH_i$ the function $z \mapsto \ip{f(z)\xi_1,\xi_2}$ is locally bounded on $U$.
Then $f$ defines a (strongly) holomorphic function $U \to \cB(\cH_1,\cH_2)$.
\end{Lemma}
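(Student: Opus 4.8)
The plan is to proceed in three stages: first upgrade the pointwise hypotheses to local boundedness of the operator norm $z \mapsto \norm{f(z)}$ via the uniform boundedness principle; then use this, together with density of the $D_i$, to promote weak holomorphy from $D_1 \times D_2$ to all of $\cH_1 \times \cH_2$; and finally invoke the standard fact that a locally norm-bounded, weakly holomorphic operator-valued function is strongly holomorphic.

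First I would show that $z \mapsto \norm{f(z)}$ is locally bounded. Since $f$ is pointwise bounded, each $f(z)$ is a genuine bounded operator in $\cB(\cH_1,\cH_2)$, so the uniform boundedness principle applies to families $\{f(z)\}$. Fix a compact $L \subset U$. For each fixed pair $\xi_1 \in \cH_1$, $\xi_2 \in \cH_2$, the function $z \mapsto \ip{f(z)\xi_1,\xi_2}$ is locally bounded on $U$ by hypothesis; covering $L$ by finitely many of the neighborhoods on which it is bounded shows $\sup_{z \in L}\abs{\ip{f(z)\xi_1,\xi_2}} < \infty$. Now apply the uniform boundedness principle twice: first, fixing $\xi_1$ and regarding $\{f(z)\xi_1 : z \in L\}$ as a family of (conjugate-linear) functionals on $\cH_2$ which is pointwise bounded in $\xi_2$, conclude $\sup_{z \in L} \norm{f(z)\xi_1} < \infty$; then, regarding $\{f(z) : z \in L\}$ as a family of operators which is pointwise bounded in $\xi_1$, conclude $M_L := \sup_{z \in L}\norm{f(z)} < \infty$. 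Thus $\norm{f(\cdot)}$ is bounded on every compact subset of $U$, i.e.\ locally bounded.

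Next I would extend weak holomorphy to arbitrary vectors. Given $\xi_i \in \cH_i$, choose $v_1^{(n)} \to \xi_1$ and $v_2^{(n)} \to \xi_2$ with $v_i^{(n)} \in D_i$. On each compact $L \subset U$ the estimate $\abs{\ip{f(z)v_1^{(n)},v_2^{(n)}} - \ip{f(z)\xi_1,\xi_2}} \le M_L\big(\norm{v_1^{(n)} - \xi_1}\,\norm{v_2^{(n)}} + \norm{\xi_1}\,\norm{v_2^{(n)} - \xi_2}\big)$ shows that the holomorphic functions $z \mapsto \ip{f(z)v_1^{(n)},v_2^{(n)}}$ converge to $z \mapsto \ip{f(z)\xi_1,\xi_2}$ uniformly on $L$. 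A locally uniform limit of holomorphic functions is holomorphic, so $z \mapsto \ip{f(z)\xi_1,\xi_2}$ is holomorphic for all $\xi_i \in \cH_i$; that is, $f$ is weakly holomorphic on all of $\cH_1 \times \cH_2$. Finally, with $f$ weakly holomorphic and locally norm-bounded, strong holomorphy follows from the standard Dunford argument: fixing $\xi_1$ and setting $g(z) = f(z)\xi_1$, one applies the Cauchy integral formula to each scalar function $z \mapsto \ip{g(z),\xi_2}$ on a small circle about $z_0$ and uses the uniform bound $\norm{g(w)} \le M_L\norm{\xi_1}$ on that circle to bound the difference of two difference quotients uniformly over $\norm{\xi_2} \le 1$; this forces the difference quotients $(g(z)-g(z_0))/(z-z_0)$ to form a norm-Cauchy net, whose limit is $f'(z_0)\xi_1$. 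Hence $f : U \to \cB(\cH_1,\cH_2)$ is strongly holomorphic.

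I expect the main obstacle to be the first stage: the hypotheses only furnish local boundedness of the scalar functions for each \emph{fixed} pair of vectors, and the crux is to convert this into uniform control of $\norm{f(z)}$ over neighborhoods. The key device is that local boundedness on $U$ forces boundedness on each compact subset via a finite subcover, which is precisely the pointwise-boundedness input needed for two successive applications of the uniform boundedness principle. Once $\norm{f(\cdot)}$ is controlled, the density argument and the Dunford step are routine, the latter being a standard theorem that may simply be cited.
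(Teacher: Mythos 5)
Your proposal is correct and follows essentially the same route as the paper's proof: two applications of Banach--Steinhaus to get local boundedness of $\norm{f(z)}$, a density argument giving locally uniform convergence of matrix coefficients to upgrade very weak holomorphy to weak holomorphy on all of $\cH_1 \times \cH_2$, and then the standard equivalence of weak and strong holomorphy for operator-valued functions (which the paper simply cites, while you spell out the Dunford argument).
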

\begin{proof}
By the Banach-Steinhaus theorem, $\norm{f(z)\xi}$ is locally bounded, and indeed then so is $\norm{f(z)}$.
Hence if $\xi_i \in \cH_i$ we can choose sequences $v_i^{(n)} \in D_i$ which converge to $\xi_i$, and we obtain locally uniform convergence $\bip{f(z)v_1^{(n)},v_2^{(n)}} \to \ip{f(z)\xi_1,\xi_2}$.
Hence $\ip{f(z)\xi_1,\xi_2}$ is a holomorphic function, making $f:U \to \cB(\cH_1,\cH_2)$ a weakly holomorphic function.
By standard results (e.g. \cite[Ch. 5]{TaylorLay}), this means that $f$ is a (strongly) holomorphic function.
\end{proof}

There are several other equivalent hypotheses one can use, such as replacing pointwise boundedness and local boundedness of matrix coefficients by uniform local boundedness of the matrix coefficients. That is, if one assumes that for all $z \in U$ there exists a neighborhood $V$ of $z$ such that
$$
\sup \{ \, \abs{\ip{f(z)v_1,v_2}} \, : \, z \in V, \, v_i \in D_i, \, \norm{v_i} \le 1\} < \infty
$$
then $f$ is pointwise bounded and $\norm{f(z)}$ is locally bounded, and one obtains the same result.
%It is unclear if very weak holomorphicity and pointwise boundedness alone impliy (strong) holomorphicity.

\subsection{Holomorphic operator-valued functions from intertwining operators}
\label{sec: holomorphic forms from intertwining operators}
Let $V$ be a unitary vertex operator algebra, let $M$, $N$ and $K$ be unitary $V$-modules, with Hilbert space completions $\cH_M$, $\cH_N$, and $\cH_K$, respectively.
Let $\cY \in I \binom{K}{M N}$.
For $B \in \cB(\cH_K)$ and $A \in \cB(\cH_N)$, we would like to define the insertion operator $B\cY(a,z)A$.
We think of $A$ and $B$ as being two halves of a degenerate annulus, and $\cY(a,z)$ inserting the state $a$ at a point $z$ inside the annulus.
In order to define $B\cY(a,z)A$, we need to be able to make sense of matrix coefficients for that expression on a large enough domain.

\begin{Definition}\label{def: incoming and outgoing annuli}
Let $M$ be a unitary representation of the Virasoro algebra on which $L_0$ is diagonalizable, and let $\cH$ be the Hilbert space completion of $M$.
An \emph{incoming generalized annulus} on $\cH$ is an operator $A \in \cB(\cH)$ such that $M \subset \Ran(A)$ and $A^{-1}(M)$ is dense in $\cH$.
An operator $B \in \cB(\cH)$ is an \emph{outgoing generalized annulus} if $A^*$ is an incoming generalized annulus.
We write $\Ann^{in}(\cH)$ and $\Ann^{out}(\cH)$ for the set of incoming and outgoing generalized annuli, respectively.
We write $D_{A}$ for $A^{-1}(M)$, and $D_{B}$ for $(B^*)^{-1}(M)$.
\end{Definition}

An incoming generalized annulus models the operator assigned to a (potentially degenerate) annulus embedded in $\bbD$ whose outer boundary is the unit circle, parametrized by the identity (or a rotation).
Similarly, an outgoing generalized annulus models annuli whose incoming boundary is the unit circle parametrized in the same way.

Let us now return to $\cY \in I \binom{K}{M N}$, and let $B \in \Ann^{out}(\cH_K)$ and $A \in \Ann^{in}(\cH_N)$.
For $z \in \bbC \setminus \{0\}$, equipped with a choice of $\log z$, we have  canonical pairing $B\cY(-, z)A \in \Pair(N \otimes D_A, D_B)$ given by
$$
\ip{B\cY(v_1,z)A\xi_2, \xi_3} := \ip{\cY(v_1,z)A\xi_2, B^*\xi_3}_{\cH_K}.
$$
Thus we have defined a multi-valued holomorphic function on $\bbC \setminus \{0\}$ valued in $\Pair(M \otimes D_A, D_B)$, which becomes single-valued on domains $U$ equipped with a holomorphic branch of $\log z$.
In the following, if $U \subset \bbC \setminus \{0\}$ then we will use the term `multi-valued holomorphic function on $U$' to mean a holomorphic function of $\log z$.

%{\color{purple}
%Let $(B,A) \in \Ann_I$, and observe that $A$ and $B$ act on the Hilbert space completions of unitary $V$-modules.
%We will simply write $A$ and $B$ instead of $\pi_I^{\cH_N}(A)$ and $\pi_I^{\cH_K}$, respectively.
%Observe that while the actions of $A$ and $B$, considered separately on $\cH_N$ and $\cH_K$, are only well-defined up to a unitary, the expression $B\cY(v_1,z)A$ has a canonical value.
%To see this, it suffices to consider when $N$ and $K$ are irreducible, in  which case the spectrum of $L_0$ is contained in $h + \bbZ$ for some $h$.
%Recall that since $(A,B) \in \Ann_I$, we may choose a $\gamma \in \Diffc$ with $U^0(\gamma)A, BU^0(\gamma)^* \in \cA_c(I)$, and then the actions of $A$ and $B$ on $\cH_N$ and $\cH_K$ are given by $\pi_I(A) = U^{\cH_N}(\tilde \gamma)^* \pi_I(U^0(\gamma)A)$ and $\pi_I(B) = \pi_I(BU^0(\gamma)^*)U^{\cH_K}(\tilde \gamma)$, respectively.
%Since $N$ and $K$ are irreducible, the values of $U(\tilde \gamma)$ only depend on the lift $\tilde \gamma$ up to phase.
%If we select a common lift $\tilde \gamma$,
%then unpacking the definition of $B\cY(a,z)A$ we see that
%$$
%B\cY(a,z)A = \pi_I(BU^0(\gamma)^*)U^{\cH_K}(\tilde \gamma) \cY(a,z)U^{\cH_N}(\tilde \gamma)^* \pi_I(U^0(\gamma)A)
%$$
%is independent of this choice.
%}

We will be interested in the regularized versions $B\cY(s^{L_0}-,z)A$, which will define bounded maps $\cH_M \otimes \cH_N \to \cH_K$ for appropriate choices of $A,B,s,z$.
\begin{Definition}\label{def: interior}
Let $\cY \in I \binom{K}{M N}$, $A \in \Ann^{in}(\cH_N)$, and $B \in \Ann^{out}(\cH_K)$.
For $s \ge 0$, define
$$
\Int_{\cY,s}(B,A) = \bigcup_{r > s}  \{ z \in \bbC \setminus \{0\} : B\cY(r^{L_0}-,z)A \mbox{ is bounded} \}.
$$
We simply write $\Int_{\cY}(B,A)$ for $\Int_{\cY,0}(B,A)$.
\end{Definition}
Observe that the boundedness of $B\cY(s^{L_0}-,z)A$ does not depend on a choice of $\log z$.

The usefulness of Definition \ref{def: interior} is that it provides natural domains on which to define $B\cY(s^{L_0}-,z)A$ as a holomorphic operator valued function.

\begin{Proposition}\label{prop: interior open}
Let $\cY \in I \binom{K}{M N}$, $A \in \Ann^{in}(\cH_N)$, and $B \in \Ann^{out}(\cH_K)$.
Then we have
\begin{enumerate}
\item For all $s \ge 0$, $\Int_{\cY,s}(A,B)$ is open.
\item If $s > 0$, $B\cY(s^{L_0}-,z)A$ defines a multi-valued holomorphic function $\Int_{\cY,s}(B,A) \to \cB(\cH_M \otimes \cH_N, \cH_K)$.
\item If $v \in M$, then $B\cY(v,z)A$ defines a multi-valued holomorphic function $\Int_{\cY}(B,A) \to \cB(\cH_N, \cH_K)$.
\end{enumerate}
\end{Proposition}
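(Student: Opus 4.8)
The plan is to reduce all three assertions to a single local factorization that expresses the regularized insertion near a base point as a fixed bounded operator composed with a manifestly bounded, holomorphic operator-valued function; the analytic crux will be the boundedness of that second factor. Two preliminary observations are essentially formal. For $\xi_2 \in D_A$ and $\xi_3 \in D_B$ the vectors $A\xi_2 \in N$ and $B^*\xi_3 \in K$ have finite energy, so by Proposition \ref{propIntertwinerProperties}\eqref{itmIntertwinerHomogeneous} only finitely many modes $v^{\cY}_{(k)}$ contribute to $\ip{\cY(v,z)A\xi_2, B^*\xi_3}$; each matrix coefficient is thus a finite sum of terms $z^{-k-1}$ ($k \in \bbR$) and is holomorphic in $\log z$, giving very weak holomorphy on any domain carrying a branch of $\log z$. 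Pointwise boundedness is also quick: if $z \in \Int_{\cY,s}(B,A)$ has witness $r > s$, then writing $s^{L_0} = (s/r)^{L_0}r^{L_0}$ with $(s/r)^{L_0}$ a contraction gives $B\cY(s^{L_0}-,z)A = [B\cY(r^{L_0}-,z)A]\circ((s/r)^{L_0}\otimes 1)$, which is bounded, proving pointwise boundedness in (2); and for homogeneous $v \in M_\alpha$ one has $B\cY(v,z)A = r^{-\alpha}[B\cY(r^{L_0}-,z)A](v \otimes -)$, proving it in (3).

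The heart of the argument is a local factorization. Fix $z_0 \in \Int_{\cY,s}(B,A)$ with $T := B\cY(r^{L_0}-,z_0)A$ bounded, $r > s$, and choose $q \in (s/r,1)$ so that $s' := qr \in (s,r)$. Using the $L_{-1}$-derivative property in the form $\cY(w,z) = \cY(e^{(z-z_0)L_{-1}}w, z_0)$, the relation $e^{(z-z_0)L_{-1}}r^{L_0} = r^{L_0}e^{(z-z_0)r^{-1}L_{-1}}$ (from $[L_0,L_{-1}] = L_{-1}$), and $(s')^{L_0} = r^{L_0}q^{L_0}$, I would verify, as an identity of matrix coefficients against finite-energy vectors where both sides are honest convergent Taylor series, that
$$
\cY((s')^{L_0}v,z) = \cY\!\big(r^{L_0}S(z)v,\, z_0\big), \qquad S(z) := e^{(z-z_0)r^{-1}L_{-1}}q^{L_0}.
$$
Granting that $S(z) \in \cB(\cH_M)$ depends holomorphically on $z$ near $z_0$, this upgrades to the operator identity $B\cY((s')^{L_0}-,z)A = T \circ (S(z) \otimes 1_{\cH_N})$ on a neighborhood $V$ of $z_0$.

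From this single identity everything follows, since holomorphy is a local property. The right-hand side is bounded and holomorphic on $V$, and because $s' > s$ we get $V \subseteq \Int_{\cY,s}(B,A)$, proving openness (1). Composing on the right with the fixed contraction $(s/s')^{L_0} \otimes 1$ (legitimate as $s < s'$) produces a bounded holomorphic local representative of $B\cY(s^{L_0}-,z)A$, proving (2). Restricting the first tensor slot to a homogeneous $v \in M_\alpha$ and rescaling by $(s')^{-\alpha}$ produces a bounded holomorphic local representative of $B\cY(v,z)A \in \cB(\cH_N,\cH_K)$, proving (3); general $v$ follows by taking finite sums of homogeneous components.

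The main obstacle is the boundedness of $S(z) = e^{(z-z_0)r^{-1}L_{-1}}q^{L_0}$ on $\cH_M$, with holomorphic dependence, near $z_0$. Conceptually $S(z)$ is the operator attached to a univalent self-map of $\bbD$ (a scaling by $q<1$ followed by a small translation) whose image is compactly contained in $\bbD$ and contains $0$, hence an annulus operator that acts boundedly; one may cite \cite{Neretin90,GRACFT1} or argue directly. For a direct argument I would decompose $\cH_M$ into irreducible lowest-weight unitary representations of the M\"obius algebra $\su(1,1) = \langle L_{-1}, L_0, L_1\rangle$ (possible since $\cH_M$ is a positive-energy Virasoro representation), use $L_{-1}^* = L_1$ and the norm formula $\norm{L_{-1}^n \xi_0}^2 = n!\,(2h)_n\,\norm{\xi_0}^2$ on a lowest-weight vector $\xi_0$ of weight $h$. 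Writing $t = (z-z_0)/r$, the orthogonality of weight spaces and $\sum_n \frac{(2h)_n}{n!}|t|^{2n} = (1-|t|^2)^{-2h}$ give $\norm{e^{tL_{-1}}q^{L_0}\xi_0}^2 = \big(q/(1-|t|^2)\big)^{2h}\norm{\xi_0}^2$, which is bounded uniformly in $h$ once $1 - |t|^2 \ge q$; a similar (slightly longer) estimate on the remaining vectors of each tower then yields $\norm{S(z)} \le C$ on a neighborhood of $z_0$. Holomorphy follows because the matrix coefficients $\ip{S(z)\xi,\eta}$ are locally uniformly convergent power series in $z - z_0$, so Lemma \ref{lem: bounded implies holomorphic} applies. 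This uniform estimate is where the genuine analysis lies; once it is in hand, the three parts of the proposition drop out of the factorization above.
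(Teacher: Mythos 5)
Your proposal is correct and follows essentially the same route as the paper: the local factorization $B\cY((s')^{L_0}-,z)A = T\circ(S(z)\otimes 1)$ with $S(z)=e^{(z-z_0)r^{-1}L_{-1}}q^{L_0}$ is exactly the paper's Lemma \ref{lem: local norm boundedness}, and the boundedness of $S(z)$ via decomposition of $\cH_M$ into irreducible lowest-weight $\su(1,1)$-towers is the paper's Lemma \ref{lem: dilation translation bounded} (the paper carries out the estimate on all vectors of each tower, not just the lowest-weight vector, but your method is the same). The three parts then follow from Lemma \ref{lem: bounded implies holomorphic} just as you describe.
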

As indicated above, the above multi-valued holomorphic functions become single valued on subsets $U$ of the domain equipped with a holomorphic branch of $\log z$.

We will require a few lemmas to prove Proposition \ref{prop: interior open} (after the proof of Lemma \ref{lem: local norm boundedness}), but first point out an easy consequence.
\begin{Corollary}
If $F \subset \Int_\cY(A,B)$ is compact, then there exists an $s > 0$ such that $B\cY(s^{L_0}-,z)A$ defines a multi-valued holomorphic function on a neighborhood of $F$.
In particular, $\sup_{z \in K} \norm{B\cY(s^{L_0}-,z)A} < \infty$.
\end{Corollary}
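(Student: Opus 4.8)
The plan is to read the corollary off Proposition \ref{prop: interior open} once the domains $\Int_{\cY,s}(B,A)$ are organized into an increasing family of open sets exhausting $\Int_\cY(B,A)$ as $s \searrow 0$. The only input not already packaged in the proposition is a monotonicity observation: boundedness of $B\cY(s^{L_0}-,z)A$ becomes \emph{easier} as $s$ decreases, because the regularizing factor $s^{L_0}$ damps high-energy modes more strongly.

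First I would prove this monotonicity. Suppose $0 < s' \le s$ and that $B\cY(s^{L_0}-,z)A$ is bounded, i.e.\ the pairing extends to a bounded map $\cH_M \otimes \cH_N \to \cH_K$. Using the factorization $s'^{L_0} = (s'/s)^{L_0} s^{L_0}$, and the fact that $L_0 \ge 0$ on the unitary module $M$ (so that $(s'/s)^{L_0}$ acts as the scalar $(s'/s)^\alpha \le 1$ on each $M_\alpha$, hence preserves $M$ and is a contraction on $\cH_M$), I would write, for $v \in M$, $\xi_2 \in D_A$, $\xi_3 \in D_B$,
$$
\ip{B\cY(s'^{L_0}v,z)A\xi_2,\xi_3} = \ip{B\cY(s^{L_0}(s'/s)^{L_0}v,z)A\xi_2,\xi_3}.
$$
Thus $B\cY(s'^{L_0}-,z)A$ is the composition of the bounded map $B\cY(s^{L_0}-,z)A$ with the bounded operator $(s'/s)^{L_0}\otimes 1$ on $\cH_M \otimes \cH_N$, hence bounded. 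Consequently the sets in Definition \ref{def: interior} grow as the exponent shrinks, giving $\Int_{\cY,s}(B,A) \subset \Int_{\cY,s'}(B,A)$ whenever $s' \le s$, and moreover $\Int_\cY(B,A) = \Int_{\cY,0}(B,A) = \bigcup_{s>0}\Int_{\cY,s}(B,A)$ since every $r>0$ exceeds some $s>0$.

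With the nesting in hand, the compactness argument is routine. By part (1) of Proposition \ref{prop: interior open} each $\Int_{\cY,s}(B,A)$ is open, so $\{\Int_{\cY,s}(B,A)\}_{s>0}$ is an increasing open cover of the compact set $F$; a finite subcover together with the nesting shows that the smallest exponent $s$ occurring already satisfies $F \subset \Int_{\cY,s}(B,A)$. Part (2) then exhibits $B\cY(s^{L_0}-,z)A$ as a multi-valued holomorphic $\cB(\cH_M\otimes\cH_N,\cH_K)$-valued function on this open set, which is a neighborhood of $F$, giving the first assertion.

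For the norm bound I would pass to the logarithmic coordinate $w = \log z$: the map $w \mapsto B\cY(s^{L_0}-,e^{w})A$ is a genuine single-valued strongly holomorphic, hence continuous, operator-valued function on the open set $\{w : e^{w} \in \Int_{\cY,s}(B,A)\}$. Since $F$ is compact in $\bbC\setminus\{0\}$, its preimage inside a strip $\{0 \le \Im w \le 2\pi\}$ is compact, so continuity yields a finite bound there, which is exactly $\sup_{z\in F}\norm{B\cY(s^{L_0}-,z)A}$ for the corresponding branch. I expect the monotonicity step to be the only genuine content — establishing that $(s'/s)^{L_0}$ is a well-defined contraction preserving $M$ — with everything after it a standard exhaustion-plus-compactness argument; the multi-valuedness is handled cleanly by working in the $w = \log z$ coordinate and is not a real obstacle.
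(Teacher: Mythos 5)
Your proof is correct and follows essentially the same route as the paper: the paper's proof is exactly the observation that $\Int_{\cY}(B,A)=\bigcup_{s>0}\Int_{\cY,s}(B,A)$ is an increasing union of open sets (openness being part (1) of Proposition \ref{prop: interior open}), so compactness of $F$ places it inside a single $\Int_{\cY,s}(B,A)$, after which parts (2) and the continuity of a strongly holomorphic function give everything. The only difference is that your monotonicity lemma, while correct (and a nice observation, in the spirit of the remark following Lemma \ref{lem: dilation translation bounded}), is not actually needed: since Definition \ref{def: interior} defines $\Int_{\cY,s}(B,A)$ as a union over all $r>s$, the nesting $\Int_{\cY,s}(B,A)\subset\Int_{\cY,s'}(B,A)$ for $s'\le s$ holds trivially because the index set of the union only grows.
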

\begin{proof}
Since $\Int_{\cY}(A,B) = \cup_{s > 0} \Int_{\cY,s}(B,A)$ and $F$ is compact, we must have $F \subset \Int_{\cY,s}(B,A)$ for some $s$.
The rest follows immediately from Proposition \ref{prop: interior open}.
\end{proof}

We now accumulate the necessary results to prove Proposition \ref{prop: interior open}.
\begin{Lemma}\label{lem: dilation translation bounded}
Let $M$ be a unitary representation of $\Vir$ on which $L_0$ is diagonalizable and $L_0 \ge h \ge 0$.
Let $\cH$ be the Hilbert space completion of $M$.
Let $z \in \bbC$ and $r \in \bbR_+$ with $\abs{z} + r < 1$.
Then $e^{zL_{-1}} r^{L_0}$ defines a bounded operator on $\cH$ with
$$
\norm{e^{zL_{-1}} r^{L_0}} \le \frac{(\abs{z}+r)^h}{r(1-(\abs{z}+r)^2)^{\frac12}}.
$$.
\end{Lemma}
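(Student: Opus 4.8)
The plan is to reduce the bound to a one–variable spectral computation inside each lowest–weight module and then finish with elementary algebra.

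First I would strip the phase of $z$. Writing $z = \abs{z}e^{i\vartheta}$ and using $e^{i\vartheta L_0}L_{-1}e^{-i\vartheta L_0} = e^{i\vartheta}L_{-1}$, one gets $e^{zL_{-1}}r^{L_0} = e^{i\vartheta L_0}\,e^{\abs{z}L_{-1}}r^{L_0}\,e^{-i\vartheta L_0}$, so since $e^{\pm i\vartheta L_0}$ is unitary it suffices to treat $z = t := \abs{z}\ge 0$. Next, restrict the representation to the M\"obius subalgebra $\mathfrak{su}(1,1) = \langle L_{-1},L_0,L_1\rangle$. Since $M$ is unitary with $L_0\ge h\ge 0$ and $L_0$ diagonalizable, it is an orthogonal direct sum of lowest–weight unitary modules $L(h_\alpha)$ with $h_\alpha\ge h$. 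The operator $g := e^{tL_{-1}}r^{L_0}$ is built from $L_{-1},L_0$ and hence preserves each summand, so $\norm{g} = \sup_\alpha\norm{g|_{L(h_\alpha)}}$, and it is enough to bound $\norm{g|_{L(h')}}$ for each $h'\ge h$.

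Fix $L(h')$ with lowest–weight vector $\Omega_{h'}$ and compute $\norm{g}^2 = \norm{g^*g}$. Using the $\mathfrak{sl}_2$ disentangling identity $e^{tL_1}e^{tL_{-1}} = e^{\frac{t}{1-t^2}L_{-1}}\,(1-t^2)^{-2L_0}\,e^{\frac{t}{1-t^2}L_1}$ (valid on finite–energy vectors, e.g. by checking it in the $2\times 2$ representation and continuing analytically in $t$) together with $r^{L_0}L_{-1}r^{-L_0}=rL_{-1}$ and $r^{L_0}L_1r^{-L_0}=r^{-1}L_1$, one brings $g^*g$ to the normal–ordered form $g^*g = e^{aL_{-1}}\mu^{L_0}e^{aL_1}$ with $a = \tfrac{rt}{1-t^2}$ and $\mu = \tfrac{r^2}{(1-t^2)^2}$. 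The key point is that $g^*g$ acts on coherent states by a M\"obius transfer rule: combining the same identities with $L_1\Omega_{h'}=0$ gives
$$
g^*g\,e^{wL_{-1}}\Omega_{h'} = \Big(\tfrac{\mu}{(1-aw)^2}\Big)^{h'} e^{\Psi(w)L_{-1}}\Omega_{h'}, \qquad \Psi(w) = a + \tfrac{\mu w}{1-aw}.
$$
The map $\Psi$ has a unique attracting fixed point $w_*\in\interior{\bbD}$ (its two fixed points multiply to $1$), so $e^{w_*L_{-1}}\Omega_{h'}$ is an eigenvector of the positive operator $g^*g$ with eigenvalue $C_0^{2h'}$, where $C_0 := \tfrac{\sqrt\mu}{1-aw_*}$. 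Because $t+r<1$, the symbol $w\mapsto \tfrac{rw}{1-tw}$ of $g$ maps $\bbD$ into $\interior{\bbD}$, so $g|_{L(h')}$ is compact; hence $g^*g$ is a compact positive weighted composition operator whose spectrum, by Koenigs' theorem, is $\{0\}\cup\{C_0^{2h'}\Psi'(w_*)^n\}_{n\ge 0}$ with $\Psi'(w_*)\in(0,1)$. Thus the eigenvalue above is the largest and $\norm{g|_{L(h')}} = C_0^{h'}$.

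It remains to bound $C_0$, which is pure algebra. With $S := 1+r^2-t^2$ and $D := [1-(t+r)^2][1-(t-r)^2]$ one checks $S^2 - D = 4r^2$, giving the closed form $C_0 = \tfrac{S-\sqrt D}{2r}$. Then $C_0\le t+r$ is equivalent to $1-(t+r)^2\le \sqrt D$, i.e. to $\sqrt{1-(t+r)^2}\le \sqrt{1-(t-r)^2}$, i.e. to $(t-r)^2\le (t+r)^2$, which holds. Since $0<C_0<1$, the supremum over the weights $h_\alpha\ge h$ is controlled by $h'=h$, so
$$
\norm{e^{zL_{-1}}r^{L_0}} = \sup_\alpha C_0^{h_\alpha}\le C_0^{h}\le (t+r)^h\le \frac{(t+r)^h}{r\sqrt{1-(t+r)^2}},
$$
the final inequality because $r\sqrt{1-(t+r)^2}\le 1$. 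Note this in fact establishes a bound stronger than the stated one. The main obstacle is the third paragraph: rigorously identifying the coherent–state eigenvalue as the \emph{top} of the spectrum. I expect to handle this by passing to the weighted Bergman realization of $L(h')$, in which $g$ is the compact weighted composition operator $f\mapsto r^{h'}(1-tw)^{-2h'}f\big(\tfrac{rw}{1-tw}\big)$, and invoking the standard description of the spectrum of such operators with an interior Denjoy--Wolff point; the disentangling identities and the convergence of $e^{tL_{-1}}$ on finite–energy vectors must be justified but are routine.
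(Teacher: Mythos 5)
Your proposal follows the paper only in its first reduction (decomposing $M$ into lowest-weight modules for the M\"obius subalgebra $\frm = \Span\{L_0,L_{\pm1}\}$, so that it suffices to bound the operator on each $V_t$ with $t \ge h$); after that the two arguments diverge completely. The paper works with the explicit orthonormal basis $\{\xi_n\}$ of $V_t$, estimates the matrix coefficients $\ip{e^{zL_{-1}}r^{L_0}\xi,\xi_n}$ by hand via Cauchy--Schwarz and binomial inequalities, and sums; this is entirely elementary but produces the non-sharp prefactor $\tfrac{1}{r(1-(\abs{z}+r)^2)^{1/2}}$. You instead aim for the \emph{exact} norm on each summand via $\mathfrak{sl}_2$ disentangling, coherent states, and the spectral theory of compact weighted composition operators. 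Your algebra is correct: I verified that the fixed-point equation is $rtw^2-(1-t^2-r^2)w+rt=0$, that $C_0=\tfrac{S-\sqrt{D}}{2r}$ with $S^2-D=4r^2$, that $\Psi'(w_*)=C_0^2$, and that $C_0\le t+r$ reduces to $(t-r)^2\le(t+r)^2$. If completed, your argument gives the strictly stronger bound $\norm{e^{zL_{-1}}r^{L_0}}\le C_0^{h}\le(\abs{z}+r)^h$, which is genuinely nicer than the lemma as stated.

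The gap is the one you flag yourself, and it is real: exhibiting $e^{w_*L_{-1}}\Omega_{h'}$ as an eigenvector of the positive compact operator $g^*g$ with eigenvalue $C_0^{2h'}$ does not identify $\norm{g^*g}$ unless you know this is the \emph{largest} eigenvalue, i.e.\ that the Koenigs list $\{C_0^{2h'}\Psi'(w_*)^n\}_{n\ge0}$ exhausts the nonzero spectrum. The "standard description" you plan to invoke is proved for Hardy and weighted Bergman spaces, which model the discrete series only for lowest weights $h'\ge\tfrac12$ (resp.\ $h'>\tfrac12$); here the summands can have arbitrary lowest weight $h'\ge h\ge 0$ (e.g.\ $h'=\tfrac14$ occurs in the models of this paper), so you would be working on the reproducing-kernel spaces with kernel $(1-w\bar v)^{-2h'}$ for all $h'>0$, where the completeness of the eigenvalue list must be re-proved rather than cited. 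Two smaller domain issues also need attention before $g^*g$ even makes sense as a bounded operator: the convergence of $e^{tL_{-1}}$ on finite-energy vectors and the validity of the disentangling identity for these unbounded operators (cleanest to handle by working in the function model from the start, where $g$ is manifestly a bounded weighted composition operator). None of this looks fatal, but it is exactly the analytic overhead that the paper's bare-hands estimate is designed to avoid, at the cost of a worse constant.
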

\begin{proof}
Let $\frm = \Span \{L_{0}, L_{\pm 1}\} \subset \Vir$.
Then $M$ decomposes as a direct sum of irreducible positive energy representations $V_t$ of $\frm$ with lowest weight $t$ (see \cite[App. A]{Weiner05} for more detail).
Thus it suffices to verify the desired estimate on each $V_t$ with $t \ge h$.
If $h = t = 0$ then $e^{zL_{-1}}r^{L_0} = 1$ and the desired estimate holds.
Thus we have reduced to the case of establishing our estimate on $V_t$ with $t \ge h > 0$.

By \cite[Prop. A.2.11]{Weiner05}, such $V_t$ are the algebraic span of an orthonormal set $\{\xi_n\}_{n \ge 0}$ with $L_0 \xi_n = (t+n)\xi_n$ and $L_{-1} \xi_n = \sqrt{(n+2t)(n+1)}\xi_{n+1}$.
We will show by direct computation that $e^{zL_{-1}}r^{L_0}$ defines a bounded operator. 
Consider a finite sum $\xi = \sum_{n \ge 0} \alpha_n \xi_n \in V_t$.
Then
\begin{align*}
\ip{e^{zL_{-1}}r^{L_0}\xi,\xi_n} &= \sum_{k=0}^n \frac{1}{k!} z^k r^{n-k+t} \alpha_{n-k} \ip{L_{-1}^{k}\xi_{n-k}, \xi_n}\\
&= \sum_{k=0}^n \frac{1}{k!} z^k r^{n-k+t} \alpha_{n-k} \left(\prod_{j=0}^{k-1} \sqrt{(n-k+2t+j)(n-k+1+j)}\right)\\
%&= \sum_{k=0}^n \frac{1}{k!} z^k r^{n-k+t} \alpha_{n-k} \sqrt{(n-k+2t)(n-k+1)} \sqrt{(n-k+1+2t)(n-k+2)} \cdots \sqrt{(n-1+2t)n}
\end{align*}
Hence 
\begin{align*}
\babs{\ip{e^{zL_{-1}}r^{L_0}\xi,\xi_n}} &\le 
\sum_{k=0}^n \frac{1}{k!} \abs{z}^k r^{n-k+t} \abs{\alpha_{n-k}} \left(\prod_{j=0}^{k-1}(n-k+t+j+1)\right)\\
&= r^{n+t}\sum_{k=0}^n \abs{\alpha_{n-k}}\tfrac{\abs{z}^k}{r^k} \binom{n+t}{k}\\
&\le r^{n+t}\left(\sum_{k=0}^n \abs{\alpha_{n-k}}^2 \right)^{\tfrac12} \left(\sum_{k=0}^n \frac{\abs{z}^{2k}}{r^{2k}} \binom{n+t}{k}^2\right)^{\tfrac12}\\
&\le r^{n+t}\norm{\xi} \sum_{k=0}^n \frac{\abs{z}^{k}}{r^{k}} \binom{n+t}{k}.
\end{align*}
The first step is the triangle inequality and inequality of arithmetic and geometric means, the second is the definition of the binomial coefficient, the third is Cauchy-Schwarz, and the fourth is the fact that $\sqrt{a + b} \le \sqrt{a} + \sqrt{b}$.
Let $m \in \bbZ$ be the smallest integer with $m \ge t$, and observe that for $0 \le k \le n$ we have
$\binom{n+t}{k} \le \binom{n+m}{k}$, and thus we have 
$$
\sum_{k=0}^n \frac{\abs{z}^{k}}{r^{k}} \binom{n+t}{k} \le \sum_{k=0}^{n+m} \frac{\abs{z}^k}{r^k} \binom{n+m}{k} = r^{-n-m}(\abs{z}+r)^{n+m}.
$$
Hence combining with the previous calculation we get 
$$
\babs{\ip{e^{zL_{-1}}r^{L_0}\xi,\xi_n}} \le r^{-(m-t)}\norm{\xi}(\abs{z}+r)^{n+m},
$$
and thus
$$
\norm{e^{zL_{-1}}r^{L_0}\xi} = \left(\sum_{n \ge 0} \babs{\ip{e^{zL_{-1}}r^{L_0}\xi,\xi_n}}^2 \right)^{\frac12} \le \frac{(\abs{z}+r)^t}{r} \norm{\xi} (1-(\abs{z}+r)^2)^{-\frac12}.
$$
We used that $m-t \le 1$ and $t \le m$.
Thus $e^{zL_{-1}}r^{L_0}$ extends to a bounded operator on the Hilbert space completion of $V_t$ with norm at most $\frac{(\abs{z}+r)^t}{r(1-(\abs{z}+r)^2)^{\frac12}}$. 

Returning to our positive energy representation $\cH$, we have seen that for every irreducible component $V_t$ of the action of $\frm$ we have
$$
\norm{e^{zL_{-1}}r^{L_0}}_{V_t} \le \frac{(\abs{z}+r)^t}{r(1-(\abs{z}+r)^2)^{\frac12}} \le \frac{(\abs{z}+r)^h}{r(1-(\abs{z}+r)^2)^{\frac12}}
$$
since $t \ge h$, and thus the desired norm bound holds on all of $\cH$.
\end{proof}

\begin{Remark}
The estimated obtained in Lemma \ref{lem: dilation translation bounded} is extremely bad for small $r$ and fixed $\abs{z}$. In fact, if $e^{zL_{-1}}s^{L_0}$ is bounded and $r < s$, then $\norm{e^{zL_{-1}}r^{L_0}} \le (r/s)^h \norm{e^{zL_{-1}}s^{L_0}}$ which gives the expected geometric decay in norm in the small $r$ limit.
\end{Remark}

\begin{Lemma}\label{lem: local norm boundedness}
Let $\cY \in I \binom{K}{M N}$ and let $A$ (resp. $B$) be an incoming (resp. outgoing) generalized annulus on $\cH_{N}$ (resp. $\cH_K$).
Suppose that $B\cY(r^{L_0}-,z)A$ is bounded and that $\abs{z} > r$.
Then for any $s < r$ and any $w \in z + (r-s)\bbD$ the map $B\cY(s^{L_0}-,w)A$ is bounded, with $\norm{B\cY(s^{L_0}-,w)A}$ uniformly bounded on compact subsets of $z + (r-s)\interior{\bbD}$.
\end{Lemma}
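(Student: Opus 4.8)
The plan is to realize $B\cY(s^{L_0}-,w)A$ as the (assumed bounded) operator $B\cY(r^{L_0}-,z)A$ precomposed with a bounded operator acting only on the inserted tensor factor $\cH_M$. Geometrically, the hypothesis $w \in z + (r-s)\interior{\bbD}$ records that the ``small disk'' of scale $s$ centred at $w$ sits strictly inside the ``big disk'' of scale $r$ centred at $z$; the change of local coordinate realizing this inclusion is the affine shrink-and-translate $u \mapsto \tfrac{s}{r}u + \tfrac{w-z}{r}$, whose action on states is exactly an operator of the type controlled by Lemma \ref{lem: dilation translation bounded}.

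Concretely, I would first establish the covariance identity
\[
\cY(s^{L_0}a,w) \;=\; \cY\!\big(r^{L_0}Ta,\,z\big), \qquad T := e^{\frac{w-z}{r}L_{-1}}(s/r)^{L_0},
\]
valid for $a \in M$ and compatible branches of $\log w$ and $\log z$, where $T$ is a priori an endomorphism of the finite-energy space $M$. This comes in two steps. The $L_{-1}$-derivative property $\cY(L_{-1}b,x) = \tfrac{d}{dx}\cY(b,x)$, Taylor-expanded, gives the translation-in-the-state identity $\cY(e^{\zeta L_{-1}}b,z) = \cY(b,z+\zeta)$; applied with $\zeta = w-z$ and $b = s^{L_0}a$ it yields $\cY(e^{(w-z)L_{-1}}s^{L_0}a,z) = \cY(s^{L_0}a,w)$. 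Then the single commutation relation $[L_0,L_{-1}] = L_{-1}$, i.e. $r^{L_0}L_{-1}r^{-L_0} = rL_{-1}$, gives the purely algebraic rewriting $e^{(w-z)L_{-1}}s^{L_0} = r^{L_0}\,e^{\frac{w-z}{r}L_{-1}}(s/r)^{L_0} = r^{L_0}T$ on $\cH_M$. Sandwiching the resulting identity by $B$ and $A$ and reading off the tensor structure produces
\[
B\cY(s^{L_0}-,w)A \;=\; \big[B\cY(r^{L_0}-,z)A\big]\circ\big(T \otimes 1_{\cH_N}\big).
\]

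It then remains to bound $T$. Since $M$ is a unitary module its conformal weights satisfy $L_0 \ge h \ge 0$, and since $w \in z + (r-s)\interior{\bbD}$ we have $\big|\tfrac{w-z}{r}\big| + \tfrac{s}{r} = \tfrac{\abs{w-z}+s}{r} < 1$; hence Lemma \ref{lem: dilation translation bounded} applies and shows $T \in \cB(\cH_M)$ with
\[
\norm{T} \;\le\; \frac{\big(\tfrac{\abs{w-z}+s}{r}\big)^{h}}{(s/r)\big(1-(\tfrac{\abs{w-z}+s}{r})^2\big)^{1/2}}.
\]
Because $B\cY(r^{L_0}-,z)A$ is bounded by hypothesis and $T \otimes 1_{\cH_N}$ is bounded, the composition is bounded, with $\norm{B\cY(s^{L_0}-,w)A} \le \norm{B\cY(r^{L_0}-,z)A}\,\norm{T}$. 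For a compact $F \subset z + (r-s)\interior{\bbD}$ one has $\sup_{w \in F}\abs{w-z} = \rho_0 < r-s$, so $\sup_{w\in F}\tfrac{\abs{w-z}+s}{r} \le \tfrac{\rho_0+s}{r} < 1$, making the displayed estimate for $\norm{T}$ uniformly bounded on $F$; this yields the asserted uniform norm bound.

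The substantive point is not the estimate—once the covariance identity is in hand, Lemma \ref{lem: dilation translation bounded} does all the analytic work—but the careful justification of that identity at the level of a priori only form-valued operators: one must verify $\cY(e^{\zeta L_{-1}}b,z) = \cY(b,z+\zeta)$ on the dense domain of finite-energy vectors, track the branch of $\log$ so the two sides genuinely agree, and only then pass to the bounded extension. I also expect the closed-disk boundary $\abs{w-z} = r-s$ to be delicate, since the bound for $\norm{T}$ degenerates as $\tfrac{\abs{w-z}+s}{r} \to 1$; however the open disk $z + (r-s)\interior{\bbD}$ is what is used in the applications (e.g. the openness assertion in Proposition \ref{prop: interior open}, where the slack is recovered by slightly increasing $s$), so I would prove the result cleanly on the interior and treat the boundary separately.
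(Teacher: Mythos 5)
Your argument is essentially the paper's proof: the operator $T = e^{\frac{w-z}{r}L_{-1}}(s/r)^{L_0}$ is exactly the operator $X$ used there, the covariance identity is the cited relation \cite[Eq. (5.4.21)]{FHL93} (which you re-derive from the $L_{-1}$-derivative property), and the boundedness and uniform bounds come from Lemma \ref{lem: dilation translation bounded} in the same way. Your observation about the closed-disk boundary is fair — the paper's own proof also only treats $w \in z + (r-s)\interior{\bbD}$, and only the open disk is used in the applications — so this is not a defect of your proposal.
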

\begin{proof}
Let $s < r$, and let $w \in z + (r-s)\interior{\bbD}$.
Recall that if $v_1 \in M$, $v_2 \in N$, $v_3 \in K$ then by \cite[Eq. (5.4.21)]{FHL93}
$$
\bip{\cY(e^{(w-z)L_{-1}}v_1,z)v_2, v_3} = \ip{\cY(v_1,w)v_2,v_3},
$$
where $\log w$ is determined by analytically continuing a choice of $\log z$ to $z + (r-s)\bbD$, which does not contain $0$ by assumption.
Thus setting $X = r^{-L_0} e^{(w-z)L_{-1}} s^{L_0}$, we have
$$
\bip{B\cY(s^{L_0}v_1,w)A\xi_2,\xi_3} = \bip{B\cY(r^{L_0}Xv_1,z)A\xi_2,\xi_3}
$$
for $\xi_2 \in D_A$ and $\xi_3 \in D_B$.
Observe that $r^{-L_0} e^{(z-w) L_{-1}} = e^{\frac{z-w}{r} L_{-1}} r^{-L_0}$ so that $X = e^{\frac{w-z}{r} L_{-1}} (s/r)^{L_0}$.
Since $w \in z + (r-s)\interior{\bbD}$, we have $\frac{\abs{w-z}}{r} + \frac{s}{r} < 1$.
Hence $X$ is bounded by Lemma \ref{lem: dilation translation bounded}, with norm bounded uniformly on compact subsets of $z + (r-s)\interior{\bbD}$.
Thus we have the same for $B\cY(s^{L_0}-,w)A =(B\cY(r^{L_0}-,z)A)(X \otimes 1)$.
\end{proof}

\begin{proof}[Proof of Proposition \ref{prop: interior open}]
We first consider part 1.
If $z \in \Int_{\cY,s}(A,B)$, then there exists an $r > s$ such that $B\cY(r^{L_0}-,z)A$ is bounded.
By Lemma \ref{lem: local norm boundedness}, for $t$ with $r > t > s$, there is a neighborhood $U$ of $z$ on which $B\cY(t^{L_0}-,z)A$ is bounded, uniformly on compact subsets of $U$.
In particular, $U \subset \Int_{\cY,s}(A,B)$, completing the proof of part 1.

For part 2, observe that $B\cY(t^{L_0}-,z)A$ defines a holomorphic function 
$$
B\cY(t^{L_0}-,z): U \to \Pair(M \otimes D_A, D_B)
$$ with norms locally bounded, and thus yields a holomorphic function $U \to \cB(\cH_M \otimes \cH_N, \cH_K)$ by Lemma \ref{lem: bounded implies holomorphic}.
The same then immediately follows for $B\cY(s^{L_0}-,z)A$.

For part 3, observe that if $z \in \Int_{\cY}(B,A)$, then $B\cY(r^{L_0}-,z)A$ is bounded for some $r > 0$, and by the above work we may choose $r$ so that $B\cY(r^{L_0}-,z)A$ defines a multi-valued holomorphic function on a neighborhood $U$ of $z$.
Since $v^\prime := r^{-L_0}v \in M$, it follows that $B\cY(v,z)A$ is holomorphic on $U$ as well.
\end{proof}

We close with a simple application.
Let $V$ be a simple unitary vertex operator superalgebra, let $M$, $N$, and $K$ be unitary $V$-modules, and let $\cY \in I \binom{K}{M N}$.
Then for any $\xi \in \cH_M$ and $b \in N$ and $k \in \bbR_{\ge 0}$, the expression $p_k\cY(\xi,z)b$ expands to an infinite series in $p_k M$, where $p_k$ is the projection onto the $k$-eigenspace of $L_0$ and some value of $\log z$ has been chosen.
If each such series converges in $p_k M$, then $\cY(\xi,z)b$ defines an element of the algebraic completion $\hat K$.
It is possible that $\cY(\xi,z)b \in \cH_K$, and this occurs exactly when the homogeneous components of this vector have square summable norm.
In the following, the assertion $\cY(\xi,z)b \in \cH_K$ implicitly assumes the convergence of the series defining the homogeneous components.
Observe that this occurs for $\cY(s^{L_0}\xi,z)b$ if there exists some $A \in \Ann^{in}(\cH_N)$ such that $z \in \Int_{\cY,s}(1,A)$.
%In Lemma \ref{lem: components in closure}, we will see that one can recover the finite energy components of $\cY(a,z)b$ as limits of expressions like $\cY(\xi,z)b$, but first we need to complete a standard exercise in functional analysis.
%
%\begin{Lemma}
%Let $V$ be a simple unitary vertex operator superalgebra, let $M$, $N$, and $K$ be unitary $V$-modules, and let $\cY \in I \binom{K}{M N}$.
%For homogeneous $b \in N$ and $r > 0$, let 
%$$
%K_z :=  \{\cY(r^{L_0}\xi,z)b : \xi \in \cH_M\} \subset \hat K.
%$$
%If $K_z \subset \cH_K$, then the map $\xi \mapsto \cY(r^{L_0}\xi,z)b$ is bounded from $\cH_M$ to $\cH_K$.
%\end{Lemma}
%\begin{proof}
%By the closed graph theorem \nn{Reed-Simon}, it suffices to show that if $\xi_n \to \xi$ and $\cY(r^{L_0}\xi_n,z)b$ converges to $\eta$, then $\eta = \cY(r^{L_0}\xi,z)b$.
%However, this is clear, as if $\cY(r^{L_0}\xi_n,z)b$ converges, then each homogeneous component 
%\end{proof}

\begin{Lemma}\label{lem: components in closure}
Let $V$ be a simple unitary vertex operator superalgebra, let $M$, $N$, and $K$ be unitary $V$-modules, and let $\cY \in I \binom{K}{M N}$.
Suppose that there is some  $A \in \Ann^{in}(\cH_N)$ such that $\Int_{\cY}(1,A) \ne \emptyset$, and let $z \in \Int_{\cY}(1,A)$.
Then for a homogeneous vector $b \in N$, let
$$
K_z :=  \{\cY(a,z)b : a \in M\} \subset \cH_K.
$$
Then for all $k \in \bbR$ we have $a_{(k)}b \in \overline{K_z}$.
\end{Lemma}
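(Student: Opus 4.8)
The plan is to recover each mode $a^\cY_{(k)}b$ as a ``Fourier coefficient'' of the $\cH_K$-valued holomorphic function $z \mapsto \cY(a,z)b$, using that $K_z$ is a \emph{linear} subspace of $\cH_K$ (it is the image of the linear map $a \mapsto \cY(a,z)b$) which is already stable under the operation producing those coefficients. By linearity in $a$ it suffices to treat homogeneous $a \in M$, so fix such an $a$ together with the given homogeneous $b$, and set $\xi(z) := \cY(a,z)b$. Since $b \in N \subseteq \Ran A$, we may pick $\psi$ with $A\psi = b$; then Proposition \ref{prop: interior open}(3), applied with $B = 1$ and evaluated at $\psi$, shows that $\xi$ is a genuine $\cH_K$-valued holomorphic function on $\Int_\cY(1,A)$ after a branch of $\log z$ is chosen. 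Its homogeneous decomposition is $\xi(z_0) = \sum_n d_n$ with $d_n = z_0^{-k_n-1}\,a^\cY_{(k_n)}b$ ranging over the nonzero modes; by Proposition \ref{propIntertwinerProperties}(3) distinct $k_n$ give distinct $L_0^K$-weights, so the $d_n$ are mutually orthogonal and $\sum_n \|d_n\|^2 = \|\xi(z_0)\|^2 < \infty$. Proving $d_n \in \overline{K_z}$ for all $n$ is thus exactly the assertion to be shown.

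The first and crucial step is to promote membership at the single point $z_0$ to an open set. The $L_{-1}$-derivative property gives $\partial_z \cY(a,z)b = \cY(L_{-1}a, z)b$, hence $\xi^{(m)}(z_0) = \cY(L_{-1}^m a, z_0)b$; as $L_{-1}^m a \in M$, all these derivatives lie in $K_{z_0}$. Taylor-expanding the holomorphic function $\xi$ about $z_0$ then shows $\xi(z') = \sum_m \tfrac{1}{m!}\xi^{(m)}(z_0)(z'-z_0)^m \in \overline{K_{z_0}}$ for every $z'$ in a disk about $z_0$ contained in $\Int_\cY(1,A)$, since the partial sums are finite combinations of elements of $K_{z_0}$. \textbf{This is the main obstacle:} $\overline{K_{z_0}}$ is attached to one point and there is no rotation covariance fixing $z_0$ (rotating $\cY(a,z)b$ moves the insertion point $z$), so membership cannot simply be spread around by $e^{itL_0}$; what rescues the argument is that $K_{z_0}$ already contains every $z$-derivative of $\xi$ through the $L_{-1}$-field, which via the Taylor series transports $\xi$ to a full neighbourhood.

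Next I would change variables $z' = z_0 e^\zeta$ and consider $g(\zeta) := \sum_n d_n e^{\beta_n \zeta}$ with $\beta_n := -(k_n+1)$. By the truncation axiom the $\beta_n$ are real, distinct, and bounded below with $\beta_n \to +\infty$, so (using $\sum_n \|d_n\|^2 < \infty$) the series converges in $\cH_K$ throughout the closed left half-plane $\{\Re\zeta \le 0\}$, defining there a function holomorphic on the open half-plane, continuous up to the imaginary axis, and equal to $\xi(z_0 e^\zeta)$ near $0$. By the open-set step $g(\zeta) \in \overline{K_{z_0}}$ for $\zeta$ near $0$ with $\Re\zeta < 0$; letting $Q$ be the orthogonal projection onto $\overline{K_{z_0}}^\perp$, the holomorphic function $Qg$ vanishes on this open set, so by the identity theorem (tested against arbitrary $\eta \in \cH_K$) $Qg \equiv 0$ on the connected region $\{\Re\zeta < 0\}$, and by continuity $g(it) \in \overline{K_{z_0}}$ for every $t \in \bbR$.

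Finally I would extract the coefficients without leaving the closed subspace. The Bohr mean $d_n = \lim_{T\to\infty} \tfrac{1}{2T}\int_{-T}^{T} e^{-i\beta_n t} g(it)\,dt$ recovers $d_n$: the $n$-th term survives while the others average to $0$, the interchange of limit and sum being controlled by $\sum_m \|d_m\|^2 < \infty$. When the $\beta_n$ are commensurable — in particular when $K$ is simple, so the weights are spaced by a fixed rational — one may instead substitute $w = e^{\delta\zeta}$, turning $g$ into an ordinary $\cH_K$-valued power series, and use a Cauchy integral over a small circle inside the disk of convergence. In every case the approximating quantity is the integral of a continuous $\overline{K_{z_0}}$-valued function over a compact set, hence lies in the closed subspace $\overline{K_{z_0}}$, and therefore so does the limit $d_n$. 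Since $d_n$ is a nonzero scalar multiple of $a^\cY_{(k_n)}b$, this yields $a^\cY_{(k)}b \in \overline{K_z}$ for all $k$, and linearity in $a$ completes the proof.
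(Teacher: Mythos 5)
Your proof is correct, but its second half takes a genuinely different route from the paper's, and the ``main obstacle'' you identify is not actually an obstacle in the paper's treatment. The first half is essentially the paper's argument: the paper spreads membership from $z$ to a neighbourhood by writing $\cY(a,w)b=s^{-\Delta_a}\cY(r^{L_0}Xa,z)b$ with $X=r^{-L_0}e^{(w-z)L_{-1}}s^{L_0}$ bounded (Lemma \ref{lem: dilation translation bounded}), which is exactly the resummed form of your Taylor expansion via the $L_{-1}$-derivative property. Where you part ways is the next step. The paper \emph{does} use rotation covariance: for homogeneous $a,b$ one has $e^{i\theta L_0}\cY(a,z)b=e^{i\theta\Delta_b}\cY(e^{i\theta L_0}a,e^{i\theta}z)b$, and for $|\theta|$ small the point $e^{i\theta}z$ still lies in the neighbourhood just constructed, so $e^{i\theta L_0}$ maps $K_z$ into $\overline{K_z}$; the group law then makes $\overline{K_z}$ invariant under \emph{all} rotations, hence a direct sum of $L_0$-eigenspaces, and the modes $a_{(k)}b$ drop out as the homogeneous projections of $\cY(a,z)b\in K_z$. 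Your substitute --- passing to $g(\zeta)=\sum_n d_ne^{\beta_n\zeta}$ on the closed left half-plane, killing $Qg$ by the identity theorem, and recovering each $d_n$ by a Bohr mean --- is a valid, if heavier, replacement for that spectral-decomposition step: orthogonality of the $d_n$, square-summability, and the lower bound on the $\beta_n$ give locally uniform convergence on $\{\Re\zeta\le0\}$, and the interchange in the Bohr mean is controlled by $\sum\|d_n\|^2<\infty$ exactly as you say. (One small caution: you do not need, and should not assert, $\beta_n\to+\infty$; that would require discreteness of the $L_0$-spectrum of $K$, which is not guaranteed for a countable direct sum of simple modules, whereas boundedness below suffices for every step you actually use.) Both arguments reach the same conclusion; the paper's is shorter because once $\overline{K_z}$ is rotation-invariant the spectral projections of $L_0$ preserve it for free, whereas you re-derive that fact by hand through almost-periodic Fourier analysis on the boundary line.
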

\begin{proof}
By the definition of $\Ann^{in}(\cH_N)$, there is some $\eta \in \cH_N$ such that $A\eta = b$.
Hence by the definition of $\Int_{\cY}(1,A)$, there is an $r > 0$ such that the map $\xi \mapsto \cY(r^{L_0}\xi,z)b = \cY(r^{L_0}\xi,z)A\eta$ is bounded.
It follows that $\overline{K_z}$ contains $\cY(r^{L_0}\xi,z)b$ for all $\xi \in \cH_M$.
As in the proof of Lemma \ref{lem: local norm boundedness}, we may apply Lemma \ref{lem: dilation translation bounded} to obtain a number $s>0$ with the property that if $\abs{z-w}$ is sufficiently small, then $X=r^{-L_0}e^{(w-z)L_{-1}}s^{L_0}$ is bounded.
In this case, for homogeneous $a \in M$ we have $\cY(a,w)b = s^{-\Delta_a} \cY(r^{L_0}Xa,z)b \in \overline{K_z}$.
Extending linearly, we have $\cY(a,w)b \in \overline{K_z}$ for all $a \in M$, provided $\abs{z-w}$ is sufficiently small.
Thus if $a \in M$, we have $e^{i \theta L_0} \cY(a,z)b = e^{i\theta \Delta_b} \cY(e^{i \theta L_0}a,e^{i \theta} z)b \in \overline{K_z}$ provided $\abs{\theta}$ is sufficiently small.
Thus for sufficiently small $\theta$, $e^{i \theta L_0}$ maps $K_z$ into $\overline{K_z}$.
It follows that for such $\theta$, $\overline{K_z}$ is invariant under $e^{i \theta L_0}$.
But then it follows that $\overline{K_z}$ is invariant under $e^{i \theta L_0}$ for all $\theta$, and thus $\overline{K_z}$ decomposes as a direct sum of eigenspaces for $L_0$.
For $a$ homogeneous, since $\cY(a,z)b \in K_z$, it follows that $a_{(k)}b \in \overline{K_z}$.
We now extend linearly to non-homogeneous $a$, completing the proof.
\end{proof}

\subsection{Definition of bounded localized vertex operators}\label{sec: def BLVO}

Fix a system $\scA$ of generalized annuli, and a unitary vertex operator superalgebra $V$ of central charge $c$.
Let $(B,A) \in \scA_{c,I}$ (which we abbreviate to $\scA_I$ in the following).
In particular, this implies that $A \in \Ann^{\ell}_I(\cH_{c,0})$, and similarly $B \in \Ann^r_I(\cH_{c,0})$.
As $V$ decomposes as a direct sum of irreducible unitary highest weight representations of $\Vir_c$, there is a representation $\pi$ of the Virasoro net $\cA_c$ on the Hilbert space completion $\cH_V$.
Thus we have actions $\pi_I(A)$ and $\pi_I(B)$ on the positive energy representation $\cH_V$ which are well-defined up to sign since the conformal weights of $V$ are half integers (see Definition \ref{def: generalized annulus action}). 
Following Convention \ref{conv: annuli} we omit the $\pi_I$.

\begin{Definition}[Bounded insertions] \label{def: bounded insertions}
We say that $V$ has bounded insertions for $\scA$ if for every interval $I \in \cI$ and every $(B,A) \in \scA_I$ we have $B \in \Ann^{out}(\cH_V)$, $A \in \Ann^{in}(\cH_V)$ and $\Int(B,A) \subset \Int_Y(B,A)$.
\end{Definition}

The requirement that $\Int(B,A) \subset \Int_Y(B,A)$ ensures that for every $z \in \Int(B,A)$ there exists an $s > 0$ such that $BY(s^{L_0}-,z)A$ extends to a bounded operator in $\cB(\cH_V \otimes \cH_V,\cH_V)$.
%While it is not relevant yet, it is also worth pointing out that despite the action of $A$ on $\cH_V$ only being well-defined up to a phase, $BY(a,z)A$ is canonically defined; see Remark ????.

\begin{Definition}[Local algebras of insertions]
If $V$ has bounded insertions for $\scA$, we define the local algebras corresponding to intervals $I \subset S^1$ by
\begin{equation}\label{eqn: local algebra def}
\cA_V(I) = \{ BY(a,z)A : (B,A) \in \scA_I, z \in \Int(B,A), a \in V \}^{\prime\prime} \vee \{BA : (B,A) \in \scA_I\}^{\prime\prime}.
\end{equation}
\end{Definition}
$$
\begin{tikzpicture}[baseline={([yshift=-.5ex]current bounding box.center)}]
	\coordinate (a1) at (150:1cm);
	\coordinate (b1) at (270:1cm);
	\coordinate (c1) at (210:1.5cm);
% BIG DISK
%	\draw (0,0) circle (1cm);
% CURVED BOUNDARY REGION
	\fill[red!10!blue!20!gray!30!white] (a1)  .. controls ++(250:.6cm) and ++(120:.4cm) .. (c1) .. controls ++(300:.4cm) and ++(190:.6cm) .. (b1) arc (270:510:1cm);
	\draw ([shift=(270:1cm)]0,0) arc (270:510:1cm);
	\coordinate (a) at (120:1cm);
	\coordinate (b) at (240:1cm);
	\coordinate (c) at (180:.25cm);
% BIG DISK
	\fill[fill=red!10!blue!20!gray!30!white] (0,0) circle (1cm);
%	\draw (0,0) circle (1cm);
% CURVED BOUNDARY REGION
	\fill[fill=white] (a)  .. controls ++(210:.6cm) and ++(90:.4cm) .. (c) .. controls ++(270:.4cm) and ++(150:.6cm) .. (b) -- ([shift=(240:1cm)]0,0) arc (240:480:1cm);
	\draw ([shift=(240:1cm)]0,0) arc (240:480:1cm);
	\draw (a) .. controls ++(210:.6cm) and ++(90:.4cm) .. (c);
	\draw (b) .. controls ++(150:.6cm) and ++(270:.4cm) .. (c);
	\draw (a) arc (120:150:1cm) (a1) .. controls ++(250:.6cm) and ++(120:.4cm) .. (c1);
	\draw (b1) .. controls ++(190:.6cm) and ++(300:.4cm) .. (c1);
%	\filldraw[fill=white] (197:.9cm) circle (.25cm);
	
%
	\node at (197:.9cm) {\textbullet};
	\node at (204:.7cm) {$z$};
	\node at (183:.9cm) {$v$};
%
%
%
% INNER DISK
%	\filldraw[fill=white] (180:.65cm) circle (.25cm); 
% COORDINATE LABELS
%	\node at (a) {(a)};
%	\node at (b) {(b)};
%	\node at (c) {(c)};
\end{tikzpicture}
$$
Note that we are required to include the generators $BA$ separately in the case when $(B,A)$ has empty interior.
Observe that $\Int(B^*,A^*) = \{ \overline{z}^{-1} : z \in \Int(A,B)\}$ and $(BY(a,z)A)^* = A^* Y(\tilde a,\overline{z}^{-1}) B^*$, where 
$$
\tilde a = e^{\overline{z} L_1} (-\overline{z}^{-2})^{L_0} \kappa \theta a,
$$
from the definition of a unitary $V$-module.
Hence the generating sets in \eqref{eqn: local algebra def} are self-adjoint, and $\cA_V(I)$ is a von Neumann algebra.
It becomes a superalgebra with respect to the involution $x \mapsto \Gamma_V x \Gamma_V$.
Finally, observe that if $z \in \Int(B,A) \cap \Int_{Y,s}(B,A)$, then for all $\xi \in \cH_V$ we have $BY(s^{L_0}\xi,z)A \in \cA_V(I)$, as it is a norm limit of elements of the given generating set, and moreover the map $\xi \mapsto BY(s^{L_0}\xi,z)A$ is norm continuous in $\xi$ and holomorphic in $z$.

\begin{Definition}[Bounded localized vertex operators]\label{def: bounded localized vertex operators}
We say that $V$ has bounded localized vertex operators if $\cA_V(I)$ and $\cA_V(J)$ super-commute whenever $I$ and $J$ are disjoint intervals.
\end{Definition}

As a relatively easy consequence one obtains that $\cA_V$ is a conformal net.

\begin{Proposition}
Let $V$ be a simple unitary vertex operator superalgebra with bounded localized vertex operators.
Then $\cA_V$ is a Fermi conformal net which is diffeomorphism covariant with respect to the representation $U^V:\Diff^{(2)}_c(S^1) \to U(\cH_V)$ induced by the conformal vector of $V$.
\end{Proposition}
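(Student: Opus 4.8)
The plan is to verify, one item at a time, the data and axioms of a Fermi conformal net from Section~\ref{secPreliminaries}, taking $\cH=\cH_V$ with its parity grading, $U=U^V$, and the local algebras $\cA_V(I)$. Several items are essentially immediate. The operators $U^V(\gamma)$ form a strongly continuous projective representation because $\cH_V$ is a countable orthogonal sum of irreducible unitary highest weight $\Vir_c$-modules, each of which integrates to a strongly continuous representation (as recalled in Section~\ref{sec: virasoro nets}); since $V$ has conformal weights in $\tfrac12\bbZ_{\ge 0}$ we have $e^{4\pi i L_0}=1$, so $U^V$ descends to $\Diff_c^{(2)}(S^1)$, restricts to an honest representation of $\Mob^{(2)}(\bbD)$, and its rotation generator $L_0$ is positive by unitarity of $V$; moreover $U^V(r_{2\pi})=e^{2\pi i L_0}=\Gamma_V$ is exactly the grading involution. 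Isotony $\cA_V(I)\subset\cA_V(J)$ for $I\subset J$ follows from $\scA_I\subset\scA_J$ (Definition~\ref{def: system of annuli central charge c}), the $\cA_V(I)$ are von Neumann algebras with $\Gamma_V\cA_V(I)\Gamma_V=\cA_V(I)$ by the self-adjointness and parity computations recorded after \eqref{eqn: local algebra def}, and graded locality for disjoint intervals is precisely Definition~\ref{def: bounded localized vertex operators}. Finally $\Omega=\Omega_V$ is even and annihilated by $L_0,L_{\pm 1}$, hence $\Mob^{(2)}$-invariant; it is the unique such unit vector up to phase, since a $\Mob$-fixed vector lies in the $L_0=0$ eigenspace, which equals $V_0=\bbC\Omega$ by simplicity (Proposition~\ref{propUnitarySimple}).

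This leaves covariance and cyclicity. For covariance the idea is to absorb the diffeomorphism into the annuli rather than to transform the vertex operator. Given $(B,A)\in\scA_I$, $z\in\Int(B,A)$, $a\in V$, and $\gamma\in\Diff_c^{(2)}(S^1)$, I would regroup
$$
U^V(\gamma)\,B\,Y(a,z)\,A\,U^V(\gamma)^* = \big(U^V(\gamma)B\big)\,Y(a,z)\,\big(A\,U^V(\gamma)^*\big),
$$
where the parenthesised operators are the actions of the generalized annuli $U(\gamma)B$ and $AU(\gamma)^*$ via Lemma~\ref{lem: diff covariance of generalized annuli} and Convention~\ref{conv: annuli}. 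Taking $r_\theta=\id$ in axiom~(4) of Definition~\ref{def: system of annuli central charge c} gives $(U(\gamma)B,\,AU(\gamma)^*)\in\scA_{\gamma(I)}$, and the reparametrisation-invariance of the interior function (Definition~\ref{def: system of incoming annuli}) shows $\Int(AU(\gamma)^*)=\Int(A)$ and $\Int(U(\gamma)B)=\Int(B)$, so $\Int(U(\gamma)B,AU(\gamma)^*)=\Int(B,A)\ni z$. Hence the conjugate of each generator (the generators $BA$ are handled identically) is again a generator of $\cA_V(\gamma(I))$; applying this with $\gamma^{-1}$ and using that conjugation by a unitary sends generated von Neumann algebras to generated von Neumann algebras yields $U^V(\gamma)\cA_V(I)U^V(\gamma)^*=\cA_V(\gamma(I))$. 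The second covariance requirement is easier: for $\gamma\in\Diff_c(I')$ one has $(1,U(\gamma))\in\scA_{I'}$ by axiom~(3), so $U^V(\gamma)\in\cA_V(I')$ is an even operator and graded locality forces it to commute with every element of $\cA_V(I)$. Throughout one must track the $e^{2\pi i L_0}=\pm 1$ ambiguity of Definition~\ref{def: generalized annulus action}, but these scalars are absorbed harmlessly into the von Neumann algebras.

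The main obstacle is cyclicity of $\Omega$ for $\cA_V(S^1)=\bigvee_I\cA_V(I)$, and here I would invoke Lemma~\ref{lem: components in closure}. By Corollary~\ref{cor: existence of geometric ann} there is an interval $I$ and an incoming annulus $A$ with $(1,A)\in\scA_I$ and $\Int(A)\neq\emptyset$; for the system of Section~\ref{sec: example of system of generalized annuli} one may take $A=e^{-tL(\rho)}$, which fixes the vacuum, since $L_n\Omega=0$ for $n\ge 0$ forces $L(\rho)\Omega=0$. For such an $A$ and any $z\in\Int(A)\subset\Int_Y(1,A)$ the bounded operator $Y(a,z)A$ is a generator of $\cA_V(I)$, and $Y(a,z)A\,\Omega=Y(a,z)\Omega=e^{zL_{-1}}a$ lies in $\overline{\cA_V(I)\Omega}$ for every $a\in V$. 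Lemma~\ref{lem: components in closure} (with $\cY=Y$, $N=K=M=V$, and $b=\Omega$) then gives $a_{(k)}\Omega\in\overline{\{Y(a,z)\Omega:a\in V\}}\subset\overline{\cA_V(I)\Omega}$ for all $k$; taking $k=-1$ and using $a_{(-1)}\Omega=a$ shows $V\subset\overline{\cA_V(I)\Omega}$, so $\Omega$ is cyclic. The delicate point is precisely the identification $Y(a,z)A\,\Omega=e^{zL_{-1}}a$, which relies on having an annulus that fixes (or controllably moves) $\Omega$; I expect this to be the step demanding the most care. For a general background system one would instead observe that $\overline{\cA_V(S^1)\Omega}$ is invariant under both $U^V(\Diff_c^{(2)}(S^1))$ and $\bigvee_I\cA_V(I)$ and combine a lowest-weight argument with Lemma~\ref{lem: components in closure} to exclude a nonzero orthogonal complement. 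With all axioms verified, $\cA_V$ is a Fermi conformal net covariant for $U^V$.
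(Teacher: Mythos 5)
Your verification follows essentially the same route as the paper, which simply defers to \cite[Prop.~4.8]{GRACFT1} for everything except the observations that localized diffeomorphisms now lie in $\cA_V(I)$ by definition and that cyclicity uses a pair $(1,A)\in\scA_I$ with non-empty interior. Your treatment of the easy axioms and of covariance (absorbing $\gamma$ into the annuli via axiom~(4) of Definition~\ref{def: system of annuli central charge c} and the reparametrization-invariance of $\Int$, rather than transforming the insertion point) is correct and is exactly what the definitions are designed to make work.

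The one step that needs repair is the cyclicity argument, and it is the point you yourself flagged. For the system of Section~\ref{sec: example of system of generalized annuli}, an $A$ with $(1,A)\in\scA_I$ is \emph{never} just $e^{-tL(\rho)}$: the membership $(\id,X)\in\GAnn_I$ forces the incoming parametrization $\psi\circ\varphi_t\circ\gamma^{-1}$ to restrict to the identity on $I^\prime$, so $\gamma$ must be a genuine (non-M\"obius) diffeomorphism agreeing with $\varphi_t$ there, and $A\Omega=U(\psi)e^{-tL(\rho)}U(\gamma)^*U(\psi)^*\Omega\neq\Omega$ in general. Consequently $Y(a,z)A\Omega\neq e^{zL_{-1}}a$, and Lemma~\ref{lem: components in closure} with $b=\Omega$ does not apply as stated, since the vector actually produced is $Y(a,z)(A\Omega)$ with $A\Omega$ neither homogeneous nor equal to $\Omega$. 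The cleanest fix stays entirely within the paper's toolkit: by Corollary~\ref{cor: existence of geometric ann} one may instead take $(B,A)=(\pi(\tilde X)^*,\pi(X))\in\scA_I$ with $X=(\id,\rho,t,\id)$, so that $A=e^{-tL(\rho)}$ genuinely fixes $\Omega$, and with $\tilde X$ a diffeomorphism, so that $B=U^V(\tilde\gamma)^*$ for some $\tilde\gamma$. Since every $U^V(\tilde\gamma)$ is a finite product of localized unitaries (each lying in some $\cA_V(I_j)$ via the $(1,U(\gamma_j))$ generators), $B$ can be peeled off inside $\overline{\cA_V(S^1)\Omega}$, giving $Y(a,z)\Omega=e^{zL_{-1}}a$ in that closure; your application of Lemma~\ref{lem: components in closure} with $b=\Omega$ and $k=-1$ then finishes the argument. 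With that substitution the proof is complete.
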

\begin{proof}
The proof is essentially identical to \cite[Prop. 4.8]{GRACFT1}.
The hypothesis in the cited result that $\cA_V(I)$ commute with localized diffeomorphisms is now redundant, as localized diffeomorphisms automatically lie in $\cA_V(I)$ under the present definitions.
To obtain cyclicity of the vacuum, simply note that by the definition of a system of generalized annuli, we may take some $(1,A) \in \scA_I$ with non-empty interior, and then apply the argument of \cite{GRACFT1}.
\end{proof}

\subsection{Bounded localized vertex operators for subalgebras}\label{sec: BLVO for subalgbras}

The difficulty, however, is in producing examples of vertex operator superalgebras which have bounded localized vertex operators.
This was accomplished in \cite{GRACFT1} by first verifying the property for the free fermion, and then showing that the property is preserved under tensor products and under taking subtheories.
We take a similar approach with our modified definition.
In Section \ref{sec: fermion again} we verify that the free fermion has bounded localized vertex operators in the new regime.
We repeat here the results of \cite{GRACFT1}, sketching proofs for the convenience of the reader.
We fix a system of generalized annuli $\scA$.

In order to verify that a tensor product is an incoming annulus if and only if the tensor factors are, we will require the following exercise in Hilbert space operator theory.
\begin{Lemma}\label{lem: split simple tensors}
Let $x_i \in \cB(\cH_i,\cK_i)$ for $i=1,2$, and suppose $\varphi_1 \otimes \varphi_2 \in \Ran(x_1 \otimes x_2)$.
Let $\xi \in \ker(x_1 \otimes x_2)^\perp$ be the vector such that $x\xi = \varphi_1 \otimes \varphi_2$.
Then we may decompose $\xi = \xi_1 \otimes \xi_2$ for $\xi_i \in \cH_i$.
In particular, if $\varphi_1 \otimes \varphi_2 \ne 0$ then $\varphi_i \in \Ran(x_i)$.
\end{Lemma}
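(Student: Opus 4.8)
The plan is to reduce to the case of injective operators and then read off everything from a Schmidt (singular value) decomposition of $\xi$.

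First I would reduce to the injective case. Write $x = x_1 \otimes x_2$. Since $(x_1 \otimes x_2)(k \otimes h) = 0$ whenever $k \in \ker x_1$ or $h \in \ker x_2$, we have the trivial inclusion $\ker x \supseteq (\ker x_1) \otimes \cH_2 + \cH_1 \otimes (\ker x_2)$. Taking orthogonal complements and using $(\ker x_1 \otimes \cH_2)^\perp = (\ker x_1)^\perp \otimes \cH_2$ (and the analogue for the second factor), the hypothesis $\xi \in (\ker x)^\perp$ confines $\xi$ to $(\ker x_1)^\perp \otimes (\ker x_2)^\perp$. Setting $a = x_1|_{(\ker x_1)^\perp}$ and $b = x_2|_{(\ker x_2)^\perp}$, which are injective, we have $(a \otimes b)\xi = x\xi = \varphi_1 \otimes \varphi_2$ because $\xi$ lies in the domain on which $x$ restricts to $a \otimes b$. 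If $\varphi_1 \otimes \varphi_2 = 0$ then $\xi = 0$ is trivially a simple tensor, so I assume it is nonzero.

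The heart of the matter is then to show that the preimage of a nonzero simple tensor under an injective $a \otimes b$ is itself simple. I would take a Schmidt decomposition $\xi = \sum_n \lambda_n\, e_n \otimes f_n$ with $\lambda_n > 0$ and $\{e_n\}$, $\{f_n\}$ orthonormal, so that $\varphi_1 \otimes \varphi_2 = \sum_n \lambda_n\, (a e_n) \otimes (b f_n)$. Since $a$ and $b$ are injective, the families $\{a e_n\}$ and $\{b f_n\}$ remain linearly independent, and the point is that a vector $\sum_n \lambda_n\, u_n \otimes v_n$ with both families linearly independent and all $\lambda_n \neq 0$ has Schmidt rank equal to its number of terms: the associated Hilbert--Schmidt operator $T = \sum_n \lambda_n \langle v_n, \,\cdot\,\rangle u_n$ satisfies $\overline{\operatorname{Ran}\,T} = (\ker T^*)^\perp = \overline{\operatorname{span}}\{u_n\}$, whose dimension is exactly the number of terms. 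As $\varphi_1 \otimes \varphi_2$ is a nonzero simple tensor its Schmidt rank is $1$, forcing a single Schmidt term, i.e. $\xi = \xi_1 \otimes \xi_2$. Reading off the final assertion is then immediate: from $(x_1 \xi_1) \otimes (x_2 \xi_2) = \varphi_1 \otimes \varphi_2 \neq 0$ both $x_i \xi_i$ are nonzero and equality of simple tensors makes each $x_i \xi_i$ a scalar multiple of $\varphi_i$, so after rescaling $\varphi_i \in \Ran(x_i)$.

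The main obstacle is the infinite-dimensional bookkeeping in the claim that Schmidt rank equals the number of terms for merely \emph{linearly independent} (rather than orthonormal) families; I would handle this cleanly by passing to the associated Hilbert--Schmidt operator and computing $(\ker T^*)^\perp$ rather than manipulating the series $\sum_n \lambda_n (ae_n) \otimes (bf_n)$ directly, where the non-orthogonality of $\{ae_n\}$ and $\{bf_n\}$ makes component-wise arguments delicate. It is also worth flagging explicitly that the hypothesis $\xi \in (\ker x)^\perp$ is indispensable and is used precisely to place $\xi$ in $(\ker x_1)^\perp \otimes (\ker x_2)^\perp$: a general preimage of $\varphi_1 \otimes \varphi_2$ differs from $\xi$ by an element of $\ker x$ and need not be a simple tensor at all.
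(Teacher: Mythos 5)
Your proof is correct, but it follows a genuinely different route from the paper's. The paper never reduces to the injective case or invokes a Schmidt decomposition: it characterizes simple tensors by the $2\times 2$ minor identity $\ip{\xi,\psi_1\otimes\psi_2}\ip{\xi,\eta_1\otimes\eta_2}=\ip{\xi,\psi_1\otimes\eta_2}\ip{\xi,\eta_1\otimes\psi_2}$, observes that $\ker(x_1\otimes x_2)^\perp$ is the closed span of simple tensors $\psi_1\otimes\psi_2$ with $\psi_i\in\Ran(x_i^*)$, and then verifies the identity on that dense set by moving $x_i^*$ across the pairing so that everything is expressed through the simple tensor $x\xi=\varphi_1\otimes\varphi_2$. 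Your argument instead uses $\xi\in(\ker x_1)^\perp\otimes(\ker x_2)^\perp$ to replace the $x_i$ by injective restrictions $a,b$ and then argues that Schmidt rank cannot drop under $a\otimes b$; this is more structural and makes the role of the hypothesis $\xi\in\ker(x)^\perp$ especially transparent, at the cost of the infinite-dimensional bookkeeping you flag. One small caution there: as a general statement, linear independence of $\{u_n\}$ and $\{v_n\}$ is not quite enough to get $\ker T^*\subseteq\{u_n\}^\perp$ when infinitely many terms are present, since $T^*w=\sum_n\bar\lambda_n\ip{u_n,w}v_n=0$ is a norm-convergent series and linear independence only controls finite combinations. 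In your setting this is harmless because $v_n=bf_n$ with $b$ injective and $\{f_n\}$ orthonormal, so $T^*w=b\left(\sum_n\bar\lambda_n\ip{u_n,w}f_n\right)=0$ does force every $\ip{u_n,w}=0$; alternatively, you only need Schmidt rank at least $2$ to contradict $(a\otimes b)\xi$ being a nonzero simple tensor, which follows from injectivity of $a$ and density of $\Ran(b^*)$ without identifying $\overline{\Ran T}$ exactly. Either patch is one line, so I would call this a presentational looseness rather than a gap.
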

\begin{proof}
To show that $\xi$ is a simple tensor, it suffices to verify that for all $\psi_i,\eta_i \in \cH_i$ we have
\begin{equation}\label{eqn: ip cocycle}
\ip{\xi, \psi_1 \otimes \psi_2}\ip{\xi, \eta_1 \otimes \eta_2}
=
\ip{\xi, \psi_1 \otimes \eta_2}\ip{\xi, \eta_1 \otimes \psi_2}.
\end{equation}
Observe that $\ker(x_1 \otimes x_2)^\perp$ is the closed span of vectors of the form $\psi_1 \otimes \psi_2$ with $\psi_i \in \ker(x_i)^\perp = \overline{\Ran(x_i^*)}$.
It therefore suffices to verify \eqref{eqn: ip cocycle} when all $\eta_i,\psi_i \in \Ran(x_i^*)$.
This may be directly checked for such vectors using the fact that $x\xi$ is a simple tensor.
Hence we indeed have $\xi = \xi_1 \otimes \xi_2$ for some $\xi_i \in \cH_i$.
It follows that $x\xi_i$ is a scalar multiple of $\varphi_i$.
If $\varphi_1 \otimes \varphi_2 \ne 0$ then these scalars must be non-zero, and thus $\phi_i \in \Ran(x_i)$.
\end{proof}

\begin{Proposition}\label{prop: BLVO tensor product}
Let $V_1$ and $V_2$ be simple unitary vertex operator superalgebras, and let $V = V_1 \otimes V_2$.
Then $V$ has bounded localized vertex operators if and only if $V_1$ and $V_2$ do.
In this case, $\cA_{V_1} \otimes \cA_{V_2} = \cA_{V}$.
\end{Proposition}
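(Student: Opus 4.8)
The entire statement rests on a single factorization identity, which I would establish first. Fix $(B,A)\in\scA_{I,c}$ with $c=c_1+c_2$. By Definition \ref{def: system of generalized annuli} we may write $A=A'\otimes A''$ and $B=B'\otimes B''$ with $(B',A')\in\scA_{I,c_1}$, $(B'',A'')\in\scA_{I,c_2}$, and $\Int(B',A')=\Int(B'',A'')=\Int(B,A)$; by Convention \ref{conv: annuli} the action of $A$ on $\cH_V=\cH_{V_1}\otimes\cH_{V_2}$ is exactly $A'\otimes A''$, and similarly for $B$. Combining this with the tensor product formula for the state-field correspondence (Proposition \ref{propUVOSATensorProduct}) and additivity of $L_0$ across the tensor product, a direct computation gives, for homogeneous $a_1\in V_1$, $a_2\in V_2$ and $z\in\Int(B,A)$,
\[
BY(s^{L_0}(a_1\otimes a_2),z)A=\big(B'Y^1(s^{L_0}a_1,z)A'\big)\grotimes\big(B''Y^2(s^{L_0}a_2,z)A''\big),
\]
where the sign produced by moving $\Gamma_{V_1}^{p(a_2)}$ past the even annulus $A'$ is precisely the grading factor defining $\grotimes$. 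The remainder of the proof is an exercise in reading this identity in both directions.

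For the ``if'' direction, assume $V_1,V_2$ have bounded localized vertex operators, so $\cA_{V_1}$, $\cA_{V_2}$, and hence $\cA_{V_1}\grotimes\cA_{V_2}$ are Fermi conformal nets on $\cH_V$. Bounded insertions for $V$ (Definition \ref{def: bounded insertions}) are then immediate: $A=A'\otimes A''\in\Ann^{in}(\cH_V)$ since $V=V_1\otimes V_2\subseteq\Ran(A)$ and the algebraic tensor product $(A')^{-1}(V_1)\otimes(A'')^{-1}(V_2)\subseteq A^{-1}(V)$ is dense, with the analogous statement for $B$; and the factorization exhibits $BY(s^{L_0}-,z)A$ as a graded tensor product of the two bounded factor insertions, so $\Int(B,A)\subseteq\Int_Y(B,A)$. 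Reading the factorization at fixed $z$ and summing over simple tensors shows that every generator of $\cA_V(I)$ lies in $(\cA_{V_1}\grotimes\cA_{V_2})(I)$, i.e. $\cA_V(I)\subseteq(\cA_{V_1}\grotimes\cA_{V_2})(I)$. Since the graded tensor product net is local, $\cA_V(I)$ and $\cA_V(J)$ super-commute for disjoint $I,J$, which is exactly the defining condition for $V$ to have bounded localized vertex operators (Definition \ref{def: bounded localized vertex operators}); in particular $\cA_V$ is now a Fermi conformal net. Finally, $\cA_V$ is a covariant subnet of $\cA_{V_1}\grotimes\cA_{V_2}$ (both covariant for the common representation $U^{V_1}\otimes U^{V_2}=U^V$), and its vacuum Hilbert space $\overline{\cA_V(S^1)\Omega_V}$ is all of $\cH_V$ by cyclicity of the vacuum, so Proposition \ref{propSubnetTrivial} forces $\cA_V=\cA_{V_1}\grotimes\cA_{V_2}$.

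For the ``only if'' direction, suppose $V$ has bounded localized vertex operators and fix $(B',A')\in\scA_{I,c_1}$; choose a companion $(B'',A'')\in\scA_{I,c_2}$ with the same interior, so $(B'\otimes B'',A'\otimes A'')\in\scA_{I,c}$. The range conditions $V_1\subseteq\Ran(A')$ and $V_1\subseteq\Ran(B'^*)$ follow from Lemma \ref{lem: split simple tensors} applied to the vectors $v_1\otimes\Omega_2$, and the density of $(A')^{-1}(V_1)$ follows by introducing the partial-evaluation maps $S_{w_2}(\eta_1\otimes\eta_2)=\ip{\eta_2,w_2}\eta_1$, which satisfy $S_{w_2}\circ(A'\otimes A'')=A'\circ S_{(A'')^*w_2}$ and hence carry the dense set $A^{-1}(V)$ into $(A')^{-1}(V_1)$. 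Boundedness transfers downward by specializing the inserted state to $a_1\otimes\Omega_2$: the factorization becomes $BY(s^{L_0}(a_1\otimes\Omega_2),z)A=\big(B'Y^1(s^{L_0}a_1,z)A'\big)\otimes B''A''$, and dividing the uniform bound by $\norm{B''A''\eta_2}\ne0$ for a fixed $\eta_2$ shows $B'Y^1(s^{L_0}-,z)A'$ is bounded, giving $\Int(B',A')\subseteq\Int_{Y^1}(B',A')$. Locality of $\cA_{V_1}$ is the delicate step: taking $a_2=\Omega_2$ lifts each generator $x\in\cA_{V_1}(I)$ to $x\otimes(B''A'')\in\cA_V(I)$ with $B''A''\in\cA_{c_2}(I)$ lying in the Virasoro net, and similarly for $\cA_{V_1}(J)$; since the Virasoro net is local, the companion factors $B''A''$ and $\tilde B''\tilde A''$ commute in the representation on $\cH_{V_2}$, so the super-commutativity of $\cA_V(I)$ with $\cA_V(J)$ collapses to $[x,y]_{\pm}\otimes(B''A''\,\tilde B''\tilde A'')=0$, and statistical independence of the Virasoro algebras $\cA_{c_2}(I)$, $\cA_{c_2}(J)$ for disjoint intervals lets me cancel the nonzero companion product to conclude $[x,y]_{\pm}=0$.

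The main obstacle is this locality-descent in the ``only if'' direction: whereas in the ``if'' direction locality of $\cA_V$ is inherited for free from the ambient graded tensor product net, here one must disentangle the two tensor factors from a single joint super-commutation relation, which forces one to exploit both the locality of the Virasoro net and the independence of its local algebras in order to remove the companion annulus. The remaining pervasive source of friction is the $\bbZ/2\bbZ$-graded bookkeeping, namely verifying throughout that the sign factors produced by the grading involution assemble exactly into the graded tensor products $\grotimes$, so that super-commutativity of the factor nets corresponds precisely to that of $\cA_V$.
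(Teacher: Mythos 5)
Your proof is correct and follows essentially the same route as the paper: the same tensor factorization of $BY(s^{L_0}-,z)A$, the same use of Lemma \ref{lem: split simple tensors} for the range and density conditions, and the same trick of specializing the inserted state to $a_1\otimes\Omega_2$ and cancelling the nonzero companion product $(B''A'')(\tilde B''\tilde A'')$ to descend locality to the tensor factors. The only (harmless) divergence is in how that companion product is shown to be nonzero: you invoke statistical independence of the Virasoro local algebras acting on $\cH_{V_2}$, whereas the paper deduces it from the factor/commutant structure of $\cA_V(I)$ on the full space $\cH_{V_1}\otimes\cH_{V_2}$ and then reads off the nonvanishing of one tensor factor.
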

\begin{proof}
The detailed proof is \cite[Prop. 4.11]{GRACFT1}.
We first consider the equivalence of bounded insertions for $V$ versus $V_1$ and $V_2$.
Let $c_i$ be the central charge of $V_i$, and $c=c_1+c_2$ be the central charge of $V$.
Let $A \in \scA^{in}_{I,c}$.
For the moment, we denote the action of $A$ on $\cH_V$ by $A^V$, and equip $A^V$ with a splitting $A^{V} = A^{V_1} \otimes A^{V_2}$ on $\cH_V = \cH_{V_1} \otimes \cH_{V_2}$ (which is possible by the definition of system of generalized annuli).
Observe that by Lemma \ref{lem: split simple tensors}, $A^V \in \Ann^{in}(\cH_V)$ if and only if both $A^{V_i} \in \Ann^{in}(\cH_{V_i})$.
We can make a similar argument if $B \in \scA^{out}_I$.
Moreover, if $(B,A) \in \scA_{I,c}$ and $a_i \in V_i$, then we have
$$
B^V Y^V(a_1 \otimes a_2, z) A^V = (B^{V_1} Y^{V_1}(a_1,z) A^{V_1}) \otimes (B^{V_2} Y^{V_2}(a_2,z) A^{V_2}).
$$
Since $\Int(B,A) = \Int(B_i,A_i)$ by definition, we can conclude that if $V_1$ and $V_2$ have bounded insertions, then so does $V$.
Moreover, since every $(B, A) \in \scA_{I,c_i}$ arises as $(B^{V_i}, A^{V_i})$ from such a splitting, we may reverse the above argument and conclude that $V_1$ and $V_2$ have bounded insertions if $V$ does.

We now turn to localized vertex operators.
It is straightforward to check that if $V_1$ and $V_2$ have localized vertex operators, then so does $V_1 \otimes V_2$, and that $\cA_{V_1} \otimes \cA_{V_2} = \cA_{V_1 \otimes V_2}$.
Consider the reverse direction, and let $(B,A) \in \scA_I$ and $(\tilde B, \tilde A) \in \scA_J$ with $I$ and $J$ disjoint.
Following Convention \ref{conv: annuli}, we will omit reference to the central charge as well as the spaces that generalized annuli act on.
Let $z \in \Int(B,A)$, $\tilde z \in \Int(\tilde B, \tilde A)$ and $a,\tilde a \in V_1$.
By assumption $B Y^V(a \otimes \Omega,z) A$ supercommutes with $B Y^V(\tilde a \otimes \Omega,\tilde z) A$.
Observe that 
$$
B Y^V(a \otimes \Omega,z)A = B Y^{V_1}(a,z)A \otimes BA
$$
and similarly 
$$
\tilde B Y^V(\tilde a \otimes \Omega, \tilde z) \tilde A = \tilde B Y^{V_1}(\tilde a,\tilde z)\tilde A \otimes \tilde B \tilde A.
$$
We know that $BA$ and $\tilde B \tilde A$ are even and commute, as they are generalized annuli localized in disjoint intervals (see Remark \ref{rmk: localized annuli commute}).
Since $BA \otimes BA$ lies in the factor $\cA_V(I)$ and $\tilde B \tilde A \otimes \tilde B \tilde A$ commutes with $\cA_V(I)$ we have
$$
(BA \otimes BA)(\tilde B \tilde A \otimes \tilde B \tilde A) \ne 0.
$$
In particular, $(BA)(\tilde B \tilde A) \ne 0$.
Thus we may deduce that $B Y^{V_1}(a,z)A$ and $\tilde B Y^{V_1}(\tilde a,\tilde z)\tilde A$ supercommute, which shows that $V_1$ has bounded localized vertex operators.
One can address $V_2$ in a similar fashion.
\end{proof}

\begin{Proposition}\label{prop: BLVO subalgebra}
Let $V$ be a vertex operator superalgebra with bounded localized vertex operators, and let $W \subset V$ be a unitary subalgebra.
Then $W$ has bounded localized vertex operators, and $\cA_V$ has a subnet $\cB$ with $\cH_B = \cH_W$ and $\cB|_{\cH_W} = \cA_W$.
\end{Proposition}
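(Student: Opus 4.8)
The plan is to prove the statement first for \emph{conformal} subalgebras, where $\nu^W = \nu^V$, and then reduce the general case to this one using the coset $W^c$ together with Proposition \ref{prop: BLVO tensor product}.

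For the conformal case (so $c_W = c_V =: c$, and the same system $\scA_I$ of central charge $c$ serves both theories), I would set
$\cB(I) := \{BY^V(a,z)A : a \in W, \, (B,A) \in \scA_I, \, z \in \Int(B,A)\}'' \vee \{BA : (B,A)\in\scA_I\}''$.
Since each generator has $a \in W \subseteq V$, it is one of the generators of $\cA_V(I)$, so $\cB(I) \subseteq \cA_V(I)$ and in particular $W$ inherits bounded insertions (each $BY^V(a,z)A$ is a bounded operator in $\cA_V(I)$ which restricts to $BY^W(a,z)A$ on $\cH_W$). The key point is that, because $W$ is conformal, $\cH_W$ is invariant under the Virasoro representation $U^V$; hence the projection $e_W$ onto $\cH_W$ lies in $\cA_c(I)'$ and therefore commutes with every generalized annulus acting on $\cH_V$. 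Combined with $Y^V(a,z)\cH_W \subseteq \cH_W$ for $a \in W$, this shows each generator preserves $\cH_W$; and the adjoint $(BY^V(a,z)A)^* = A^*Y^V(\tilde a,\overline{z}^{-1})B^*$ has $\tilde a = e^{\overline{z}L_1}(-\overline{z}^{-2})^{L_0}\kappa\theta a \in W$ (as $W$ is $\theta$- and $L_1$-invariant and $L_i^V|_W = L_i^W$ by Proposition \ref{propUnitarySubalgebraIsUnitaryVOA}), so adjoints of generators preserve $\cH_W$ too. Thus $e_W \in \cB(I)'$, and restriction to $\cH_W$ is a normal $*$-homomorphism carrying the generators of $\cB(I)$ onto those of $\cA_W(I)$, giving $\cB(I)|_{\cH_W} = \cA_W(I)$. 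Locality of $\cA_V$ then restricts to super-commutativity of $\cA_W(I)$ and $\cA_W(J)$ for disjoint $I,J$, i.e. $W$ has bounded localized vertex operators. Since $\cB(S^1)\Omega \subseteq \cH_W$ while $\cB(S^1)|_{\cH_W} = \cA_W(S^1)$ has cyclic vacuum, $\cH_B = \cH_W$; $U^V$-covariance of $\cB$ follows from the covariance of annuli (Lemma \ref{lem: diff covariance of generalized annuli}) and of the fields, exactly as in the proof that $\cA_V$ is a net, and Theorem \ref{thmSubnetsAreNets} promotes $\cB$ to a Fermi conformal net on $\cH_W$ with $\cB|_{\cH_W} = \cA_W$.

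For a general unitary subalgebra $W$, the coset gives $\tilde W := W \otimes W^c$ as a \emph{conformal} subalgebra of $V$ (Proposition \ref{propWandCommutantGenerateTensorProduct}, with $\nu = \nu^W + \nu^{W^c}$). The conformal case produces a subnet $\cB_{\tilde W} \subseteq \cA_V$ with $\cH_{\cB_{\tilde W}} = \cH_{\tilde W}$ and $\cB_{\tilde W}|_{\cH_{\tilde W}} = \cA_{\tilde W}$, and Proposition \ref{prop: BLVO tensor product} shows $W$ (and $W^c$) have bounded localized vertex operators with $\cA_{\tilde W} = \cA_W \grotimes \cA_{W^c}$. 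To extract the subnet for $W$ alone I would use that $\cH_W \cong \cH_W \otimes \Omega_{W^c} \subset \cH_{\tilde W}$, that $\cA_W \grotimes 1$ is a subnet of $\cA_{\tilde W}$ which is covariant for $U^V|_{\cH_{\tilde W}} = U^W \otimes U^{W^c}$ (conjugation fixes the factor $1$), and that the restriction $\cB_{\tilde W}(I) \to \cA_{\tilde W}(I)$ is a $*$-isomorphism because the local algebras are type III factors (Theorem \ref{thmFermiNetProps}). Pulling $\cA_W(I) \grotimes 1$ back through this isomorphism yields a $U^V$-covariant subnet $\cB \subseteq \cA_V$ with $\cH_B = \cH_W$ and $\cB|_{\cH_W} = \cA_W$.

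The main obstacle is the careful bookkeeping of the two distinct Virasoro structures on $\cH_V$ (from $\nu^W$ and $\nu^V$), and in particular establishing that $e_W$ commutes with the generalized annuli: this is clean only in the conformal case, where $\cH_W$ is genuinely $U^V$-invariant, which is precisely why the non-conformal case must be routed through $\tilde W$ rather than through central-charge-$c_V$ annuli of the form $A' \otimes 1$ directly (via Definition \ref{def: tensor product of generalized annuli} and Convention \ref{conv: annuli} such a direct construction produces the correct algebra on $\cH_W$, but it is not manifestly $U^V$-covariant). The remaining verifications — covariance of $\cB$ and the identification of generating sets — are routine given the covariance lemmas for annuli and for the fields.
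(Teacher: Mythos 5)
Your proposal is correct and follows essentially the same route as the paper: reduce to the conformal case via the conformal subalgebra $W\otimes W^c$ and Proposition \ref{prop: BLVO tensor product}, then define $\cB(I)$ as the algebra of insertions from $W$ and restrict to $\cH_W$. The extra details you supply (why $e_W$ commutes with the generalized annuli and with the generators, and the extraction of the $W$-factor from $\cA_{\tilde W}$) are exactly the verifications the paper labels as straightforward.
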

\begin{proof}
We sketch the proof from \cite[Thm. 4.12]{GRACFT1}.
By Proposition \ref{prop: BLVO tensor product} it suffices to consider the case of a conformal subalgebra, as $W \otimes W^c$ is isomorphic to a conformal subalgebra, where $W^c$ is the coset.
Bounded insertions is clearly inherited by conformal subalgebras, so we just need to check locality.
Let
$$
\cB(I) = \{ BY^V(a,z)A : (B,A) \in \scA_I, z \in \Int(B,A), a \in W \}^{\prime\prime} \vee \{BA : (B,A) \in \scA_I\}^{\prime\prime} \subset \cA_V(I)
$$
be the subalgebra with insertions only from $W$.
It is straightforward to check that $\cB$ is a covariant subnet of $\cA_V$, that $\cH_B = \cH_W$, and that 
$$
BY^V(a,z)A|_{\cH_W} = BY^W(a,z)A.
$$
Thus $\cA_W(I) = \cB(I)e_W$ defines a conformal net, and, in particular, is local.
\end{proof}

\subsection{The free fermion revisited}
\label{sec: fermion again}

In \cite{GRACFT1}, we introduced a notion called `bounded localized vertex operators,' and demonstrated that the free fermion VOA had this property. 
Definition \ref{def: localizable operators} is a slightly more general definition of bounded localized vertex operators, depending on a choice of system of generalized annuli (Definition \ref{def: system of generalized annuli}).
The arguments of \cite{GRACFT1} show that the free fermion VOA has bounded localized vertex operators in the sense of Definition \ref{def: localizable operators}, with respect to the system of generalized annuli given in Definition \ref{def: choice of generalized annuli}.
For the convenience of the reader, we will now review the structure of the argument of \cite{GRACFT1}, and outline the minor modifications necessary to adapt them to the context of this article.

We start from the beginning to fix notation, but a more detailed version of this discussion may be found in \cite[\S2.1.1]{GRACFT1}.
Let $H$ be a Hilbert space, and $\Lambda H = \bigoplus_{j=0}^\infty \Lambda^j H$ be the associated exterior Hilbert space.
For $\xi \in H$, let $a(\xi)$ be the operator on $\Lambda H$ given by exterior multiplication by $\xi$ on the left.
Then $a(\xi)$ is bounded with $\norm{a(\xi)} = \norm{\xi}$, and $a(\xi)^*$ is contraction with $\xi$.
The $C^*$-algebra generated by the $a(\xi)$ on $\Lambda H$ is the canonical anticommutation relations algebra $\CAR(H)$.
For $p$ a projection on $H$, we set $H_p = (pH)^* \oplus (1-p)H$, and $\cF_{p} = \Lambda H_p$.
There is an irreducible representation $\pi_p:\CAR(H) \to \cB(\cF_p)$ given by $\pi_p(a(\xi)) = a((p\xi)^*)^* + a((1-p)\xi)$.
There are natural isomorphisms $\CAR(H \oplus K) \cong \CAR(H) \hotimes \CAR(K)$ (where $\hotimes$ indicates the graded tensor product of algebras), and if $q$ is a projection on $K$ we have a natural isomorphism $\cF_{p \oplus q} \cong \cF_p \otimes \cF_q$ of $\CAR(H \oplus K)$ representations.
We will be interested in the case when $H$ is a direct sum of copies of $L^2(S^1)$, and $p$ is a direct sum of projections onto the Hardy space $H^2(\bbD)$.

We use the term \emph{generalized disk} to refer to the closure of the bounded connected component of a $C^\infty$ Jordan curve.
Let $D_{0,1}, \ldots D_{0,k}, D_1$ be a family of generalized disks, and assume that the $D_{0,i}$ are pairwise disjoint and all contained in $D_1$.
Set $D_0 = D_{0,1} \sqcup \cdots \sqcup D_{0,k}$, and let $\Sigma = D_1 \setminus \interior{D_0}$.
Its outgoing boundary $\partial_1 \Sigma$ is $\partial D_1$, and its incoming boundary $\partial_0 \Sigma$ is $\partial D_0$.
We give $\partial_1 \Sigma$ the positive orientation about $D_1$, and we give $\partial_0 \Sigma$ the negative orientation about $D_0$.

\begin{Definition}[Degenerate genus zero Riemann surfaces]
A \emph{degenerate (genus zero) Riemann surface} is a subset $\Sigma \subset \bbC$ arising from the above construction.
Let $\cDR$ be the collection of such $\Sigma$ equipped with orientation preserving diffeomorphisms $\beta_i: \bigsqcup_{j \in \pi_0(\partial_i\Sigma)} S^1 \to \partial_i \Sigma$ and a choice of smooth square roots ${\beta_i}^\prime(z)^{1/2}$ where $$\beta_i^\prime = \frac{1}{iz} \frac{d}{d\theta} \beta_i(e^{i \theta})_{e^{i \theta}=z}.$$
Let $\cDA$ be the subcollection for which $\partial_0\Sigma$ consists of exactly one circle.

A holomorphic function on $\Sigma \in \cDR$ is by definition a continuous function $F$ on $\Sigma$ such that $F|_{\partial_i\Sigma}$ is smooth and $F$ is holomorphic on the interior of $\Sigma$, and we write $\cO(\Sigma)$ for the algebra of holomorphic functions.
The boundary Hilbert spaces are $H_i = \bigoplus_{j \in \pi_0(\partial_i \Sigma)} L^2(S^1)$, and the polarizations are projections $p_i$ with  $p_i H_i = \bigoplus_{j \in \pi_0(\partial_i \Sigma)} pH$, where $pH = H^2(\bbD)$ is the Hardy space of the disk.

If $\beta_i$ are the above parametrizations and $F \in \cO(\Sigma)$, then the pullback is given by $\beta_i^*F(z) = \beta_i^\prime(z)^{1/2}F(\beta_i(z)) \in H_i$.
The Hardy space $H^2(\Sigma) \subset H_1 \oplus H_0$ is given by
$$
H^2(\Sigma) = \operatorname{cl} \Big( \{ ({\beta_1}^* F,{\beta_0}^* F) : F \in \cO(\Sigma) \} \Big)
$$
where the closure is taken in $H_1 \oplus H_0$.

The incoming and outgoing Fock spaces are $\cF_i = \cF_{p_i}$, and we identify $\CAR(H_i)$ representations $\cF_i \cong \bigotimes_{j \in \pi^0(\partial_i)} \cF$, where $\cF = \cF_p$ is the Fock space associated to the standard polarization of $L^2(S^1)$.
An even operator $T \in \cB(\cF_0,\cF_1)$ satisfies the Segal commutation relations for $\Sigma$ if
$$
a(f_1)T = Ta(f_0), \qquad a(\overline{zf_1})^*T = T a(\overline{zf_0})^*
$$
for every $(f_1,f_0) \in H^2(\Sigma) \subset H_1 \oplus H_0$.
We write $E(\Sigma)$ for the space of operators satisfying the Segal commutation relations for $\Sigma$.
\end{Definition}

In \cite{GRACFT1}, $\cO(\Sigma)$ was defined as functions holomorphic in a neighborhood of $\Sigma$, but by Mergelyan's theorem \cite[\S20]{BigRudin}, any function holomorphic on $\Sigma$ in our sense is a uniform limit of functions holomorphic in a neighborhood of $\Sigma$, and so the resulting Hardy spaces coincide.

As in \cite[Prop. 3.8]{GRACFT1}, one has:

\begin{Proposition}
Let $\Sigma \in \cDR$.
Then $\dim(E(\Sigma)) \le 1$.
\end{Proposition}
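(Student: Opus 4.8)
The plan is to prove the bound by analyzing the single linear map $\Phi : E(\Sigma) \to \cF_1$, $\Phi(T) = T\Omega_0$, where $\Omega_i \in \cF_i$ denotes the Fock vacuum (the generator of $\Lambda^0 H_{p_i}$). I would show that $\Phi$ is injective and that its image lies in the one-dimensional space $\bbC\Omega_1$; together these give $\dim E(\Sigma) \le 1$. Both rest on the standard characterization of the vacuum in the quasi-free representation $\pi_{p_i}$: one has $a(\xi)\Omega_i = 0$ precisely for $\xi \in p_i H_i$ and $a(\xi)^*\Omega_i = 0$ precisely for $\xi \in (1-p_i)H_i$, and $\Omega_i$ is the unique vector up to scalar killed by all such operators.

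To control the image, I would feed into the Segal relations those $(f_1,f_0) \in H^2(\Sigma)$ coming from functions $F$ holomorphic on the whole outer disk $D_1$. These are exactly the elements with $f_0 = \beta_0^* F \in p_0 H_0$ (a function holomorphic on $\Sigma$ whose incoming boundary values extend into the removed disks extends across $\partial_0\Sigma$ to all of $D_1$), and for them $f_1 = \beta_1^* F \in p_1 H_1$, with the collection of such $f_1$ dense in $p_1 H_1$ by the definition of the outgoing Hardy space. For such a pair, $a(f_0)\Omega_0 = 0$ together with $a(f_1)T = Ta(f_0)$ yields $a(f_1) T\Omega_0 = 0$; and since $\overline{zf_0} \in (1-p_0)H_0$, the relation $a(\overline{zf_1})^*T = Ta(\overline{zf_0})^*$ yields $a(\overline{zf_1})^* T\Omega_0 = 0$. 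As $f_1$ runs through a dense subset of $p_1 H_1$, the vectors $f_1$ and $\overline{zf_1}$ run through dense subsets of $p_1 H_1$ and $(1-p_1)H_1$ respectively; by the norm-continuity $\norm{a(\xi)} = \norm{\xi}$ it follows that $T\Omega_0$ is annihilated by $a(g)$ for all $g \in p_1 H_1$ and by $a(h)^*$ for all $h \in (1-p_1)H_1$, so $T\Omega_0 \in \bbC\Omega_1$.

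For injectivity I would run the relations in the other direction. For any word $P$ in the operators $\{a(f_0), a(\overline{zf_0})^* : (f_1,f_0) \in H^2(\Sigma)\}$, repeated application of the Segal relations (together with evenness of $T$) gives $TP\Omega_0 = P'T\Omega_0$, where $P'$ is the matching word in $\{a(f_1), a(\overline{zf_1})^*\}$. Thus if $T\Omega_0 = 0$ then $T$ vanishes on the span of all $P\Omega_0$, and injectivity follows once this span is dense in $\cF_0$. Equivalently, it suffices to show that the operators $a(f_0)$ and $a(\overline{zf_0})^*$ generate $\CAR(H_0)$, for then cyclicity of $\Omega_0$ is immediate from the irreducibility of $\pi_{p_0}$. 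The creation parts of these operators produce the vectors $(1-p_0)f_0 \in (1-p_0)H_0$ and $(p_0\overline{zf_0})^* \in (p_0 H_0)^*$, so generation amounts to the totality of these directions in $H_{p_0} = (p_0 H_0)^* \oplus (1-p_0)H_0$.

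I expect the main obstacle to be the density statement underpinning injectivity, namely that $\{(1-p_0)f_0 : (f_1,f_0) \in H^2(\Sigma)\}$ is dense in $(1-p_0)H_0$ (the corresponding density for the outgoing boundary being, as above, essentially built into the definition of $H^2(\Sigma)$). Under the conjugation $f \mapsto \overline{zf}$, which swaps $p_0 H_0$ and $(1-p_0)H_0$, this is the only independent analytic input, and it is a genuine approximation-theoretic fact: functions holomorphic on $\Sigma = D_1 \setminus \interior{D_0}$ include those with poles inside $\interior{D_0}$, and by Runge/Mergelyan approximation (already invoked in defining $\cO(\Sigma)$) their incoming boundary values exhaust a dense subspace of the anti-Hardy component. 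Once this density lemma is in hand, the two halves above combine to give $\dim(E(\Sigma)) \le 1$.
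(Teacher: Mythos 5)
Your overall strategy --- study the linear map $T \mapsto T\Omega_0$ on $E(\Sigma)$, show its image lies in $\bbC\Omega_1$ because the relevant Segal relations turn into pure annihilators on the vacuum, and show injectivity because the creation directions $(1-p_0)f_0$ and $(p_0\overline{zf_0})^*$ are total in $H_{p_0}$ --- is exactly the mechanism behind the proof this paper defers to (\cite[Prop.\ 3.8]{GRACFT1}): the relations determine $T$ on a dense subspace in terms of $T\Omega_0$, and $T\Omega_0$ is pinned down up to a scalar. Your reduction of injectivity to totality of the creation vectors, the observation that $f\mapsto\overline{zf}$ identifies the incoming density statement with its conjugate, and the Runge-type argument (poles placed in the interiors of the removed disks, with the cross-terms on the other components landing in the Hardy part) are all correct and are indeed the analytic crux.

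The step that fails as written is the Hardy-space bookkeeping for the boundary pullbacks. Elements of $\cDR$ carry \emph{arbitrary} orientation-preserving diffeomorphisms $\beta_i$ as boundary parametrizations, and $\beta_i^*F = (\beta_i')^{1/2}(F\circ\beta_i)$. For such $\beta_i$ it is not true that $F\in\cO(D_1)$ forces $\beta_0^*F\in p_0H_0$, nor that $\{\beta_1^*F : F\in\cO(D_1)\}$ is dense in (or even contained in) $p_1H_1$: already $\beta_1^*1=(\beta_1')^{1/2}$ acquires negative Fourier modes for a generic diffeomorphism (this shows up at second order in a perturbation of the identity; only M\"obius parametrizations preserve the Hardy space exactly). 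Since your argument that $T\Omega_0\in\bbC\Omega_1$ needs $\pi_{p_1}(a(f_1))$ to be a pure annihilator, i.e.\ $f_1\in p_1H_1$, and your identification of the pairs with $f_0\in p_0H_0$ is what feeds the vacuum in on the right, both halves break in this generality. The repair is standard and cheap: reparametrization replaces every $T\in E(\Sigma)$ by $u_1^* T u_0$ for unitaries $u_i$ implementing the boundary diffeomorphisms (\cite[Prop.\ 3.6]{GRACFT1}, invoked in this paper as reparametrization invariance), so $\dim E(\Sigma)$ is unchanged and one may assume each $\beta_{0,j}$ and $\beta_1$ is the boundary restriction of a Riemann map. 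Then $(\beta_i')^{1/2}$ is a nonvanishing smooth $H^\infty$ function, all of your membership and density claims hold, and the proof closes.
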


The argument of \cite[Prop. 3.8]{GRACFT1} produces a densely defined operator $T$ with the property that $\dim(E(\Sigma)) = 1$ if and only if $T$ is bounded, in which case $E(\Sigma) = \Span T$.

\begin{Definition}
Let $\Sigma \in \cDR$.
We say that $\Sigma$ is \emph{bounded} if $\dim(E(\Sigma)) = 1$.
\end{Definition}

When $\Sigma$ is genuinely a manifold, the analysis in \cite{Ten16} (based on \cite{SegalDef}) shows that $\Sigma$ is bounded, and by \cite[Prop. 3.9]{GRACFT1} $E(\Sigma)$ defined here coincides with the Segal CFT operator constructed in \cite{Ten16}.
In \cite{GRACFT1}, we showed that some degenerate $\Sigma$ are bounded as well.
This property is unchanged by reparametrizing the boundary of $\Sigma$ \cite[Prop. 3.6]{GRACFT1}.

\begin{Proposition}\label{prop: ESigma injective}
Let $T \in E(\Sigma)$ and assume $T \ne 0$.
Then $T$ is injective.
If $\partial_0 \Sigma \ne \emptyset$, then $T$ has dense image as well.
\end{Proposition}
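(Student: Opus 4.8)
The plan is to prove both assertions by a single mechanism: exhibiting a closed subspace which is invariant under an irreducible representation of a CAR algebra, and hence is trivial. Dense image will follow by applying the injectivity argument to $T^*$, which satisfies the ``reversed'' Segal relations. Throughout I write $a(f_i)$ for the operators in the Fock representation $\pi_{p_i}$ on $\cF_i$.

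First I would treat injectivity. Suppose $T\xi = 0$. Applying the Segal relations, for every $(f_1,f_0) \in H^2(\Sigma)$ we have $T a(f_0)\xi = a(f_1) T\xi = 0$ and $T a(\overline{zf_0})^*\xi = a(\overline{zf_1})^* T\xi = 0$, so the closed subspace $\ker T$ is invariant under $a(f_0)$ and under $a(\overline{zf_0})^*$, where $f_0$ ranges over the projection $P_0$ of $H^2(\Sigma)$ onto $H_0$. The key input (see below) is that $P_0\big(H^2(\Sigma)\big)$ is dense in $H_0$ whenever $\partial_0\Sigma \neq \emptyset$; since multiplication by $z$ on $S^1$ is unitary and complex conjugation is antiunitary, $\{\overline{zf_0}\}$ is then dense as well. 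By norm continuity of $\eta \mapsto a(\eta)$ (recall $\norm{a(\eta)} = \norm{\eta}$), $\ker T$ is invariant under $a(\eta)$ and $a(\eta)^*$ for all $\eta \in H_0$, hence under all of $\CAR(H_0)$. As $\pi_{p_0}$ is irreducible, $\ker T \in \{0,\cF_0\}$, and since $T \neq 0$ we get $\ker T = 0$. (When $\partial_0\Sigma = \emptyset$ we have $H_0 = 0$, so $\cF_0 = \bbC$ and injectivity is immediate from $T \neq 0$.)

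For dense image, note that $\overline{\Ran T}^\perp = \ker T^*$, so it suffices to show $T^*$ is injective. Taking adjoints of the Segal relations gives $a(f_0)^* T^* = T^* a(f_1)^*$ and $a(\overline{zf_0}) T^* = T^* a(\overline{zf_1})$ for $(f_1,f_0) \in H^2(\Sigma)$. Running the previous argument with the roles of the boundaries exchanged, $\ker T^*$ is invariant under $a(f_1)^*$ and $a(\overline{zf_1})$ for $f_1$ in the projection $P_1$ of $H^2(\Sigma)$ onto $H_1$; provided $P_1\big(H^2(\Sigma)\big)$ is dense in $H_1$, irreducibility of $\pi_{p_1}$ forces $\ker T^* = 0$. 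This is exactly where the hypothesis $\partial_0\Sigma \neq \emptyset$ enters: with no inner boundary the outer boundary values are only the classical Hardy space $p_1 H_1$, which is not dense, and indeed $T$ then has one-dimensional image.

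The main obstacle is the \emph{density lemma}: for $\Sigma \in \cDR$ with $\partial_0\Sigma \neq \emptyset$, the boundary values $\{\beta_i^* F : F \in \cO(\Sigma)\}$ are dense in $H_i$ for $i = 0,1$. I would prove this one boundary circle at a time, using the classical splitting of the $L^2$ space of a smooth Jordan curve into interior and exterior Hardy spaces. The interior part is supplied by functions holomorphic on all of $D_1$ (for instance polynomials in $z$), while the exterior part is supplied by Cauchy kernels $z \mapsto (z-a)^{-1}$ with $a \in \interior{D_{0,j}}$ lying in one of the holes; expanding such a kernel on a given boundary circle produces precisely the negative modes. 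Since $\partial_0\Sigma \neq \emptyset$ guarantees at least one hole, both mode directions are available on every boundary component, and their span is dense; transporting along the diffeomorphism $\beta_i$, whose weight $\beta_i'(z)^{1/2}$ is smooth and nonvanishing, preserves density in $L^2(S^1)$. Checking that these candidate functions indeed lie in $\cO(\Sigma)$ and that the interior/exterior decomposition survives the boundary reparametrizations is the technical heart of the argument, though it is routine complex analysis once set up.
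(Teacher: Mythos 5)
Your proof is correct and follows essentially the same route as the paper, which shows that $\ker T$ (resp. $\operatorname{coker} T$) is invariant under the irreducibly-acting $\CAR(H_0)$ (resp. $\CAR(H_1)$) by citing the analogous argument in \cite[Prop. 4.11]{Ten16}. The density of the boundary restrictions of $\cO(\Sigma)$ that you isolate is exactly the ingredient hidden in that citation; just note that for several incoming circles the simultaneous approximation requires a Runge/Mittag-Leffler-type decomposition, not merely density on each circle separately.
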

\begin{proof}
One may show that $\ker T$ is invariant under $\CAR(H_0)$ just as in \cite[Prop. 4.11]{Ten16}, and by the irreducibility of the representation on $\cF_0$ the injectivity follows.
The same argument shows that if $\partial_0 \ne \emptyset$, then $\operatorname{coker} T$ is invariant under $\CAR(H_1)$.
\end{proof}

We use this to show compatibility with composition.

\begin{Proposition}\label{prop: segal cft composition}
Let $\Sigma,\tilde \Sigma \in \cDP$, and assume that $\partial_1 \Sigma = \partial \tilde D_{0,j}$, where $\tilde D_{0,j}$ is one of the incoming disks of $\tilde \Sigma$.
Assume that this boundary component has the same parametrization for $\partial_1 \Sigma$ and $\partial_0 \tilde \Sigma$.
Let $\Sigma_+ = \Sigma \cup \tilde \Sigma$.
If $\Sigma$ and $\tilde \Sigma$ are bounded, then so is $\Sigma_+$. 
Moreover, $E(\Sigma_+) = E(\tilde \Sigma) \circ_j E(\Sigma)$, where $\circ_j$ means inputting the output of elements of $E(\Sigma)$ into the $j$th tensor factor of the domain of elements of $E(\tilde \Sigma)$.
\end{Proposition}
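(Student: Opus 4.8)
The plan is to exhibit a single nonzero operator lying in $E(\Sigma_+)$, namely the composite $S := \tilde T \circ_j T$ where $T$ spans $E(\Sigma)$ and $\tilde T$ spans $E(\tilde \Sigma)$, and then invoke the bound $\dim E(\Sigma_+) \le 1$ to conclude simultaneously that $\Sigma_+$ is bounded and that $E(\Sigma_+) = \Span S = E(\tilde\Sigma)\circ_j E(\Sigma)$. Since $\Sigma$ and $\tilde\Sigma$ are bounded, such nonzero $T$ and $\tilde T$ exist. By construction $S$ carries the incoming Fock space of $\Sigma_+$ — the factors comprising $\cF_0(\Sigma)$ together with the factors of $\cF_0(\tilde\Sigma)$ other than the $j$-th, which is consumed by the gluing — to the outgoing space $\cF_1(\tilde\Sigma) = \cF_1(\Sigma_+)$, and $S$ is even because $T$ and $\tilde T$ are.

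First I would check $S \ne 0$. By Proposition \ref{prop: ESigma injective} the nonzero operator $\tilde T$ is injective, while $1 \hotimes \cdots \hotimes T \hotimes \cdots \hotimes 1$ has nonzero range since $T \ne 0$. Composing an operator of nonzero range with an injective operator yields a nonzero operator, so $S \ne 0$.

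The substantive step is verifying that $S$ satisfies the Segal commutation relations for $\Sigma_+$, which rests on a gluing property of Hardy spaces. It suffices to test the relations on boundary values $(\beta_1^* F, \beta_0^* F)$ of functions $F \in \cO(\Sigma_+)$, as these are dense in $H^2(\Sigma_+)$ and the operators $a(\cdot)$ and $a(\overline{z\,\cdot})^*$ depend norm-continuously on their arguments. For such $F$, the restrictions $F|_\Sigma$ and $F|_{\tilde\Sigma}$ are holomorphic and supply elements of $H^2(\Sigma)$ and $H^2(\tilde\Sigma)$; crucially, the boundary value $g$ of $F$ on the glued circle $\partial_1\Sigma = \partial \tilde D_{0,j}$ is at once the outgoing datum of $\Sigma$ and the $j$-th incoming datum of $\tilde\Sigma$, since the two parametrizations agree by hypothesis and $F$ is single-valued. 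Writing the incoming datum of $\tilde\Sigma$ as $\bigoplus_i h_i$ with $h_j = g$, the relation $a(\beta_1^* F)\,\tilde T = \tilde T\, a(\bigoplus_i h_i)$ decomposes via $a(\bigoplus_i h_i) = \sum_i a(h_i)$. The summands with $i \ne j$ act on factors untouched by $\circ_j$ and pass through the insertion of the even operator $T$ to become the corresponding incoming operators of $\Sigma_+$; the summand $a(h_j) = a(g)$ acts on the $j$-th factor, which is exactly the output of $T$, and here I would invoke the Segal relation $a(g)\,T = T\, a(\beta_0^{*} F|_\Sigma)$ to convert it into an operator on the incoming datum of $\Sigma$. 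The identical manipulation handles the relations involving $a(\overline{z\,\cdot})^*$. Assembled, these give $a(\beta_1^* F)\,S = S\, a(\beta_0^* F)$ and its annihilation analogue, i.e. $S \in E(\Sigma_+)$.

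I expect the bookkeeping of the previous paragraph to be the main obstacle: tracking the graded signs in $\CAR$ when commuting the $a(h_i)$ with $i \ne j$ past the insertion $\circ_j$, and confirming that the intermediate $a(g)$ and $a(\overline{zg})^*$ cancel correctly against the Segal relation for $T$. The geometric input — that the boundary data of $\cO(\Sigma_+)$ glue from compatible data on $\Sigma$ and $\tilde\Sigma$ along the shared circle — is conceptually transparent but must be stated carefully. Finally, with $S \in E(\Sigma_+)$ and $S \ne 0$, the bound $\dim E(\Sigma_+) \le 1$ forces $\dim E(\Sigma_+) = 1$, so $\Sigma_+$ is bounded; and since all three spaces are then one-dimensional, $E(\Sigma_+) = \Span S = E(\tilde\Sigma)\circ_j E(\Sigma)$.
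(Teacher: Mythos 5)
Your proposal is correct and follows exactly the paper's route: take nonzero $T\in E(\Sigma)$, $\tilde T\in E(\tilde\Sigma)$, show $\tilde T\circ_j T\ne 0$ using the injectivity from Proposition \ref{prop: ESigma injective}, verify the Segal commutation relations for $\Sigma_+$ by restricting boundary data of $F\in\cO(\Sigma_+)$ to the two pieces, and conclude via $\dim E(\Sigma_+)\le 1$. The paper compresses the verification of the commutation relations into the phrase ``by direct argument''; your paragraph spelling out the gluing of boundary data along the shared circle and the evenness of $T$ is precisely that direct argument.
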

\begin{proof}
If we choose non-zero $T \in E(\Sigma)$ and $\tilde T \in E(\tilde \Sigma)$, then $T_+ := \tilde T \circ_j T$ is non-zero by Proposition \ref{prop: ESigma injective}.
Then by direct argument $T_+$ satisfies the Segal commutation relations for $\Sigma_+$, and thus $T_+ \in E(\Sigma_+)$.
\end{proof}

\begin{Definition}
Let $\cDA_{0} \subset \cDA$ be the subcollection consisting of $\Sigma = D_1 \setminus D_0$ such that $0 \in \interior{D_0}$.
If $\Sigma \in \cDA_0$ with boundary parametrizations $\beta_i$, let $\Sigma^*$ be the degenerate annulus $D_0^* \setminus D_1^*$, where $D_i^*$ is the image of $D_i$ under the map $r(z) = \overline{z}^{-1}$.
We equip $\Sigma^*$ with boundary parametrizations $\hat \beta_i:=r \circ \beta_{1-i}$, with a choice of square root $\hat \beta_i^\prime(z)^{1/2} = \frac{\overline{\beta_{1-i}^\prime(z)^{1/2}}}{z {\overline{\beta_{1-i}(z)}}}$.
\end{Definition}

\begin{Proposition}\label{prop: Segal CFT adjoint}
Let $\Sigma \in \cDA_0$.
Then $E(\Sigma^*) = E(\Sigma)^*$, with adjoints taken elementwise.
\end{Proposition}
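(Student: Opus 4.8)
The plan is to establish the single inclusion $E(\Sigma)^* \subseteq E(\Sigma^*)$ and then deduce equality by symmetry. One checks directly from the definitions that $(\Sigma^*)^* = \Sigma$: the parametrizations $\hat\beta_i = r \circ \beta_{1-i}$ and their prescribed square roots reflect back to the original data, using $r \circ r = \id$ and $\abs{z}=1$ on $S^1$. Hence applying the inclusion to $\Sigma^*$ gives $E(\Sigma^*)^* \subseteq E(\Sigma)$, and adjointing yields $E(\Sigma^*) \subseteq E(\Sigma)^*$; together with the first inclusion this gives $E(\Sigma^*) = E(\Sigma)^*$. Since the incoming (resp. outgoing) Fock space of $\Sigma^*$ is the outgoing (resp. incoming) Fock space of $\Sigma$ — all being copies of the standard $\cF$ — the adjoint of an even $T \in E(\Sigma) \subseteq \cB(\cF_0,\cF_1)$ is automatically an even operator lying in the correct space to be a candidate element of $E(\Sigma^*)$.

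The heart of the matter is a reflection identity for the two Hardy spaces. Let $J$ be the antiunitary involution $Jf = \overline{zf}$ on $L^2(S^1)$, so that $\overline{zf} = Jf$ is exactly the conjugation appearing in the second Segal relation. I would show that
$$H^2(\Sigma^*) = \{ (Jh_0, Jh_1) : (h_1, h_0) \in H^2(\Sigma)\}.$$
Given $F \in \cO(\Sigma)$, the reflected function $G := \overline{F \circ r}$ is holomorphic on $\Sigma^*$ (being the conjugate of the antiholomorphic $F\circ r$) and exhausts $\cO(\Sigma^*)$. Writing $f_i = \beta_i^* F$, a direct computation with $\hat\beta_i = r \circ \beta_{1-i}$ and the prescribed square root yields $\hat\beta_i^* G = \overline{\beta_{1-i}}^{-1}\, Jf_{1-i}$, where the nuisance weight $\overline{\beta_{1-i}}^{-1}$ appears precisely because $r$ is antiholomorphic.

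The decisive observation is that this weight is absorbed using the hypothesis $\Sigma \in \cDA_0$: since $0 \in \interior{D_0}$ we have $0 \notin \Sigma$, so $w \mapsto F(w)/w$ again lies in $\cO(\Sigma)$, and multiplication by $w^{\pm 1}$ preserves $H^2(\Sigma)$. Setting $\tilde F(w) = F(w)/w$ and $\tilde f_i = \beta_i^* \tilde F = \beta_i^{-1} f_i$, one finds $J\tilde f_{1-i} = \overline{\beta_{1-i}}^{-1} Jf_{1-i} = \hat\beta_i^* G$, so that $(\hat\beta_1^* G, \hat\beta_0^* G) = (J \tilde f_0, J \tilde f_1)$ with $(\tilde f_1, \tilde f_0) \in H^2(\Sigma)$; conversely every pair in $H^2(\Sigma)$ arises this way by taking $F(w) = w \tilde F(w)$. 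As $J$ is antiunitary it respects closures, giving the displayed identity. This absorption step, which is exactly where $\Sigma \in \cDA_0$ enters, is the main obstacle; the rest is bookkeeping.

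With the identity in hand the conclusion is immediate. Adjointing the two Segal relations for $T$ gives, for all $(f_1, f_0) \in H^2(\Sigma)$, the identities $T^* a(f_1)^* = a(f_0)^* T^*$ and $T^* a(Jf_1) = a(Jf_0) T^*$. Taking a general element $(g_1, g_0) = (J\tilde f_0, J\tilde f_1)$ of $H^2(\Sigma^*)$, the first Segal relation for $\Sigma^*$, namely $a(g_1)T^* = T^* a(g_0)$, is precisely the second adjointed relation, while the second Segal relation for $\Sigma^*$, namely $a(\overline{zg_1})^* T^* = T^* a(\overline{zg_0})^*$, becomes $a(\tilde f_0)^* T^* = T^* a(\tilde f_1)^*$ upon using $\overline{zg_i} = Jg_i$ and $J^2 = \id$, and is precisely the first. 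Hence $T^* \in E(\Sigma^*)$, establishing $E(\Sigma)^* \subseteq E(\Sigma^*)$ and, by the symmetry discussed above, the proposition.
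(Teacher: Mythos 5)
Your proof is correct and follows essentially the same route as the paper: reduce to the Hardy-space identity $H^2(\Sigma^*) = \{(\overline{zf_0},\overline{zf_1}) : (f_1,f_0)\in H^2(\Sigma)\}$, use the involutivity of the reflection together with $(\Sigma^*)^*=\Sigma$ to get away with proving one inclusion, and exhibit the reflected holomorphic function. Your more careful bookkeeping is in fact an improvement on the paper's one-line computation: the paper asserts that $G(z)=z\,\overline{F(\overline{z}^{-1})}$ satisfies $\hat\beta_i^*G=\overline{z\,\beta_{1-i}^*F}$, but a direct computation with the prescribed square root $\hat\beta_i'(z)^{1/2}=\overline{\beta_{1-i}'(z)^{1/2}}/(z\overline{\beta_{1-i}(z)})$ leaves an extra factor of $\overline{\beta_{1-i}(z)}^{-2}$, so the correct choice is $G(z)=z^{-1}\overline{F(\overline{z}^{-1})}$ (holomorphic on $\Sigma^*$ precisely because $0\notin\Sigma^*$, i.e.\ because $\Sigma\in\cDA_0$) --- which is exactly what your absorption of the weight via $F\mapsto F/w$ amounts to.
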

\begin{proof}
From the definition of the Segal commutation relations, it suffices to prove that $(f_1, f_0) \in H^2(\Sigma)$ if and only if $(\overline{zf_0}, \overline{zf_1}) \in H^2(\Sigma^*)$.
Since the map $(f_1,f_0) \mapsto (\overline{zf_0}, \overline{zf_1})$ is unitary and involutive, and ${\Sigma^*}^* = \Sigma$, it suffices to prove that if $f_i = \beta_i^* F$ then $(\overline{zf_0}, \overline{zf_1}) \in H^2(\Sigma^*)$.
And indeed, if we set $G(z) = z \overline{F(\overline{z}^{-1})} \in \cO(\Sigma^*)$, then we have $\hat \beta_i^*G = \overline{z \beta_{1-i}^* F}$.
\end{proof}

There is a natural notion of equivalence of $\Sigma,\tilde \Sigma^\prime \in \cDP$, which is to say an orientation preserving bijection $F$ such that $F \in \cO(\Sigma)$, $F^{-1} \in \cO(\Sigma^\prime)$, and such that $F$ is compatible with boundary parametrizations.
If $\Sigma$ and $\tilde \Sigma$ are equivalent, then $H^2(\Sigma) = H^2(\tilde \Sigma)$, and so $E(\Sigma) = E(\tilde \Sigma)$.
To be careful, we would need $F$ to be a spin-isomorphism, and for $F$ to preserve the square roots of the derivatives of the boundary parametrizations.
However, in the following we will only be interested in the boundedness of various $\Sigma$, which is independent of the choice of square root.

Let $X = (\psi,\rho,t,\gamma) \in \GAnn^{in}$, so that $\pi(X) = U(\psi)e^{-tL(\rho)}U(\gamma)^*$, well-defined up to scalar (Definition \ref{def: geometric annuli}).
Associated to $\rho$, we have a semigroup $\varphi_t \in \scG$ (Definition \ref{def: good semigroup}).
From $X$, we may construct $\Sigma_X \in \cDA_0$ given by $\bbD \setminus \varphi_t(\interior{\bbD})$, with boundary parametrizations given by $\beta_1 = \psi$ and $\beta_0 = \varphi_t \circ \gamma^{-1}$, along with some choice of square root (which is irrelevant for our analysis here).
Alternatively, $\Sigma_X$ is equivalent to the degenerate annulus $\hat\Sigma_X$ given by $\bbD \setminus \psi(\varphi_t(\interior{\bbD}))$, with $\beta_1 = \id$ and $\beta_0 = \psi \circ \varphi_t \circ \gamma^{-1}$.

\begin{Lemma}[]\label{lem: calculation of annuli}
Let $X \in \GAnn^{in}$.
Then $\Sigma_X$ is bounded and $E(\Sigma_X) = \Span \pi(X)$.
\end{Lemma}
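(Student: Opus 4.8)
The plan is to lean on the a priori bound $\dim E(\Sigma) \le 1$ for $\Sigma \in \cDR$ established above. Once this is in hand, $\Sigma_X$ is bounded exactly when $E(\Sigma_X)$ contains a nonzero operator, so both assertions of the lemma follow at once from the single claim that $\pi(X)$ is a nonzero element of $E(\Sigma_X)$. Nonvanishing is immediate: $U(\psi)$ and $U(\gamma)$ are unitary and $e^{-tL(\rho)}$ maps $L(c,0)$ onto itself (as used in the proof of Corollary \ref{cor: generalized annuli localizable}), so $\pi(X)$ is injective. Thus the entire content is the membership $\pi(X) \in E(\Sigma_X)$.

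First I would strip off the reparametrizations. Since $\Sigma_X$ carries $\beta_1 = \psi$, $\beta_0 = \varphi_t \circ \gamma^{-1}$ and is equivalent to $\hat\Sigma_X = \bbD \setminus \psi(\varphi_t(\interior{\bbD}))$, the reparametrization invariance of $E$ reduces the problem to the ``core'' annulus $\Sigma_t^0 := \bbD \setminus \varphi_t(\interior{\bbD})$ with canonical parametrizations $\beta_1 = \id$, $\beta_0 = \varphi_t$, whose candidate operator is $e^{-tL(\rho)}$. Passing from $E(\Sigma_t^0)$ back to $E(\Sigma_X)$ is then diffeomorphism covariance of the free-fermion representation: conjugating the $\CAR$ generators by $U(\psi)$ and $U(\gamma)$ implements precisely the boundary pullbacks by $\psi$ and $\gamma$, so $U(\psi)\,e^{-tL(\rho)}\,U(\gamma)^*$ satisfies the Segal commutation relations for $\Sigma_X$ as soon as $e^{-tL(\rho)}$ satisfies them for $\Sigma_t^0$. (Equivalently, one may treat $U(\psi)$ and $U(\gamma)^*$ as the Segal operators of thin annuli and invoke the composition Proposition \ref{prop: segal cft composition}.)

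The substantive step is therefore $e^{-tL(\rho)} \in E(\Sigma_t^0)$. When $\varphi_t(S^1) \cap S^1 = \emptyset$, the surface $\Sigma_t^0$ is a genuine annular manifold, so by \cite{Ten16} (via \cite[Prop. 3.9]{GRACFT1}) it is bounded with $E(\Sigma_t^0)$ spanned by the Segal CFT operator of \cite{Ten16}. To identify that operator with $e^{-tL(\rho)}$ I would use the semigroup law: since $\varphi_{t+s} = \varphi_t \circ \varphi_s$, one has $\Sigma_{t+s}^0 = \Sigma_t^0 \cup \varphi_t(\Sigma_s^0)$ with matching parametrizations, and using the equivalence $\varphi_t\colon \Sigma_s^0 \to \varphi_t(\Sigma_s^0)$, Proposition \ref{prop: segal cft composition} gives $E(\Sigma_{t+s}^0) = E(\Sigma_t^0)\,E(\Sigma_s^0)$. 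Hence the $E(\Sigma_t^0)$ form a one-parameter semigroup whose infinitesimal generator is a smeared stress-energy operator, and the normalization built into Definition \ref{def: good semigroup} (together with the formula $\rho(z)=\sigma(z)/z\sigma'(z)$ for the associated vector field) pins this generator down to be exactly $-L(\rho)$, yielding $E(\Sigma_t^0) = \Span e^{-tL(\rho)}$ in the nondegenerate regime.

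The remaining case, which I expect to be the hardest, is the degenerate one, where $\varphi_t(S^1)$ meets $S^1$ and $\Sigma_t^0$ is no longer a manifold, so \cite{Ten16} does not apply directly. Here my plan is a limiting argument: $e^{-tL(\rho)}$ is a fixed bounded operator on $\cH_{c,0}$ irrespective of degeneracy, by the Fewster--Hollands quantum energy inequality \cite{FewsterHollands} (as in \cite[Prop. 3.13]{GRACFT1}), and the Segal commutation relations are closed conditions on the boundary Hardy data. Approximating $\Sigma_t^0$ by genuine annuli --- for instance by retracting the inner boundary into the open disk via a small dilation, or by using that holomorphic functions on the degenerate $\Sigma_t^0$ are uniform limits of functions holomorphic on slightly larger nondegenerate surfaces (Mergelyan, as already noted above) --- one transports the relations verified in the nondegenerate case to the degenerate limit, giving $E(\Sigma_t^0) = \Span e^{-tL(\rho)}$ in general and hence $E(\Sigma_X) = \Span \pi(X)$. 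Controlling the degenerating boundary parametrization and the convergence of the relevant Hardy-space vectors is exactly where the genuine technical difficulty lies; everything else reduces to the covariance and composition bookkeeping already recorded in this section.
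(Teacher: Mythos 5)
Your reduction steps are fine and match the structure the paper relies on: $\dim E(\Sigma)\le 1$ turns the lemma into the single claim that $\pi(X)$ is a nonzero element of $E(\Sigma_X)$, nonvanishing is clear, and stripping off $\psi$ and $\gamma$ by reparametrization covariance is exactly how the paper handles the general case (it cites the reparametrization formula \cite[Prop. 3.6]{GRACFT1}). For the core case $\psi=\id$, however, the paper simply cites \cite[Prop. 3.16]{GRACFT1}, where the Segal commutation relations for $e^{-tL(\rho)}$ are verified \emph{directly on the degenerate surface}: for $F\in\cO(\Sigma_t^0)$ one differentiates the putative relation in $t$, uses a commutator identity of the form $[L(\rho),a(f)]=a(Df)$ for a first-order operator $D$ encoding the flow $\varphi_s$, and solves the resulting ODE with initial condition the trivial relation at $t=0$. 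Your proposal replaces this with (i) an identification of the semigroup generator in the nondegenerate regime and (ii) a limiting argument to reach the degenerate case, and both halves have real gaps.

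For (i), the assertion that the normalization in Definition \ref{def: good semigroup} ``pins the generator down to be exactly $-L(\rho)$'' is precisely the nontrivial content: you must compute the $t$-derivative of the Segal CFT operator of $\Sigma_t^0$ at $t=0$ and show it equals the smeared stress tensor $-L(\rho)$, which is not supplied by the semigroup law alone. For (ii), the degenerate case is the entire point of the lemma (nondegenerate annuli are never localizable in an interval), and the approximation scheme is not workable as stated: for the semigroups of interest $\varphi_t(S^1)\cap S^1$ contains an arc for \emph{every} $t>0$, so shrinking $t$ does not remove degeneracy; the dilation $r\varphi_t(\interior{\bbD})$ need not be contained in $\varphi_t(\interior{\bbD})$ for a general Jordan domain, so the dilated surfaces are not honestly comparable to $\Sigma_t^0$; and the Segal relations are indexed by pairs $(f_1,f_0)\in H^2(\Sigma)$ whose incoming component is a pullback along the boundary parametrization, which changes discontinuously in the degenerate limit (the half-derivative factor $\beta_0'(z)^{1/2}$ degenerates where the inner boundary collides with $S^1$). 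Mergelyan gives uniform approximation of holomorphic functions on a \emph{fixed} $\Sigma$, not norm convergence of the Hardy spaces of a \emph{varying} family of surfaces in $H_1\oplus H_0$. So the step you flag as ``where the genuine technical difficulty lies'' is indeed the whole lemma, and the route you propose for it does not obviously close.
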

\begin{proof}
The case when $\psi = \id$ is \cite[Prop. 3.16]{GRACFT1}.
The general case follows from the reparametrization formula \cite[Prop. 3.6]{GRACFT1}.
\end{proof}

The Fock space $\cF$ of $L^2(S^1)$, taken with respect to the standard polarization $H^2(\bbD)$, carries a positive energy representation of $\Vir_1$, and finite energy vectors $\cF^0$ have the structure of a vertex operator superalgebra, called the free (Dirac) fermion; we denote the state-field correspondence by $Y$ for the remainder of the section.
We refer the reader to \cite[\S5.1]{Kac98} or \cite[Ex. 2.22]{GRACFT1} for more details.

\begin{Lemma}\label{lem: segal cft annuli are incoming}
Let $\Sigma \in \cDA_0$. 
Assume that $\partial_1\Sigma = S^1$ and that it is parametrized by the identity, and assume that $\Sigma$ is bounded.
If $T \in E(\Sigma)$ is non-zero, then $T \in \Ann^{in}(\cF)$.
\end{Lemma}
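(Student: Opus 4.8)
The goal is to verify the two defining conditions of Definition \ref{def: incoming and outgoing annuli} for $T$: that the finite energy subspace $\cF^0$ (the underlying vertex operator superalgebra) satisfies $\cF^0 \subset \Ran(T)$, and that $T^{-1}(\cF^0)$ is dense in $\cF$. Since $T \ne 0$ and $\partial_0 \Sigma \ne \emptyset$, Proposition \ref{prop: ESigma injective} already gives that $T$ is injective with dense image, which is the starting point.

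The plan is to reduce everything to a standard round annulus by sewing a genuine (non-degenerate) annulus onto the inner boundary of $\Sigma$. Writing $\Sigma = \bbD \setminus \interior{D_0}$ with $0 \in \interior{D_0}$, I would choose $r > 0$ small enough that $\{\abs{z} \le r\} \subset \interior{D_0}$ and set $A = D_0 \setminus \{\abs{z} < r\}$, a genuine manifold annulus, parametrizing its outgoing boundary $\partial D_0$ by the same map that parametrizes the incoming boundary of $\Sigma$ and its inner circle $\{\abs{z} = r\}$ standardly. Then $A$ is bounded by the non-degenerate analysis of \cite{Ten16}, and the sewn surface $\Sigma \cup A$ is precisely the standard annulus $A_r = \bbD \setminus \{\abs{z}<r\}$, with $S^1$ parametrized by the identity and inner circle parametrized standardly. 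By the composition formula (Proposition \ref{prop: segal cft composition}), for any nonzero $S \in E(A)$ one has $TS \in E(\Sigma \cup A) = E(A_r)$; and by Lemma \ref{lem: calculation of annuli} applied to the scaling semigroup $\varphi_t(z) = e^{-t}z$ (for which $\rho \equiv 1$, $L(\rho) = L_0$, $r = e^{-t}$), we have $E(A_r) = \Span r^{L_0}$. Since $T$ and $S$ are both injective, $TS \ne 0$, so $TS = \lambda r^{L_0}$ for some $\lambda \ne 0$.

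From this the two conditions follow immediately. For the range, $\Ran(T) \supseteq \Ran(TS) = \Ran(r^{L_0}) \supseteq \cF^0$, because $r^{L_0}$ restricts to a bijection of $\cF^0$. For density, $r^{L_0}$ maps $\cF^0$ into $\cF^0$, so $TS(\cF^0) = \lambda\, r^{L_0}(\cF^0) \subseteq \cF^0$, whence $S(\cF^0) \subseteq T^{-1}(\cF^0)$. As $S$ is bounded with dense image (again Proposition \ref{prop: ESigma injective}) and $\cF^0$ is dense, continuity of $S$ yields $\overline{S(\cF^0)} = \overline{\Ran S} = \cF$; hence $T^{-1}(\cF^0)$ is dense, completing the argument.

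The step requiring the most care is the geometric bookkeeping in the second paragraph: confirming that the boundary parametrizations at $\partial D_0$ match so that the composition in Proposition \ref{prop: segal cft composition} is defined, and that $\Sigma \cup A$ is genuinely the standard annulus $A_r$ with the asserted parametrizations. The smooth square-root (spin) data on the boundaries must be chosen compatibly, but since it does not affect boundedness of the surfaces it may be fixed freely. Everything else is a formal consequence of injectivity, the composition formula, and the explicit identity $E(A_r) = \Span r^{L_0}$.
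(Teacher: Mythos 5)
Your proposal is correct and follows essentially the same route as the paper's proof: both sew a genuine annulus $D_0\setminus r\interior{\bbD}$ onto the incoming boundary of $\Sigma$, use Proposition \ref{prop: segal cft composition} to identify the composite with a multiple of $r^{L_0}$, and then read off both conditions of Definition \ref{def: incoming and outgoing annuli} exactly as you do. The only cosmetic difference is that the paper cites \cite[Prop. 5.2]{Ten16} directly for $E(\Sigma_r)=\Span r^{L_0}$ rather than deriving it from Lemma \ref{lem: calculation of annuli} via the scaling semigroup.
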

\begin{proof}
We must show that $T^{-1}(\cF^0)$ is dense and that $T(\cF)$ contains $\cF^0$.
By definition of $\cDA_0$, we have $\Sigma = D_1 \setminus \interior{D_0}$ for generalized disks $D_i$ such that $0 \in \interior{D_0}$.
By assumption, $\partial_1\Sigma = S^1$ so $D_1 = \bbD$.

Choose $r > 0$ is sufficiently small that $r\bbD \subset \interior{D_0}$, and let $\Sigma_r = \bbD \setminus r\interior{\bbD}$, with the natural parametrizations.
Let $\tilde \Sigma = D_0 \setminus r\interior{\bbD}$, with outgoing boundary parametrized the same as the incoming boundary of $\Sigma$.
Since $\tilde \Sigma$ is non-degenerate (as $r$ was chosen sufficiently small), we have that $\tilde \Sigma$ is bounded by \cite[Thm. 4.9]{Ten16}.
Similarly, by \cite[Prop. 5.2]{Ten16}, $E(\Sigma_r)$ is spanned by $r^{L_0}$.
Thus by Lemma \ref{prop: segal cft composition} and Lemma \ref{lem: calculation of annuli} we may choose $\tilde T \in E(\tilde \Sigma)$ such that $r^{L_0} = T\tilde T$.
It follows immediately that $T(\tilde T \cF^0) = \cF^0$, and in particular $\cF^0 \subset T\cF$.
Moreover, since $\tilde T$ has dense image $\tilde T \cF^0$ is dense, and thus $T^{-1} \cF^0$ is dense as well.
\end{proof}

\begin{Corollary}\label{cor: GAnn are incoming}
Let $X \in \GAnn^{in}$, and let $\pi(X)$ be its action on $\cF$.
Then $\pi(X) \in \Ann^{in}(\cF)$.
\end{Corollary}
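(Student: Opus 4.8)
The plan is to identify $\pi(X)$ with the non-zero Segal CFT operator attached to an explicit degenerate annulus and then quote Lemma~\ref{lem: segal cft annuli are incoming}, whose conclusion is exactly the membership $\pi(X) \in \Ann^{in}(\cF)$. Writing $X = (\psi,\rho,t,\gamma)$, the two lemmas already in place do almost all the work: Lemma~\ref{lem: calculation of annuli} shows $\Sigma_X$ is bounded and computes $E(\Sigma_X) = \Span \pi(X)$, while Lemma~\ref{lem: segal cft annuli are incoming} converts boundedness of an annulus with identity-parametrized outgoing boundary into membership in $\Ann^{in}(\cF)$. The only thing I must arrange is that the annulus to which I apply the second lemma has its outgoing boundary equal to $S^1$ parametrized by the identity.

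To do this I would pass from $\Sigma_X = \bbD \setminus \varphi_t(\interior{\bbD})$ to the equivalent annulus $\hat\Sigma_X = \bbD \setminus \psi(\varphi_t(\interior{\bbD}))$ recorded just before Lemma~\ref{lem: calculation of annuli}, whose boundary parametrizations are $\beta_1 = \id$ and $\beta_0 = \psi \circ \varphi_t \circ \gamma^{-1}$; this is precisely the representative with identity-parametrized outgoing boundary. I would then check $\hat\Sigma_X \in \cDA_0$: its unique incoming disk is $\psi(\varphi_t(\bbD))$, and condition~(4) of Definition~\ref{def: geometric annuli} supplies $0 \in \psi(\varphi_t(\interior{\bbD})) = \interior{\psi(\varphi_t(\bbD))}$, so the origin lies in the interior of that disk. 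Because equivalent degenerate annuli have the same Hardy space and hence the same space of operators, Lemma~\ref{lem: calculation of annuli} gives $E(\hat\Sigma_X) = E(\Sigma_X) = \Span \pi(X)$; in particular $\hat\Sigma_X$ is bounded and $\pi(X) \neq 0$ (the non-vanishing is also immediate from $\pi(X) = U(\psi)e^{-tL(\rho)}U(\gamma)^*$, since $e^{-tL(\rho)}$ has dense image, as used in Corollary~\ref{cor: generalized annuli localizable}).

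With these checks done, I would apply Lemma~\ref{lem: segal cft annuli are incoming} with $\Sigma = \hat\Sigma_X$ and the non-zero operator $T = \pi(X) \in E(\hat\Sigma_X)$ to conclude $\pi(X) \in \Ann^{in}(\cF)$. The scalar ambiguity in $\pi(X)$ (Definition~\ref{def: geometric annuli}) is harmless, since the defining conditions $\cF^0 \subset \Ran(\pi(X))$ and density of $\pi(X)^{-1}(\cF^0)$ are unchanged under rescaling by a non-zero constant. There is no substantial obstacle here: all the analytic content sits in the two cited lemmas, and the entire task reduces to the geometric bookkeeping of selecting the identity-parametrized representative $\hat\Sigma_X$ and confirming that it lies in $\cDA_0$.
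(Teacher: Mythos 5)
Your proposal is correct and follows essentially the same route as the paper: pass to the equivalent annulus $\hat\Sigma_X$ with identity-parametrized outgoing boundary, verify $\hat\Sigma_X \in \cDA_0$ via $0 \in \psi(\varphi_t(\interior{\bbD}))$, invoke Lemma \ref{lem: calculation of annuli} to get $E(\hat\Sigma_X) = \Span\pi(X)$, and conclude with Lemma \ref{lem: segal cft annuli are incoming}. The extra remarks on non-vanishing and scalar ambiguity are harmless additions the paper leaves implicit.
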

\begin{proof}
Write $X = (\psi, \rho,t,\gamma)$, and consider the degenerate annulus $\hat \Sigma_X$.
As a subset of $\bbC$, we have $\hat \Sigma_X = \bbD \setminus \psi(\varphi_t(\interior{\bbD}))$, and by definition of $\GAnn^{in}$ we have $0 \in \psi(\varphi_t(\interior{\bbD}))$.
Hence $\hat \Sigma_X \in \cDA_0$.
Recall that $\hat \Sigma_X$ is equivalent to $\Sigma_X$, so that $\hat \Sigma_X$ is bounded and $\pi(X)$ spans $E(\hat \Sigma_X)$ by Lemma \ref{lem: calculation of annuli}.
By definition, $\hat \Sigma_X$ has outgoing boundary equal to the unit circle, with identity parametrization, and so we may apply Lemma \ref{lem: segal cft annuli are incoming} to $\pi(X)$ to obtain the desired conclusion.
\end{proof}

Let $X,\tilde X \in \GAnn^{in}$, and let $\Sigma = \hat \Sigma_X$, and $\tilde \Sigma = \hat \Sigma_{\tilde X}^*$.
By Lemma \ref{lem: calculation of annuli}, $E(\Sigma)$ is spanned by $\pi(X)$ and additionally invoking Proposition \ref{prop: Segal CFT adjoint} we have that $E(\tilde \Sigma)$ is spanned by $\pi(\tilde X)^*$.
By Lemma \ref{lem: segal cft annuli are incoming}, $\pi(\tilde X)^* Y(s^{L_0}-,z)\pi(X)$ defines a holomorphic family of sesquilinear forms (see Section \ref{sec: holomorphic forms from intertwining operators}).
The technical tools developed in \cite{GRACFT1} provide a way of showing that these forms are bounded for appropriate values of $s$ and $z$, and at the same time identifying the resulting bounded maps with operators assigned to degenerate Riemann surfaces.
The following lemma was proven as \cite[Thm. 3.21]{GRACFT1} in the case where $\tilde T = 1$ and $\psi = \id$, but the generalization to the stated case is straightforward\footnote{%
The M\"{o}bius transformation $\psi$ does not appear in \cite[Thm. 3.21]{GRACFT1}, but the argument only requires that $\Sigma$ be a degenerate annulus inside the unit disk with outgoing boundary parametrized by the identity.
This is why it is important to work with $\hat \Sigma_X$ instead of $\Sigma_X$ (see the discussion before Lemma \ref{lem: calculation of annuli}).
We are using crucially here that $\psi$ is a M\"{o}bius transformation and not just an arbitrary diffeomorphism.
}.

\begin{Lemma}\label{lem: segal cft insertions are vertex operators}
Let $\Sigma, \tilde \Sigma \in \cDA_0$ with $\partial_1 \Sigma = \partial_0 \tilde \Sigma = S^1$ and both of these boundary components parametrized by the identity.
Assume that $\Sigma$ and $\tilde \Sigma$ are bounded, and let $T \in E(\Sigma)$ and $\tilde T \in E(\tilde \Sigma)$ be non-zero.
Let $\Sigma_+ = \Sigma \cup \tilde \Sigma$.
Suppose $z + s\bbD \subset \interior{\Sigma_+}$, and let $\Sigma_{z,s} = \Sigma_+ \setminus (z + s \interior{\bbD})$, with the new boundary component parametrized by $w \mapsto z + sw$.
Then $z \in \Int_{Y,s}(\tilde T, T)$ and $E(\Sigma_{z,s}) = \Span \tilde T Y(s^{L_0}-,z)T$.
\end{Lemma}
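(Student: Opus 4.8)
The plan is to deduce both assertions from the bound $\dim E(\Sigma) \le 1$ together with the equivalence (from \cite[Prop. 3.8]{GRACFT1}) that a degenerate surface is bounded precisely when its canonical densely-defined generator is a bounded operator. Concretely, I would show that the densely-defined form underlying $E(\Sigma_{z,s})$ coincides with $\tilde T Y(s^{L_0}-,z)T$, so that boundedness of the surface $\Sigma_{z,s}$ is exactly the statement $z \in \Int_{Y,s}(\tilde T, T)$, and the span identity then follows from one-dimensionality.

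First I would establish that $\Sigma_{z,s}$ is bounded for every admissible $z$, i.e. whenever $z + s\bbD \subset \interior{\Sigma_+}$. The point is that this does \emph{not} require reducing to $\Sigma$ or $\tilde\Sigma$: since $z + s\bbD$ is a compact subset of the open region $\interior{\Sigma_+}$ lying away from both boundary curves of $\Sigma_+$, I can cut $\Sigma_{z,s}$ along two smooth Jordan curves $C_0, C_1$ — one hugging the inner boundary $\partial_0\Sigma$, one hugging the outer boundary $\partial_1\tilde\Sigma$ — chosen disjoint from each other and from $z + s\bbD$, with the small disk lying in the region between them. This exhibits $\Sigma_{z,s}$ as a composition of three genuinely non-degenerate surfaces: an annular collar near $\partial_0\Sigma$, an annular collar near $\partial_1\tilde\Sigma$, and a non-degenerate pair of pants carrying the round insertion at $z$. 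Each piece is a manifold, hence bounded by the Segal CFT analysis of \cite[Thm. 4.9]{Ten16}, so two applications of Proposition \ref{prop: segal cft composition} show that $\Sigma_{z,s}$ is bounded and $E(\Sigma_{z,s})$ is one-dimensional. This argument is uniform in the position of $z$; in particular it covers the case where $z + s\bbD$ straddles the unit circle, for which no round separating circle exists.

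Next I would identify the generator of $E(\Sigma_{z,s})$ with $\tilde T Y(s^{L_0}-,z)T$. On the open set of $z$ with $z + s\bbD \subset \interior\bbD$ this is a direct reduction: applying \cite[Thm. 3.21]{GRACFT1} to $\Sigma$ (whose outgoing boundary is $S^1$ with the identity parametrization) gives $E(\Sigma \setminus (z + s\bbD)) = \Span Y(s^{L_0}-,z)T$, and gluing $\tilde\Sigma$ back on via Proposition \ref{prop: segal cft composition} yields $E(\Sigma_{z,s}) = \Span \tilde T Y(s^{L_0}-,z)T$; the symmetric region $z + s\bbD \subset \interior{\bbD^c}$ is treated identically after reflecting through the unit circle with Proposition \ref{prop: Segal CFT adjoint} (replacing $\tilde\Sigma$ by the incoming annulus $\tilde\Sigma^*$ inside $\bbD$ and $z$ by $\overline z^{-1}$) and taking adjoints. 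To reach an arbitrary admissible $z$, I would use that both $z \mapsto \tilde T Y(s^{L_0}-,z)T$ and the canonical generator of $E(\Sigma_{z,s})$ define very weakly holomorphic families of densely-defined forms on $\interior{\Sigma_+}$ (for the former this is the vertex-operator holomorphy underlying Proposition \ref{prop: interior open}); since they coincide up to scalar on the nonempty open set above, analytic continuation of matrix coefficients forces equality on all of $\interior{\Sigma_+}$. As the canonical generator is bounded by the first step, $\tilde T Y(s^{L_0}-,z)T$ is bounded as well, which is exactly $z \in \Int_{Y,s}(\tilde T, T)$, and $E(\Sigma_{z,s}) = \Span \tilde T Y(s^{L_0}-,z)T$.

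The main obstacle is the insertion point near $S^1$: the clean reduction to \cite[Thm. 3.21]{GRACFT1} applies only when $z + s\bbD$ sits strictly inside $\bbD$ (or, by reflection, strictly outside), since only then is there a round circle with the standard parametrization adjacent to the insertion to which the base case can be applied. The device that circumvents this is the observation that surface-boundedness of $\Sigma_{z,s}$ can be obtained with no round separating circle at all — through the non-degenerate decomposition above — after which the explicit identification is transported from the reducible region to the straddling region by holomorphy and the uniqueness coming from $\dim E(\Sigma_{z,s}) = 1$. A secondary point requiring care is verifying that the two collars and the central pair of pants can always be chosen non-degenerate and with matching boundary parametrizations so that Proposition \ref{prop: segal cft composition} applies, which is a routine consequence of $z + s\bbD$ being compactly contained in the open annulus $\interior{\Sigma_+}$.
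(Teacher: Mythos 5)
Your first step---and the step everything else leans on---does not go through. You propose to cut $\Sigma_{z,s}$ along two disjoint Jordan curves $C_0,C_1$ into two non-degenerate collars and a non-degenerate pair of pants, and then invoke the manifold case of \cite[Thm. 4.9]{Ten16}. But $\Sigma$ and $\tilde\Sigma$ are \emph{degenerate} annuli: $\partial_0\Sigma$ meets $S^1=\partial_1\Sigma$, $\partial_1\tilde\Sigma$ meets $S^1=\partial_0\tilde\Sigma$, and in the only case that matters for Theorem \ref{thm: FF BLVO} the two touching sets overlap along a common arc $J\subset S^1$ (the image of $I^\prime$, where the incoming and outgoing parametrizations agree). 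Along the interior of $J$ the set $\Sigma_+=\tilde D_1\setminus\interior{D_0}$ has empty interior---it is pinched to the arc itself---so any Jordan curve in $\Sigma_{z,s}$ that encloses $D_0$ must contain $J$, and likewise for any curve hugging $\partial\tilde D_1$. Hence $C_0$ and $C_1$ cannot be chosen disjoint, and each ``collar'' is itself a degenerate annulus pinched along $J$, to which the non-degenerate Segal CFT of \cite{Ten16} does not apply. This is not a boundary technicality about where the insertion disk sits: even for $z$ deep inside $\interior{\Sigma}$, the boundedness of the degenerate surface $\Sigma_{z,s}$ is exactly the hard analytic content of \cite{GRACFT1}, and if it could be reduced to compositions of manifolds that machinery would be unnecessary. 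Your second step inherits the problem, since the analytic continuation argument needs the canonical generator of $E(\Sigma_{z,s})$ to be known bounded (and to vary holomorphically, which you also do not justify, as the Hardy space $H^2(\Sigma_{z,s})$ moves with $z$) at precisely the points not covered by the direct reduction.

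The intended route is different: the boundedness of a degenerate annulus inside $\bbD$ with a round insertion, together with the identification $E(\Sigma\setminus(z+s\bbD))=\Span Y(s^{L_0}-,z)T$, is taken as input from \cite[Thm. 3.21]{GRACFT1} (proved there by direct free-fermion/CAR analysis, not by decomposition into manifolds), and the general statement is obtained from that case by gluing $\tilde\Sigma$ on via Proposition \ref{prop: segal cft composition}, using the reflection of Proposition \ref{prop: Segal CFT adjoint} for insertions on the $\tilde\Sigma$ side, and the translation estimate of Lemma \ref{lem: local norm boundedness} (shrinking $s$) to reach insertion disks near or across $S^1$. The part of your proposal that reduces the region $z+s\bbD\subset\interior{\bbD}$ to \cite[Thm. 3.21]{GRACFT1} is sound and is essentially what the paper does; it is the claim that the remaining cases can be settled by non-degenerate surgery that fails.
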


We now prove the main result of this section, updating the result \cite[Thm. 4.4]{GRACFT1} that the free fermion has bounded localized vertex operators to the definition being used in this article (Definition \ref{def: bounded localized vertex operators}).
At the same time, we will prove that the resulting net $\cA_{\cF}$ agrees with the free fermion Fermi conformal net $\cM(I) := \{ a(f), a(f)^* : f \in L^2(I) \}$ (see \cite[\S15]{Wa98}).

\begin{Theorem}\label{thm: FF BLVO}
The free fermion vertex operator superalgebra has bounded localized vertex operators for the system of generalized annuli $\scA$ of Definition \ref{def: choice of generalized annuli}.
The Fermi conformal net $\cA_{\cF}$ is equal to $\cM$.
\end{Theorem}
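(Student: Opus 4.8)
The plan is to exploit the Segal CFT identifications established in Lemmas \ref{lem: calculation of annuli} and \ref{lem: segal cft insertions are vertex operators}: after regularization, the generators of $\cA_{\cF}(I)$ are Segal CFT operators attached to degenerate Riemann surfaces that are localized in $I$, and I would show directly that such operators land in the free fermion local algebra $\cM(I)$. This single containment $\cA_{\cF}(I)\subseteq\cM(I)$ yields locality (hence bounded localized vertex operators) for free, and then an easy subnet argument upgrades the containment to the equality $\cA_{\cF}=\cM$.

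First I would check bounded insertions in the sense of Definition \ref{def: bounded insertions}. Writing $(B,A)=(\pi(\tilde X)^*,\pi(X))\in\scA_I$ for $(\tilde X,X)\in\GAnn_I$, Corollary \ref{cor: GAnn are incoming} gives $A=\pi(X)\in\Ann^{in}(\cF)$, and applied to $\tilde X$ it gives $B^*=\pi(\tilde X)\in\Ann^{in}(\cF)$, i.e. $B\in\Ann^{out}(\cF)$. For the inclusion $\Int(B,A)\subseteq\Int_Y(B,A)$, I would identify $\Int(B,A)$ with $\interior{\Sigma_+}$, where $\Sigma_+=\hat\Sigma_X\cup\hat\Sigma_{\tilde X}^*$ is sewn along the identity-parametrized unit circle; both halves are bounded by Lemma \ref{lem: calculation of annuli} and Proposition \ref{prop: Segal CFT adjoint}. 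For $z\in\interior{\Sigma_+}$ pick $s>0$ with $z+s\bbD\subseteq\interior{\Sigma_+}$; then Lemma \ref{lem: segal cft insertions are vertex operators} yields $z\in\Int_{Y,s}(B,A)\subseteq\Int_Y(B,A)$, as required.

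The key step, and the main obstacle, is to show every generator of $\cA_{\cF}(I)$ lies in $\cM(I)$. For $a\in\cF^0$ and $z\in\Int(B,A)$, choosing $s$ with $z\in\Int_{Y,s}(B,A)$ and writing $BY(a,z)A=BY(s^{L_0}(s^{-L_0}a),z)A$, Lemma \ref{lem: segal cft insertions are vertex operators} identifies this bounded operator with a Segal CFT operator in $E(\Sigma_{z,s})$, while Proposition \ref{prop: segal cft composition} places $BA$ in $E(\Sigma_+)$. Both $\Sigma_{z,s}$ and $\Sigma_+$ are localized in $I$: over $I'$ the surface is pinched thin and its two boundary components carry the identical (identity) parametrization. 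The crux is that this forces $(f,f)\in H^2(\Sigma_{z,s})$ for every $f\in L^2(I')$, so the graded Segal commutation relations give $a(f)T=Ta(f)$ and $a(\overline{zf})^*T=Ta(\overline{zf})^*$ for $T$ the Segal operator; since $f\mapsto\overline{zf}$ preserves $L^2(I')$, $T$ super-commutes with all generators of $\cM(I')=\{a(f),a(f)^*:f\in L^2(I')\}''$. By Haag duality (Theorem \ref{thmFermiNetProps}) applied to the interval $I'$, the graded commutant of $\cM(I')$ is exactly $\cM(I)$, so $T\in\cM(I)$ and therefore $\cA_{\cF}(I)\subseteq\cM(I)$. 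This localization argument is precisely the content of \cite[Thm. 4.4]{GRACFT1}; the only changes are bookkeeping for the M\"obius factor $\psi$ and the passage to the system of generalized annuli of Definition \ref{def: choice of generalized annuli}, both routine once Lemma \ref{lem: segal cft insertions are vertex operators} is in hand.

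Finally I would draw both conclusions. Locality is immediate: for disjoint $I,J$ the algebras $\cA_{\cF}(I)\subseteq\cM(I)$ and $\cA_{\cF}(J)\subseteq\cM(J)$ super-commute because $\cM$ is a Fermi conformal net, so $\cF^0$ has bounded localized vertex operators. By the Proposition following Definition \ref{def: bounded localized vertex operators}, $\cA_{\cF}$ is then a Fermi conformal net, diffeomorphism covariant with respect to the free fermion representation $U$ — the very representation implementing covariance of $\cM$. Combined with $\cA_{\cF}(I)\subseteq\cM(I)$, this exhibits $\cA_{\cF}$ as a covariant subnet of $\cM$. Cyclicity of the vacuum for $\cA_{\cF}(S^1)$ gives $\cH_{\cA_{\cF}}=\cF$, so Proposition \ref{propSubnetTrivial} forces $\cA_{\cF}=\cM$.
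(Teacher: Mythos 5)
Your proposal is correct and follows essentially the same route as the paper: bounded insertions via Lemma \ref{lem: segal cft annuli are incoming} and Lemma \ref{lem: segal cft insertions are vertex operators}, then the containment $\cA_{\cF}(I)\subseteq\cM(I)=\cM(I')'$ via the Segal commutation relations applied to functions supported in $I'$, from which both locality and (by the subnet argument) the equality $\cA_{\cF}=\cM$ follow. The only bookkeeping difference is that the paper takes $\Int(B,A)$ to be a union of interiors $\interior{\Sigma_+}(\tilde Y,Y)$ over all geometric representatives of the same operators and applies Lemma \ref{lem: segal cft insertions are vertex operators} to each piece, but your argument covers this identically.
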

\begin{proof}
We first consider bounded insertions (Definition \ref{def: bounded insertions}).
Let $(\tilde X, X) \in \GAnn_I$.
That is $X = (\psi,\rho,t,\gamma) \in \GAnn^{in}_I$ and $\tilde X = (\tilde \psi, \tilde \rho, \tilde t, \tilde \gamma) \in \GAnn^{in}_I$, and 
$$
(\psi \circ \phi_t \circ \gamma^{-1})|_{I^\prime} = (\tilde \psi \circ \tilde \phi_{\tilde t} \circ \tilde \gamma^{-1})|_{I^\prime},
$$
and the image of $I^\prime$ under these maps lies inside $S^1$.
Then $(\pi(\tilde X)^*, \pi(X)) \in \scA_I$ is a typical element of our system of generalized annuli.
By Lemma \ref{lem: segal cft annuli are incoming}, $\pi(X) \in \Ann^{in}(\cF)$ and $\pi(\tilde X)^* \in \Ann^{out}(\cF)$.

We now have to check that $\Int(\pi(\tilde X)^*, \pi(X)) \subset \Int_{Y}(\pi(\tilde X)^*, \pi(X))$.
By Lemma \ref{lem: segal cft insertions are vertex operators}, $\interior{\Sigma_+} \subset \Int_{Y}(\pi(\tilde X)^*, \pi(X))$.
We now write $\Sigma(X)$, $\tilde \Sigma(\tilde X)$ and $\Sigma_+(\tilde X, X)$ to emphasize the dependence on $X$ and $\tilde X$.
Unpacking the definition of $\Int(\pi(X))$ (Definition \ref{def: choice of generalized annuli}) and $\Int(\pi(\tilde X)^*, \pi(X))$ (Definition \ref{def: system of generalized annuli}), we have 
$$
\Int(\pi(\tilde X)^*, \pi(X)) = \bigcup \interior{\Sigma_+}(\tilde Y, Y)
$$
where the union is taken over $(\tilde Y, Y) \in \GAnn_I$ with $\pi(Y) = \pi(X)$ and $\pi(\tilde Y) = \pi(\tilde X)$.
Repeating the application of Lemma \ref{lem: segal cft insertions are vertex operators}, we obtain 
$$
\interior{\Sigma_+}(\tilde Y, Y) \subset \Int_{Y}(\pi(\tilde Y)^*, \pi(Y)) = \Int_{Y}(\pi(\tilde X)^*, \pi(X))
$$
for all such $(\tilde Y, Y)$. 
Hence $\Int(\pi(\tilde X)^*, \pi(X)) \subset \Int_{Y}(\pi(\tilde X)^*, \pi(X))$, which completes the proof of bounded insertions.

Now the proof that $\cF$ has bounded localized vertex operators (that is, that $\cA_{\cF}(I)$ and $\cA_{\cF}(J)$ commute when $I$ and $J$ are disjoint), is essentially identical to \cite[Thm. 4.4]{GRACFT1}.
It suffices to show that $\cA_{\cF}(I) \subset \cM(I) = \cM(I^\prime)^\prime$, where $\cM(I^\prime)$ is the free fermion net introduced above.
Let $(\tilde X, X) \in \GAnn_I$ and let $z \in \Int_{Y,s}(\pi(\tilde X)^*, \pi(X))$.
Let $\Sigma_{z,s} = (\hat \Sigma_X \cup \hat \Sigma_{\tilde X}^*) \setminus (z + s \interior{\bbD})$ as before.
Then $\pi(\tilde X)^*Y(s^{L_0}-,z)\pi(X) \in E(\Sigma_{z,s})$ by Lemma \ref{lem: segal cft insertions are vertex operators}.
Given $f \in C_c^\infty(I^\prime)$, we extend it to a holomorphic function on $\Sigma_{z,s}$ by setting it to be $0$ on $\Sigma_{z,s} \setminus I^\prime$.
Since
$$
(\psi \circ \phi_t \circ \gamma^{-1})|_{I^\prime} = (\tilde \psi \circ \tilde \phi_{\tilde t} \circ \tilde \gamma^{-1})|_{I^\prime},
$$
the boundary parametrization of $\partial_1\Sigma_{z,s}$ agrees with the parametrization of $\partial_0\Sigma \subset \partial_0\Sigma_{z,s}$ on $I^\prime$.
Hence by the Segal commutation relations, elements of $E(\Sigma_{z,s})$ commute with $a(f)$ and $a(\overline{zf})^*$.
But as $f$ ranges over $C_c^\infty(I^\prime)$, such elements generate $L^2(I^\prime)$
Hence $E(\Sigma_{z,s}) \subset \cM(I^\prime)^\prime = \cM(I)$.
\end{proof}

\newpage

\section{Geometric realization of VOA modules}\label{sec: BLVO modules}

\subsection{Definition and basic properties}\label{sec: def of module}

We fix a system of generalized annuli $\scA_I \subset \scA_I^{out} \times \scA_I^{in}$ to use throughout the section.
Let $V$ be a simple unitary vertex operator superalgebra of central charge $c$ with bounded localized vertex operators, and let $M$ be a unitary $V$-module.
The local algebras $\cA_V(I)$ are generated by operators of the form $BY(a,z)A$ where $(B,A) \in \scA_I$.
The Hilbert space completion $\cH_M$ carries a representation of the Virasoro net $\cA_c$ arising from the positive energy representation corresponding to the modes of $Y^M(\nu,x)$, and thus the generalized annuli $A$ and $B$ act on $\cH_M$.
The representation $\pi^M$ of $\cA_V$ corresponding to $M$ should satisfy 
\begin{equation}\label{eqn: piM formula initial}
\pi^M(BY(a,z)A) = BY^M(a,z)A,
\end{equation}
where following Convention \ref{conv: annuli} we have allowed $A$ and $B$ to act on $\cH_M$ without explicitly indicating this in the notation.

There are several potential issues with this expression.
The first is that we require $A \in \Ann^{in}(\cH_M)$ and $B \in \Ann^{out}(\cH_M)$ in order for the sesquilinear form $BY^M(a,z)A$ to be densely defined (see Definition \ref{def: incoming and outgoing annuli}).
Second, the assertion \eqref{eqn: piM formula initial} should be interpreted as including the assumption that the right-hand side defines a bounded operator.

Another issue is that the actions of $A$ and $B$ on $\cH_V$ and $\cH_M$ in \eqref{eqn: piM formula initial} are only well-defined up to multiples of $e^{2\pi i L_0}$.
Recall from Definition \ref{def: generalized annulus action} that if $A \in \Ann^{\ell}_I(\cH_{c,0})$, then the action of $A$ on a representation $\cH_\pi$ of the Virasoro net $\cA_c$ is given by 
$$
\pi_I(A) = U^\pi(\tilde \gamma)^*\pi_I(U(\gamma)A)
$$
where $\gamma \in \Diff_c(S^1)$ satisfies $U(\gamma)A \in \cA_c(I)$, and $\tilde \gamma$ is a lift of $\gamma$ to $\Diffcinfty$.
However, if $(B,A) \in \scA_I$ then by definition $(B,A) \in \Ann_I$, so that we may choose a common $\gamma$ such that $BU(\gamma)^*, U(\gamma)A \in \cA_c(I)$, and if we choose a common lift $\tilde \gamma$, the meaning of $BY(a,z)A$ is
\begin{equation}\label{eqn: unpacking module def}
BY(a,z)A = \pi_I(BU(\gamma)^*)U^V(\tilde \gamma) Y(a,z) U^V(\tilde \gamma)^* \pi_I(U(\gamma)A).
\end{equation}
Observe that the value of this expression is unchanged when replacing $\tilde \gamma$ by a different lift, and thus $BY(a,z)A$ is unambiguously defined.
Similarly, $BY^M(a,z)A$ is unambiguously defined for a unitary module $M$.

\begin{Definition}\label{def: piM}
Let $V$ be a simple unitary vertex operator superalgebra with bounded localized vertex operators, and let $M$ be a unitary $V$-module.
Suppose that for every interval $I$ and every $(B,A) \in \scA_I$ we have $A \in \Ann^{in}(\cH_M)$, $B \in \Ann^{out}(\cH_M)$.
Then the $\cA_V$-representation corresponding to $M$, if it exists, is the representation $\pi^M$ on the super Hilbert space $\cH_M$ such that for every $I$ and $(B,A)$ as above, and every $z \in \Int(B,A)$ and $a \in V$ we have
\begin{equation}\label{eqn: piM formulas}
\pi^M_I(BY(a,z)A) = BY^M(a,z)A.
\end{equation}
\end{Definition}

As described above, the actions of $A$ and $B$ on $\cH_M$ which arise in Definition \ref{def: piM} are obtained by integrating the representation of the Virasoro algebra corresponding to $Y^M(\nu,x)$ to a representation of the appropriate Virasoro net.

\begin{Remark}\label{rem: piM on empty interior annuli}
If $(B,A) \in \scA_I$ and $\Int(B,A) \ne \emptyset$, then it follows immediately from the definitions that $\pi^M_I(BA) = BA$ by evaluating at $a=\Omega$.
In fact, this relation holds automatically even when $\Int(B,A) = \emptyset$.
Since $(B,A) \in \Ann_I(\cH_V)$ we have $BA \in \cA_c$, and so the action of $BA$ on $\cH_M$ is precisely given by the action of the Virasoro net (see also Remark \ref{rmk: localized annuli commute} and preceding).
\end{Remark}

As a result of Remark \ref{rem: piM on empty interior annuli}, a representation corresponding to $M$ is determined on the generating set of $\cA_V(I)$ given in \eqref{eqn: local algebra def}, and thus $\pi^M$ is unique if it exists.

\begin{Proposition}\label{prop: joint continuity of module action}
Let $V$ be a simple unitary vertex operator superalgebra with bounded localized vertex operators, let $M$ be a unitary $V$-module, and suppose that $\pi^M$ exists.
Let $(B,A) \in \scA_I$ for some interval $I$ and let $z \in \Int(B,A)$.
Then $z \in \Int_{Y^M}(B,A)$.
That is, for some $s >0 $ the map $V \otimes M \to \cH_M$ given by
$$
a \otimes b \mapsto BY^M(s^{L_0}a,z)Ab
$$
extends to a bounded map $\cH_V \otimes \cH_M \to \cH_M$.
\end{Proposition}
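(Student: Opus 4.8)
The plan is to show that the operator $T_M\colon \cH_V \otimes \cH_M \to \cH_M$ informally given by $a \otimes b \mapsto BY^M(s^{L_0}a,z)Ab$ is bounded for a suitable $s > 0$; this is exactly the assertion $z \in \Int_{Y^M,s}(B,A) \subseteq \Int_{Y^M}(B,A)$. Since $V$ has bounded insertions, $z \in \Int(B,A) \subseteq \Int_Y(B,A)$, so I may fix $s > 0$ for which the analogous operator $T_V\colon \cH_V \otimes \cH_V \to \cH_V$, $a \otimes b \mapsto BY(s^{L_0}a,z)Ab$, is bounded by Proposition \ref{prop: interior open}. The naive approach --- using that $\pi^M_I$ is a $*$-homomorphism, hence norm contractive, to deduce $\norm{BY^M(a,z)A} \le \norm{BY(a,z)A}$ for each individual $a$ --- does not suffice, since joint boundedness of $T_M$ on the Hilbert space tensor product is strictly stronger than a pointwise norm bound in $a$ (for instance, a family of operators of uniformly bounded norm need not assemble into a bounded operator on the tensor product). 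Overcoming this gap is the crux of the argument.

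The key is to recast joint boundedness as the boundedness of a single positive operator, which can then be transported through the representation using normality. I would choose an orthonormal basis $\{e_i\}$ of $\cH_V$ consisting of homogeneous finite-energy vectors $e_i \in V$, and set $X_i := BY(s^{L_0}e_i,z)A = s^{\Delta_{e_i}}BY(e_i,z)A$, which lies in $\cA_V(I)$ by the definition of the local algebra. Under the identification $\cH_V \otimes \cH_V \cong \bigoplus_i \cH_V$, the operator $T_V$ becomes the row operator with entries $X_i$, so $T_V$ is bounded precisely when the increasing sequence of partial sums $\sum_{i=1}^n X_i X_i^*$ is uniformly bounded, in which case it converges $\sigma$-strongly to $T_V T_V^* =: S_V$, a positive element of the von Neumann algebra $\cA_V(I)$ with $\norm{S_V} = \norm{T_V}^2$.

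Finally I would apply $\pi^M_I$. By Definition \ref{def: piM} we have $\pi^M_I(X_i) = BY^M(s^{L_0}e_i,z)A$, and since $\pi^M_I$ is a normal $*$-homomorphism it carries the increasing bounded net $\sum_{i=1}^n X_i X_i^*$ to the increasing net $\sum_{i=1}^n \pi^M_I(X_i)\pi^M_I(X_i)^*$, which therefore converges $\sigma$-strongly to the bounded positive operator $\pi^M_I(S_V)$ with $\norm{\pi^M_I(S_V)} \le \norm{S_V}$. Reading this back through the same row-operator identification $\cH_V \otimes \cH_M \cong \bigoplus_i \cH_M$ exhibits $T_M$ as the bounded row operator with entries $\pi^M_I(X_i) = BY^M(s^{L_0}e_i,z)A$, and $\norm{T_M} \le \norm{T_V}$. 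Hence $BY^M(s^{L_0}-,z)A$ extends to a bounded map $\cH_V \otimes \cH_M \to \cH_M$, which is exactly the statement that $z \in \Int_{Y^M}(B,A)$. The only points requiring care are the identification of $\pi^M_I(X_i)$ with the module insertion (immediate from \eqref{eqn: piM formulas}, as $e_i \in V$ and $z \in \Int(B,A)$) and the normality of $\pi^M_I$, which is part of the definition of a representation of a Fermi conformal net.
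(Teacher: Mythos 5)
Your proof is correct, but it takes a genuinely different route from the paper's. The paper's own argument is a two-line appeal to structure theory: since $\cA_V(I)$ is a type III factor (Theorem \ref{thmFermiNetProps}) and $\cH_V$, $\cH_M$ are separable, the normal representation $\pi^M_I$ is spatially implemented by a unitary $U:\cH_V \to \cH_M$ with $\pi^M_I(x) = UxU^*$; hence $BY^M(s^{L_0}a,z)A = U\,BY(s^{L_0}a,z)A\,U^*$, and the bounded map $\cH_V \otimes \cH_V \to \cH_V$ transports directly to a bounded map $\cH_V \otimes \cH_M \to \cH_M$ (with equal norm). You instead avoid the type III input entirely: by realizing $T_V$ as a row operator with entries $X_i = BY(s^{L_0}e_i,z)A \in \cA_V(I)$, you reduce joint boundedness to the uniform boundedness of the positive partial sums $\sum_{i\le n} X_iX_i^*$ inside $\cA_V(I)$, which any $*$-homomorphism preserves. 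Your identification of the pieces is sound: the $X_i$ are generators of $\cA_V(I)$, $\pi^M_I(X_i) = BY^M(s^{L_0}e_i,z)A$ by \eqref{eqn: piM formulas}, and the row operator with these entries restricts to $T_M$ on $V \otimes M$ because each finite-energy vector is a finite combination of the $e_i$. What each approach buys: the paper's is shorter and yields $\norm{T_M} = \norm{T_V}$ on the nose, but it genuinely uses that $\cA_V(I)$ is a type III factor on a separable space; yours yields only $\norm{T_M} \le \norm{T_V}$ but applies verbatim to any representation of any von Neumann algebra --- indeed, for the boundedness conclusion you do not even need normality, since contractivity of $\pi^M_I$ applied to each finite partial sum $\sum_{i\le n} X_iX_i^* \le \norm{T_V}^2\cdot 1$ already gives the uniform bound on $\sum_{i\le n}\pi^M_I(X_i)\pi^M_I(X_i)^*$; normality is only needed to identify the limit as $\pi^M_I(T_VT_V^*)$.
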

\begin{proof}
Since $\cA_V(I)$ is a type III factor (see Proposition \ref{thmFermiNetProps}) and $\cH_V$ and $\cH_M$ are separable Hilbert spaces, there exists a unitary $U: \cH_V \to \cH_M$ such that $\pi^M_I(x) = UxU^*$ for all $x \in \cA_V(I)$.
Hence for $a \in V$, we have $BY^M(s^{L_0}a,z)A = UBY(s^{L_0}a,z)AU^*$.
Since $z \in \Int(B,A)$ and $V$ has bounded insertions, for $s$ sufficiently small the right-hand side defines a bounded map $\cH_V \otimes \cH_M \to \cH_M$, and so the left-hand side does as well.
\end{proof}

We now establish basic properties of the correspondence $M \leftrightarrow \pi^M$.

\begin{Proposition}\label{prop: unitary equivalence of piM}
Let $V$ be a simple unitary vertex operator superalgebra with bounded localized vertex operators, let $M$ be a unitary $V$-module, and suppose that $\pi^M$ exists. 
Let $U$ be the positive energy representation on $\cH_V$ corresponding to Virasoro field $Y(\nu,x)$ of $V$, and let $U^M$ be the positive energy representation corresponding to $Y^M(\nu,x)$.
Then we have the following.
\begin{enumerate}
\item $\pi_I^M(U(\gamma)) = U^M(\gamma)$ for all $\gamma \in \Diff_c(I)$, and so $\pi^M$ is diffeomorphism covariant with respect to $U^M$.
\item If $\tilde M$ is another unitary $V$-module then $M$ is unitarily isomorphic to $\tilde M$ if and only if $\pi^{\tilde M}$ exists and is unitarily equivalent to $\pi^M$.
\end{enumerate}
\end{Proposition}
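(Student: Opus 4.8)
The plan is to handle the two parts in order, using part (1) as the main tool for part (2). Throughout I write $U^V$ for the representation on $\cH_V$ named $U$ in the statement, and $A^M,B^M$ for the actions on $\cH_M$ of a generalized annulus $A,B$, i.e. the images under the Virasoro-net representation on $\cH_M$ coming from $Y^M(\nu,x)$.

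For part (1), I would start from axiom (3) of Definition \ref{def: system of annuli central charge c}, which gives $(1,U(\gamma)) \in \scA_I$ for every $\gamma \in \Diff_c(I)$. This is a ``totally thin'' pair with empty interior, so rather than the insertion formula I would invoke Remark \ref{rem: piM on empty interior annuli}: it identifies $\pi^M_I(U^V(\gamma))$ with the action of $U(\gamma)$ on $\cH_M$ through the Virasoro net. By the observation following Definition \ref{def: generalized annulus action}, a localized diffeomorphism acts on any positive-energy representation as the corresponding diffeomorphism representation, so this action is exactly $U^M(\gamma)$; hence $\pi^M_I(U^V(\gamma)) = U^M(\gamma)$. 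The covariance claim then follows from the uniqueness in Theorem \ref{thm: reps are diff covariant}: $U^M$ is a strongly continuous positive-energy representation of $\Diffcinfty$ satisfying $U^M|_{\Diff_c(I)} = \pi^M_I \circ U^V$, so it must coincide with the representation $U_{\pi^M}$ produced by that theorem.

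For the forward implication of part (2), suppose $\Phi \colon M \to \tilde M$ is a unitary isomorphism of $V$-modules, extended to a unitary $\cH_M \to \cH_{\tilde M}$. Module homomorphisms are even and intertwine $Y^M$ with $Y^{\tilde M}$, hence the Virasoro representations, hence $U^M$ with $U^{\tilde M}$, and therefore (since $\cA_c(I)$ is generated by the $U_{c,0}(\gamma)$, $\gamma \in \Diff_c(I)$) the Virasoro-net representations on the two completions; by Definition \ref{def: generalized annulus action} this forces $\Phi A^M = A^{\tilde M}\Phi$ and $\Phi B^M = B^{\tilde M}\Phi$ for every generalized annulus. Conjugating $\pi^M$ by $\Phi$ then produces a representation of $\cA_V$ whose value on each generator is $\Phi\,(B Y^M(a,z)A)\,\Phi^* = B Y^{\tilde M}(a,z)A$; in particular these insertions are bounded and the incoming/outgoing conditions of Definition \ref{def: incoming and outgoing annuli} transfer across $\Phi$, so $\pi^{\tilde M}$ exists and equals $\Phi \pi^M(\cdot)\Phi^*$, giving the desired unitary equivalence.

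The reverse implication is the substantive one, and I expect it to be the main obstacle, since an equivalence of net representations a priori only intertwines the \emph{bounded} insertion operators and must be promoted to an isomorphism of the unbounded VOA module structures. Let $\Phi$ be an even unitary intertwining $\pi^M$ and $\pi^{\tilde M}$. Using part (1), for $\gamma \in \Diff_c(I)$ I would compute $\Phi U^M(\gamma) = \Phi \pi^M_I(U^V(\gamma)) = \pi^{\tilde M}_I(U^V(\gamma))\Phi = U^{\tilde M}(\gamma)\Phi$; since $\Diffcinfty$ is generated by the subgroups $\Diff_c(I)$, this extends to all of $\Diffcinfty$, so $\Phi$ intertwines $U^M$ and $U^{\tilde M}$, hence the Virasoro modes. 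Consequently $\Phi$ preserves the $L_0$-grading, maps $M$ onto $\tilde M$, intertwines the Virasoro-net representations, and (as in the forward direction) the generalized annulus actions. To reach the modes, take $(1,A) \in \scA_I$ with $\Int(1,A) \ne \emptyset$ (axiom (6)) and $z \in \Int(1,A)$; intertwining the bounded insertion $\pi^M_I(Y(a,z)A) = Y^M(a,z)A^M$ and using $\Phi A^M = A^{\tilde M}\Phi$ gives, for any $\xi$ with $A^M\xi = b \in M$,
\[
\Phi\, Y^M(a,z)\,b = Y^{\tilde M}(a,z)\,\Phi b .
\]
Since $\Ran(A^M) \supseteq M$, this holds for all $b \in M$; pairing with homogeneous finite-energy vectors of $\tilde M$, so that each matrix coefficient is a single monomial in $z$ by the weight-shift rule of Proposition \ref{propBasicModuleOperatorProperties}, lets me extract $\Phi a^M_{(n)} = a^{\tilde M}_{(n)}\Phi$ on $M$ for all $a \in V$ and $n \in \Z$. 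Thus $\Phi|_M$ is a unitary $V$-module isomorphism, completing the argument.
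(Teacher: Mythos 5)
Your proposal is correct and follows essentially the same route as the paper's own proof: part (1) via the pair $(1,U(\gamma))\in\scA_I$ and Remark \ref{rem: piM on empty interior annuli} plus Theorem \ref{thm: reps are diff covariant}, and part (2) by transporting the Virasoro-net and generalized-annulus actions across the unitary, then cancelling the annuli and projecting onto $L_0$-homogeneous subspaces to recover the modes. The only (immaterial) difference is that in the reverse direction you specialize to a pair $(1,A)$ with non-empty interior, whereas the paper works with a general $(B,A)$ and cancels both factors.
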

\begin{proof}
First we show (1).
If $\gamma \in \Diff_c(I)$, then $(1,U(\gamma)) \in \scA_I$ and so by Remark \ref{rem: piM on empty interior annuli} we have $\pi^M_I(U(\gamma)) = U^M(\gamma)$.
By Theorem \ref{thm: reps are diff covariant} it now follows that $\cA_V$ is diffeomorphism covariant with respect to $U^M$.

We now prove (2), first establishing the easy direction.
Suppose that $\tilde M$ is unitarily equivalent to $M$ via a unitary $u:\cH_{\tilde M} \to \cH_{M}$ which takes $M$ onto $\tilde M$.
Hence $uY^{\tilde M}(a,x)u^* = Y^{M}(a,x)$ for all $a \in V$, and in particular $u L^{\tilde M}_n u^* = L^M_n$ as endomorphisms of $M$.
It follows by a standard argument that $uU^{\tilde M}(\gamma)u^* = U^M(\gamma)$ for all $\gamma \in \Diff^{(\infty)}_c(S^1)$.
Thus $\cH_M$ and $\cH_{\tilde M}$ are equivalent representations of the Virasoro net $\cA_c$.

Let $(B,A) \in \scA_I$ and $z \in \Int(B,A)$.
Write $A^M$ and $A^{\tilde M}$ to distinguish the actions of $A$ on these two Hilbert spaces, and similarly for $B$.
Since $u$ intertwines the actions of the Virasoro nets on these spaces, we have $uA^{\tilde M}u^* = A^M$ and similarly for $B$.
Hence
$$
u \pi^{\tilde M}_I(BY(a,z)A) u^* = uB^{\tilde M}Y(a,z)A^{\tilde M} u^*
B^{M}Y^M(a,z)A^{M} u^* =
\pi^{M}_I(BY(a,z)A).
$$

Conversely, suppose that $\pi^{\tilde M}$ exists and we have an even unitary $u:\cH_{\tilde M} \to \cH_{M}$ exhibiting an equivalence of $\pi^M$ and $\pi^{\tilde M}$.
Then by part (1), we have 
\begin{equation}\label{eqnDiffRepsEquivalent}
u U^{\tilde M}(\gamma) u^* = U^M(\gamma)
\end{equation} 
for any $\gamma$ lying in some $\Diff_c(I)$, and thus for any $\gamma \in \Diff_c^{(\infty)}(S^1)$ (here we use that $\Diff_c^{(\infty)}(S^1)$ is generated by the local diffeomorphism groups \cite[Lem. 17(ii)]{HenriquesColimits}). 
In particular $u U^{\tilde M}(r_\theta) u^* = U^M(r_\theta)$, and so as self-adjoint operators we have $u L_0^{\tilde M} u^* = L_0^M$.
Thus $u$ takes $\tilde M$ onto $M$ (and similarly for the even/odd subspaces $\tilde M^i$ and $M^i$).

Now let $(B,A) \in \scA_I$ and $z \in \Int(B,A)$.
Since $u$ is an equivalence of positive energy representations we have $uA^{\tilde M} u^* = A^M$ just as above, and similarly for $B$.
It follows that $A^{\tilde M} \in \Ann^{in}(\cH_{\tilde M})$ and $B^{\tilde M} \in \Ann^{out}(\cH_{\tilde M})$.
%Hence $D_{A^M} = uD_{A^{\tilde M}}$ and similarly for $B$, where $D_{A^M} = {A^M}^{-1}(M)$, and so on.
By assumptions, $uB^{\tilde M} Y^{\tilde M}(a,z)A^{\tilde M}u^* = B^{M} Y^{M}(a,z)A^{M}$, and so $B^{M}uY^{\tilde M}(a,z)u^*A^{M} = B^{M} Y^M(a,z) A^M$. %define the same sesquilinear form on $D_A \times D_B$ (where $D_A = {A^M}^{-1}(M)$, and $D_B = ({B^M}^*)^{-1}(M)$).
Hence $uY^{\tilde M}(a,z)u^* = Y^M(a,z)$ as sesquilinear forms on $M \times M$.
For $b \in M$ with $a$ and $b$ homogeneous, we may project onto homogeneous subspaces of $M$ to obtain $ua^{\tilde M}_{(k)}u^*b = a^M_{(k)}b$ for all $k \in \bbZ$.
Hence $u$ is an equivalence of VOA modules.
\end{proof}

\subsection{Submodules and direct sums}

When $\pi^M$ exists, we have a bijective correspondence between subrepresentations of $\pi^M$ and submodules of $M$.

\begin{Proposition}\label{prop: reps from sums of modules}
Let $V$ be a simple unitary vertex operator superalgebra with bounded localized vertex operators.
\begin{enumerate}
\item Let $M$ and $\{M_i\}_{i \in S}$ be unitary $V$-modules with $S$ a countable set, and suppose that $M = \bigoplus_{i \in S} M_i$.
Then $\pi^M$ exists if and only if every $\pi^{M_i}$ does, and if these hold then $\pi^{M} = \bigoplus_{i \in S} \pi^{M_i}$.
\item If $N$ is a submodule of $M$ and $\pi^M$ exists, then so does $\pi^N$ and $\pi^M|_{\cH_N} = \pi^N$.
\item If $\pi^M$ exists and $\cK$ is a subrepresentation of $\pi^M$ then there exists a $V$-submodule $N$ of $M$ such that $\cK = \cH_N$.
\item If $\pi^M$ exists, then $\pi^M$ is irreducible if and only if $M$ is.
\end{enumerate}
\end{Proposition}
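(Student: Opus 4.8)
The plan is to deduce this equivalence formally from parts (2) and (3) of the present proposition, which together establish a bijection between $V$-submodules of $M$ and subrepresentations of $\pi^M$. I would not argue directly; instead I would show that each irreducibility statement is exactly the image of the other under this correspondence. Throughout, ``$M$ is irreducible'' means $M$ is simple, i.e.\ has no proper non-trivial submodules.

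First I would treat the implication that simplicity of $M$ forces irreducibility of $\pi^M$. Suppose $\cK \subseteq \cH_M$ is a subrepresentation of $\pi^M$. By part (3) there is a $V$-submodule $N \subseteq M$ with $\cK = \cH_N$. Since $M$ is simple, $N = \{0\}$ or $N = M$, and correspondingly $\cK = \{0\}$ or $\cK = \cH_M$. Hence $\pi^M$ has no proper non-trivial subrepresentation and is irreducible.

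For the converse I would argue contrapositively. If $M$ is not simple, choose a proper non-trivial $V$-submodule $N$. By part (2), $\pi^N$ exists and $\pi^M|_{\cH_N} = \pi^N$, so $\cH_N$ is invariant under every $\pi^M_I(\cA_V(I))$, i.e.\ it is a subrepresentation. It then remains only to match ``proper and non-trivial'' on the two sides: since $N$ is $L_0$-graded with $N = \bigoplus_\alpha (N \cap M_\alpha)$ and each $M_\alpha$ finite dimensional, $N \ne \{0\}$ gives $\cH_N \ne \{0\}$, while $N \subsetneq M$ forces $N \cap M_\alpha \subsetneq M_\alpha$ for some $\alpha$ and hence $\cH_N \subsetneq \cH_M$. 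Thus $\pi^M$ admits a proper non-trivial subrepresentation and is not irreducible, completing the equivalence.

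The genuine work underlying this statement is carried entirely by part (3), so that is where I expect any difficulty to lie: the key point there is that an (even) projection $p$ commuting with all $\pi^M_I(\cA_V(I))$ automatically commutes with $\cA_c(I) \subseteq \cA_V(I)$, hence with the Virasoro net action and therefore with the generalized annuli $A$ and $B$ acting on $\cH_M$, and finally, using injectivity of $B$ together with density of $\Ran(A)$, with every mode $a^M_{(n)}$; this is what identifies $pM$ as a $V$-submodule with completion $\cK$. Once that correspondence is in hand, part (4) is the routine bookkeeping above, whose only mild subtlety is the graded-dimension observation relating proper non-trivial submodules to proper non-trivial subrepresentations.
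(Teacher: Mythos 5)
Your treatment of part (4) is correct and is exactly the paper's route: the paper disposes of (4) with the single line ``(4) is an immediate consequence of (2) and (3),'' and your two implications (simplicity of $M$ forces irreducibility of $\pi^M$ via the correspondence of (3); a proper non-trivial submodule $N$ yields a proper non-trivial subrepresentation $\cH_N$ via (2)) are the intended bookkeeping, including the correct observation that $\cH_N \subsetneq \cH_M$ follows from properness of $N$ in some finite-dimensional graded piece.

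The gap is that the statement you were asked to prove is the whole four-part proposition, and your proposal supplies an argument only for (4). Parts (1) and (2) receive no argument at all, and they are not free: (1) requires splitting the action of a generalized annulus as $A^M = \bigoplus_i A^{M_i}$ along the decomposition of $\cH_M$ as a representation of the Virasoro net, and checking that $A^M \in \Ann^{in}(\cH_M)$ if and only if each $A^{M_i} \in \Ann^{in}(\cH_{M_i})$; (2) requires knowing that $p_N M = N$ so that $M = N \oplus N^\perp$ as modules before (1) can be invoked. Part (3), which you rightly identify as carrying the real content, is only sketched, and the sketch has one imprecision worth flagging: you invoke ``density of $\Ran(A)$,'' but the step that shows $N$ is closed under the modes needs the stronger containment $M \subset \Ran(A)$ built into the definition of an incoming annulus. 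The paper's argument picks $(1,A) \in \scA_I$ with $\Int(A) \ne \emptyset$, writes a given homogeneous $b \in N$ as $A\eta$ with $\eta \in \cK$ (using invariance of $\cK$ under $A$), and concludes $Y^M(a,z)b = Y^M(a,z)A\eta \in \cK$, whence $a_{(k)}b \in N$ for all $k$; mere density of the range would not let you hit the specific vector $b$. Your projection-based reformulation (that $p$ commutes with the Virasoro net, hence with $A$ and $B$, hence with the modes) is workable, but it must be run through this same mechanism to actually reach the modes $a^M_{(k)}$ from the bounded generators $BY^M(a,z)A$.
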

\begin{proof}
First (1).
Let $(B,A) \in \scA_I$.
We write $A^M$ for the action of $A$ on $\cH_M$, which by definition can be factored $A^M = U^M(\gamma)\pi_I(x)$ for some $\gamma \in \Diff^{(\infty)}_c(S^1)$ and $x$ in the Virasoro subnet $\cA_c$ of $\cA_V(I)$.
As a representation of $\cA_c$, $\cH_M$ decomposes as a direct sum $\bigoplus \cH_{M_i}$, and thus $A^M = \bigoplus A^{M_i}$, where $A^{M_i}$ is the action of $A$ on $\cH_{M_i}$ coming from the representation of $\cA_c$.
Similarly, $B^M = \bigoplus B^{M_i}$, and we may decompose $BY^M(a,z)A = \bigoplus_{i \in S} B Y^{M_i}(a,z)A$.
Moreover $A^M \in \Ann^{in}(\cH_M)$ if and only if every $A^{M_i} \in \Ann^{in}(\cH_{M_i})$.
Hence if all $\pi^{M_i}$ exist, then $\pi^M = \bigoplus_{i \in S} \pi^{M_i}$ (and so in particular, $\pi^M$ exists).
Conversely, if $\pi^M$ exists then the compression to each $\cH_{M_i}$ yields $\pi^{M_i}$.

Next, (2). Let $N \subset M$ be a submodule, and let $p_N$ be the orthogonal projection of $\cH_M$ onto $\cH_N$.
Since $M$ and $N$ share a common grading from $L_0$, one may apply a simpler version of the argument from Proposition \ref{propModuleProjection} to show that $pM = N$, and therefore $(1-p)M \subset M$.
Thus we may decompose $M = N \oplus N^\perp$, and $\pi^{M}|_{\cH_N} = \pi^N$ by the first part of the proposition.

On to (3).
If $\cK$ is a subrepresentation of $\pi^M$, then in particular it is invariant under the Virasoro subnet of $\cA_V$.
Hence $\cK$ is invariant under the full $\Diff^{(\infty)}_c(S^1)$, and in particular under the rotation subgroup.
It follows that $\cK = \cH_N$, where $N$ is the algebraic span of $L_0$-eigenvectors in $\cK$.

Choose some $(1,A) \in \scA_I$ with $\Int(A) \ne \emptyset$, and let $z \in \Int(1,A)$.
Let $a \in V$ and $b \in N$ be homogeneous.
By assumption $\cK$ is invariant under $Y^M(a,z)A$.
Since $A \in \Ann^{in}(\cH_M)$, there is some $\eta \in \cH_M$ such that $A\eta = b$.
Since $A$ lies in the von Neumann algebra generated by $\cA_V$, $\cK$ and $\cK^\perp$ are invariant under $A$ and so $\eta \in \cK$.
Hence $Y^M(a,z)b \in \cK$, from which it follows that $a_{(k)}b \in N$ for all $k \in \bbZ$.
We conclude that $N$ is a submodule of $M$.

Finally, (4) is an immediate consequence of (2) and (3).
\end{proof}

\subsection{Restriction to subalgebras}
\label{sec: restriction to subalgebras}

%\begin{Remark}
%\todo{About equivalence of categories between $V$-moduels for which $\pi^M$ exists and a subcategory of $\Rep(\cA_V)$ using above arguments. Check it works. Argument from part (2) actually shows that unitary (partial iso?) intertwiners of $V$ modules correspond to unitary intertwiners of $\cA_V$ reps}
%\end{Remark}

We now state our main theorem on construction of conformal net representations corresponding to modules over vertex operator superalgebras via descent from a larger algebra.

\begin{Theorem}\label{thmConstructionOfNetRepresentations}
Let $V$ be a simple unitary vertex operator superalgebra with bounded localized vertex operators, let $W$ be a unitary subalgebra of $V$.
Let $N$ be a unitary $V$-module such that $\pi^N$ exists, and let $M$ be a $W$-submodule of $N$.
Then $\pi^M$ exists.
\end{Theorem}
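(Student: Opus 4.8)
The plan is to construct $\pi^M$ by descending the representation $\pi^N$ through the conformal subalgebra $\tilde W \cong W \otimes W^c$. First I would replace $W$ by the conformal subalgebra $\tilde W = \Span\{a_{(-1)}b : a \in W,\, b \in W^c\}$ of Proposition \ref{propWandCommutantGenerateTensorProduct}. Since $\tilde W$ is conformal its central charge equals $c$, so the Virasoro net and the system of generalized annuli relevant to $\cA_{\tilde W}$ agree with those of $\cA_V$; moreover $W$, $W^c$ have bounded localized vertex operators and $\cA_{\tilde W} = \cA_W \grotimes \cA_{W^c}$, realized as a covariant subnet of $\cA_V$ on $\cH_{\tilde W}$ (Propositions \ref{prop: BLVO tensor product} and \ref{prop: BLVO subalgebra}). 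Restricting $\pi^N$ to this subnet is automatically a representation $\Pi$ of $\cA_W \grotimes \cA_{W^c}$ on $\cH_N$, characterized on generators by $\Pi_I(BY^{\tilde W}(w,z)A) = BY^N(w,z)A$ for $w \in \tilde W$.

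Next I would exploit the module decomposition. By Proposition \ref{propTensorDecomposition} and the discussion following it, $\cH_N = \bigoplus_i \cH_{P_i} \otimes \cH_{Q_i}$ with $P_i$ (resp. $Q_i$) simple unitary $W$-modules (resp. $W^c$-modules), and $\cH_M = \bigoplus_i \cH_{P_i} \otimes S_i$ with $S_i \subset \cH_{Q_i}$ finite-dimensional and $L_0^{W^c}$-graded, so the projection onto $\cH_M$ is $p_M = \bigoplus_i 1 \otimes p_{S_i}$. A $c$-annulus $A \in \scA_I^{in}$ decomposes as $A' \otimes A''$ with $A'$ a $c_W$-annulus and $A''$ a $c_{W^c}$-annulus (Definition \ref{def: system of generalized annuli}), acting blockwise as $A^N = \bigoplus_i (A')^{P_i} \otimes (A'')^{Q_i}$, while for $a \in W$ one has $Y^N(a,z) = \bigoplus_i Y^{P_i}(a,z) \hotimes 1$. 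Restricting $\Pi$ to the subalgebra $\cA_W(I) \otimes 1 \cong \cA_W(I)$ already yields a representation $\Psi := \Pi(\,\cdot\, \otimes 1)$ of $\cA_W$ on $\cH_N$; the heart of the matter is to identify $\Psi$ with the $c_W$-insertions coming from $W$, that is, to show that for each generator $x_1 = B'Y^W(a,z)A'$ of $\cA_W(I)$ one has $\Psi_I(x_1) = B'Y^N(a,z)A' = \bigoplus_i (B')^{P_i} Y^{P_i}(a,z)(A')^{P_i} \otimes 1_{\cH_{Q_i}}$.

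The main obstacle is exactly this identification, since $x_1 \otimes 1$ is not itself an insertion operator: the generalized annuli entangle the two tensor factors, and $A'$ cannot have nonempty interior without its companion $A''$ having one too. I would resolve it by cancellation. Choose a companion $c_{W^c}$-annulus pair $(B'',A'')$ with $\Int(B'',A'') = \Int(B',A') \ni z$, which the axioms of a system of generalized annuli permit, so that $(B'\otimes B'', A'\otimes A'') \in \scA_I$ and $BY^V(a \otimes \Omega, z)A$ is a genuine insertion. Writing this insertion as $(x_1 \otimes 1)(1 \otimes B''A'')$ in $\cA_{\tilde W}(I) = \cA_W(I) \otimes \cA_{W^c}(I)$ and applying the homomorphism $\Pi$ gives
\begin{equation*}
\Psi_I(x_1)\cdot \Big(\bigoplus_i 1 \otimes (B''A'')^{Q_i}\Big) = \bigoplus_i (B')^{P_i} Y^{P_i}(a,z)(A')^{P_i} \otimes (B''A'')^{Q_i},
\end{equation*}
using that $1 \otimes B''A''$ lies in the $c_{W^c}$-Virasoro subnet and hence acts blockwise as $\bigoplus_i 1 \otimes (B''A'')^{Q_i}$. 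Since $\Psi_I(x_1)$ is block-diagonal, choosing $(B'',A'')$ (for instance of the form $((A'')^*, A'')$ with $A''$ injective) so that each $(B''A'')^{Q_i}$ has dense range allows me to cancel the right-hand factor block by block and conclude $\Psi_I(x_1) = \bigoplus_i (B')^{P_i}Y^{P_i}(a,z)(A')^{P_i} \otimes 1$.

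With the identification established the remainder is routine. The operators $B'Y^N(a,z)A'$ and $B'A'$ generate a von Neumann algebra $\cC(I)$ on $\cH_N$ commuting with $p_M$: indeed $p_M$ commutes with the $c_W$-annuli, which act as $\bigoplus_i(\,\cdot\,)^{P_i}\otimes 1$, and with the modes $a^N_{(n)}$ for $a \in W$ by Proposition \ref{propModuleProjection}. Hence $\cH_M$ is invariant under $\Psi_I(\cA_W(I)) = \cC(I)$, and I would define $\pi^M_I$ to be the compression of $\Psi_I$ to $\cH_M$, a normal $*$-homomorphism. Because $p_M$ commutes with the annuli and $Y^N(a,z)|_M = Y^M(a,z)$ for $a \in W$, this compression sends $B'Y^N(a,z)A'$ to $B'Y^M(a,z)A'$, and the restrictions of $A'$ and $B'$ to $\cH_M$ lie in $\Ann^{in}(\cH_M)$ and $\Ann^{out}(\cH_M)$ by Lemma \ref{lem: split simple tensors} applied to the blockwise factorizations. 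This verifies the requirements of Definition \ref{def: piM}, so $\pi^M$ exists.
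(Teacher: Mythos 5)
Your overall strategy --- pass to the conformal subalgebra $\tilde W \cong W \otimes W^c$, decompose $N = \bigoplus_i P_i \otimes Q_i$, and identify the restriction of $\pi^N$ to $\cA_W \otimes 1$ with the blockwise insertion operators --- is the same as the paper's, and your framing of the ``heart of the matter'' is exactly right. The gap is in the cancellation step. You factor $BY^{\tilde W}(a \otimes \Omega,z)A = (x_1 \otimes 1)(1 \otimes B''A'')$ and then assert that $\Pi_I(1 \otimes B''A'') = \bigoplus_i 1 \otimes (B''A'')^{Q_i}$ because ``$1 \otimes B''A''$ lies in the $c_{W^c}$-Virasoro subnet.'' But the defining property of $\pi^N$ (Definition \ref{def: piM} together with Remark \ref{rem: piM on empty interior annuli}) only pins down the action of $\Pi_I$ on the \emph{diagonal} Virasoro subnet $\cA_{c_W+c_{W^c}}(I)$, i.e.\ on elements such as $BA = (B'A') \otimes (B''A'')$, not on the factors $1 \otimes B''A''$ or $(B'A') \otimes 1$ separately. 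A priori, $\Pi_I$ restricted to $1 \otimes \cA_{c_{W^c}}(I)$ could be conjugated by a unitary on $\cH_N$ that does not split as a tensor product, in which case $\Pi_I(1 \otimes B''A'')$ need not respect the factorization $\bigoplus_i \cH_{P_i}\otimes\cH_{Q_i}$ at all. Knowing that it does is equivalent to knowing $\Psi_I(x_1) \in \bigoplus_i \cB(\cH_{P_i}) \otimes 1$, which is what you are trying to prove, so the cancellation is circular as written.

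This is precisely the difficulty the paper isolates and spends most of Section \ref{sec: restriction to subalgebras} resolving. Lemma \ref{lem: Virasoro net splitting} shows, via a type I factor argument and the classification of irreducible $\cA_c$-sectors, that irreducible representations of $\cA_{c_1} \otimes \cA_{c_2}$ split as tensor products; Lemma \ref{lem: Vir modules tensor split} then upgrades this to the statement that the intertwining unitary itself factors as $u_1 \otimes u_2$, by an inductive argument that propagates the factorization outward from the lowest weight vectors using the dense range of generalized annuli (Lemma \ref{lem: split simple tensors}) and Lemma \ref{lem: components in closure}. Only after that does one know $\pi^N(1 \otimes \cA_{c_{W^c}}(I)) \subset \bigoplus_i 1 \otimes \cB(\cH_{Q_i})$, at which point a cancellation like yours (together with Lemma \ref{lem: piM up to scalar} to dispose of the residual scalar ambiguity) goes through. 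So the proposal is right in outline, but at the key step it assumes the essential technical content of the theorem rather than proving it.
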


We will give the proof of Theorem \ref{thmConstructionOfNetRepresentations} at the end of Section \ref{sec: restriction to subalgebras} after some preliminary results.
The case when $W$ is a conformal subalgebra is straightforward.

\begin{Lemma}\label{lem: module existence conformal subcase}
Theorem \ref{thmConstructionOfNetRepresentations} holds when $W$ is a conformal subalgebra. 
\end{Lemma}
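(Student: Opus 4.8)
The plan is to realize $\pi^M$ as the compression to $\cH_M$ of the restriction of $\pi^N$ to the subnet of $\cA_V$ associated with $W$. By Proposition \ref{prop: BLVO subalgebra}, $\cA_V$ contains a covariant subnet $\cB$ whose local algebra $\cB(I)$ is generated by the insertions $BY^V(a,z)A$ with $a \in W$ together with the annuli $BA$, and which restricts to $\cA_W$ on $\cH_W$. First I would restrict the representation $\pi^N$ of $\cA_V$ to $\cB$ and transport it along the canonical isomorphism $\cA_W \cong \cB$ of Proposition \ref{prop: BLVO subalgebra}, obtaining a representation of $\cA_W$ now realized on $\cH_N$ rather than on $\cH_W$. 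The goal is then to exhibit $\cH_M$ as a subrepresentation and to recognize the resulting compressed representation as $\pi^M$.

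The crux is to show that the orthogonal projection $p_M : \cH_N \to \cH_M$ commutes with $\pi^N(\cB(I))$ for every interval $I$, which reduces to commuting with the generators $BY^N(a,z)A$ and $BA$. Here the hypothesis that $W$ is conformal is essential: since $\nu \in W$ we have $L_n^M = L_n^N|_M$, so $\cH_M$ is a subrepresentation of $\cH_N$ as a representation of the Virasoro net $\cA_c$, and the annulus actions satisfy $A^M = A^N|_{\cH_M}$ and $B^M = B^N|_{\cH_M}$ in the sense of Convention \ref{conv: annuli}. Proposition \ref{propModuleProjection}, applied to the $W$-submodule $M \subset N$, gives that $p_M$ is even, commutes with $a^N_{(n)}$ for all $a \in W$, and commutes with $L_0^V$; since $\nu \in W$ it therefore commutes with the entire Virasoro-net action on $\cH_N$, hence with the annuli $A^N$ and $B^N$, and with each field $Y^N(a,z)$ for $a \in W$. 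It follows that $p_M$ commutes with every bounded insertion $BY^N(a,z)A$ and with $BA$, hence with $\pi^N(\cB(I))$.

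Consequently $\cH_M$ is invariant under $\pi^N(\cB(\,\cdot\,))$, and $\lambda_I(x) := \pi^N_I(x)|_{\cH_M}$ defines a representation of $\cA_W$ on $\cH_M$. It remains to check that $\lambda = \pi^M$. Using $A^M = A^N|_{\cH_M}$, $B^M = B^N|_{\cH_M}$, and $Y^M(a,z) = p_M Y^N(a,z)|_{\cH_M}$ for $a \in W$ (which is exactly the inherited $W$-module structure on $M$), I would compute on generators
\[
\lambda_I(BY^W(a,z)A) = p_M\,\bigl(B^N Y^N(a,z) A^N\bigr)\big|_{\cH_M} = B^M Y^M(a,z) A^M = BY^M(a,z)A,
\]
which is precisely the defining relation \eqref{eqn: piM formulas}. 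A short argument---using that $p_M$ commutes with $A^N$ and that $p_M N \subseteq M$ by Proposition \ref{propModuleProjection}---shows that $M \subseteq \Ran(A^M)$ and that $(A^M)^{-1}(M)$ is dense, so $A^M \in \Ann^{in}(\cH_M)$ and likewise $B^M \in \Ann^{out}(\cH_M)$, ensuring the relation is meaningful. By the uniqueness of $\pi^M$ noted after Remark \ref{rem: piM on empty interior annuli}, this establishes existence. The only real subtlety is the field-level commutation of $p_M$ with the bounded operators $BY^N(a,z)A$: it holds on matrix coefficients from the commutation with each mode $a^N_{(n)}$ and with $A^N, B^N$, and passes to the bounded operator because the insertion is bounded on $\cH_N$. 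The non-conformal case is genuinely harder precisely because there $A^M$ is built from $L_n^W$ rather than $L_n^V$, so $A^M \neq A^N|_{\cH_M}$ and this clean compression breaks down.
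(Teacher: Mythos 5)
Your proof is correct and follows essentially the same route as the paper: observe that conformality of $W$ makes $N$ a genuine unitary $W$-module whose Virasoro (hence annulus) actions agree with the $V$-ones, identify the restriction of $\pi^N$ to the $W$-subnet as the $\cA_W$-representation attached to $N$, and then compress to $\cH_M$. The only cosmetic difference is that the paper delegates the compression step to Proposition \ref{prop: reps from sums of modules}(2), whereas you carry out the commutation of $p_M$ with the generators by hand.
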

\begin{proof}
Since $N$ is both a unitary $V$-module and a unitary $W$-module, we can separately consider $\pi^{N;V}$, the representation of $\cA_V$ corresponding to $N$, and $\pi^{N;W}$, the representation of $\cA_W$ corresponding to $N$.
It is clear that $\pi^{N;V}|_{\cA_W} = \pi^{N;W}$, so in particular the latter exists.
Since $M$ is a $W$-submodule of the $W$-module $N$,  the desired conclusion follows from Proposition \ref{prop: reps from sums of modules}.
\end{proof}

When $W$ is not a conformal subalgebra of $V$, we will establish Theorem \ref{thmConstructionOfNetRepresentations} by embedding $W$ in the conformal subalgebra of $V$ of the form $W \otimes W^c$.
Thus we will need to understand how representations $\pi^M$ behave under tensor product.
The challenge is to show that if $\pi^{M_1 \otimes M_2}$ exists, then so do $\pi^{M_i}$ and $\pi^{M_1 \otimes M_2} = \pi^{M_1} \otimes \pi^{M_2}$.
This follows quite easily once one knows that 
\begin{equation}\label{eqn: splitting}
\pi^{M_1 \otimes M_2}(\cA_{V_1}(I) \otimes 1) \subset \cB(\cH_{M_1}) \otimes 1
\end{equation}
 and similarly for the second net.
However, this by itself is quite tricky to prove, because the definition of $\pi^{M_1 \otimes M_2}$ does not directly give information about $\pi^{M_1 \otimes M_2}|_{\cA_{V_1}}$.
Our strategy is to verify \eqref{eqn: splitting} when the $V_i$ are Virasoro VOAs, which is Lemma \ref{lem: Vir modules tensor split} below.
Before that lemma, we will need a few preparatory results (Lemmas \ref{lem: Virasoro is right} - \ref{lem: Virasoro net splitting}).
Once we establish Lemma  \ref{lem: Vir modules tensor split}, our main result about tensor products of modules and representations is Proposition \ref{propTensorProductModuleExists}, which is followed by the proof of Theorem \ref{thmConstructionOfNetRepresentations}.

\begin{Lemma}\label{lem: Virasoro is right}
Let $V = \Vir_c$ be the Virasoro vertex operator algebra at a unitary central charge $c$, and assume that $V$ has bounded localized vertex operators.
Let $\cA_c$ be the Virasoro net with central charge $c$.
Then $\cA_{V} = \cA_c$.
\end{Lemma}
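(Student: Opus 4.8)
The plan is to show that the local algebras agree, $\cA_V(I) = \cA_c(I)$ for every interval $I$, which suffices because both nets act on the same Hilbert space $\cH_V = \cH_{c,0}$ (the completion of the vacuum module $L(c,0)$) and carry the same representation $U^V = U_{c,0}$ induced by the conformal vector $\nu$. I would prove the two inclusions separately. The guiding observation is that one inclusion is essentially built into the axioms of a system of generalized annuli, while the reverse follows by combining the locality hypothesis (bounded localized vertex operators) with Haag duality for $\cA_c$.

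First I would establish $\cA_c(I) \subseteq \cA_V(I)$. For $\gamma \in \Diff_c(I)$, axiom (3) of Definition \ref{def: system of annuli central charge c} gives $(1, U(\gamma)) \in \scA_I$, and the product $1 \cdot U(\gamma) = U(\gamma)$ is one of the generators $\{BA : (B,A) \in \scA_I\}$ appearing in \eqref{eqn: local algebra def}. Hence $U(\gamma) \in \cA_V(I)$ for all such $\gamma$, and since $\cA_c(I) = \{U(\gamma) : \gamma \in \Diff_c(I)\}^{\prime\prime}$ we obtain $\cA_c(I) \subseteq \cA_V(I)$.

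For the reverse inclusion I would examine the two families of generators of $\cA_V(I)$ in \eqref{eqn: local algebra def}. The generators $BA$ already lie in $\cA_c(I)$: since $(B,A) \in \scA_I \subseteq \Ann_I$, there is a common $\gamma$ with $U(\gamma)A, BU(\gamma)^* \in \cA_c(I)$, so $BA = (BU(\gamma)^*)(U(\gamma)A) \in \cA_c(I)$ (cf. Remark \ref{rmk: localized annuli commute}). For a generator $BY(a,z)A$ with $z \in \Int(B,A)$ I would argue by locality and duality. Since $\Vir_c$ is even, $\cA_V$ is an ordinary local conformal net, so bounded localized vertex operators (Definition \ref{def: bounded localized vertex operators}) imply that $\cA_V(I)$ commutes with $\cA_V(I^\prime)$, where $I^\prime$ is the single complementary interval. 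Applying the inclusion of the previous paragraph to $I^\prime$ gives $\cA_c(I^\prime) \subseteq \cA_V(I^\prime)$, so $BY(a,z)A$ commutes with $\cA_c(I^\prime)$; by Haag duality for the Virasoro net (Theorem \ref{thmFermiNetProps}(1)), $\cA_c(I^\prime)^\prime = \cA_c(I)$, whence $BY(a,z)A \in \cA_c(I)$. Thus both generating families lie in $\cA_c(I)$, giving $\cA_V(I) \subseteq \cA_c(I)$.

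Combining the inclusions yields $\cA_V(I) = \cA_c(I)$ for every $I$, hence $\cA_V = \cA_c$. The points that require care, rather than being a genuine obstacle, are verifying that the representation $U^V$ underlying $\cA_V$ coincides with $U_{c,0}$ (so the two nets are built from identical data on $\cH_{c,0}$) and that $I^\prime$ is a single interval, so the locality hypothesis applies directly with $J = I^\prime$. Everything else is a routine assembly of the annulus axioms, the locality hypothesis, and Haag duality, with no hard analytic estimate needed.
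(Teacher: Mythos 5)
Your proof is correct, but it takes a different route from the paper's. The paper establishes only the easy inclusion $\cA_c(I) \subseteq \cA_V(I)$ (exactly as you do, via the axiom $(1,U(\gamma)) \in \scA_I$ for $\gamma \in \Diff_c(I)$), concludes that $\cA_c$ is a covariant subnet of $\cA_V$, and then invokes Proposition \ref{propSubnetTrivial}: a covariant subnet whose vacuum Hilbert space is all of $\cH$ equals the whole net. That proposition in turn rests on Reeh--Schlieder and the Bisognano--Wichmann-type characterization $x \in \cB(I) \iff x\Omega \in \cH_B$. You instead prove the reverse inclusion $\cA_V(I) \subseteq \cA_c(I)$ directly: the generators $BA$ land in $\cA_c(I)$ by the definition of $\Ann_I$, and the generators $BY(a,z)A$ land there because locality (the bounded-localized-vertex-operators hypothesis applied to the pair $I, I^\prime$) puts them in $\cA_V(I^\prime)^\prime \subseteq \cA_c(I^\prime)^\prime$, which equals $\cA_c(I)$ by Haag duality for the Virasoro net. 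Both arguments are sound and ultimately duality-based; yours is more explicit about where each generator of $\cA_V(I)$ lands and avoids checking M\"obius covariance of the subnet, while the paper's is shorter because it reuses the general subnet machinery already set up in Section 2. Your two flagged points of care are handled correctly: $U^V = U_{c,0}$ because the unitary Virasoro VOA is $L(c,0)$ with its canonical $\Vir_c$-action, and $I^\prime$ is indeed a single interval disjoint from $I$, so the locality hypothesis applies verbatim (and supercommutation reduces to commutation since $V$ is even).
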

\begin{proof}
By definition, if $\gamma \in \Diff_c(I)$ then $(1,U(\gamma)) \in \scA_I$, where $U=U_{c,0}$.
Hence $U(\gamma) \in \cA_V(I)$.
Thus $\cA_c(I) \subset \cA_V(I)$, and so $\cA_c$ is a subnet of $\cA_V$.
By Proposition \ref{propSubnetTrivial}, we have $\cA_c = \cA_V$.
\end{proof}

\begin{Lemma}\label{lem: Virasoro net splitting}
Let $\pi$ be an irreducible representation of $\cA_{c_1} \otimes \cA_{c_2}$.
Then $\pi$ is equivalent to $\pi_{(c_1,h_1)} \otimes \pi_{(c_2,h_2)}$, where $\pi_{(c_i,h_i)}$ is the representation of $\cA_{c_i}$ on the irreducible positive energy representation with lowest weight $h_i$.
\end{Lemma}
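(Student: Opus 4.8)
The plan is to recast the statement in von Neumann algebra terms and exploit factoriality. Write $\cA = \cA_{c_1} \otimes \cA_{c_2}$ acting on $\cH_\pi$ via $\pi$, and let $\mathcal{M}_1$ (resp.\ $\mathcal{M}_2$) be the von Neumann algebra generated by all $\pi_I(x \otimes 1)$ (resp.\ all $\pi_I(1 \otimes y)$), as $I$ ranges over the intervals and $x \in \cA_{c_1}(I)$, $y \in \cA_{c_2}(I)$. Since $(x \otimes 1)(1 \otimes y) = (1 \otimes y)(x \otimes 1)$ inside each $\cA(K)$ and each $\pi_K$ is a homomorphism, $\mathcal{M}_1$ and $\mathcal{M}_2$ commute. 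Because $\cA(I) = (\cA_{c_1}(I) \otimes \cA_{c_2}(I))''$ is generated by these two commuting families, and $\pi$ is irreducible (so $\pi(\cA)'' = \cB(\cH_\pi)$), I get $\mathcal{M}_1 \vee \mathcal{M}_2 = \cB(\cH_\pi)$.

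First I would observe that each $\mathcal{M}_i$ is automatically a factor: from $\mathcal{M}_1 \subseteq \mathcal{M}_2'$ and $\mathcal{M}_2 \subseteq \mathcal{M}_1'$ one gets $Z(\mathcal{M}_1) = \mathcal{M}_1 \cap \mathcal{M}_1' \subseteq \mathcal{M}_2' \cap \mathcal{M}_1' = (\mathcal{M}_1 \vee \mathcal{M}_2)' = \bbC$, and symmetrically for $\mathcal{M}_2$. The restriction $\pi^1 := \pi(\,\cdot\, \otimes 1)$ is a representation of $\cA_{c_1}$ (positive energy, by Theorem \ref{thm: reps are diff covariant}) generating the factor $\mathcal{M}_1$, hence a factor representation. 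Here I would invoke the representation theory of the Virasoro net recalled in Section \ref{sec: virasoro nets} — every representation of $\cA_{c_1}$ resolves into the irreducibles $\pi_{(c_1,h)}$ — to conclude that the algebra of diagonal operators in such a decomposition is central in $\mathcal{M}_1$, so factoriality forces the decomposition to be concentrated at a single weight $h_1$. Thus $\pi^1$ is a multiple of the irreducible $\pi_{(c_1,h_1)}$, and $\mathcal{M}_1 \cong \pi_{(c_1,h_1)}(\cA_{c_1})'' \otimes 1 = \cB(\cH_{c_1,h_1}) \otimes 1$ is a type I factor; the same argument applies to $\mathcal{M}_2$ and $\pi^2 := \pi(1 \otimes \,\cdot\,)$.

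With $\mathcal{M}_1$ a type I factor in $\cB(\cH_\pi)$, the structure theorem for type I factors supplies an identification $\cH_\pi \cong \cH_1 \otimes \cH_2$ under which $\mathcal{M}_1 = \cB(\cH_1) \otimes 1$ and $\mathcal{M}_1' = 1 \otimes \cB(\cH_2)$. Since $\mathcal{M}_2 \subseteq \mathcal{M}_1'$ I may write $\mathcal{M}_2 = 1 \otimes \mathcal{N}$, and then $\mathcal{M}_1' \cap \mathcal{M}_2' = 1 \otimes \mathcal{N}' = (\mathcal{M}_1 \vee \mathcal{M}_2)' = \bbC$ forces $\mathcal{N} = \cB(\cH_2)$, i.e.\ $\mathcal{M}_2 = \mathcal{M}_1'$. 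Now $\pi^1$ takes values in $\cB(\cH_1) \otimes 1$, so $\pi^1 = \rho_1 \otimes 1$ for a representation $\rho_1$ of $\cA_{c_1}$ on $\cH_1$ with $\rho_1(\cA_{c_1})'' = \cB(\cH_1)$; hence $\rho_1$ is irreducible and $\rho_1 \cong \pi_{(c_1,h_1)}$ by the classification, and likewise $\pi^2 = 1 \otimes \rho_2$ with $\rho_2 \cong \pi_{(c_2,h_2)}$. Finally, for $x \in \cA_{c_1}(I)$ and $y \in \cA_{c_2}(I)$ I compute $\pi_I(x \otimes y) = \pi_I(x \otimes 1)\,\pi_I(1 \otimes y) = (\rho_{1,I}(x) \otimes 1)(1 \otimes \rho_{2,I}(y)) = \rho_{1,I}(x) \otimes \rho_{2,I}(y)$, which is precisely the action of $\pi_{(c_1,h_1)} \otimes \pi_{(c_2,h_2)}$; as $\cA(I)$ is generated by such simple tensors, the two representations coincide. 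No graded-tensor subtleties intervene, since the Virasoro nets are purely even.

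I expect the genuine obstacle to be the type I step. Two commuting factors that jointly generate $\cB(\cH_\pi)$ need not be one another's commutants unless one of them is type I (an irreducible subfactor inclusion would defeat the naive argument), so the whole proof hinges on ruling out type II/III behaviour of $\mathcal{M}_1$. This is exactly where I must lean on the special representation theory of the Virasoro net — that its (positive-energy) representations decompose over the point-classified irreducibles $\pi_{(c,h)}$, so that a factor representation is a multiple of a single one — rather than on soft factoriality arguments alone.
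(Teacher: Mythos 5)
Your overall architecture is the same as the paper's: form the two commuting von Neumann algebras $\cA_i = \bigvee_I \pi_I(\cA_{c_i}(I)\otimes 1)$ (resp.\ $1\otimes\cA_{c_2}(I)$), note they jointly generate $\cB(\cH_\pi)$ and are therefore factors, argue they are type I, and then split via the type I structure theorem (the paper cites the argument of \cite[Lem.~27]{KaLoMu01} for exactly this last step). Your closing paragraph correctly identifies the crux as ruling out type II/III behaviour. The problem is that your justification at that crux is a gap, not a proof.

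The step ``every representation of $\cA_{c_1}$ resolves into the irreducibles $\pi_{(c_1,h)}$, so a factor representation is a multiple of a single one'' is not available from Section \ref{sec: virasoro nets}, which only classifies the \emph{irreducible} representations and, for general representations, offers a direct \emph{integral} decomposition (via the split property and \cite[Prop.~56]{KaLoMu01}) rather than a direct sum. For $c\ge 1$ the Virasoro net is not rational — it has a continuum of sectors $\pi_{(c,h)}$, $h\ge 0$ — so a locally normal representation could a priori be a direct integral over $h$, and disintegrating a factor representation over a MASA of its commutant into irreducibles does not by itself force concentration at a point (a type III factor representation also disintegrates into irreducibles; the disjointness needed to identify the diagonal algebra with the center is exactly what fails). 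In other words, ``factor representation $\Rightarrow$ multiple of a single irreducible'' is equivalent to the type I-ness you are trying to establish. The paper closes this gap with a specific mechanism you are missing: since $\Diff_c^{(\infty)}(S^1)$ is generated by the local groups, $U^i(r_{2\pi})$ lies in $\cA_i$; it also commutes with $\cA_i$ (covariance plus centrality of $r_{2\pi}$) and, lying in $\cA_1$, it commutes with $\cA_2$, hence with $\cA_1\vee\cA_2=\cB(\cH_\pi)$, so it is a scalar. Only then does \cite[Prop.~2.2]{Carpi04} apply to conclude that $\pi^i$ is a direct sum of irreducibles (the $L_0$-spectrum is forced into $h_i+\bbZ$), whence $\cA_i$ is a type I factor and your remaining argument goes through. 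You should insert this rotation argument in place of the appeal to ``the representation theory recalled in Section \ref{sec: virasoro nets}.''
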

\begin{proof}
Let $\pi^i_I = \pi|_{\cA_{c_i}(I)}$, where we have identified $\cA_{c_i}(I)$ with its tensor factor in $\cA_{c_1}(I) \otimes \cA_{c_2}(I)$.
Then $\pi^i$ is a representation of $\cA_{c_i}$, and so there are unique strongly continuous representations $U^i$ of $\Diff^{(\infty)}_c(S^1)$ on $\cH_\pi$ such that $U^i(\gamma) = \pi_I(U^0(\gamma))$ for all $\gamma \in \Diff_c(I)$, and that representation has positive energy (Theorem \ref{thm: reps are diff covariant}).
Let $\cA_i = \bigvee_I \pi^i_I(\cA_{c_i}(I))$.
If $r_{2\pi}$ is a lift of $2\pi$ rotation to $\Diff^{(\infty)}_c(S^1)$, then $U^i(r_{2\pi})$ commutes with $U^i(\Diff^{(\infty)}_c(S^1))$, and thus with $\cA_i$.
On the other hand $U^1(r_{2\pi})$ lies in $\cA_1$, and therefore commutes with $\cA_2$, and similarly $U^2(r_{2\pi})$.
Thus $U^i(r_{2\pi})$ commutes with $\cA_1 \vee \cA_2$, which is $\cB(\cH_\pi)$ by the irreducibility of $\pi$.
Hence each $U^i(r_{2\pi})$ is a scalar, so by \cite[Prop. 2.2]{Carpi04} $\pi^i$ is a direct sum of irreducible sectors.
It follows that each $\cA_i$ is a type I factor.
Now applying the argument of \cite[Lem. 27]{KaLoMu01}, we have that $\pi$ is unitarily equivalent to a tensor product of irreducible representations, which by \cite[Prop. 2.1]{Carpi04} are of the desired form.
\end{proof}

\begin{Lemma}\label{lem: Vir modules tensor split}
Let $V_1 = \Vir_{c_1}$ and $V_2 = \Vir_{c_2}$ for unitary values of $c_i$ and assume that $V_i$ have bounded localized vertex operators.
Let $M_i$ be  unitary $V_i$-modules, let $M = M_1 \otimes M_2$, and suppose that $\pi^{M}$ exists.
Then for all intervals $I$
$$
\pi_I^{M}(\cA_{V_1}(I) \otimes 1) \subset \cB(\cH_{M_1}) \otimes 1 
\quad \mbox{ and } \quad
\pi_I^{M}(1 \otimes \cA_{V_2}(I)) \subset 1 \otimes \cB(\cH_{M_2}).
$$
\end{Lemma}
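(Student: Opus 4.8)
The plan is to reduce the statement to the diffeomorphism generators of $\cA_{V_1}(I)\otimes 1$ and to observe that on those generators $\pi^M$ is controlled by the Virasoro-net action, not by genuine point insertions, so that no analysis of the operators $BY^M(a,z)A$ is needed. First I would record that $V=V_1\otimes V_2$ has bounded localized vertex operators and that $\cA_V=\cA_{V_1}\otimes\cA_{V_2}=\cA_{c_1}\otimes\cA_{c_2}$, combining Proposition \ref{prop: BLVO tensor product} with Lemma \ref{lem: Virasoro is right}. By definition of the Virasoro net, $\cA_{c_1}(I)=\{U_{c_1,0}(\gamma):\gamma\in\Diff_{c_1}(I)\}''$, so $\cA_{c_1}(I)\otimes 1$ is generated by the self-adjoint family $\{U_{c_1,0}(\gamma)\otimes 1:\gamma\in\Diff_{c_1}(I)\}$. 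Since $\pi^M_I$ is a normal $*$-homomorphism and $\cB(\cH_{M_1})\otimes 1$ is $\sigma$-weakly closed, it suffices to show $\pi^M_I(U_{c_1,0}(\gamma)\otimes 1)\in\cB(\cH_{M_1})\otimes 1$ for every such $\gamma$.

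The key step is to realize each generator $U_{c_1,0}(\gamma)\otimes 1$ as (the action on $\cH_V$ of) a charge-$c$ generalized annulus of diffeomorphism type. For $\gamma\in\Diff_{c_1}(I)$, axiom (3) of Definition \ref{def: system of annuli central charge c} gives $(1,U_{c_1,0}(\gamma))\in\scA_{I,c_1}$, and the same axiom applied to $\id\in\Diff_{c_2}(I)$ gives $(1,1)\in\scA_{I,c_2}$. The tensor-product compatibility of Definition \ref{def: system of generalized annuli} then produces a charge-$c$ pair $(1,A)\in\scA_{I,c}$ with $A=U_{c_1,0}(\gamma)\otimes 1$, whose action on the vacuum representation $\cH_V=\cH_{c_1,0}\otimes\cH_{c_2,0}$ of $\cA_c$ is exactly the operator $U_{c_1,0}(\gamma)\otimes 1\in\cA_V(I)$ (Definition \ref{def: tensor product of generalized annuli}). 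Because the pair has the form $(1,A)$, Remark \ref{rem: piM on empty interior annuli} applies even though $\Int(1,A)$ may be empty, and it identifies $\pi^M_I(U_{c_1,0}(\gamma)\otimes 1)$ with the action $A^M$ of the \emph{same} charge-$c$ annulus on $\cH_M=\cH_{M_1}\otimes\cH_{M_2}$. Finally, the tensor splitting of annulus actions recorded after Definition \ref{def: tensor product of generalized annuli} (see also Convention \ref{conv: annuli}) gives $A^M=U^{M_1}(\gamma)\otimes 1$, using that $\gamma\in\Diff_{c_1}(I)$ so the charge-$c_1$ annulus $U_{c_1,0}(\gamma)$ acts by $U^{M_1}(\gamma)$ and the charge-$c_2$ annulus $1$ acts by $1$. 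This lies in $\cB(\cH_{M_1})\otimes 1$, and running the symmetric argument on the second factor finishes the proof.

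The main obstacle is conceptual rather than computational: one must keep straight the three representations of the charge-$c$ Virasoro net on which the single abstract annulus $A=U_{c_1,0}(\gamma)\otimes 1$ acts --- the vacuum $\cH_{c,0}$ where it is defined, the module $\cH_V$ where its action is the genuine net element $U_{c_1,0}(\gamma)\otimes 1\in\cA_V(I)$, and the module $\cH_M$ where its action is $U^{M_1}(\gamma)\otimes 1$ --- and invoke Remark \ref{rem: piM on empty interior annuli} to transport between the latter two through $\pi^M$. Restricting to diffeomorphism-type annuli (those with empty interior) is precisely what makes the argument clean, since it removes any need to control the insertion operators $BY^M(a,z)A$ that define $\pi^M$ in general and replaces them with the purely geometric tensor-splitting of the annulus action.
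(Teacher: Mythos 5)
Your reduction to the generators $U_{c_1,0}(\gamma)\otimes 1$, $\gamma\in\Diff_{c_1}(I)$, via normality of $\pi^M_I$ is fine, but the key step fails: the tensor-product axiom of Definition \ref{def: system of generalized annuli} does \emph{not} produce a charge-$(c_1+c_2)$ generalized annulus whose action on $\cH_{c_1,0}\otimes\cH_{c_2,0}$ is $U_{c_1,0}(\gamma)\otimes 1$. That axiom only says that every $A\in\scA^{in}_{I,c_1+c_2}$ splits as $A'\otimes A''$, and that every element of $\scA^{in}_{I,c_i}$ occurs in \emph{some} such splitting; it does not let you pair an arbitrary $A'\in\scA^{in}_{I,c_1}$ with an arbitrary $A''\in\scA^{in}_{I,c_2}$ (in particular with $A''=1$) and land back in $\scA^{in}_{I,c_1+c_2}$. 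In the concrete system of Definition \ref{def: choice of generalized annuli} the decomposition is strictly diagonal: $\pi_{c_1+c_2}(X)$ splits as $\pi_{c_1}(X)\otimes\pi_{c_2}(X)$ for the \emph{same} geometric datum $X$, so $U_{c_1,0}(\gamma)\otimes 1$ (with $\gamma\neq\id$) is simply not of the form $\pi_I(A)$ for any $A$ in the system. Consequently $(1,U_{c_1,0}(\gamma)\otimes 1)$ is not a pair in $\scA_{I,c_1+c_2}$, Remark \ref{rem: piM on empty interior annuli} does not apply, and $\pi^M_I(U_{c_1,0}(\gamma)\otimes 1)$ is not determined by the defining relations of $\pi^M$. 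Indeed, the element $U_{c_1,0}(\gamma)\otimes 1$ lies in the von Neumann algebra generated by the set \eqref{eqn: local algebra def} but is not itself one of those generators (the diagonal Virasoro subnet $\pi^{c_1\otimes c_2}_I(\cA_{c_1+c_2}(I))$ is a proper subalgebra of $\cA_{c_1}(I)\otimes\cA_{c_2}(I)$), so there is no free evaluation of $\pi^M$ on it.

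This is exactly the difficulty the paper flags before the lemma: the definition of $\pi^{M_1\otimes M_2}$ gives no direct information about its restriction to $\cA_{V_1}\otimes 1$. The paper's proof therefore takes a genuinely harder route: reduce to $M_i$ irreducible, use Lemma \ref{lem: Virasoro net splitting} to identify $\pi^M$ with some $\pi_{(c_1,h_1')}\otimes\pi_{(c_2,h_2')}$ up to a unitary $u$, and then show by a bootstrapping argument (propagating from the lowest weight vectors using Lemma \ref{lem: split simple tensors}, the boundedness of $Y^{M_i}(s^{L_0}-,z)A$, and Lemma \ref{lem: components in closure}) that $u$ itself factors as $u_1\otimes u_2$. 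Some argument of this analytic type, which genuinely uses the insertion operators rather than only diffeomorphism-type annuli, appears unavoidable here.
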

\begin{proof}
By Proposition \ref{prop: reps from sums of modules}, it suffices to prove the lemma in the case where the $M_i$ are irreducible, so we assume $M_i = L(c_i,h_i)$.
For brevity, we will write $\cH_i$ instead of $\cH_{M_i}$.
By Lemma \ref{lem: Virasoro is right}, $\cA_{V_i} = \cA_{c_i}$.
It follows by Proposition \ref{prop: reps from sums of modules} that $\pi^M$ is irreducible, so by Lemma \ref{lem: Virasoro net splitting} we have $\pi^{M} \cong \pi_{(c_1,h_1^\prime)} \otimes \pi_{(c_2,h_2^\prime)}$.
Let $M_i^\prime = L(c_i,h_i^\prime)$ and $\cH_i^\prime = \cH_{M_i^\prime}$, and let $u: \cH_{1} \otimes \cH_2 \to \cH_{1}^\prime \otimes \cH_{2}^\prime$ be an isomorphism of $\cA_{c_1} \otimes \cA_{c_2}$ representations.
The desired result will follow if we show that $u$ may be factored $u = u_1 \otimes u_2$, for $u_i: \cH_{i} \to \cH_{i}^\prime$.

By Lemma \ref{prop: unitary equivalence of piM}, we have that $u$ is an equivalence of representations $U^M$ and $U^{h_1^\prime} \otimes U^{h_2^\prime}$ of $\Diff_{c_1+c_2}^{(\infty)}(S^1)$.
The lowest weight spaces of $M_1 \otimes M_2$ and $M_1^\prime \otimes M_2^\prime$ are one-dimensional.
Thus if $w_i \in M_i$ are lowest weight unit vectors, we may choose unit vectors $w_i^\prime \in M_i^\prime$ such that $u(w_1 \otimes w_2) = w_1^\prime \otimes w_2^\prime$.

Our strategy is now as follows.
Suppose that we have found non-zero $L_0$-graded subspaces $K_i \subset M_i$ such that $u|_{K_1 \otimes K_2}$ splits into a tensor product $u_1 \otimes u_2$ (e.g. $K_i = \Span w_i$).
We will show that the same is true when $K_i$ is replaced by 
$$
K_i^+:=\Span \{ b_{(n)}^{M_i} a \,:\, n \in \bbZ, \, b \in V, \, a \in K_i\}.
$$
By the irreducibility of $M_i$, we will then be done.

So suppose that we have $K_i$ as above, and let $u_i:K_i \to \cH_i^\prime$ be isometries such that $u|_{K_1 \otimes K_2} = u_1 \otimes u_2$.
Let $a_i \in K_i$ be non-zero homogeneous vectors, and let $a_i^\prime = u_i a_i$.
Since $u$ intertwines the diagonal action of $\Diff_{c_1+c_2}^{(\infty)}(S^1)$, it follows that the $a_i^\prime$ are also homogeneous vectors.

Fix an interval $I$ and $(1,A) \in \scA_I$ with $\Int(A) \ne \emptyset$.
By the definition of system of generalized annuli, $\pi^M(A) = A^{M_1} \otimes A^{M_2}$ and similarly for the action on $\cH_1^\prime \otimes \cH_2^\prime$.
Since $\pi^M$ exists, we may find a vector $\xi \in \cH_1 \otimes \cH_2$ such that $(A^{M_1} \otimes A^{M_2})\xi = a_1 \otimes a_2$.
By Lemma \ref{lem: split simple tensors}, if we choose $\xi \in \ker(A^{M_1} \otimes A^{M_2})^\perp$ then $\xi = \xi_1 \otimes \xi_2$ for vectors $\xi_i$ such that $A\xi_i = a_i$.
Since 
$$
u(A^{M_1} \otimes A^{M_2}) = (A^{M_1^\prime} \otimes A^{M_2^\prime})u
$$
we must have 
$$
u\ker(A^{M_1} \otimes A^{M_2})^\perp = \ker (A^{M_1^\prime} \otimes A^{M_2^\prime})^\perp.
$$
In particular, $u(\xi_1 \otimes \xi_2)  \in \ker (A^{M_1^\prime} \otimes A^{M_2^\prime})^\perp$.
But we also have
$$
(A^{M_1^\prime} \otimes A^{M_2^\prime})u(\xi_1 \otimes \xi_2)
=
u(A^{M_1} \otimes A^{M_2})(\xi_1 \otimes \xi_2)
=
u(a_1 \otimes a_2)
=
a_1^\prime \otimes a_2^\prime.
$$
By Lemma \ref{lem: split simple tensors} there are $\xi_i^\prime \in \cH_i^\prime$ such that $u(\xi_1 \otimes \xi_2) = \xi_1^\prime \otimes \xi_2^\prime$, and $A\xi_i^\prime = a_i^\prime$.
Moreover $\xi_i^\prime \in \ker(A^{M_i^\prime})^\perp$.
Let $\tilde A_i$ be the inverse of the bijective map
$$
A^{M_i^\prime}: \ker(A^{M_i^\prime})^\perp \to \Ran(A^{M_i^\prime}).
$$
Then $\xi_i^\prime = \tilde A_i a_i^\prime$.
For any $z \in \Int(A)$ and $b_i \in V$ we have
\begin{align*}
u(Y^{M_1}(b_1,z)a_1 &\otimes Y^{M_2}(b_2,z)a_2)
=
u\,\pi^M_I(Y^{V_1}(b_1,z)A \otimes Y^{V_2}(b_2,z)A)(\xi_1 \otimes \xi_2)\\
& =
\pi_{(c_1,h_1^\prime),I}(Y^{V_1}(b_1,z)A)\xi_1^\prime \otimes \pi_{(c_2,h_2^\prime),I}(Y^{V_2}(b_2,z)A)\xi_2^\prime\\
&= 
\pi_{(c_1,h_1^\prime),I}(Y^{V_1}(b_1,z)A)\tilde A_1 u_1 a_1 \otimes \pi_{(c_2,h_2^\prime),I}(Y^{V_2}(b_2,z)A)\tilde A_2 u_2 a_2.
\end{align*}
Let 
$$
\cK_i^+ = \overline{ \Span \{ Y^{M_i}(b,z)a \, : \, b \in V, \, a \in K_i\}}.
$$
Since $u$ is unitary, the maps $u_i^+:\cK_i^+ \to \cH_i^\prime$ given by 
$$
u_i^+Y^{M_i}(b,z)a = \pi_{(c_i,h_i^\prime),I}(Y^{V_i}(b,z)A)\tilde A_i u_i a
$$
are scalar multiples of isometries.
Setting $b = \Omega$, we see that $u_i^+|_{K_i} = u_i$, and thus $u_i^+$ is an isometry (since $u_i$ was).

As before, set 
$$
K_i^+ := \Span \{b_{(n)}a \, : \,  a \in K_i, \, b \in V, \, n \in \bbZ\}. 
$$
We would like to apply Lemma \ref{lem: components in closure} to conclude that $K_i^+ \subset \cK_i^+$, and so we quickly check the hypotheses.
We first observe that $A \in \Ann^{in}(\cH_i)$ by Lemma \ref{lem: split simple tensors}.
Second, since $z \in \Int(A)$ we have that $Y^{V_i}(s^{L_0}-,z)A$ is bounded for some $s > 0$.
Then since $\pi^M_I$ is implemented by a unitary,
$$
Y^{M_1}(s^{L_0}-,z)A \otimes A = \pi^M_I(Y^{V_1}(s^{L_0}-,z)A \otimes A)
$$
is bounded as a map $\cH_{V_1} \otimes \cH_M \to \cH_M$.
Since $A \ne 0$ it follows that $Y^{M_1}(s^{L_0}-,z)A$ is bounded, and similarly for $Y^{M_2}(s^{L_0}-,z)A$.
Thus we may indeed apply Lemma \ref{lem: components in closure} to conclude that $K_i^+ \subset \cK_i^+$.
By restricting, we have that $u|_{K_1^+ \otimes K_2^+} = u_1^+ \otimes u_2^+$.

We now iterate the above procedure, starting with $K_i = \Span w_i$.
Since the $M_i$ are irreducible, it follows that $u|_{M_1 \otimes M_2} = u_1 \otimes u_2$ for isometries $u_i$.
Hence $u = u_1 \otimes u_2$ for unitaries $u_i$.
It follows that $\pi_M(\cA_{V_1}(I) \otimes 1) \subset \cB(\cH_{1}) \otimes 1$ and similarly that $\pi_M(1 \otimes \cA_{V_2}(I)) \subset 1 \otimes \cB(\cH_{M_2})$, and the proof is complete.
\end{proof}

We are nearly ready to prove Proposition \ref{propTensorProductModuleExists} giving the tensor product splitting of $\pi^{M_1 \otimes M_2}$.
We need one final observation.

\begin{Lemma}\label{lem: piM up to scalar}
Let $V$ be a simple unitary vertex operator superalgebra and let $M$ be a unitary $V$-module.
Suppose that $\pi$ is a representation of $\cA_V$ such that for all intervals $I$, all $(B,A) \in \scA_I$, all $z \in \Int (B,A)$, and all $a \in V$, we have $A \in \Ann^{in}(\cH_M)$, $B \in \Ann^{out}(\cH_M)$, and there exists a scalar $\lambda \in \bbC^\times$ such that
\begin{equation}\label{eqn: piM up to scalar}
\pi_I(BY^V(a,z)A) = \lambda \,\, B Y^M(a,z)A.
\end{equation}
Then $\pi = \pi^M$.
\end{Lemma}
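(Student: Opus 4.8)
The plan is to show that the scalar $\lambda$ in the hypothesis is necessarily $1$; once that is established, $\pi$ satisfies the defining relation \eqref{eqn: piM formulas} of $\pi^M$ on every generator with nonempty interior, and since (as noted after Remark \ref{rem: piM on empty interior annuli}) a representation corresponding to $M$ is determined on the generating set \eqref{eqn: local algebra def}, this forces $\pi=\pi^M$ (in particular $\pi^M$ exists, witnessed by $\pi$).

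First I would pin down the dependence of $\lambda$. For fixed $(I,B,A,z)$ the maps $a\mapsto \pi_I(BY^V(a,z)A)$ and $a\mapsto BY^M(a,z)A$ are both linear, and the hypothesis asserts that the first is a scalar multiple of the second for each $a$; the elementary fact that a linear map pointwise proportional to a fixed nonzero linear map is a constant multiple of it shows $\lambda$ does not depend on $a$. Since both sides are holomorphic in $z$ on the components of $\Int(B,A)$ by Proposition \ref{prop: interior open}, $\lambda$ is locally constant in $z$ as well. Taking adjoints via the identity $(BY^V(a,z)A)^*=A^*Y^V(\tilde a,\overline{z}^{-1})B^*$ and the fact that $\pi_I$ preserves adjoints shows in addition that the scalar attached to the pair $(A^*,B^*)$ is $\overline{\lambda}$.

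Next I would argue that $\lambda$ is a single global constant and then normalize it to $1$. Here I would use that $\pi$ is diffeomorphism covariant with respect to its associated positive-energy representation $U_\pi$ of $\Diff_c^{(\infty)}(S^1)$ (Theorem \ref{thm: reps are diff covariant}): conjugating a generator by a rotation $U^V(r_\theta)$ produces, by conformal covariance of the vertex operator together with Lemma \ref{lem: diff covariance of generalized annuli}, another generator localized in $r_\theta(I)$, and comparing scalars shows $\lambda$ is rotation invariant; combined with the covariance axioms of Definition \ref{def: system of annuli central charge c} and the multiplicativity of $\pi_I$ on composable annuli, this links the scalars of all generators and shows $\lambda$ is constant (here one may first reduce to $M$ simple, since the relation \eqref{eqn: piM up to scalar} makes $\pi$ block-diagonal for the decomposition of $M$ into simple submodules via Proposition \ref{prop: reps from sums of modules}). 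To force $\lambda=1$, I would choose a family of incoming annuli $A_t$ with $(1,A_t)\in\scA_I$ and $\Int(A_t)\ne\emptyset$ that shrink to the identity as $t\searrow 0$, available from the exponential construction of Section \ref{sec: example of system of generalized annuli}. Each such $A_t$ genuinely lies in $\cA_c(I)$, because $(1,A_t)\in\Ann_I$ forces the common localizing diffeomorphism into $\Diff_c(I)$ by Lemma \ref{lem: local diff is local}. Evaluating the hypothesis at $a=\Omega$ gives $\pi_I(A_t)=\lambda\,A_t^M$ with $A_t^M=\pi^M_I(A_t)$ the module Virasoro action; since $A_t\to 1$ $\sigma$-weakly in $\cA_c(I)$ and both $\pi_I$ and $\pi^M_I$ are normal, passing to the weak limit in $\pi_I(A_t)=\lambda A_t^M$ gives $\lambda=1$.

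It then remains only to confirm agreement on the generators $BA$ of empty interior, which lie in the Virasoro subnet $\cA_c$. With $\lambda=1$ the identity $\pi_I(A_t)=A_t^M=\pi^M_I(A_t)$ holds for all the annuli above; since these operators are explicitly the exponentials $e^{-tL(\rho)}$ of Corollary \ref{cor: action of annuli on non-vacuum} (up to localization), agreement on them forces $U_\pi=U^M$, whence $\pi$ and $\pi^M$ agree on all of $\cA_c$ and in particular on every $BA$. This exhausts the generating set \eqref{eqn: local algebra def}, yielding $\pi=\pi^M$. I expect the main obstacle to be the global-constancy step: controlling the projective $e^{2\pi i L_0}$ ambiguities in the generalized-annulus actions and checking that rotation covariance together with the semigroup structure genuinely connects every generator to the trivial one.
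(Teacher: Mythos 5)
The first half of your argument --- that $\lambda$ depends only on $(I,B,A,z)$ and not on $a$ --- is the same linearity argument the paper uses, but be aware that it needs the \emph{injectivity} of $a \mapsto BY^M(a,z)A$, not merely that this map is ``a fixed nonzero linear map'': the paper's proof devotes its opening paragraph to showing that $BY^M(a,z)A \ne 0$ for every $a \ne 0$, using the density/injectivity properties of $A$ and $B^*$ together with the simplicity of $V$ (and likewise for $BY^V(a,z)A$). You gloss over this, and it is exactly what makes the ``pointwise proportional implies constant multiple'' step legitimate.

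The real divergence, and the genuine gap, is in how you force $\lambda=1$. The paper does this \emph{separately for each fixed} $(I,B,A,z)$: evaluating \eqref{eqn: piM up to scalar} at $a=\Omega$ gives $\pi_I(BA)=\lambda\, BA$, and since $BA\in\cA_c(I)$, the operator written $BA$ on the right-hand side is, by Definition \ref{def: generalized annulus action} and Remark \ref{rem: piM on empty interior annuli}, precisely the action of the local-algebra element $BA$ in the representation, i.e.\ $\pi_I(BA)$ itself; hence $\lambda=1$ with no further input. No global constancy of $\lambda$ over different generators is ever needed. Your route, by contrast, \emph{requires} $\lambda$ to be a single global constant, so that the normalization extracted from the shrinking family $(1,A_t)$ transfers to an arbitrary pair $(B,A)$ localized in a possibly different interval --- and this is exactly the step you flag as ``the main obstacle'' and do not carry out. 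It is not clear the axioms of a system of generalized annuli (Definitions \ref{def: system of incoming annuli}--\ref{def: system of generalized annuli}) supply what you would need: there is no axiom saying that products of pairs in $\scA_I$ are again pairs in $\scA_I$, nor that the scalars attached to composable generators multiply under $\pi_I$, and rotation/diffeomorphism covariance only relates $(B,A)$ to its reparametrizations, not to the family $(1,A_t)$. As written, your argument therefore establishes $\lambda=1$ only for the special pairs $(1,A_t)$. The fix is to abandon the global-constancy-plus-limit strategy and use the vacuum evaluation pointwise in $(I,B,A,z)$ as above; your closing step about empty-interior generators then also collapses to Remark \ref{rem: piM on empty interior annuli}, which already notes that agreement on those generators is automatic because they lie in the Virasoro subnet.
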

\begin{proof}
Without loss of generality, assume that $M$ is non-zero.
If $a \in V$ and $B Y^M(a,z)A = 0$, then since $B^*,A \in \Ann^{in}(\cH_M)$ we have $\ip{Y^M(a,z)w_1,w_2} = 0$ for all $w_i \in M$, and thus $a^M_{(k)} = 0$ for all $k \in \bbZ$.
Since $V$ is simple and $M$ is non-zero, we must have $a = 0$.
Thus $B Y^M(a,z)A$ is non-zero when $a \ne 0$.
By the same argument, $B Y^V(a,z) A$ is non-zero when $a \ne 0$.

Fix $I,B,A,z$ as above, and for non-zero $a \in A$ let $\lambda_a$ be the scalar such that \eqref{eqn: piM up to scalar} holds.
We will show that $\lambda_a$ is independent of $a$.
Clearly if $a$ and $b$ are non-zero and linearly dependent, $\lambda_a = \lambda_b$.
So let $a,b \in V$ be linearly independent.
Using the fact that $\pi_I(BY^V(v,z)A)$ is linear in $v$, we obtain
$$
\lambda_{a+b} BY^M(a+b,z)A = \lambda_{a} BY^M(a,z)A + \lambda_{b}BY^M(b,z)A
$$
and therefore
$$
(\lambda_{a+b} - \lambda_a)BY^M(a,z)A + (\lambda_{a+b} - \lambda_b)BY^M(b,z)A = 0.
$$
Hence
$$
B Y^M((\lambda_{a+b} - \lambda_a)a + (\lambda_{a+b} - \lambda_b)b,z)A = 0.
$$
Since $a$ and $b$ are linearly independent, we obtain $\lambda_a = \lambda_{a+b} = \lambda_b$.

Finally, to find the constant $\lambda = \lambda_a$, we evaluate with $a = \Omega$, and obtain $\pi_I(BA) = \lambda BA$.
However, the definition of the action $BA$ on $\cH_M$ is $\pi_I(BA)$, and so $\lambda = 1$, as desired.
\end{proof}

We are now ready to prove our tensor product splitting result.

\begin{Proposition}\label{propTensorProductModuleExists}
Let $V_1$ and $V_2$ be simple unitary vertex operator superalgebras with bounded localized vertex operators, and let $M_i$ be unitary $V_i$-modules.
Let $V = V_1 \otimes V_2$ and $M = M_1 \otimes M_2$.
Then $\pi^{M}$ exists if and only if $\pi^{M_1}$ and $\pi^{M_2}$ do, in which case $\pi^{M} = \pi^{M_1} \otimes \pi^{M_2}$.
\end{Proposition}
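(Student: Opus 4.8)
The plan is to reduce to the case where $M_1,M_2$ are simple and then bootstrap. For the general statement, decompose $M_i=\bigoplus_j M_i^{(j)}$ into simple unitary modules (Proposition \ref{prop: complete reducibility}), so that $M=\bigoplus_{j,k}M_1^{(j)}\otimes M_2^{(k)}$ as a $V$-module. By Proposition \ref{prop: reps from sums of modules}, $\pi^M$ exists iff every $\pi^{M_1^{(j)}\otimes M_2^{(k)}}$ exists, and — using the simple case — this holds iff every $\pi^{M_1^{(j)}}$ and $\pi^{M_2^{(k)}}$ exists, which by Proposition \ref{prop: reps from sums of modules} is equivalent to the existence of $\pi^{M_1}$ and $\pi^{M_2}$; the identity $\pi^M=\pi^{M_1}\otimes\pi^{M_2}$ then assembles by distributing $\otimes$ over $\bigoplus$. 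Thus it suffices to treat simple $M_1,M_2$, in which case $M=M_1\otimes M_2$ is simple and $\pi^M$ (when it exists) is irreducible by Proposition \ref{prop: reps from sums of modules}.

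For the easy direction, suppose $\pi^{M_1},\pi^{M_2}$ exist and set $\pi:=\pi^{M_1}\otimes\pi^{M_2}$, a representation of $\cA_V=\cA_{V_1}\grotimes\cA_{V_2}$ (Proposition \ref{prop: BLVO tensor product}). For $(B,A)\in\scA_I$, write $A=A'\otimes A''$ and $B=B'\otimes B''$ with $(B',A')\in\scA_{I,c_1}$, $(B'',A'')\in\scA_{I,c_2}$ as in Definition \ref{def: system of generalized annuli}; since these annuli act as $A'\otimes A''$ and $B'\otimes B''$ on $\cH_M=\cH_{M_1}\otimes\cH_{M_2}$, Lemma \ref{lem: split simple tensors} gives $A\in\Ann^{in}(\cH_M)$ and $B\in\Ann^{out}(\cH_M)$. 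As the annuli are even and $Y^M(a_1\otimes a_2,z)=Y^{M_1}(a_1,z)\Gamma^{p(a_2)}\otimes Y^{M_2}(a_2,z)$, a direct computation gives
\[
BY^M(a_1\otimes a_2,z)A=\big(B'Y^{M_1}(a_1,z)A'\big)\grotimes\big(B''Y^{M_2}(a_2,z)A''\big)=\pi_I\big(BY^V(a_1\otimes a_2,z)A\big).
\]
Both sides are linear in $a$, so \eqref{eqn: piM formulas} holds for all $a\in V$ and $\pi=\pi^M$.

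The hard direction, with $M_i$ simple and $\pi^M$ given, rests on establishing the tensor splitting \eqref{eqn: splitting}, $\pi^M_I(\cA_{V_1}(I)\otimes 1)\subset\cB(\cH_{M_1})\otimes 1$ and symmetrically. First I would restrict to the Virasoro level: $\Vir_{c_1}\otimes\Vir_{c_2}$ is a unitary conformal subalgebra of $V$ with $\cA_{\Vir_{c_1}\otimes\Vir_{c_2}}=\cA_{c_1}\otimes\cA_{c_2}$ (Proposition \ref{prop: BLVO tensor product} and Lemma \ref{lem: Virasoro is right}), so by Lemma \ref{lem: module existence conformal subcase} the restriction of $\pi^M$ to this subnet is the representation attached to $M_1|_{\Vir_{c_1}}\otimes M_2|_{\Vir_{c_2}}$, and Lemma \ref{lem: Vir modules tensor split} yields $\pi^M_I(\cA_{c_1}(I)\otimes 1)\subset\cB(\cH_{M_1})\otimes 1$ together with its companion for $\cA_{c_2}$. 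The main obstacle is upgrading this from the Virasoro subnet to all of $\cA_{V_i}$: the Virasoro action only determines $\pi^M_I(\cA_{V_1}(I)\otimes 1)$ up to the multiplicity algebra of the (generally reducible) decomposition of $M_2$ into Virasoro modules. To remove this ambiguity I would use the same net-theoretic input as in Lemma \ref{lem: Vir modules tensor split}: since $\pi^M$ is irreducible and the nets $\cA_{V_i}$ have the split property \cite{MorinelliTanimotoWeiner18}, the argument of Lemma \ref{lem: Virasoro net splitting} (via \cite{KaLoMu01}) produces a factorization $\pi^M\cong\lambda_1\otimes\lambda_2$, and then the iterative insertion-operator argument of Lemma \ref{lem: Vir modules tensor split} — propagating the splitting from Virasoro-lowest-weight vectors through the modules via the operators $Y^{M_i}(b,z)A$, the factorization of generalized annuli on simple tensors (Lemma \ref{lem: split simple tensors}), and Lemma \ref{lem: components in closure}, invoking simplicity of the $M_i$ to exhaust all of $M_i$ — shows that this factorization is aligned with the given decomposition $\cH_M=\cH_{M_1}\otimes\cH_{M_2}$, giving \eqref{eqn: splitting}.

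Granting \eqref{eqn: splitting}, I would finish by defining $\pi^{M_1}_I$ through $\pi^M_I(x\otimes 1)=\pi^{M_1}_I(x)\otimes 1$, a normal even $*$-homomorphism of $\cA_{V_1}(I)$. For a generator $x=B'Y^{V_1}(a_1,z)A'$, factor $BY^V(a_1\otimes\Omega_2,z)A=(x\otimes 1)\cdot(1\otimes B''A'')$ and apply $\pi^M_I$; combining the defining formula $\pi^M_I(BY^V(a_1\otimes\Omega_2,z)A)=(B'Y^{M_1}(a_1,z)A')\otimes(B''A'')^{M_2}$ with the Virasoro splitting $\pi^M_I(1\otimes B''A'')=1\otimes(B''A'')^{M_2}$ and cancelling the nonzero factor $(B''A'')^{M_2}$ gives $\pi^{M_1}_I(B'Y^{V_1}(a_1,z)A')=B'Y^{M_1}(a_1,z)A'$. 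By Lemma \ref{lem: piM up to scalar} this identifies $\pi^{M_1}$ as the representation attached to $M_1$, so $\pi^{M_1}$ exists; symmetrically for $\pi^{M_2}$, and $\pi^M=\pi^{M_1}\otimes\pi^{M_2}$ follows by evaluating both representations on the generators $BY^V(a_1\otimes a_2,z)A$ exactly as in the easy direction. The one genuinely hard step is \eqref{eqn: splitting}, and within it the passage from the Virasoro subnet to the full nets $\cA_{V_i}$.
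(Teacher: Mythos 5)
Your skeleton is right where it matches the paper: the easy direction, the reduction to the Virasoro conformal subalgebra $\Vir_{c_1}\otimes\Vir_{c_2}\subset V$, and the identification of the splitting \eqref{eqn: splitting} as the crux. The gap is in your proposed passage from the Virasoro subnet to the full nets $\cA_{V_i}$. You want to (i) invoke ``the argument of Lemma \ref{lem: Virasoro net splitting}'' to factor the irreducible representation $\pi^M$ of $\cA_{V_1}\otimes\cA_{V_2}$, and then (ii) run ``the iterative insertion-operator argument of Lemma \ref{lem: Vir modules tensor split}'' to align that factorization with $\cH_{M_1}\otimes\cH_{M_2}$. Neither step transfers as stated. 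For (i), the proof of Lemma \ref{lem: Virasoro net splitting} gets type I-ness of $\bigvee_I\pi(\cA_{c_i}(I))$ from \cite[Prop.~2.2]{Carpi04} (a Virasoro-net representation with $U(r_{2\pi})$ scalar is a direct \emph{sum} of irreducible sectors), a fact specific to the Virasoro nets; for a general $\cA_{V_i}$ you would have to supply a separate argument before \cite[Lem.~27]{KaLoMu01} applies, and the split property is not what that lemma's proof uses. For (ii), the induction in Lemma \ref{lem: Vir modules tensor split} is seeded by the one-dimensionality of the lowest $L_0$-eigenspace of an irreducible Virasoro module, which is what permits writing $u(w_1\otimes w_2)=w_1'\otimes w_2'$; for general simple $V_i$-modules the lowest weight spaces can have arbitrary dimension, the base case fails, and ``invoking simplicity of the $M_i$'' does not repair it.

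The paper sidesteps both issues with a commutant trick. Take $\tilde M_2$ to be a single \emph{irreducible} $\Vir_{c_2}$-submodule of $M_2$ and set $\tilde M=M_1\otimes\tilde M_2$. Since $V_1\otimes\Vir_{c_2}$ is a \emph{conformal} subalgebra of $V$, the representation $\pi^{\tilde M}$ of $\cA_{V_1}\otimes\cA_{c_2}$ already exists by Lemma \ref{lem: module existence conformal subcase}, with no tensor splitting required. The Virasoro case then gives $\bigvee_I\pi^{\tilde M}(1\otimes\cA_{c_2}(I))=1\otimes\cB(\cH_{\tilde M_2})$ --- all of $\cB(\cH_{\tilde M_2})$, precisely because $\tilde M_2$ is Virasoro-irreducible --- and since $\pi^{\tilde M}(\cA_{V_1}(I)\otimes 1)$ commutes with this algebra it is forced into $\cB(\cH_{M_1})\otimes 1$. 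From there your concluding cancellation of the factor $(B''A'')^{M_2}$ (equivalently Lemma \ref{lem: piM up to scalar}) goes through. So the statement is true and most of your proof survives, but the one step you yourself flag as genuinely hard is exactly the one that must be replaced.
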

\begin{proof}
Let $(B,A) \in \scA_{I}$.
Then $A^M$ decomposes as $A^M = A^{M_1} \otimes A^{M_2}$ with $A^{M_i}$ canonical up to scalar, and similarly $B^M = B^{M_1} \otimes B^{M_2}$ (see Convention \ref{conv: annuli}).
Observe that by Lemma \ref{lem: split simple tensors}, $A^M \in \Ann^{in}(\cH_M)$ if and only if both $A^{M_i} \in \Ann^{in}(\cH_{M_i})$, and similarly for the $B$'s.

First assume that both $\pi^{M_i}$ exist, and we will show that $\pi^M = \pi^{M_1} \otimes \pi^{M_2}$.
Let $z \in \Int(B,A)$.
Then for $a_i \in V_i$ we have
\begin{align*}
(\pi^{M_1} \otimes \pi^{M_2})_I(BY^M(a_1 \otimes a_2,z)A) &=
\big( \pi^{M_1}_I(BY^{V_1}(a_1,z)A)\big) \hotimes \, \big( \pi^{V_2}_I(BY^{M_1}(a_2,z)A)\big)\\
&= \big( BY^{M_1}(a_1,z)A\big) \hotimes \, \big( BY^{M_1}(a_2,z)A\big)\\
&= BY^{M}(a_1 \otimes a_2,z)A.
\end{align*}
Thus $\pi^{M_1} \hotimes \pi^{M_2} = \pi^{M}$, as desired.

We now consider the converse, and assume that $\pi^M$ exists.
We first consider the special case when the $V_i$ are Virasoro VOAs.
Let $(B,A) \in \scA_I$ and let $z \in \Int(B,A)$.
Let $a_i \in V_i$.
Then
\begin{equation}\label{eqn: Vir substep}
\pi^M(BY^{V_1}(a_1,z)A \otimes BY^{V_2}(a_2,z)A) = BY^{M_1}(a_1,z)A \otimes BY^{M_2}(a_2,z)A.
\end{equation}
By Lemma \ref{lem: Vir modules tensor split} we have $\pi^M(\cA_{V_1}(I) \otimes 1) \subset \cB(\cH_{M_1}) \otimes 1$, and similarly for $V_2$.
Thus 
$$
\pi^M(BY^{V_1}(a_1,z)A \otimes 1) = \lambda BY^{M_1}(a_1,z)A \otimes 1
$$
and thus by restriction $\pi^M$ we obtain a representation of $\cA_{V_1}$ such that 
$$
\pi(BY^{V_1}(a_1,z)A) = \lambda BY^{M_1}(a_1,z)A.
$$
By Lemma \ref{lem: piM up to scalar}, we see that $\pi = \pi^{M_1}$, and so $\pi^{M_1}$ exists.
Similarly, we may conclude that $\pi^{M_2}$ exists, and examining \eqref{eqn: Vir substep} we see that $\pi^M = \pi^{M_1} \otimes \pi^{M_2}$.

We now return to the general case, where $V_i$ are not necessarily Virasoro VOAs.
Let $\tilde V_i \subset V_i$ be the Virasoro subVOA, let $\tilde M_2$ be an irreducible $W_2$-submodule of $M_2$, and let $\tilde M = M_1 \otimes \tilde M_2$.
Then $\pi^{\tilde M}$ exists by Lemma \ref{lem: module existence conformal subcase}.
By the preceding case we have 
\begin{equation}\label{eqn: restriction splits}
\pi^{\tilde M}|_{\cA_{\tilde V_1} \otimes \cA_{\tilde V_2}} = (\pi^{M_1} \otimes \pi^{\tilde M_2})|_{\cA_{\tilde V_1} \otimes \cA_{\tilde V_2}}.
\end{equation}
Let $\cA_1 = \bigvee_I \pi^{\tilde M}(\cA_{V_1}(I) \otimes 1)$ and 
$\tilde \cA_2 = \bigvee_I \pi^{\tilde M}(1 \otimes \cA_{\tilde V_2}(I))$.
By \eqref{eqn: restriction splits}, we have $\tilde \cA_2 = 1 \otimes \cB(\cH_{\tilde M_2})$.
Since $\cA_1$ commutes with $\tilde \cA_2$, we have $\cA_1 \subset \cB(\cH_{M_1}) \otimes 1$.
Now we may repeat the argument from the Virasoro case to conclude that $\pi^{\tilde M}|_{\cA_{V_1} \otimes 1} = \pi^{M_1} \otimes 1$, and in particular that $\pi^{M_1}$ exists.

Arguing in a similar fashion, one may show that $\pi^{M_2}$ exists, and from there it is immediate that $\pi^{M_1 \otimes M_2} = \pi^{M_1} \otimes \pi^{M_2}$.
\end{proof}

We can now assemble the results for conformal subalgebras and tensor products to give a short proof of Theorem \ref{thmConstructionOfNetRepresentations}:

\begin{proof}[Proof of Theorem \ref{thmConstructionOfNetRepresentations}]
Recall that $V$ is a simple unitary vertex operator superalgebra with bounded localized vertex operators, $W$ is a unitary subalgebra of $V$, $N$ is a unitary $V$-module such that $\pi^N$ exists, and that $M$ is a $W$-submodule of $V$.
We must show that $\pi^M$ exists.
By Proposition \ref{prop: reps from sums of modules}, it suffices to consider when $M$ is irreducible.

By Proposition \ref{propTensorDecomposition}, $N$ is unitarily equivalent to a direct sum $\bigoplus M_i \otimes K_i$ as a $W \otimes W^c$ module, where some $M_i$ is isomorphic to $M$. 
Each $\pi^{M_i \otimes K_i}$ exists by Proposition \ref{prop: reps from sums of modules}, and thus each $\pi^{M_i}$ exists by Proposition \ref{propTensorProductModuleExists}.
\end{proof}

When $V$ has bounded localized vertex operators and $W$ is a unitary subalgebra, we have a subnet $\cB \subset \cA_V$ with $\cH_B = \cH_W$ and $\cB|_{\cH_W} = \cA_W$ (see Proposition \ref{prop: BLVO subalgebra}).
In the process of proving Theorem \ref{thmConstructionOfNetRepresentations}, we computed explicitly what $\cB$ is, or in other words we compute how $\cA_W$ acts in $\cA_V$ modules $\pi^N$.
When $W$ is not a conformal subalgebra, in a $V$-module $N$ there are two positive energy representations on $\cH_N$, one coming from the Virasoro field $Y^N(\nu_V,x)$ and one from $Y^N(\nu_W,x)$.
If $A \in \scA^{in}$, we therefore have two distinct actions on $\cH_N$, which we denote $A^V$ and $A^W$, and similarly for $B \in \scA^{out}$.

\begin{Corollary}\label{cor: action of subnet}
Let $V$ be a simple unitary vertex operator superalgebra with bounded localized vertex operators, let $W$ be a unitary subalgebra.
Let $N$ be a unitary $V$-module such that $\pi^N$ exists.
Let $(B,A) \in \scA_I$ and let $z \in \Int(B,A)$.
Then for $a \in W$, 
$$
\pi^N_I|_{\cA_W}(BY^W(a,z)A) = B^W Y^N(a,z) A^W.
$$
In particular, the subnet $\cB$ of $\cA_V$ which restricts to $\cA_W$ on $\cH_W$ has local algebras generated by operators $B^W Y^V(a,z) A^W$, with $B$, $A$, and $z$ as above.
\end{Corollary}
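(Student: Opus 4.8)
The plan is to reduce everything to the conformal embedding $\tilde W \cong W \otimes W^c$ together with the tensor-splitting already in hand. First I would fix the structural picture: by Proposition~\ref{propUVOSACoset} the coset $W^c$ is a unitary subalgebra, and by Proposition~\ref{propWandCommutantGenerateTensorProduct} the space $\tilde W = \Span\{a_{(-1)}b : a \in W, b \in W^c\}$ is a \emph{conformal} unitary subalgebra of $V$ unitarily isomorphic to $W \otimes W^c$, with $a \in W$ corresponding to $a \otimes \Omega$. Applying Proposition~\ref{prop: BLVO subalgebra} to $\tilde W$ realizes $\cA_{\tilde W} \cong \cA_W \grotimes \cA_{W^c}$ as a covariant subnet of $\cA_V$, inside which the subnet $\cB$ that restricts to $\cA_W$ on $\cH_W$ is precisely $\cA_W \grotimes 1$. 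Thus it suffices to compute $\pi^N$ restricted to this copy of $\cA_W$.

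Because $\tilde W$ is conformal its Virasoro field coincides with that of $V$, so $N$ is simultaneously a $V$-module and a strong $\tilde W$-module, and Lemma~\ref{lem: module existence conformal subcase} gives that $\pi^{N;\tilde W}$ exists with $\pi^N|_{\cA_{\tilde W}} = \pi^{N;\tilde W}$. Next I would invoke Proposition~\ref{propTensorDecomposition} to write $N \cong \bigoplus_i N_i \otimes K_i$ as a $W \otimes W^c$-module; by Proposition~\ref{prop: reps from sums of modules} each $\pi^{N_i \otimes K_i}$ exists, and by Proposition~\ref{propTensorProductModuleExists} it factors as $\pi^{N_i} \otimes \pi^{K_i}$ (which also yields the existence of each $\pi^{N_i}$). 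Reassembling and using Proposition~\ref{prop: reps from sums of modules} once more, $\pi^{N;\tilde W} = \bigoplus_i \pi^{N_i} \otimes \pi^{K_i}$ as a representation of $\cA_W \grotimes \cA_{W^c}$, so for $y \in \cA_W(I)$ one has $\pi^N_I(y \grotimes 1) = \bigoplus_i \pi^{N_i}_I(y) \otimes 1_{\cH_{K_i}}$.

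It remains to identify this direct sum with $B^W Y^N(a,z) A^W$ when $y = BY^W(a,z)A$. Here I would unwind the two Virasoro structures on $\cH_N = \bigoplus_i \cH_{N_i} \otimes \cH_{K_i}$: since $W$ acts on the first tensor factor of each $N_i \otimes K_i$, the $W$-Virasoro representation is $\bigoplus_i L_n^{N_i} \otimes 1$, whence the central-charge-$c_W$ annuli act by $A^W = \bigoplus_i A^{N_i} \otimes 1$ and $B^W = \bigoplus_i B^{N_i} \otimes 1$ (both $A^W \in \Ann^{in}(\cH_N)$ and $B^W \in \Ann^{out}(\cH_N)$ following from the existence of the $\pi^{N_i}$), while for $a \in W$, i.e. $a \otimes \Omega$, the module field is $Y^N(a,z) = \bigoplus_i Y^{N_i}(a,z) \otimes 1$. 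Using $\pi^{N_i}_I(BY^W(a,z)A) = B^{N_i}Y^{N_i}(a,z)A^{N_i}$ from Definition~\ref{def: piM}, the sum reduces to
$$
\bigoplus_i \big(B^{N_i}\otimes 1\big)\big(Y^{N_i}(a,z)\otimes 1\big)\big(A^{N_i}\otimes 1\big) = B^W Y^N(a,z) A^W,
$$
which is the claim; the ``in particular'' statement is the special case $N = V$. I expect the only real obstacle to be bookkeeping rather than analysis: one must keep the two Virasoro actions and their associated systems of annuli (central charges $c_W$ versus $c_V = c_W + c_{W^c}$) straight, confirm that the split $A = A' \otimes A''$ of Definition~\ref{def: system of generalized annuli} is compatible with the identification $a \leftrightarrow a \otimes \Omega$, and check the elementary range conditions for $A^W$ and $B^W$. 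All of the genuine analytic content has already been absorbed into Lemma~\ref{lem: Vir modules tensor split} and Proposition~\ref{propTensorProductModuleExists}.
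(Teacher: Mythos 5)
Your proposal is correct and follows essentially the same route as the paper: pass to the conformal subalgebra $W\otimes W^c$, decompose $N = \bigoplus N_i\otimes K_i$, apply Propositions~\ref{prop: reps from sums of modules} and~\ref{propTensorProductModuleExists} to split $\pi^N$ restricted to $\cA_W\grotimes 1$, and match both sides termwise via $A^W = \bigoplus A^{N_i}\otimes 1$ and $Y^N(a,z) = \bigoplus Y^{N_i}(a,z)\otimes 1$. The paper's proof is a compressed version of exactly this argument, with the same final observation that the formula for $\cB$ follows by taking $N=V$.
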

\begin{proof}
As before, we consider a conformal inclusion $W \otimes W^c$, and replace $W$ by $W \otimes 1$.
As in the proof of Theorem \ref{thmConstructionOfNetRepresentations}, we may decompose $N = \bigoplus M_i \otimes K_i$.
It follows that $A^W = \bigoplus A \otimes 1$, and similarly for $B$.
Moreover, for $a \in W$ we have $Y^N(a,z) = \bigoplus Y^{M_i}(a,z) \otimes 1$, so $B^WY^N(a,z)A^W = \bigoplus BY^{M_i}(a,z)A \otimes 1$.
On the other hand, by Proposition \ref{prop: reps from sums of modules} and Proposition \ref{propTensorProductModuleExists}, we have 
$$
\pi^N(BY^W(a,z)A \otimes 1) = \bigoplus B Y^{M_i}(a,z)A \otimes 1.
$$
and so the desired conclusion follows.
The formula for $\cB$ is immediate from applying the conclusion to $N=V$.
\end{proof}

\subsection{Localized intertwining operators}

Let $V$ be a unitary vertex operator superalgebra with bounded localized vertex operators, let $M$, $N$ and $K$ be unitary $V$-modules and assume that $\pi^N$ and $\pi^K$ exist.
In particular, if $(B,A) \in \scA_I$, then $A \in \Ann^{in}(\cH_N)$ and $B \in \Ann^{out}(\cH_K)$.
The expression $B\cY(a,z)A$ is therefore a densely defined sesquilinear form, subject to certain ambiguities.
First, one must choose a value of $\log z$ to account for fractional powers in the series $\cY(a,z)$.
Next, recall that the actions of $A$ and $B$ on $\cH_N$ and $\cH_K$ are only well-defined up to a factor of $e^{2\pi i L_0}$.
Even if one chooses values for these actions which are compatible (as in the beginning of Section \ref{sec: def of module}), the expression $B \cY(a,z)A$ is only  canonically defined if the $L_0$-eigenvalues in $N$ and $K$ differ by integers.
However, many properties of this operator will be independent of these choices.

Recall that if $\cA$ is a von Neumann superalgebra, with representations $\pi_1$ and $\pi_2$ on super Hilbert spaces $\cH_i$, then the graded intertwiners are given by
$$
\Hom_{\cA}(\cH_1,\cH_2) = \{ x \in \cB(\cH_1,\cH_2) : x \pi_1(y) = (-1)^{p(x)p(y)} \pi_2(y) x \mbox{ for all } y \in \cA \}.
$$
As usual, the expression $x \pi_1(y) = (-1)^{p(x)p(y)} \pi_2(y)x$ should be extended linearly for non-homogeneous $x$ and $y$.

\begin{Definition}
Let $V$ be a simple unitary vertex operator algebra, let $M$, $N$, and $K$ be unitary $V$-modules, and assume that $\pi^N$ and $\pi^K$ exist.
An intertwining operator $\cY \in I \binom{K}{M N}$ is said to be \emph{bounded} and \emph{localized} if for every interval $I$, every $(B,A) \in \scA_I$, and every $z \in \Int(B,A)$, we have $z \in \Int_{\cY}(B,A)$ and
$$
B \cY(a,z) A \in \Hom_{\cA_V(I^\prime)}(\cH_N, \cH_K)
$$
for all $a \in M$.
We write $I_{loc} \binom{K}{M N}$ for the space of bounded localized intertwining operators.
\end{Definition}

Observe that the condition $B \cY(a,z) A \in \Hom_{\cA_V(I^\prime)}(\cH_N, \cH_K)$ is independent of the required choices of $\log z$ and of the action of $A$ and $B$, as if one makes two different choices, the resulting operators differ by a unitary which is scalar on irreducible subrepresentations of $\cH_N$ and $\cH_K$.

Elements of $\Hom_{\cA_V(I)}(\cH_{\pi_1}, \cH_{\pi_2})$ are called local intertwiners for the representations $\pi_1$ and $\pi_2$.
The construction of large subspaces of local intertwiners which can be understood in terms of vertex operators is a key step in understanding the fusion of the corresponding representations.
In Wassermann's landmark analysis of the $SU(N)_k$ models \cite{Wa98}, he took advantage of the fact that certain smeared intertwining operators were bounded (at least for $SU(N)$-primary fields), and therefore produced enough local intertwiners.
The same strategy was later successfully employed by Toledano-Laredo \cite{TL97} and Loke \cite{Loke} for type $D$ WZW models and unitary minimal models, respectively, although Loke's analysis has certain unrelated technical gaps.
However, in general there will not be enough intertwining operators which become bounded when smeared, even for WZW models, and new ideas are required.

One approach, suggested by Wassermann, is to replace the smeared intertwining operators with the partial isometry of their polar decomposition.
This strategy has been successfully employed by Bin Gui \cite{GuiUnitarityI,GuiUnitarityII,GuiG2} to analyze WZW conformal nets with gauge group $B$, $C$, and $G_2$, and to construct a unitary structure on the representation category of the WZW VOAs of the same type.
However, at present the polar decomposition strategy crucially relies on certain estimates, called linear energy bounds, which are not expected to hold in general outside of WZW models.
A significant feature of our present approach is the construction of local intertwiners for representations $\pi^M$ in a manner which only uses bounded operators, and is not limited to nice classes of VOAs (such as WZW or more generally rational VOAs).
In a sequel article, we will exploit this to study the fusion of representations of conformal nets.

\begin{Theorem}\label{thm: intertwiners local}
Let $V$ be a simple unitary vertex operator superalgebra with bounded localized vertex operators, and let $\tilde M$ be a unitary $V$-module such that $\pi^{\tilde M}$ exists.
Let $W \subset V$ be a unitary subalgebra, let $N$ and $K$ be simple $W$-submodules of $\tilde M$, and let $M$ be a simple $W$-submodule of $V$.
Let $p_N$ and $p_K$ be the orthogonal projections of $\cH_{\tilde M}$ onto $\cH_{N}$ and $\cH_K$, respectively, and let $\cY \in I \binom{K}{M N}$ be given by
$$
\cY(a,x) = x^{\Delta} p_{K} Y^{\tilde M}(a,x) p_N
$$
for an appropriate $\Delta \in \bbR$.
Then $\cY \in I_{loc} \binom{K}{M N}$.
\end{Theorem}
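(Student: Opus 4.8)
The plan is to realize $B\cY(a,z)A$ as a compression, in the $W^c$-tensor factor, of a genuine local observable of $\cA_V$ acting on $\cH_{\tilde M}$, and then read off both boundedness and the commutation relation from properties of $\cA_V$ that are already under control. Throughout I write $L_0^V, L_0^W$ for the two Virasoro actions on $\cH_{\tilde M}$ and $A^V,A^W$ (resp.\ $B^V,B^W$) for the corresponding actions of the annuli, following Convention \ref{conv: annuli}. By Proposition \ref{propIntertwinerDescent} the operator in question is $\cY(a,z)=z^{\Delta}p_K Y^{\tilde M}(a,z)p_N$, and by Proposition \ref{propModuleProjection} the projections $p_N,p_K$ commute with the $W$-module action on $\cH_{\tilde M}$, hence with $A^W,B^W$ and with $Y^{\tilde M}(w,\cdot)$ for $w\in W$. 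First I would pass to the conformal embedding $W\otimes W^c\subset V$ of Proposition \ref{propWandCommutantGenerateTensorProduct} and decompose $\tilde M=\bigoplus_i N_i\otimes K_i$ as a $W\otimes W^c$-module (Proposition \ref{propTensorDecomposition}); under this decomposition, and using the tensor-compatibility of the system of generalized annuli (Definition \ref{def: system of generalized annuli}), one has $A^W=\bigoplus_i A^{N_i}\otimes 1$ and $A^V=\bigoplus_i A^{N_i}\otimes A^{K_i}$, where the first factor carries the $c_W$-annulus and the second the $c_{W^c}$-annulus.

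The heart of the argument is a compression identity. Writing $N=\bigoplus_i N_i\otimes S_i^N$ and $K=\bigoplus_i N_i\otimes S_i^K$ with each $S_i^{\bullet}$ at most one-dimensional (as in the discussion following Proposition \ref{propModuleProjection}), I would build bounded maps $V_N:\cH_N\to\cH_{\tilde M}$ and $V_K:\cH_K\to\cH_{\tilde M}$ by replacing each multiplicity vector $\eta\in S_i^{\bullet}\subset K_i$ by a preimage $\tilde\eta$ under the relevant incoming $c_{W^c}$-annulus (such preimages exist because finite-energy vectors lie in the range of incoming annuli, Definition \ref{def: incoming and outgoing annuli}). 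Since $\nu_V=\nu_W+\nu_{W^c}$ forces $L_0^V-L_0^W$ to be the scalar $\Delta_M$ on $M$, one verifies that
\begin{equation*}
B^W\cY(a,z)A^W \;=\; z^{\Delta}\, V_K^{*}\,\big(B^V Y^{\tilde M}(a,z)A^V\big)\, V_N
\end{equation*}
on $\cH_N$; the point is that inserting the fixed $W^c$-vectors $\tilde\eta$ converts the $c_{W^c}$-annuli inside $A^V,B^V$ into precisely the $W^c$-matrix coefficients $\langle\eta_K,\,(\text{$W^c$-intertwiner})(z)\,\eta_N\rangle$ that already constitute the $W^c$-part of $p_K Y^{\tilde M}p_N$. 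Establishing this identity --- in particular the tensor factorization of $Y^{\tilde M}(a,z)$ over $W\otimes W^c$, the bookkeeping of the fractional power $z^{\Delta}$, and the $W^c$-annulus inversions --- is the main obstacle, and is where the multiplicities $S_i^{\bullet}$ make the calculation delicate. (The identity and all statements below are insensitive to the $e^{2\pi i L_0}$-ambiguities in $A^{V/W},B^{V/W}$ and to the choice of $\log z$, exactly as noted after the definition of $I_{loc}\binom{K}{M N}$.)

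Granting the compression identity, boundedness is immediate: since $\pi^{\tilde M}$ exists and $V$ has bounded insertions, $B^V Y^{\tilde M}(s^{L_0^V}-,z)A^V$ is bounded for some $s>0$ and all $z\in\Int(B,A)$ (Proposition \ref{prop: joint continuity of module action}), and because $s^{L_0^V}$ and $s^{L_0^W}$ differ only by the scalar $s^{\Delta_M}$ on $M$, composing with the bounded maps $V_K^{*},V_N$ yields $z\in\Int_{\cY}(B,A)$.

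For the commutation relation, recall that the action of $\cA_W(I')$ in $\pi^N$ and $\pi^K$ is computed by Corollary \ref{cor: action of subnet}: it is generated by operators $\Psi=\tilde B^W Y^{\tilde M}(w,\tilde z)\tilde A^W$ with $w\in W$ and $(\tilde B,\tilde A)\in\scA_{I'}$, which are honest elements of the subnet $\cB\subset\cA_V(I')$. Each such $\Psi$ acts as (operator)$\,\otimes 1$ in the $W^c$-factor, so it commutes with $p_N,p_K$ and intertwines the filling maps, $\Psi V_N=V_N\,\pi^N_{I'}(y)$ and $\Psi V_K=V_K\,\pi^K_{I'}(y)$. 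Combined with the compression identity, the defining relation $x\,\pi^N_{I'}(y)=(-1)^{p(x)p(y)}\pi^K_{I'}(y)\,x$ for $x=B\cY(a,z)A$ collapses to
\begin{equation*}
\mp\, z^{\Delta}\, V_K^{*}\,\big[\,B^V Y^{\tilde M}(a,z)A^V,\ \Psi\,\big]_{\pm}\, V_N,
\end{equation*}
which vanishes because $B^V Y^{\tilde M}(a,z)A^V\in\cA_V(I)$ and $\Psi\in\cA_V(I')$ super-commute for the disjoint intervals $I,I'$. Hence $B\cY(a,z)A\in\Hom_{\cA_W(I')}(\cH_N,\cH_K)$ for all $a\in M$, which is exactly the assertion $\cY\in I_{loc}\binom{K}{M N}$.
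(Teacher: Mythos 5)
Your proposal is correct and follows essentially the same route as the paper: your filling maps $V_N,V_K$ are precisely the operators $q_N,q_K$ of the paper's Lemma \ref{lem: relate annuli AV and AW}, constructed by taking preimages of the $W^c$-multiplicity vectors under the incoming annulus (which exist because $\pi^{K_i}$ exists as an $\cA_{W^c}$-representation), and the compression identity, the supercommutation with $\pi^{\tilde M}_{I'}(\cA_V(I'))$, and the boundedness bookkeeping are carried out exactly as you describe. The one obstacle you flag --- the compression identity --- is in fact immediate and needs no tensor factorization of $Y^{\tilde M}(a,z)$: since $N$ and $K$ are simple, each sits in a single summand $N_i\otimes\bbC v$ with one-dimensional multiplicity, so $A^V V_N=A^W p_N$ and $V_K^*B^V=p_K B^W$ hold by construction, and one simply moves $p_K,p_N$ past $B^W,A^W$ and invokes the definition of $\cY$.
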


We give the proof of Theorem \ref{thm: intertwiners local} after Lemma \ref{lem: relate annuli AV and AW}, which isolates a necessary technical observation.

\begin{Lemma}\label{lem: relate annuli AV and AW}
Let $V$ be a simple unitary vertex operator superalgebra with bounded localized vertex operators, and let $\tilde M$ be a unitary $V$-module such that $\pi^{\tilde M}$ exists.
Let $W \subset V$ be a unitary subalgebra, let $N$ be an irreducible $W$-submodule of $\tilde M$, and let $p_N$ be the orthogonal projection of $\cH_{\tilde M}$ onto $\cH_N$.
Let $A \in \scA^{in}$, and denote by $A^V$ and $A^W$ the two distinct actions of $A$ on $\cH_{\tilde M}$ coming from the Virasoro fields $Y^{\tilde M}(\nu_V,x)$ and $Y^{\tilde M}(\nu_W,x)$, respectively.
Then there exists a $q \in \cB(\cH_{\tilde M})$ such that $q$ commutes with the action of $\cA_W(I)$ on $\cH_{\tilde M}$ for all $I$, and 
$$
p_M A^W = A^W p_M = A^Vq
$$
\end{Lemma}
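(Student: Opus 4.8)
The plan is to pass to the tensor decomposition of $\tilde M$ relative to the conformal inclusion $\tilde W \cong W \otimes W^c$ and to exhibit $q$ explicitly as a rank-one operator living on a single $W^c$-multiplicity space. By Proposition \ref{propTensorDecomposition} I first write $\cH_{\tilde M} = \bigoplus_i \cH_{N_i} \otimes \cH_{K_i}$ as a $W \otimes W^c$-module, with the $N_i$ simple $W$-modules and the $K_i$ simple $W^c$-modules. Since $N$ is a simple $W$-submodule it is carried by a single summand: as recorded in the discussion following Proposition \ref{propTensorDecomposition}, a generalized $W$-submodule has the form $\bigoplus_i N_i \otimes S_i$, and simplicity forces exactly one $S_{i_0}$ to be nonzero and one-dimensional. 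Thus $\cH_N = \cH_{N_{i_0}} \otimes \bbC w$ for a unit $L_0^{W^c}$-eigenvector $w \in K_{i_0}$, and $p_N$ is supported on the $i_0$-block where it equals $1_{N_{i_0}} \otimes p_w$ with $p_w = w\langle \,\cdot\,,w\rangle$.

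Because $\nu^V = \nu^W + \nu^{W^c}$, the $\nu^V$-representation of $\Diff_c$ on the block $\cH_{N_i}\otimes\cH_{K_i}$ is the external tensor product of the $\nu^W$-representation of $\Diff_{c_W}$ on $\cH_{N_i}$ with the $\nu^{W^c}$-representation of $\Diff_{c_{W^c}}$ on $\cH_{K_i}$; combining this with the tensor-compatibility of the system of generalized annuli (Definition \ref{def: system of generalized annuli}), I write $A = A' \otimes A''$ with $A'$ of central charge $c_W$ and $A''$ of central charge $c_{W^c}$ and obtain
\begin{equation*}
A^W = \bigoplus_i (A')^{N_i} \otimes 1, \qquad A^V = \bigoplus_i (A')^{N_i} \otimes (A'')^{K_i},
\end{equation*}
where $(A')^{N_i}$ denotes the action of $A'$ through the $\nu^W$-Virasoro representation on $\cH_{N_i}$, and compatible lifts are fixed to resolve the $e^{2\pi i L_0}$ ambiguity between the two Virasoro structures. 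From these block forms the equalities $p_N A^W = A^W p_N = (A')^{N_{i_0}} \otimes p_w$ (supported on the $i_0$-block) are immediate, so it remains only to produce $q$ with $A^V q = (A')^{N_{i_0}} \otimes p_w$. I seek $q$ supported on the $i_0$-block of the form $q = 1_{N_{i_0}} \otimes q_{i_0}$; then $A^V q = (A')^{N_{i_0}} \otimes \big((A'')^{K_{i_0}} q_{i_0}\big)$, reducing the task to solving $(A'')^{K_{i_0}} q_{i_0} = p_w$ on $\cH_{K_{i_0}}$.

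This last equation is solvable precisely because of a range condition. Since $\pi^{\tilde M}$ exists we have $A^V \in \Ann^{in}(\cH_{\tilde M})$, so for any nonzero $a \in N_{i_0}$ the nonzero simple tensor $a \otimes w$ lies in $\tilde M \subset \Ran(A^V)$, hence in $\Ran\big((A')^{N_{i_0}} \otimes (A'')^{K_{i_0}}\big)$; by Lemma \ref{lem: split simple tensors} this forces $w \in \Ran\big((A'')^{K_{i_0}}\big)$. Choosing $\eta \in \cH_{K_{i_0}}$ with $(A'')^{K_{i_0}}\eta = w$ and setting $q_{i_0} = \eta\langle\,\cdot\,,w\rangle$ yields the bounded operator $q$ with $(A'')^{K_{i_0}} q_{i_0} = p_w$. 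Finally, by Corollary \ref{cor: action of subnet} the image $\pi^{\tilde M}_I(\cA_W(I))$ consists of block-diagonal operators acting as $(\text{operator on }\cH_{N_i}) \otimes 1$, and every such operator commutes with $q = 1_{N_{i_0}} \otimes q_{i_0}$; hence $q$ lies in the commutant of $\cA_W(I)$ for all $I$.

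The main obstacle is this surjectivity step: solving $(A'')^{K_{i_0}} q_{i_0} = p_w$ requires $w$ to lie in the range of the $W^c$-tensor factor of $A$, which is not a formal consequence of $A$ acting as an incoming annulus on $\cH_{\tilde M}$ but must be extracted from it via the tensor-splitting Lemma \ref{lem: split simple tensors} together with the hypothesis that $\pi^{\tilde M}$ exists. The secondary point demanding care is arranging the two block factorizations of $A^V$ and $A^W$ to share a common $(A')^{N_i}$-factor while keeping the projective ambiguities of the $\nu^V$- and $\nu^W$-annulus actions mutually consistent.
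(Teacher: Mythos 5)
Your proof is correct and follows essentially the same route as the paper: decompose $\tilde M = \bigoplus N_i \otimes K_i$ over $W \otimes W^c$, identify $\cH_N$ with $\cH_{N_{i_0}} \otimes \bbC w$, and build $q$ as a rank-one operator on the $W^c$-factor sending $w$ to a preimage of $w$ under the $W^c$-action of the annulus, with locality of $q$ coming from Corollary \ref{cor: action of subnet}. The only (harmless) divergence is how you get $w \in \Ran\bigl((A'')^{K_{i_0}}\bigr)$: the paper invokes Theorem \ref{thmConstructionOfNetRepresentations} to conclude $\pi^{K_{i_0}}$ exists and hence $A \in \Ann^{in}(\cH_{K_{i_0}})$, whereas you extract the same range condition directly from $A^V \in \Ann^{in}(\cH_{\tilde M})$ via the tensor-splitting Lemma \ref{lem: split simple tensors} — a slightly more elementary step that reaches the same conclusion.
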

\begin{proof}
If $W$ is a conformal subalgebra then $A^W = A^V$ and we may take $q = p_M$.
Now consider if $W$ is not a conformal subalgebra,  in which case $V$ has a subalgebra isomorphic to $W \otimes W^c$ and we may decompose $\tilde M = \bigoplus M_i \otimes K_i$ as $W \otimes W^c$ modules, with $M_i$ irreducible and pairwise non-isomorphic. 
Under this decomposition, $A^V = \bigoplus A \otimes A$ and $A^W = \bigoplus A \otimes 1$.
Then for a certain $i$, we have $N = M_i \otimes \{v\}$, for a non-zero homogeneous $v \in K_i$ (recall that $W$-submodules are assumed $L^V_0$-invariant, by definition).
By Theorem \ref{thmConstructionOfNetRepresentations}, $\pi^{K_i}$ exists as a representation of $\cA_{W^c}$, and in particular $A \in \Ann^{in}(K_i)$.
Thus we may choose $\xi \in K_i$ such that $A\xi = v$.
Let $q$ be the operator supported on $\cH_M = \cH_{M_i} \otimes \{v\}$ which acts by $q(\eta \otimes v) = \eta \otimes \xi$.
By construction, $p_M A^W p_M = A^V q$.
Moreover, $q \in \bigoplus 1 \otimes \cB(\cH_{K_i})$, and by Corollary \ref{cor: action of subnet} the action of $\cA_W$ on $\cH_{\tilde M}$ is contained in $\bigoplus \cB(\cH_{M_i}) \otimes 1$.
Hence $q$ commutes with $\cA_W$, as desired.
\end{proof}

\begin{proof}[Proof of Theorem \ref{thm: intertwiners local}]
Recall that $\cY \in I \binom{K}{M N}$ by Proposition \ref{propIntertwinerDescent}.
Let $(B,A) \in \scA_I$ and $z \in \Int(B,A)$, and pick some choice of $\log z$.
Let $A^V$ and $A^W$ be the two different actions of $A$ on $\cH_{\tilde M}$, as in Lemma \ref{lem: relate annuli AV and AW}, and similarly for $B^V$ and $B^W$.
Let $q_N$ be the operators obtained by applying Lemma \ref{lem: relate annuli AV and AW} to $N$, so that $q_N$ commutes with $\cA_W$ on $\cH_{\tilde M}$ and $A^W p_N = A^V q_N$.
Similarly, let $q_K$ be the operator obtained by applying the same lemma to $B^*$, so that $q_K B^V = p_K A^W$.
Since $p_K$ and $p_N$ commute with $A^W$, we have for $a \in M$
\begin{equation}\label{eqn: non conformal compression}
q_K B^V Y^{\tilde M}(a,z) A^V q_N = z^{-\Delta} p_K B^W \cY(a,z) A^W p_N.
\end{equation}
By definition, $B^V Y^{\tilde M}(a,z) A^V = \pi^{\tilde M}_I( BY^V(a,z)A)$, and therefore $B^V Y^{\tilde M}(a,z) A^V$ supercommutes with $\pi^{\tilde M}_{I^\prime}(\cA_V(I^\prime))$.
In particular, it supercommutes with $\pi^{\tilde M}_{I^\prime}(\cA_W(I^\prime))$.
We chose $q_K$ and $q_N$ so that they commute with these algebras, so it follows from \eqref{eqn: non conformal compression} that 
$p_K B^W \cY(a,z) A^W p_N$
supercommutes with $\pi^{\tilde M}_{I^\prime}(\tilde BY^W(\tilde a,\tilde z)\tilde A)$ for all $\tilde a \in W$ and $(\tilde B, \tilde A) \in \scA_{I^\prime}$ and $\tilde z \in \Int(B,A)$%
\footnote{%
To be more precise, we should write $\pi^{\tilde M}_{I^\prime}(\pi^0_{I^\prime}(\tilde BY^W(\tilde a,\tilde z)\tilde A))$, where $\pi^0_{I^\prime}:\cA_W(I^\prime) \to \cB(\cH_V)$ is the representation of Proposition \ref{prop: BLVO subalgebra} exhibiting $\cA_W$ as a subnet of $\cA_V$.%
}%
.
By Corollary \ref{cor: action of subnet}, 
$$
\pi^{\tilde M}_{I^\prime}(\tilde BY^W(\tilde a,\tilde z)\tilde A) = \tilde B^W Y^{\tilde M}(\tilde a,\tilde z)\tilde A^W,
$$
and thus
\begin{align}
\big(\tilde B^W Y^{\tilde M}(\tilde a,\tilde z)\tilde A^W\big) &\big(p_K B^W \cY(a,z) A^W p_N\big) =\nonumber\\
=\label{eqn: commuting on tildeM}
(-1)^{p(a)p(\tilde a)} \big(p_K B^W \cY(a,z) A^W p_N\big)&\big(\tilde B^W Y^{\tilde M}(\tilde a,\tilde z)\tilde A^W\big).
\end{align}
By Proposition \ref{prop: reps from sums of modules}, since $\tilde A^W$ and $\tilde B^W$ commute with $p_M$ and $p_N$, and we have
$$
p_M \tilde B^W Y^{\tilde M}(\tilde a,\tilde z)\tilde A^W = p_M \tilde B^W Y^{\tilde M}(\tilde a,\tilde z)\tilde A^W p_M = \tilde B^W Y^{M}(\tilde a,\tilde z)\tilde A^W p_M,
$$
and similarly with $p_N$.
Plugging this into \eqref{eqn: commuting on tildeM} we obtain
$$
\big(\tilde B Y^{K}(\tilde a,\tilde z)\tilde A\big) \big(B \cY(a,z) A \big)
=
(-1)^{p(a)p(\tilde a)} \big(B \cY(a,z) A \big) \big(\tilde B Y^{N}(\tilde a,\tilde z)\tilde A\big).
$$
Setting $y = \tilde B Y^{W}(\tilde a,\tilde z)\tilde A$, we have shown that 
$$
\pi^K(y)(B \cY(a,z) A) = (-1)^{p(y)p(a)} (B \cY(a,z) A)\pi^N(y).
$$
Since elements of the form $y$ generate $\cA_W(I^\prime)$, we have $B \cY(a,z) A \in \Hom_{\cA_W(I^\prime)}(\cH_N,\cH_K)$, as desired.
\end{proof}

\newpage 

\section{Bounded localized vertex operators for code extensions}
\label{sec: code extensions}

The goal of this section is to greatly expand the class of vertex operator algebras known to have bounded localized vertex operators.
Historically, it has been fairly routine to infer analytic properties of subalgebras from an ambient algebra.
In this section, however, we show how to extend analytic properties of vertex operator algebras (and their intertwining operators) to certain simple current extensions.
As an application, in Section \ref{sec: examples} we will show that all WZW VOAs have bounded localized vertex operators.
The primary challenge is, as always, to relate locality in the sense of conformal nets and locality in the sense of vertex operator algebras.
The main results are Proposition \ref{prop: BLVO for easier codes} and Theorem \ref{thm: BLVO for harder codes} in Section \ref{sec: BLVO for codes}.

\subsection{Simple currents extensions of VOAs}

In this section we briefly recall the basics of simple current extensions of rational VOAs, and the interested reader may consult \cite[\S4]{YamauchiThesis} for more detail.

Let $W$ be a simple rational vertex operator algebra, with representatives of isomorphism classes of irreducible modules $W=M_0, \ldots, M_n$.
By definition, $M$ is a simple current if for all $0 \le i \le n$ there exists a unique $k$ such that $\dim I\binom{M_j}{M M_i} = \delta_{j,k}$.
Simple currents are necessarily irreducible.

A vertex operator superalgebra $V$ is a simple current extension of $W$ (by a finite abelian group $G$) if it decomposes $V = \bigoplus_{\alpha \in G} M^\alpha$ with $M^0 = W$, the $M^\alpha$ pairwise non-isomorphic simple currents, and $\{0\} \ne M^\alpha \cdot M^\beta \subset M^{\alpha + \beta}$ for all $\alpha,\beta \in G$, where $M^\alpha \cdot M^\beta = \Span \{a_{(k)}b : a \in M^\alpha, b \in M^\beta, k \in \bbZ\}$.
A simple current extension of a simple rational VOA is automatically simple.

\begin{Proposition}\label{prop: unique simple current VOA}
Let $W$ be a simple rational VOA, and let $V$ and $\tilde V$ be two vertex operator algebras which are simple current extensions of $W$.
If $V$ and $\tilde V$ are isomorphic as $W$-modules, then they are isomorphic as VOAs.
\end{Proposition}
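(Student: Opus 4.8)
The plan is to realize both $V$ and $\tilde V$ as vertex operator superalgebra structures on one and the same $W$-module, and then to compare the two multiplications using the one-dimensionality of the relevant intertwining spaces. First I would use the hypothesis that $V$ and $\tilde V$ are isomorphic as $W$-modules, together with the fact that each decomposes into \emph{pairwise non-isomorphic} simple currents, to match up the homogeneous summands: a $W$-module isomorphism must carry the isotypic components of $V$ onto those of $\tilde V$, and the grading is intrinsic since it is recorded by the fusion rules $M^\alpha \cdot M^\beta \subset M^{\alpha+\beta}$. After this identification I may assume $V = \bigoplus_{\alpha \in G} M^\alpha$ and $\tilde V = \bigoplus_{\alpha \in G} M^\alpha$ have the same underlying $W$-module and the same simple current summands, and it remains to compare the two vertex operators.

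For $a \in M^\alpha$ and $b \in M^\beta$, the assignments $a \mapsto Y(a,x)|_{M^\beta}$ and $a \mapsto \tilde Y(a,x)|_{M^\beta}$ are intertwining operators of type $\binom{M^{\alpha+\beta}}{M^\alpha\, M^\beta}$; since $M^\alpha$ is a simple current this space is one-dimensional, so $\tilde Y^{\alpha,\beta} = \lambda_{\alpha,\beta}\, Y^{\alpha,\beta}$ for scalars $\lambda_{\alpha,\beta} \in \bbC^\times$. The vacuum axiom gives $\lambda_{0,\beta}=1$, and the creation axiom $Y(a,x)\Omega = e^{xL_{-1}}a$, which determines $Y^{\alpha,0}$ completely, gives $\lambda_{\alpha,0}=1$. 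The remaining two axioms pin down $\lambda$: associativity of $Y$ and of $\tilde Y$, applied to $a \in M^\alpha$, $b \in M^\beta$, $c \in M^\gamma$ and compared using one-dimensionality of the iterated intertwining space together with non-vanishing of the simple-current products $M^\alpha \cdot M^\beta$, forces the $2$-cocycle identity $\lambda_{\alpha,\beta+\gamma}\lambda_{\beta,\gamma} = \lambda_{\alpha+\beta,\gamma}\lambda_{\alpha,\beta}$, so $\lambda \in Z^2(G,\bbC^\times)$. Skew-symmetry then forces symmetry: the canonical FHL skew-symmetry isomorphism $\mathcal{S}\colon I\binom{M^{\alpha+\beta}}{M^\alpha\, M^\beta} \to I\binom{M^{\alpha+\beta}}{M^\beta\, M^\alpha}$ is \emph{linear} and depends only on intrinsic data (conformal weights and parities, identical for both structures), so applying it to $\tilde Y^{\alpha,\beta} = \lambda_{\alpha,\beta}Y^{\alpha,\beta}$ yields $\lambda_{\beta,\alpha} = \lambda_{\alpha,\beta}$; note that the supersymmetric sign in $\mathcal{S}$ cancels in the ratio and so does not enter $\lambda$.

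Finally I would invoke a cohomological triviality: a symmetric $2$-cocycle valued in a divisible abelian group is a coboundary. Symmetric classes in $H^2(G,\bbC^\times)$ are classified by $\operatorname{Ext}^1_{\bbZ}(G,\bbC^\times)$, which vanishes since $\bbC^\times$ is divisible, hence injective, and $G$ is finite. Thus there is $c\colon G \to \bbC^\times$ with $\lambda_{\alpha,\beta} = c_\alpha c_\beta c_{\alpha+\beta}^{-1}$, and necessarily $c_0 = 1$. The map $\Phi := \bigoplus_\alpha c_\alpha^{-1}\,\id_{M^\alpha}$ then satisfies $\Phi(Y(a,x)b) = \tilde Y(\Phi a,x)\Phi b$ by a direct check, fixes $\Omega$ and the conformal vector $\nu \in W = M^0$, and is therefore the desired isomorphism $V \to \tilde V$ of vertex operator superalgebras.

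I expect the main obstacle to be the middle step: converting the honest VOA axioms, phrased through the Borcherds/Jacobi identity in the extension, into clean scalar relations among the $\lambda_{\alpha,\beta}$. This rests on the one-dimensionality of the iterated intertwining spaces for simple currents and on the non-vanishing of the products $M^\alpha \cdot M^\beta$, both of which I would extract from the simple-current hypothesis and the rationality of $W$. By contrast, once symmetry of $\lambda$ is established the cohomological finish is immediate.
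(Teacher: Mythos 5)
Your argument is correct and is essentially the proof the paper has in mind: the paper itself only cites \cite[Prop.~4.2.3]{YamauchiThesis} and \cite[Prop.~5.3]{DongMason04}, but Remark \ref{rmk: unique simple current net} identifies the key ingredient as exactly your final step, namely that a symmetric $2$-cocycle in $Z^2(G,\bbC^\times)$ for $G$ finite abelian is a coboundary. Your reduction to that step --- matching summands, extracting scalars $\lambda_{\alpha,\beta}$ from one-dimensionality of the simple-current intertwining spaces, deriving the cocycle identity from associativity and symmetry from skew-symmetry, and rescaling by the trivializing cochain --- is the standard argument and checks out.
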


A proof of this result is given in \cite[Prop. 4.2.3]{YamauchiThesis} (see also \cite[Prop. 5.3]{DongMason04}).

\begin{Remark}\label{rmk: unique simple current net}
The key ingredient in the proof of Proposition \ref{prop: unique simple current VOA} is that a symmetric cocycle in $Z^2(G,\bbC^\times)$ for $G$ abelian is automatically a coboundary (see \cite[Prop. 5.3]{Karpilovsky2}).
This argument is not special to vertex operator algebras, and readily generalizes to 
an abstract algebraic/categorical argument about $G$-graded algebras with simple components.
In particular, one may apply the same argument to conformal nets (and the $Q$-systems governing local extensions) to obtain uniqueness of simple current extensions in this case as well.
We will only need to discuss simple current extensions of conformal nets in the very special case of lattices (Proposition \ref{prop: SCE agrees for lattices}), and so we do not expand on this idea in more detail.
\end{Remark}

\subsection{Self-dual simple currents and code extensions}\label{sec: self dual simple currents and code extensions}

Fix a simple rational VOA $V$.
We will assume that $V$ is unitary as well, as this is will be the case in our applications, but none of the content of Section \ref{sec: self dual simple currents and code extensions} depends on that in an essential way.
Our results of this section will primarily be concerned with self-dual simple currents with certain self-braiding properties, which we formalize below.
In the following, if $\cY \in I \binom{K}{M N}$, we refer to $N$ as the input (space) of $\cY$, $K$ as the output, and $M$ as as the charge.

\begin{Definition}
Let $V$ be a unitary vertex operator algebra, let $M$, $N$, and $K$ be unitary $V$-modules.
Let $\cY_1, \cY_0, \tilde \cY_1, \tilde \cY_0$ be intertwining operators with only integral powers of the formal variable, and such that the output space of $\cY_1$ is the same as that of $\tilde \cY_1$, the input space of $\cY_0$ is the same as $\tilde \cY_0$, and for both $i$ the charge space of $\cY_i$ is the same as $\tilde \cY_{1-i}$.
Then we say that $\cY_1 \cdot \cY_0$ braid to $\tilde \cY_1 \cdot \tilde \cY_0$ if for every $a_1, a_2$ in the appropriate charge spaces, and every $b_1,b_2$ in the appropriate input/output spaces, the double series 
$$
\ip{\cY_1(a_1,z)\cY_0(a_2,w)b_1,b_2} \qquad \mbox{ and } \ip{\tilde \cY_{1}(a_2,w)\tilde \cY_0(a_1,z)b_1,b_2}
$$
converge absolutely on the domains $\abs{z} > \abs{w}$ and $\abs{w} > \abs{z}$, respectively, to rational functions which extend to the same element of $\bbC[z^{\pm 1}, w^{\pm 1}, (z-w)^{-1}]$.
We denote this relation by $\cY_1 \cdot \cY_0 \sim \tilde \cY_1 \cdot \tilde \cY_0$.
\end{Definition}

\begin{Definition}
A simple current $M$ for $V$ is called \emph{self-dual} if $\dim I \binom{V}{M M} = 1$.
\end{Definition}

Fix a choice of self-dual simple current $M_1$, and let $M_0 = V$.
Let $\bbF_2=\{0,1\}$ be the field with two elements, and for $i \in \bbF_2^n$ let $M_i = M_{i(1)} \otimes \cdots \otimes M_{i(n)}$.
We write $\vertex{k}{i}{j}$ for $I \vertex{M_k}{M_i}{M_j}$, and observe that $\dim  \vertex{k}{i}{j} = \delta_{i+j,k}$.
We say that $\vertex{k}{i}{j}$ is \emph{admissible} if $i + j = k$.

The spaces $\vertex{0}{0}{0}$ and $\vertex{1}{0}{1}$ have distinguished basis vectors, $Y^V$ and $Y^M$, and suppose that we fix some choice of basis vectors for $\binom{1}{1 0}$ and $\binom{0}{1 1}$.
Then for all admissible $\vertex{k}{i}{j}$ we have a distinguished basis vector $\cY_{i j}^{k} \in \vertex{k}{i}{j}$ given by the tensor product of our basis vectors.

In the following, we fix a simple unitary rational vertex operator algebra $V$ and a unitary simple current $M$ which is \emph{self-dual}, meaning that $\dim I\binom{V}{M M} = 1$.

\begin{Definition}\label{def: fermionic and bosonic braiding}
A self-dual simple current $M$ is called \emph{bosonic} (resp. \emph{fermionic}) if the conformal weights of $M$ lies in $\bbZ$ (resp. $\tfrac12 + \bbZ$) and for all $i,k \in \bbF_2$ with $k = i+1$ we have
\begin{equation}\label{eqn: bosonic and fermionic braiding}
\cY_{1k}^i \cdot \cY_{1i}^k \,\, \sim  \,\, \varepsilon \, \cY_{1i}^k \cdot \cY_{1k}^i
\end{equation}
where $\varepsilon = 1$ (resp. $\varepsilon = -1$), for some (equivalently, any) choice of bases for $\vertex{1}{1}{0}$ and $\vertex{0}{1}{1}$.
\end{Definition}

Observe that the powers of $x$ in the intertwining operators which arise in Definition \ref{def: fermionic and bosonic braiding} are integral because of our restriction on the conformal weights (see \cite[Rem. 5.4.4]{FHL93}).
We will be particularly interested in simple currents which are not bosonic or fermionic, but \emph{semionic}, which satisfy an analog of \eqref{eqn: bosonic and fermionic braiding} with $\varepsilon = i$ or $\varepsilon = -i$.
However, in this case the powers of $x$ which arise are not integral, but at the same time it is essential that both values of $\varepsilon$ in \eqref{eqn: bosonic and fermionic braiding} should in the appropriate sense correspond to the same fourth root of unity.
This is somewhat subtle to formalize, as all of the functions involved are multi-valued, and the different branches differ by signs.
We could do this via careful statements about analytic continuation, but instead we opt for a definition which only involves single-valued functions.

\begin{Definition}
A self-dual simple current $M$ is called \emph{semionic} if it has conformal weights lying in $\pm\tfrac14 + \bbZ$ and for $j \in \bbF_2^2$, $i=(1,1)$, and $k=i+j$ we have the fermionic braidings
$$
\cY_{i k}^j \cdot \cY_{i j}^k \,\, \sim \,\, - \, \cY_{i k}^j \cdot \cY_{i j}^k
$$
for some (equivalently, any) choice of bases for $\vertex{1}{1}{0}$ and $\vertex{0}{1}{1}$.
\end{Definition}
By construction, if $M$ is semionic then $M \otimes M$ is fermionic, and if $M$ is fermionic then $M \otimes M$ is bosonic.

For $\vertex{k}{i}{j}$ admissible and $M$ bosonic or fermionic, the intertwining operator $\cY_{i j}^k$ always has integral powers of $x$.
On the other hand, if $M$ is semionic then $\cY_{i j}^{k}$ has integral powers of $x$ if and only if $i \cdot j \in 2\bbZ$, where the dot product $i \cdot j$ is obtained by embedding $\bbF_2^n \subset \bbZ^n$ in the obvious way.
Our definition of semionic allows us to easily compute braid statistics for such intertwining operators.

\begin{Lemma}\label{lem: codeword braid statistics}
Let $M$ be a self-dual simple current, and let $i,j,k \in \bbF_2^n$.
\begin{enumerate}
\item If $M$ is bosonic or fermionic, then
$$
\cY^k_{q j} \cdot \cY^j_{p i} \,\,\sim\,\, (\pm 1)^{p \cdot q} \cY^k_{p \ell} \cdot \cY^\ell_{q i},
$$
where $p,q,k,$ and $\ell$ are determined by the assumption that all triples are admissible, and the sign $+1$ is taken if $M$ is bosonic, and $-1$ if $M$ is fermionic.

\item If $M$ is semionic and $p \cdot \ell$, $q \cdot i$, $p \cdot q \in 2 \bbZ$ then
$$
\cY^k_{q j} \cdot \cY^j_{p i} \,\,\sim\,\, (-1)^{\frac12(p \cdot q)} \, \cY^k_{p \ell} \cdot \cY^\ell_{q i} .
$$
\end{enumerate}
\end{Lemma}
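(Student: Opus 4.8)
The plan is to reduce the $n$-coordinate braiding to braidings in a single coordinate (and, for the semionic case, in a pair of coordinates), using that each $M_i$ and each distinguished intertwining operator $\cY^k_{ij}$ is a tensor product over the $n$ coordinates. Since $V$ is an even vertex operator algebra, these tensor products carry no extra $\Gamma$-insertions, so for decomposable vectors the matrix coefficient
\[
\ip{\cY^k_{qj}(a,z)\,\cY^j_{pi}(b,w)\,c,\,d}
\]
factors as a product over coordinates $m = 1,\dots,n$ of the single-coordinate matrix coefficients $\ip{\cY^{k(m)}_{q(m)j(m)}(a_m,z)\,\cY^{j(m)}_{p(m)i(m)}(b_m,w)\,c_m,\,d_m}$, and likewise for the exchanged product. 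Thus the braiding phase relating $\cY^k_{qj}\cdot\cY^j_{pi}$ to $\cY^k_{p\ell}\cdot\cY^\ell_{qi}$ is the product of the per-coordinate exchange phases, and the task is to identify each of these. First I would observe that at a coordinate $m$ with $p(m)q(m)=0$ one of the two single-coordinate operators has charge $M_0 = V$, i.e.\ is a module vertex operator; commuting a $V$-field past an intertwining operator introduces no monodromy (commutativity from the commutator formula of Proposition \ref{propIntertwinerProperties}, the even charge contributing no sign), so such coordinates contribute trivially. Consequently only the overlap coordinates $S := \{ m : p(m) = q(m) = 1 \}$, of which there are exactly $p \cdot q$, matter, and at each of these the exchange is the self-braiding of $M$ past $M$ (at such $m$ one computes $k(m) = i(m)$ and $\ell(m) = j(m) = 1 + i(m)$, so both sides present the unique charge-$M$ double composite, governed by Definition \ref{def: fermionic and bosonic braiding}).

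For part (1) the bosonic/fermionic hypothesis puts all conformal weights in $\tfrac12\bbZ$, so every $\cY^k_{ij}$ has integral powers (as recorded before the lemma) and each single-coordinate exchange is a single-valued relation. At each overlap coordinate the self-braiding of $M$ is the scalar $\varepsilon = \pm 1$ of Definition \ref{def: fermionic and bosonic braiding}, and multiplying over the $p \cdot q$ overlap coordinates gives the stated factor $(\pm 1)^{p \cdot q}$, finishing part (1).

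Part (2) is where the real work lies. When $M$ is semionic, a single factor $\cY^k_{ij}$ has integral powers only when $i \cdot j \in 2\bbZ$, and the individual self-braiding of $M$ is $\pm i$, carried by genuinely double-valued quarter-integer powers. The first step is to check that the hypotheses $p \cdot \ell,\; q \cdot i,\; p \cdot q \in 2\bbZ$ force $p \cdot i, q \cdot j \in 2\bbZ$ as well: using $j = p + i$ and $\ell = q + i$ in $\bbF_2^n$ and reducing the corresponding integer dot products modulo $2$ gives $q \cdot j \equiv q \cdot p + q \cdot i$ and $p \cdot \ell \equiv p \cdot q + p \cdot i$, whence $q\cdot j$ and $p\cdot i$ are even. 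Hence all four operators $\cY^j_{pi}, \cY^k_{qj}, \cY^k_{p\ell}, \cY^\ell_{qi}$ have integral powers, the total correlators are single-valued rational functions (the $z$- and $w$-monodromies cancel coordinate-wise, and the $(z-w)$-monodromy is integral because $|S| = p\cdot q$ is even), and the relation $\sim$ is well posed. To pin down the phase without assigning a sign to any individual multivalued factor, I would group the $p\cdot q$ overlap coordinates into $\tfrac12(p \cdot q)$ disjoint pairs; regarding each pair as a single $V \otimes V$ tensor factor, both operators there have charge $M \otimes M$, which is fermionic by the definition of semionic, so the combined self-braiding is the single-valued scalar $-1$ of Definition \ref{def: fermionic and bosonic braiding}. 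The product over the $\tfrac12(p \cdot q)$ pairs yields $(-1)^{\tfrac12(p \cdot q)}$, the off-overlap coordinates being trivial as before, which is the claim.

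The main obstacle is exactly this last reduction: verifying that the individually double-valued semionic self-braidings of $M$ assemble, through the even overlap $|S| = p \cdot q$, into the single-valued fermionic self-braidings of $M \otimes M$ on paired coordinates, and that the leftover quarter-integer monodromies carried by off-overlap factors with non-integral powers are internal to a single operator — hence identical on the two sides of $\sim$ and therefore irrelevant to the exchange phase. This is precisely the subtlety that the $M \otimes M$-fermionic formulation of ``semionic'' was designed to finesse, so the structural content of the proof is the bookkeeping that routes every multivalued factor through that single-valued pair-braiding.
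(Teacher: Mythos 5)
Your proposal is correct and follows essentially the same route as the paper's proof: factor the braiding over the $n$ tensor coordinates, use commutativity of intertwining operators with the module operator $Y^V = \cY^{t}_{0t}$ to kill the contribution of coordinates where $p$ or $q$ vanishes, and attribute the phase entirely to the $p\cdot q$ overlap coordinates, pairing them up in the semionic case so that each pair carries the single-valued fermionic self-braiding of $M\otimes M$. Your explicit check that $p\cdot i$ and $q\cdot j$ are also forced to be even is a small addition of rigor the paper leaves implicit, but the argument is otherwise the same.
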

\begin{proof}
We first consider the bosonic and fermionic case.
From the definition of an intertwining operator, if $r,s,t \in \bbF_2$ and $\vertex{t}{r}{s}$ is admissible, then we have (as in \cite[\S3.2]{FHL93})
$$
\cY^t_{0 t} \cdot \cY_{rs}^t \,\, \sim \,\, \cY_{rs}^t \cdot \cY^s_{0 s}.
$$
%This then extends to $r,s,t \in \bbF_2^m$, where $0$ is replaced by $(0, \ldots 0)$.
Thus when $M$ is bosonic, we can apply the braiding on each tensor factor to obtain the desired expression.
When $M$ is fermionic, observe that $p \cdot q$ is precisely the number of components for which $p$ and $q$ are both 1, and so applying the braid relations on each tensor factor we get $p \cdot q$ factors of $-1$.

Now consider the semionic case.
The assumption that $p \cdot \ell$ and $q \cdot i$ are even ensure that the intertwining operators have integral powers of $x$.
Since $p \cdot q$ is even, we can partition the $n$ components of $\bbF_2^n$ into (i) ones for which either $p$ and/or $q$ have a $0$ entry, or (ii) pairs on which $p$ and $q$ both restrict to $1$. 
It thus suffices to establish the desired braiding for each of these cases.
On pairs of type (ii), we can simply apply the definition of semionic simple current to obtain a braiding factor of $(-1)^{p \cdot q}$.
On the other hand, on tensor factors of type (i) we can use the commutativity of intertwining operators with the module operator to get a braiding factor of $1$, which completes the proof.
\end{proof}

It is easy to obtain bosonic and fermionic simple currents in the branching of conformal inclusions.

\begin{Lemma}\label{lem: easy fermionic and bosonic simple currents}
Let $V$ be a simple unitary vertex operator superalgebra, and let $W$ be an even unitary rational conformal subalgebra.
Let $M$ be a $W$-submodule of $V$ which is a self-dual simple current.
If $M \subset V^0$ then $M$ is bosonic, and if $M \subset V^1$ then $M$ is fermionic.
\end{Lemma}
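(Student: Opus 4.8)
The statement splits into two independent assertions: the constraint on the conformal weights of $M$, and the braiding identity \eqref{eqn: bosonic and fermionic braiding}. The plan is to dispatch the first directly from the grading axioms and to derive the second from the super-commutativity of the single field $Y^V$.

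First I would verify the weights. Since $W$ is a \emph{conformal} subalgebra, Proposition \ref{propUnitarySubalgebraIsUnitaryVOA} gives $\nu^W=\nu^V$, so $L_0^W=L_0^V$ on all of $V$ and the conformal weight of a vector $m\in M$ (as a $W$-module) is simply its $L_0^V$-eigenvalue. By the grading axiom of a vertex operator superalgebra, $V^0=\bigoplus_{\alpha\in\bbZ_{\ge0}}V_\alpha$ and $V^1=\bigoplus_{\alpha\in\frac12+\bbZ_{\ge0}}V_\alpha$. Hence if $M\subset V^0$ the weights of $M$ lie in $\bbZ$, and if $M\subset V^1$ they lie in $\tfrac12+\bbZ$, which is exactly the weight requirement in the definition of bosonic resp. fermionic.

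For the braiding, the key point is that all the intertwining operators entering \eqref{eqn: bosonic and fermionic braiding} are projections of the single super-field $Y^V$. Because $W$ is conformal in $V$, the three shift constants $\Delta_M,\Delta_N,\Delta_K$ of Proposition \ref{propIntertwinerDescent} all vanish, so $\Delta=0$ and the descent intertwiners are literally $p_WY^V(\,\cdot\,,x)|_M\in I\binom{W}{M\,M}$ and $p_MY^V(\,\cdot\,,x)|_W\in I\binom{M}{M\,W}$, involving only integral powers of $x$. Using self-duality ($\dim I\binom{W}{M\,M}=1$) together with the simplicity of $V$ to see these projections are nonzero, I would take them as the basis vectors $\cY^0_{11}$ and $\cY^1_{10}$; since the definition is independent of the choice of bases, this costs nothing. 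I would then record two bookkeeping facts: for $a\in M$, $b\in W$, skew-symmetry and the fact that $M$ is a $W$-submodule confine $Y^V(a,w)b$ to $\widehat M$, so $p_MY^V(a,w)b=Y^V(a,w)b$ with no other fusion channel; dually, the simple-current fusion rule $\dim I\binom{N}{M\,W}=\delta_{N,M}$ controls every intermediate channel that can occur.

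The braiding is then inherited from the super-commutativity of the fields of $V$: for homogeneous $a_1,a_2\in V$ the matrix coefficients
$$
\ip{Y^V(a_1,z)Y^V(a_2,w)c,d}\quad\text{and}\quad (-1)^{p(a_1)p(a_2)}\ip{Y^V(a_2,w)Y^V(a_1,z)c,d}
$$
are expansions on $\abs{z}>\abs{w}$ and $\abs{w}>\abs{z}$ of a common element of $\bbC[z^{\pm1},w^{\pm1},(z-w)^{-1}]$. Specializing $a_1,a_2\in M$ and $c,d$ in the relevant copies of $W$ (resp. $M$), and inserting the projections $p_W,p_M$ just described, converts this into the braiding relation \eqref{eqn: bosonic and fermionic braiding} with $\varepsilon=(-1)^{p(a_1)p(a_2)}$. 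Finally, $M$ is parity-homogeneous with $p(M)=0$ if $M\subset V^0$ and $p(M)=1$ if $M\subset V^1$, so $\varepsilon=(-1)^{p(M)^2}=(-1)^{p(M)}$, which equals $+1$ in the even case and $-1$ in the odd case; combined with the weight computation this is precisely the assertion that $M$ is bosonic resp. fermionic. I expect the main obstacle to be exactly the bookkeeping in this last step: checking that inserting $p_W$ and $p_M$ into the super-locality identity for $Y^V$ reproduces the matrix coefficients of the composite intertwining operators in \eqref{eqn: bosonic and fermionic braiding} with the correct intermediate channel and no spurious contributions, and that the two single-valued rational functions genuinely agree on the overlapping domains — which is where the simple-current fusion rules, skew-symmetry, and the vanishing of $\Delta$ (hence integrality of all powers) all get used.
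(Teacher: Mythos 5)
Your proposal is correct and follows essentially the same route as the paper's (much terser) proof: realize $\cY^0_{11}$ and $\cY^1_{10}$ as projections $p_W Y^V|_M$ and $p_M Y^V|_W$, note that these projections intertwine $Y^V(a,x)$ appropriately (swapping the $W$ and $M$ components when $a\in M$, thanks to the simple-current fusion rules), and deduce the braiding with $\varepsilon=(-1)^{p(a)p(b)}=(-1)^{p(M)}$ from the super-commutativity of products for $Y^V$. The explicit weight check and the $\Delta=0$ observation you include are left implicit in the paper but are exactly the right supporting details.
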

\begin{proof}
Let $p_1$ and $p_0$ be the projections of $V$ onto $M$ and $W$, respectively.
Observe that if $a \in W$ then we have $p_i Y^V(a,x) = Y^V(a,x)p_i$, and if $a \in M$ then $p_i Y^V(a,x) = Y^V(a,x) p_{1-i}$.
The desired braiding now follows from the commutativity of products for $Y^V$.
\end{proof}

We will only use one semionic simple current in our examples:

\begin{Lemma}\label{lem: A1 module is semionic}
Let $V$ be the unitary WZW model of type $A_1$ at level $1$, and let $M$ be its non-trivial irreducible module.
Then $M$ is a semionic self-dual simple current.
\end{Lemma}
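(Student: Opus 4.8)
The plan is to realize $V$ concretely as a lattice vertex operator algebra and then to pin down the braiding statistics of $M \otimes M$ structurally, by exhibiting $M \otimes M$ as the odd part of a lattice \emph{super}algebra so that the already-established Lemma~\ref{lem: easy fermionic and bosonic simple currents} applies. First I would invoke the Frenkel--Kac--Segal isomorphism identifying $V$ with the lattice VOA $V_{A_1}$, where $A_1 = \bbZ\alpha$ with $\langle \alpha,\alpha\rangle = 2$; both sides have central charge $1$. Under this identification the two irreducible modules are $V_{A_1}$ (the vacuum) and $M \cong V_{A_1 + \frac12\alpha}$, and the lowest conformal weight of $M$ is $\tfrac12\langle\tfrac12\alpha,\tfrac12\alpha\rangle = \tfrac14$, so the conformal weights of $M$ lie in $\tfrac14 + \bbZ \subset \pm\tfrac14 + \bbZ$.

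Next I would record the self-dual simple current property. The fusion rules of lattice modules are governed by addition in the discriminant group $A_1^*/A_1 \cong \bbZ/2$, and all fusion spaces of a lattice VOA are at most one-dimensional (Dong--Lepowsky). Since $\tfrac12\alpha + \tfrac12\alpha = \alpha \in A_1$ and $-\tfrac12\alpha \equiv \tfrac12\alpha \pmod{A_1}$, this gives $\dim I\binom{V}{M\,M} = 1$ and $M' \cong M$, so $M$ is a self-dual simple current.

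The remaining, and main, step is the \emph{semionic} braiding, i.e.\ that $M \otimes M$ is fermionic over $V \otimes V$ in the sense of Definition~\ref{def: fermionic and bosonic braiding}. Rather than computing the braiding phase of the intertwining operators by hand --- which forces one to track both the $2$-cocycle of the lattice and the analytic-continuation monodromy of $(z-w)^{\langle\cdot,\cdot\rangle}$, exactly the multi-valued subtleties the definition of semionic was designed to avoid --- I would deduce it structurally. Set $L = A_1 \oplus A_1$ and $L' = L + \bbZ(\tfrac12\alpha,\tfrac12\alpha)$. Since $(\tfrac12\alpha,\tfrac12\alpha)$ has norm $1$, the lattice $L'$ is integral, positive-definite, and \emph{odd}, so $V_{L'}$ is a simple unitary vertex operator superalgebra whose even part is $V_L = V \otimes V$ and whose odd part is $V_{L + (\frac12\alpha,\frac12\alpha)} = M \otimes M$. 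Here $V \otimes V$ is an even, unitary, rational conformal subalgebra of $V_{L'}$ (the conformal vectors agree, both being the rank-$2$ Heisenberg conformal vector), and $M \otimes M$ is a self-dual simple current for $V \otimes V$ because $(M \otimes M) \boxtimes (M \otimes M) = (M \boxtimes M) \otimes (M \boxtimes M) = V \otimes V$ with one-dimensional fusion space. Applying Lemma~\ref{lem: easy fermionic and bosonic simple currents} to the submodule $M \otimes M \subset V_{L'}^1$ then yields that $M \otimes M$ is fermionic, which is precisely the semionic braiding condition; together with the conformal weight computation this completes the proof.

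The hard part is the structural input rather than any single computation: one must be comfortable asserting the Frenkel--Kac realization, the standard lattice-VOA module and fusion theory, and --- most importantly --- that the odd lattice $L'$ genuinely reassembles $V \otimes V$ and $M \otimes M$ into a single simple unitary VOSA with the stated parity decomposition, so that the braiding of $M \otimes M$ is \emph{forced} by Lemma~\ref{lem: easy fermionic and bosonic simple currents}. If a self-contained argument were wanted, the fallback is the direct vertex-operator monodromy calculation, which gives self-braiding phase $e^{\pi i \langle\frac12\alpha,\frac12\alpha\rangle} = e^{\pi i/2}$ for $M$ (semion statistics) and hence $-1$ for $M \otimes M$; but the lattice-superalgebra route sidesteps the branch bookkeeping entirely.
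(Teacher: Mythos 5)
Your reduction to Lemma~\ref{lem: easy fermionic and bosonic simple currents} has a genuine gap: it does not verify all of the braidings that Definition of ``semionic'' requires. The semionic condition is a statement about the charge $M_i = M \otimes M$ acting between \emph{all four} irreducible $W\otimes W$-modules $M_j$, $j \in \bbF_2^2$ (and this full strength is what is actually consumed later, in the type-(ii) pairs of Lemma~\ref{lem: codeword braid statistics}). Your superalgebra $V_{L'}$ with $L' = (A_1\oplus A_1) + \bbZ(\tfrac12\alpha,\tfrac12\alpha)$ decomposes as $(W\otimes W)\oplus(M\otimes M)$ only, so Lemma~\ref{lem: easy fermionic and bosonic simple currents} yields the fermionic self-braiding of $M\otimes M$ in the sectors $j=(0,0)$ and $j=(1,1)$, but says nothing about the mixed sectors $j=(1,0)$ and $j=(0,1)$, i.e.\ about $\cY$'s of type $\vertex{W\otimes M}{M\otimes M}{M\otimes W}$. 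The paper explicitly warns that ``$M\otimes M$ is fermionic'' is only a consequence of, not equivalent to, ``$M$ is semionic,'' so your closing claim that fermionicity of $M\otimes M$ ``is precisely the semionic braiding condition'' is where the argument breaks.

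The paper's proof avoids this by working with the \emph{full dual lattice}: it embeds $\Lambda_0 = A_1\times A_1$ (generators $(1,1)$ and $(1,-1)$) into $\Lambda = \bbZ^2\cup\bigl((\tfrac12,\tfrac12)+\bbZ^2\bigr) = \Lambda_0^\ast$, so that $\Lambda/\Lambda_0\cong\bbF_2^2$ and $V_\Lambda = \bigoplus_{j\in\bbF_2^2}M_j$ contains all four modules at once. Since $\Lambda$ is not integral, $V_\Lambda$ is not a VOSA but an abelian intertwining algebra in the sense of Dong--Lepowsky, and their braiding formula then delivers the required phase for every admissible triple simultaneously, with the multivaluedness handled by the abelian-intertwining-algebra formalism rather than by hand. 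Your first two steps (Frenkel--Kac--Segal identification, weight $\tfrac14$, self-dual simple current via the discriminant group) agree with the paper and are fine; to repair the last step you would either need to adjoin the mixed sectors (i.e.\ pass to $V_{\Lambda_0^\ast}$ as the paper does, or supply a separate hexagon/monodromy argument for $j=(1,0),(0,1)$) or carry out the direct monodromy computation you relegated to a fallback.
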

\begin{proof}
By the Frenkel-Kac-Segal construction (\cite[\S5.4]{Kac98}, \cite{FrenkelKac80}), $V$ is the lattice VOA associated to the $A_1$ lattice.
We embed 
$$
\Lambda_0:=A_1 \times A_1 \subset \bbZ^2 \cup \big((\tfrac12,\tfrac12) + \bbZ^2\big)=:\Lambda
$$ 
by sending the generator of the first copy of $A_1$ to $(1,1)$ and the generator of the second copy to $(1,-1)$.
We have $\Lambda/\Lambda_0 \cong \bbF_2^2$, and so by \cite[Thm. 5.2]{DongLepowsky94} we have the structure of an abelian intertwining algebra on $V_\Lambda := \bigoplus_{i \in \bbF_2^2} M_i$.
The indicated braiding of intertwining operators is \cite[Prop. 3.4]{DongLepowsky94}.
\end{proof}

Recall that a (binary, linear) code of length $n$ is a subspace $C \subset \bbF_2^n$.
For codewords $i,j \in C$, we denote by $i \cdot j$ the dot product in $\bbZ^n$.
A code $C$ is called \emph{even} if $i \cdot i \in 2\bbZ$ for all $i \in C$, and doubly even if the same holds with $4 \bbZ$ in place of $2 \bbZ$. 
It is called \emph{self-orthogonal} (written $C \subset C^\ast$) if $i \cdot j \in 2 \bbZ$ for all $i,j \in C$.
Observe that self-orthogonal codes are automatically even, but even (and doubly even) codes may fail to be self-orthogonal.

\begin{Definition}
Let $V$ be a simple unitary vertex operator algebra, and let $W^{\otimes n}$ be a rational unitary conformal subalgebra.
Let $M$ be a self-dual simple current for $W$ which is bosonic, fermionic, or semionic, and let $C$ be a binary code.
If $M$ is fermionic or semionic assume that $C \subset C^\ast$, and moreover if $M$ is semionic assume that $C$ is doubly even.
Then $V$ is called a simple current extension of $W$ of \emph{code type} with \emph{length $n$} with respect to $(C,W,M)$ if $V \cong \bigoplus_{i \in C} M_i$ as a $W^{\otimes n}$-module.
\end{Definition}

Note that the code type extensions considered have integral conformal dimensions by definition, and therefore must be (even) vertex operator algebras.

A typical example is given by code lattices.
If $C$ is a code of length $n$, define the corresponding (untwisted) code lattice:
\begin{equation}\label{eqn: code lattice}
\Lambda_C := \bigcup_{i \in C} \sqrt{2}\bbZ^n + \tfrac{i}{\sqrt{2}}.
\end{equation}
Then $C \subset C^\ast$ if and only if $\Lambda_C$ is integral, and $C$ is doubly even if and only if $\Lambda_C$ is even \cite[Thm. 7.2.2]{ConwaySloane}.
By construction, we have a sublattice $A_1^n = \sqrt{2}\bbZ^n \subset \Lambda_C$ corresponding to the codeword $i=0$, and $\Lambda_C/A_1^n \cong C$.
Moreover ${V_{A_1,1}}^{\otimes n} \subset V_{\Lambda_C}$ is a code type extension with respect to the code $C$ and the non-trivial module of $V_{A_1,1}$.

\begin{Lemma}\label{lem: code extension cocycle}
Let $W^{\otimes n} \subset V$ be a simple current of code type with respect to $(W,M,C)$.
For $i \in C$, let $p_i$ be the orthogonal projection of $V$ onto $M_i$.
Choose basis vectors for $\vertex{1}{1}{0}$ and $\vertex{0}{1}{1}$, and for $i,p \in C$ and $j = p+i$ let $\cY_{pi}^j \in \vertex{j}{p}{i}$ be the basis vector given as a tensor product of the distinguished bases.
Let $c(p,i) \in \bbC^\times$ be the scalars such that $p_{j}Y^V(a,x)p_i = c(p,i)\cY_{pi}^j(a,x)$ for all $a \in M_p$.
Then $c$ satisfies the twisted cocycle condition
$$
c(p,q+i)c(q,i) = \varepsilon^{p \cdot q} c(q,p+i)c(p,i)
$$
where $\varepsilon = 1$ if $M$ is bosonic, $\varepsilon=-1$ if $M$ is fermionic, and $\varepsilon = i$ if $M$ is semionic.
\end{Lemma}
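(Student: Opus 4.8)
The plan is to derive the twisted cocycle identity by computing matrix coefficients of the triple product $p_\ell Y^V(a_p, z) Y^V(a_q, w) p_i$ in two ways and comparing, using the braiding relation recorded in Lemma \ref{lem: codeword braid statistics}. Here $a_p \in M_p$ and $a_q \in M_q$, and the intermediate and final modules are determined by admissibility: $Y^V(a_q,w)$ sends $M_i$ into $M_{q+i}$, then $Y^V(a_p,z)$ sends that into $M_{p+q+i}$. First I would fix the interval of convergence $\abs z > \abs w$ and expand, inserting projections, so that
\begin{equation*}
p_{p+q+i} Y^V(a_p,z) Y^V(a_q,w) p_i = \big(p_{p+q+i}Y^V(a_p,z)p_{q+i}\big)\big(p_{q+i}Y^V(a_q,w)p_i\big) = c(p,q+i)\,c(q,i)\,\cY^{p+q+i}_{p,\,q+i}(a_p,z)\,\cY^{q+i}_{q,i}(a_q,w),
\end{equation*}
using the definition of the scalars $c$ given in the statement.

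Next I would rewrite the \emph{same} triple product in the region $\abs w > \abs z$ by swapping the order of the two vertex operators, again inserting the appropriate projections. This produces $c(q,p+i)\,c(p,i)$ times $\cY^{p+q+i}_{q,\,p+i}(a_q,w)\,\cY^{p+i}_{p,i}(a_p,z)$. The key point is that both expressions are matrix coefficients of \emph{one} genuine operator-valued product on $V$ (there is only one $V$, and the module-algebra products extend to a single rational function), so by the commutativity/associativity of products for the vertex operator superalgebra $Y^V$ the two resulting rational functions agree as elements of $\bbC[z^{\pm1},w^{\pm1},(z-w)^{-1}]$ after analytic continuation. This is precisely the setup in which Lemma \ref{lem: codeword braid statistics} applies: taking there the substitution $(q,p)\mapsto$ the pair of charges and matching the admissible labels, the braiding relation says
\begin{equation*}
\cY^{p+q+i}_{q,\,p+i}\cdot \cY^{p+i}_{p,i} \,\sim\, \varepsilon^{\,p\cdot q}\,\cY^{p+q+i}_{p,\,q+i}\cdot \cY^{q+i}_{q,i},
\end{equation*}
with $\varepsilon=1,-1$ in the bosonic/fermionic cases directly, and $\varepsilon=i$ (so $\varepsilon^{p\cdot q}=(-1)^{\frac12 p\cdot q}$) in the semionic case — here the hypothesis $C\subset C^\ast$ (and double-evenness when semionic) guarantees all the relevant dot products $p\cdot\ell$, $q\cdot i$, $p\cdot q$ are even, so Lemma \ref{lem: codeword braid statistics}(2) is applicable and all powers of the formal variables are integral.

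Equating the two computations of the single rational function and matching coefficients of the common intertwiner basis vectors $\cY^{p+q+i}_{p,\,q+i}\cdot\cY^{q+i}_{q,i}$ then yields
\begin{equation*}
c(p,q+i)\,c(q,i) = \varepsilon^{\,p\cdot q}\,c(q,p+i)\,c(p,i),
\end{equation*}
which is the asserted twisted cocycle condition. I expect the main obstacle to be bookkeeping rather than conceptual: one must verify that the distinguished tensor-product basis vectors $\cY^j_{pi}$ are exactly the ones appearing on both sides (so that the scalars $c$ are unambiguous and the one-dimensionality $\dim\vertex{j}{p}{i}=\delta_{p+i,j}$ lets us read off coefficients), and one must be careful that the single-valued formulation of semionic braiding in Lemma \ref{lem: codeword braid statistics}(2) is invoked only when the integrality conditions hold, so that no spurious branch-ambiguity signs enter. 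The appeal to associativity/commutativity of products for $Y^V$ — i.e.\ that the two expansions really are boundary values of the same rational function — is the substantive analytic input, but it is standard for a rational vertex operator algebra and is precisely what the $\sim$ relation encodes.
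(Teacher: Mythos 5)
Your proposal is correct and follows essentially the same route as the paper: insert the projections $p_{q+i}$ (resp.\ $p_{p+i}$) to express both orderings of $p_{k}Y^V(\cdot,z)Y^V(\cdot,w)p_i$ in terms of the scalars $c$ and the distinguished intertwiners, invoke locality of $Y^V$ to identify the two expansions as one rational function, apply Lemma \ref{lem: codeword braid statistics} to braid, and compare coefficients using $\dim\vertex{k}{p}{\ell}\le 1$. Your explicit verification that $p\cdot\ell$, $q\cdot i$, $p\cdot q$ are even in the semionic case (from self-orthogonality and double evenness of $C$) is a point the paper leaves implicit but is handled correctly.
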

\begin{proof}
Since $V$ is a VOA we have the commutativity relation
$$
p_{k} \, Y^V(a,x)Y^V(b,y) \, p_{i} \,\, \sim \,\, p_{k}\, Y^V(b,y)Y^V(a,x) \, p_{i}
$$
for any $i,k \in C$.
Let $p,q \in C$, let $j = p+i$, $\ell = q+i$ and suppose that $q + p + i = k$.
Then for $a \in M_p$ and $b \in M_q$ we have by Lemma \ref{lem: codeword braid statistics}
\begin{align*}
p_{k} Y^V(a,x)Y^V(b,y) p_{i} &= c(q,j)\,c(p,i) \,\, \cY_{qj}^{k}(a,x)\cY_{pi}^j(b,y)\\
&\sim \varepsilon^{p \cdot q} \, c(q,j) \, c(p,i) \,\, \cY_{p\ell}^k(b,y)\cY_{qi}^{\ell}(a,x)\\
&= \left(\frac{\varepsilon^{p \cdot q} c(q,j)c(p,i)}{c(p,\ell)c(q,i)}\right) p_{k} Y^V(b,y)Y^V(a,x) p_{i}.
\end{align*}
Hence $\varepsilon^{p \cdot q} c(q,j)c(p,i) = c(p,\ell)c(q,i)$, as desired.
\end{proof}

\subsection{Bounded localized vertex operators for code extensions}\label{sec: BLVO for codes}

We now state the main results of the section on bounded localized vertex operators for code extensions.
For bosonic and fermionic type, we have:

\begin{Proposition}\label{prop: BLVO for easier codes}
Let $W$ be a simple rational unitary VOA, and let $M$ be a self-dual simple current of $W$ which is bosonic or fermionic.
Suppose that there exists a simple unitary vertex operator superalgebra $\tilde V$ with bounded localized vertex operators containing $W$ as a conformal subalgebra, and such that $M$ is a $W$-submodule of $\tilde V$.
Then any simple current extensions of $W^{\otimes n}$ of code type with respect to $M$ has bounded localized vertex operators.
\end{Proposition}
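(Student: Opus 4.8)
The plan is to build the net $\cA_V$ out of the ambient theory $\tilde V^{\otimes n}$ and to reduce the locality of $\cA_V$ to the super-commutativity already present there, with the braiding identity of Lemma \ref{lem: codeword braid statistics} and the cocycle relation of Lemma \ref{lem: code extension cocycle} accounting for the passage from a super vertex algebra to the even code extension. First I would record the ambient structure: since $\tilde V$ has bounded localized vertex operators, so does $\tilde V^{\otimes n}$ by Proposition \ref{prop: BLVO tensor product}, and (the unitary conformal subalgebra) $W^{\otimes n}$ sits in both $\tilde V^{\otimes n}$ and $V$ sharing the same conformal vector. Hence $\cH_V=\bigoplus_{i\in C}\cH_{M_i}$ carries a single representation of the Virasoro net of central charge $n c_W$, through which every generalized annulus acts diagonally. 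Because each $M_i=M_{i(1)}\otimes\cdots\otimes M_{i(n)}$ embeds as a $W^{\otimes n}$-submodule of $\tilde V^{\otimes n}$ (using $M\subset\tilde V$), Theorem \ref{thmConstructionOfNetRepresentations} applied to the vacuum module $\tilde V^{\otimes n}$ shows that each $\pi^{M_i}$ exists; in particular $A\in\Ann^{in}(\cH_{M_i})$ and $B\in\Ann^{out}(\cH_{M_i})$ for $(B,A)\in\scA_I$, so $A$ and $B$ act on $\cH_V$.

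Next I would establish bounded insertions. For $a\in M_p$ the state--field map decomposes as $Y^V(a,z)=\sum_{i\in C} p_{M_{p+i}}Y^V(a,z)p_{M_i}=\sum_{i\in C}c(p,i)\,\cY_{pi}^{p+i}(a,z)$ as in Lemma \ref{lem: code extension cocycle}, where each $\cY_{pi}^{p+i}$ is, up to a scalar, the intertwining operator obtained by projecting $Y^{\tilde V^{\otimes n}}$ onto $M_{p+i}$ and $M_i$. Theorem \ref{thm: intertwiners local} (with $\tilde M=\tilde V^{\otimes n}$, $W=W^{\otimes n}$, charge $M_p$, input $M_i$, output $M_{p+i}$) gives $\cY_{pi}^{p+i}\in I_{loc}\binom{M_{p+i}}{M_p\,M_i}$, so for $(B,A)\in\scA_I$ and $z\in\Int(B,A)$ one has $z\in\Int_{\cY}(B,A)$ and $B\cY_{pi}^{p+i}(a,z)A$ is a bounded local intertwiner. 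Summing the finitely many diagonal blocks shows $BY^V(a,z)A$ is bounded as a map $\cH_V\otimes\cH_V\to\cH_V$, so $V$ has bounded insertions. (Integrality of the conformal weights of $V$ forces the twisting exponent $\Delta$ to be an integer, so the blocks genuinely reassemble $Y^V$.)

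The main work, and the main obstacle, is locality: $\cA_V(I)$ and $\cA_V(J)$ commute when $I\cap J=\emptyset$. It suffices to commute generators $BY^V(a,z)A$ and $\tilde B Y^V(b,w)\tilde A$ with $a\in M_p$, $b\in M_q$, $(B,A)\in\scA_I$, $(\tilde B,\tilde A)\in\scA_J$, $J\subset I'$. Expanding both products in the block decomposition, the $(p+q+i,i)$-blocks of the two orderings are $c(p,q+i)c(q,i)\,(B\cY_{p,q+i}^{p+q+i}(a,z)A)(\tilde B\cY_{qi}^{q+i}(b,w)\tilde A)$ and $c(q,p+i)c(p,i)\,(\tilde B\cY_{q,p+i}^{p+q+i}(b,w)\tilde A)(B\cY_{pi}^{p+i}(a,z)A)$. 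Thus commutation reduces to a braiding identity for the net intertwiners,
\begin{equation*}
(B\cY_{p,q+i}^{p+q+i}(a,z)A)(\tilde B\cY_{qi}^{q+i}(b,w)\tilde A)=\varepsilon^{p\cdot q}\,(\tilde B\cY_{q,p+i}^{p+q+i}(b,w)\tilde A)(B\cY_{pi}^{p+i}(a,z)A),
\end{equation*}
which, combined with the twisted cocycle relation $c(p,q+i)c(q,i)=\varepsilon^{p\cdot q}c(q,p+i)c(p,i)$ of Lemma \ref{lem: code extension cocycle} and $\varepsilon^2=1$, yields equality of the two blocks and hence of the operators.

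To prove the braiding identity I would realize these block operators, via Theorem \ref{thm: intertwiners local}, as compressions of genuine operators of the Fermi net $\cA_{\tilde V^{\otimes n}}=\cA_{\tilde V}^{\hotimes n}$, and exploit that $\cA_{\tilde V}(I)$ and $\cA_{\tilde V}(J)$ super-commute on each tensor factor. Working factor-by-factor in the graded tensor product, the super-commutation sign on the $k$-th copy is $+1$ unless both $a_k$ and $b_k$ lie in the copy of $M$, in which case it equals $(-1)^{p(M)}=\varepsilon$ (with $M\subset\tilde V^0$ for bosonic and $M\subset\tilde V^1$ for fermionic $M$, by Lemma \ref{lem: easy fermionic and bosonic simple currents}); there are exactly $p\cdot q$ such copies, producing the global factor $\varepsilon^{p\cdot q}$. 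The delicate point, and the step I expect to require the most care, is controlling how the sector projections and the auxiliary operators $q$ from the proof of Theorem \ref{thm: intertwiners local} interact with operators localized in the disjoint interval, and tracking the Koszul signs so that the per-factor statistics assemble to precisely $\varepsilon^{p\cdot q}$ rather than the naive total parity $(-1)^{p(a)p(b)}$; here the simple-current fusion rules, which force the intermediate channels, are what make the bookkeeping determinate. Once locality is established, $\cA_V$ is a conformal net and $V$ has bounded localized vertex operators.
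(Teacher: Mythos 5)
Your proposal is correct and follows exactly the route the paper intends: the paper omits the proof of this proposition, stating only that it ``uses the same idea but is significantly simpler'' than the semionic Theorem \ref{thm: BLVO for harder codes}, and the simplification you exploit is precisely the right one --- since $M$ sits in $\tilde V^0$ or $\tilde V^1$ with $W$ conformal, the net-level braiding constant $\varepsilon^{p\cdot q}$ can be read off directly from the graded locality of $\cA_{\tilde V^{\otimes n}}$, so the transport-unitary machinery of Lemmas \ref{lem: transport unitary}--\ref{lem: ACFT braiding} (needed in the semionic case to pin down a fourth root of unity) is not required. The one step you flag as delicate does close: the Koszul signs relating the compression of the graded tensor product $B\,Y^{\tilde V^{\otimes n}}(a,z)A$ to the plain tensor product of one-factor blocks contribute $(-1)^{p(M)(\sigma(q+i,p)+\sigma(i,q)+\sigma(p+i,q)+\sigma(i,p))}$ with $\sigma(\ell,p)=\sum_{k<k'}\ell(k)p(k')$, which combines with the naive total parity $(-1)^{p(M)|p||q|}$ to give exactly $\varepsilon^{p\cdot q}$, matching the twisted cocycle of Lemma \ref{lem: code extension cocycle}.
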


And for semionic type, we have:

\begin{Theorem}\label{thm: BLVO for harder codes}
Let $W$ be a simple rational unitary VOA, and let $M$ be a self-dual simple current of $W$ which is semionic.
Suppose that there exists a simple unitary vertex operator superalgebra $\tilde V$  with bounded localized vertex operators containing $W \otimes W$ as a conformal subalgebra, and such that $M \otimes M$ is a $W \otimes W$-submodule of $\tilde V$.
Then any simple current extensions of $W^{\otimes n}$ of code type with respect to $M$ has bounded localized vertex operators.
\end{Theorem}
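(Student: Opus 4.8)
The plan is to reduce the semionic case to the already-established fermionic case (Proposition~\ref{prop: BLVO for easier codes}) via a doubling device, exploiting that $\hat M := M \otimes M$ is a \emph{fermionic} self-dual simple current of $W \otimes W$. First I would record the combinatorics: since $C$ is doubly even it is self-orthogonal ($p \cdot q \in 2\bbZ$ for $p,q \in C$, as $2(p\cdot q) = (p+q)\cdot(p+q) - p\cdot p - q\cdot q \in 4\bbZ$), the conformal weights of each $M_i$ lie in $\bbZ$ (so $V$ is an even VOA and $\cA_V$ should be an honest local conformal net), and likewise $V_\Delta := \bigoplus_{i \in C} M_i \otimes M_i$ is an even VOA. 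The role of $V_\Delta$ is that, after reordering tensor factors $W^{\otimes 2n} \cong (W\otimes W)^{\otimes n}$, it is exactly a simple current extension of $(W\otimes W)^{\otimes n}$ of code type with respect to the fermionic current $\hat M$ and the (self-orthogonal) code $C$; by Proposition~\ref{prop: unique simple current VOA} it is the one determined by this module structure. Since $\tilde V$ has bounded localized vertex operators and contains $W\otimes W$ conformally with $\hat M$ a submodule, Proposition~\ref{prop: BLVO for easier codes} applies to $V_\Delta$ and shows that $V_\Delta$ has bounded localized vertex operators.

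Next I would transport the representation-theoretic data to $V$. Realizing $V_\Delta$ as the diagonal unitary subalgebra of $V \otimes V$, we have $\cH_{V_\Delta} = \bigoplus_{i \in C} \cH_{M_i} \otimes \cH_{M_i} \subset \cH_V \otimes \cH_V$. Note first that $W$, hence $W^{\otimes n}$, has bounded localized vertex operators, by applying Proposition~\ref{prop: BLVO subalgebra} to the unitary subalgebra $W\otimes W \subset \tilde V$ and then Proposition~\ref{prop: BLVO tensor product}. Each $M_i \otimes M_i$ is a $W^{\otimes 2n}$-submodule of $V_\Delta$, so $\pi^{M_i \otimes M_i}$ exists by Theorem~\ref{thmConstructionOfNetRepresentations} (taking $V_\Delta$ as its own module), and the tensor splitting of Proposition~\ref{propTensorProductModuleExists} then yields the existence of $\pi^{M_i}$ as a representation of $\cA_{W^{\otimes n}}$; in particular every $A \in \scA^{in}_I$ acts on $\cH_{M_i}$ as an incoming generalized annulus, and similarly for $B \in \scA^{out}_I$. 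Writing $\tau_i := c(p,i)\, B\cY_{pi}^{p+i}(a,z)A$ for the $(M_i \to M_{p+i})$ block of $B Y^V(a,z)A$ (using Lemma~\ref{lem: code extension cocycle}), the key observation is that $B Y^{V_\Delta}(a\otimes a,z)A$ is precisely the restriction of $B Y^{V}(a,z)A \otimes BY^{V}(a,z)A$ to the invariant subspace $\cH_{V_\Delta}$, so its blocks are the $\tau_i \otimes \tau_i$. Bounded insertions for $V$ (Definition~\ref{def: bounded insertions}) then follow: $V_\Delta$ has bounded insertions, so each $\tau_i \otimes \tau_i$ is bounded, whence each $\tau_i$ is bounded (since $\|T \otimes T\| = \|T\|^2$); summing over the finitely many blocks $i \in C$ and charges $p$ gives boundedness of $BY^V(a,z)A$, and the regularized estimate on $\Int(B,A)$ is identical, giving $\Int(B,A) \subset \Int_{Y^V}(B,A)$.

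The crux is locality of $\cA_V$. Fix disjoint intervals $I,J$, annuli $(B,A)\in\scA_I$, $(\tilde B,\tilde A)\in\scA_J$, charges $a \in M_p$, $b \in M_q$, and points $z \in \Int(B,A)$, $w \in \Int(\tilde B, \tilde A)$; write $X = BY^V(a,z)A$ and $Y = \tilde B Y^V(b,w)\tilde A$ on $\cH_V$, so the goal is $XY = YX$. Because $X$ and $Y$ each shift the $C$-grading by the same amount on both tensor legs, $X\otimes X$ and $Y \otimes Y$ preserve $\cH_{V_\Delta}$, and their restrictions are exactly the generators $B Y^{V_\Delta}(a\otimes a,z)A \in \cA_{V_\Delta}(I)$ and $\tilde B Y^{V_\Delta}(b\otimes b,w)\tilde A \in \cA_{V_\Delta}(J)$. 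As $V_\Delta$ is even with bounded localized vertex operators, these commute, and reading off the $(M_i \to M_{p+q+i})$ block gives $P \otimes P = Q \otimes Q$, where $P$ and $Q$ are the two sides of the sought identity $XY=YX$. Since $T\otimes T = S\otimes S$ forces $T = \pm S$, this yields $XY = YX$ up to a single sign. I would then fix the sign to $+1$ by matching the formal braiding of the single intertwiners (Lemma~\ref{lem: codeword braid statistics}(2)) against the twisted cocycle identity for $c$ (Lemma~\ref{lem: code extension cocycle}, $\varepsilon = i$): on the self-orthogonal code these are compatible precisely because $i^{\,p\cdot q} = (-1)^{\frac12(p\cdot q)}$, which is exactly the statement that the assembled operator $BY^V(a,z)A$ inherits the genuine ($+1$) commutativity encoded by locality of the VOA $V$.

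The main obstacle I anticipate is this final sign determination: promoting the formal/analytic braiding relations for the semionic intertwining operators, where fractional powers of the formal variable occur, to an honest commutation of the bounded localized operators in the geometry of two disjoint annuli. The doubling device is what makes this tractable, since it transports the hard commutation into the fermionic net $\cA_{V_\Delta}$ (where it is automatic from Proposition~\ref{prop: BLVO for easier codes}) and reduces the remaining ambiguity to a single element of $\{\pm 1\}$, which the arithmetic identity on the doubly even code resolves. Having shown that $V$ has bounded insertions and that $\cA_V(I)$ and $\cA_V(J)$ commute for disjoint $I,J$, $V$ has bounded localized vertex operators (Definition~\ref{def: bounded localized vertex operators}); one could alternatively phrase the conclusion through $V \otimes V$ and Proposition~\ref{prop: BLVO tensor product}.
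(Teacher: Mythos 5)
Your doubling device is sound as far as it goes: $V_\Delta=\bigoplus_{i\in C}M_i\otimes M_i$ is a unitary conformal subalgebra of $V\otimes V$ and a code-type extension of $(W\otimes W)^{\otimes n}$ by the fermionic current $M\otimes M$, so Proposition \ref{prop: BLVO for easier codes} does give it bounded localized vertex operators; transporting bounded insertions back to $V$ block by block is correct (it is the same computation as \eqref{eqn: code compression}); and reading off blocks of the commuting restricted generators $(X\otimes X)|_{\cH_{V_\Delta}}$, $(Y\otimes Y)|_{\cH_{V_\Delta}}$ legitimately yields $p_kXYp_i=\pm\,p_kYXp_i$. This parallels the paper's argument, which doubles the \emph{current} inside the ambient superalgebra $\tilde V\supset W\otimes W$ rather than the whole code extension, and extracts from the anticommutation of the odd matrices $X(a),\tilde X(\tilde a)\in\cA_{\tilde V}$ the braiding of the single localized intertwiners up to a primitive fourth root of unity (equation \eqref{eqn: first braiding}).

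The genuine gap is the residual sign, which is not a loose end but the crux of the theorem, and the argument you offer for it does not close it. The identity $\varepsilon^{p\cdot q}=(-1)^{\frac12(p\cdot q)}$ combined with Lemmas \ref{lem: codeword braid statistics} and \ref{lem: code extension cocycle} merely re-expresses the locality of the VOA $V$: a statement about analytic continuation of matrix coefficients of formal products of fields on the domains $\abs{z}>\abs{w}$ and $\abs{w}>\abs{z}$. It carries no information about which of the two signs relates the two \emph{operator compositions} of $\tilde B\cY(b,w)\tilde A$ and $B\cY(a,z)A$, where $z,w$ sit at fixed points of annuli localized in disjoint intervals and no convergence domain governs the order of composition; conflating the two is exactly the passage from VOA locality to net locality that the whole construction is designed to justify. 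Unwinding your $P=\pm Q$ through the cocycle $c$, the unresolved sign is precisely the question of whether the operator braiding constant on the block with $i=0$ agrees with the one on the block with $i=1$ --- the issue the paper isolates explicitly after \eqref{eqn: first braiding} (``this argument provides no way of obtaining the crucial fact that $\omega'=\omega$''). Resolving it is where the real work lies: the paper builds transport unitaries $u\in\Hom_{\cA_W(K')}(\cH_W,\cH_M)$ and $v\in\Hom_{\cA_W(K')}(\cH_M,\cH_W)$ out of an off-diagonal unitary of $\cA_{\tilde V}(K)$ for an interval $K\supset I\cup J$ (Lemma \ref{lem: transport unitary}), and uses the transport identity of Lemma \ref{lem: tranport for tau} to carry one off-diagonal block of the braiding into the other (computation \eqref{eqn: second braiding}). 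Your proof needs this von Neumann algebraic input, or a genuine substitute that pins down the branch; note also that this is the step where the hypothesis that $I\cup J$ is contained in a common interval enters.
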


We only prove Theorem \ref{thm: BLVO for harder codes}, as Proposition \ref{prop: BLVO for easier codes} uses the same idea but is significantly simpler.
We prove some technical lemmas and then given the proof of Theorem \ref{thm: BLVO for harder codes} immediately following the proof of Lemma \ref{lem: ACFT braiding}.

\begin{Lemma}\label{lem: transport unitary}
Let $\cA$ be a Fermi conformal net, let $\cC$ be the even subnet of $\cA$, and let $\cC_0=\cC|_{\cH_\cA^0}$ be the restriction of $\cC$ to its vacuum Hilbert space.
Suppose that $\cC_0$ factors $\cC_0 = \cB \otimes \cB$, and that $\cH_{\cA}$ decomposes as $(\cH_0 \otimes \cH_0) \oplus (\cH_1 \otimes \cH_1)$ as irreducible $\cB \otimes \cB$ sectors, and denote by $\pi$ the action of $\cB$ on $\cH_1$.
Suppose there exists a non-zero
\begin{equation}\label{eqn: off diagonal matrix}
X = \begin{pmatrix} 0 & x \otimes x\\ y \otimes y & 0 \end{pmatrix} \in \cA(I).
\end{equation}
Then there exists unitaries $u \in \Hom_{\cB(I^\prime)}(\cH_0, \cH_1)$ and $v \in \Hom_{\cB(I^\prime)}(\cH_1,\cH_0)$ such that:
\begin{enumerate}
\item for every $a \in \cB(I)$, $\pi_I(u^*\pi_I(a)u) = v^*av$
\item for some $\varepsilon = \pm 1$, we have $\pi_I(v \pi_I(a) u) = \varepsilon uav$ for all $a \in \cB(I)$.
\item for every matrix of operators satisfying \eqref{eqn: off diagonal matrix}, $\pi_I(vx) = \pm uy$ and $\pi_I(yu) = \pm xv$ (for potentially distinct signs which may depend on $x$ and $y$).
\end{enumerate}
\end{Lemma}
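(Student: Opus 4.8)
The plan is to manufacture from the given $X$ a single odd \emph{unitary} of the same product form, i.e.\ an $\hat X\in\cA(I)$ acting on the even space $\cH_0\otimes\cH_0$ by $u\otimes u$ and on the odd space $\cH_1\otimes\cH_1$ by $v\otimes v$, with $u\in\Hom_{\cB(I^\prime)}(\cH_0,\cH_1)$ and $v\in\Hom_{\cB(I^\prime)}(\cH_1,\cH_0)$ unitary (here $x,u\colon\cH_0\to\cH_1$ and $y,v\colon\cH_1\to\cH_0$, the directions being forced by the requirement that the products in (3) compose). Once $\hat X$ is in hand, all three assertions follow by conjugating or multiplying suitable elements of the even subnet and comparing the two sectors. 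The one recurring tool is what I will call \emph{sector consistency}: since $\cC$ is the even part of $\cA$, every even element of $\cA(I)$ lies in $\cC(I)$; its action on $\cH_0\otimes\cH_0$ is some $c_0\in\cC_0(I)=\cB(I)\otimes\cB(I)$, and by normality its action on $\cH_1\otimes\cH_1$ is exactly $(\pi_I\otimes\pi_I)(c_0)$. This, together with the elementary cancellations ``$P\otimes P=Q\otimes Q\neq 0\Rightarrow Q=\pm P$'' and ``$P\otimes 1=Q\otimes 1\Rightarrow P=Q$'', converts every block identity into an identity of single operators.

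First I would record the easy structural facts. Since $X$ is odd and $b_L\in\cC(I^\prime)$ (the even element acting by $b\otimes 1$ on the vacuum space, hence by $\pi_{I^\prime}(b)\otimes 1$ on $\cH_1\otimes\cH_1$) is even, disjointness of $I,I^\prime$ gives $Xb_L=b_LX$; writing this out block by block and cancelling the common tensor factor yields $x\pi_{I^\prime}(b)=bx$ and $yb=\pi_{I^\prime}(b)y$, i.e.\ $x\in\Hom_{\cB(I^\prime)}(\cH_0,\cH_1)$ and $y\in\Hom_{\cB(I^\prime)}(\cH_1,\cH_0)$. Moreover $X^*X\in\cC(I)$, and sector consistency forces $y^*y\otimes y^*y=\pi_I(x^*x)\otimes\pi_I(x^*x)$, whence $y^*y=\pi_I(x^*x)$; since $\pi$ is faithful this shows $X\neq 0$ makes \emph{both} blocks nonzero.

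The main obstacle is producing the unitary $\hat X$ while staying inside the product-form graded elements. I would polar decompose $X=U_X|X|$; as $|X|$ is diagonal of product form, the partial isometry $U_X\in\cA(I)$ is again of product form, with blocks the polar parts $u_x,u_y$. Applying sector consistency to $U_X^*U_X,\,U_XU_X^*\in\cC(I)$ shows the source/range projections of $u_y$ are the $\pi_I$-images of the projections $s_0=u_x^*u_x$ and $r_0=u_yu_y^*$ in $\cB(I)$. Now using that $\cB(I)$ is a type~III factor (Theorem~\ref{thmFermiNetProps}), choose isometries $\nu,\mu\in\cB(I)$ with $\nu^*\nu=\mu^*\mu=1$, $\nu\nu^*=s_0$, $\mu\mu^*=r_0$, and set $\hat X=e_{\mu^*}\,U_X\,e_{\nu}$, where $e_c\in\cC(I)$ denotes the even element acting by $c\otimes c$ on the vacuum space. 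A direct block computation shows the two blocks of $\hat X$ are $u=\pi_I(\mu)^*u_x\nu$ and $v=\mu^*u_y\pi_I(\nu)$, both unitary $\cB(I^\prime)$-intertwiners, so $\hat X$ is the desired odd unitary. This completion is delicate precisely because the complementary projections are \emph{not} of product form; it is sector consistency (forcing the defect projections to be $\pi_I$ of projections in $\cB(I)$) that makes the product-form completion possible.

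Finally I would read off the three relations. For (1) and (2), let $a_\cC\in\cC(I)$ act by $a\otimes 1$ on the vacuum space. Then $\hat X^*a_\cC\hat X\in\cC(I)$ has vacuum block $u^*\pi_I(a)u\otimes 1$ and odd-sector block $v^*av\otimes 1$; sector consistency plus the no-sign cancellation give $\pi_I(u^*\pi_I(a)u)=v^*av$, which is (1). Instead $\hat X a_\cC\hat X\in\cC(I)$ has vacuum block $v\pi_I(a)u\otimes vu$ and odd block $uav\otimes uv$, so $\pi_I(v\pi_I(a)u)\otimes\pi_I(vu)=uav\otimes uv$; the $a$-independent second factor forces the comparison scalar to be a fixed $\lambda$, and $a=1$ forces $\lambda^2=1$, giving (2) with $\varepsilon=\pm1$. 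For (3), given any product-form odd $X'$ with blocks $x',y'$ (again $\cB(I^\prime)$-intertwiners by the first step), the even elements $\hat X X'$ and $X'\hat X$ lie in $\cC(I)$ with product-form blocks; sector consistency together with the $\pm$ cancellation yields $\pi_I(vx')=\pm uy'$ and $\pi_I(y'u)=\pm x'v$, the signs depending on $X'$ through the square-root ambiguity. The genuinely hard step is the product-form unitarization in the third paragraph; everything else is bookkeeping once sector consistency is set up.
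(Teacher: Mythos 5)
Your proof is correct and follows essentially the same route as the paper's: the same ``sector consistency'' observation for even product-form elements of $\cA(I)$, the same polar-decomposition-plus-type-III argument to replace $X$ by an odd unitary of product form, and the same conjugation/multiplication computations to extract the three identities. The only differences are cosmetic --- your test elements $a\otimes 1$ neatly avoid the paper's linearity argument for pinning down the signs in (1) and (2), and the displayed intertwining relations for $x$ and $y$ in your second paragraph have their directions swapped (though the conclusions you draw from them are stated correctly).
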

\begin{proof}
Throughout the proof we will use the fact that if $A:=\begin{pmatrix}a \otimes a & 0\\ 0 & b\otimes b\end{pmatrix} \in \cA(I)$, then $A \in \cC(I)$ and thus $\pi_I(a) \otimes \pi_I(a) = b \otimes b$, which implies that $\pi_I(a) = \pm b$.
Since $A$ only determines $a$ and $b$ up to a sign, the value of $\pm$ is arbitrary and simply reflects our choice of $a$ and $b$.
The sign in (3) is of a similar nature, while the sign $\varepsilon$ in (2) does not depend on any arbitrary choices, and we expect that it is related to the Frobenius-Schur indicator of $\pi$.

We first show that we can find an operator of the form \eqref{eqn: off diagonal matrix} in $\cA(I)$ with $x$ and $y$ unitary.
Observe that $XX^* = \begin{pmatrix}xx^* \otimes xx^* & 0\\ 0 & yy^* \otimes yy^*\end{pmatrix} \in \cA(I)$, and thus by the preceding observation, if $x$ is unitary then, so is $y$.
Replacing $X$ with the partial isometry in its polar decomposition, we may assume that $x$ is a partial isometry.
Then since the source and target projections $xx^*$ and $x^*x$ lie in $\cB(I)$ and $\cB(I)$ is a type $III$ factor, there are partial isometries $w_1,w_2 \in \cB(I)$ such that $w_1xw_2$ is unitary.
Setting $W_i = \begin{pmatrix} w_i \otimes w_i & 0\\ 0 & \pi_I(w_i) \otimes \pi_I(w_i) \end{pmatrix} \in \cA(I)$, we have $U:=W_1XW_2 \in \cA(I)$, and by construction $U$ is of the form $U = \begin{pmatrix} 0 & u \otimes u\\ v \otimes v & 0\end{pmatrix}$ with $u=w_1xw_2$ unitary.
By the above remarks, $v$ is unitary as well.

Now let $a \in \cB(I^\prime)$, and set $A = \begin{pmatrix} a \otimes a & 0\\ 0 & \pi_{I^\prime}(a) \otimes \pi_{I^\prime}(a) \end{pmatrix} \in \cA(I^\prime)$.
Thus $U$ and $A$ commute, and examining the identity $UA = AU$ we obtain
$$
\pi_{I^\prime}(a)u \otimes \pi_{I^\prime}(a)u = ua \otimes ua.
$$
Thus $\pi_{I^\prime}(a)u = \pm ua$, and since both sides are linear in $a$ the sign $\pm$ must be independent of $a$.
Evaluating at $a=1$ we see that the sign is $+1$, and we conclude $u \in \Hom_{\cB(I^\prime)}(\cH_0,\cH_1)$.

Next, let $a \in \cB(I)$ and let $A$ be as above.
Then $UAU$ and $U^*AU$ are diagonal operators in $\cC(I)$, and examining the diagonal elements we obtain
$$
\pi_I(v\pi_I(a)u) = \pm uav \quad \mbox{ and } \quad \pi_I(u^*\pi_I(a)u) = \pm v^* a v,
$$
respectively.
In both cases, the same linearly argument shows that the sign $\pm$ is independent of $a$, and in the latter case evaluating at $a=1$ reveals that the sign in that case is $+1$.
This establishes (1) and (2).
To establish (3), apply the same argument to $UX \in \cC(I)$ when $X$ is as in \eqref{eqn: off diagonal matrix}.
\end{proof}

Continuing with the setup of Lemma \ref{lem: transport unitary}, we define a map 
$$
\tau_I: \Hom_{\cB(I^\prime)}(\cH_0, \cH_1) \to \Hom_{\cB(I^\prime)}(\cH_1, \cH_0)
$$ 
by $\tau_I(x) = u^* \pi_I(vx)$.
By Haag duality, $vx \in \cB(I)$ and so $\pi_I(vx)$ is defined.
By construction, if $X$ is as in \eqref{eqn: off diagonal matrix} then $\tau_I(x) = \pm y$.

\begin{Lemma}\label{lem: tranport for tau}
Under the setup of Lemma \ref{lem: transport unitary}, 
$$
\pi_I(\tau_I(x_2)x_1) = \varepsilon \, x_2 \tau_I(x_1)
$$
for all $x_1,x_2 \in \Hom_{\cB(I^\prime)}(\cH_0,\cH_1)$.
\end{Lemma}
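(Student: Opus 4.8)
The plan is to reduce the identity to a computation inside the local algebra $\cB(I)$ by exploiting Haag duality and the two transport formulas already established in Lemma \ref{lem: transport unitary}. The crucial observation is that the unitary $u \in \Hom_{\cB(I^\prime)}(\cH_0,\cH_1)$ lets us trivialize the abstract intertwiners: for any $x_1 \in \Hom_{\cB(I^\prime)}(\cH_0,\cH_1)$, the composite $b_1 := u^* x_1 \colon \cH_0 \to \cH_0$ commutes with $\cB(I^\prime)$ in the vacuum representation, so by Haag duality $\End_{\cB(I^\prime)}(\cH_0)=\cB(I)$ we have $b_1 \in \cB(I)$ and $x_1 = u b_1$. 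This converts everything into manipulations of honest elements of $\cB(I)$, on which $\pi_I$ is a $*$-homomorphism and Lemma \ref{lem: transport unitary}(1)--(2) apply directly.

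First I would record the single scalar identity that drives the whole computation, namely the $a=1$ case of Lemma \ref{lem: transport unitary}(2):
\[
\pi_I(vu) = \varepsilon\, uv .
\]
Everything else is then bookkeeping. For the left-hand side, writing $x_1 = u b_1$ with $b_1 \in \cB(I)$ and setting $a := v x_2$, which lies in $\cB(I)$ as noted before the statement of the lemma, one gets $\tau_I(x_2)\,x_1 = \bigl(u^*\pi_I(a)u\bigr)\,b_1$, a product of two elements of $\cB(I)$. Applying $\pi_I$ as a homomorphism and then Lemma \ref{lem: transport unitary}(1) gives $\pi_I(\tau_I(x_2)x_1) = (v^* a v)\,\pi_I(b_1)$, and since $v^* a = v^* v x_2 = x_2$ (using $v^*v = 1_{\cH_1}$ and that $x_2$ maps into $\cH_1$), this equals $x_2 v\,\pi_I(b_1)$.

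For the right-hand side I would again substitute $x_1 = u b_1$ to obtain $\tau_I(x_1) = u^*\pi_I(vu b_1) = u^*\pi_I(vu)\pi_I(b_1) = u^*(\varepsilon uv)\pi_I(b_1) = \varepsilon\, v\,\pi_I(b_1)$, using the recorded identity and $u^*u = 1_{\cH_0}$. Hence $\varepsilon\, x_2 \tau_I(x_1) = \varepsilon^2 x_2 v\,\pi_I(b_1) = x_2 v\,\pi_I(b_1)$ because $\varepsilon^2 = 1$. Both sides thus reduce to $x_2 v\,\pi_I(b_1)$, which completes the argument.

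There is no deep obstacle here: the only real insight is the reduction $x_1 = u b_1$ via Haag duality together with the specialization $a=1$ of part (2). The remaining care is entirely in keeping the source and target Hilbert spaces straight, so that the unitarity relations $u^*u = 1_{\cH_0}$ and $v^*v = 1_{\cH_1}$ are applied to the correct factors and so that each intermediate operator genuinely lies in $\cB(I)$ (guaranteed, respectively, by the remark preceding the lemma for $vx_i$, and by the well-posedness of Lemma \ref{lem: transport unitary}(1) for $u^*\pi_I(a)u$). I would present the two one-line computations as the body of the proof and cite Lemma \ref{lem: transport unitary}(1),(2) and Haag duality at the appropriate points.
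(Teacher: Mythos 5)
Your proof is correct and uses the same essential ingredients as the paper's: Haag duality to realize the relevant composites as elements of $\cB(I)$, multiplicativity of $\pi_I$, and parts (1) and (2) (the latter at $a=1$) of Lemma \ref{lem: transport unitary}. The only cosmetic difference is that you trivialize $x_1 = u b_1$ and evaluate both sides separately (so you invoke $\varepsilon^2=1$), whereas the paper keeps $x_1$ wrapped as $vx_1$ and inserts the factor $u^*v^*$ in the middle; the computations are otherwise the same.
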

\begin{proof}
By definition, for $x \in \Hom_{\cB(I^\prime)}(\cH_0, \cH_1)$ we have 
\begin{equation}\label{eqn: left transport}
\pi_I(vx) = u \tau_I(x).
\end{equation}
On the other hand, applying \eqref{eqn: left transport} and (1) of Lemma \ref{lem: transport unitary} we have
\begin{equation}\label{eqn: right transport}
\pi_I(\tau_I(x)u) = 
\pi_I(u^*(u \tau_I(x))u) =
\pi_I(u^* \pi_I(vx) u) = 
v^*(vx)v = 
xv.
\end{equation}
Now combining \eqref{eqn: left transport} and \eqref{eqn: right transport} we have
\begin{equation}\label{eqn: almost done transport}
\pi_I\big(\tau_I(x_2)x_1\big) = 
\pi_I\big(\tau_I(x_2)u\big) \pi_I\big(u^*v^*\big) \pi_I\big(vx_1\big) =
x_2\big(v\pi_I(u^* v^*)u\big) \tau_I(x_1).
\end{equation}
Applying (2) of Lemma \ref{lem: transport unitary} to $a=1$ we see $\pi_I(vu) = \varepsilon uv$, and thus taking adjoints we see that $v \pi_I(u^* v^*) u = \varepsilon$.
Plugging this into \eqref{eqn: almost done transport} completes the proof.
\end{proof}

%\todo{
%Observe that for $x \in \cB(I^\prime)$, we have $\pi_{I^\prime}(x) = uxu^*$ since $u \in \Hom_{\cB(I^\prime)}(\cH_0,\cH_1)$.
%If $x \in \cB(I)$, then 
%$$
%\begin{pmatrix}0 & u^* \otimes u^*\\ u \otimes u\end{pmatrix}
%\begin{pmatrix}x & 0\\ 0 & x \end{pmatrix}
%\begin{pmatrix}0 & u^* \otimes u^*\\ u \otimes u\end{pmatrix}
%\in \cB(I)
%$$
%so $\pi_I(u^*\pi_I(x)lu) = uxu^*$.
%
%Alternatively, if $x \in \Hom_{\cB(I^\prime)}(\cH_0,\cH_1)$, then $u^*x \in \cB(I)$ and
%we may define $\tau_I: \Hom_{\cB(I^\prime)}(\cH_0,\cH_1) \to \Hom_{\cB(I^\prime)}(\cH_1,\cH_0)$ by $\tau_I(x) = u^*\pi_I(u^*x)$.
%
%$$
%x \mapsto u\tau_I(x)u
%$$
%an involution?
%
%$$
%u\tau_I(u\tau_I(x)u)u = u \tau_I(\pi_I(u^*x)
%$$
%
%Observe that $\pi_I(\tau_I(x)u) = \pi_I(u^* \pi_I(u^*x)u) = u \tau_I(\pi_I(u^*x)u)$.
%
% we may define $\tau_I(x) = u^* \pi_{I}(u^*x) \in \Hom_{\cB(I^\prime)}(\cH_1,\cH_0)$.
%
%Observe that $pi_I(\tau_I(x)u)u^* = \pi_I($

We are now ready to establish the necessary braiding statistics for operators of the form $B\cY_{ij}^k(a,z)A$ for $i,j,k \in \bbF_2^n$.
Recall that these operators depend on a choice of $\log z$, and also sometimes on choices of sign for $A$ and $B$, as we now explain.
If $(B,A) \in \scA_I$, then as in the discussion at the beginning of Section \ref{sec: def of module} (and in particular \eqref{eqn: unpacking module def}), the unpacked meaning of $B\cY(a,z)A$ is
$$
B\cY(a,z)A = \pi^K_I(BU(\gamma)^*)U^K(\tilde \gamma) \cY(a,z) U^N(\tilde \gamma)^* \pi^M_I(U(\gamma)A)
$$
for an appropriate $\tilde \gamma \in \Diff^{(\infty)}_c(S^1)$.
When the $L_0$-eigenvalues of $K$ and $N$ differ by integers, this expression is independent of the choice of $\tilde \gamma$, but in the case of our intertwiners $\cY_{11}^0$ and $\cY^1_{10}$, the choice of $\tilde \gamma$ introduces a sign $\pm$ of ambiguity to $B\cY^0_{11}(a,z)A$ and $B\cY^1_{10}(a,z)A$, since $L_0$-eigenvalues of $M_0$ and $M_1$ differ by half-integers.
In the lemma below, we will refer to the choice of $\tilde \gamma$ (and thus of this sign for both $\cY_{11}^0$ and $\cY^1_{10}$) as choosing compatible actions of $A$ and $B$.

\begin{Lemma}\label{lem: ACFT braiding}
Let $W$ be a simple rational unitary VOA, and let $M$ be a semionic self-dual simple current of $W$.
Suppose that $\tilde V$ is a vertex operator superalgebra with bounded localized vertex operators such that $W \otimes W \subset \tilde V$, and such that $M \otimes M$ is a $W \otimes W$-submodule of $\tilde V$.
Let $(B,A) \in \scA_I$ and $z \in \Int(B,A)$.
Choose compatible actions of $A$ and $B$ (as in the preceding paragraph).
\begin{enumerate}
\item For all $i \in \bbF_2^n$, $A \in \Ann^{in}(\cH_{M_i})$ and $B \in \Ann^{out}(\cH_{M_i})$.
\item For all admissible $i,j,k \in \bbF_2^n$, $z \in \Int_{\cY}(B,A)$ for $\cY = \cY_{ij}^k$.
\item Let $J$ be an interval disjoint from $I$, and suppose that there exists an interval $K$ containing $I \cup J$.
Let $(\tilde B, \tilde A) \in \scA_J$, let $\tilde z \in \Int(\tilde B, \tilde A)$, and choose compatible actions of $\tilde B$ and $\tilde A$.
Then for a certain choice of $\log z$ and $\log \tilde z$,  every choice of basis vectors for $\vertex{0}{1}{1}$ and $\vertex{1}{1}{0}$, and for all $a \in M_p$ and $\tilde a \in M_q$
\begin{equation}\label{eqn: ACFT braiding}
\big(\tilde B\cY_{qj}^{k}(\tilde a, \tilde z)\tilde A\big) 
\big(B \cY_{pi}^j(a,z) A)
=
(-1)^{\frac12(p \cdot q)}
\big(B \cY_{p\ell}^k(a,z) A)
\big(\tilde B\cY_{qi}^{\ell}(\tilde a, \tilde z)\tilde A\big)
\end{equation}
for all $i,j,k,\ell,p,q \in \bbF_2^n$ for which all triples appearing are admissible and for which $p \cdot q$ is even.
%
%\footnote{%
%The reader is cautioned that the symbol $i$ appearing in the expression $i^{p \cdot q}$ is a square root of $-1$, and all other occurrences of this symbol are elements of $\bbF_2^n$.%
%}%
\end{enumerate}
\end{Lemma}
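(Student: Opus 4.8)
The plan is to treat the three assertions in increasing order of difficulty, with part (3) being the substantive one. For parts (1) and (2) I would first record that all the representations $\pi^{M_i}$ exist. Since $M\otimes M$ is a $W\otimes W$-submodule of the $\tilde V$-module $\tilde V$, and the vacuum representation $\pi^{\tilde V}$ trivially exists, Theorem \ref{thmConstructionOfNetRepresentations} gives that $\pi^{M\otimes M}$ exists as a representation of $\cA_{W\otimes W}=\cA_W\otimes\cA_W$; since $W$ has bounded localized vertex operators as a subalgebra of $\tilde V$ (Propositions \ref{prop: BLVO subalgebra} and \ref{prop: BLVO tensor product}), Proposition \ref{propTensorProductModuleExists} then splits this to show $\pi^M$ exists, and as $\pi^W$ exists trivially, a further application of Proposition \ref{propTensorProductModuleExists} shows $\pi^{M_i}=\pi^{M_{i(1)}}\otimes\cdots\otimes\pi^{M_{i(n)}}$ exists for every $i\in\bbF_2^n$. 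Existence of $\pi^{M_i}$ is exactly what forces $A\in\Ann^{in}(\cH_{M_i})$ and $B\in\Ann^{out}(\cH_{M_i})$ by Definition \ref{def: piM}, proving (1). For (2) I would write $\cY_{ij}^k=\bigotimes_s \cY_{i(s)j(s)}^{k(s)}$ and check boundedness slotwise: the module factors are handled by Proposition \ref{prop: joint continuity of module action}, while for the genuine factor $\cY_{11}^0$ I would use that the insertion $BY^{V'}(a\otimes a,z)A$ is bounded (as $V'$, defined below, has bounded insertions) and apply the tensor-splitting Lemma \ref{lem: split simple tensors} to conclude $z\in\Int_{\cY_{11}^0}(B,A)$; boundedness of the tensor product insertion then gives $z\in\Int_{\cY_{ij}^k}(B,A)$.

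Here $V':=(W\otimes W)\oplus(M\otimes M)\subset\tilde V$ is the simple current extension of $W\otimes W$ by the fermionic simple current $M\otimes M$ (fermionic because $M$ is semionic); it is a unitary subalgebra of $\tilde V$ and hence has bounded localized vertex operators by Proposition \ref{prop: BLVO subalgebra}, so $\cA_{V'}$ is a Fermi conformal net with even part $\cA_W\otimes\cA_W$ acting on $(\cH_W\otimes\cH_W)\oplus(\cH_M\otimes\cH_M)$. This is precisely the configuration of Lemma \ref{lem: transport unitary}: taking $X=BY^{V'}(a\otimes a,z)A$, whose off-diagonal blocks are $x=B\cY_{11}^0(a,z)A$ and $y=B\cY_{10}^1(a,z)A$, produces transport unitaries $u,v$, a sign $\varepsilon$, and a map $\tau_I$ obeying the braiding identity $\pi_I(\tau_I(x_2)x_1)=\varepsilon\,x_2\tau_I(x_1)$ of Lemma \ref{lem: tranport for tau}.

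For part (3) I would exploit the slotwise factorization of both sides of \eqref{eqn: ACFT braiding}: writing $B\cY_{pi}^j(a,z)A=\bigotimes_s G_s$ and $\tilde B\cY_{qj}^k(\tilde a,\tilde z)\tilde A=\bigotimes_s\tilde G_s$, the identity reduces to comparing $\bigotimes_s(\tilde G_s G_s)$ with $\bigotimes_s(G_s'\tilde G_s')$, where $G_s'=B\cY_{p(s)\ell(s)}^{k(s)}(a_s,z)A$ and $\tilde G_s'=\tilde B\cY_{q(s)i(s)}^{\ell(s)}(\tilde a_s,\tilde z)\tilde A$. On a slot with $p(s)=0$ or $q(s)=0$ one of the factors is an even local observable of $\cA_W$ localized in one of the disjoint intervals $I,J$, and the commutativity of the net forces the contribution $+1$. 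On the slots with $p(s)=q(s)=1$—of which there are $p\cdot q$, an even number—I would pair them and realize each pair inside $\cA_{V'}$: the corresponding insertions $BY^{V'}(\cdot,z)A$ and $\tilde BY^{V'}(\cdot,\tilde z)\tilde A$ are odd elements of $\cA_{V'}(I)$ and $\cA_{V'}(J)$, hence anticommute by the super-commutativity axiom, and transporting back through $u,v,\tau_I$ (this is where Lemma \ref{lem: tranport for tau} and the sign $\varepsilon$ enter) converts this anticommutation into a factor $-1$ per pair. Multiplying over the $\tfrac12(p\cdot q)$ pairs yields $(-1)^{\frac12(p\cdot q)}$.

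The main obstacle, and the step I would spend the most care on, is the sign bookkeeping that pins the total scalar to exactly $(-1)^{\frac12(p\cdot q)}$. Two difficulties intertwine: first, the net-level anticommutation in $\cA_{V'}$ naturally involves the diagonal insertion $a\otimes a$, so the machinery of Lemmas \ref{lem: transport unitary} and \ref{lem: tranport for tau} is needed to split it into statements about the individual single-slot intertwiners $B\cY_{pi}^j(a_s,z)A$ whose sector labels $i(s),j(s),k(s)$ may differ between the two slots of a pair; second, the operators $B\cY(a,z)A$ depend on a branch of $\log z$ and, in the $M$-sectors, on the compatible-action sign fixed before the lemma, so the resulting scalar is a priori only determined up to a root of unity. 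I would resolve this ambiguity by comparison with the formal-variable braiding of Lemma \ref{lem: codeword braid statistics}(2): choosing $\log z$, $\log\tilde z$, and the compatible actions so that the two operator orderings correspond to the two analytic continuations of the single rational function appearing there, the operator-level scalar is forced to agree with the VOA-level scalar $(-1)^{\frac12(p\cdot q)}$, which in particular confirms $\varepsilon=-1$ in the semionic case.
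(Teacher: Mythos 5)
Your overall architecture matches the paper's proof closely: parts (1) and (2) via existence of the $\pi^{M_i}$ (Theorem \ref{thmConstructionOfNetRepresentations} plus tensor splitting) and boundedness of tensor factors of compressions of $Y^{\tilde V}$; part (3) by realizing $B\cY_{11}^0(a,z)A$ and $B\cY_{10}^1(a,z)A$ as the off-diagonal blocks of an odd element of $\cA_{(W\otimes W)\oplus(M\otimes M)}(I)$, using super-locality to get the anticommutation, Lemmas \ref{lem: transport unitary} and \ref{lem: tranport for tau} to relate the two sector configurations, and $I_{loc}$ (Theorem \ref{thm: intertwiners local}) to dispose of the charge-zero slots. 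Two corrections are needed, one substantive.

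The substantive one is your final step. You propose to pin down the total scalar by matching the operator-level braiding against the formal-variable braiding of Lemma \ref{lem: codeword braid statistics}(2) via a choice of branches and analytic continuation. That comparison is not available with the tools in the paper: the operators $\tilde B\cY(\tilde a,\tilde z)\tilde A\cdot B\cY(a,z)A$ have the generalized annuli interposed between the two intertwiners, so their products are not the matrix coefficients $\ip{\cY_1(a_1,z)\cY_0(a_2,w)b_1,b_2}$ to which the rational-function braiding applies, and no bridge between the two (e.g.\ control of $\tilde A B$, or of the relative position of $z$ and $\tilde z$ after the annuli act) is established anywhere. Fortunately the step is also unnecessary. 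The anticommutation $\tilde X(\tilde a)X(a)=-X(a)\tilde X(\tilde a)$ applied to the top-left block gives $\tilde y x\otimes\tilde y x=-(y\tilde x\otimes y\tilde x)$, forcing the single-slot braiding scalar $\omega$ to satisfy $\omega^2=-1$; Lemma \ref{lem: tranport for tau} then shows the \emph{same} $\omega$ (not an independent $\omega'$) governs the other sector configuration $\tilde x y=\omega\,x\tilde y$ — this is the crucial point, since the matrix identity alone only constrains the two configurations separately. Taking tensor products over slots yields $\omega^{p\cdot q}$, and because $p\cdot q$ is even this equals $(\omega^2)^{\frac12(p\cdot q)}=(-1)^{\frac12(p\cdot q)}$ regardless of whether $\omega=i$ or $\omega=-i$. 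No identification of $\omega$ itself, and hence no appeal to the VOA-level braiding, is required; the only branch bookkeeping needed is the choice of $\log z$, $\log\tilde z$ making $\tau_K(x(a))=+y(a)$ and $\tau_K(\tilde x(\tilde a))=+\tilde y(\tilde a)$.

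The minor correction: the argument does not determine $\varepsilon$. In the computation $\tilde x y=\varepsilon\,\pi_K(\tau_K(\tilde x)x)=\varepsilon\,\pi_K(\tilde y x)=\varepsilon\omega\,\pi_K(y\tilde x)=\omega\,x\tilde y$, the sign $\varepsilon$ enters twice through Lemma \ref{lem: tranport for tau} and cancels, so your parenthetical claim that the semionic case ``confirms $\varepsilon=-1$'' is not justified (and not needed). Relatedly, your framing of the $p\cdot q$ charge-one slots as paired and each pair realized inside $\cA_{V'}$ should be read with care: the two tensor factors of $(W\otimes W)\oplus(M\otimes M)$ carry the \emph{same} sector labels in each matrix entry, so a pair of code slots with different labels $i(s)\ne i(s')$ cannot be read off from a single matrix identity; the per-slot scalar must be established first (which is exactly what the transport lemmas accomplish) and only then multiplied over slots.
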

\begin{proof}
We first fix choices of basis vectors for intertwiner spaces, as follows.
Since all intertwiner spaces are one dimensional, we may choose basis vectors $\cY_{11}^0 \in \vertex{0}{1}{1}$ and $\cY_{10}^1 \in \vertex{1}{1}{0}$ such that for all $a \in M$ we have
\begin{equation}\label{eqn: fixing basis vector 1}
\cY_{11}^0(a,x) \otimes \cY_{11}^0(a,x) = p_{W \otimes W} Y^{\tilde V}(a \otimes a,x) p_{M \otimes M}
\end{equation}
and
\begin{equation}\label{eqn: fixing basis vector 2}
\cY_{10}^1(a,x) \otimes \cY_{10}^1(a,x) = p_{M \otimes M} Y^{\tilde V}(a \otimes a,x) p_{W \otimes W}.
\end{equation}
%Note that we could replace $\cY_{11}^0$ with $-\cY_{11}^0$, or $\cY_{10}^1$ with $-\cY_{10}^1$, and the identity would still hold; we may do so in part (3).
As usual, we are required to make the canonical choices for $\cY_{0i}^i$.

Let us first consider (1).
Let $(B,A) \in \scA_I$ and $z \in \Int(B,A)$.
Since $W \otimes \{\Omega\} \subset \tilde V$ and $M$ is a $W \otimes \{\Omega\}$-submodule of $\tilde V$, it follows that $W$ has bounded localized vertex operators (Proposition \ref{prop: BLVO subalgebra}) and $\pi^M$ exists (Theorem \ref{thmConstructionOfNetRepresentations}).
Thus $B^*,A \in \Ann^{in}(\cH_{M_i})$ for $i=0,1$, and the same holds for any $i \in \bbF_2^{n}$.

Now consider (2).
Clearly the conclusion does not depend on the choice of basis vectors, so we prove it for the choices already made.
It suffices to show that for some $s > 0$ the operators 
$$
p_k B Y^V(s^{L_0}p_i-,z)p_j
$$
are bounded for all admissible $i,j,k \in \bbF_2$, and the corresponding statement for $i,j,k \in \bbF_2^n$ follows by taking tensor products.
So let $i,j,k \in \bbF_2$ be admissible, and let $p_{ii}$ be the projection of $\tilde V$ onto $M_i \otimes M_i$, and similarly for $p_{jj}$ and $p_{kk}$.
Then by construction
\begin{equation}\label{eqn: code compression}
p_{kk}(B \otimes B)(Y^{\tilde V}(s^{L_0}p_{ii}-,z)(A \otimes A)p_{ii} 
=
(B \cY_{ij}^k(s^{L_0}-,z) A) \otimes (B \cY_{ij}^k(s^{L_0}-,z) A).
\end{equation}
But since $\tilde V$ has bounded localized vertex operators, the left-hand side is bounded for appropriate $s$, so the right-hand side is as well.
It follows that each tensor factor is bounded, as desired.

Finally, we establish (3).
In the end, we will see that the conclusion will be independent of the choice of basis vectors, but for the present we maintain the choices fixed in \eqref{eqn: fixing basis vector 1} and \eqref{eqn: fixing basis vector 2}. 
Since $M \otimes M$ is a self-dual simple current, $(W \otimes W) \oplus (M \otimes M)$ is a subalgebra of $\tilde V$, and we assume without loss of generality that $\tilde V = (W \otimes W) \oplus (M \otimes M).$
In this case the even subnet $\cB \subset \cA_{\tilde V}$ satisfies $\cB_0 = \cA_W \otimes \cA_W$, as in Lemma \ref{lem: transport unitary}.
Observe that for $a \in M$, we have 
$$
X(a)=\begin{pmatrix}0 & y(a) \otimes y(a)\\ x(a) \otimes x(a)\end{pmatrix} \in \cA_{\tilde V}(I)
$$
where
$$
x(a) = B\cY_{10}^1(a,z)A, \qquad y(a) = B\cY_{11}^0(a,z)A.
$$
For $\tilde a \in M$ we define $\tilde X(\tilde a)$, $\tilde x(\tilde a)$ and $\tilde y(\tilde a)$ in the same way, and we have $\tilde X(\tilde a) \in \cA_{\tilde V}(J)$.

Since $I,J \subset K$, we may apply Lemma \ref{lem: transport unitary} to the interval $K$ and obtain unitaries $u \in \Hom_{\cA_W(K^\prime)}(\cH_W,\cH_M)$ and $v \in \Hom_{\cA_W(K^\prime)}(\cH_M,\cH_W)$ satisfying the conclusion of that Lemma.
As before, define 
$$
\tau_K: \Hom_{\cA_W(K^\prime)}(\cH_W,\cH_M) \to \Hom_{\cA_W(K^\prime)}(\cH_M,\cH_W)
$$
by $\tau_K(x) = u^*\pi_K(vx)$.
By Lemma \ref{lem: transport unitary}, since $X(a), \tilde X(\tilde a) \in \cA_{\tilde V}(K)$ we have 
\begin{equation}\label{eqn: transport for intertwiners}
\tau_K(x(a)) = \pm y(a) 
\quad \mbox{ and } \quad 
\tau_K(\tilde x(\tilde a)) = \pm \tilde y(\tilde a).
\end{equation}
Observe that replacing $\log z$ by $\log z + 2 \pi i$ changes the sign of $y(a)$ but not of $x(a)$, and similarly $\log \tilde z$ affects $\tilde y(\tilde a)$ but not $\tilde x(\tilde a)$.
Thus we may make choices of $\log z$ and $\log \tilde z$ such that \eqref{eqn: transport for intertwiners} holds with the sign $+$ in both cases.

Since $I$ and $J$ are disjoint and $X(a)$ and $\tilde X(\tilde a)$ are odd, we have $\tilde X(\tilde a) X(a) = - X(a) \tilde X(\tilde a)$.
Expanding the top left entry of the resulting matrix we see
$$
\big(\tilde y(\tilde a)x(a) \otimes \tilde y(\tilde a)x(a)\big) 
= 
- \big(y(a)\tilde x(\tilde a) \otimes y(a)\tilde x(\tilde a)\big).
$$
Thus 
\begin{equation}\label{eqn: first braiding}
\tilde y(\tilde a)x(a) = \omega \, \big(y(a)\tilde x(\tilde a)
\end{equation}
where $\omega$ is a primitive fourth root of unity.

Examining the bottom right entry of the same matrix would give us a braiding relation $\tilde x(\tilde a)y(a) = \omega^\prime \,  x(a) \tilde y(\tilde a)$, but this argument provides no way of obtaining the crucial fact that $\omega^\prime = \omega$.
Instead, we let $\varepsilon$ be as in Lemma \ref{lem: transport unitary} and apply Lemma \ref{lem: tranport for tau} to calculate
\begin{align}
\tilde x(\tilde a)y(a) &= \tilde x(\tilde a) \tau_K(x(a))\nonumber\\
&= \varepsilon \, \pi_K(\tau_K(\tilde x(\tilde a)) x(a))\nonumber\\
&= \varepsilon \, \pi_K(\tilde y(\tilde a) x(a)) \nonumber\\
&= \varepsilon \, \omega \, \pi_K(y(a) \tilde x(\tilde a)) \nonumber\\
&= \omega \, x(a) \tilde y(\tilde a). \label{eqn: second braiding}
\end{align}
The first equality is \eqref{eqn: transport for intertwiners} (with our adjustment to make the signs $\pm$ positive), the second is Lemma \ref{lem: tranport for tau}, the third is \eqref{eqn: transport for intertwiners}, the fourth is \eqref{eqn: first braiding}, and the fifth is Lemma \ref{lem: tranport for tau} and \eqref{eqn: transport for intertwiners} again.

Thus from \eqref{eqn: first braiding} and \eqref{eqn: second braiding} we have for $i,j,k,\ell \in \bbF_2$
\begin{equation}\label{eqn: more braiding}
(\tilde B\cY_{1j}^k(\tilde a,z)\tilde A)(B\cY_{1i}^j(a,z)A) 
= 
\omega (B\cY_{1\ell}^k(a,z)A)(\tilde B\cY_{1i}^\ell(\tilde a,z)\tilde A)
\end{equation}
provided the relevant triples are admissible, for a certain choices of bases for $\vertex{1}{1}{0}$ and $\vertex{0}{1}{1}$.
However \eqref{eqn: more braiding} is preserved under change of basis, so it holds for all choices of basis vectors.
On the other hand, since these basis vectors lie in $I_{loc}$ by Theorem \ref{thm: intertwiners local}, we have
$$
(\tilde B\cY_{qj}^k(\tilde a,z)\tilde A)(B\cY_{pi}^j(a,z)A) 
= 
 (B\cY_{p\ell}^k(a,z)A)(\tilde B\cY_{qi}^\ell(\tilde a,z)\tilde A)
$$
for admissible labels in $\bbF_2$, provided $p=0$ or $q=0$, and again these identities are independent of basis choice.
Thus in general we have
$$
(\tilde B\cY_{qj}^k(\tilde a,z)\tilde A)(B\cY_{pi}^j(a,z)A) 
= 
 \omega^{pq} (B\cY_{p\ell}^k(a,z)A)(\tilde B\cY_{qi}^\ell(\tilde a,z)\tilde A).
$$
The desired braiding \eqref{eqn: ACFT braiding} now follows by taking tensor products of this identity.
\end{proof}

\begin{proof}[Proof of Theorem \ref{thm: BLVO for harder codes}]
Let $V$ be a simple current extension of $W^{\otimes 2n}$ of code type with respect to $(W,M,C)$.
Combining (1) and (2) of Lemma \ref{lem: ACFT braiding}, we see that $\tilde V$ has bounded insertions, as $BY^V(s^{L_0}-,z)A$ is given by $2n$-by-$2n$ matrix of bounded operators.
We now need to show that $\cA_V(I)$ and $\cA_V(J)$ commute when $I$ and $J$ are disjoint, and without loss of generality we may assume that $I \cup J$ is not dense in $S^1$.
Let $(B,A) \in \scA_I$ and $z \in \Int(B,A)$ and let $(\tilde B, \tilde A) \in \scA_J$ and $\tilde z \in \Int(\tilde B, \tilde A)$.
Let $i,j,k,\ell,p,q$ be as in (3) of Lemma \ref{lem: ACFT braiding}.
It suffices to show that 
\begin{equation}\label{eq: code blvo goal}
p_k(\tilde B Y(\tilde a, \tilde z) \tilde A)(BY(a,z)A)p_i
=
p_k(BY(a,z)A)(\tilde B Y(\tilde a, \tilde z) \tilde A)p_i
\end{equation}
for $a \in M_p$ and $\tilde a \in M_q$.

Choose compatible actions of $A$,$B$, $\tilde A$, and $\tilde B$, and choose $\log z$ and $\log \tilde z$ and bases for intertwiner spaces as in (3) of Lemma \ref{lem: ACFT braiding}.
Observe that \eqref{eq: code blvo goal} has no fractional powers of the variables, so cannot depend on the choices of logs.
Moreover, the value of $BY(a,z)A$ is independent of the choice of $\tilde \gamma$ made to compatibly define the actions of $A$ and $B$ since $V$ has integer conformal weights.
However, we will decompose both sides of \eqref{eq: code blvo goal} into tensor factors which do depend on these choices, and we are careful to use the same choice on each tensor factor.

Let $c:C \times C \to \bbC^\times$ be the function satisfying $p_\gamma Y^V (p_\alpha \otimes p_\beta) = c(\alpha, \beta) \cY_{\alpha \beta}^\gamma$ where $\gamma = \alpha + \beta$.
Then we have
\begin{align*}
p_k(\tilde B Y(\tilde a, \tilde z) \tilde A)(BY(a,z)A)p_i
&= 
c(q,j)c(p,i)(\tilde B \cY_{qj}^k(\tilde a, \tilde z) \tilde A)(B\cY_{pi}^j(a,z)A)\\
&=(-1)^{\frac12(p \cdot q)} c(q,j)c(p,i) (B\cY_{p\ell}^k(a,z)A) (\tilde B \cY_{qi}^\ell(\tilde a, \tilde z) \tilde A)\\
&= c(p,\ell)c(q,i)(B\cY_{p\ell}^k(a,z)A) (\tilde B \cY_{qi}^\ell(\tilde a, \tilde z) \tilde A)\\
&=p_k(BY(a,z)A)(\tilde B Y(\tilde a, \tilde z) \tilde A)p_i.
\end{align*}
The first equality is by definition of $c$ and the fact that $p_k$ and $p_i$ commute with $\tilde B$ and $A$, the second equality is Lemma \ref{lem: ACFT braiding}, the third is Lemma \ref{lem: code extension cocycle}, and the fourth is the same as the first.
Since $i$ and $k$ were arbitrary, $\tilde B Y(\tilde a, \tilde z) \tilde A$ and $BY(a,z)A$ commute.
This completes the proof.
\end{proof}

\newpage

\section{Examples and applications} \label{sec: examples}

In this section, we always discussed bounded localized vertex operators with respect to the system of generalized annuli constructed in Section \ref{sec: example of system of generalized annuli}, for which the free fermion superalgebra has bounded localized vertex operators (Theorem \ref{thm: FF BLVO}).
We fix the notation $V(\frg,k)$ for the WZW VOA corresponding to $\frg$ at positive integer level $k$.

\subsection{Bounded localized vertex operators}

\subsubsection{Lattices and code extensions}\label{sec: examples lattices in codes}

Observe that the embedding of lattices $A_1^2 \subset \bbZ^2$ yields a unitary conformal embedding $W^{\otimes 2} \subset V$ where $W = V(A_1,1)$ and $V = \cF^{\otimes 2}$ (two copies of the (complex) free fermion).
In fact, $W^{\otimes 2}$ is the even part of $\cF^{\otimes 2}$, and it decomposes as $(W \otimes W) \oplus (M \otimes M)$, where $M$ is the non-trivial irreducible $W$-module.
By Lemma \ref{lem: A1 module is semionic}, $M$ is a semionic simple current and since $M$ is the only non-trivial irreducible $W$-module, it is easy to check that any conformal extension of $W^{\otimes n}$ is of code type.
Since we have $\cA_W \otimes \cA_W \subset \cA_V$, we may apply Theorem \ref{thm: BLVO for harder codes} to obtain:

\begin{Proposition}\label{prop: A1 framed have bounded localized vertex operators}
Let $V$ be a simple unitary vertex operator algebra of central charge $n$ containing $V(A_1,1)^{\otimes n}$ as a unitary subalgebra.
Then $V$ has bounded localized vertex operators.
\end{Proposition}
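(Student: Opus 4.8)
The plan is to obtain the conclusion from Theorem~\ref{thm: BLVO for harder codes} once two things are in place: that its hypotheses hold for $W = V(A_1,1)$ and its nontrivial irreducible module $M$, and that $V$ itself is a simple current extension of $W^{\otimes n}$ of code type with respect to $M$. Recall that $W$ is a simple rational unitary VOA and that $M$ is a self-dual simple current which is \emph{semionic} (Lemma~\ref{lem: A1 module is semionic}). For the ambient superalgebra I would take $\tilde V = \cF^{\otimes 2}$: the free fermion $\cF$ has bounded localized vertex operators by Theorem~\ref{thm: FF BLVO}, hence so does $\cF^{\otimes 2}$ by Proposition~\ref{prop: BLVO tensor product}. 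As recorded in the discussion preceding the statement, the lattice embedding $A_1^2 \subset \bbZ^2$ realizes $W \otimes W$ as a conformal subalgebra of $\tilde V$ with $M \otimes M$ appearing as a $W \otimes W$-submodule (a summand of the even part $(W \otimes W) \oplus (M \otimes M)$ of $\cF^{\otimes 2}$). This verifies every hypothesis of Theorem~\ref{thm: BLVO for harder codes}.

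It then remains to exhibit $V$ as a code-type extension. First I would check that $W^{\otimes n}$ is a \emph{conformal} subalgebra of $V$: since $V$ and $W^{\otimes n}$ both have central charge $n$, the coset $(W^{\otimes n})^c$ is a unitary subalgebra (Proposition~\ref{propUVOSACoset}) of central charge $0$, hence trivial, so $\nu = \nu^{W^{\otimes n}}$. Restricting $V$ to $W^{\otimes n}$ and invoking complete reducibility together with the tensor-splitting of Proposition~\ref{propTensorDecomposition}, each simple summand is an outer tensor product of irreducible $W$-modules; since $W$ has only the two irreducibles $W = M_0$ and $M = M_1$, each summand is some $M_i$ with $i \in \bbF_2^n$. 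As observed before the statement, this forces the extension to be of code type: the set $C \subset \bbF_2^n$ of labels that occur is closed under addition (the products $M_i \cdot M_j \subset M_{i+j}$ are nonzero since $V$ is simple), so $C$ is a binary linear code, and the decomposition $V \cong \bigoplus_{i \in C} M_i$ is multiplicity free.

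Finally I would pin down the arithmetic of $C$. Since $M$ has conformal weight $\tfrac14$, the summand $M_i$ has conformal weight $\tfrac14\,(i \cdot i)$, so integrality of the conformal weights of the VOA $V$ forces $i \cdot i \in 4\bbZ$ for every $i \in C$; that is, $C$ is doubly even, and hence (by polarization, using that $i+j \in C$) self-orthogonal, $C \subset C^\ast$. Thus $V$ is a simple current extension of $W^{\otimes n}$ of code type with respect to $(C, W, M)$, and Theorem~\ref{thm: BLVO for harder codes} applies to give the conclusion.

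The substantive analytic content --- matching conformal-net locality of insertion operators against VOA locality through the semionic braiding statistics --- is already discharged inside Theorem~\ref{thm: BLVO for harder codes} and its supporting Lemmas~\ref{lem: transport unitary}--\ref{lem: ACFT braiding}. For the present proposition the only genuine step beyond bookkeeping is the inference that integrality of $V$'s conformal weights makes $C$ doubly even, and so self-orthogonal, which is exactly what places $V$ within the semionic code-type framework; everything else is a routine identification of the $W^{\otimes n}$-module structure of $V$.
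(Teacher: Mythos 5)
Your proposal is correct and follows exactly the route the paper takes: the paper's entire proof is the paragraph preceding the proposition, which sets up $\tilde V = \cF^{\otimes 2}$ via $A_1^2 \subset \bbZ^2$, notes that $M$ is semionic and is the only nontrivial irreducible $W$-module so that any conformal extension of $W^{\otimes n}$ is of code type, and then invokes Theorem~\ref{thm: BLVO for harder codes}. The only difference is that you spell out the step the paper dismisses as ``easy to check'' --- the triviality of the coset, the multiplicity-free $\bigoplus_{i\in C} M_i$ decomposition, and the deduction that integrality of conformal weights forces $C$ to be doubly even and hence self-orthogonal --- all of which is accurate.
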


\begin{Example}[Code lattices]\label{ex: code lattices BLVO}
As described in Section \ref{sec: self dual simple currents and code extensions} (see \eqref{eqn: code lattice}), starting with a self-orthogonal, doubly even binary code $C$ of length $n$, we may form a  lattice  
$$
\Lambda_C = \bigcup_{c \in C} \sqrt{2}\bbZ^n + \frac{c}{\sqrt{2}}.
$$
See \cite[\S7.2,\S5.2]{ConwaySloane} for more details on the construction.
By construction $\Lambda_C$ contains $A_1^n$ as a sublattice, and so $V_{\Lambda_C}$ contains $V(A_1,1)^{\otimes n}$ as a unitary subalgebra.
Thus by Proposition \ref{prop: A1 framed have bounded localized vertex operators} $V_{\Lambda_C}$ has bounded localized vertex operators.
Starting with the length 8 Hamming code \cite[\S3.2.4.2]{ConwaySloane}, this construction produces the $E_{8}$ lattice \cite[Ex. 7.2.5]{ConwaySloane}, and the two doubly even self-dual codes of length $16$ produce the $E_8$ and $D_{16}^+$ lattices.
There are $9$ doubly even self-dual codes of length $24$, and each produces a distinct Niemeier lattice.
In particular, the Golay code produces the Niemeier lattice with root system $A_1^{24}$  \cite[p.86-87]{DolanGoddardMontague96}.
\end{Example}
Thus all three holomorphic (even) vertex operator algebras of central charge less than $24$ have bounded localized vertex operators, along with any (not necessarily conformal) subalgebra, their tensor products, their subalgebras, their cosets, and so on.

\subsubsection{The Ising model}

The Ising model $L(\frac12, 0)$ has two non-trivial irreducible modules, one of which, $L(\frac12,\frac12)$, is a self-dual simple current.
The Ising model is the even part of the real free fermion vertex operator superalgebra $\cF_\bbR$, and we have a decomposition $\cF_\bbR = L(\frac12,0) \oplus L(\frac12,\frac12)$.
Any simple current extension of $L(\frac12,0)$ is automatically of code type since $L(\frac12,\frac12)$ is the only possible simple current.
Thus by Lemma \ref{lem: easy fermionic and bosonic simple currents} and Proposition \ref{prop: BLVO for easier codes}, we have:

\begin{Proposition}\label{prop: simple current Ising framings}
Let $V$ be a unitary vertex operator algebra containing $L(\frac12,0)^{\otimes n}$ as a unitary conformal subalgebra, and suppose that the inclusion is a simple current extension.
Then $V$ has bounded localized vertex operators.
\end{Proposition}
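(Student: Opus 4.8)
The plan is to deduce this as a direct application of Proposition \ref{prop: BLVO for easier codes}, taking $W = L(\tfrac12,0)$ the Ising model, $M = L(\tfrac12,\tfrac12)$ its self-dual simple current, and $\tilde V = \cF_\bbR$ the real (Majorana) free fermion vertex operator superalgebra. The starting observation is the decomposition $\cF_\bbR = L(\tfrac12,0) \oplus L(\tfrac12,\tfrac12)$, in which the even part $(\cF_\bbR)^0 = L(\tfrac12,0) = W$ is a conformal subalgebra and the odd part $(\cF_\bbR)^1 = L(\tfrac12,\tfrac12) = M$ is a $W$-submodule. Thus $W \subset \tilde V$ is a conformal inclusion realizing $M$ inside $\tilde V^1$, precisely the shape of data Proposition \ref{prop: BLVO for easier codes} requires. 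Once I verify its three hypotheses---that $\tilde V$ has bounded localized vertex operators, that $M$ is a fermionic self-dual simple current, and that the given inclusion $L(\tfrac12,0)^{\otimes n} \subset V$ is a simple current extension of code type with respect to $M$---the conclusion follows at once.

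For the simple current hypothesis, I would use the Ising fusion rule $M \boxtimes M = W$, which gives $\dim I\binom{W}{M\,M} = 1$, so that $M$ is a self-dual simple current; since $M = (\cF_\bbR)^1 \subset \tilde V^1$, Lemma \ref{lem: easy fermionic and bosonic simple currents} shows $M$ is fermionic, consistent with its conformal weight $\tfrac12 \in \tfrac12 + \bbZ$. For the bounded localized vertex operators hypothesis on $\tilde V$, I would note that the Dirac fermion $\cF$ of Theorem \ref{thm: FF BLVO} factors as a super tensor product $\cF \cong \cF_\bbR \otimes \cF_\bbR$ of two Majorana fermions. Since $\cF$ has bounded localized vertex operators and each factor $\cF_\bbR$ is a simple unitary vertex operator superalgebra, Proposition \ref{prop: BLVO tensor product}, applied in the reverse direction, yields that $\cF_\bbR$ has bounded localized vertex operators.

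The remaining and main point is the code type hypothesis. The simple currents of the Ising model are exactly $W$ and $M$: the third irreducible $L(\tfrac12,\tfrac1{16})$ satisfies $L(\tfrac12,\tfrac1{16}) \boxtimes L(\tfrac12,\tfrac1{16}) = W \oplus M$, so it is not invertible under fusion and hence not a simple current. Consequently every simple current of $L(\tfrac12,0)^{\otimes n}$ is an external tensor product $M_i = M_{i(1)} \otimes \cdots \otimes M_{i(n)}$ indexed by a binary string $i \in \bbF_2^n$, and the labels occurring in $V$ are closed under addition (since $M_i \cdot M_j \subset M_{i+j}$ with all summands nonzero), hence form a linear code $C \subset \bbF_2^n$ with $V \cong \bigoplus_{i \in C} M_i$ as a $W^{\otimes n}$-module. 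The subtle step, which I expect to be the real obstacle, is checking that $C$ is \emph{self-orthogonal}, $C \subset C^\ast$, as the definition of a code type extension for a fermionic current demands. This does not follow merely from integrality of conformal weights (which only forces $C$ to be even), but is forced by the standing hypothesis that $V$ is a genuine even vertex operator algebra: the relative monodromy of $M_i$ and $M_j$ is $(-1)^{i \cdot j}$, so locality of $Y^V$ on each $M_i \oplus M_j$ requires $i \cdot j \in 2\bbZ$ for all $i,j \in C$. With self-orthogonality in hand, $V$ is a code type extension with respect to $(C,W,M)$, and Proposition \ref{prop: BLVO for easier codes} completes the proof.
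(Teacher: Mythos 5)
Your overall route is exactly the paper's: take $W = L(\tfrac12,0)$, $M = L(\tfrac12,\tfrac12)$, $\tilde V = \cF_\bbR$, invoke Lemma \ref{lem: easy fermionic and bosonic simple currents} to see $M$ is fermionic, and feed everything into Proposition \ref{prop: BLVO for easier codes}. Your treatment of the first two hypotheses is correct, and your explicit derivation of bounded localized vertex operators for $\cF_\bbR$ (from $\cF \cong \cF_\bbR \otimes \cF_\bbR$ via Proposition \ref{prop: BLVO tensor product}, or equivalently via Proposition \ref{prop: BLVO subalgebra}) usefully fills in a step the paper leaves implicit.

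However, your argument for the code-type hypothesis is wrong, and this is a genuine gap. The relative monodromy (double braiding) of $M_i$ and $M_j$ is $e^{2\pi i(h_i + h_j - h_{i+j})} = e^{2\pi i (i \cdot j)} = 1$, not $(-1)^{i\cdot j}$; the sign $(-1)^{i\cdot j}$ is only the \emph{single} braiding, which locality does not constrain --- it gets absorbed into the structure constants $c(p,i)$ exactly as in Lemma \ref{lem: code extension cocycle}. Consequently locality of $Y^V$ forces only that $C$ be even (equivalently, that all conformal weights be integral), not that $C \subset C^\ast$. A concrete counterexample to your claim is the even part of $\cF_\bbR^{\otimes 3}$: it is a genuine vertex operator algebra, it is a simple current extension of $L(\tfrac12,0)^{\otimes 3}$ with code $C = \{000,110,101,011\}$, and $110 \cdot 011 = 1$ is odd, so $C \not\subset C^\ast$. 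Such $V$ are covered by the Proposition's hypotheses but not by Proposition \ref{prop: BLVO for easier codes} as literally stated (whose definition of ``code type'' demands self-orthogonality for fermionic currents). The paper's own proof elides this same point by asserting that any simple current extension is ``automatically of code type.'' The cleanest repair is to note that, since $L(\tfrac12,\tfrac12)$ is the only nontrivial simple current, $C$ is forced to be even, the $W^{\otimes n}$-module $\bigoplus_{i \in C} M_i$ is then a unitary conformal subalgebra of $\cF_\bbR^{\otimes n} \subset \cF^{\otimes n}$, and by uniqueness of simple current extensions (Proposition \ref{prop: unique simple current VOA}) $V$ is isomorphic to that subalgebra; Proposition \ref{prop: BLVO subalgebra} and Theorem \ref{thm: FF BLVO} then give the conclusion directly. (Alternatively one can check that the fermionic case of Proposition \ref{prop: BLVO for easier codes} never actually uses self-orthogonality: the braiding factor $(-1)^{p\cdot q}$ and the cocycle twist of Lemma \ref{lem: code extension cocycle} cancel for every parity of $p \cdot q$.)
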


The VOAs considered in Proposition \ref{prop: simple current Ising framings} are a special case of what are called \emph{framed VOAs}, which admit $L(\frac12,0)^{\otimes n}$ as a conformal subalgebra.
The most famous framed VOA is the Moonshine VOA, but the extension is not a simple current extension, and therefore the Moonshine VOA is not covered by our results.
However, the structure of framed VOAs was described by Lam and Yamauchi \cite{LamYamauchi08}, and any framed VOA admits an intermediate algebra $L(\frac12,0)^{\otimes n} \subset W \subset V$ such that both smaller inclusions are simple current extensions.
By our result, $W$ would have bounded localized vertex operators, and it would be a natural direction for further study to try to study simple current extensions of $W$.

\subsubsection{WZW models}

The most difficult WZW model to establish analytic properties for is $E_{8}$ at level $1$, as it does not embed non-trivially in any other VOA.
By Example \ref{ex: code lattices BLVO}, however, $E_{8,1}$ has bounded localized vertex operators, and from there we may show:

\begin{Theorem}\label{thm: WZW BLVO}
Let $\frg$ be a simple Lie algebra of compact type, let $k$ be a positive integer, and let $V(\frg,k)$ be the associated WZW model.
Then $V(\frg, k)$ has bounded localized vertex operators.
\end{Theorem}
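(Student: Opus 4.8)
The plan is to deduce the general statement from two facts already in hand — that the (Dirac) free fermion $\cF$ and the lattice vertex operator algebra $V_{E_8}$ have bounded localized vertex operators (Theorem~\ref{thm: FF BLVO} and Example~\ref{ex: code lattices BLVO}) — together with the fact that the property passes to tensor products and to \emph{arbitrary} unitary subalgebras (Propositions~\ref{prop: BLVO tensor product} and~\ref{prop: BLVO subalgebra}). Crucially, the subalgebra inheritance applies to non-conformal unitary subalgebras, since its proof reduces to the conformal case by passing to $W \otimes W^c$. The entire argument is then an assembly: I would realize every $V(\frg,k)$ as a unitary subalgebra of a vertex operator superalgebra whose bounded localized vertex operators are already established.

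First I would reduce to level one. The diagonal map $\frg \hookrightarrow \frg^{\oplus k}$, $x \mapsto (x,\dots,x)$, sends the level-$k$ currents of $\hat\frg$ to the sum of the level-one currents in each factor, and hence induces a homomorphism $V(\frg,k) \to V(\frg,1)^{\otimes k}$; since $V(\frg,k)$ is simple and the diagonal currents are nonzero, this is an injective unitary (non-conformal) embedding. By Proposition~\ref{prop: BLVO tensor product}, $V(\frg,1)^{\otimes k}$ has bounded localized vertex operators as soon as $V(\frg,1)$ does, and Proposition~\ref{prop: BLVO subalgebra} then transfers the property to $V(\frg,k)$. It thus suffices to treat $V(\frg,1)$ for each simple $\frg$ of compact type.

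Second I would dispatch the level-one models by type, in every case exhibiting a unitary (generally non-conformal) subalgebra inclusion. Via the boson-fermion correspondence $V_{\bbZ^N} \cong \cF^{\otimes N}$, the root-lattice inclusion $A_{N-1} \subset \bbZ^N$ gives $V(\mathfrak{su}(N),1) = V_{A_{N-1}} \subset \cF^{\otimes N}$, handling type $A$. For type $D$, $V(\mathfrak{so}(2n),1) = V_{D_n}$ is the even subalgebra of $\cF^{\otimes n}$; for type $B$ the index-one inclusion $\mathfrak{so}(2n+1) \subset \mathfrak{so}(2n+2)$ gives $V(B_n,1) \subset V(D_{n+1},1)$, reducing to type $D$; for type $C$ the index-one inclusion $\mathfrak{sp}(2n) \subset \mathfrak{sl}(2n)$ gives $V(C_n,1) \subset V(A_{2n-1},1)$, reducing to type $A$. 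The exceptional and remaining cases are anchored on $E_8$: Example~\ref{ex: code lattices BLVO} supplies $V(E_8,1) = V_{E_8}$, the inclusions $\mathfrak{e}_6, \mathfrak{e}_7 \subset \mathfrak{e}_8$ give $V(E_6,1), V(E_7,1) \subset V_{E_8}$, the folding $\mathfrak{f}_4 \subset \mathfrak{e}_6$ gives $V(F_4,1) \subset V(E_6,1)$, and $\mathfrak{g}_2 \subset \mathfrak{so}(7) = B_3$ gives $V(G_2,1) \subset V(B_3,1)$. Each of these embeddings is of Dynkin index one (so level one maps to level one), and in each case the image currents are closed under $\theta$ and $L_1$, hence form a unitary subalgebra.

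The analytic heart of the matter — boundedness and locality of point insertions — is entirely absorbed into the cited results, so the only genuine work is the uniform Lie-theoretic bookkeeping: verifying for each embedding above that the induced level is correct and that the image is a \emph{unitary} subalgebra. I would organize this through a single lemma asserting that a finite-dimensional embedding $\frg' \hookrightarrow \frg$ of compact type with Dynkin index $\ell$ induces a unitary embedding $V(\frg', \ell k) \hookrightarrow V(\frg,k)$, after which the theorem follows by assembling the cases. It is worth emphasizing that the one model admitting no nontrivial embedding, $E_8$ at level one, is precisely the model already secured by the code-lattice construction of Example~\ref{ex: code lattices BLVO}; this is exactly why the reduction closes, and it is the reason that example is the linchpin of the whole argument.
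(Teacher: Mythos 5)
Your proposal is correct and follows essentially the same route as the paper: anchor the argument on $\cF$ and $V_{E_8}$, use closure under tensor products and (not necessarily conformal) unitary subalgebras, reduce level $k$ to level $1$ via the diagonal inclusion $V(\frg,k) \subset V(\frg,1)^{\otimes k}$, and dispatch the level-one models type by type through index-one embeddings. The only differences are cosmetic choices of embedding for a few types (the paper uses $V(C_n,1)\subset V(D_{2n},1)$ and the conformal inclusion $V(F_4,1)\otimes V(G_2,1)\subset V(E_8,1)$ where you use $C_n\subset A_{2n-1}$, $F_4\subset E_6$, and $G_2\subset B_3$), all of which work equally well.
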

\begin{proof}
By Example \ref{ex: code lattices BLVO}, the $E_8$ lattice VOA has bounded localized vertex operators, and this VOA coincides with $V(E_8,1)$ by the Frenkel-Kac-Segal construction \cite[\S5.4]{Kac98}.
By Theorem \ref{thm: FF BLVO}, the complex free fermion $\cF$ has bounded localized vertex operators, and by the results of Section \ref{sec: BLVO for subalgbras} bounded localized vertex operators are inherited by subalgebras and extend to tensor products.
There are natural inclusions $V(A_{n-1},1) \subset V(D_n,1) \subset \cF^{\otimes n}$ which come from inclusions of lattices.
There are inclusions of groups which produce $V(B_n,1) \subset V(D_{n-1},1)$ and $V(C_n,1) \subset V(D_{2n},1)$.
Inclusions of lattices produce $V(E_{6},1) \subset V(E_7,1) \subset V(E_8,1)$, and moreover we have a conformal inclusion $V(F_4,1) \otimes V(G_2,1) \subset V(E_8,1)$.
Thus the theorem has been proven when $k=1$.
For $k > 1$, we appeal to the diagonal inclusion $V(\g,k) \subset V(\g,1)^{\otimes k}$.
\end{proof}

\subsubsection{Comparison to CKLW nets}

In \cite{CKLW18}, Carpi-Kawahigashi-Long-Weiner constructed a conformal net $\widetilde \cA_V$ from a unitary $V$ which was \emph{strongly local}.
Most examples of VOAs with bounded localized vertex operators are also strongly local, and we would like to know that the two constructions produce the same conformal net.
This is especially important for WZW models, where the CKLW construction produces the net that one would usually call $\cA_{\frg,k}$, where the local algebras are generated by unitary representations of the local loop groups.

While the CKLW construction is not explicitly described for vertex operator superalgebras, the main results are expected to go through unchanged in the super case; this is the subject of work in progress by Carpi, Gaudio, and Hillier.
There is also work in progress of Carpi, Weiner, and Xu which will demonstrate the strong locality of unitary subalgebras $V$ of free fermions, and for such models it would then be clear that $\cA_V = \widetilde \cA_V$.
%The CKLW net $\widetilde \cA_{\cF}$ is equal to the net $\cM$ considered just before Theorem \ref{thm: FF BLVO}, and by the same theorem we have $\cA_{\cF} = \widetilde \cA_{\cF}$.
%By Proposition \ref{propSubnetTrivial}, we have $\cA_{V} = \widetilde \cA_V$ for any unitary subalgebra $V \subset \cF^{\otimes n}$.
However, for the present we will only require this fact when $V$ is one of $V(A_1,1)$ or $L(\frac12,0)$.
For $V(A_1,1)$ this is clear by Proposition \ref{propSubnetTrivial}, as the CKLW net for this example coincides with the loop group net, which is a subtheory of free fermions (see e.g. \cite{Wa98}).
For the Ising model, we refer to Lemma \ref{lem: Virasoro is right}.
We then have:

\begin{Proposition}\label{prop: SCE agrees for lattices}
Let $V$ be a unitary simple current extension of $V(A_1,1)^{\otimes n}$, and let $\Lambda$ be the associated code lattice as in Section \ref{sec: examples lattices in codes} so that $V = V_{\Lambda}$.
Let $\cA_\Lambda$ be the lattice conformal net of \cite{DongXu06}.
Then $\cA_V \cong \cA_\Lambda$
\end{Proposition}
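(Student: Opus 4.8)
The plan is to realize both $\cA_V$ and $\cA_\Lambda$ as \emph{local simple current extensions of the same base net} $\cA_{V(A_1,1)}^{\otimes n}$, graded by the same abelian group and with the same underlying sectors, and then to conclude by the uniqueness of such extensions recorded in Remark \ref{rmk: unique simple current net}.

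First I would fix the identification of the base nets. Applying Proposition \ref{prop: BLVO subalgebra} to the conformal inclusion $V(A_1,1)^{\otimes n} \subset V = V_\Lambda$, together with the tensor factorization of Proposition \ref{prop: BLVO tensor product}, exhibits $\cA_{V(A_1,1)}^{\otimes n}$ as a covariant subnet of $\cA_V$ on $\cH_{V(A_1,1)^{\otimes n}} \subset \cH_V$. By the discussion preceding the proposition, $\cA_{V(A_1,1)}$ coincides with the loop group (equivalently CKLW) net for $\su(2)$ at level $1$, and this is precisely the rank-one Dong--Xu lattice net $\cA_{A_1}$; I would record this identification explicitly, invoking the uniqueness of the $\Diff$-covariant net generated by the vacuum fields (via Proposition \ref{propSubnetTrivial}). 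Thus both $\cA_V$ and $\cA_\Lambda$ contain a common base net $\cA_{A_1}^{\otimes n}$, where the $n$ tensor factors correspond to the sublattice $A_1^n \subset \Lambda$.

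Next I would match the graded structure on each side. The $W^{\otimes n}$-module decomposition $V_\Lambda = \bigoplus_{i \in C} M_i$ (with $W = V(A_1,1)$ and $M_i = M_{i(1)} \otimes \cdots \otimes M_{i(n)}$) induces an orthogonal decomposition $\cH_V = \bigoplus_{i \in C} \cH_{M_i}$ as representations of $\cA_{A_1}^{\otimes n}$. By Corollary \ref{cor: action of subnet}, together with Propositions \ref{prop: reps from sums of modules} and \ref{propTensorProductModuleExists}, the sector carried by $\cH_{M_i}$ is $\pi^{M_i} = \pi^{M_{i(1)}} \otimes \cdots \otimes \pi^{M_{i(n)}}$, which is the simple current sector of $\cA_{A_1}^{\otimes n}$ labelled by the coset $i \in \bbF_2^n \cong \Lambda/A_1^n$. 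On the other side, the Dong--Xu net $\cA_\Lambda$ of \cite{DongXu06} is by construction a local extension of $\cA_{A_1}^{\otimes n}$ graded by $\Lambda/A_1^n$, in which each coset is attached to the same simple current sector. Hence both nets are local simple current extensions of $\cA_{A_1}^{\otimes n}$ by the group $C \cong \Lambda/A_1^n$ with identical underlying sector data; locality of both extensions is guaranteed because $C$ is doubly even, equivalently $\Lambda$ is even.

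With this in place, Remark \ref{rmk: unique simple current net} (the conformal-net analogue of Proposition \ref{prop: unique simple current VOA}, resting on the fact that symmetric $2$-cocycles of a finite abelian group with values in $\bbC^\times$ are coboundaries) yields an isomorphism $\cA_V \cong \cA_\Lambda$ of conformal nets. The main obstacle I anticipate is the sector-matching in the previous step: one must check that the net representation $\pi^{M_i}$ produced by the geometric correspondence $M \mapsto \pi^M$ is genuinely unitarily equivalent to the Dong--Xu sector indexed by the same lattice coset, and not merely abstractly isomorphic as an $\cA_{A_1}^{\otimes n}$-module. This comes down to pinning the identification $\cA_{V(A_1,1)} = \cA_{A_1}$ down at the level of the non-trivial (semionic) sector and verifying that the grading isomorphism $\Lambda/A_1^n \cong C$ is matched compatibly on both constructions. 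Once the sectors are identified the remaining cocycle argument is formal.
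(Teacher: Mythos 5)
Your proposal is correct and follows essentially the same route as the paper: identify $\cA_{V(A_1,1)}$ with the rank-one lattice net, show that the vacuum Hilbert spaces of $\cA_V$ and $\cA_\Lambda$ decompose identically as sectors of $\cA_{V(A_1,1)}^{\otimes n}$ (the paper cites Bischoff's thesis for the branching on the $\cA_\Lambda$ side and uses Proposition \ref{prop: reps from sums of modules} on the $\cA_V$ side), and conclude by the uniqueness of simple current extensions from Remark \ref{rmk: unique simple current net}. The sector-matching worry you flag is resolved exactly as you suggest — since $\cA_{A_1}$ has a unique non-trivial irreducible sector, matching the branching multiplicities over the code $C \cong \Lambda/A_1^n$ already pins down the sector data.
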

\begin{proof}
Recall that by the above discussion $\cA_{V(A_1,1)} = \widetilde \cA_{V(A_1,1)}$ and that these are isomorphic to $\cA_{A_1}$ by \cite[\S3.1.1]{Xu09} (also \cite[Prop. 4.1.17]{BischoffThesis}).
If $C$ is the code associated to the inclusion $A_1^n \subset \Lambda$, then by \cite[Prop. 4.1.14]{BischoffThesis} we have that the vacuum Hilbert space for $\cA_\Lambda$ decomposes as $\bigoplus_{i \in C} M_i$, where $M_0$ is the vacuum representation, $M_1$ is the non-trivial representation, and for $i \in \bbF_2^n$ we have $M_i = M_{i(1)} \otimes \cdots \otimes M_{i(n)}$.
We have the same branching as VOA modules for $V(A_1,1)^{\otimes n} \subset V_\Lambda$, and thus by Proposition \ref{prop: reps from sums of modules} we have the same branching for $\cA_{V(A_1,1)}^{\otimes n} \subset \cA_{V}$.
Thus the vacuum representations of $\cA_V$ and $\cA_{\Lambda}$ are isomorphic as sectors of $\cA_{V(A_1,1)}^{\otimes n}$, and therefore by Remark \ref{rmk: unique simple current net} they are isomorphic as conformal nets.
\end{proof}

\begin{Lemma}\label{lem: subalgebra agree with CKLW}
Let $V$ be a unitary VOA which is strongly local and has bounded localized vertex operators, and such that $\cA_V \cong \widetilde \cA_V$.
If $W$ is a unitary subalgebra of $V$, then $\cA_W \cong \widetilde \cA_W$.
\end{Lemma}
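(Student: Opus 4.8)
The plan is to realize both $\cA_W$ and $\widetilde{\cA}_W$ as one and the same covariant subnet of the common net $\cA_V\cong\widetilde{\cA}_V$, and then to invoke the fact that a covariant subnet is completely determined by its vacuum Hilbert space. Note first that both nets act on $\cH_V$, share the vacuum $\Omega$, and are covariant for the representation $U^V$ of $\Diff_c^{(\infty)}(S^1)$ induced by the conformal vector of $V$. On one side, Proposition \ref{prop: BLVO subalgebra} shows that $W$ has bounded localized vertex operators and produces a covariant subnet $\cB\subset\cA_V$ with $\cH_B=\cH_W$ and $\cB|_{\cH_W}\cong\cA_W$. On the other side, the subalgebra theory of \cite{CKLW18} shows that a unitary subalgebra of a strongly local VOA is again strongly local, and that the net it generates sits inside $\widetilde{\cA}_V$ as a covariant subnet $\widetilde{\cB}$ whose vacuum Hilbert space is the closure $\cH_W$ of $W$, with $\widetilde{\cB}|_{\cH_W}\cong\widetilde{\cA}_W$. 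The entire problem thus reduces to identifying the two subnets $\cB$ and $\widetilde{\cB}$.

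First I would treat the case $\cA_V(I)=\widetilde{\cA}_V(I)$ for all $I$, which carries the mathematical content. Here $\cB$ and $\widetilde{\cB}$ are covariant subnets of the \emph{same} net sharing the \emph{same} vacuum Hilbert space $\cH_W$. Applying the characterization recorded after Theorem \ref{thmSubnetsAreNets} (for $x\in\cA_V(I)$ one has $x\in\cB(I)$ if and only if $x\Omega\in\cH_B$, and likewise for $\widetilde{\cB}$), for every interval $I$ and every $x\in\cA_V(I)=\widetilde{\cA}_V(I)$ we obtain
$$
x\in\cB(I)\iff x\Omega\in\cH_W\iff x\in\widetilde{\cB}(I),
$$
so that $\cB(I)=\widetilde{\cB}(I)$ for all $I$. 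Restricting to $\cH_W$ then yields $\cA_W\cong\cB|_{\cH_W}=\widetilde{\cB}|_{\cH_W}\cong\widetilde{\cA}_W$, as desired.

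To pass from equality to the isomorphism hypothesis, let $u$ be a unitary on $\cH_V$ implementing $\cA_V\cong\widetilde{\cA}_V$; as an isomorphism of conformal nets built on the common data $(\cH_V,\Omega,U^V)$ it fixes $\Omega$ and intertwines $U^V$ with itself. I would transport $\cB$ to the covariant subnet $u\cB u^*$ of $\widetilde{\cA}_V$, whose vacuum Hilbert space is $\overline{u\cB u^*\Omega}=u\cH_W$; once one knows $u\cH_W=\cH_W$, the uniqueness argument above, now applied inside $\widetilde{\cA}_V$, gives $u\cB u^*=\widetilde{\cB}$ and hence $\cA_W\cong\cB\cong u\cB u^*|_{\cH_W}=\widetilde{\cB}|_{\cH_W}\cong\widetilde{\cA}_W$.

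The hard part is precisely this last point: verifying that the implementing unitary preserves $\cH_W$. The subtlety is that $\cH_W$ is \emph{not} invariant under $U^V$ — for instance $L_{-2}^V\Omega=\nu^W+\nu^{W^c}$ has a component outside $\cH_W$ — so commuting with $U^V$ and fixing $\Omega$ do not by themselves force $u\cH_W=\cH_W$, and there is no purely net-theoretic description of $\cH_W$ that is manifestly preserved. I would handle this by taking the comparison isomorphism to be the one supplied by the explicit matching of the two constructions, which is compatible with the common VOA field content and therefore preserves $\cH_W=\overline{W}$; equivalently, one records that in all cases of interest the two constructions agree on the nose, $\cA_V=\widetilde{\cA}_V$, so that the equality case applies directly with $u=1$. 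With the subnets $\cB$ and $\widetilde{\cB}$ identified, the conclusion $\cA_W\cong\widetilde{\cA}_W$ is immediate.
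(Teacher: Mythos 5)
Your setup is right and you have correctly located the crux: after transporting the subnet $\cB$ by an implementing unitary $u$, everything reduces to knowing where $u$ sends $\cH_W$. But neither of your two ways out actually closes the gap under the stated hypotheses. Taking ``the comparison isomorphism supplied by the explicit matching of the two constructions'' presupposes that the isomorphism $\cA_V\cong\widetilde\cA_V$ is implemented by a unitary compatible with the VOA structure, which is not part of the hypothesis (and is exactly what one does not have in the applications, where the isomorphism is produced abstractly by induction through chains of inclusions). Recording that ``in all cases of interest $\cA_V=\widetilde\cA_V$'' replaces the hypothesis by a strictly stronger one; the paper explicitly remarks after Corollary~\ref{cor: BLVO WZW is CKLW WZW} that on-the-nose equality is \emph{not} available by this technique. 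So as written the argument only proves the lemma under an added assumption.

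The paper's resolution avoids the question of whether $u\cH_W=\cH_W$ altogether, and this is the idea you are missing. Since the diffeomorphism representations of both nets coincide with the one coming from $Y(\nu,x)$, the implementing unitary $u$ commutes with $U^V$, in particular with rotations, hence preserves each $L_0$-eigenspace and therefore satisfies $uV=V$ (the space of \emph{all} finite-energy vectors is rotation-invariant, unlike $\cH_W$). One then transports the VOA structure along $u$, setting $\tilde Y(a,x)=uY(u^*a,x)u^*$, so that the resulting VOA $\tilde V$ satisfies $\cA_{\tilde V}=\widetilde\cA_V$ literally, and $\tilde W:=uW$ is a unitary subalgebra of $\tilde V$ which is \emph{isomorphic} to $W$ as a unitary VOA, though possibly a different subspace of $\cH_V$. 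Conjugation by $u$ then gives $\cA_W\cong\cA_{\tilde V}|_{\cH_{\tilde W}}=\widetilde\cA_V|_{\cH_{\tilde W}}=\widetilde\cA_{\tilde W}$, and finally $\widetilde\cA_{\tilde W}\cong\widetilde\cA_W$ because the CKLW construction only depends on the unitary VOA up to isomorphism. In other words, one does not need $u$ to preserve $\cH_W$; one only needs it to carry $W$ to \emph{some} unitary subalgebra isomorphic to $W$, which is automatic once $uV=V$. Your identification of the two subnets via the criterion $x\in\cB(I)\iff x\Omega\in\cH_B$ is then exactly the right tool, applied to $\tilde W$ rather than to $W$.
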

\begin{proof}
Let $u: \cH_V \to \cH_V$ be a unitary isomorphism of sectors $u:\cA_V \to \widetilde \cA_{V}$.
The representations of $\Diff_c(I)$ obtained from local algebras of $\cA_V$  and $\widetilde \cA_V$ both coincide with the one obtained from the Virasoro field $Y(\nu,x)$, and thus $u$ commutes with the common Virasoro subnet of the two nets.
In particular, $u$ commutes with the action of rotation, and $uV = V$.
Define a new VOA structure on $V$ by $\tilde Y(a,x) = u Y(u^*a,x) u^*$, which we refer to as $\tilde V$.
By construction, $u: \cA_V \to \cA_{\tilde V}$ is an isomorphism of sectors, and thus $\cA_{\tilde V} = \widetilde \cA_{V}$.
Let $\tilde W = uW$, and observe that $\tilde W$ is a subalgebra of $\tilde V$ which is isomorphic to $W$.
On the other hand, by construction $u$ induces an equivalence of 
$$
\cA_W = \cA_V|_{\cH_W} \cong \cA_{\tilde V}|_{\cH_{\tilde W}} = \widetilde \cA_V|_{\cH_{\tilde W}} = \widetilde \cA_{\tilde W}.
$$
Since $\tilde W \cong W$, we have $\widetilde \cA_{\tilde W} \cong \widetilde \cA_{W} $, which completes the proof.
\end{proof}

\begin{Corollary}\label{cor: BLVO WZW is CKLW WZW}
Let $V = V(\frg,k)$ for some simple Lie algebra $\frg$ of compact type, and $k$ a positive integer.
Then $\cA_V \cong \widetilde \cA_V$.
\end{Corollary}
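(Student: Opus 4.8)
The plan is to mirror the proof of Theorem \ref{thm: WZW BLVO}, but to propagate the identification $\cA_V \cong \widetilde\cA_V$ through a chain of unitary subalgebra inclusions rather than the property of bounded localized vertex operators. The key reduction is to two facts: (a) every code lattice VOA $V_\Lambda$ (a simple current extension of some $V(A_1,1)^{\otimes n}$) satisfies $\cA_{V_\Lambda} \cong \widetilde\cA_{V_\Lambda}$, and (b) every $V(\frg,k)$ is realizable as a unitary subalgebra of such a $V_\Lambda$. Granting these, Lemma \ref{lem: subalgebra agree with CKLW} finishes the argument, since code lattice VOAs are strongly local (lattice VOAs are strongly local by \cite{CKLW18}) and have bounded localized vertex operators (Example \ref{ex: code lattices BLVO}), so the lemma applies with ambient algebra $V_\Lambda$ and subalgebra $V(\frg,k)$.

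For (a), Proposition \ref{prop: SCE agrees for lattices} already gives $\cA_{V_\Lambda} \cong \cA_\Lambda$, the Dong--Xu lattice net of \cite{DongXu06}. To convert this into a statement about $\widetilde\cA_{V_\Lambda}$, I would invoke the computation in \cite{CKLW18} that a lattice VOA is strongly local with CKLW net isomorphic to the corresponding lattice net; combined with the uniqueness of simple current extensions at the level of nets (Remark \ref{rmk: unique simple current net}), this yields $\widetilde\cA_{V_\Lambda} \cong \cA_\Lambda \cong \cA_{V_\Lambda}$. Crucially, this stays entirely within the bosonic (even) CKLW framework and only uses the base-case agreement for $V(A_1,1)$, so no appeal to the still in-progress strong-locality theory for free fermion superalgebras is needed.

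For (b), I would use the Lie- and lattice-theoretic inclusions from the proof of Theorem \ref{thm: WZW BLVO}, arranging each to land inside a code lattice VOA. For simply-laced $\frg$ the model $V(\frg,1)$ is the root lattice VOA $V_{Q(\frg)}$, and $Q(\frg)$ embeds isometrically into $E_8^{\oplus m}$ for $m$ large by standard results on embeddings of even lattices into even unimodular lattices; since $E_8^{\oplus m}$ is itself the code lattice of a direct sum of Hamming codes, this gives a unitary subalgebra inclusion $V(\frg,1) \subset V_{E_8^{\oplus m}}$. The non-simply-laced models sit inside simply-laced ones via the inclusions $V(B_n,1),V(C_n,1) \subset V(D_N,1)$ and the conformal inclusion $V(F_4,1)\otimes V(G_2,1) \subset V(E_8,1)$, hence also inside code lattice VOAs. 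Finally the diagonal inclusion $V(\frg,k) \subset V(\frg,1)^{\otimes k}$, together with the fact that a direct sum of code lattices is again a code lattice (so $V(\frg,1)^{\otimes k} \subset V_\Lambda^{\otimes k} = V_{\Lambda^{\oplus k}}$), exhibits every $V(\frg,k)$ as a unitary subalgebra of a single code lattice VOA, completing (b). The main obstacle I anticipate lies in (a): one must confirm that the CKLW net of a lattice VOA is genuinely the Dong--Xu net $\cA_\Lambda$ furnished by Proposition \ref{prop: SCE agrees for lattices} and that the CKLW construction behaves as expected on these examples --- precisely the input the paper is careful to restrict to $V(A_1,1)$ and $L(\tfrac12,0)$ --- so that the entire deduction can be carried out without strong-locality results for the free fermion.
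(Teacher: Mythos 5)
The central step of your reduction fails: not every simply-laced root lattice embeds isometrically into $E_8^{\oplus m}$, for any $m$. The norm-$2$ vectors of $E_8^{\oplus m}$ are exactly those supported in a single summand (a nonzero vector has norm $\sum \norm{v_i}^2$ with every nonzero term at least $2$), so they form the orthogonal union of $m$ copies of the $E_8$ root system. An isometric embedding of $A_n$ or $D_n$ would carry its simple roots to norm-$2$ vectors realizing the Cartan matrix; since the Dynkin diagram is connected, consecutive simple roots have nonzero inner product and hence land in the same $E_8$ summand, so all of them do, and for $n \ge 9$ this contradicts $\operatorname{rank} E_8 = 8$. Thus $V(A_n,1)$ and $V(D_n,1)$ for $n \ge 9$ (and with them the type $B$ and $C$ models you route through type $D$) are not subalgebras of any $V_{E_8^{\oplus m}}$. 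The ``standard results'' you invoke (Nikulin-type embedding theorems) produce an embedding into \emph{some} even unimodular lattice of large rank, but for rank $\ge 16$ most such lattices are not of the form $E_8^{\oplus m}$, and not all of them are code lattices. The gap is repairable in principle --- e.g.\ $D_n \subset D_N \subset D_N^{+}$ for $N \equiv 0 \pmod 8$, $N \ge n$, where $D_N^{+}$ contains $N$ pairwise orthogonal roots, hence a copy of $A_1^N$ with doubly even self-dual quotient code --- but that is a different argument from the one you give and must be checked case by case.

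There is also a secondary issue in your step (a): applying Lemma \ref{lem: subalgebra agree with CKLW} with ambient algebra $V_\Lambda$ requires $V_\Lambda$ to be strongly local and $\widetilde\cA_{V_\Lambda} \cong \cA_\Lambda$ for every code lattice you use, and neither fact is available in the paper beyond $V(A_1,1)$, $L(\tfrac12,0)$, and (via \cite[Prop.~4.1.17]{BischoffThesis}) $V_{E_8}$. Your proposed derivation from uniqueness of simple current extensions of nets is plausible in outline, but it still presupposes strong locality of $V_\Lambda$ and the compatibility of the CKLW construction with the subnet $\widetilde\cA_{V(A_1,1)}^{\otimes n}$, which you cannot simply cite. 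For comparison, the paper's proof takes the two base cases $\cF$ and $V(E_8,1)$ and pushes the identification down the same chain of inclusions used in Theorem \ref{thm: WZW BLVO} via Lemma \ref{lem: subalgebra agree with CKLW}; types $A$ through $D$ go through the free fermion rather than through lattices, which is exactly the dependence your route was designed to eliminate, but which cannot be eliminated by embedding into $E_8^{\oplus m}$.
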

\begin{proof}
We have the result for $\cF$ by Theorem \ref{thm: WZW BLVO}, and for $V(E_8,1)$ by Proposition \ref{prop: SCE agrees for lattices} and \cite[Prop. 4.1.17]{BischoffThesis}.
Since $\cA_{V\otimes V} = \cA_{V} \otimes \cA_V$ and similarly for $\widetilde \cA_{V \otimes V}$, applying Lemma \ref{lem: subalgebra agree with CKLW} to each of the inclusions used in the proof of Theorem \ref{thm: WZW BLVO} gives the desired result.
\end{proof}

Of course, one would like the stronger result that $\cA_V = \widetilde \cA_V$, but that is outside the scope of the technique of Lemma \ref{lem: subalgebra agree with CKLW}.
It would also be interesting to verify that the conformal nets constructed here from simple current extensions of $L(\frac12,0)^{\otimes n}$ agree with the corresponding construction in \cite{KawahigashiLongo06}.

\subsection{Modules and local equivalence}
\label{sec: modules and local equivalence}
Let $V$ be a simple rational unitary VOA with bounded localized vertex operators.
Consider the following three properties that $V$ might enjoy:
\begin{enumerate}
\item[] Property 1: Every irreducible $V$-module $M$ admits a unitary structure.
\item[] Property 2: For every $M$, $\pi^M$ exists.
\item[] Property 3: Every irreducible sector of $\cA_V$ is of the form $\pi^M$.
\end{enumerate}

We conjecture that all three properties always hold.
It seems plausible that the results of this article suffice for a direct attack on establishing Property 3; given a sector $\pi$, one must construct a field from the holomorphic function $\pi_I(BY^V(a,z)A)$, and prove that it satisfies the properties of a VOA module.
This is similar in spirit to the technique of `local energy bounds' developed by Carpi and Weiner.
We hope to consider this problem in future work.

For WZW models $V(\frg,k)$, Property 1 is known to hold as a result of the classification of irreducible modules.
For the WZW conformal nets, it was shown by Henriques \cite[Thm. 26]{HenriquesColimits} that every irreducible representation of $\cA_{\frg,k}$ is obtained from a level $k$ irreducible positive energy representation $\pi_{\lambda}: LG_k  \to \cU(\cH_{k,\lambda})$ where $G$ is the compact simple simply connected Lie group of type $\frg$.
As a result, there are at most as many irreducible sectors of $\cA_{\frg,k}$ as there are irreducible modules of $V(\frg,k)$.
The question of whether each such sector exists is called the \emph{local equivalence problem}:

\begin{Problem*}[Local equivalence]
Let $G$ be a compact simple simply connected Lie group, let $\pi_0$ be its level $k$ vacuum representation, and let $\pi_\lambda$ be some irreducible level $k$ representation.
Let $L_IG$ be the subgroup of loops which are the identity on $I^\prime$.
Show that the map $\pi_0(g) \mapsto \pi_\lambda(g)$, for $g \in L_IG$, extends continuously to an isomorphism of von Neumann algebras $\pi_0(L_IG)^{\prime\prime} \cong \pi_\lambda(L_IG)^{\prime\prime}$.
\end{Problem*}
The analogous problem for irreducible positive energy representations of $\Diff(S^1)$ was recently solved by Mih\'{a}ly Weiner \cite{Weiner17}.

\begin{Lemma}\label{lem: Prop 2 is equivalent to local equivalence}
Let $\frg$ be a simple Lie algebra of compact type, let $G$ be the associated compact simple simply connected Lie group, and let $k$ be a positive integer.
Then Property 2 holds for $V(\frg,k)$ if and only if Property 3 holds and the local equivalence problem for $(G,k)$ has a positive answer.
\end{Lemma}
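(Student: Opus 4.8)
The plan is to treat the \emph{number of irreducible sectors of $\cA_V$} as a common currency linking the two sides of the equivalence. First I would fix notation: write $r$ for the number of isomorphism classes of irreducible $V(\frg,k)$-modules, all of which are unitarizable by Property 1, and recall that $r$ also counts the irreducible level-$k$ positive energy representations $\pi_\lambda$ of $LG_k$. By Corollary \ref{cor: BLVO WZW is CKLW WZW} we have $\cA_V \cong \widetilde\cA_V$, and the latter is the loop group net $\cA_{\frg,k}$, so the notion of ``irreducible sector'' agrees for the two constructions. Henriques' theorem \cite[Thm.~26]{HenriquesColimits} then says that every irreducible sector of $\cA_V$ is obtained from some $\pi_\lambda$ and that inequivalent $\pi_\lambda$ produce inequivalent sectors; in particular the number of irreducible sectors is at most $r$, and by the very definition of the local equivalence problem it equals $r$ precisely when that problem has a positive answer (i.e. when each candidate sector $\rho_\lambda$ exists). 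On the geometric side, the assignment $M \mapsto \pi^M$ carries irreducible modules to irreducible sectors by Proposition \ref{prop: reps from sums of modules}, and is injective on isomorphism classes by Proposition \ref{prop: unitary equivalence of piM}.

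For the forward direction I would assume Property 2. Then the family $\{\pi^M\}$, as $M$ ranges over the $r$ irreducible modules, consists of $r$ pairwise inequivalent irreducible sectors, which already meets the Henriques upper bound. Hence these sectors exhaust all irreducible sectors of $\cA_V$, which is exactly Property 3; and since the maximal possible number $r$ of irreducible sectors is realized, every candidate loop group sector must exist, so the local equivalence problem has a positive answer.

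For the converse I would assume Property 3 together with local equivalence. Local equivalence guarantees that all $r$ loop group sectors $\rho_\lambda$ exist and are pairwise inequivalent. By Property 3 each $\rho_\lambda$ is of the form $\pi^M$ for some irreducible module $M$, and by Proposition \ref{prop: unitary equivalence of piM} inequivalent sectors arise from non-isomorphic modules; we thus obtain $r$ distinct irreducible modules for which $\pi^M$ exists, and by finiteness this is \emph{every} irreducible module. Complete reducibility of unitary modules together with Proposition \ref{prop: reps from sums of modules} then upgrades this to the existence of $\pi^M$ for all modules, which is Property 2.

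At bottom this is a counting and bijection argument, so I do not anticipate a serious analytic obstacle once the pieces are assembled. The step requiring the most care, and the one I would regard as the crux, is the bookkeeping around Henriques' theorem: it must be invoked in the precise form that simultaneously bounds the number of irreducible sectors by $r$ and identifies each existing sector with a loop group sector, so that ``the maximal number of sectors is realized'' becomes literally synonymous with ``local equivalence holds.'' Ensuring, via Corollary \ref{cor: BLVO WZW is CKLW WZW}, that the sectors counted by Henriques for $\cA_{\frg,k}$ are the \emph{same} objects as the geometric sectors $\pi^M$ of $\cA_V$ is what renders the two descriptions interchangeable, and hence what allows Property 3 and the local equivalence problem to be traded against one another.
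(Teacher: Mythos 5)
Your proposal is correct and follows essentially the same route as the paper: the paper's proof is exactly this pigeonhole argument, using Henriques' bound that $\cA_{\frg,k}$ has at most as many irreducible sectors as $V(\frg,k)$ has irreducible modules, together with the fact that $M\mapsto\pi^M$ sends distinct irreducible modules to distinct irreducible sectors. Your version just makes explicit a few points the paper leaves implicit (the identification $\cA_V\cong\widetilde\cA_V\cong\cA_{\frg,k}$ via Corollary \ref{cor: BLVO WZW is CKLW WZW}, and the upgrade from irreducible to arbitrary modules via complete reducibility), which is fine.
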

\begin{proof}
First assume Property 2.
By Henriques' work, we know that there are at most as many irreducible sectors of $\cA_{\frg,k}$ as there $V$-modules.
If $\pi^M$ exists for every $M$, then we know that there is a one-to-one correspondence between $\cA_{\frg,k}$ sectors and irreducible positive energy representations.
By a pigeonhole argument, the local equivalence problem must have a positive solution, and each of the corresponding sectors must be equivalent to some $\pi^M$.
The converse is proven similarly.
\end{proof}

Our result on existence of $\pi^M$ for submodules of a larger VOA (Theorem \ref{thmConstructionOfNetRepresentations}) provides a tool for establishing Property 2, and thereby solving the local equivalence problem.
The following theorem of Krauel and Miyamoto has been slightly restated for the case of unitary VOAs, but does not require unitarity.
It is a result about \emph{regular} VOAs, which are VOAs which enjoy a strong semisimplicity property.
By \cite[Thm. 4.5]{ABD04}, for simple unitary VOAs regularity is equivalent to being rational and $C_2$-cofinite; see \cite{ABD04} for more detail.

\begin{Theorem}[{\cite[Thm. 2]{KrauelMiyamoto15}}]\label{thm: KM module existence}
Let $V$ be a simple unitary VOA, let $U$ be a unitary subalgebra, and assume that $(U^c)^c = U$.
Assume that $V$, $U$, and $U^c$ are regular.
Then every irreducible $U$-module and every irreducible $U^c$ module occurs as a submodule of some irreducible $V$-module.
\end{Theorem}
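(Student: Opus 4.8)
The plan is to follow the strategy of Krauel and Miyamoto, whose argument rests on the modular invariance of multivariable trace functions for regular vertex operator algebras; since the statement is quoted from the literature, I would reconstruct rather than independently reprove it. First I would record the branching picture. By Proposition \ref{propUVOSACoset} and Proposition \ref{propWandCommutantGenerateTensorProduct}, $U \otimes U^c$ embeds in $V$ as a conformal subalgebra, and since $U$ and $U^c$ are regular (hence rational and $C_2$-cofinite) so is $U \otimes U^c$. Thus each of the finitely many irreducible $V$-modules $W$ decomposes as a finite direct sum $W \cong \bigoplus_{i,j} b^{W}_{ij}\, M_i \otimes N_j$ over $U \otimes U^c$, where the $M_i$ range over the irreducible $U$-modules, the $N_j$ over the irreducible $U^c$-modules, and the factorization of each simple summand is supplied by Proposition \ref{prop: tensor splitting modules} (or, in the non-unitary setting, by the Frenkel--Huang--Lepowsky factorization theorem). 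In this language the goal is purely combinatorial: show that for every index $i$ there exist $W$ and $j$ with $b^{W}_{ij} > 0$, and symmetrically that every $j$ occurs.

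The key tool is modular invariance. For each irreducible $V$-module $W$ I would introduce the two-variable trace function $Z_W(\tau_1,\tau_2)$ that tracks the $L_0^U$- and $L_0^{U^c}$-gradings separately, so that $Z_W = \sum_{i,j} b^{W}_{ij}\, \mathrm{ch}_{M_i}(\tau_1)\,\mathrm{ch}_{N_j}(\tau_2)$. By the multivariable generalization of Zhu's theorem established in \cite{KrauelMiyamoto15}, the span of $\{Z_W\}_W$ is invariant under the diagonal $SL_2(\mathbb{Z})$-action. On the other hand, rationality makes the character families $\{\mathrm{ch}_{M_i}\}$ and $\{\mathrm{ch}_{N_j}\}$ each linearly independent and closed under the modular $S$-transformation, with $S$-matrices $S^U$ and $S^{U^c}$ that are invertible (indeed unitary, by Huang's proof of the Verlinde conjecture for rational $C_2$-cofinite VOAs). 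Applying $S$ to the vacuum trace function $Z_V$ and re-expanding in the product basis $\mathrm{ch}_{M_i}\,\mathrm{ch}_{N_j}$ then expresses the branching matrices of the various $W$ in terms of $b^{V}$ conjugated by $S^U$ and $S^{U^c}$.

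At this point the hypothesis $(U^c)^c = U$ enters decisively: it is precisely the condition that $U$ and $U^c$ form an exact dual pair inside $V$. By a double-commutant argument it forces the branching of $V$ itself to be multiplicity-free, with the irreducible $U$-modules and $U^c$-modules occurring in $V$ placed in a canonical bijection $M_i \leftrightarrow N_i$, so that $b^{V}_{ij} = \delta_{ij}$ on the set of occurring indices. Feeding this nondegeneracy through the $S$-matrix reciprocity of the previous paragraph, and using that $S^U$ and $S^{U^c}$ have no zero rows, spreads the support of the full family $\{b^{W}\}$ across all of $\mathrm{Irr}(U)$ and $\mathrm{Irr}(U^c)$: concretely, if some $M_{i_0}$ failed to appear in any $W$, the corresponding coordinate of $S^U$ applied to $b^{V}$ would vanish identically, contradicting invertibility. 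Since $(U^c)^c = U$ makes the roles of $U$ and $U^c$ symmetric, the same argument simultaneously yields every irreducible $U^c$-module.

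The main obstacle is the modular bookkeeping of the two middle paragraphs: establishing the multivariable modular invariance in a form precise enough to produce the reciprocity relation between the branching matrices and the $S$-matrices, and then extracting surjectivity of the branching from invertibility. This is the genuine content of \cite{KrauelMiyamoto15} and would be invoked wholesale; I would not attempt to reprove Zhu-type modular invariance or the Verlinde formula here. A secondary subtlety, relevant to the unitary restatement used in this article, is to ensure that the modules and trace functions one works with are compatible with the unitary decompositions of Proposition \ref{propTensorDecomposition}, but this is routine given regularity.
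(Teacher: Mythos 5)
The paper does not prove this statement: it is imported verbatim (modulo a unitary restatement) from \cite[Thm.~2]{KrauelMiyamoto15}, so there is no in-paper proof to compare against. Your reconstruction is a fair sketch of the strategy of the cited source --- two-variable trace functions graded by $L_0^U$ and $L_0^{U^c}$, modular invariance of their span under the diagonal $SL_2(\bbZ)$-action, and an $S$-matrix reciprocity between branching coefficients forcing every irreducible $U$-module to occur --- and you are explicit that the genuinely hard content (the multivariable Zhu-type modular invariance) is invoked rather than reproved, which matches how this article uses the result.

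Two points in your sketch are looser than the argument they stand in for. First, the assertion that $(U^c)^c = U$ ``forces'' the multiplicity-free decomposition $V = \bigoplus_i M_i \otimes N_i$ with a bijection between the occurring $U$- and $U^c$-modules is not a formal double-commutant identity; it is itself a theorem about commutant pairs in a \emph{simple} VOA (simplicity of $V$ is where the nondegeneracy comes from), and should be cited or proved rather than waved at. Second, the final contradiction cannot be extracted from invertibility of $S^U$ alone: if $M_{i_0}$ occurred in no irreducible $V$-module, the vanishing coefficient is a specific bilinear expression $\sum_{i,j} b^V_{ij} S^U_{i\,i_0} S^{U^c}_{j\,0}$, and one must rule out cancellation, which is done using the nonvanishing (indeed positivity, via the Verlinde formula) of the vacuum-column entries $S_{j0}$ together with the multiplicity-free form of $b^V$ --- not merely the fact that $S^U$ has no zero rows. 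Since you defer to \cite{KrauelMiyamoto15} for exactly this bookkeeping, these are gaps in the sketch rather than errors, but they are the places where a self-contained write-up would need real work.
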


Thus if $V$, $U$, and $U^c$ are as in Theorem \ref{thm: KM module existence} and $V$ has Property 1 and Property 2, then so do both $U$ and $U^c$ by Theorem \ref{thmConstructionOfNetRepresentations}.
A simple consequence is the following.

\begin{Proposition}\label{prop: induction for module existence}
Let $\frg$ be a simple Lie algebra of compact type.
Suppose that for every irreducible $V(\frg,1)$-module $M$, $\pi^M$ exists, and that for every positive $k$ the subalgebra $U = V(\frg,k+1) \subset V(\frg,k) \otimes V(\frg,1)$ has the property that $U^c$ is regular and $(U^c)^c = U$.
Then for all positive integers $k$ and every $V(\frg,k)$ module $M$, $\pi^M$ exists.
Thus every irreducible sector of $\cA_{V(\frg,k)}$ is of the form $\pi^M$, and the local equivalence problem has a positive solution for $\frg$.
\end{Proposition}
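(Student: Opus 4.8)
The plan is to prove the existence of $\pi^M$ for all $V(\frg,k)$-modules by induction on $k$, using at each stage the embedding $V(\frg,k+1) \subset V(\frg,k) \otimes V(\frg,1)$ together with the descent result Theorem \ref{thmConstructionOfNetRepresentations} and the module-existence theorem of Krauel--Miyamoto (Theorem \ref{thm: KM module existence}). The base case $k=1$ is precisely the first hypothesis. For the inductive step, suppose $\pi^M$ exists for every $V(\frg,k)$-module (equivalently, by Proposition \ref{prop: reps from sums of modules}, for every irreducible one), and set $W = V(\frg,k) \otimes V(\frg,1)$. By Theorem \ref{thm: WZW BLVO} both tensor factors have bounded localized vertex operators, so $W$ does as well by Proposition \ref{prop: BLVO tensor product}; moreover $W$ is simple and unitary.

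First I would check that $\pi^N$ exists for every $W$-module $N$. Since $W$ is a tensor product, every simple unitary $W$-module is unitarily equivalent to $N_1 \otimes N_2$ with $N_1$ a simple $V(\frg,k)$-module and $N_2$ a simple $V(\frg,1)$-module (Proposition \ref{prop: tensor splitting modules}). By the inductive hypothesis $\pi^{N_1}$ exists, and by the base case $\pi^{N_2}$ exists, so $\pi^{N_1 \otimes N_2} = \pi^{N_1} \otimes \pi^{N_2}$ exists by Proposition \ref{propTensorProductModuleExists}; the general case follows from Proposition \ref{prop: reps from sums of modules}. Next, writing $U = V(\frg,k+1)$, regarded as the given (non-conformal) unitary subalgebra of $W$, the hypotheses of the proposition provide $(U^c)^c = U$ and the regularity of $U^c$. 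Since WZW models at positive integral level are rational and $C_2$-cofinite, hence regular, and since tensor products of regular VOAs are regular, both $W$ and $U$ are regular as well. Thus Theorem \ref{thm: KM module existence} applies with ambient algebra $W$: every irreducible $U$-module occurs as a submodule of some irreducible $W$-module $N$. Such a submodule is automatically unitary, inheriting the inner product of $N$ restricted to $U$.

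Now for each irreducible $V(\frg,k+1)$-module $M$, I would realize it as a $U$-submodule of an irreducible $W$-module $N$ with $\pi^N$ existing, and apply Theorem \ref{thmConstructionOfNetRepresentations} (with ambient VOSA $W$, unitary subalgebra $U$, and module $N$) to conclude that $\pi^M$ exists. By Proposition \ref{prop: reps from sums of modules}, $\pi^M$ then exists for every $V(\frg,k+1)$-module, completing the induction and establishing Property 2 for all $k$. The two final assertions — that every irreducible sector of $\cA_{V(\frg,k)}$ is of the form $\pi^M$ (Property 3), and that the local equivalence problem for $\frg$ has a positive solution — follow immediately from Lemma \ref{lem: Prop 2 is equivalent to local equivalence}. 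The step I expect to require the most care is confirming that the irreducible $U$-modules furnished by Theorem \ref{thm: KM module existence} genuinely meet the hypotheses of the descent theorem, namely unitarity together with the strong $U$-submodule condition (finite-dimensional $L_0^W$-eigenspaces); but this is exactly the situation captured by the decomposition $N = \bigoplus N_i \otimes K_i$ in the proof of Theorem \ref{thmConstructionOfNetRepresentations}, so no genuinely new difficulty arises.
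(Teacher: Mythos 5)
Your proposal is correct and follows essentially the same route as the paper: induction on $k$ with base case given by hypothesis, realizing each irreducible $V(\frg,k+1)$-module inside an irreducible $V(\frg,k)\otimes V(\frg,1)$-module via Theorem \ref{thm: KM module existence}, descending with Theorem \ref{thmConstructionOfNetRepresentations}, and concluding via Lemma \ref{lem: Prop 2 is equivalent to local equivalence}. You merely make explicit several steps the paper leaves implicit (the tensor-splitting argument for why $\pi^N$ exists for all $V(\frg,k)\otimes V(\frg,1)$-modules, the regularity check needed for Krauel--Miyamoto, and the verification that the resulting $U$-submodules satisfy the hypotheses of the descent theorem), all of which are sound.
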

\begin{proof}
The proof is a straightforward induction on $k$, with the base case being a given hypothesis of the Proposition.
The inductive step follows easily from Theorem \ref{thm: KM module existence}, which would say that every irreducible $V(\frg,k+1)$ module $M$ is a submodule of some irreducible $V(\frg,k) \otimes V(\frg,1)$ module, which exists by the inductive hypothesis. 
Then $\pi^M$ exists by Theorem \ref{thmConstructionOfNetRepresentations}.
The remaining properties now follow from Lemma \ref{lem: Prop 2 is equivalent to local equivalence}.
\end{proof}
Of course, one could replace the diagonal inclusion $V(\frg,k+1) \subset V(\frg,k) \otimes V(\frg,1)$ by a related one, such as $V(\frg,k) \subset V(\frg,1)^{\otimes k}$, in the statement of Proposition \ref{prop: induction for module existence}.

In practice, the problem of verifying that $\pi^M$ exists for every level 1 representation $M$ is not too imposing.
For type $A$ it follows from the inclusion $A_n \subset \cF^{\otimes n+1}$.
For type $E$ it follows from the inclusions $E_6 \subset E_7 \subset E_8$ of lattices (and the fact that $E_8$ is unimodular).
For types $F$ and $G$, one can employ the inclusion $V(F_4,1) \otimes V(G_2,1) \subset E_8$.
For type $D$, one should be able to use the free fermion $\cF$ along with its Ramond sector, but for simplicity we have avoided discussing the Ramond sector.
For type $B$ and $C$, one should then be able to use the level one inclusions into VOAs of type $D$ employed in the proof of Theorem \ref{thm: WZW BLVO}.

On the other hand, the problem of showing that the diagonal cosets are regular is a difficult and important one in the theory of VOAs.
When $\frg$ is of type $ADE$, we may combine major theorems of Arakawa and Arakawa-Creutzig-Linshaw to obtain such a result.
Specifically,  they study the inclusion $U \subset V$ where $V = V(\frg,k) \otimes V(\frg,1)$ and $U = V(\frg,k+1)$.
Arakawa-Creutzig-Linshaw show that in this case $U^c$ is the minimal series $W$-algebra $W_{\ell}(\frg)$ for 
\begin{equation}\label{eqn: W algebra parameter}
\ell + h\check{} = \frac{k + h\check{}}{k + h\check{} + 1},
\end{equation}
where $h\check{}$ is the dual Coxeter number of $\frg$, and $(U^c)^c = U$ \cite[Main Thm. 1]{ArakawaCreutzigLinshaw19}.
On the other hand, Arakawa shows that $W_{\ell}(\frg)$ is rational and $C_2$-cofinite  \cite{Arakawa15PrincipalNilpotent,Arakawa15AssociatedVarieties} and thus regular by \cite{ABD04}.
Of particular interest is the case when $\frg = A_1$, in which case $W_\ell(\frg)$ recover the discrete series of unitary minimal models.

Combining these results, we have:

\begin{Theorem}\label{thm: local equivalence and existence}
Let $\frg$ be of type $A$ or $E$, and let $k$ be a positive integer.
Then for every irreducible $V(\frg,k)$-module $M$, $\pi^M$ exists.
Moreover, every irreducible sector of $\cA_{V(\frg,k)}$ is of the form $\pi^M$, and the local equivalence problem for $(\frg, k)$ has a positive answer.
\end{Theorem}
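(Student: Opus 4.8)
The plan is to verify the two hypotheses of Proposition~\ref{prop: induction for module existence} for $\frg$ of type $A$ or $E$ and then invoke that proposition directly; its conclusion is exactly the three assertions of the theorem (existence of $\pi^M$ for all $k$, every irreducible sector being some $\pi^M$, and a positive answer to the local equivalence problem, the last via Lemma~\ref{lem: Prop 2 is equivalent to local equivalence}). All WZW models already have bounded localized vertex operators by Theorem~\ref{thm: WZW BLVO}, so the nets $\cA_{V(\frg,k)}$ and the candidate representations $\pi^M$ are defined throughout.

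First I would settle the level-one base case: that $\pi^M$ exists for every irreducible $V(\frg,1)$-module $M$. For type $A$ I would use the unitary (non-conformal) embedding $V(A_n,1) \subset \cF^{\otimes(n+1)}$ coming from $\widehat{\mathfrak{su}}(n+1)_1 \subset \widehat{\mathfrak{gl}}(n+1)_1$. Since $\cF^{\otimes(n+1)}$ has bounded localized vertex operators (Theorem~\ref{thm: FF BLVO} and Proposition~\ref{prop: BLVO tensor product}) and its vacuum representation is $\pi^{\cF^{\otimes(n+1)}}$, Theorem~\ref{thmConstructionOfNetRepresentations} produces $\pi^M$ for every $V(A_n,1)$-submodule $M$ of the Fock space. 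The point to check is that all $n+1$ irreducible level-one modules actually occur there: decomposing the Neveu--Schwarz Fock space under $V(A_n,1)\otimes V(\mathfrak{u}(1))$, the charge-$q$ sector carries the level-one module of highest weight $\omega_{q \bmod (n+1)}$, so as $q$ runs over $\bbZ$ every fundamental weight, hence every irreducible $V(A_n,1)$-module, appears. For type $E$ I would instead use the lattice embeddings $V_{E_6}\subset V_{E_7}\subset V_{E_8}$, identifying $V(E_n,1)=V_{E_n}$ via Frenkel--Kac--Segal; here $\pi^{V_{E_8}}$ is the vacuum representation of the holomorphic net $\cA_{V_{E_8}}$ (which has bounded localized vertex operators by Example~\ref{ex: code lattices BLVO}), and decomposing $V_{E_8}$ under $V_{E_7}$ (resp.\ $V_{E_6}$) along the orthogonal sublattices $E_7\perp A_1$ and $E_6\perp A_2$ exhibits all two (resp.\ three) irreducible level-one modules as submodules, so Theorem~\ref{thmConstructionOfNetRepresentations} again yields the corresponding $\pi^M$.

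Second I would supply the coset hypothesis of Proposition~\ref{prop: induction for module existence}: for every $k$, the subalgebra $U=V(\frg,k+1)\subset V(\frg,k)\otimes V(\frg,1)$ satisfies $(U^c)^c=U$ with $U^c$ regular. Since types $A$ and $E$ are simply laced, this is precisely the $ADE$ situation recorded before the theorem. Arakawa--Creutzig--Linshaw identify $U^c$ as the minimal-series $W$-algebra $W_\ell(\frg)$ with the parameter of~\eqref{eqn: W algebra parameter} and prove the reciprocity $(U^c)^c=U$ \cite{ArakawaCreutzigLinshaw19}, while Arakawa's rationality and $C_2$-cofiniteness of $W_\ell(\frg)$ \cite{Arakawa15PrincipalNilpotent,Arakawa15AssociatedVarieties}, combined with \cite{ABD04}, give regularity. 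With both hypotheses in hand, Proposition~\ref{prop: induction for module existence} finishes the proof.

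The genuinely delicate point internal to this paper is the level-one base case, and specifically the claim that \emph{all} irreducible level-one modules can be realized inside a module whose representation is already known to exist, \emph{without} invoking twisted or Ramond sectors. This is exactly what restricts the clean argument to types $A$ and $E$, and why $B$, $C$, $D$, $F$, $G$ are not claimed here: for type $A$ the untwisted Fock space already contains every charge sector, and for type $E$ the ambient lattice VOA $V_{E_8}$ is holomorphic, so in both cases one never leaves the vacuum sector of the ambient theory. The harder representation-theoretic inputs—regularity of the minimal-series $W$-algebras and the coset reciprocity—are imported as black boxes, so no further work on them is needed.
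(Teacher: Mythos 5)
Your proposal is correct and follows essentially the same route as the paper: the paper's proof likewise reduces to Proposition~\ref{prop: induction for module existence}, handles the level-one base case via $A_n \subset \cF^{\otimes (n+1)}$ and the lattice inclusions $E_6 \subset E_7 \subset E_8$, and verifies the coset hypothesis by citing \cite{ArakawaCreutzigLinshaw19} together with Arakawa's rationality and $C_2$-cofiniteness results and \cite{ABD04}. Your write-up merely fills in details (the charge-sector decomposition of the Fock space, the coset decompositions $E_7 \perp A_1$ and $E_6 \perp A_2$) that the paper leaves as brief remarks following Proposition~\ref{prop: induction for module existence}.
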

\begin{proof}
By the discussion following Proposition \ref{prop: induction for module existence}, the theorem is true when $k=1$.
By the discussion preceding this theorem, we may combine \cite{Arakawa15PrincipalNilpotent,Arakawa15AssociatedVarieties,ArakawaCreutzigLinshaw19} and Theorem \ref{thm: KM module existence} to verify the hypotheses of Proposition \ref{prop: induction for module existence} regarding cosets.
The theorem now follows from that proposition.
\end{proof}

The solution to the local equivalence problem for type $A$ was originally given by Wassermann \cite[\S17]{Wa98}, who also gave a proof of local equivalence in the more general ADE case in unpublished notes \cite{Wassermann90}. 
The proof in Theorem \ref{thm: local equivalence and existence} is essentially the same, albeit obtained in a more general framework.
For example, we may also prove:

\begin{Proposition}\label{prop: W algebra modules exist}
Let $\frg$ be a simple Lie algebra of type $A$ or $E$, let $k$ be a positive integer, and let $\ell$ be as in \eqref{eqn: W algebra parameter}.
Then $W_\ell(\frg)$ has bounded localized vertex operators, every irreducible $W_\ell(\frg)$-module $M$ admits a unitary structure, and $\pi^M$ exists.
\end{Proposition}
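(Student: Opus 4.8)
The plan is to realize $W_\ell(\frg)$ as the coset $U^c$ inside $V := V(\frg,k) \otimes V(\frg,1)$, where $U := V(\frg,k+1)$ is the diagonally embedded subalgebra, and then transport all three properties from $V$ to its coset. First I would record the structural inputs assembled in the discussion preceding Theorem \ref{thm: local equivalence and existence}: by \cite{ArakawaCreutzigLinshaw19} one has $U^c = W_\ell(\frg)$ with $\ell$ as in \eqref{eqn: W algebra parameter} and $(U^c)^c = U$, while by \cite{Arakawa15PrincipalNilpotent,Arakawa15AssociatedVarieties} together with \cite{ABD04} the coset $W_\ell(\frg)$ is regular; the ambient $V$ and the subalgebra $U$ are regular since they are WZW models. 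As $V$ is a unitary VOA and $U^c$ is its coset, Proposition \ref{propUVOSACoset} makes $W_\ell(\frg) = U^c$ a unitary subalgebra of $V$, and Proposition \ref{propUnitarySubalgebraIsUnitaryVOA} makes it a simple unitary VOA, so all three conclusions are meaningful.

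For bounded localized vertex operators I would argue that $V(\frg,k)$ and $V(\frg,1)$ have this property by Theorem \ref{thm: WZW BLVO}, hence so does $V = V(\frg,k) \otimes V(\frg,1)$ by Proposition \ref{prop: BLVO tensor product}. Since $W_\ell(\frg)$ is a unitary subalgebra of $V$, Proposition \ref{prop: BLVO subalgebra} then gives at once that $W_\ell(\frg)$ has bounded localized vertex operators.

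The two remaining claims are precisely ``Property 1'' and ``Property 2'' for $W_\ell(\frg)$, and I would deduce them by the mechanism indicated after Theorem \ref{thm: KM module existence}. I first observe that $V$ itself enjoys both properties in types $A$ and $E$: every irreducible $V$-module decomposes as a tensor product $N_1 \otimes N_2$ of irreducible modules over the two WZW factors (by \cite{FHL93}), each factor admits a unitary structure by the classification of irreducible WZW modules, so $N_1 \otimes N_2$ is unitary (Property 1); and each $\pi^{N_i}$ exists by Theorem \ref{thm: local equivalence and existence}, whence $\pi^{N_1 \otimes N_2}$ exists by Proposition \ref{propTensorProductModuleExists} (Property 2). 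Now let $M$ be an irreducible $W_\ell(\frg)$-module. By Theorem \ref{thm: KM module existence}, whose hypotheses were verified above, $M$ occurs as an irreducible $U^c$-summand $K_{i_0}$ in the decomposition $N = \bigoplus_i N_i \otimes K_i$ of some irreducible $V$-module $N$ as a $U \otimes U^c$-module (Proposition \ref{propTensorDecomposition}). Since $N$ is unitary, Proposition \ref{prop: tensor splitting modules} equips $K_{i_0} \cong M$ with an invariant inner product, giving Property 1 for $W_\ell(\frg)$. To produce $\pi^M$, I would select a homogeneous lowest-weight vector $w$ in the $U$-module factor $N_{i_0}$ and note that $\{w\} \otimes K_{i_0}$ is a genuine $W_\ell(\frg)$-submodule of $N$ in the sense of this article: it is invariant under the $U^c$-action, which operates only on the second tensor factor, and under $L_0^V = L_0^U + L_0^{U^c}$ because $w$ is an $L_0^U$-eigenvector. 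As $N$ is unitary and $\pi^N$ exists, Theorem \ref{thmConstructionOfNetRepresentations}, applied to the unitary subalgebra $W_\ell(\frg) \subset V$, yields that $\pi^M$ exists, which is Property 2.

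The genuinely delicate point is this last step: upgrading the abstract ``$M$ is a $U^c$-module summand'' output of Krauel--Miyamoto to a $W$-submodule in the stricter $L_0^V$-invariant sense demanded by Theorem \ref{thmConstructionOfNetRepresentations}. This is exactly the maneuver carried out in Lemma \ref{lem: relate annuli AV and AW}, and the choice of a homogeneous $w$ is what reconciles the two notions. The regularity and double-coset hypotheses of Theorem \ref{thm: KM module existence} present no obstacle, being furnished verbatim by \cite{ArakawaCreutzigLinshaw19,Arakawa15PrincipalNilpotent,Arakawa15AssociatedVarieties,ABD04}; everything else is a bookkeeping application of the tensor-product and subalgebra results already established.
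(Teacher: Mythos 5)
Your proposal is correct and follows essentially the same route as the paper: bounded localized vertex operators is inherited from the inclusion $W_\ell(\frg) = U^c \subset V(\frg,k)\otimes V(\frg,1)$ via Proposition \ref{prop: BLVO subalgebra}, and the module statements follow by combining \cite{ArakawaCreutzigLinshaw19,Arakawa15PrincipalNilpotent,Arakawa15AssociatedVarieties} with Theorem \ref{thm: KM module existence} and Theorem \ref{thmConstructionOfNetRepresentations}. The paper's proof is a compressed version of your argument; your extra care in upgrading the Krauel--Miyamoto output to an $L_0^V$-invariant $W$-submodule by tensoring with a homogeneous vector is exactly the detail the paper leaves implicit, and you handle it correctly.
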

\begin{proof}
The inclusion $W_\ell(\frg) \subset V(\frg,k) \otimes V(\frg,1)$ shows that $W_\ell(\frg)$ has bounded localized vertex operators.
Since $\pi^M$ exists for every irreducible $V(\frg,k) \otimes V(\frg,1)$-module $M$, and every irreducible $W_\ell(\frg)$-module may be found inside some $V(\frg,k) \otimes V(\frg,1)$-module by \cite{ArakawaCreutzigLinshaw19} and Theorem \ref{thm: KM module existence}, the desired result follows.
\end{proof}

In particular, as noted above Proposition \ref{prop: W algebra modules exist} applies to the discrete series of unitary minimal Virasoro models $L(c,0)$.

\newpage
\bibliography{gracft2}

\end{document}